\normalfont\fontfamily{phv}\fontsize{12}{17}\bfseries\itshape}{\thesection}{1em}{}
\normalfont\fontfamily{phv}\fontsize{10}{17}\bfseries\itshape}{\thesubsection}{1em}{}
\normalfont\fontfamily{phv}\fontsize{10}{17}\bfseries\itshape}{\thesubsubsection}{1em}{}
\newcommand{\notimplies}{%
  \mathrel{{\ooalign{\hidewidth$\not\phantom{=}$\hidewidth\cr$\implies$}}}}
\newdefinition{note}{Note}
\newdefinition{remark}{Remark}
\newdefinition{definition}{Definition}
\newdefinition{example}{Example}
\newtheorem{theorem}{Theorem}
\newtheorem{observation}{Observation}
\newtheorem{lemma}{Lemma}
\newtheorem{proposition}{Proposition}
\newtheorem{corollary}{Corollary}
\newtheorem{conjecture}{Conjecture}
\newtheorem{op}{Open Problem}
\title{Graphs with single interval Cayley configuration spaces in 3-dimensions\tnoteref{t1}}
\author[1]{William Sims}
\ead{w.sims@ufl.edu}
\author[1,2]{Meera Sitharam\corref{cor1}}
\affiliation[1]{
organization={CISE Department, University of Florida},
country={USA}
}
\affiliation[2]{
organization={Department of Mathematics (Affiliate), University of Florida},
country={USA}
}
\date{}
\begin{document}

\begin{abstract}
 We prove a conjectured graph theoretic characterization of a geometric property of 3 dimensional linkages  posed  15 years ago by Sitharam and Gao, motivated by their equivalent characterization for $d\le 2$ that does not generalize to $d\ge 3$.   A \emph{linkage} $(G,\ell)$ contains a finite simple undirected graph $G$ and a map $\ell$ that assigns squared Euclidean lengths to the edges of $G$.  
A \emph{$d$-realization} of $(G,\ell)$ is an assignment of points in $\mathbb{R}^d$ to the vertices of $G$ for which pairwise squared distances between points agree with $\ell$.  
 For any positive integer $d \leq 3$, we characterize pairs $(G,f)$, where $f$ is a nonedge of $G$, such that, for any  linkage $(G,\ell)$,  the lengths attained by $f$ form a single interval  - over the (typically a disconnected set of) $d$-realizations of $(G,\ell)$.   
 
Although related to the minor closed class of $d$-flattenable graphs, the class of pairs $(G,f)$ with the above property is not closed under edge deletions, has no obvious well quasi-ordering, and there are infinitely many minimal graph-nonedge pairs - with respect to edge contractions - in the complement class.  
Our characterization overcomes these obstacles, is based on the forbidden minors for $d$-flattenability for $d \leq 3$, and contributes to the theory of Cayley configurations with many applications.  Helper results and corollaries provide new tools for reasoning about configuration spaces and completions of partial 3-tree linkages,  (non)convexity of Euclidean measurement sets in $3$-dimensions, their projections, fibers and sections.  
Generalizations to higher dimensions and efficient algorithmic characterizations are conjectured. 
\end{abstract}

\begin{keyword}
    Graph minors \sep 
    Graph flattenability \sep
    Forbidden minor characterization \sep
    Discrete geometry \sep
    Distance geometry \sep 
    Convex geometry  \sep
    Configuration spaces \sep
    Cayley configuration spaces \sep          
\end{keyword}

\maketitle

\section{Introduction}
\label{sec:intro}

A \emph{linkage} $(G,\ell)$ is a pair containing a (finite simple undirected) graph $G$ and a \emph{squared edge-length map} $\ell$ that sends each edge in the edge set $E(G)$ to a nonnegative real number.  
For any dimension $d \geq 0$, a \emph{(Euclidean) $d$-realization} $p$ of $(G,\ell)$ is a map that sends each vertex in the vertex set $V(G)$ to a point in $\mathbb{R}^d$ such that $\ell(uv) = \|p(u) - p(v)\|^2$ for each edge $uv \in E(G)$.  
We omit ``$d$'' from this terminology when it is clear from context.  
The \emph{configuration space (CS)} $\mathcal{C}^d(G,\ell)$ is the set of all $d$-realizations of $(G,\ell)$.  
 Characterizing and sampling $\mathcal{C}^d(G,\ell)$ are difficult and well-studied problems \cite{fudos1997graph,gao2009characterizing,joan2003transforming,sitharam2006solution,sitharam2018handbook,van2005constructive,zhang2006well} with applications in many domains \cite{Prabhu2020JCIM,sacks2010configuration,ying1995robot}.  

\begin{figure}[htb]
   \centering
   \begin{subfigure}[t]{0.28\linewidth}
       \centering
       \includegraphics[width=0.9\linewidth]{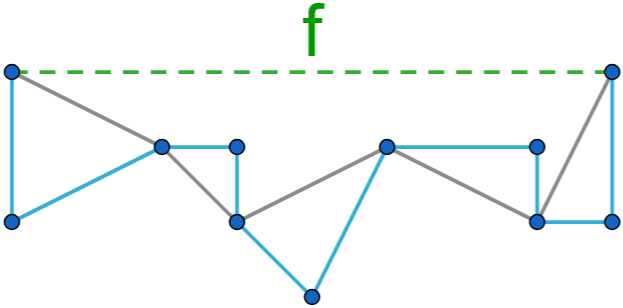}
       \caption{}
       \label{fig:k3-e}
   \end{subfigure}
   \begin{subfigure}[t]{0.28\linewidth}
       \centering
       \includegraphics[width=0.9\linewidth]{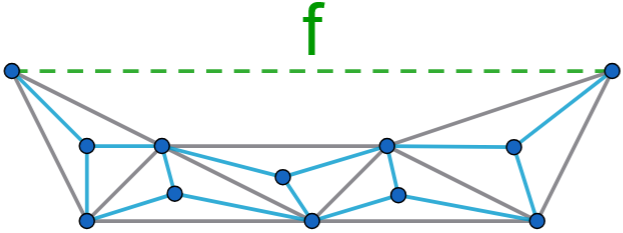}
       \caption{}
       \label{fig:k3-e_realization}
   \end{subfigure}
   \begin{subfigure}[t]{0.28\linewidth}
       \centering
       \includegraphics[width=0.9\linewidth]{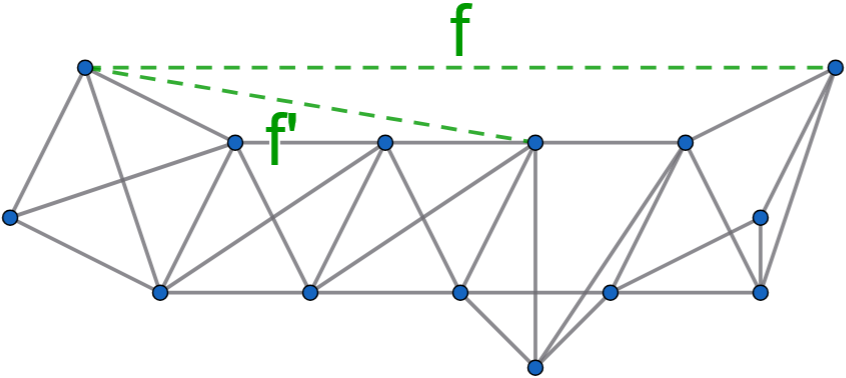}
       \caption{}
       \label{fig:k4-e}
   \end{subfigure}
   \begin{subfigure}[t]{0.14\linewidth}
       \centering
       \includegraphics[width=0.5\linewidth]{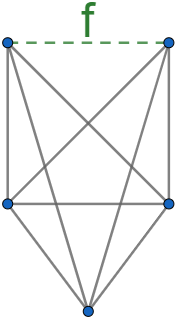}
       \caption{}
       \label{fig:k5-e}
   \end{subfigure}
   \caption{Graph-nonedge pairs.  
   See Example \ref{ex:1} and the discussion in this section, Example \ref{ex:3} and the discussion in Section \ref{sec:results}, and the discussions in Sections \ref{sec:proof_obstacles} and \ref{sec:forward}.  }
   \label{fig:linkage_obstacles}
\end{figure}

\begin{figure}[htb]
    \centering
    \includegraphics[width=0.4\linewidth]{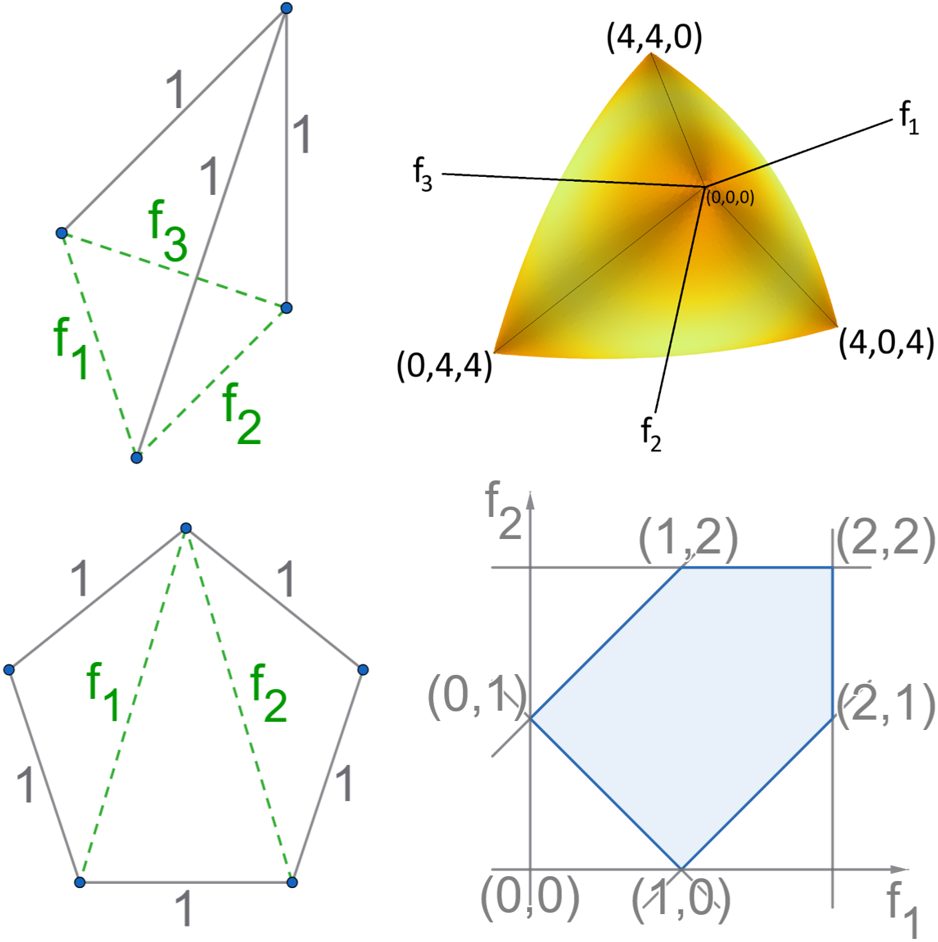}
    \caption{Left: linkages $(G,\ell)$ and sets $F$ of nonedges (dashed line-segments).  
    Top-right: the CCS $\Omega^3_F(G,\ell)$ for the top-left linkage.  
    Top-right recreated from \cite{vinzant2014spectrahedron}.  
    Bottom-right: the CCS $\Omega^2_F(G,\sqrt\ell)$ for the bottom-left linkage.  
    Bottom recreated from \cite{sitharam2010characterizing}.  See Example \ref{ex:1} and the discussion in this section, as well as Example \ref{ex:3} in Section \ref{sec:results}.  }
    \label{fig:2_and_3-convexity_ex}
\end{figure}

To tackle these problems, the paper \cite{sitharam2010characterizing}  began characterizing graphs $G$ for which, for any $\ell$,  the CS $\mathcal{C}^d(G,\ell)$, while typically disconnected,  has a (branched) covering map to a base space that is  convex. 
The notable feature is a graph theoretic characterization of a geometric property of linkages.
For such graphs, the difficulty of sampling the CS in applications reduces to a much easier problem of sampling a convex base ``configuration" space and inverting the covering map.  
More precisely, let $F = \{u_1v_1,\dots,u_mv_m\}$ be an indexed set of \emph{nonedges} of $G$, i.e., pairs of distinct vertices not in $E(G)$.  
The \emph{Cayley map} $\phi_F$ sends a realization $p$ in $\mathcal{C}^d(G,\ell)$ to the vector $(\|p(u_1)-p(v_1)\|^2,\dots,\|p(u_m)-p(v_m)\|^2)$ of squared lengths attained by $F$ under $p$.  
The \emph{Cayley configuration space (CCS)} $\Omega^d_F(G,\ell)$ is the image of $\mathcal{C}^d(G,\ell)$ under $\phi_F$.  
Note that the preimage of any point in $\Omega^d_F(G,\ell)$ under $\phi_F$ is a non-empty subset of $\mathcal{C}^d(G,\ell)$.  
See Figure \ref{fig:2_and_3-convexity_ex} for examples.  
The pair $(G,F)$  is \emph{$d$-convex (resp. $d$-Cayley-connected)} if $\Omega^d_F(G,\ell)$ is convex (resp. connected) for any squared edge-length map $\ell$.  

\begin{note}
    In this paper, all sets are closed and compact, all maps are closed, and ``connected'' means  ``path-connected'' unless otherwise specified.  
    Also, $(G,F)$ is trivially $d$-convex and $d$-Cayley-connected if $d = 0$,  hence we assume that $d$ is an integer greater than $0$, unless stated otherwise.  
\end{note}

Characterizations of $d$-convexity for special classes of graphs have led to improvements in molecular and particle assembly modeling and kinematic mechanism analysis and design, respectively, via the opensource software EASAL \cite{Ozkan2018ACMTOMS,Prabhu2020JCIM} and CayMos \cite{SitharamWang2014Beast,Wang2014cayley}, and to efficient algorithms for distance constraint graph realization \cite{Baker2015}.  

We study the following property as a first step towards completely characterizing $d$-convex pairs $(G,F)$.  
If $F$ contains exactly one nonedge $f$, then we call $(G,f)$ a \emph{graph-nonedge pair}.  
\begin{definition}[$d$-SIP]
    \label{def:d-sip}
    For any dimension $d \geq 1$, a graph-nonedge pair has the \emph{$d$-single interval property (SIP)} if it is $d$-convex (equivalently, $d$-Cayley-connected).
\end{definition}


The main result of this paper is a characterization for any $d \leq 3$ of  graph-nonedge pairs   $(G,f)$  with $d$-SIP,  proving a conjecture posed in \cite{sitharam2010characterizing}, which  proved an equivalent characterization for $d \leq 2$ that does not generalize to $d \geq 3$.
\begin{example}
\label{ex:1}
Refer to the graph-nonedge pairs $(G,f)$ in Figure \ref{fig:linkage_obstacles}.  
If $G$ is the graph in Figure \ref{fig:k3-e} without the blue edges, then there exists an $\ell$ for which both the CS $\mathcal{C}^1(G,\ell)$ (modulo translations) and CCS $\Omega^1_f(G,\ell)$ are finite non-singleton sets and hence $(G,f)$ does not have 1-SIP. 
On the other hand, the CS $\mathcal{C}^2(G,\ell)$ and CCS $\Omega^2_f(G,\ell)$ are clearly connected and hence $(G,f)$  has 2-SIP.
Adding in the blue edges with almost  any lengths  causes $\mathcal{C}^2(G,\ell)$ to become disconnected while, for any lengths, all the connected components of $\mathcal{C}^2(G,\ell)$  induce the same  interval   $\Omega^2_f(G,\ell)$,  and hence $(G,f)$ has 2-SIP. 

If $G$ is the graph in Figure \ref{fig:k3-e_realization} without the blue edges, then there exists an $\ell$ for which both the CS $\mathcal{C}^2(G,\ell)$ (modulo translations and rotations) and CCS $\Omega^2_f(G,\ell)$ are finite non-singleton sets and hence $(G,f)$ does not have 2-SIP. 
On the other hand, the CS $\mathcal{C}^3(G,\ell)$ and CCS $\Omega^3_f(G,\ell)$ are clearly connected and hence $(G,f)$  has 3-SIP.
Adding in the blue edges with almost  any lengths  causes $\mathcal{C}^3(G,\ell)$ to become disconnected (modulo   translations and rotations) while, for any lengths, all the connected components of $\mathcal{C}^3(G,\ell)$  induce the same  interval   $\Omega^3_f(G,\ell)$,  and hence $(G,f)$ has 3-SIP.  

For similar reasons, if $G$ is the graph in Figure \ref{fig:k5-e}, $(G,f)$ has the $d$-SIP for any $d \geq 4$ but not for any $d < 4$. 

The  linkage in Figure \ref{fig:k4-e} also has the property that $\mathcal{C}^3(G,\ell)$ is disconnected.  
  If we fix one of the two reflections of each of the 3 tetrahedra,  $3$-SIP ($3$-convexity) of $(G,f')$ is intuitively clear when restricted to this choice of reflections because the configuration space is connected. Although the  different reflections could result in disconnected components of the CS $\mathcal{C}^3(G,\ell)$, with some effort it can be shown that  all these components have the same attainable lengths for the nonedge $f'$,  which is sufficient  (although not necessary) to prove $\Omega^3_{f'}(G,\ell)$ is connected  and hence $(G,f')$  has 3-SIP. Showing  a weaker version of this property  in general is  sufficient to show $3$-SIP for $(G,f)$ e.g. in Figure \ref{fig:k4-e} (see also Figure \ref{fig:converse_4}). This is  a key  property established in   the converse direction of   Theorem \ref{thm:3-sip_characterization} in Section \ref{sec:results} - our main result.

Next, refer to the  pairs $(G,F)$ in Figure \ref{fig:2_and_3-convexity_ex}.  
For the linkage $(G,\ell)$ in the bottom-left image, the  CCS $\Omega^2_F(G,\sqrt\ell)$ shown in the bottom-right image is the intersection of half-spaces determined by triangle-inequalities on $\sqrt\ell$, and  it is not hard to see that $\Omega^2_F(G,\ell)$ is also convex.  
Since this is true for any $\ell$, $(G,F)$ is $d$-convex  for any $d\ge 2$.  
On the other hand, by choosing $\ell$ so that one edge has length $0$ and the rest have length $1$, we get that $\Omega^1_f(G,\ell)$ has cardinality two, and so $(G,F)$ is not $1$-convex.  
In the top-left image, a similar argument using tetrahedral-inequalities shows that $(G,F)$ is $d$-convex for any $d \geq 3$ but not for any $d < 3$.  
The top-right image shows the convex CCS $\Omega^3_F(G,\ell)$ for the linkage $(G,\ell)$ in the top-left image.  
The CS $\mathcal{C}^3(G,\ell)$ is clearly connected for any $\ell$, and so $\Omega^3_F(G,\ell)$ is connected for any $\ell$ since $\phi_F$ is continuous.  
Therefore, $(G,F)$ is $2$-Cayley-connected.  
However, an argument similar to those above shows that $(G,F)$ is not $1$-Cayley-connected.  

\begin{figure}[htb]
    \centering
    \includegraphics[width=0.4\textwidth]{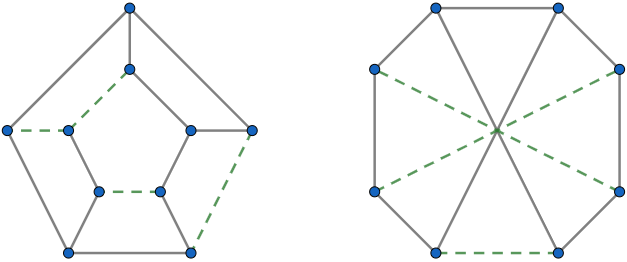}
    \caption{The graphs $C_5 \times C_2$ (left) and $V_8$ (right) along with nonedge sets (shown as dashed line-segments) that together are $d$-convex pairs for any dimension $d \geq 3$ but not for any $d < 3$.  
    See Example \ref{ex:1} in this section, Example \ref{ex:3} and the discussion in Section \ref{sec:results}, and Example \ref{ex:minimal} in Section \ref{sec:forward}.  }
    \label{fig:d-convexity_ex}
\end{figure}

Finally, the  pairs $(G,F)$ in Figure \ref{fig:d-convexity_ex} are $d$-convex for any dimension $d \geq 3$ but not for any $d < 3$.  
The case $d \geq 3$ can be verified for the first two pairs using Theorem \ref{thm:sitharam_willoughby} in Section \ref{sec:related}. 
The case $d<3$  can be verified intuitively, and using Theorems \ref{thm:2-convexity} and \ref{thm:2-SIP} in Section \ref{sec:related}.  
\qed
\end{example}

\subsection{Organization}
Previous results on $d$-conxevity, $d$-Cayley-connectedness, and the $d$-SIP are collected in Section \ref{sec:related}.  
We state our contributions in Section \ref{sec:results}, including Theorem \ref{thm:3-sip_characterization}, our main result, which gives a characterization of graph-nonedge pairs that have the $3$-SIP; and several frequently used and independently interesting tool lemmas, byproducts, and corollaries, which are proved in Sections \ref{sec:tool_lemmas} and \ref{sec:tool-thms}.  
The majority of this paper is spent proving Theorem \ref{thm:3-sip_characterization}.  
The primary proof obstacles are detailed in Section \ref{sec:proof_obstacles}.  
Since our proof techniques involve very technical work on graph minors,   notation and terminology to facilitate this is introduced in Section \ref{sec:minor-notation}.  
The proof is given in Sections \ref{sec:reverse} (converse direction) and \ref{sec:forward} (forward direction).  
Finally, we state several open problems in Section \ref{sec:conclusions}, including a conjectured characterization of graph-nonedge pairs that have the $d$-SIP for any $d \geq 1$.
The appendix collects proofs of independently interesting results about clique separators, partial 3-trees, and related covering maps, and long and tedious but not particularly illuminating proofs for helper lemmas.

\subsection{Related Works}
\label{sec:related}

Theorems \ref{thm:2-convexity} and \ref{thm:2-SIP}, below, are complete characterizations of $d$-convexity and the $d$-SIP for any $d \leq 2$, given in \cite{sitharam2010characterizing} and restated using our terminology.  
The proof of the former strongly relies on the latter.  
See Example \ref{ex:1}, in Section \ref{sec:intro}, and Example \ref{ex:2}, below, for applications of these theorems.  
We require the following concepts to state these theorems.  
Given a graph $G$ and a set of its edges and nonedges $F$, $G \cup F$ is the graph obtained from $G$ by adding $F$ to $E(G)$.  
A subgraph $H$ of $G$ is \emph{induced} if, for any $u,v \in V(H)$, $uv$ is an edge of $H$ if and only if it is an edge of $G$.  
Furthermore, $H$ is \emph{induced by} a set $U \subseteq V(G)$ if $H$ is induced and $V(H) = U$.  
The subgraph induced by $V(G) \setminus U$ is written as $G \setminus U$.  
If either $F$ or $U$ contains exactly one element $f$ or $u$, respectively, then we replace $F$ with $f$ and $U$ with $u$ in this notation.  
$U$ is a \emph{separator} of $G$ if $G$ is connected and $G \setminus U$ is disconnected, a \emph{clique} of $G$ if its induces a clique subgraph, and a \emph{clique separator} of $G$ if it is both a clique and a separator.  
An \emph{atom} is a graph with no clique separator.  
An \emph{atom of $G$} is a vertex-maximal induced subgraph with no clique separator.  
See Figure \ref{fig:clique_sep_atoms} for examples.  

\begin{figure}[htb]
   \centering
   \begin{subfigure}[t]{0.32\linewidth}
       \centering
       \includegraphics[width=0.75\linewidth]{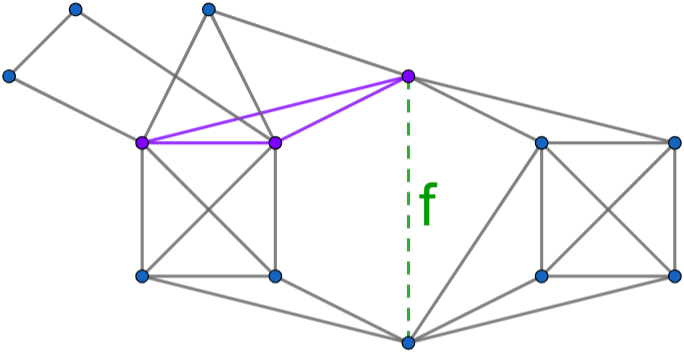}
       \caption{}
       \label{fig:clique_sep}
   \end{subfigure}
   \begin{subfigure}[t]{0.33\linewidth}
       \centering
       \includegraphics[width=0.9\linewidth]{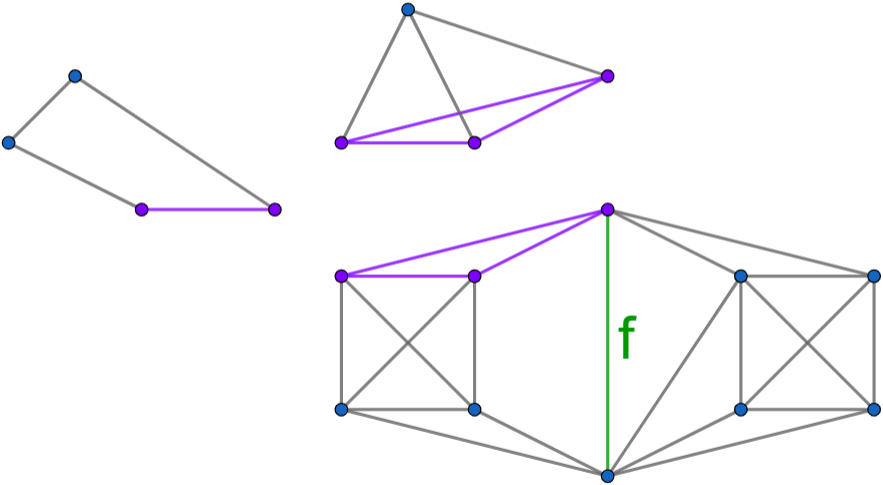}
       \caption{}
       \label{fig:clique_sep_components}
   \end{subfigure}
   \begin{subfigure}[t]{0.33\linewidth}
       \centering
       \includegraphics[width=\linewidth]{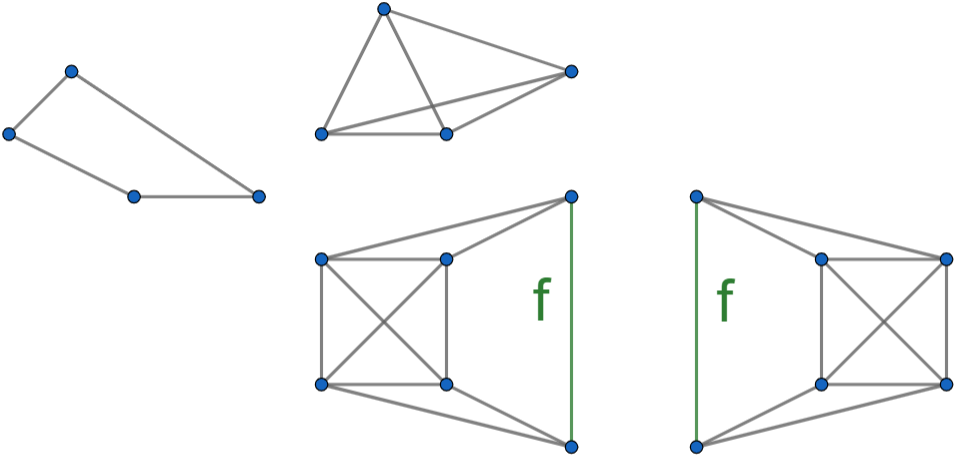}
       \caption{}
       \label{fig:atoms}
   \end{subfigure}
   \caption{(a) a graph-nonedge pair $(G,f)$ with $G \cup f$ having a clique separator $C$, shown in purple, (b) the $C$-components of $G \cup f$, and (c) the atoms of $G \cup f$.  
   See the discussion in this section, Example \ref{ex:3} in Section \ref{sec:results}, and Example \ref{ex:minimal} in Section \ref{sec:forward}.  }
   \label{fig:clique_sep_atoms}
\end{figure}

A graph $M$ is a \emph{minor} of $G$ if it is isomorphic to some graph obtained from $G$ via some sequence of edge deletions and (edge) contractions - i.e.,  replacing two adjacent vertices by a single vertex whose neighborhood is the union of their neighborhoods.  

\begin{theorem}[\cite{sitharam2010characterizing} Theorem 5.11]
    \label{thm:2-convexity}
    For any dimension $d \leq 2$,  and graph $G$  with nonedge set $F$, the pair $(G,F)$ is $d$-convex if and only if each atom of $G \cup F$ that contains a non-empty subset of $F$ has no $K_{d+2}$ minor.  
\end{theorem}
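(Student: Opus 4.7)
The plan is to reduce both directions to the atom case via a clique-separator decomposition, then handle atoms structurally using the excluded $K_{d+2}$ minor (forests for $d=1$, series-parallel graphs for $d=2$).

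For the reduction, suppose $C$ is a clique separator of $G\cup F$ splitting $V(G\cup F)\setminus C$ into $V_1,V_2$, and let $G_i$ be the subgraph of $G\cup F$ induced by $V_i\cup C$. Since $C$ is a clique no $f\in F$ sits inside $C$, and since $C$ separates, each $f\in F$ lies in exactly one $V_i\cup C$, so $F=F_1\sqcup F_2$. Any $d$-realization of $(G,\ell)$ is obtained by first realizing the clique $C$, then realizing each side independently, free up to reflection through the affine hull of $C$. Hence $\Omega^d_F(G,\ell)$ fibers over the Cayley coordinates on $C$ as a product $\Omega^d_{F_1}(G_1,\ell_1)\times\Omega^d_{F_2}(G_2,\ell_2)$; a product of closed sets is convex iff each factor is, so iterating reduces both directions to the atom case.

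For the forward direction on an atom $A$ with nonempty $F_A\subseteq F$ and no $K_{d+2}$ minor, I would show $\Omega^d_{F_A}(A,\ell)$ is convex directly. For $d=1$, $K_3$-minor-free atoms are just edges so the CCS is a point. For $d=2$, $K_4$-minor-free atoms are series-parallel, hence subgraphs of $2$-trees; each $f\in F_A$ is the diagonal of a triangle in a tree-decomposition and contributes a Cayley-Menger triangle inequality that is convex in its Cayley coordinate, while the tree structure decouples coordinates across distinct nonedges so the joint CCS is a product of convex intervals. For the converse, if an atom $A$ of $G\cup F$ contains both a $K_{d+2}$ minor and some $f\in F$, I would build $\ell$ yielding a disconnected $\Omega^d_F(G,\ell)$: Theorem \ref{thm:2-SIP} supplies a linkage on the minor attaining exactly two isolated values of a single Cayley coordinate via the two reflections described in Example \ref{ex:1}, and extending $\ell$ freely through the rest of the atom and across the clique-separator decomposition preserves the disconnection.

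The primary obstacle is the forward direction at $d=2$: single-nonedge (SIP-level) convexity does not automatically give joint convexity across multiple nonedges of $F_A$. The plan is to exploit the $2$-tree structure so that Cayley-Menger constraints couple distinct nonedges only through shared fill-in edges whose lengths are already prescribed by $\ell$, writing the joint CCS as a fiber product of convex intervals. A careful case analysis must then verify that for each triangle in the tree-decomposition the Cayley-Menger determinant is nonpositive precisely on a single interval in the associated Cayley coordinate, so the overall feasible region remains a convex product after fiber projection.
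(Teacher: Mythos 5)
The clique-separator reduction and the overall strategy are sensible, but two of the proof's key technical claims are incorrect. In the forward direction for $d=2$, the assertion that "the joint CCS is a product of convex intervals" fails even in the simplest non-trivial case: take the atom $A=K_3$ with $F_A=E(K_3)$ (so $A\setminus F_A$ is three isolated vertices). Then $\Omega^2_{F_A}$ is the cone of realizable squared side lengths of a planar triangle, a genuine three-dimensional convex cone cut out by a single Cayley--Menger inequality, not a box. The convexity conclusion is right — it follows, for instance, from Theorem~\ref{thm:sitharam_willoughby} applied to the $2$-flattenable atom $A$ — but the "product" mechanism you propose as the engine does not exist. Distinct nonedges of $F_A$ are coupled through shared triangle bags of any $2$-tree completion, and the resulting feasible region is a projection of a spectrahedron rather than a product of intervals, so a decoupling argument cannot close the case. (The statement that $d=1$ atoms give a CCS that is "a point" has the same flavor of error: a $K_2$ atom whose edge is the nonedge $f$ yields the ray $[0,\infty)$, not a point.)

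The converse misapplies Theorem~\ref{thm:2-SIP}. You want to extract a disconnected one-dimensional Cayley coordinate from a $K_{d+2}$-minor atom of $G\cup F$, but the projection of $\Omega^d_F(G,\ell)$ onto the $f$-coordinate is $\Omega^d_f(G,\ell)$, which is governed by the atoms of $G\cup f$, not of $G\cup F$. The bad atom of $G\cup F$ can shatter into $K_{d+2}$-minor-free atoms once the other nonedges $F\setminus\{f\}$ are deleted: for a $4$-cycle $G$ with $F$ equal to both diagonals, $G\cup F=K_4$ has a $K_4$ minor, yet for each diagonal $f$ the graph $G\cup f$ is two triangles glued on an edge, so $(G,f)$ has the $2$-SIP and no choice of $\ell$ on $E(G)$ alone disconnects the $f$-projection. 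The repair, which the paper itself points at in its concluding discussion, is to use the equivalent form of the right-hand side: $(G\cup(F\setminus\{f\}),f)$ has the $d$-SIP for every $f\in F$. If this fails for some $f$, Theorem~\ref{thm:2-SIP} yields $\ell'$ on $E(G)\cup(F\setminus\{f\})$ with $\Omega^d_f(G\cup(F\setminus\{f\}),\ell')$ disconnected, and that set is a \emph{section} of $\Omega^d_F(G,\ell'|_{E(G)})$ — the slice obtained by fixing the $F\setminus\{f\}$-coordinates at $\ell'|_{F\setminus\{f\}}$ — so $\Omega^d_F$ cannot be convex because convex sets have connected sections. Without passing to $G\cup(F\setminus\{f\})$ and arguing via sections rather than projections, the step "extending $\ell$ freely ... preserves the disconnection" does not produce a witness.
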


\begin{theorem}[\cite{sitharam2010characterizing} Theorem 5.1]
    \label{thm:2-SIP}
    For any dimension $d \leq 2$, a graph-nonedge pair $(G,f)$ has the $d$-SIP if and only if each atom of $G \cup f$ that contains $f$ has no $K_{d+2}$ minor.  
\end{theorem}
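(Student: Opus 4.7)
The plan is to prove each direction separately, combining structural analysis of clique separators with characterizations of $K_{d+2}$-minor-free graphs.

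\textbf{Forward (necessity) direction.} I proceed by contrapositive. Suppose some atom $A$ of $G\cup f$ that contains $f$ has a $K_{d+2}$ minor. The aim is to construct a length map $\ell$ so that $\Omega^d_f(G,\ell)$ is not a single interval. Since $A$ is atomic, the edge $f$ in $A$ cannot be a bridge (otherwise its endpoints form a 2-clique separator of $A$), so $f$ lies on a cycle in $A$. For $d=1$, the $K_3$ minor corresponds to a cycle through $f$, and in one dimension each realization fixes $\ell(f)^{1/2}$ to one of finitely many values determined by sign choices in a cycle traversal; generic lengths then yield at least two distinct values. For $d=2$, a $K_4$ minor yields at least two reflection classes of realizations producing isolated values of $\ell(f)$, witnessed e.g.\ by a $K_4$-minus-an-edge subdivision along the minor with $f$ playing the role of the missing edge. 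I then extend $\ell$ consistently to the rest of $G$ by choosing sufficiently loose lengths so that constraints from $A$ dominate.

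\textbf{Reverse (sufficiency) direction.} Assume each atom of $G\cup f$ containing $f$ has no $K_{d+2}$ minor. The first step is a clique-separator reduction: if $C$ is a clique separator of $G\cup f$ with $f$ lying on one side, and $H_1$ is the subgraph of $G\cup f$ on that side (with $H_2$ on the other), then $\Omega^d_f(G,\ell)=\Omega^d_f(H_1\setminus f,\ell|_{E(H_1)\setminus\{f\}})$ whenever $(H_2,\ell|_{E(H_2)})$ is realizable in dimension $d$, and equals $\emptyset$ otherwise. The key observation is that since $C$ is a clique, its realization is determined up to isometry by $\ell|_{E(G[C])}$, so the two sides only interact through this rigid interface. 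Iterating reduces to the case where $G\cup f$ equals the atom $A$ containing $f$, which is then $K_{d+2}$-minor-free. The second step invokes classical structure theorems: a $K_3$-minor-free graph is a forest, and the only atomic forest containing an edge is $K_2$; a $K_4$-minor-free graph is a partial 2-tree, and the only atomic partial 2-trees are $K_1$, $K_2$, and $K_3$. With $f\in E(A)$, this forces $A=K_2$ for $d=1$ and $A\in\{K_2,K_3\}$ for $d=2$. In each case the attainable lengths of $f$ form a single interval: the ray $[0,\infty)$ when $A=K_2$, and the usual triangle-inequality interval when $A=K_3$.

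\textbf{Main obstacle.} The clique-separator reduction is the most delicate piece. Its correctness must be verified when $C$ contains both endpoints of $f$ (so $\ell(f)$ would formally appear in the clique length data) and when $d<|C|-1$ (so $C$'s realization admits nontrivial moduli through reflection choices). Both are handled by tracking reflection classes of sub-components and showing that these moduli contribute an independent convex factor that does not disconnect $\Omega^d_f$. The forward-direction construction of $\ell$ in the $d=2$ case is also subtle when the $K_4$ minor arises through many deletions and contractions rather than being directly realized; this is resolved by retrofitting generic short edges along contracted paths to reduce to the direct $K_4$ case while preserving the reflection-based disconnection of $\Omega^d_f$.
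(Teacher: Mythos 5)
This theorem is cited from Sitharam--Gao 2010 rather than proved in the present paper, so there is no in-paper proof to compare against; I evaluate your argument on its own terms.

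Your reverse direction contains a fatal factual error: the claim that ``the only atomic partial 2-trees are $K_1$, $K_2$, and $K_3$.'' This is false. An \emph{atom} is a graph with no clique separator, and there are infinitely many atomic partial $2$-trees: every cycle $C_n$ with $n\ge 4$ (its minimal separators are pairs of non-adjacent vertices), every theta graph, $K_{2,3}$, and arbitrarily complicated series-parallel graphs whose minimal $2$-separators are nonedges. Your reduction therefore does not terminate at $K_2$ or $K_3$, and the final ``triangle-inequality interval'' argument does not apply. To close the reverse direction you would need a genuinely different endgame; the natural one, visible in the present paper, is to invoke the flattenability correspondence (Theorem~\ref{thm:sitharam_willoughby}): a $K_4$-minor-free atom is $2$-flattenable, so for any edge set $F$ the pair $(A\setminus F,F)$ is $2$-convex, giving single-interval behavior for $F=\{f\}$ directly, with no need to enumerate atoms. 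Relatedly, your clique-separator reduction formula $\Omega^d_f(G,\ell)=\Omega^d_f(H_1\setminus f,\ell|_{E(H_1)\setminus\{f\}})$ fails when the separating clique $C$ contains both endpoints of $f$; in that case the correct statement (cf.\ Lemma~\ref{lem:ccs_intersection}) is that $\Omega^d_f(G,\ell)$ is the \emph{intersection} of the CCSs over all $C$-components containing $f$. You flag this case in your ``main obstacle'' paragraph but never repair the formula, and an edge can lie in several atoms simultaneously, so ``the atom $A$ containing $f$'' is not well-defined.

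The forward direction also has an unjustified step. You construct the witness length map assuming $f$ plays the role of the missing edge in a $K_4$-minus-an-edge subdivision, which is exactly the assertion that the atom $A$ has an $f$-\emph{preserving} $K_4$ minor. The hypothesis only grants some $K_4$ minor of $A$. The equivalence ``$A$ has a $K_{d+2}$ minor $\iff$ $A$ has an $f$-preserving $K_{d+2}$ minor'' (for an atom $A$ containing $f$ and $d\le 2$) is a real theorem --- indeed the present paper explicitly points out that this equivalence is the extra ingredient in the Sitharam--Gao proof and that it \emph{fails} for $d=3$, which is precisely why this paper exists. You cannot simply ``retrofit generic short edges along contracted paths''; that handles contracted edges but does not ensure $f$ itself is not contracted in the minor. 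This step needs its own structural argument about atoms of $2$-connected graphs.
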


The above theorem does not directly generalize to $d\ge 3$ as discussed in more detail in Example \ref{ex:3} in Section \ref{sec:results}   (see Figure \ref{fig:k5_k222_e_contracted} for a counterexample). Hence the following modification was conjectured in \cite{sitharam2010characterizing} for $d = 3$, 15 years ago, and is the main result in this paper.  

\begin{conjecture}[\cite{sitharam2010characterizing}, Conjecture 6.4]
    \label{conj:sitharam-gao}
    A graph-nonedge pair $(G,f)$ has the $3$-SIP if and only if $f$ must be contracted in any $K_5$ or $K_{2,2,2}$ minor of any atom of $G \cup f$ that contains $f$.  
\end{conjecture}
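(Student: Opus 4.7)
My plan is to prove the two implications separately, leveraging the classical fact that $K_5$ and $K_{2,2,2}$ are the complete list of forbidden minors for 3-flattenability, and so are exactly the obstructions to convexity of the 3-dimensional measurement set (the set of length assignments admitting a 3-realization). The template is Theorem \ref{thm:2-SIP} for $d\le 2$; the new ingredient for $d=3$ is the rooted refinement that $f$ must be contracted in any offending $K_5$ or $K_{2,2,2}$ minor.

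For the forward direction, I work by contrapositive. Suppose some atom $A$ of $G\cup f$ containing $f$ admits a $K_5$ or $K_{2,2,2}$ minor $M$ in which $f$ is not contracted; then either $A\setminus f$ already contains $M$ as a minor (the ``$f$ deleted'' subcase) or $f$ corresponds to a surviving edge $\hat f$ of $M$ (the ``$f$ preserved'' subcase). In both subcases I build a bad linkage on $M$ using the non-convexity of the 3-dimensional measurement sets of $K_5$ and $K_{2,2,2}$: choose lengths so that the two rigid-body flips of a sub-tetrahedron realize well-separated values of $|\hat f|$ (respectively $|f|$). I then pull these lengths back to $A$ by decorating contracted paths with lengths preserving the flip ambiguity and deleted edges with unconstrained values, and finally extend $\ell$ across clique separators to all of $G$ using a tool lemma planned for Section \ref{sec:tool_lemmas}. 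The resulting disconnection of $\Omega^3_f$ on the witnessing subgraph thus survives to $\Omega^3_f(G,\ell)$.

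For the reverse direction, I begin by applying the same clique-separator decomposition lemma to reduce to the case that $G\cup f$ is itself an atom. The hypothesis then yields two structural facts: $G=(G\cup f)\setminus f$ has no $K_5$ or $K_{2,2,2}$ minor, so its 3-dimensional measurement set is convex; and $G\cup f$ has no $K_5$ or $K_{2,2,2}$ minor in which $f$ appears as a surviving edge. My target is the ``key property'' foreshadowed in the discussion of Figure \ref{fig:converse_4}: any two connected components of $\mathcal{C}^3(G,\ell)$, which differ by flipping rigid subconfigurations across internal clique separators, have identical image under $\phi_f$. Since each component's image is an interval by continuity, this would imply $\Omega^3_f(G,\ell)$ is a single interval.

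The main obstacle I anticipate is proving this flip-invariance of $\phi_f$. The rooted-minor hypothesis should serve as precisely the certificate that no flip alters $|f|$: if a flip across a separator did change the $f$-distance, one ought to extract from the witnessing geometry a subgraph of $G\cup f$ carrying a $K_5$ or $K_{2,2,2}$ minor in which $f$ survives as an edge, contradicting the hypothesis. Making this witness extraction precise---tracking how a reflection of one side of a separator propagates to the endpoints of $f$, and converting a geometric coupling into a rooted combinatorial minor---is the delicate core. I expect to need the partial 3-tree completion machinery and rooted-minor notation developed in Sections \ref{sec:minor-notation} and \ref{sec:tool-thms}, together with projection and fiber lemmas on convex measurement sets to bridge from component-wise flip-invariance to the interval property of $\Omega^3_f$.
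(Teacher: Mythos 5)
Both directions of your sketch miss the core difficulty that the paper is built to overcome, and each missed step is substantial.

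\textbf{Forward direction.} Your pullback handles only the case where $f$ is \emph{retained} in an induced forbidden minor, i.e., where (after deleting $f$) the nonedge $f$ corresponds to a nonedge of the $K_5$ or $K_{2,2,2}$ minor. That is precisely Observation~\ref{obs:3-sip-retained}, and it is \emph{strictly weaker} than what the conjecture asserts. The hypothesis that $f$ is preserved (not contracted) in \emph{some} forbidden minor guarantees only an induced forbidden minor in which $[f]$ is an edge, and in general $f$ is \emph{doubled}: some other edge $e$ of $G$ also has one endpoint in $[u]^{-1}$ and one in $[v]^{-1}$. Then the naive pullback fails outright: setting contracted edges to $0$ collapses $[u]^{-1}$ and $[v]^{-1}$ to points, forcing $|f|=|e|$, a fixed number. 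Perturbing contracted edges to small $\epsilon$ is exactly what the paper's Figure~\ref{fig:geom_idea} illustrates, but making that work requires locating a specific fiber of a CCS section, and as the paper says explicitly, this could not be done geometrically; it needed the entire machinery of minimal pairs, Type~(2) edges, and top-level expanded $f$-separating CMSs (Theorem~\ref{thm:no_pair_in_D_has_3-sip} and Propositions~\ref{lem:no_type_2_no_3-sip}--\ref{prop:IS}). Figure~\ref{fig:min_2} gives a concrete atom where $f$ is not retained in any induced forbidden minor yet the pair lacks $3$-SIP, showing the retained case alone cannot suffice.

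\textbf{Reverse direction.} Your claim that the hypothesis immediately yields ``$G=(G\cup f)\setminus f$ has no $K_5$ or $K_{2,2,2}$ minor'' is not a consequence of the hypothesis by itself; a forbidden minor of $G$ with both endpoints of $f$ in the same branch set is perfectly compatible with $f$ being contracted in every forbidden minor of $G\cup f$, so this has to be \emph{proved}, not observed. In the paper it emerges only at the end of the long chain of structural reductions (winged graph minor $\to$ singleton wing tips $\to$ chained minor with pins $\to$ links are $3$-connected partial $3$-trees, Propositions~\ref{prop:G_is_winged_graph}--\ref{prop:chained_minor} and Lemma~\ref{lem:links_3-connected}), which realize $G$ as a $2$-sum of partial $3$-trees. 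Second, even granted $G$ forbidden-minor-free, convexity of the measurement set of $G$ is the wrong object: the $3$-SIP is the assertion that fibers of $\mathrm{MeasSet}(G\cup f)\to\mathrm{MeasSet}(G)$ are intervals, and $G\cup f$ is allowed to have forbidden minors so its measurement set is not convex. The paper's route to the fiber property is not flip-invariance of $\phi_f$ directly but the $3$-covering-map property for the auxiliary nonedge pair $\{ux_2,uy_2\}$ (Lemma~\ref{lem:ux2_uy2_3-covering}), followed by an intermediate-value argument to conclude $3$-Cayley-connectedness of $(G,\{ux_2,uy_2\})$ (Proposition~\ref{prop:x2y2_3-Cayley-connected}), followed by the projection tool Lemma~\ref{lem:tool} to transfer back to $f$. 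The ``extract a rooted forbidden minor from a flip that changes $|f|$'' idea is the correct intuition, but the conversion from a geometric coupling to a rooted combinatorial witness is precisely where all of Section~\ref{sec:reverse}'s work lies, and your sketch has no mechanism for it.
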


The above minors are central to another property of graphs called \emph{$d$-flattenability}, introduced in \cite{belk2007realizability1,belk2007realizability2}.  
A graph $G$ is \emph{$d$-flattenable} if, for any squared edge-length map $\ell$ and any dimension $n \geq d$, the CS $\mathcal{C}^d(G,\ell)$ is non-empty whenever the CS $\mathcal{C}^n(G,\ell)$ is non-empty.  
Intuitively, this means that any high-dimensional realization of any linkage $(G,\ell)$ can be flattened to live in $d$-dimensions while preserving squared edge-lengths.  
For example, $K_3$ is clearly $2$-flattenable but not $1$-flattenable, since choosing a squared edge-length map $\ell$ that assigns $1$ to each edge of $K_3$ ensures the CS $\mathcal{C}^1(K_3,\ell)$ is empty even though the CS $\mathcal{C}^2(K_3,\ell)$ is not.  

\begin{figure}[htb]
    \centering
    \includegraphics[width=0.2\textwidth]{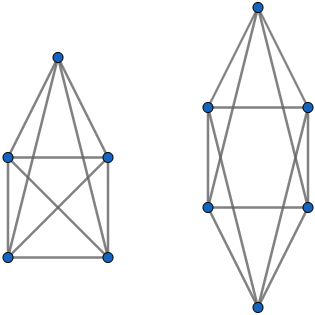}
    \caption{The $3$-flattenability forbidden minors: $K_5$ (left) and $K_{2,2,2}$ (right).  
    See the discussions in this section and Section \ref{sec:results}.  }
    \label{fig:fbms}
\end{figure}

A property of graphs is \emph{minor closed} if it is preserved under edge deletions and contractions.  
More precisely: given a graph $G$ with the property, any graph obtained by deleting or contracting edges of $G$ also has the property.  
In other words, the set of graphs that have the property is \emph{closed} under edge deletions and contractions.  
It is easy to see that $d$-flattenability is minor closed.  
Hence, Robertson and Seymour's famous Graph Minor Theorem \cite{robertson2004graph} guarantees the existence of a finite set of \emph{forbidden} minors such that a graph is $d$-flattenable if and only if it has no minor in this set.  
It is folklore that $K_{d+2}$ is the only graph in this set for $d \in \{1,2\}$, and proofs can be found in \cite{belk2007realizability1}, where they proved that $K_5$ and $K_{2,2,2}$ are the only graphs in this set for $d = 3$ (see Figure \ref{fig:fbms}).  
Finite forbidden minor characterizations for $d$-flattenability have also been studied for normed spaces other than Euclidean \cite{sitharam2005combinatorial,fiorini2017excluded,dewar2024edge}. 
With these characterizations, the right-hand side of Theorem \ref{thm:2-SIP} is equivalent to ``each atom of $G \cup f$ that contains $f$ has no $d$-flattenability forbidden minor''.  
 As mentioned earlier, we will see in Example \ref{ex:3}, in Section \ref{sec:results}, that this restatement does not extend to $d = 3$ (see Figure \ref{fig:k5_k222_e_contracted} for a counterexample).  

A relationship was established in \cite{sitharam2010characterizing} between $d$-convexity and $d$-flattenability for $d = 3$, and later extended in \cite{sitharam2014flattenability} to any $d \geq 0$ and to other norms.  
This relationship makes use of previous work on understanding the Euclidean and non-Euclidean distance cones \cite{ball1990isometric,schoenberg1935remarks}.  
Given a graph $G$ and a set $F$ of its edges and nonedges, $G \setminus F$ is the graph obtained from $G$ by deleting $F$ from $E(G)$.  
If $F$ contains exactly one element $f$, then we replace $F$ with $f$ in this notation.  

\begin{theorem}[\cite{sitharam2010characterizing} Theorem 5.15, \cite{sitharam2014flattenability} Theorem 1]
\label{thm:sitharam_willoughby}
     For any dimension $d \geq 0$ and any graph $G$ the following are equivalent:
     \begin{enumerate}
         \item  $G$ is $d$-flattenable.
         \item for any subset $F \subseteq E(G)$, the pair $(G \setminus F,F)$ is $d$-convex.
     \end{enumerate}      
\end{theorem}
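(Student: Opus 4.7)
The plan is to reduce both directions to a single classical fact about the set
\[ \mathcal{E}_G^d := \{\ell : \mathcal{C}^d(G,\ell) \neq \emptyset\} \]
of realizable squared edge-length maps on $G$. Setting $\mathcal{E}_G^\infty := \bigcup_{n \geq 0} \mathcal{E}_G^n$, the key identity to invoke is $\mathcal{E}_G^\infty = \mathrm{conv}(\mathcal{E}_G^1)$. This follows from the Pythagorean decomposition: writing any $n$-realization coordinatewise as $p = (p^1,\dots,p^n)$, the identity $\|p(u)-p(v)\|^2 = \sum_k (p^k(u)-p^k(v))^2$ exhibits every element of $\mathcal{E}_G^\infty$ as a finite Minkowski sum of elements of $\mathcal{E}_G^1$, and $\mathcal{E}_G^1$ is a cone through the origin (any $1$-realization rescales and the all-zero map is a $1$-realization). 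For a cone through the origin, an iterated Minkowski sumset coincides with its convex hull. This is the essence of Schoenberg-style distance-cone reasoning alluded to in the paragraph preceding the theorem. By definition, $G$ is $d$-flattenable if and only if $\mathcal{E}_G^d = \mathcal{E}_G^\infty$.

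For the forward direction, assume $G$ is $d$-flattenable, so $\mathcal{E}_G^d = \mathcal{E}_G^\infty = \mathrm{conv}(\mathcal{E}_G^1)$ is convex. For any $F \subseteq E(G)$ and any squared edge-length map $\ell$ on $G \setminus F$, unpacking definitions gives
\[ \Omega^d_F(G \setminus F,\ell) = \{\ell_F \in \mathbb{R}^F : (\ell,\ell_F) \in \mathcal{E}_G^d\}, \]
an affine section of a convex set, hence convex; this is exactly $d$-convexity of $(G \setminus F, F)$. For the reverse direction I would specialize to $F = E(G)$: then $G \setminus F$ is edgeless, the only admissible squared edge-length map is the empty one, every assignment $V(G) \to \mathbb{R}^d$ qualifies as a $d$-realization, and so $\Omega^d_F(G \setminus F, \emptyset) = \mathcal{E}_G^d$. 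The hypothesis makes this set convex. Since $\mathcal{E}_G^1 \subseteq \mathcal{E}_G^d$ trivially (pad any $1$-realization with zero coordinates to land in $\mathbb{R}^d$), convexity of $\mathcal{E}_G^d$ forces $\mathrm{conv}(\mathcal{E}_G^1) \subseteq \mathcal{E}_G^d$, i.e.\ $\mathcal{E}_G^\infty \subseteq \mathcal{E}_G^d$. The reverse inclusion is immediate, so $G$ is $d$-flattenable.

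The only substantive ingredient is the convex-hull identity $\mathcal{E}_G^\infty = \mathrm{conv}(\mathcal{E}_G^1)$, which itself rides on convexity of the full Euclidean squared-distance cone together with its generation by rank-one contributions. Given this, the forward direction becomes a one-line fiber-convexity argument, and the content of the reverse direction is simply noticing that the single choice $F = E(G)$ already collapses the universally quantified hypothesis ``$(G \setminus F,F)$ is $d$-convex for every $F$'' into the one convexity statement needed to sandwich $\mathcal{E}_G^d$ between its trivial lower bound $\mathcal{E}_G^1$ and its trivial upper bound $\mathcal{E}_G^\infty = \mathrm{conv}(\mathcal{E}_G^1)$.
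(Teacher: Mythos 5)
The paper does not prove this result -- it is cited from \cite{sitharam2010characterizing} and \cite{sitharam2014flattenability} -- so there is no internal proof to compare against. Your argument is a correct, self-contained reconstruction of the standard Euclidean distance-cone reasoning the paper alludes to: the identity $\mathcal{E}_G^\infty = \mathrm{conv}(\mathcal{E}_G^1)$ via the coordinatewise Pythagorean decomposition together with the observation that iterated Minkowski sums of a cone through the origin recover its convex hull is sound; the forward direction as an affine-section/fiber of a convex set is clean and correct; and specializing to $F = E(G)$ for the reverse direction is precisely the right move, since it exhibits $\mathcal{E}_G^d$ itself as a CCS and lets convexity force $\mathrm{conv}(\mathcal{E}_G^1) \subseteq \mathcal{E}_G^d$.

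One boundary caveat: your inclusion $\mathcal{E}_G^1 \subseteq \mathcal{E}_G^d$ by zero-padding requires $d \geq 1$, so your argument only covers $d \geq 1$, while the statement is given for $d \geq 0$. At $d = 0$ the biconditional as literally written is already problematic (a graph with any edge is not $0$-flattenable, yet the paper's own note declares every $(G,F)$ trivially $0$-convex), so this is essentially a quirk of the theorem statement rather than a defect of your proof -- but you should flag that the $d = 0$ case is not handled, or explicitly restrict to $d \geq 1$, which is also the regime the rest of the paper works in.
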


\begin{figure}[htb]
    \centering
    \includegraphics[width=0.2\textwidth]{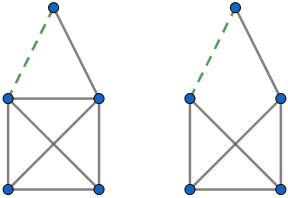}
    \caption{The graph-nonedge pair on the left has the $2$-SIP, while the one on the right does not (nonedges are shown as dashed-line segments).  
    This illustrates that $d$-convexity is not closed under edge deletions.  
    See Example \ref{ex:2} in this section, Examples \ref{ex:3} and \ref{ex:5} in Section \ref{sec:results}, and the discussion in Section \ref{sec:proof_obstacles}.  }
    \label{fig:not_minor closed}
\end{figure}

\begin{example}
    \label{ex:2}
    We examine applications of Theorems \ref{thm:2-convexity} - \ref{thm:sitharam_willoughby} to the graph-nonedge pairs $(G,f)$ (left) and $(G',f)$ (right) in Figure \ref{fig:not_minor closed}.  
    The only atom of $G \cup f$ that contains $f$ is $K_3$, which does not have a $K_4$ minor.  
    On the other hand, $G' \cup f$ is an atom with a $K_4$ minor.
    Hence, Theorem \ref{thm:2-SIP} shows that $(G,f)$ has the $2$-SIP while $(G',f)$ does not.  
    This theorem also shows that neither of these pairs has the $1$-SIP.  
    These statements can be verified using the observations in Example \ref{ex:1}.  
    
    Next, $G \cup f$ is not $2$-flattenable since it has a $K_4$ minor.  
    Therefore, Theorem \ref{thm:sitharam_willoughby} says that there exists some subset $F \subseteq E(G \cup f)$ such that $((G \cup f) \setminus F, F)$ is not $2$-convex.  
    To verify this, let $f'$ be the edge of $G$ deleted to obtain $G'$.  
    Since some atom of $G \cup f$ contains $f'$ and is $K_4$, Theorem \ref{thm:2-SIP} shows that $((G \cup f) \setminus f', f')$ does not have the $2$-SIP.  
    \qed
\end{example}


Finally, some of the concepts in this paper appear to be related to similarly long papers \cite{garamvolgyi2023partial,jackson2006globally} on \emph{generically} globally-linked pairs in graphs.  
 However,   our study of $d$-SIP, $d$-convexity, $d$-Cayley-connectedness does not require any genericity conditions, and nonedges with $d$-SIP occur in sparser graphs overall than nonedges that are globally linked, weakly globally linked or even linked.

\section{Contributions and proof obstacles}
\label{sec:results}
 
  \begin{note}
   We shorten ``$d$-flattenability forbidden minor'' to ``$d$-forbidden minor,'' and we further omit $d$ from this terminology when it is clear from context.   
  \end{note}  

Most usage of minors in the literature involves proving that a graph either has or does not have a given minor, and the specifics of how this minor is obtained are not important.  
Hence, minors are often treated as equivalence classes of graphs, as defined in Section \ref{sec:related}.  
Instead, want to keep track of vertices as edge deletions and contractions are performed, and so we introduce \emph{rooted} minors, which have been studied extensively \cite{robertson2004graph,bohme2024rooted,fabila2011rooted,wood2010thomassen,wollan2008extremal,kawarabayashi2004rooted,jorgensen2007extremal}.  
Refer to Figure \ref{fig:minors}.  
Each vertex of a  \emph{rooted} minor of $G$  corresponds to  a connected subgraph of $G$, all of whose edges have been \emph{contracted} to obtain the minor. 
These connected subgraphs partition $V(G)$, hence, given a pair of distinct vertices of the rooted minor, their corresponding two connected subgraphs of $G$  have disjoint  edge sets. 
A pair of vertices is an edge in the rooted minor only if the  graph - induced by the union of the corresponding two connected subgraphs of $G$ - is connected.  
In this case, the edge in the rooted minor corresponds to the set of edges of $G$ that have one endpoint in each of the corresponding two connected subgraphs. Note that  the  ``if'' direction of the previous statement does not hold. I.e., any  edges  straddling the two connected subgraphs could be deleted to obtain a nonedge in the rooted minor.
 A more formal definition of a rooted minor is given in Section \ref{sec:minor-notation}.  
Unless otherwise specified, all minors in this paper are rooted.   
A pair of distinct vertices $u,v \in V(G)$ is \emph{contracted} in a rooted minor if a vertex in the minor corresponds to  a connected subgraph of $G$ containing both $u$ and $v$, and \emph{preserved} in the minor otherwise.  
In the latter case, the minor is \emph{$uv$-preserving}.  
The main result of the paper is a proof of Conjecture \ref{conj:sitharam-gao} in Section \ref{sec:related}), restated as follows.


\begin{theorem}[Main: $(d \leq 3)$-SIP characterization]
    \label{thm:3-sip_characterization}
    For any dimension $d \le 3$, a graph-nonedge pair $(G,f)$ has the $d$-SIP if and only if each atom of $G \cup f$ that contains $f$ has no $f$-preserving $d$-forbidden minor.  
\end{theorem}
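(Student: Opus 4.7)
The plan is to reduce to $d = 3$, since for $d \leq 2$ the statement follows from Theorem~\ref{thm:2-SIP} once we verify that, inside an atom containing $f$, the existence of a $K_{d+2}$ minor is equivalent to the existence of an $f$-preserving one. This equivalence should follow from the 2-connectivity of atoms (a cut vertex would be a clique separator of size 1), which lets one reroute the branch-sets of any $K_{d+2}$ minor so that $u$ and $v$ are placed in different branches.

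For the forward direction at $d = 3$, I would argue the contrapositive. Suppose some atom $A$ of $G \cup f$ containing $f = uv$ admits an $f$-preserving 3-forbidden minor $M \in \{K_5, K_{2,2,2}\}$. The geometric core is to choose a squared edge-length map on $M$ that is realizable in $\mathbb{R}^4$ but whose 3-realizations split into components attaining disjoint $f$-values: for $K_5$ this exploits the two reflective 3D embeddings of a sub-$K_4$ that force a discrete choice disagreeing on the distance between the branch-sets of $u$ and $v$; for $K_{2,2,2}$ a parallel argument using the two mirror-conformations of the octahedron. I would then lift this linkage on $M$ to $A$ by choosing edge-lengths on each branch-set (which is a connected subgraph of $A$) so as to force its endpoints to the prescribed positions, and finally lift to $G$ using the clique separators of $G \cup f$: each other atom is glued to $A$ along a clique of size $\le 3$ and can be realized rigidly on top of the clique without altering $f$-values in $A$. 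The main obligation here is a robustness check that the disjoint-intervals property of the $M$-counterexample survives both liftings.

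The converse direction is the main obstacle and I would split it into two stages. Stage one is a reduction from $G$ to the unique atom $A$ of $G \cup f$ containing $f$, via a covering-map argument on the clique-separator tree: if $C$ is a clique separator of $G \cup f$ not containing both endpoints of $f$, the $C$-components that do not contain $f$ contribute only through the (compact) range of squared distances they realize on $C$, and this range is independent of which conformation of those components is chosen. Stage two is the claim that, for an atom $A$ with no $f$-preserving 3-forbidden minor, $\Omega^3_f(A,\ell)$ is a single interval for every $\ell$. The plan is an induction on $|V(A)| + |E(A)|$: delete or contract a carefully chosen edge (one outside any $K_5$ or $K_{2,2,2}$ obstruction and, if possible, incident to no triangle separating $f$) and invoke the inductive hypothesis; the base cases should be the minimal atoms implicitly enumerated by the examples in Figures~\ref{fig:k4-e} and~\ref{fig:d-convexity_ex}, verified directly.

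The hard part will be Stage two, and specifically exhibiting, for any two connected components of $\mathcal{C}^3(A,\ell)$, a common $f$-interval they both sweep out. The phenomenon flagged by Example~\ref{ex:1} is that independent reflections of rigid pieces produce many components of $\mathcal{C}^3(A,\ell)$ whose $f$-ranges coincide. I expect the proof to produce, for each pair of components, a continuous family of realizations in $\mathbb{R}^4$ along which $f$ takes any prescribed value in the overlap, and then to argue via 3-flattenability of the relevant sub-atoms (where the absence of $f$-preserving 3-forbidden minors is used sharply, combined with Theorem~\ref{thm:sitharam_willoughby}) that this family can be flattened back into $\mathbb{R}^3$ realization by realization. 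Making the overlap of ranges precise, and ruling out the pathological scenario in which $f$-values form isolated points across components, is where the $f$-preservation condition on forbidden minors should enter decisively.
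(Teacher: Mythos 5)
Your overall architecture --- reduce $d\le 2$ to Theorem~\ref{thm:2-SIP}, reduce $G$ to the atom containing $f$ via clique-separator gluing, then split the $d=3$ atom case into forward (contrapositive) and converse --- matches the paper's outline, and the $d\le 2$ reduction and the atom reduction (Corollary~\ref{cor:atom_gluing}) are sound. Both of the hard halves of the $d=3$ argument, however, have gaps that are not merely obligations to be checked but are precisely where the theorem's difficulty lives.

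\textbf{Forward direction.} Your lifting plan implicitly assumes $f$ is \emph{retained} in the forbidden minor: you contract branch-sets to points, set their internal edge-lengths to $0$, copy the $M$-lengths, and hope the two-interval witness on $M$ propagates. This works exactly when $f$ is retained (Observation~\ref{obs:3-sip-retained}), and the paper shows it \emph{fails} in the general case where $f$ is only preserved but \emph{doubled} --- i.e.\ there is an edge $f'\in G$ between the same two branch-sets. In that case, once you fix the length of $f'$, the section of the CCS at $\ell(e)=0$ is a single point, not two intervals (see the $\ell(e)=0$ slice in Figure~\ref{fig:geom_idea} and the surrounding discussion). The actual counterexample map requires a perturbation $\ell(e)=\epsilon>0$ together with a carefully chosen $f'$-section, and producing it in general is the content of the entire minimal-pair machinery (Theorem~\ref{thm:no_pair_in_D_has_3-sip}, Propositions~\ref{prop:IH}--\ref{prop:IS}). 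Your ``robustness check'' is not a check; it is the theorem. Figure~\ref{fig:min_6} gives a concrete atom with an $f$-preserving $K_5$ minor in which no $f$-retaining forbidden minor exists, so the retained case never even arises there and your lift has no starting point.

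\textbf{Converse direction.} The induction on $|V(A)|+|E(A)|$ via ``delete or contract a carefully chosen edge'' is problematic because $3$-SIP is not closed under edge deletion (see Figure~\ref{fig:not_minor closed} and the discussion in Section~\ref{sec:results}), and contractions change which atoms contain $f$; there is no obvious invariant that the right-hand side of the theorem is preserved under one of these moves while decreasing the measure. The paper instead extracts a rigid structure: the hypothesis forces a \emph{chained winged-graph minor} --- a maximal sequence of size-two clique separators $\{x_i,y_i\}$, each separating the endpoints of $f$, with the subgraph between consecutive separators a $3$-connected partial $3$-tree (Propositions~\ref{prop:G_is_winged_graph}, \ref{lem:e_not_st}, \ref{prop:chained_minor}). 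Cayley-connectedness is then obtained not by flattening an $\mathbb{R}^4$ family back to $\mathbb{R}^3$, but by showing each link has the \emph{$3$-covering map property} (Theorems~\ref{prop:3-connected_partial_3-tree_star_lemma} and~\ref{prop:partial_3-tree_3-reflection}): every connected component of the CS of a link maps onto the full CCS, so the reflections you flag cannot desynchronize the $f$-interval across components (Lemma~\ref{lem:ux2_uy2_3-covering}, Proposition~\ref{prop:x2y2_3-Cayley-connected}). Your $\mathbb{R}^4$-path-and-flatten heuristic would need to guarantee the flattened family stays connected in $\mathbb{R}^3$, which is exactly the phenomenon you are trying to prove; Theorem~\ref{thm:sitharam_willoughby} gives non-emptiness of the flattened set, not connectivity.

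In short: the skeleton is right, but the two steps you deferred --- the lift when $f$ is doubled, and the mechanism that forces all CS components to cover the same $f$-interval --- are the theorem's actual content and require the winged-graph/partial-$3$-tree decomposition and the minimal-pair induction rather than direct lifting or generic edge deletion.
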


\begin{remark}
 This theorem is equivalent to Conjecture \ref{conj:sitharam-gao} since the right hand side can be stated equivalently as: ``$f$ is contracted in each forbidden minor of each atom of $G\cup f$ that contains $f$.''
 For atoms $G\cup f$, the right hand side is even simpler: ``$f$ is contracted in each forbidden minor of $G\cup f$.''  However, inductive aspects of the   proof  - in both directions, even for the special case of atoms - appear to require the full strength of the general theorem, as discussed below.
 \end{remark}

 We note that for any dimension $d \leq 2$, this result essentially follows from Theorem \ref{thm:2-SIP} and its proof in \cite{sitharam2010characterizing}.  
Specifically, the proof of Theorem 5.2 in \cite{sitharam2010characterizing}, which is used to prove Theorem \ref{thm:2-SIP}, already incorporates the additional ingredient that if $G \cup f$ has an atom that contains $f$ and has a $d$-forbidden minor, then it has an atom (in fact the same atom) with an \emph{$f$-preserving} $d$-forbidden minor. 
Therefore our main contribution is for $d=3$, and the proof is presented in Sections \ref{sec:reverse} (converse direction) and \ref{sec:forward} (forward direction).  
In Section \ref{sec:conclusions}, we conjecture that  Theorem \ref{thm:3-sip_characterization} is true for all $d \geq 4$.

 \medskip\noindent
Consider a natural approach to proving the forward ``only if'' direction of  Theorem \ref{thm:3-sip_characterization}. Assume that when  $G\cup f$ is a forbidden minor we have a certificate against $d$-SIP of $(G,f)$, i.e. a length assignment for the edges of $G$ for which there are two disjoint intervals of  lengths attained by $f$. We can then extend this minor certificate to a certificate against $d$-SIP for general $(G,f)$ provided $f$ is \emph{retained} in an   \emph{induced} forbidden minor of $G\cup f$. By \emph{induced} minor, we mean a minor obtained only using edge contractions, without any edge deletions; and by \emph{retained} in a minor, we mean that the pair $uv = f$    is  not only preserved in  a minor of $G$, but remains a nonedge.
The certificate against $d$-SIP for $(G,f)$  is obtained by setting all   lengths  of contracted edges to 0 and the length of each  non-contracted edge to that of its corresponding edge in the minor certificate. 
The following observation  formalizes this approach.

\begin{observation}
    \label{obs:3-sip-retained}
    For any dimension $d \leq 3$, a graph-nonedge pair $(G,f)$  has the $d$-SIP only if $G \cup f$  does not have an $f$-retaining induced $d$-forbidden minor.  
\end{observation}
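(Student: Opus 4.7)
The plan is to argue by contrapositive: assume $G \cup f$ admits an $f$-retaining induced $d$-forbidden minor $M$, and produce a squared edge-length map $\ell$ on $E(G)$ whose CCS $\Omega^d_f(G,\ell)$ is disconnected, witnessing the failure of $d$-SIP. Because the minor is induced (contractions only) and $f$ is retained, the same sequence of contractions applied to $G$ rather than $G \cup f$ yields an induced rooted minor $N$ of $G$ in which the two endpoints of $f$ project to distinct vertices spanning no edge of $N$; equivalently, $M = N \cup f$ where the edge $[u][v]$ of $M$ is contributed solely by the original edge $f$ of $G \cup f$.

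The first step is to establish a \emph{minor certificate}: a squared edge-length map $\ell_N$ on $E(N) = E(M) \setminus \{f\}$ for which $\Omega^d_f(N, \ell_N)$ is not a single interval. For $d \leq 2$ the only forbidden minor is $K_{d+2}$, and the certificate is immediate --- the $1$-realization argument for $K_3 \setminus e$ (cf.\ Figure \ref{fig:k3-e}) and the $2$-realization argument for $K_4 \setminus e$ (cf.\ Figure \ref{fig:k3-e_realization}) each produce two attainable squared lengths for $f$. For $d = 3$ the forbidden minors are $K_5$ and $K_{2,2,2}$ (Figure \ref{fig:fbms}); both are edge-transitive, so the choice of $f \in E(M)$ is immaterial. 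The Cayley--Menger determinant on the vertex set of $M$ must vanish on every $3$-realization (since $|V(M)| \geq 5 \geq d + 2$); regarded as a polynomial in the variable $\ell(f)$ with the remaining squared edge-lengths specialized to generic admissible values it is quadratic, and I would show that it has two distinct real roots, both lifting to genuine $3$-realizations of $(N, \ell_N)$. The $M = K_5$ case is the flex-the-two-tetrahedra picture already used in Example \ref{ex:1} and Figure \ref{fig:k5-e}; for $M = K_{2,2,2}$ an analogous Cayley--Menger inspection applies.

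Next I would lift the certificate to $G$ via a \emph{contraction-to-zero} gadget. For each edge $e \in E(G)$: if the two endpoints of $e$ lie in the same vertex class of the minor map $G \twoheadrightarrow N$ (i.e., $e$ is contracted in the minor), set $\ell(e) = 0$; otherwise $e$ projects to a unique edge $[e]$ of $N$ and I set $\ell(e) = \ell_N([e])$. This is well-defined because any two $G$-edges spanning the same pair of vertex classes project to the same edge of $N$ and therefore receive the same length. The final step is to show $\Omega^d_f(G, \ell) = \Omega^d_f(N, \ell_N)$ by realization transfer. In one direction, any $p \in \mathcal{C}^d(G, \ell)$ must collapse each vertex class $[w]^{-1}$ to a single point of $\mathbb{R}^d$ --- each such class is connected in $G$ and all its internal edges have length $0$ --- so $p$ descends to a realization $\bar p \in \mathcal{C}^d(N, \ell_N)$ with $\phi_f(\bar p) = \phi_f(p)$. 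Conversely, any $\bar p \in \mathcal{C}^d(N, \ell_N)$ lifts via $p(w) := \bar p([w])$. Hence $\Omega^d_f(G, \ell)$ inherits the disconnection of the certificate and $(G,f)$ fails $d$-SIP.

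The hard part is the minor certificate: certifying that $(K_5 \setminus e, e)$ and $(K_{2,2,2} \setminus e, e)$ each genuinely attain two disjoint squared lengths for $e$ in $\mathbb{R}^3$, not merely two realizations yielding the same attainable value. For this one must exhibit an explicit $\ell_N$ making the quadratic Cayley--Menger polynomial in $\ell(f)$ have two distinct real roots, with both roots simultaneously satisfying all smaller Cayley--Menger nonnegativity constraints required for $3$-realizability. Once this base case is in hand the lift to $G$ is a routine realization-transfer argument whose only subtlety is careful rooted-minor bookkeeping, to be formalized in Section \ref{sec:minor-notation}.
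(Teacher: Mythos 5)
Your proof follows the same two-step structure as the paper's: (1) exhibit a length assignment on $M \setminus [f]$ (the ``minor certificate'') witnessing failure of $d$-SIP for the forbidden minor, then (2) pull it back to $(G,f)$ via the induced minor pair. Your ``contraction-to-zero'' gadget and realization-transfer argument is precisely the paper's Lemma~\ref{lem:cm_non-sip} unpacked: when every contracted edge has squared length zero, a $d$-realization of $G$ is forced to collapse each vertex class to a point and so descends to a $d$-realization of $N$, and conversely; so $\Omega^d_f(G,\ell) = \Omega^d_{f'}(N,\ell_N)$. That part is correct and is exactly what the paper uses. The paper discharges the minor certificate by citation---Example~\ref{ex:1} and Lemma~\ref{lem:no_type_2_pairs_no_3-sip} for $K_{d+2}$ (where the all-unit-lengths assignment works), and Theorem~8 of \cite{belk2007realizability1} for $K_{2,2,2}$---rather than re-deriving it.

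The gap is in your Cayley--Menger treatment of the $K_{2,2,2}$ certificate. You write that the Cayley--Menger determinant on $V(M)$, ``regarded as a polynomial in $\ell(f)$ with the remaining squared edge-lengths specialized,'' is quadratic with two roots. This is clean for $M = K_5$: after fixing the $9$ edge lengths of $K_5 \setminus e$ only $\ell(e)$ is free, so the $5$-point Cayley--Menger determinant really is a univariate quadratic, and $K_5 \setminus e$ is generically isostatic in $\mathbb{R}^3$, so its CCS is a finite set and the two-root count translates directly into a two-point CCS. But for $M = K_{2,2,2}$ this does not go through: fixing the $11$ edge lengths of $K_{2,2,2}\setminus e$ leaves \emph{four} free pairwise squared distances ($e$ together with the three nonedges of $K_{2,2,2}$, which form a perfect matching on the six vertices), so the $6$-point Cayley--Menger condition is a hypersurface in a $4$-dimensional parameter space, not a univariate polynomial whose roots you can count. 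Moreover $K_{2,2,2}\setminus e$ has one degree of freedom in $\mathbb{R}^3$, so $\Omega^3_e$ is generically a $1$-dimensional semi-algebraic set; you would need to argue that this projection has at least two connected components, which is a different (and harder) statement than ``a quadratic has two roots.'' You flag the minor certificate as ``the hard part,'' correctly, but the method you propose for it is tailored to the complete-graph case and does not supply what is needed for $K_{2,2,2}$. You should instead invoke the explicit construction from the proof of Theorem~8 in \cite{belk2007realizability1}, as the paper does.
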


\begin{proof}
    Let $M$ be an $f$-retaining induced $d$-forbidden minor, and let $f'$ be the  nonedge in $M$ corresponding to $f$.  
    The fact that $(M,f')$ does not have the $d$-SIP  follows relatively straightforwardly for $d\le 3$ (see Example \ref{ex:1} and the proof of Lemma \ref{lem:no_type_2_pairs_no_3-sip}, if $M$ is $K_{d+2}$; and the proof of Theorem 8 in \cite{belk2007realizability1}; if $M$ is $K_{2,2,2}$).  
    Along with Lemma \ref{lem:cm_non-sip}, stated later in this section, it follows that $(G,f)$ does not have the $d$-SIP.  
\end{proof}

To illuminate the obstacles to  straightforward extensions of the  forward ``only if''  direction of Observation \ref{obs:3-sip-retained} towards a proof of the forward direction of  Theorem \ref{thm:3-sip_characterization}, we first inspect the relationships between a nonedge $f$ being preserved (resp. retained), in a minor (resp. induced minor).
Observe that a nonedge $f$ is preserved  in a   minor $M$ of $G\cup f$ if and only if it is retained   in $M$.  One direction of this statement is immediate. For the   other direction  first delete the edges in $G$  whose endpoints correspond to $f$'s endpoints in $M$.  While being preserved in a minor is equivalent to being preserved in an induced minor, the
  equivalence between preserved and  retained  is false for induced minors  (where the minor is obtained using only contractions). 

Due to this reason, Observation \ref{obs:3-sip-retained} cannot be extended beyond induced forbidden minors.
In particular, Lemma \ref{lem:gluing} allows us to choose $(G,f)$ 
 s.t. $(G,f)$  has $d$-SIP although  $f$  is retained in a forbidden minor, i.e., preserved in an (induced) forbidden minor. This is achieved by  some atom of $G \cup f$ that does not contain $f$ having a $d$-forbidden minor. For example, take $d=2$ and $(G,f)$ to be the pair on the left in Figure \ref{fig:not_minor closed}. In other words, this weaker property of $f$ being retained in a forbidden minor is insufficient as a certificate against $d$-SIP.
  However, if $G$ is an atom, the weaker property of $f$ being retained in a forbidden minor is a sufficient certificate for $(G,f)$ to not have $d$-SIP, for $d\le 3$,  as we show  below in Corollary 
  \ref{cor:3-SIP-atom_fbm_char}, of the forward direction of the main theorem applied to atoms.  
  
 From a larger perspective,  Theorem \ref{thm:3-sip_characterization} has the flavor of a finite forbidden minor characterization, 
although $d$-SIP is not minor closed, in particular, it is not closed under edge deletions. 
   Specifically, only those atoms of $G \cup f$ that \emph{contain} $f$ must avoid a certain presentation of forbidden minors; and   
 atoms containing $f$ are allowed to have these forbidden minors  as long as  $f$ is contracted in them. 
 For a more formal exploration, we introduce \emph{minor pairs} of a graph-nonedge pair, analogous to minors of graphs.  
$(M,f')$ is a \emph{(resp. induced) minor pair} of a graph-nonedge pair $(G,f)$ if $M$ is an $f$-retaining (resp. induced) minor of $G$ and $f'$ is the nonedge of $M$ corresponding to $f$.  
A property of graph-nonedge pairs is \emph{(resp. induced) minor pair closed} if whenever a pair $(G,f)$ has the property, so do all its (resp. induced) minor pairs.  

In Example \ref{ex:2}, it was shown that the $2$-SIP is not closed under edge deletions, and hence this property is not minor pair closed.  
Using a similar argument, we see that the $d$-SIP (the property on the left-hand side of the theorem statement) is not minor pair closed for any $d \geq 2$.  
(Aside: later in this section,  a repeatedly used Lemma \ref{lem:tool} will show that the $d$-SIP is closed under a restricted type of edge deletion).
However, Lemma \ref{lem:cm_non-sip} establishes that the $d$-SIP is induced minor pair closed.
Not surprisingly, the right hand side of Theorem \ref{thm:3-sip_characterization}, namely the property ``each atom of $G \cup f$ that contains $f$ has no $f$-preserving $d$-forbidden minor," is closed under contractions but not under edge deletions, and therefore is induced minor pair closed, but not minor pair closed.
However, if we strengthen this property to  ``$G \cup f$  does not have an $f$-preserving forbidden minor {\bf (*)}'' then it  is closed under both edge deletions and contractions and therefore is minor pair closed. 
Moreover, when $G \cup f$ \emph{is} an atom, this strengthened property {\bf (*)} is precisely  the right hand side of Theorem \ref{thm:3-sip_characterization}, and is  necessary for $d$-SIP for $d\le 3$. 
Hence, we obtain the following corollary for atoms. 

We note that   we are  unaware of a direct proof of this corollary  except as a consequence of the forward direction of  Theorem \ref{thm:3-sip_characterization}: even for the special case of atoms, the induction  in the forward direction appears to  involve   non-atoms.

 

\begin{corollary}[$(d \leq 3)$-SIP is minor pair closed for atoms]
    \label{cor:3-SIP-atom_fbm_char}
    For any dimension $d \leq 3$, the $d$-SIP is minor pair closed for atoms: for any graph $G$ with nonedge $f$, if $G \cup f$ is an atom and $(G,f)$ has the $d$-SIP, then all its minor pairs also have the $d$-SIP.
\end{corollary}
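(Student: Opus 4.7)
The plan is to derive this corollary as a direct consequence of Theorem \ref{thm:3-sip_characterization} in both directions, combined with the observation (already emphasized in the discussion preceding the corollary) that the strengthened property $(\star)$: ``$G \cup f$ has no $f$-preserving $d$-forbidden minor,'' is closed under minor pairs. Since $G \cup f$ is an atom, the right-hand side of Theorem \ref{thm:3-sip_characterization} reduces precisely to property $(\star)$, so the forward direction applied to the $d$-SIP hypothesis yields that $G \cup f$ satisfies $(\star)$.

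Next, I would fix an arbitrary minor pair $(M, f')$ of $(G,f)$ and show that $(M, f')$ has the $d$-SIP. By definition, $M$ is an $f$-retaining minor of $G$, so $M \cup f'$ is an $f$-preserving rooted minor of $G \cup f$ (with $f$ becoming $f'$). Transitivity of the rooted minor relation transfers $(\star)$ to $(M, f')$: any $f'$-preserving $d$-forbidden minor of $M \cup f'$ would compose with the minor operations producing $M \cup f'$ from $G \cup f$ to yield an $f$-preserving $d$-forbidden minor of $G \cup f$, contradicting $(\star)$. A fortiori, no atom $A$ of $M \cup f'$ that contains $f'$ admits an $f'$-preserving $d$-forbidden minor, since $A$ is an induced subgraph of $M \cup f'$ and any such minor of $A$ lifts to one of $M \cup f'$ by absorbing the outside vertices into the partition classes of the witnessing rooted minor. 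The reverse direction of Theorem \ref{thm:3-sip_characterization} applied to $(M, f')$ then delivers the $d$-SIP.

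The only real bookkeeping item is the atom-to-ambient lifting of rooted minors: the partition underlying a rooted minor must cover the entire vertex set, so the extension of a witnessing rooted minor on an atom $A$ to one on $M \cup f'$ requires absorbing each vertex of $V(M \cup f') \setminus V(A)$ into an existing partition class while preserving connectedness of that class, then deleting any surplus edges introduced by the expansion. Since atoms are connected and the endpoints of $f'$ lie inside $A$, this absorption never contracts $f'$'s endpoints, and the surplus-edge deletion recovers the isomorphism type of the forbidden minor (whether $K_5$ or $K_{2,2,2}$). Thus the corollary reduces cleanly to the two directions of Theorem \ref{thm:3-sip_characterization}, and, as noted in the excerpt, this route appears to be essential, since the induction underlying the forward direction passes through non-atoms even when $G \cup f$ itself is an atom.
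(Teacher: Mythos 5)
Your argument is correct and is precisely the paper's intended derivation: apply the forward direction of Theorem~\ref{thm:3-sip_characterization} to get property~{\bf (*)} for the atom $G\cup f$, invoke minor-pair-closedness of~{\bf (*)} (with the atom-to-ambient lifting you spell out), and conclude via the converse direction. The paper does not write this proof out explicitly but presents it in the discussion immediately before the corollary; your write-up is simply that same argument with the lifting bookkeeping made explicit.
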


\medskip\noindent
  Next observe that the converse of  Observation \ref{obs:3-sip-retained} does not hold even   when $G\cup f$ is an atom. I.e. there are atoms $G \cup f$ where  $(G,f)$ does not have $3$-SIP  although $f$ is not retained in  any induced forbidden minor.  For example,  let $G \cup f$ to be the graph in Figure \ref{fig:min_2} with $f$ being the solid green edge.  
  But  $f$   not being retained in \emph{any} forbidden minor   is precisely the strengthened property {\bf (*)}. And 
  the converse of Theorem \ref{thm:3-sip_characterization}  shows sufficiency for $(G,f)$  to have  $d$-SIP for $d \le 3$ even for non-atoms (a core  idea underlying the converse was illustrated in Example 1, Figures \ref{fig:k3-e} \ref{fig:k3-e_realization},   \ref{fig:k4-e}, and   \ref{fig:converse_4}). 
 However, as in the case of Corollary \ref{cor:3-SIP-atom_fbm_char} - i.e. the necessity of  the property {\bf (*)} for ($\le 3$)-SIP for atoms -  we are  unaware of a direct proof of  the sufficiency of (*) for ($\le 3$)-SIP  for   atoms,    except as a consequence of the general converse direction of  Theorem \ref{thm:3-sip_characterization}:   even for the special case of atoms, the induction  in the  converse direction appears to  involve   non-atoms.

  Furthermore, as pointed out earlier, when $G\cup f$ is not an atom,   property {\bf (*)} is  not necessary for $(G,f)$ to have $(\le 3)$-SIP.   In particular, Lemma \ref{lem:gluing} allows us to choose $(G,f)$ so that it has the $d$-SIP although some atom of $G \cup f$ that \emph{does not} contain $f$ has a $d$-forbidden minor, e.g. the left Figure \ref{fig:not_minor closed} for $d=2$,.

 \medskip\noindent
Next, Corollary \ref{cor:flat-iff-sip}, below,  adds an apparently much weaker, but in fact, equivalent statement to Theorem \ref{thm:sitharam_willoughby} in Section \ref{sec:related} for dimensions $d \leq 3$.  
\begin{corollary}[Adding equivalent statement to Theorem \ref{thm:sitharam_willoughby} for $d \leq 3$]
    \label{cor:flat-iff-sip}
    For any dimension $d \le 3$ and graph $G$ the following are equivalent.
     \begin{enumerate}
         \item  $G$ is $d$-flattenable
         \item for any subset $F \subseteq E(G)$, the pair $(G \setminus F,F)$ is $d$-convex
         \item for each edge $f$ of $G$, the pair $(G \setminus f, f)$ has the $d$-SIP  
     \end{enumerate}     
\end{corollary}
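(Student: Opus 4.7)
The plan is to prove the corollary by closing the cycle $(1) \Rightarrow (2) \Rightarrow (3) \Rightarrow (1)$. The first implication is exactly Theorem~\ref{thm:sitharam_willoughby}, so I focus on the remaining two. For $(2) \Rightarrow (3)$, fix any edge $f$ of $G$ and specialize $F$ in $(2)$ to the singleton $\{f\}$; then $d$-convexity of $(G \setminus f, \{f\})$ is, by Definition~\ref{def:d-sip} and $|F|=1$, exactly the $d$-SIP of the graph-nonedge pair $(G \setminus f, f)$.

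The substantive direction is $(3) \Rightarrow (1)$, which I establish by contrapositive using Theorem~\ref{thm:3-sip_characterization} as a black box. Suppose $G$ is not $d$-flattenable. The finite forbidden minor characterization of $d$-flattenability recalled after Theorem~\ref{thm:2-SIP} then supplies a $d$-forbidden (unrooted) minor of $G$ --- namely $K_{d+2}$ for $d \leq 2$, or one of $\{K_5,K_{2,2,2}\}$ for $d = 3$ --- which I realize as a rooted minor $H$ with branch sets $\{B_v\}_{v \in V(H)}$. Next I will invoke the standard structural fact that such a $3$-connected rooted minor of $G$ is in fact a rooted minor of a single atom $A$ of $G$: intuitively, because $H$ (being $3$-connected, indeed $4$-connected for $d = 3$) has no clique separator, any branch sets that straddle a clique separator $C$ of $G$ can be redistributed to one side of $C$ without losing inter-branch adjacencies, since $C$ is a clique. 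Now, because $H$ has at least one edge $uv$, and each edge of a rooted minor corresponds to at least one host-graph edge with endpoints in the two branch sets, I can pick such an edge $f \in E(A) \subseteq E(G)$ with one endpoint in $B_u$ and the other in $B_v$. Then $H$ is an $f$-preserving $d$-forbidden rooted minor of $A$, and $A$ is an atom of $G = (G \setminus f) \cup f$ that contains $f$. Theorem~\ref{thm:3-sip_characterization} now yields that $(G \setminus f, f)$ fails the $d$-SIP, contradicting $(3)$.

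The hard part is justifying the structural claim that a $d$-forbidden (hence $3$-connected) minor of $G$ can be localized to a single atom of $G$. This is folklore for $3$-connected minors and clique-separator decompositions --- essentially equivalent to the fact that $3$-connected minors live inside single ``$3$-connected components'' --- and the paper's appendix on clique separators and partial $3$-trees (mentioned in the Organization subsection) should supply the machinery I need. Once this localization is in hand, everything else reduces to choosing the witness edge $f$ between two branch sets of $H$ in $A$ and invoking the main theorem. No finer analysis of configuration spaces is needed beyond the encapsulated statement of Theorem~\ref{thm:3-sip_characterization}.
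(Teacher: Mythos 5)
Your proof is correct and follows the same essential route as the paper's: localize a $d$-forbidden minor to a single atom $A$ (using the clique-separator machinery in the appendix, e.g.\ Lemma~\ref{lem:3-flat_fm_containing_e_pass_through_minimal_3-clique-sum_components}), observe that the minor must preserve some edge $f$ of $A$, and invoke Theorem~\ref{thm:3-sip_characterization} for $(G \setminus f, f)$.

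The one genuine (though minor) divergence is how you handle the atom reduction. The paper argues by contradiction: it first reduces ``wlog $G$ is an atom'' via Corollary~\ref{cor:atom_gluing}, transferring hypothesis~(3) to the atom $A$, then applies Theorem~\ref{thm:3-sip_characterization} to conclude no edge of $A$ is preserved in any forbidden minor --- contradicting the existence of a forbidden minor. You instead take the contrapositive and apply the forward direction of Theorem~\ref{thm:3-sip_characterization} directly to the original $G$: the atom $A$ of $(G \setminus f) \cup f = G$ contains $f$ and carries an $f$-preserving forbidden minor, so $(G \setminus f, f)$ fails $d$-SIP. This sidesteps Corollary~\ref{cor:atom_gluing} entirely and is arguably the cleaner phrasing, since Theorem~\ref{thm:3-sip_characterization} already quantifies over all atoms of $G \cup f$ and you only need the existence of one bad atom. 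Both formulations rest on the same two nontrivial ingredients --- the localization of atom-minors to atoms, and the fact that a forbidden minor on at least two vertices must preserve at least one host-graph edge --- so the gain is expository rather than mathematical.
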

\begin{proof}
    (1) $\implies$ (2) is Theorem \ref{thm:sitharam_willoughby}.  
    (2) $\implies$ (3) is  immediate from definition of convexity.  
    For (3) $\implies$ (1), assume to the contrary that (3) is true while (1) is false.  
    Then, $G$ has a $d$-forbidden minor.  
    Since each $d$-forbidden minor is an atom, some atom of $G$ must have a $d$-forbidden minor.  
    Combining this with Corollary \ref{cor:atom_gluing}, below, allows us to assume wlog that $G$ is an atom.  
    However, (3) and Theorem \ref{thm:3-sip_characterization} imply that no edge of $G$ is preserved in any $d$-forbidden minor of $G$, which is a contradiction.    
\end{proof}


\begin{figure}[htb]
    \centering
    \begin{subfigure}{0.19\linewidth}
        \centering
        \includegraphics[width=0.5\linewidth]{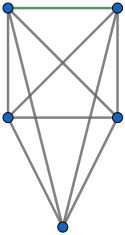}
        \subcaption{}
        \label{fig:min_1}
    \end{subfigure}
    \begin{subfigure}{0.19\linewidth}
        \centering
        \includegraphics[width=0.7\linewidth]{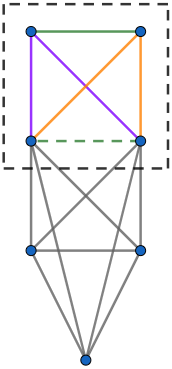}
        \subcaption{}
        \label{fig:min_2}
    \end{subfigure}
    \begin{subfigure}{0.19\linewidth}
        \centering
        \includegraphics[width=0.7\linewidth]{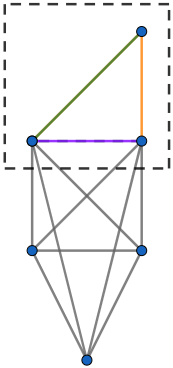}
        \subcaption{}
        \label{fig:min_3}
    \end{subfigure}
    \begin{subfigure}{0.19\linewidth}
        \centering
        \includegraphics[width=0.7\linewidth]{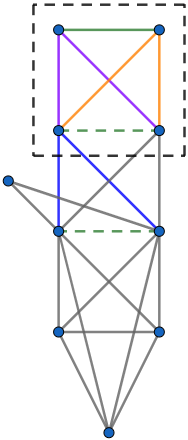}
        \subcaption{}
        \label{fig:min_4}
    \end{subfigure}
    \begin{subfigure}{0.19\linewidth}
        \centering
        \includegraphics[width=0.7\linewidth]{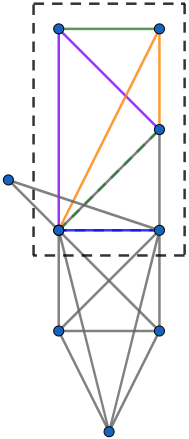}
        \subcaption{}
        \label{fig:min_5}
    \end{subfigure}
    \begin{subfigure}{0.19\linewidth}
        \centering
        \includegraphics[width=0.65\linewidth]{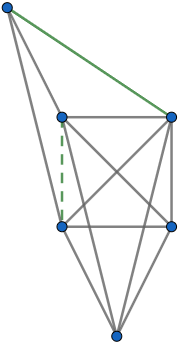}
        \subcaption{}
        \label{fig:min_6}
    \end{subfigure}
    \begin{subfigure}{0.19\linewidth}
        \centering
        \includegraphics[width=\linewidth]{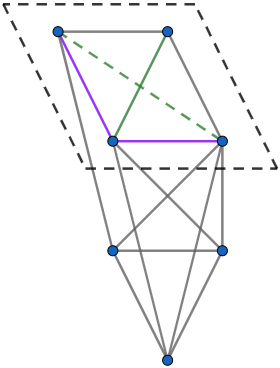}
        \subcaption{}
        \label{fig:min_7}
    \end{subfigure}
    \begin{subfigure}{0.19\linewidth}
        \centering
        \includegraphics[width=0.63\linewidth]{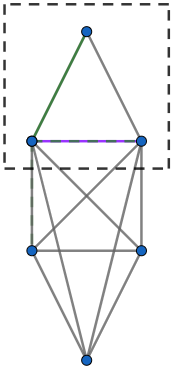}
        \subcaption{}
        \label{fig:min_8}
    \end{subfigure}
    \begin{subfigure}{0.19\linewidth}
        \centering
        \includegraphics[width=0.91\linewidth]{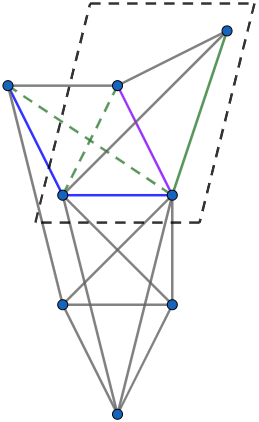}
        \subcaption{}
        \label{fig:min_9}
    \end{subfigure}
    \begin{subfigure}{0.19\linewidth}
        \centering
        \includegraphics[width=0.8\linewidth]{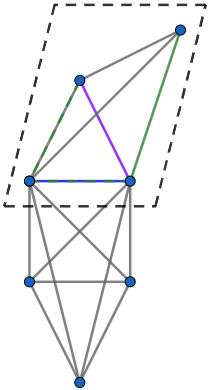}
        \subcaption{}
        \label{fig:min_10}
    \end{subfigure}
    \caption{See Example \ref{ex:3} in this section, the discussion in Section \ref{sec:proof_obstacles},   Examples \ref{ex:minimal}-\ref{ex:cms} and the discussion in Section \ref{sec:forward}. Subfigures (a), (b), (d), (f), (g), and (i) show graphs $G \cup f$ where $f$ is the green edge and $(G,f)$ is a minimal graph-nonedge pair (Definition \ref{def:minimal_pair} in Section \ref{sec:forward}).  
    Dashed green line-segments are nonedges.  
    Top-level expanded $f$-separating CMSs of $G \cup f$ (Definition \ref{def:expanded_e-separating_pair} in Section \ref{sec:forward}) are shown as the purple and orange subgraphs in (B) and (D), the purple subgraph in (G), and the connected subgraph containing one purple edge and one blue edge in (I).  
    The expanded $f$-components of these CMSs are the subgraphs in the dashed black boxes.  
    Contracting one purple edge in (b) and (d) yield (c) and (e), respectively.  
    Contracting one bluw edge in (g) and (i) yield (h) and (j), respectively.  
    The purple subgraphs in (c) and (h) and the blue subgraphs in (e) and (j) are $f$-separating CMSs whose $f$-components are the subgraphs in the dashed black boxes.  
    The purple $f$-separating CMSs are in the top-level while the blue $f$-separating CMSs are not.  
    }
    \label{fig:forward}
\end{figure}

\begin{example}
    \label{ex:3}
    Theorem \ref{thm:3-sip_characterization} can be used to verify the following statements.  
    A graph-nonedge pair $(G,f)$ has the $3$-SIP if it is any of the pairs in Figures \ref{fig:k3-e}-\ref{fig:k4-e}, the top-left or bottom-left of Figure \ref{fig:2_and_3-convexity_ex} such that $f$ is any of the dashed nonedges, Figure \ref{fig:d-convexity_ex} such that $f$ is any of the dashed nonedges, Figure \ref{fig:not_minor closed}, Figure \ref{fig:k5_wing} or \ref{fig:k222_wing} such that $f$ is $w_1w_2$, and Figure \ref{fig:partial_3-tree} such that $f$ is any of the dashed nonedges.  
    $(G,f)$ has the $3$-SIP if $G \cup f$ is any of the graphs in Figures \ref{fig:min_3}, \ref{fig:min_5}, \ref{fig:min_8}, or \ref{fig:min_10}, where $f$ is the top-most green edge.  
    $(G,f)$ does not have the $3$-SIP if it is the pair in Figure \ref{fig:k5-e} or \ref{fig:clique_sep} or if $G \cup f$ is any of the graphs in Figures \ref{fig:min_1}, \ref{fig:min_2}, \ref{fig:min_4}, \ref{fig:min_6}, \ref{fig:min_7}, or \ref{fig:min_9}, where $f$ is the green edge.  
    
    The fact that the pairs in Figure \ref{fig:k5_k222_e_contracted} have the $3$-SIP demonstrates that Theorem \ref{thm:2-SIP} does not extend to $d = 3$ even after replacing ``$K_{d+2}$ minor'' with ``$d$-forbidden minor'' on the right-hand side.  
    \qed
\end{example}

\begin{figure}[htb]
    \centering
    \begin{subfigure}{0.49\textwidth}
        \centering
        \includegraphics[width = 0.45\textwidth]{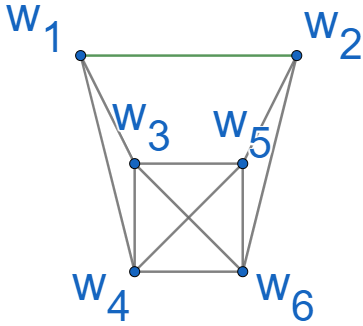}
\subcaption{}
        \label{fig:k5_wing}
    \end{subfigure}
    \begin{subfigure}{0.49\textwidth}
        \centering
        \includegraphics[width = 0.4\textwidth]{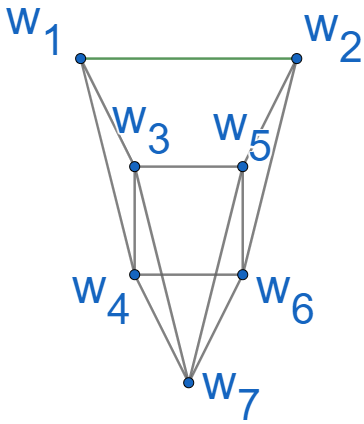}
\subcaption{}
        \label{fig:k222_wing}
    \end{subfigure}
    \caption{Graphs with $K_5$ and $K_{2,2,2}$ minors reachable only by contracting the edges labeled $w_1w_2$.  
    See Example \ref{ex:3} in this section and the discussions in Sections \ref{sec:proof_obstacles} and \ref{sec:reverse}.  
    These figures are also referenced throughout the paper.  }
    \label{fig:k5_k222_e_contracted}
\end{figure}

\begin{figure}[htb]
   \centering
   \begin{subfigure}[t]{0.49\linewidth}
       \centering
       \includegraphics[width=0.5\linewidth]{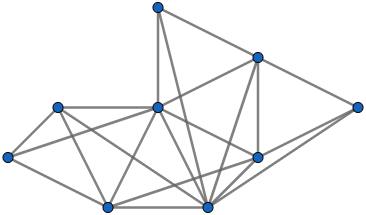}
       \caption{}
       \label{fig:3-tree}
   \end{subfigure}
   \begin{subfigure}[t]{0.49\linewidth}
       \centering
       \includegraphics[width=0.5\linewidth]{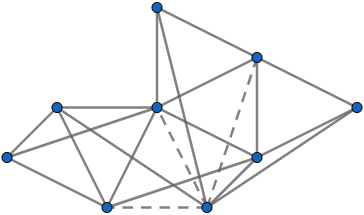}
       \caption{}
       \label{fig:partial_3-tree}
   \end{subfigure}
   \caption{(a) a $3$-tree $T$ and (b) a $3$-connected partial $3$-tree obtained from $T$ by deleting the dashed edges.  
   See Example \ref{ex:4} and the discussion in this section.  }
   \label{fig:3-tree_partial_3-tree}
\end{figure}

Next, Theorems \ref{prop:3-connected_partial_3-tree_star_lemma} and \ref{prop:partial_3-tree_3-reflection}, below, are  independently interesting results concerning properties of partial 3-trees and covering maps. They are tools used in, and byproducts of, proving Theorem \ref{thm:3-sip_characterization}.  
They are proved in Section \ref{sec:tool-thms}.  
A \emph{$3$-tree} is either $K_4$ or a graph obtained from a $3$-tree $T$ by adding a single vertex connected to each vertex in a clique of size $3$ in $T$.  
Equivalently, a $3$-tree is an edge-maximal tree-width $3$ graph.  
A \emph{partial $3$-tree} is any subgraph of a $3$-tree.  
Equivalently, a partial $3$-tree is any graph that avoids having $K_5$, $K_{2,2,2}$, $V_8$, and $C_5 \times C_2$ as minors \cite{arnborg1990forbidden}; see Figures \ref{fig:d-convexity_ex} and \ref{fig:fbms}.  
See Figure \ref{fig:3-tree_partial_3-tree} for examples.  
Considerable effort has gone into determining if a given set $F$ of nonedges of a partial $3$-tree $G$ is such that $G \cup F$ is a partial $3$-tree \cite{bodlaender2006safe}.  
Theorem \ref{prop:3-connected_partial_3-tree_star_lemma} provides a new tool for this task.  
See Example \ref{ex:4}, below.  
For any integer $k \geq 1$, a graph is $k$-connected if it is connected and any of its separators has size at least $k$.  
We write a countable indexed set $\{X_1,X_2,\dots\}$ as $\{X_i\}$.  
\begin{theorem}[$3$-connected partial $3$-tree star theorem]
    \label{prop:3-connected_partial_3-tree_star_lemma}
    For any $3$-connected partial $3$-tree $G$ and any subset $\{wu_i\}$ of its nonedges, if $G \cup wu_i$ is a partial $3$-tree for each $wu_i$, then $G \cup \{wu_i\}$ is a partial $3$-tree.  
\end{theorem}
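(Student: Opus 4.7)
The plan is to argue by contradiction using the forbidden minor characterization of partial $3$-trees: a graph is a partial $3$-tree iff it has no minor in $\mathcal{F} := \{K_5, K_{2,2,2}, V_8, C_5 \times C_2\}$. Suppose $G' := G \cup \{wu_i\}$ is not a partial $3$-tree, so some $M \in \mathcal{F}$ is a minor of $G'$. Among all presentations of $M$ as a minor of $G'$---i.e.\ choices of branch-set subgraphs together with a witness edge for each edge of $M$---fix one that uses the fewest edges from $\{wu_i\}$, where an edge $wu_i$ is \emph{used} if it either (i) serves as the witness for some edge of $M$ between two distinct branch sets, or (ii) is an edge of some branch set that is required for that branch set's connectivity.

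The core of the argument is to show that any such minimum uses at most one $wu_i$. Granting this, $M$ would be exhibited as a minor either of $G$ itself (if no $wu_i$ is used) or of $G \cup wu_i$ for a single index $i$ (if exactly one is used). Both outcomes contradict the hypotheses, since $G$ and every $G \cup wu_i$ are partial $3$-trees and hence contain no member of $\mathcal{F}$ as a minor.

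Toward the key claim, suppose two edges $wu_1,wu_2$ are both used, and case-split on whether each $wu_j$ is used as a witness or as an internal connectivity edge, and on whether $u_j$ lies in the branch set $B_w$ containing $w$ or in a distinct branch set $B_{u_j}$. In every case the $3$-connectedness of $G$ supplies, via Menger's theorem, three internally vertex-disjoint $w$-to-$u_j$ paths inside $G$. Since each $M \in \mathcal{F}$ has maximum degree at most $4$, the branch set $B_w$ is adjacent to at most four other branch sets in $M$, so only a bounded number of ``foreign'' branch sets can intercept these three paths. A pigeonhole-style argument should then isolate a $G$-path from $w$ to $u_j$ whose internal vertices lie entirely in $B_w \cup B_{u_j}$ (or in $B_w$ alone in the internal-connectivity case); this path can be absorbed into the existing branch sets to replace the role of $wu_j$, producing a minor presentation with strictly fewer used $wu_i$ edges and contradicting minimality.

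The main obstacle is making the rerouting argument fully rigorous. The three vertex-disjoint $G$-paths furnished by Menger's theorem may traverse branch sets other than $B_w$ and $B_{u_j}$, and absorbing such a path into $B_w \cup B_{u_j}$ would steal vertices from the foreign branch sets along it, potentially destroying the rest of the minor. Pushing the argument through will likely require extra structural input specific to $3$-connected partial $3$-trees---such as the fact that every minimal vertex separator of $G$ is a triangle and the rigidity of its width-$3$ tree decompositions---and may well need a separate sub-analysis per element of $\mathcal{F}$ (splitting the maximum-degree-$4$ case $\{K_5,K_{2,2,2}\}$ from the maximum-degree-$3$ case $\{V_8, C_5 \times C_2\}$). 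A cleaner alternative route would be to exhibit a single width-$3$ tree decomposition of $G$ in which $w$ co-occurs with $u_i$ in some common bag for every $i$, since such a decomposition is automatically a width-$3$ decomposition of $G \cup \{wu_i\}$; producing it would amount to a rigidity statement for width-$3$ decompositions of $3$-connected partial $3$-trees.
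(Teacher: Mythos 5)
Your route differs from the paper's in a way that makes the hard case substantially harder, and the step you flag as ``the main obstacle'' is precisely where the work lies, so as written this is a sketch rather than a proof. The paper does \emph{not} argue directly against all four partial-$3$-tree forbidden minors $\{K_5, K_{2,2,2}, V_8, C_5\times C_2\}$. It first asks whether $G\cup\{wu_i\}$ is $3$-flattenable (i.e.\ has no $K_5$ or $K_{2,2,2}$ minor). If yes, it invokes Lemma~\ref{lem:3-flat_and_not_minimally_3-connected_implies_partial_3-tree}: since $V_8$ and $C_5\times C_2$ have no triangle, a $3$-connected $3$-flattenable graph is a subgraph of a $3$-tree, a $V_8$, or a $C_5\times C_2$; and the latter two have no proper $3$-connected spanning subgraph, while $G$ itself is such a subgraph of $G\cup\{wu_i\}$. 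That disposes of $V_8$ and $C_5\times C_2$ entirely, so the remaining minimality/rerouting argument only ever has to handle $K_5$ and $K_{2,2,2}$. By contrast, your plan would require you to carry out the rerouting for all four excluded graphs, including the cubic ones, where the branch-set geometry is different.

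The larger gap is in the rerouting itself. You say a ``pigeonhole-style argument should then isolate'' a replacement path, but this is exactly the difficulty you yourself name in the next sentence: the three Menger paths traverse foreign branch sets, and absorbing one of them destroys the presentation. The paper does not resolve this with a pigeonhole; it uses the rooted-minor machinery of \emph{vertex exchanges} and \emph{component exchanges} (Section~\ref{sec:minor-notation} and Figure~\ref{fig:exchange}) together with an explicit nondecreasing potential (the path-length measures $a([G']_\star), b([G']_\star)$ in the proof of Lemma~\ref{lem:not_2_implies_not_1}) to show the process terminates in a state where one of the $wu_i$ becomes \emph{doubled}, i.e.\ dispensable. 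That measure argument is the missing engine behind ``should then isolate.'' A further structural point your sketch elides: ``fewest $wu_i$ used'' does not immediately give a bound of one. The paper first uses the degree-$4$ bound in $K_5$/$K_{2,2,2}$ and the star structure at $w$ to bound the number of \emph{retained} nonedges in $F$ by three, then invokes Lemma~\ref{lem:not_3-or-2_implies_not_1} to cut that to one, and even after that has to run a long case analysis (the bulk of the proof of Theorem~\ref{prop:3-connected_partial_3-tree_star_lemma}) to dispatch the nonedges of $F$ that are contracted or doubled rather than retained --- your ``used'' bucket lumps these cases together, but they need quite different treatments. Your alternative suggestion --- exhibit one width-$3$ tree decomposition of $G$ in which $w$ and every $u_i$ share a bag --- is appealing but, as you note, is itself an unproved rigidity claim about width-$3$ decompositions of $3$-connected partial $3$-trees, so it trades one open step for another.
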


The above-mentioned forbidden minor characterization for partial $3$-trees along with Theorem \ref{thm:sitharam_willoughby} shows that if a  pair $(G,F)$ is such that $G \cup F$ is a partial $3$-tree, then $(G,F)$ is $3$-convex.  
Theorem \ref{prop:partial_3-tree_3-reflection} shows that if $G \cup F$ is a partial $3$-tree, then $(G,F)$ has an even stronger property, which we now define.
$(G,F)$ has the \emph{$d$-covering map property} if, for any squared edge-length map $\ell$, the image of each connected component of the CS $\mathcal{C}^d(G,\ell)$ under the Cayley map $\phi_F$ is the entire CCS $\Omega^d_F(G,\ell)$.   Since $G\cup F$ is a partial $3$-tree,   the result implies that the CS $\mathcal{C}^d(G,\ell)$ is a so-called branched covering space of the base space $\Omega^d_F(G,\ell)$   given by the Cayley map $\phi_F$.   The $d$-covering map property further excludes the existence of lower dimensional connected components of $\mathcal{C}^d(G,\ell)$ whose image under $\phi_F$ does not cover the base space.  

\begin{theorem}[Partial $3$-tree nonedge set pairs have the $3$-covering map property]
    \label{prop:partial_3-tree_3-reflection}
     For any graph $G$ and nonedge set $F$ where $G \cup F$ is a partial $3$-tree, $(G,F)$ has the $3$-covering map property.  
\end{theorem}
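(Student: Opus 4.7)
The plan is to reduce to the case where $G \cup F$ is itself a 3-tree, strengthen the conclusion to a path-lifting statement, and then induct on the number of vertices. Since $G \cup F$ is a partial 3-tree, I embed it in a 3-tree $T$ on $V(G)$ by adjoining a set $H$ of extra edges. It then suffices to prove that $(G, F \cup H)$ has the 3-covering map property, because $\phi_F$ factors as $\phi_F = \pi_F \circ \phi_{F \cup H}$ where $\pi_F$ is projection onto $F$-coordinates, and a direct check gives $\pi_F(\Omega^3_{F \cup H}(G,\ell)) = \Omega^3_F(G,\ell)$; hence $\phi_{F \cup H}(C) = \Omega^3_{F \cup H}(G,\ell)$ on each component $C$ of $\mathcal{C}^3(G, \ell)$ implies $\phi_F(C) = \Omega^3_F(G,\ell)$. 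So I may assume $G \cup F = T$ is a 3-tree. I further strengthen the target conclusion to a path-lifting statement: for any continuous path $q_t \in \Omega^3_F(G, \ell)$, $t \in [0,1]$, and any $p_0 \in \phi_F^{-1}(q_0)$, there is a continuous lift $p_t \in \mathcal{C}^3(G, \ell)$ with $\phi_F(p_t) = q_t$ starting at $p_0$. The 3-covering map property then follows from path-connectedness of the convex set $\Omega^3_F(G, \ell)$, where convexity comes from Theorem \ref{thm:sitharam_willoughby} applied to the 3-flattenable 3-tree $T$.

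I then induct on $|V(T)|$. The base case $T = K_4$ is handled by direct verification: either $F = \emptyset$ and the CCS is a single point, or $G$ omits at least one edge of $K_4$, in which case the corresponding continuous degree of freedom (a vertex rotating around its remaining neighbors) makes $\mathcal{C}^3(G, \ell)$ connected and the path lift can be constructed explicitly. For the inductive step, pick a simplicial vertex $v$ of $T$ attached to a triangle $\{a, b, c\}$, and set $G' = G \setminus v$, $F' = \{e \in F : v \notin e\}$; then $G' \cup F' = T \setminus v$ is a 3-tree on one fewer vertex. The key structural observation is that all of $v$'s neighbors in $G \cup F$ lie in $\{a, b, c\}$, so in any realization $v$'s placement is governed by at most three distance constraints to $\{a, b, c\}$, contributed either by $\ell$ on edges in $E(G)$ or by target Cayley coordinates on edges in $F$.

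Given the path $q_t$ and starting lift $p_0$, I project to $q_t|_{F'}$ and apply the strengthened inductive hypothesis on $(G', F')$ to obtain a continuous lift $p'_t \in \mathcal{C}^3(G', \ell')$ with $p'_0 = p_0|_{G'}$ and $\phi_{F'}(p'_t) = q_t|_{F'}$. Then I extend each $p'_t$ to $p_t = (p'_t, v(t)) \in \mathcal{C}^3(G, \ell)$ by placing $v(t)$ on the locus $L_t \subseteq \mathbb{R}^3$ cut out by the constraints $|v(t) - u|^2 = \ell(vu)$ for $u \in N_G(v)$ together with the $F$-matching $|v(t) - u|^2 = q_t(vu)$ for $u \in N_F(v)$ (which is what ensures $\phi_F(p_t) = q_t$). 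Since the four-point configuration on $\{v, a, b, c\}$ with distances drawn from $\ell \cup q_t$ is realizable in $\mathbb{R}^3$ at every $t$ — being a sub-configuration of the realizable $(G, \ell \cup q_t)$ — the relevant Cayley--Menger determinant stays nonnegative and $L_t$ is nonempty throughout; it is either two discrete points (when the three $\{a, b, c\}$-distances to $v$ are fully specified) or a continuous sphere, circle, or full $\mathbb{R}^3$, admitting continuous tracking of $v(t)$ from $v(0) = p_0|_v$.

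The hard part will be handling degenerations of $L_t$ where two candidate positions for $v(t)$ merge — namely when $\{a(t), b(t), c(t), v(t)\}$ become coplanar on the tangency of a sphere intersection — where continuous tracking of the reflection choice could flip. I would resolve this by a generic perturbation of the path $q_t$ within the convex $\Omega^3_F(G, \ell)$ to avoid this codimension-one bad locus; alternatively, one can observe that in the only case where the locus is purely discrete, namely $|N_G(v)| = 3$, the inclusion $N_G(v) = \{a, b, c\}$ forces $F_v = \emptyset$, so no $F$-matching at $v$ is actually needed and either reflection of $v$ yields the correct Cayley value. Once continuous tracking is achieved, $p_t$ sits in $\mathcal{C}^3(G, \ell)$ with $\phi_F(p_t) = q_t$ throughout, completing the induction and thereby proving the 3-covering map property for $(G, F)$.
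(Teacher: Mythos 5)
Your overall strategy—reduce to the case $G \cup F = T$ a 3-tree, then induct on a simplicial vertex $v$ attached to a triangle $\{a,b,c\}$—is genuinely in the same spirit as the paper's proof (the paper's Lemma \ref{lem:fixed_lengths_connected} does the same simplicial induction), and your reduction step via $\pi_F \circ \phi_{F \cup H}$ is correct. But your proposal replaces the paper's central mechanism with a direct path-lift, and that is where the gaps appear.

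First, your case analysis of the locus $L_t$ is wrong. Since $v$ is simplicial in the 3-tree $T = G \cup F$, its only neighbors are $\{a,b,c\}$, and each of the three pairs $va, vb, vc$ lies in exactly one of $E(G)$ or $F$. So $|N_G(v)| + |N_F(v)| = 3$ always, and because you impose both the $\ell$-constraints and the $q_t$-matching constraints, all three $\{a,b,c\}$-distances to $v$ are \emph{always} specified. There is no under-determined ``sphere, circle, or full $\mathbb{R}^3$'' branch; generically $L_t$ is two reflected points, degenerating to one point when $\{v,a,b,c\}$ become coplanar, or to a circle when $a(t),b(t),c(t)$ become collinear. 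Consequently, your second resolution (that the discrete-locus case ``forces $F_v = \emptyset$'') is vacuous: the locus is discrete whenever $F_v \neq \emptyset$ too, which is exactly when tracking is delicate. Second, the generic perturbation fallback is incomplete. You need to lift every path starting from an arbitrary $p_0$, and the endpoints $q_0 = \phi_F(p_0)$ and $q_1$ are fixed and may sit on the boundary of the CCS, precisely where the bad locus is unavoidable; perturbing the interior of the path does not fix an endpoint that lies on the bad locus. The paper's proof handles this by working with the relative interior $Y$ of the CCS and then invoking that $\phi_F$ is a closed map (via Lemma \ref{lem:preimage_closure}) to upgrade surjectivity onto $Y$ to surjectivity onto $\overline{Y} = \Omega^3_F(G,\ell)$. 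Your proposal never invokes closedness of $\phi_F$, so even with a correct interior argument the conclusion would be incomplete.

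The paper's technical fix is to introduce the orientation vector $\tau(p)$ and the $\delta$-oriented configuration spaces: by fixing the sign of the orientation determinant of every $K_4$, the simplicial vertex's position is \emph{uniquely} determined by its parent triangle and the specified distances, so lifting within a $\delta$-oriented stratum has no branching at all (Lemmas \ref{lem:fixed_lengths_connected} and \ref{lem:A_empty_config_space_connected}). It then uses Lemma \ref{lem:preimage_closure}(ii) to put every component of the CS in contact with some $\delta$-stratum, and closedness of $\phi_F$ for the boundary. If you wanted to repair your path-lifting argument, you would essentially need to rediscover this: fix a sign choice per $K_4$ at $t=0$ and show the lift can be taken so as to preserve it on the open locus, which is exactly the $\delta$-oriented CS, and then supply the closed-map step. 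As written, the proposal has real gaps around degenerate configurations and CCS boundary points.
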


\begin{example}
    \label{ex:4}
    Consider the partial $3$-tree $P$ in Figure \ref{fig:partial_3-tree} and let $F$ be the set of dashed nonedges.  
    Since $P \cup f$ is a partial $3$-tree for any $f \in F$, Theorem \ref{prop:3-connected_partial_3-tree_star_lemma} states that $P \cup F$ is a partial $3$-tree.  
    Note that $P \cup F$ is the $3$-tree in Figure \ref{fig:3-tree}.  
    Consider any squared edge-length map $\ell$ that assigns each edge of $P$ a non-zero value so that the CS $\mathcal{C}^3(P,\ell)$ is non-empty.  
    Also, let $u$ be the left-most vertex in Figure \ref{fig:partial_3-tree}.  
    For any realization $p$ in this CS, observe that the realization obtained by reflecting $p(u)$ across the plane defined by its neighbors is not in the same connected component of this CS as $p$.  
    Even so, Theorem \ref{prop:partial_3-tree_3-reflection} states that the image of each connected component of this CS is the CCS $\Omega^3_F(P,\ell)$.  
    \qed
\end{example}

Finally, we present several repeatedly used tool lemmas that  capture $d$-convexity, $d$-Cayley-connectedness, and   $d$-SIP.  
In particular, Lemmas \ref{lem:cm_non-sip}-\ref{lem:gluing} and Corollary \ref{cor:atom_gluing}, stated below, are used throughout the proof of Theorem \ref{thm:3-sip_characterization}.  
See Example \ref{ex:5}, below, for example applications of these results.  
To state Lemma \ref{lem:cm_non-sip}, we generalize the notion of a minor pair to sets of nonedges $F$ and $F'$ of a graph $G$ and its minor $M$.  
We say  $(M,F')$ is a minor pair of  $(G,F)$ if $M$ is a minor of $G$ and $F'$ is a, possibly empty, set of nonedges of $M$  whose corresponding set of nonedges $F$ in $G$  are retained in $M$.  

\begin{lemma}[$d$-Convexity/Cayley-connectedness/SIP is induced minor pair closed]
    \label{lem:cm_non-sip}
    The $d$-convexity (resp. $d$-Cayley-connectivity) property is induced minor pair closed.  
\end{lemma}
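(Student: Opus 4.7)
The plan is to show that starting from any induced minor pair $(M,F')$ of $(G,F)$ and any squared edge-length map $\ell'$ on $M$, one can manufacture a squared edge-length map $\ell$ on $G$ for which the CCS of $(G,F)$ at $\ell$ coincides (as a subset of $\mathbb{R}^{|F|}$) with the CCS of $(M,F')$ at $\ell'$. Convexity or connectedness of the former then transfers immediately to the latter, and since $\ell'$ was arbitrary this gives the desired closure property.

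To build $\ell$, first recall that each vertex of $M$ corresponds to a connected subgraph of $G$ (its contraction preimage), and that since $M$ is an \emph{induced} minor, all edges of $G$ either lie inside one such connected subgraph (the contracted edges) or lie between two distinct subgraphs and descend to a unique edge of $M$. Assign squared length $0$ to every contracted edge of $G$, and, for each edge $e$ of $G$ that descends to an edge $e^*$ of $M$, set $\ell(e) = \ell'(e^*)$ (noting that several parallel edges may all be assigned the same value, which is consistent). The key geometric observation is that because each contraction preimage is \emph{connected} and carries only zero squared lengths, every $d$-realization of $(G,\ell)$ must collapse each preimage to a single point of $\mathbb{R}^d$; conversely, any $d$-realization of $(M,\ell')$ lifts to a $d$-realization of $(G,\ell)$ by mapping every vertex of $G$ to the image of its enclosing preimage vertex in $M$.

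These operations are mutual inverses between $\mathcal{C}^d(G,\ell)$ and $\mathcal{C}^d(M,\ell')$, and they preserve pairwise squared distances whenever the two endpoints in $G$ live in different preimages. Because $F$ is retained in the induced minor (each nonedge of $F$ stays a pair of distinct vertices \emph{in different preimages} and descends to the corresponding nonedge of $F'$), the Cayley coordinate of a lifted realization on any $f \in F$ equals the Cayley coordinate of the original realization on the corresponding $f' \in F'$. Hence $\phi_F(\mathcal{C}^d(G,\ell)) = \phi_{F'}(\mathcal{C}^d(M,\ell'))$, i.e.\ $\Omega^d_F(G,\ell) = \Omega^d_{F'}(M,\ell')$ as subsets of $\mathbb{R}^{|F|}$ (under the bijection $F \leftrightarrow F'$).

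If $(G,F)$ is $d$-convex (resp.\ $d$-Cayley-connected), then the left-hand side is convex (resp.\ connected) for \emph{every} $\ell$ on $G$, in particular for the $\ell$ just constructed, so the right-hand side is convex (resp.\ connected); as $\ell'$ ranged over all squared edge-length maps on $M$, this yields $d$-convexity (resp.\ $d$-Cayley-connectedness) of $(M,F')$. The main obstacle is the connectedness-forcing step: one must verify carefully that the ``connected subgraph'' clause in the definition of a rooted minor is what makes zero-length assignments collapse each preimage to a point. Once that is in hand, the remainder is bookkeeping: tracking that edges straddling two distinct preimages all receive the same length so that the lift is well-defined, and that retention of $F$ guarantees the Cayley coordinates match under the bijection.
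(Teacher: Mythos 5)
Your proof is correct and takes essentially the same approach as the paper: the paper's one-line proof asserts, for any $\ell'$ on $M$, the existence of an $\ell$ on $G$ such that $\Omega^d_{F'}(M,\ell')$ is isomorphic to a coordinate-subspace slice of $\Omega^d_F(G,\ell)$, and you have simply written out the construction of that $\ell$ (zero on contracted edges, copied from $\ell'$ on preserved edges) and verified the realization-level bijection. Your version is a welcome expansion; it makes explicit why connectedness, not just convexity, transfers -- the CCS is entirely contained in the relevant subspace, so the ``intersection'' mentioned in the paper is a no-op and the identification with $\Omega^d_{F'}(M,\ell')$ is an equality.
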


\begin{proof}
     For a graph $G$ with nonedge set $F$, let $(G,F)$  be $d$-convex (resp. $d$-Cayley-connected)   with induced minor pair $(M,F')$.  
    Observe that, for any linkage $(M,\ell')$, there exists a linkage $(G,\ell)$ such that the CCS $\Omega^d_{F'}(M,\ell')$ is isomorphic to the intersection of the CCS $\Omega^d_F(G,\ell)$ with a coordinate subspace.  
    Therefore, since the CCS $\Omega^d_F(G,\ell)$ is convex (resp. connected), by assumption, so is the CCS $\Omega^d_{F'}(M,\ell')$.  
\end{proof}

Lemmas \ref{lem:tool} and \ref{lem:gluing} along with Corollary \ref{cor:atom_gluing}, below, are proved in Section \ref{sec:tool_lemmas}.  

\begin{lemma}[Edge deletions that preserve $d$-Cayley-connectivity]
    \label{lem:tool}
    Let $G$ be a graph and $F$ and $F'$ be disjoint sets of its nonedges.  
    For any dimension $d \geq 1$, if $(G,F)$ and $(G \cup F, F')$ are $d$-Cayley-connected, then $(G, F \cup F')$ is $d$-Cayley-connected.  
    Consequently, $(G,F')$ is $d$-Cayley-connected.  
\end{lemma}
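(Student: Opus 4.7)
The plan is to view $\Omega^d_{F\cup F'}(G,\ell)$ as a total space fibered over $\Omega^d_F(G,\ell)$ via the coordinate projection $\pi$ that forgets the $F'$ entries, and to invoke a standard connected-fibers-over-connected-base argument. I fix an arbitrary squared edge-length map $\ell$ on $G$ and aim to show $\Omega^d_{F\cup F'}(G,\ell)$ is connected; $(G,F\cup F')$ being $d$-Cayley-connected then follows because $\ell$ was arbitrary.

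First I would identify the image and fibers of $\pi$. Surjectivity onto $\Omega^d_F(G,\ell)$ is immediate: any realization $p \in \mathcal{C}^d(G,\ell)$ produces a point $(\phi_F(p),\phi_{F'}(p)) \in \Omega^d_{F\cup F'}(G,\ell)$ which projects to $\phi_F(p)$, and every element of $\Omega^d_F(G,\ell)$ arises this way. For the fiber over $x_F$, observe that $(x_F,z_{F'}) \in \Omega^d_{F\cup F'}(G,\ell)$ exactly when some $p$ realizes $(G,\ell)$ with $\phi_F(p)=x_F$ and $\phi_{F'}(p)=z_{F'}$; equivalently, $p$ is a realization of the extended linkage $(G\cup F,\ell_{x_F})$ with $\phi_{F'}(p)=z_{F'}$, where $\ell_{x_F}$ agrees with $\ell$ on $E(G)$ and assigns lengths $x_F$ to the edges $F$ (which is a valid nonnegative squared-length map since $x_F$ lies in $\Omega^d_F(G,\ell)$). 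Hence, up to the obvious coordinate embedding, $\pi^{-1}(x_F) \cong \Omega^d_{F'}(G\cup F,\ell_{x_F})$, which is connected by the hypothesis on $(G\cup F,F')$.

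Next I would invoke base connectedness together with compactness: the base $\Omega^d_F(G,\ell)$ is connected by the $(G,F)$ hypothesis, and by the standing Note in Section \ref{sec:intro} all CCSs are compact, so the continuous surjection $\pi$ from a compact space to a Hausdorff space is closed. A standard argument then applies: if $\Omega^d_{F\cup F'}(G,\ell) = U \sqcup V$ were a disconnection into nonempty clopen subsets, each connected fiber would lie entirely in $U$ or entirely in $V$, partitioning the base into $\{x_F : \pi^{-1}(x_F) \subseteq U\}$ and $\{x_F : \pi^{-1}(x_F) \subseteq V\}$. These two sets are the complements in $\Omega^d_F(G,\ell)$ of $\pi(V)$ and $\pi(U)$ respectively, hence open by the closedness of $\pi$; connectedness of the base forces one of them to be empty, a contradiction. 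This proves $(G,F\cup F')$ is $d$-Cayley-connected, and the ``consequently'' clause follows at once because $\Omega^d_{F'}(G,\ell)$ is the continuous image of the connected $\Omega^d_{F\cup F'}(G,\ell)$ under the coordinate projection onto $F'$.

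I do not anticipate any serious obstacle. The key conceptual move is the clean identification of the fiber $\pi^{-1}(x_F)$ with the auxiliary CCS $\Omega^d_{F'}(G\cup F,\ell_{x_F})$, which is exactly the configuration that activates the $(G\cup F,F')$ hypothesis; after that, the compactness assumption from the Note handles the only delicate point, namely that $\pi$ is a closed map, so the topological lemma on connected fibers over a connected base applies without further work.
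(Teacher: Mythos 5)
Your proof is correct and takes essentially the same route as the paper's: you project $\Omega^d_{F\cup F'}(G,\ell)$ onto its $F$-coordinates, identify the fiber over $x_F$ with $\Omega^d_{F'}(G\cup F,\ell_{x_F})$ so that the two hypotheses supply a connected base and connected fibers, and then use closedness of the projection to conclude the total space is connected. The paper packages the same topological step as a standalone contrapositive statement (its Lemma on connected projections of disconnected compact sets having a disconnected fiber) and then invokes it by contradiction, whereas you inline a direct clopen-partition argument, but these are the same idea in a different presentation.
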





Consider a graph $G$, any of its induced subgraphs $H_1$ and $H_2$, and any subset $U \subseteq V(G)$.  
$H_1 \cup U$ is the graph induced by $V(H_1) \cup U$.  
If $U$ contains exactly one vertex $u$, then we replace $U$ with $u$ in this notation.  
The \emph{$U$-components} of $G$ are the vertex-maximal induced subgraphs of $G$ that are not separated by any subset of $U$.  
The subgraph $H$ induced by $U$ is endowed with the terminology associated with $U$ - e.g., the \emph{$H$-components} of $G$ are the $U$-components are $G$.  

\begin{lemma}[Clique-gluing]
\label{lem:gluing}
    Let $(G,f)$ be a graph-nonedge pair, let $C$ be a clique separator of $G \cup f$, and let $\{G_i\}$ be the set of $C$-components of $G \cup f$ that contain $f$.
    For any dimension $d \geq 1$, $(G,f)$ has the $d$-SIP if and only if, for each $G_i$, $(G_i \setminus f,f)$ has the $d$-SIP.  
\end{lemma}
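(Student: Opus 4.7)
The plan is to prove both directions by relating realizations of $(G,\ell)$ to realizations of the pieces $(G_j,\ell|_{E(G_j)})$ for each $C$-component $G_j$ of $G\cup f$, via the standard clique-gluing principle: a clique's shape in $\mathbb{R}^d$ is determined up to isometry by its pairwise squared distances, so realizations of the $G_j$'s that induce the same squared distances on $E(C)$ can be aligned by isometries of $\mathbb{R}^d$ and glued into a realization of $(G,\ell)$. A mild case split will be convenient: either both endpoints of $f$ lie in $C$, in which case every $C$-component contains $f$ and $C$ is only a clique in $G\cup f$ (not in $G$); or not, in which case $C$ is already a clique in $G$ and exactly one $C$-component $G_i$ contains $f$.

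For the forward direction I will argue by contrapositive. Suppose that, for some $i$, $(G_i\setminus f,f)$ fails $d$-SIP, and fix a witness $\ell_i$ on $E(G_i\setminus f)$. I will extend $\ell_i$ to a squared edge-length map $\ell$ on $G$ as follows: choose any realization $p_i$ of $(G_i\setminus f,\ell_i)$, and for every other $C$-component $G_j$ pick any $d$-realization whose restriction to $C$ is congruent to $p_i|_C$, and set $\ell|_{E(G_j)}$ to be the induced squared edge-length map; all assignments agree on $E(C)\cap E(G)$. The clique-gluing principle then shows that every realization of $(G_i\setminus f,\ell_i)$ extends to a realization of $(G,\ell)$, while every realization of $(G,\ell)$ restricts to one of $(G_i\setminus f,\ell_i)$ with the same $f$-length. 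Hence $\Omega^d_f(G,\ell)=\Omega^d_f(G_i\setminus f,\ell_i)$ is not a single interval, so $(G,f)$ cannot have $d$-SIP either.

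For the converse I will establish, for any squared edge-length map $\ell$ on $G$, the identity
\[
\Omega^d_f(G,\ell)\;=\;\bigcap_{G_i\ni f}\Omega^d_f\bigl(G_i\setminus f,\ \ell|_{E(G_i\setminus f)}\bigr)
\]
whenever $(G,\ell)$ is realizable, and $\Omega^d_f(G,\ell)=\varnothing$ otherwise. In either case $\Omega^d_f(G,\ell)$ is a single (possibly empty) closed interval, since each factor on the right is a single interval by hypothesis and a finite intersection of closed intervals is a closed interval. Containment ``$\subseteq$'' in the identity is free by restriction. For ``$\supseteq$'', take $r$ in the intersection and pick realizations $p_i$ of each $(G_i\setminus f,\ell|_{E(G_i\setminus f)})$ with $f$-length $r$; then all $p_i|_C$ realize the same squared-length data on $E(C)\cup\{f\}$, either because $C$ is already a full clique in $G$ or because the missing edge $f$ is assigned the common value $r$, and so are mutually congruent. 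For any further $C$-components $G_k$ not containing $f$, I take the restriction of some global realization of $(G,\ell)$ (guaranteed by realizability). After aligning all these pieces on $C$ by isometries of $\mathbb{R}^d$, gluing produces a realization of $(G,\ell)$ with $f$-length $r$.

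The main subtlety is the case where both endpoints of $f$ lie in $C$: there $C$ is not a clique in $G$, so the shape of $C$ in a realization of $G$ genuinely depends on the unknown $f$-length, and multiple $G_i$'s simultaneously contain $f$ — which is precisely why the lemma is phrased over the whole family $\{G_i\}$ rather than over a single piece. A secondary bookkeeping point is handling unrealizable $\ell$ uniformly (declaring $\varnothing$ a degenerate single interval) and verifying that non-$f$-containing $C$-components never obstruct the gluing, which follows by borrowing their realizations from any global realization once one exists.
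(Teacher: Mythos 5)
Your converse direction is essentially the paper's: you establish the intersection identity $\Omega^d_f(G,\ell)=\bigcap_i \Omega^d_f\bigl(G_i\setminus f,\ell|_{E(G_i\setminus f)}\bigr)$ (the paper's Lemma~\ref{lem:ccs_intersection}) via clique-gluing and observe that an intersection of intervals is an interval, with a slightly more careful treatment of unrealizable $\ell$. Your forward direction, however, has a real gap in exactly the case you flag as the subtle one: when both endpoints of $f$ lie in $C$. There every $C$-component of $G\cup f$ contains $f$, so the identity you yourself prove forces $\Omega^d_f(G,\ell)=\bigcap_j \Omega^d_f(G_j\setminus f,\ell_j)$, and the other factors can destroy the disconnectedness of $\Omega^d_f(G_i\setminus f,\ell_i)$. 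Concretely, a realization $q$ of $(G_i\setminus f,\ell_i)$ whose $f$-length differs from $r_0:=\|p_i(u)-p_i(v)\|^2$ has $q|_C$ \emph{not} congruent to $p_i|_C$ (they disagree on the $uv$-distance), so it cannot be glued to the fixed realizations you chose; the sentence ``every realization of $(G_i\setminus f,\ell_i)$ extends to a realization of $(G,\ell)$'' is simply false here. For example, with $C=\{u,v\}$ and some other $G_j=K_5$ in $d=3$, the set $\Omega^d_f(G_j\setminus f,\ell_j)$ is finite for \emph{any} $\ell_j$ induced by a single realization of $G_j$; the intersection can collapse to $\{r_0\}$, a degenerate single interval, so your $\ell$ fails to witness non-$d$-SIP for $(G,f)$.

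The paper sidesteps this by not extending $\ell$ directly. It passes to an induced minor of $G\cup f$ that contracts each other $C$-component down to a single degree-two vertex adjacent to the two endpoints of $f$, so the only residual constraint from that side is a triangle inequality on the $f$-length whose interval can be made to engulf the whole disconnected CCS by choosing the two incident lengths; it then invokes closure of $d$-SIP under induced minor pairs (Lemma~\ref{lem:cm_non-sip}). To salvage a direct length extension you would have to choose each $\ell_j$ so that $\Omega^d_f(G_j\setminus f,\ell_j)\supseteq\Omega^d_f(G_i\setminus f,\ell_i)$ and actually prove that containment---for instance by collapsing $G_j\setminus C$ to a distant point and analyzing the resulting triangle-inequality interval---but ``pick any matching realization'' does not deliver it.
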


\begin{corollary}[Atom-gluing]
    \label{cor:atom_gluing}
    For any dimension $d \geq 1$, a graph-nonedge pair $(G,f)$ has the $d$-SIP if and only if, for each atom $G_i$ of $G \cup f$ that contains $f$, $(G_i \setminus f, f)$ has the $d$-SIP.  
\end{corollary}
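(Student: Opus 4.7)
The plan is to establish Corollary \ref{cor:atom_gluing} by induction on $|V(G \cup f)|$, repeatedly applying Lemma \ref{lem:gluing} to peel off one clique separator at a time.

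In the base case, $G \cup f$ has no clique separator, so $G \cup f$ is itself the unique atom, and in particular the only atom containing $f$. Taking this single atom as $G_1 = G \cup f$ gives $(G_1 \setminus f, f) = (G,f)$, so the stated equivalence is immediate.

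For the inductive step, I would assume $G \cup f$ has a clique separator $C$, and let $\{H_j\}$ be the set of $C$-components of $G \cup f$ that contain $f$. By Lemma \ref{lem:gluing}, $(G,f)$ has the $d$-SIP if and only if $(H_j \setminus f, f)$ has the $d$-SIP for every $j$. Each $H_j$ is a strict induced subgraph of $G \cup f$ (some other $C$-component contributes vertices outside $H_j$), so $|V(H_j)| < |V(G \cup f)|$. Since $(H_j \setminus f) \cup f = H_j$, the induction hypothesis applies to $(H_j \setminus f, f)$ and says that $(H_j \setminus f, f)$ has the $d$-SIP if and only if, for every atom $A$ of $H_j$ containing $f$, $(A \setminus f, f)$ has the $d$-SIP.

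The last step is to identify, across all $j$, the resulting collection of atoms with the atoms of $G \cup f$ that contain $f$. This is the standard fact that, relative to any clique separator $C$, the atoms of a graph equal the atoms of its $C$-components, together with the observation that any atom containing $f$ must contain both endpoints of $f$ and hence sit inside a $C$-component containing both endpoints of $f$, i.e., inside some $H_j$. Chaining the three equivalences yields the corollary. The main obstacle is essentially bookkeeping rather than mathematical depth: handling the possibly several $C$-components containing $f$ — which can occur when $f$ is an edge of the clique $C$ in $G \cup f$ — and cleanly invoking the recursive atomic decomposition fact to avoid double-counting or missing atoms across the $H_j$.
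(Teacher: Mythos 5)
Your proof is correct and is essentially the paper's argument, reorganized as an induction on $|V(G \cup f)|$ rather than as a maximal decomposition sequence $\mathcal{G}_0,\dots,\mathcal{G}_n$ driven by CMSs. The only sharpening worth making: in the inductive step take $C$ to be a clique \emph{minimal} separator, so that the decomposition fact you invoke at the end (atoms of $G\cup f$ are the disjoint union of atoms of the $C$-components) is exactly Lemma~\ref{lem:clq_sep_props}, which the paper establishes only for CMSs and cites from \cite{fudos1997graph}.
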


\begin{example}
    \label{ex:5}
    Consider the graph-nonedge pairs $(G,f)$ (left) and $(G',f)$ (right) in Figure \ref{fig:not_minor closed}, and let $f'$ be the edge deleted from $G$ to obtain $G'$.  
    We saw in Example \ref{ex:2} that $(G,f)$ has the $2$-SIP while $(G',f)$ does not.  
    Since the endpoints of $f'$ form a clique separator of $G \cup f$ and $G' \cup f'$, these facts can also be verified using Lemma \ref{lem:gluing}.  
    Lemma \ref{lem:cm_non-sip} says that any induced minor pair of $(G,f)$ has the $2$-SIP, which can be verified using Theorem \ref{thm:2-SIP}.  
    The contrapositive of Lemma \ref{lem:tool} states that since $(G',f)$ does not have the $2$-SIP, either $(G' \cup f',f)$ or $(G',f')$ does not have the $2$-SIP.  
    Since some atom of $G' \cup f'$ contains $f'$ and is $K_4$, Theorem \ref{thm:2-SIP} shows that $(G',f')$ does not have the $2$-SIP, which verifies this statement.  
    Lastly, observe that presence of $f'$ separates $f$ from any atom with a $K_4$ minor, while the absence of $f'$ causes $(G',f)$ to inherit the non-$2$-SIP from $(G',f')$.  
    Corollary \ref{cor:3-SIP-atom_fbm_char} states that this situation can only occur if $G \cup f$ is not an atom.  
    \qed
\end{example}

Other independently interesting tools and byproducts concern the structure of graphs with and without clique separators, which are collected in \ref{sec:atoms}.  

\subsection{Proof obstacle details}
\label{sec:proof_obstacles}

In the  first subsection, we  explore how the characterization of graph-nonedge pairs with the $d$-SIP changes from $d = 1$ to $d = 3$.  
 In subsequent subsections, we examine specific obstacles to  each direction of the proof of Theorem \ref{thm:3-sip_characterization}.  

\subsubsection{$d$-SIP characterization differences between dimensions $d \leq 3$}


The $d = 1$ case, where $K_3$ is the only $1$-forbidden minor, is well-behaved in the sense that it can be checked without decomposing the graph into atoms.  
In particular, all of the following statements are equivalent for $d = 1$: 
\begin{enumerate}[(i)]
    \item $(G,f)$ does not have the $d$-SIP.
    \label{s0}


    \item Some atom of $G \cup f$ that contains $f$ has a $d$-forbidden minor.  
    \label{s1}
    
    \item Some atom of $G \cup f$ that contains $f$ has an $f$-preserving $d$-forbidden minor.  
    \label{s2}
    
    \item Some atom of $G \cup f$ that contains $f$ has an $f$-retaining $d$-forbidden minor.  
    \label{s3}
    
    \item $G \cup f$ has some $f$-preserving $d$-forbidden minor.  
    \label{s4}
\end{enumerate}

The smallest graph-nonedge pair $(G,f)$ that does not have the $1$-SIP is the one where $G \cup f$ is $K_3$, which evidences the equivalence of (\ref{s0}) and (\ref{s4}).  
The equivalence of (\ref{s1})-(\ref{s4}) can be seen through the following cycle of implications: (\ref{s1}) $\implies$ (\ref{s2}) $\implies$ (\ref{s3}) $\implies$ (\ref{s4}) $\implies$ (\ref{s1}).  
The last implication follows easily from Lemma \ref{lem:path_in_graph_path_in_atom} in \ref{sec:atoms}, and the other implications have straightforward proofs.  
Using the equivalence of (\ref{s0}) and (\ref{s4}), it is easy to see that the $1$-SIP is closed under both edge deletions and contractions.  

Once we move to $d = 2$, where $K_4$ is the only $2$-forbidden minor, (\ref{s4}) is no longer equivalent to (\ref{s0}).  
For example, the pair on the left in Figure \ref{fig:not_minor closed} has the $2$-SIP even though it has an $f$-preserving $K_4$ minor.  
Theorem \ref{thm:2-SIP} and Theorem 5.2 in \cite{sitharam2010characterizing} imply (\ref{s0}) is equivalent to (\ref{s1}) - (\ref{s3}).  
While (\ref{s1}), (\ref{s2}), or (\ref{s3}) $\implies$ (\ref{s4}), (\ref{s4}) does not imply any of these. 
Also, unlike when $d = 1$, the above equivalences imply that the $2$-SIP is not closed under edge deletions.  
See Example \ref{ex:2}.  

Moving to the $d = 3$ case, where $K_5$ and $K_{2,2,2}$ are the only $3$-forbidden minors, even more of the above statements are no longer equivalent to (\ref{s0}).  
Figure \ref{fig:k5-e} shows the smallest graph-nonedge pair that does not have the $3$-SIP.  
Theorem \ref{thm:3-sip_characterization} shows that (\ref{s0}) and (\ref{s2}) are equivalent.  
No other above statement is equivalent to (\ref{s0}), as we now demonstrate.  
It is easy to see that (\ref{s2}) $\implies$ both (\ref{s1}) and (\ref{s4}).  
Also, (\ref{s3}) $\implies$ (\ref{s2}) is immediate.  
However, the corresponding converses  are false.  
The graph-nonedge pairs in Figure \ref{fig:k5_k222_e_contracted}, where the $f=w_1w_2$, demonstrate that (\ref{s1}) $\notimplies$ (\ref{s2}).  
Additionally, an example similar to the $d=2$ case shows that (\ref{s4}) $\notimplies$ (\ref{s2}).  
Lastly, to see that (\ref{s2}) $\notimplies$ (\ref{s3}), note that the graph-nonedge pair $(G,f)$ in Figure \ref{fig:min_6} is such that $G \cup f$ is an atom with an $f$-preserving $K_5$ minor, but none of its $f$-retaining minors have any $K_5$ or $K_{2,2,2}$ minor.  
This pair can also be used to show that the $3$-SIP is not closed under edge deletions, using the same argument as in the $d=2$ case.  

 This is a concrete illustration of what was pointed out earlier: $(G,f)$ may have the $3$-SIP although some atom of $G \cup f$ that contains $f$ has a $3$-forbidden minor, so long as $f$ is contracted in any such minor.  
On the other hand, $(G,f)$ may not have the $3$-SIP even if $f$ is not retained in any $3$-forbidden minor. 

More significantly, the latter statement results in infinitely many \emph{minimal} graph-nonedge pairs (Definition \ref{def:minimal_pair} in Section \ref{sec:forward}), with respect to contractions, that do not have the $3$-SIP.  
Figure \ref{fig:forward} illustrates minimal pairs that can be used to construct infinite families of such pairs.  
For any dimension $d \leq 2$, retaining and preserving turn out to be equivalent, hence there are finitely many minimal pairs that do not have the $d$-SIP.  
Furthermore, the $3$-SIP behaves badly when edges are deleted.  
For any graph-nonedge pair $(G',f')$  where $G'$ is obtained from $G$ by deleting some edge, any of the following four possibilities hold: $(G,f)$ has the $3$-SIP while $(G',f')$ does not; both $(G,f)$ and $(G',f')$ have the $3$-SIP; $(G,f)$ does not have the $3$-SIP while $(G',f')$ does; or both $(G,f)$ and $(G',f')$ do not have the $3$-SIP.

\subsubsection{Obstacles for the converse direction of Theorem \ref{thm:3-sip_characterization}}

We illustrate with an example. Observe that the graph-nonedge pairs $(G,w_1w_2)$ in Figure \ref{fig:k5_k222_e_contracted} are such that $G \cup w_1w_2$ is an atom with no $w_1w_2$-preserving forbidden minor.  
The converse direction of Theorem \ref{thm:3-sip_characterization} states that $(G,w_1w_2)$ has the $3$-SIP despite $G \cup w_1w_2$ having a forbidden minor.  
To see this, observe that $\{w_3,w_4\}$ and $\{w_4,w_5\}$ are clique separators of $G$ that separate $w_1$ and $w_2$.  
The above properties of $G \cup w_1w_2$ ensure the existence of such separators and force the subgraphs between them to be $3$-connected partial $3$-trees.  
Figure \ref{fig:k4-e} in Example 1, and Figure \ref{fig:converse_4} illustrate a more general example of a graph-nonedge pair with these properties.  
We must prove that, for any squared edge-length map $\ell$, the CCS $\Omega^3_{w_1w_2}(G,\ell)$ is connected.  
The only apparent obstacle to this is determining how the configuration space of the graph between $w_3w_4$ and $w_5w_6$ influences $\Omega^3_{w_1w_2}(G,\ell)$.  
In general, this configuration space may be disconnected and any two of its connected components may have differing sets of attainable length-tuples for $(w_2w_6,w_4w_5)$, which could possibly force $\Omega^3_{w_1w_2}(G,\ell)$ to be disconnected.  
We circumvent this obstacle by demonstrating that each connected component of the CS $\mathcal{C}^3(G,\ell)$ has the same set of attainable lengths for $w_1w_5$ and also for $w_1w_6$ (Lemma \ref{lem:ux2_uy2_3-covering} in Section \ref{sec:reverse}). 
See Example 1,  Figures \ref{fig:k3-e}-\ref{fig:k4-e}.
This is enough to show that the CCS $\Omega^3_{\{w_1w_5,w_1w_6\}}(G,\ell)$ is connected, i.e., $(G,\{w_1w_5,w_1w_6\})$ is $3$-Cayley-connected (Proposition \ref{prop:x2y2_3-Cayley-connected} in Section \ref{sec:reverse}).  
Combining this with the observation that $(G \cup \{w_1w_5,w_1w_6\}, w_1w_2)$ has the $3$-SIP allows us to apply Lemma \ref{lem:tool} to show that $(G, w_1w_2)$ has the $3$-SIP.  
The above arguments are detailed in Section \ref{sec:reverse}.

\subsubsection{Obstacles for the forward direction of Theorem \ref{thm:3-sip_characterization}}

We illustrate with an example. Let $(G,f)$ be the graph-nonedge pair such that $G \cup f$ is the graph in Figure \ref{fig:min_4} with $f$ as the top-most green edge, and observe that $G \cup f$ is an atom with an $f$-preserving $K_5$ minor $M$, which we can ensure is an induced minor.  
The forward direction of Theorem \ref{thm:3-sip_characterization} states that $(G,f)$ does not have the $3$-SIP despite $f$ not being retained in $M$, but if $f$ had been retained this would follow from Observation \ref{obs:3-sip-retained} in Section \ref{sec:results}.  
Hence, the primary obstacle is arguing that $(G,f)$ does not have the $3$-SIP without contracting down to $M$.  
To achieve this,  it is not difficult to see that any minor pair of $(G,f)$ in Figure \ref{fig:min_4}, other than $(G,f)$ itself, has the $3$-SIP, and so $(G,f)$ is minimal in this sense.  
In general, we can always reduce to such a minimal pair.  
Then, we propagate the property of not having the $3$-SIP up through the green dashed nonedges.  
Another major obstacle is enumerating all possible subgraphs above the top-most green dashed nonedge, i.e., the subgraph in the black dashed box.  
This step is necessary to complete the propagation, and is completed by checking numerous cases in Lemma \ref{prop:expC_3or4} in \ref{sec:proof_top-level_H'_cases}.

\section{Formal concepts and notation for analyzing graph minors.}
\label{sec:minor-notation}

Here we present concepts and notation for working with minors, and  redefine previous concepts  more formally.  
Refer to Figure \ref{fig:minors}.  
A graph $M$ is a \emph{minor} of $G$ if it is isomorphic to some graph obtained from $G$ via some sequence of edge deletions and contractions.  
$M$ is \emph{induced} if it can be obtained using only contractions.  
A \emph{minor map} $\pi: V(G) \rightarrow V(M)$ is a surjection that satisfies $\pi(u) = \pi(v)$ only if $u$ and $v$ are connected by a path in $G$ and, for any edge $\pi(u)\pi(v)$ of $M$, there is an edge in $G$ with one endpoint in  the set $\pi(u)^{-1}$ and the other endpoint in the set $\pi(v)^{-1}$.   
This map induces a map $\pi': V(G) \times V(G) \rightarrow V(M) \cup (V(M) \times V(M))$ that satisfies $\pi'(uv) = \pi(u)$ if $\pi(u) = \pi(v)$, and $\pi'(uv) = \pi(u)\pi(v)$ otherwise.  
To reach $M$ from $G$ using $\pi$, we contract an edge $uv$ of $G$ if $\pi(u) = \pi(v)$, and we delete $uv$ if $\pi(u)\pi(v)$ is a nonedge of $M$.  
A few simple but important observations:  two minor maps that induce the same edge deletions and contractions on $G$ yield the same minor $M$; $\pi$ defines a unique induced minor of $G$ containing $M$ as a spanning subgraph; and  $M$ can be recovered given $\pi$ and the edges of $G$ to be deleted.

\begin{figure}[htb]
   \centering
   \begin{subfigure}[t]{0.49\linewidth}
       \centering
       \scalebox{0.65}{
            \begin{tikzpicture}
                \node[circle, draw=violet, fill=violet, inner sep=0pt, minimum size=4pt, label = left:{{\color{violet}$u_1$}}] (u1) at (0,0) {};
    
                \node[circle, draw=violet, fill=violet, inner sep=0pt, minimum size=4pt] (u12) at (0,-0.5) {};
    
                \node[circle, draw=violet, fill=violet, inner sep=0pt, minimum size=4pt] (u13) at (0.5,-1) {};
    
                \node[circle, draw=violet, fill=violet, inner sep=0pt, minimum size=4pt] (u14) at (0.5,0.5) {};
    
                \draw[-,color=violet] (u1) to (u12);
                \draw[-,color=violet] (u12) to (u13);
                \draw[-,color=violet] (u13) to (u14);
                \draw[-,color=violet] (u14) to (u1);
    
                \node[draw,minimum width=1.5cm,minimum height=2cm, label = left:{$[u_1]^{-1}$}] (u1-1) at (0.1,-0.25) {};
    
                \node[circle, draw=violet, fill=violet, inner sep=0pt, minimum size=4pt, label = right:{{\color{violet}$u_2$}}] (u2) at (4,-1) {};
    
                \node[circle, draw=violet, fill=violet, inner sep=0pt, minimum size=4pt] (u22) at (3.5,-0.5) {};
    
                \draw[-,color=violet] (u2) to (u22);
    
                \node[draw,minimum width=1.5cm,minimum height=1cm, label = right:{$[u_2]^{-1}$}] (u2-1) at (4,-0.75) {};
    
                \node[circle, draw=red, fill=red, inner sep=0pt, minimum size=4pt, label = right:{\color{red} $u_5$}] (u5) at (4,-2) {};
    
                \node[circle, draw, fill, inner sep=0pt, minimum size=4pt] (u52) at (3.5,-2) {};
    
                \node[circle, draw, fill, inner sep=0pt, minimum size=4pt] (u53) at (3.5,-2.5) {};
    
                \draw[-] (u5) to (u52);
                \draw[-,] (u52) to (u53);
                \draw[-] (u53) to (u5);
    
                \node[draw,minimum width=1.5cm,minimum height=1cm, label = right:{$[u_5]^{-1}$}] (u5-1) at (4,-2.25) {};
    
                \node[circle, draw=red, fill=red, inner sep=0pt, minimum size=4pt, label = right:{\color{red} $u_6$}] (u6) at (3,-4) {};
    
                \node[draw,minimum width=1cm,minimum height=1cm, label = right:{$[u_6]^{-1}$}] (u6-1) at (3.25,-4) {};
    
                \node[circle, draw=red, fill=red, inner sep=0pt, minimum size=4pt, label = left:{\color{red} $u_4$}] (u4) at (1.5,-4) {};
    
                \node[draw,minimum width=1cm,minimum height=1cm, label = left:{$[u_4]^{-1}$}] (u4-1) at (1.25,-4) {};
    
                \node[circle, draw=red, fill=red, inner sep=0pt, minimum size=4pt, label = left:{\color{red} $u_3$}] (u3) at (0,-2) {};
    
                \node[circle, draw, fill, inner sep=0pt, minimum size=4pt] (u32) at (0.5,-2.25) {};
    
                \draw[-] (u3) to (u32);
    
                \node[draw,minimum width=1.5cm,minimum height=1cm, label = left:{$[u_3]^{-1}$}] (u3-1) at (0.1,-2.25) {};
    
                \draw[-,color=violet] (u14) to node[label = above:{\color{violet} $J^{-1}$}] {} (u22);
                \draw[-] (u12) to (u3);
                \draw[-] (u13) to (u52);
    
                \draw[-] (u2) to (u52);
                \draw[-] (u22) to (u52);
    
                \draw[-,color=red] (u3) to (u4);
                \draw[-] (u32) to (u4);
                \draw[-] (u3) to (u52);
                \draw[-] (u32) to (u6);
    
                \draw[-] (u4) to (u53);
                \draw[-,color=red] (u4) to node[label = below:{\color{red} $H$}] {} (u6);
    
                \draw[-,color=red] (u5) to (u6);
           \end{tikzpicture}
       }
       \caption{}
       \label{fig:minor_1}
   \end{subfigure}
   \begin{subfigure}[t]{0.49\linewidth}
       \centering
       \pgfdeclarelayer{0}
        \pgfsetlayers{0,main}
        \begin{tikzpicture}
            \node[circle, draw=violet, fill=violet, inner sep=0pt, minimum size=4pt, label = left:{{\color{violet}$[u_1]$}}] (u1) at (1,0) {};

            \node[circle, draw=violet, fill=violet, inner sep=0pt, minimum size=4pt, label = right:{{\color{violet}$[u_2]$}}] (u2) at (3.5,0) {};

            \node[circle, draw=red, fill=red, inner sep=0pt, minimum size=4pt, label = left:{{\color{red}$[u_3]$}}] (u3) at (1.5,-1) {};

            \node[circle, draw=red, fill=red, inner sep=0pt, minimum size=4pt, label = left:{{\color{red}$[u_4]$}}] (u4) at (1.5,-2) {};

            \node[circle, draw=red, fill=red, inner sep=0pt, minimum size=4pt, label = right:{{\color{red}$[u_5]$}}] (u5) at (3,-1) {};

            \node[circle, draw=red, fill=red, inner sep=0pt, minimum size=4pt, label = right:{{\color{red}$[u_6]$}}] (u6) at (3,-2) {};

            \begin{pgfonlayer}{0}
                \draw[-,color=violet] (u1) to node[label = above:{\color{violet} $J$}] {} (u2);
                \draw[-] (u1) to (u3);
                \draw[-] (u1) to (u4);
    
                \draw[-] (u2) to (u5);
                \draw[-] (u2) to (u6);
    
                \draw[-,color=red] (u3) to (u4);
                \draw[-] (u3) to (u5);
                \draw[-] (u3) to (u6);
    
                \draw[-] (u4) to (u5);
                \draw[-,color=red] (u4) to node[label = below:{\color{red} $[H]$}] {} (u6);
    
                \draw[-,color=red] (u5) to (u6);
            \end{pgfonlayer}
       \end{tikzpicture}
       \caption{}
       \label{fig:minor_2}
   \end{subfigure}
   \caption{Illustrations of the concepts in this section and in Sections \ref{sec:results} and \ref{sec:forward}.  
   (a) a graph $G$ and a rooted minor $[G]$ defined using the minor map between $G$ and its minor in (b).  
   The vertices of $[G]$ are shown as black boxes.  
   Note that $H$ is preserved but neither retained nor weakly retained, and $H$ is a proper subgraph of $[H]^{-1}$.  
   }
   \label{fig:minors}
\end{figure}

The graph $[G]$ whose vertex-set is $\{\pi^{-1}(u)\}_{u \in V(M)}$ and whose edge-set is $\{\pi^{-1}(u)\pi^{-1}(v) | uv \in E(M)\}$ is a \emph{rooted} minor of $G$.  
This definition is equivalent to the one given at the beginning of Section \ref{sec:results}, except here vertices correspond to vertex sets and not the connected subgraphs they induce.  
Unless otherwise specified, all minors are rooted.  
We use subscripts to distinguish between distinct   minors, e.g., $[G]_1$ and $[G]_2$ are two minors of $G$.  
The vertex of $[G]$ containing a vertex $u$ of $G$ is written as $[u]$, or as $[u]^{-1}$ when we want to treat it as a set.  
A pair of vertices $[u][v]$ of $[G]$ is written as $[uv]$, and $[uv]^{-1}$ is the set $[u]^{-1} \times [v]^{-1}$ if $[u] \neq [v]$, and $[u]^{-1}$ otherwise.  
In the latter case, $[uv]$ is treated as a vertex of $[G]$.  
 When the context is clear, it is convenient to refer to a vertex, edge, or nonedge  of $[G]$ without the bracket notation, in which case the brackets are implicit, and so $x^{-1}$ denotes the set $[x]^{-1}$.  
For any distinct vertices $u,v \in V(G)$, $uv$ is \emph{contracted in $[G]$} if $[u]=[v]$, $uv$ is \emph{preserved in $[G]$} if $[u] \neq [v]$, a nonedge (resp. edge) $uv$ is \emph{doubled in $[G]$} if it is preserved and $[uv]^{-1}$ contains at least one edge (resp. two edges) of $G$, and $uv$ is \emph{retained in $[G]$} if it is preserved and not doubled.  
$[G]$ is \emph{$uv$-preserving (resp. retaining)} if $uv$ is preserved (resp. retained) in $[G]$.  

For any subset $X$ of vertices, edges, and nonedges of $G$, define the set $[X] = \{[x] : x \in X\}$.  
For any induced subgraph $H$ of $G$, $[H]$ is the subgraph of $[G]$ induced by $[V(H)]$.  
For any induced subgraph $J$ of $[G]$, $J^{-1}$ is the subgraph of $G$ induced by $\bigcup_{u \in J} u^{-1}$.  
Note that $H$ is always a subgraph of $[H]^{-1}$, and this containment can be proper.  
$H$ is \emph{preserved} (resp. retained) in $[G]$ if each edge and nonedge of $H$ is preserved (resp. retained) in $[G]$, and $H$ is \emph{weakly retained} in $[G]$ if $H$ is preserved in $[G]$ and each nonedge of $H$ is retained in $[G]$.  


Finally, if $[G]$ is isomorphic to some graph $M$, we say $[G]$ \emph{is} $M$.  
Let $[x]$ and $w$ be either vertices, edges, or nonedges of $[G]$ and $M$, respectively.  
Also, let $[H]$ and $K$ be subgraphs of $[G]$ and $M$, respectively.  
We say $[G]$ is $M$ \emph{with $[x]$ as $w$} and \emph{with $[H]$ as $K$} if some isomorphism from $[G]$ to $M$ sends $[x]$ to $w$ and sends $[H]$ to $K$.  
Sometimes we write $[w]$ and $[K]$ when referring to $[x]$ and $[H]$, respectively.

\section{Proof of the converse direction of the main Theorem \ref{thm:3-sip_characterization}}
\label{sec:reverse}

We outline the argument using Figure \ref{fig:converse_sketch} before filling in the details.  
See also Figure \ref{fig:converse-outline} for a proof diagram.  
From here on,  we  shorten ``$3$-forbidden minor'' to ``forbidden minor.''  
Since the  notation $(G,f)$ is used to refer to general graph-nonedge   pairs in the inductive and other steps of the proof, we designate  the notation
$(G_{\star},f_{\star})$ throughout the proof for the particular pair in Theorem \ref{thm:3-sip_characterization}.
Assume no atom of $G_{\star} \cup f_{\star}$ that contains $f_{\star}$ has an $f_{\star}$-preserving forbidden minor.  
We prove by induction on the number $n$ of vertices in $G_{\star}$ that $(G_{\star},f_{\star})$ has the $3$-SIP.  
For $n \leq 5$, all graphs with this property are $3$-flattenable, and thus applying Theorem \ref{thm:sitharam_willoughby} proves the claim for these base cases. 
In the inductive step,   Corollary \ref{cor:atom_gluing} (a repeatedly used tool of independent interest proved in Section \ref{sec:tool_lemmas}), along with the same theorem allow us to assume that $G_{\star} \cup f_{\star}$ is an atom that has a forbidden minor.  
It is important to note that, as pointed out in Section \ref{sec:proof_obstacles}, the CS is typically disconnected for almost any linkage of $G_{\star}$.  
However, we show that $f_{\star}$ attains the same lengths for each of these connected components.  

\begin{figure}[htb]
    \centering
    \begin{tikzpicture}
        \node[draw] (converse) at (0,0) {Converse direction of Theorem \ref{thm:3-sip_characterization}};
        \node[draw] (p3) at (0,-4.5) {Proposition \ref{lem:winged-graph_cut_vertices}};

        \node[draw] (l9) at (0,-6) {Lemma \ref{lem:K5_2-separator_case_1}};
        \node[draw] (p2) at (-3,-4.5) {Proposition \ref{lem:e_not_st}};

        \node[draw] (l6) at (-3,-6) {Lemma \ref{lem:wing-tips}};
        \node[draw] (p1) at (-6,-6) {Proposition \ref{prop:G_is_winged_graph}};

        \node[draw] (l4) at (-7.5,-7.5) {Lemma \ref{lem:G_not_k5_k222_u4v0}};

        \node[draw] (l5) at (-4.5,-7.5) {Lemma \ref{lem:k5_k222_not_wing_fm_e_not_contracted}};
        \node[draw] (p4) at (3,-1.5) {Proposition \ref{prop:chained_minor}};

        \node[draw] (l10) at (1.5,-3) {Lemma \ref{prop:hinge_edge}};

        \node[draw] (l11) at (1.5,-7.5) {Lemma \ref{lem:pins_have_3-sip}};
        \node[draw] (p5) at (6,-1.5) {Proposition \ref{prop:x2y2_3-Cayley-connected}};
        \node[draw,] (l12) at (4.5,-4.5) {Lemma \ref{lem:links_3-connected}};

        \node[draw] (l46) at (4.5,-7.5) {Lemma \ref{lem:pin_pairs_contracted}};
        \node[draw, align = left] (l13) at (7.5,-7.5) {
        Lemma \ref{lem:ux2_uy2_3-covering}\\
        {\small \textit{Lemma \ref{lem:pin_pairs_partial_3-tree}}}\\
        {\small \textit{Theorems \ref{prop:3-connected_partial_3-tree_star_lemma}, \ref{prop:partial_3-tree_3-reflection}}}
        };
         \draw[->] (p1)--(converse);
         \draw[->] (p2)--(converse);
         \draw[->] (p3)--(converse);
         \draw[->] (p4)--(converse);
         \draw[->] (p5)--(converse);

         \draw[->] (l4)--(p1);
         \draw[->] (l5)--(p1);

         \draw[->] (l6)--(p2);
         \draw[->] (p1)--(p2);

         \draw[->] (l5)--(l6);

         \draw[->] (l9)--(p3);

         \draw[->] (l5)--(l9);

         \draw[->] (l10)--(p4);
         \draw[->] (l11)--(p4);

         \draw[->] (l6) to [out=40,in=-170] (l10);
         \draw[->] (p3)--(l10);

         \draw[->] (l12)--(p5);
         \draw[->] (l13)--(p5);

         \draw[->] (p1) to [out=8,in=-175] (l12);
         \draw[->] (l4) to [out=-15,in=-150] (l12);
         \draw[->] (l6)--(l12);
         \draw[->] (l46)--(l12);
    \end{tikzpicture}
    \caption{Proof outline for the converse direction of Theorem \ref{thm:3-sip_characterization}.  
    See the discussion in this section.  }
    \label{fig:converse-outline}
\end{figure}

The remainder of the proof can be split into two parts. 
In the first part, a series of steps illustrated in Figure \ref{fig:converse_sketch}, Propositions \ref{prop:G_is_winged_graph}, \ref{lem:e_not_st}, and \ref{prop:chained_minor} reduce $G_{\star} \cup f_{\star}$ to a graph that has a special structure forced by the strong property that $f_{\star}$ is contracted in every forbidden minor in every atom of $G_{\star} \cup f_{\star}$  that contains $f_{\star}$.  
Briefly, the structure can be described as follows, and is depicted in Figure \ref{fig:converse_4}.  
There is a maximal sequence of size-two clique separators $\{x_i,y_i\}$ of $G_{\star}$ that each separate the endpoints of $f_{\star}$.  
Furthermore, there are no other size-two separators of $G_{\star}$ and the graph between any consecutive pair of these separators is a $3$-connected partial $3$-tree.  

\begin{figure}[htb]
    \centering
    \begin{subfigure}{0.24\linewidth}
        \centering
        \includegraphics[width=0.8\linewidth]{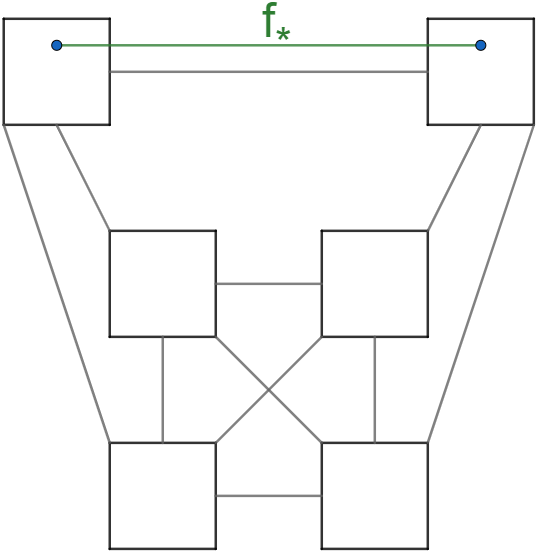}
        \subcaption{}
        \label{fig:converse_1}
    \end{subfigure}
    \begin{subfigure}{0.24\linewidth}
        \centering
        \includegraphics[width=0.7\linewidth]{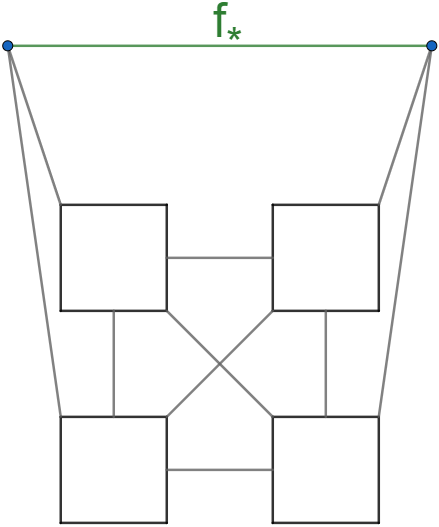}
        \subcaption{}
        \label{fig:converse_2}
    \end{subfigure}
    \begin{subfigure}{0.24\linewidth}
        \centering
        \includegraphics[width=0.7\linewidth]{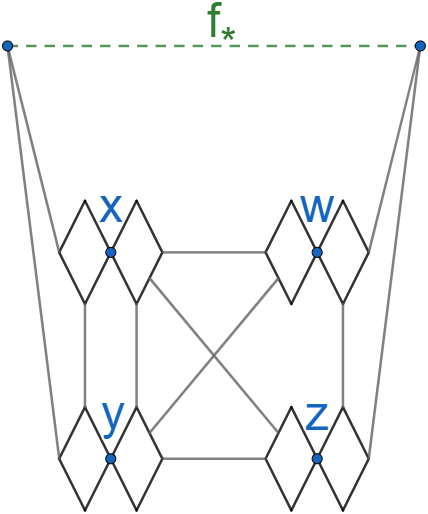}
        \subcaption{}
        \label{fig:converse_3}
    \end{subfigure}
    \begin{subfigure}{0.24\linewidth}
        \centering
        \includegraphics[width=\linewidth]{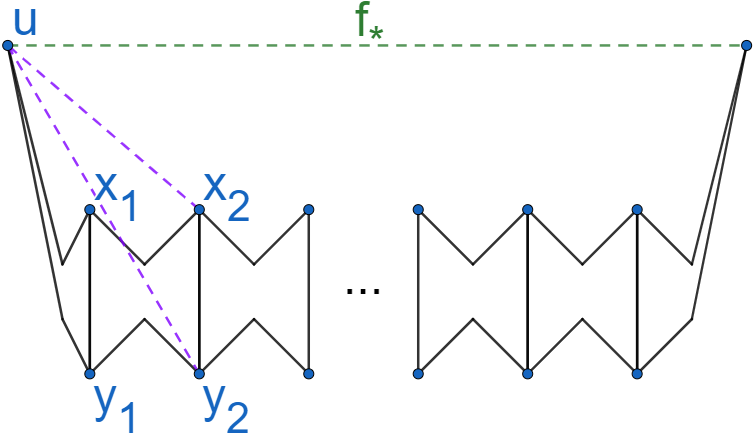}
        \subcaption{}
        \label{fig:converse_4}
    \end{subfigure}
    \caption{(a) An $f_{\star}$-winged graph minor $[G_{\star} \cup f_{\star}]$.  
    (b) $[G_{\star} \cup f_{\star}]$ with singleton wing tips.  
    (c) Pins $xy$ and $wz$ of $[G_{\star} \cup f_{\star}]$.  
    (d) $[G_{\star} \cup f_{\star}]$ is chained and $(G_{\star}, \{ux_2,uy_2\})$ is $3$-Cayley-connected.  
    See the discussions in this section and Section \ref{sec:proof_obstacles}.  }
    \label{fig:converse_sketch}
\end{figure}

The second part of the proof exploits the special structure shown in the first part of the proof. 
Specifically, consider the nonedges $ux_2$ and $uy_2$ of $G_{\star}$ shown in Figure \ref{fig:converse_4}.  
Observe that $\{u,x_2,y_2\}$ is a clique separator of $G_{\star} \cup \{ux_2,uy_2\}$.  
Let $H$ be the $\{u,x_2,y_2\}$-component of $G_{\star} \cup \{ux_2,uy_2\}$ that contains $f_{\star}$.  
It is not hard to argue that $H \cup f_{\star}$ has no $f_{\star}$-preserving forbidden minor since $G_{\star} \cup f_{\star}$ has no such minor.  
Hence, the inductive hypothesis tells us that $(H,f_{\star})$ has the $3$-SIP, which,   
 when combined with the repeatedly used tool Lemma \ref{lem:gluing}, implies that $(G_{\star} \cup \{ux_2,uy_2\},f_{\star})$ has the $3$-SIP.  
 If $(G_{\star},\{ux_2,uy_2\})$ is $3$-Cayley-connected, then Lemma \ref{lem:tool}, another repeatedly used tool, applies to show that $(G_{\star},f_{\star})$ has the $3$-SIP.  

The proof is completed via Proposition \ref{prop:x2y2_3-Cayley-connected}, which shows that $(G_{\star},\{ux_2,uy_2\})$ is $3$-Cayley-connected.  
The key ingredient, Lemma \ref{lem:ux2_uy2_3-covering} overcomes the obstacle sketched in Section \ref{sec:proof_obstacles}: although each $3$-connected subgraph of $G_{\star}$ between the consecutive clique separators (see Figure \ref{fig:converse_4}) could force disconnectedness of the overall CS, the connected components can be viewed as reflections of the realizations of these  sublinkages, each of which attains the same  set of lengths for the nonedge $ux_i$, and similarly for  $uy_i$.  
See Example 1, Figures \ref{fig:k3-e}-\ref{fig:k4-e} for illustration.
This is sufficient to show connectedness of the set of length-tuples for ${ux_i, uy_i}$.
The same statement holds for the pair of nonedges $\{vx_i, vy_i\}$ for any $i$, resulting in the same interval of lengths for $f_{\star}$.  
This completes the proof sketch of the converse direction of the main Theorem \ref{thm:3-sip_characterization}.   

\medskip\noindent
 Before diving into the first part of the proof, which reduces $G_{\star} \cup f_{\star}$ to a graph 
 that has a special structure, we first outline the flow of the 3 constituent Propositions \ref{prop:G_is_winged_graph}, \ref{lem:e_not_st}, and \ref{prop:chained_minor}. 
 Given a graph $G$ and one of its edges $f$, a minor $[G]_1$ is an \emph{$f$-winged graph minor} if it is
one of the graphs in Figure \ref{fig:k5_k222_e_contracted} with $[f]$ as $w_1w_2$.  
Since atoms are $2$-connected, Proposition \ref{prop:G_is_winged_graph}, below,
shows that there exists an $f_{\star}$-winged graph minor   
  $[G_{\star} \cup f_{\star}]$, e.g.\ as in  Figure \ref{fig:converse_1}.  
Let $f_{\star} = uv$.  
Subsequently, Proposition \ref{lem:e_not_st} allows us to assume that $[u]^{-1} = \{u\}$ and $[v]^{-1} = \{v\}$, as in Figure \ref{fig:converse_2}.  
Then, with this assumption, Proposition \ref{lem:winged-graph_cut_vertices} shows the existence of a minimal-sized set of vertices $\{x,y\}$ whose deletion from $G_{\star}$ separates $u$ from $v$, as in Figure Figure \ref{fig:converse_3}.  
Symmetry implies the existence of $\{w,z\}$ with the same properties.  
Thereafter, Proposition \ref{prop:chained_minor} allows us to assume that $xy$ and $wz$ are edges of $G_{\star}$.  
Hence, some $\{x,y\}$-component of $G_{\star}$ must contain $w$ and $z$, and so $G_{\star}$ can be decomposed by a maximal set $\{x_iy_i\}$ of these separators, as in Figure \ref{fig:converse_4}.  
\begin{figure}[htb]
    \centering
    \includegraphics[width=0.3\linewidth]{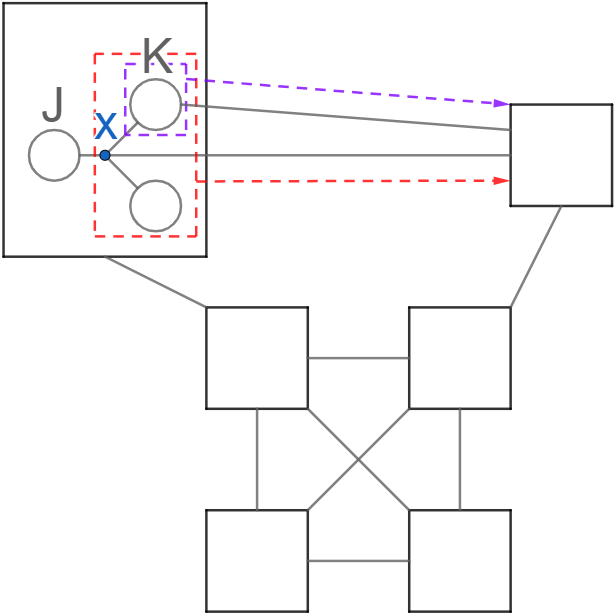}
    \caption{In red, a vertex exchange that exchanges $x$ and fixes $J$.  
    In purple, a component exchange that exchanges $K$.  
    See the discussion in this section.}
    \label{fig:exchange}
\end{figure}

Proposition \ref{prop:G_is_winged_graph} is proved via Lemmas \ref{lem:G_not_k5_k222_u4v0} and \ref{lem:k5_k222_not_wing_fm_e_not_contracted}, below.  
The proof of Lemma \ref{lem:G_not_k5_k222_u4v0} requires the following operation that transforms one minor $[G]_1$ of a graph $G$ into another minor $[G]_2$.  
Refer to Figure \ref{fig:exchange}.  
Consider an edge $[ab]_1$ of $[G]_1$, the subgraph $H$ of $G$ induced by the set $[a]_1^{-1}$, a vertex $x \in [a]_1^{-1}$ with a neighbor in the set $[b]_1^{-1}$, and a connected component $J$ of $H \setminus x$. 
The minor $[G]_2$ obtained from $[G]_1$ by replacing $[a]_1^{-1}$ with $V(J)$ and $[b]_1^{-1}$ with $[b]_1^{-1} \cup (V(H) \setminus V(J))$ is said to be obtained via the \emph{vertex exchange from $[a]_1^{-1}$ to $[b]_1^{-1}$ that exchanges $x$ and fixes $J$}.  
\begin{figure}[htb]
    \centering
    \begin{subfigure}{0.49\textwidth}
        \centering
        \includegraphics[width = 0.4\textwidth]{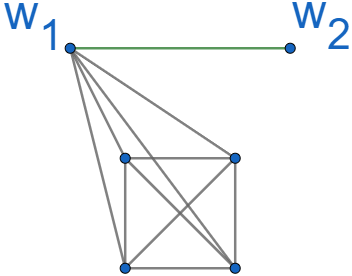}
\subcaption{}
        \label{fig:k5_not_wing_u4_v0}
    \end{subfigure}
    \begin{subfigure}{0.49\textwidth}
        \centering
        \includegraphics[width =0.3\textwidth]{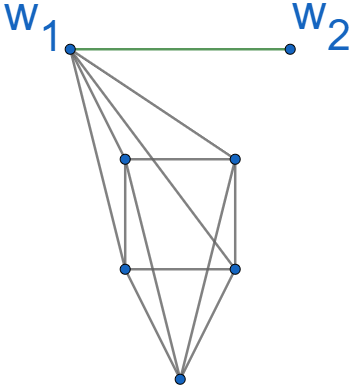}
\subcaption{}
        \label{fig:k222_not_wing_u4_v0}
    \end{subfigure}
    \caption{Graphs in Lemma \ref{lem:G_not_k5_k222_u4v0}.  
    See also the proofs of Proposition \ref{prop:G_is_winged_graph} in this section, Lemma \ref{lem:C'_5_H'_wing} in Section \ref{sec:prop-8}, and Lemma \ref{lem:C'_degree_1_reducible_0} in \ref{sec:proof_final_top_graphs}.  }
    \label{fig:k5_k222_e_contracted_degree_1}
\end{figure}

\begin{lemma}[Transforming a minor in Figure \ref{fig:k5_k222_e_contracted_degree_1} into one in Figures \ref{fig:k5_k222_e_contracted} or \ref{fig:k5_not_wing}]
\label{lem:G_not_k5_k222_u4v0}
    Let $G$ be a $2$-connected graph and $e$ be one of its edges.  
    If some minor $[G]$ is the graph in Figure \ref{fig:k5_not_wing_u4_v0} (resp. Figure  \ref{fig:k222_not_wing_u4_v0}) with $[e]$ as $w_1w_2$, then some minor $[G]_1$ is one of the graphs in Figures \ref{fig:k5_wing}, \ref{fig:k5_not_wing_u4_vgeq1}, or \ref{fig:k5_not_wing_u3} 
    (resp. Figures \ref{fig:k222_wing} or \ref{fig:k222_not_wing_u4_vgeq1}-\ref{fig:k222_not_wing_u2_diagonals}) with $[e]_1$ as $w_1w_2$.  
    Furthermore, let $\{[w_i]\}$ be the vertex set of $[G]$.  
    Then, $[G]$ and $[G]_1$ can be chosen so that, for any $i \notin \{1,2\}$, $[w_i]^{-1} = [w_i]_1^{-1}$.  
\end{lemma}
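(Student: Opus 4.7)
The plan is to perform a vertex exchange across the edge $[w_1 w_2]$ of $[G]$, modifying only the two classes $[w_1]^{-1}$ and $[w_2]^{-1}$ so that the ``furthermore'' clause is immediate. Write $e = uv$ with $u \in [w_1]^{-1}$ and $v \in [w_2]^{-1}$, let $H$ be the subgraph of $G$ induced by the four-vertex set $[w_1]^{-1}$ (connected since $[G]$ is a minor), and note $[w_2]^{-1} = \{v\}$ as encoded by the subscript $v_0$ in the figure label. Because $[w_1 w_2]$ is an edge of $[G]$, some $x \in [w_1]^{-1}$ is adjacent to $v$ in $G$.

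I would choose a pair $(x,J)$, where $J$ is a connected component of $H \setminus x$, and apply the vertex exchange from $[w_1]^{-1}$ to $[w_2]^{-1}$ that exchanges $x$ and fixes $J$. For the resulting $[G]_1$ to be a valid minor of $G$ two conditions must hold: (a) both new classes $V(J)$ and $[w_2]^{-1} \cup (V(H) \setminus V(J))$ induce connected subgraphs of $G$ (the former automatic; the latter because $x$ is adjacent to $v$ and, as $H$ is connected, to at least one vertex of every component of $H \setminus x$); and (b) for every $i \geq 3$ with $[w_1 w_i] \in E([G])$, some $G$-edge joins $V(J)$ to $[w_i]^{-1}$. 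The 2-connectedness of $G$ is the crucial input for (b): a component $J$ that lost an external neighbor entirely would force $x$ together with one vertex outside $H$ to form a 2-cut of $G$, and the specific adjacency patterns of Figures \ref{fig:k5_not_wing_u4_v0} and \ref{fig:k222_not_wing_u4_v0} restrict how often this can occur.

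Given a valid $(x,J)$, a case analysis on $|V(J)|$ and on the subset of $\{w_3,w_4,w_5\}$ (respectively $\{w_3,\ldots,w_6\}$) retaining $G$-edges to $V(J)$ places $[G]_1$ among the named target figures. If the exchange drives $|[w_1]_1^{-1}|$ down to $1$, absorbing $v$ and all of $V(H) \setminus V(J)$ into $[w_2]_1^{-1}$, then $e$ is in effect contracted and $[G]_1$ is a winged graph (Figure \ref{fig:k5_wing} or \ref{fig:k222_wing}). If $|[w_1]_1^{-1}|$ stays at $4$ while $|[w_2]_1^{-1}|$ grows, the minor matches a $u_4 v_{\geq 1}$ configuration; if $|[w_1]_1^{-1}|$ drops to $3$ (respectively $2$), we obtain a $u_3$ (respectively $u_2$) configuration. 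If no initial $(x,J)$ satisfies (b) outright I would iterate: each successful exchange strictly shrinks $[w_1]^{-1}$, so after finitely many steps we either land in one of the target figures or fully contract $e$, landing in the winged case.

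The main obstacle I anticipate is the case analysis verifying condition (b), particularly in the $K_{2,2,2}$ setting where more target figures (including the ``diagonals'' variant of Figure \ref{fig:k222_not_wing_u2_diagonals}) must be tracked and the non-edge structure of $K_{2,2,2}$ permits doubled adjacencies that require care when transferring vertices between $[w_1]^{-1}$ and $[w_2]^{-1}$. I would handle this by enumerating the 2-cuts of $H$ inside $G$ compatible with the adjacency pattern forced by the hypothesis, and matching each configuration to the corresponding target figure; 2-connectedness of $G$ rules out exactly the pathological cases in which no choice of $(x,J)$ yields a valid target minor.
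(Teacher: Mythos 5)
The vertex-exchange approach is the right one, and your use of 2-connectedness to guarantee that a component of $H \setminus x$ retains a neighbor outside $[w_1]^{-1}$ is on target. But there are two substantive gaps.

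First, you never constrain $J$ to be the component of $H \setminus x$ containing $u$. This is essential. If $u \notin V(J)$, then after the exchange $u$ is absorbed into $[w_2]_1^{-1}$, so $e = uv$ is contracted, $[e]_1$ is a single vertex rather than the edge $w_1w_2$, and the lemma's conclusion simply fails. The paper explicitly takes $J$ to be the $u$-component. Relatedly, your bottom-out case ``fully contract $e$, landing in the winged case'' is self-contradictory: if $e$ is contracted, $[G]_1$ is already $K_5$ or $K_{2,2,2}$, not a winged graph, and there is no $[e]_1 = w_1w_2$. With $u \in V(J)$, when $|V(J)| = 1$ the class $[w_1]_1^{-1} = \{u\}$ is a singleton and $e$ is \emph{preserved}; that is what makes $[G]_1$ the winged graph of Figure \ref{fig:k5_wing} or \ref{fig:k222_wing}.

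Second, your case ``$|[w_1]_1^{-1}|$ stays at 4 while $|[w_2]_1^{-1}|$ grows'' is impossible. A vertex exchange replaces $[w_1]^{-1}$ with $V(J)$, and $x \notin V(J)$, so $|[w_1]_1^{-1}| < |[w_1]^{-1}|$ strictly. Your case analysis on $|V(J)|$ therefore does not cover the ground you need it to, and the iteration of exchanges you propose as a fallback lacks a usable invariant: after one exchange the graph is no longer in the $u_4 v_0$ configuration, so you can't simply ``repeat the argument.'' The paper avoids both problems with a single extremal choice: among all minors of the $u_4 v_0$ type with $[e]$ as $w_1w_2$, take one with $|[u]^{-1}|$ minimum. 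Then a single well-chosen exchange produces a valid $[G]_1$ (the iterated refinement is only in the \emph{choice of $x$}, to force $J$ to have a vertex with a neighbor outside $[u]^{-1} \cup [v]^{-1}$), and minimality rules out $[G]_1$ being of type $u_4 v_0$ again, forcing it into the target list. That extremal step is the missing lever in your write-up.
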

\begin{proof}
    Let $e=uv$ and $[G]$ be a minor that is the graph in Figure \ref{fig:k5_not_wing_u4_v0} (resp. Figure  \ref{fig:k222_not_wing_u4_v0}) with $[u]$ as $w_1$ and $[v]$ as $w_2$.  
    Additionally, let $[u]^{-1}$ have minimum size over all such minors.  
    Let $H$ be the subgraph of $G$ induced by $[u]^{-1}$ and $x$ be any vertex in $[u]^{-1} \setminus \{u\}$ with a neighbor in $[v]^{-1}$, which exists since $G$ is $2$-connected.  
    Also, let $J$ be the connected component of $H \setminus x$ that contains $u$.  
    First, we show how to choose $x$ so that $J$ contains some vertex with a neighbor in $G \setminus ([u]^{-1} \cup [v]^{-1})$.  
    If $J$ contains such a vertex, then our current choice for $x$ works.  
    Otherwise, $G$ being $2$-connected gives a path in $G \setminus x$ between $u$ and some vertex in $G \setminus ([u]^{-1} \cup [v]^{-1})$.  
    Hence, some connected component $K$ of $H \setminus x$ contains both a vertex $y$ with a neighbor in $[v]^{-1}$ and a vertex with a neighbor in $G \setminus ([u]^{-1} \cup [v]^{-1})$.  
    Note that the connected component of $H \setminus y$ that contains $u$ has size strictly greater than $J$, since it contains both $J$ and $x$.  
    Hence, we can set $x$ to be $y$ and repeat this argument until $J$ has the desired property.  
    Consider the minor $[G]_1$ obtained via the vertex exchange from $[u]^{-1}$ to $[v]^{-1}$ that exchanges $x$ and fixes $J$.  
    This operation along with our minimality assumption on $[u]^{-1}$ and our choice for $x$ ensure that $[G]_1$ has the desired properties.  
\end{proof}
\begin{figure}[htb!]
    \centering
    \begin{subfigure}{0.19\textwidth}
        \centering
        \includegraphics[width = \textwidth]{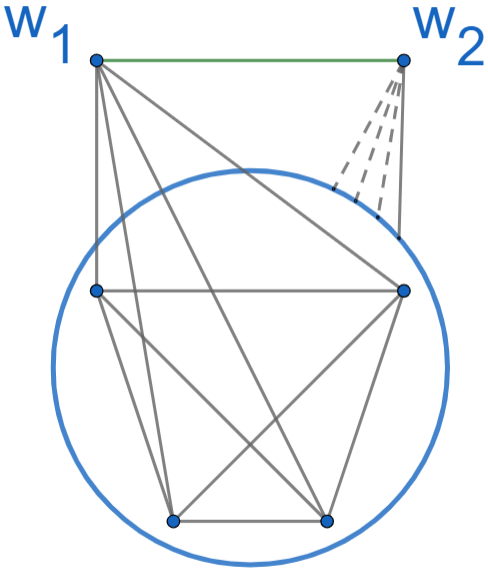}
\subcaption{}
        \label{fig:k5_not_wing_u4_vgeq1}
    \end{subfigure}
    \begin{subfigure}{0.19\textwidth}
        \centering
        \includegraphics[width = \textwidth]{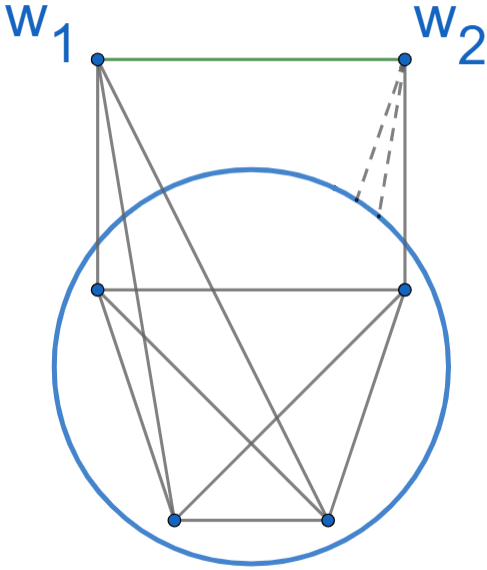}
\subcaption{}
        \label{fig:k5_not_wing_u3}
    \end{subfigure}
    \begin{subfigure}{0.19\textwidth}
        \centering
        \includegraphics[width = \textwidth]{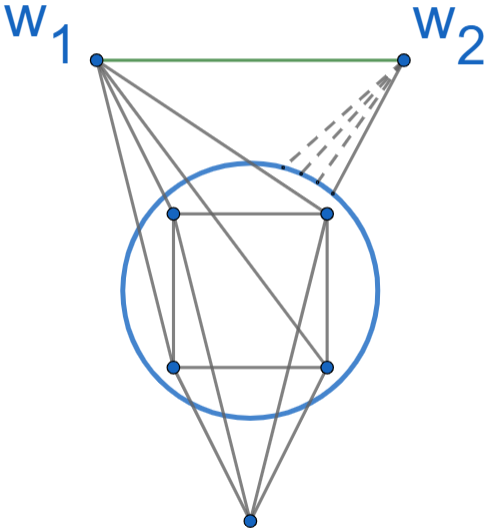}
\subcaption{}
        \label{fig:k222_not_wing_u4_vgeq1}
    \end{subfigure}
    \begin{subfigure}{0.19\textwidth}
        \centering
        \includegraphics[width = \textwidth]{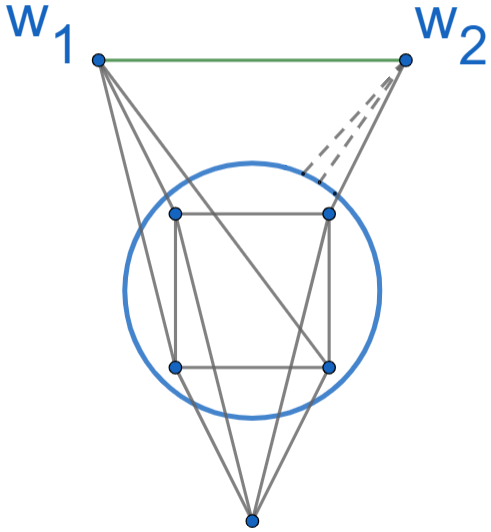}
\subcaption{}
        \label{fig:k222_not_wing_u3}
    \end{subfigure}
        \begin{subfigure}{0.19\textwidth}
        \centering
        \includegraphics[width = \textwidth]{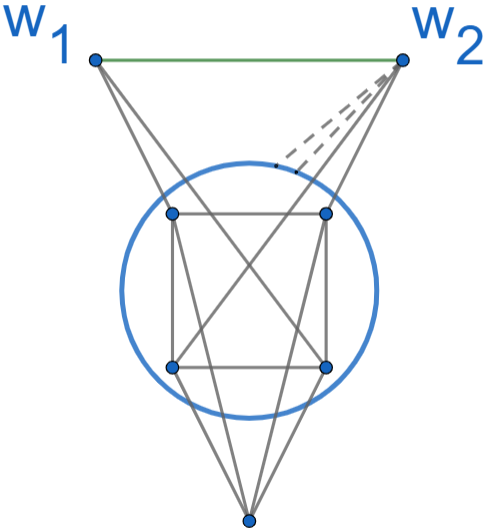}
\subcaption{}
        \label{fig:k222_not_wing_u2_diagonals}
    \end{subfigure}
    \caption{Graphs in Lemmas \ref{lem:G_not_k5_k222_u4v0} and \ref{lem:k5_k222_not_wing_fm_e_not_contracted}.  
    Line-segments between $w_2$ and the blue circle represent an edge between $w_2$ and some vertex in the circle, but note that dashed line-segments may be nonedges.  
    See also the proofs of Proposition \ref{prop:G_is_winged_graph} in this section, Lemma \ref{lem:K5_2-separator_case_1} in Section \ref{sec:proof_winged-graph_cut_vertices}, Lemma \ref{lem:C'_5_H'_wing} in Section \ref{sec:prop-8}, and Lemma \ref{lem:C'_degree_1_reducible_0} in Section \ref{sec:proof_final_top_graphs}.  }
    \label{fig:k5_not_wing}
\end{figure}

\begin{lemma}[Graphs in Figure \ref{fig:k5_not_wing} have $w_1w_2$-preserving forbidden minors]
    \label{lem:k5_k222_not_wing_fm_e_not_contracted}
    If some minor $[G]$ of a graph $G$ is one of the graphs in Figure \ref{fig:k5_not_wing}, then, for any edge $e$ in $[w_1w_2]^{-1}$, $G$ has an $e$-preserving forbidden minor.  
\end{lemma}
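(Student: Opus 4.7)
The plan is to reduce the claim to a statement purely about $[G]$ and then verify that statement by a finite case analysis over the graphs depicted in Figure \ref{fig:k5_not_wing}. For the reduction, write $e = ab$ with $a \in [w_1]^{-1}$ and $b \in [w_2]^{-1}$. Suppose $[G]$ admits a $w_1w_2$-preserving forbidden minor $M$ with branch sets $B_1,\dots,B_n \subseteq V([G])$ and $w_1 \in B_i$, $w_2 \in B_j$ for some $i \neq j$. Then the sets $\widetilde B_k := \bigcup_{v \in B_k}[v]^{-1}$ form a branch-set decomposition of $V(G)$ realizing $M$ as a minor of $G$: each $\widetilde B_k$ is connected because every $[v]^{-1}$ is connected and, for every edge $vv'$ of $[G]$ internal to $B_k$, there is an edge of $G$ joining $[v]^{-1}$ and $[v']^{-1}$; inter-branch adjacencies of $M$ lift for the same reason. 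Since $a \in \widetilde B_i$ and $b \in \widetilde B_j$, the edge $e$ is preserved.

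It then suffices to exhibit, for each graph $[G]$ in Figure \ref{fig:k5_not_wing}, a $w_1w_2$-preserving forbidden minor. In every one, $w_1w_2$ is an edge of $[G]$, and the underlying skeleton is $K_5$ or $K_{2,2,2}$ with one non-$w_1w_2$ edge removed, together with one or two extra vertices (the $u_3, u_4$ labels) or extra diagonal edges. The plan is to keep $\{w_1\}$ and $\{w_2\}$ as singleton branch sets, let the $w_1w_2$ edge already present in $[G]$ supply that adjacency of the target forbidden minor, and absorb each extra vertex into a suitably chosen $w_k$ so as to restore the missing non-$w_1w_2$ adjacency. For the $K_5$-type graphs of Figures \ref{fig:k5_not_wing_u4_vgeq1} and \ref{fig:k5_not_wing_u3}, this means contracting $u_4$ or $u_3$ into whichever of $w_3, w_4, w_5$ supplies the missing $K_5$ edge; the $K_{2,2,2}$-type graphs of Figures \ref{fig:k222_not_wing_u4_vgeq1}, \ref{fig:k222_not_wing_u3}, and \ref{fig:k222_not_wing_u2_diagonals} admit analogous contractions, with the extra diagonals of Figure \ref{fig:k222_not_wing_u2_diagonals} playing the role of the extra vertex in the other two.

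The main obstacle is bookkeeping: as the figure caption warns, some depicted line-segments are dashed and may represent nonedges, so each of the five graphs splits into several sub-cases according to which dashed adjacencies are actually present. For each sub-case I would record an explicit branch-set assignment, verify that each branch set is connected, and confirm the required inter-branch adjacencies. I expect no conceptual subtlety beyond this finite check: the \emph{not wing} hypothesis is precisely what guarantees, in every sub-case, enough redundancy to keep both $w_1$ and $w_2$ in singleton branch sets while still realizing $K_5$ or $K_{2,2,2}$ as a minor of $[G]$.
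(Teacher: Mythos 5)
Your proposal is correct and matches the paper's approach: the paper's proof simply asserts ``the lemma follows from the observation that each of these graphs has a $w_1w_2$-preserving forbidden minor,'' leaving both the transitivity-of-minors reduction and the per-figure verification implicit, whereas you spell both out. The only caveat is that in the paper's own reading of Figure~\ref{fig:k5_not_wing} the dashed segments are fixed to be nonedges rather than ambiguous, so the sub-case splitting you anticipate is lighter than you fear; otherwise your reduction (lifting branch sets of $M$ through the minor map $V(G)\to V([G])$) and your plan to keep $\{w_1\},\{w_2\}$ as singletons while absorbing the extra $u_3,u_4$ vertices to recover the missing $K_5$ or $K_{2,2,2}$ adjacency is exactly what the paper's ``observation'' compresses.
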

\begin{proof}
    For each graph in Figure \ref{fig:k5_not_wing}, a solid line-segment represents an edge and a dashed line-segment represents a nonedge.  
    If a line-segment is between $w_2$ and the blue circle, then the endpoints of this edge or nonedge is $w_2$ and some vertex in the circle.  
    The lemma follows from the observation that each of these graphs has a $w_1w_2$-preserving forbidden minor.  
\end{proof}
\begin{proposition}[Graphs with winged graph minors]
    \label{prop:G_is_winged_graph}
    If $G$ is a $2$-connected graph with an edge $e$ and a forbidden minor but no $e$-preserving forbidden minor, then $G$ has an $e$-winged graph minor.  
\end{proposition}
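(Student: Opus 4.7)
The plan is to start from an arbitrary forbidden minor $[G]$ of $G$ and transform it into an $e$-winged graph minor. By hypothesis $G$ has no $e$-preserving forbidden minor, so $e$ must be contracted in $[G]$; equivalently, both endpoints $u,v$ of $e$ lie in a common bag $[w]^{-1}$ of $[G]$. Among all forbidden minors of $G$ in which $e$ is contracted, I would fix one that minimizes $|[w]^{-1}|$. Let $H$ be the subgraph of $G$ induced by $[w]^{-1}$; $H$ is connected, and by the minimality of $|[w]^{-1}|$ together with the $2$-connectivity of $G$, every vertex of $H$ other than $u$ and $v$ plays an essential role, either in maintaining a $u$--$v$ path inside $H$ or in joining $H$ to the neighboring bags of $[G]$.

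Next, I would split the bag $[w]^{-1}$ into two bags to produce a new minor $[G]_1$ in which $e$ is preserved. The natural extremal split is $[w_1]_1^{-1} = \{u\}$ and $[w_2]_1^{-1} = [w]^{-1} \setminus \{u\}$, where $u$ is chosen so that $[w_2]_1^{-1}$ induces a connected subgraph of $G$ (for instance, as a leaf of some spanning tree of $H$). The resulting $[G]_1$ is the original forbidden minor ($K_5$ or $K_{2,2,2}$) with one extra vertex $w_1 = [u]_1$, joined to $w_2$ via the edge $[e]_1 = w_1w_2$ and joined to each $[w_i]$, $i \geq 3$, for which $u$ has at least one neighbor in $[w_i]^{-1}$ in $G$.

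I would then case-split on the neighborhood of $w_1$ in $[G]_1$. If this neighborhood matches one of the configurations in Figure \ref{fig:k5_k222_e_contracted} (with $[e]_1$ as $w_1w_2$), then $[G]_1$ is already an $e$-winged graph minor and we are done. If instead $[G]_1$ realizes the graph in Figure \ref{fig:k5_not_wing_u4_v0} or Figure \ref{fig:k222_not_wing_u4_v0}, Lemma \ref{lem:G_not_k5_k222_u4v0} produces a further minor $[G]_2$ that is either an $e$-winged graph --- in which case we are done --- or one of the graphs in Figure \ref{fig:k5_not_wing}. The remaining intermediate cases, where $w_1$ has more than two but fewer than all possible neighbors in the original forbidden minor, would be handled by additional vertex-exchange operations analogous to the one used in the proof of Lemma \ref{lem:G_not_k5_k222_u4v0}: these exchanges push vertices between $[w_1]_1^{-1}$ and $[w_2]_1^{-1}$ (while maintaining $2$-connectivity and the preservation of $e$) until the resulting minor lands in Figure \ref{fig:k5_k222_e_contracted}, Figure \ref{fig:k5_k222_e_contracted_degree_1}, or Figure \ref{fig:k5_not_wing}.

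The key closing observation is that any branch of the case analysis terminating in Figure \ref{fig:k5_not_wing} is impossible: by Lemma \ref{lem:k5_k222_not_wing_fm_e_not_contracted}, such a minor of $G$ yields an $e$-preserving forbidden minor of $G$, contradicting the standing hypothesis. Hence every branch must terminate in an $e$-winged graph minor, which is what we want. The main technical obstacle is the case analysis on the neighborhood of $w_1$ in $[G]_1$: I expect it to require a careful enumeration (as in Lemma \ref{lem:G_not_k5_k222_u4v0}) of how $v$ and the other vertices of $[w]^{-1}$ attach to the neighboring bags, and a careful argument that each non-winged configuration admits a vertex-exchange reducing it to one of the figures already treated.
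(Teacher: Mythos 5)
Your overall strategy matches the paper's: since $e$ is contracted in every forbidden minor, there is a minor $[G]$ of $G$ in which $[e]$ is still an edge and contracting $[e]$ yields a forbidden minor; such a $[G]$ must be one of the graphs obtained from $K_5$ or $K_{2,2,2}$ by splitting a vertex along $w_1w_2=[e]$, and these are exactly the graphs in Figures~\ref{fig:k5_k222_e_contracted}, \ref{fig:k5_not_wing}, and \ref{fig:k5_k222_e_contracted_degree_1}; Lemma~\ref{lem:G_not_k5_k222_u4v0} handles the degree-one case, and Lemma~\ref{lem:k5_k222_not_wing_fm_e_not_contracted} rules out the remaining non-winged cases by contradiction. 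The paper's proof is essentially just this three-sentence observation plus the two lemmas, so your proposal lands in the right place.

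Two points in your write-up need repair. First, your ``extremal split'' $[w_1]_1^{-1} = \{u\}$, $[w_2]_1^{-1} = [w]^{-1}\setminus\{u\}$ can fail to define a minor: if $u$ (and, symmetrically, $v$) is a cut vertex of the subgraph $H = G[[w]^{-1}]$, then neither singleton split produces a connected complementary bag, so neither endpoint of $e$ is a leaf of any spanning tree of $H$ --- consider $H$ a path $x$--$u$--$v$--$y$. The robust split, which is what the paper implicitly performs and which always works, is to take any spanning tree $T$ of $H$ containing $e$, delete the tree edge $e$, and take the two resulting subtrees as the bags $A \ni u$, $B \ni v$. This may give non-singleton bags on both sides, which is fine; nothing in the enumeration of $[G]$ requires a singleton. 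You do not need the minimality of $|[w]^{-1}|$ either; it plays no role. Second, the ``remaining intermediate cases'' you worry about do not require further vertex-exchange machinery: the three figures already exhaustively enumerate, up to the shorthand in the captions (dashed segments and the blue circle), every graph that contracts to a forbidden minor by contracting $w_1w_2$. Once $[G]$ is recognized as one of those graphs, the only extra work is Lemma~\ref{lem:G_not_k5_k222_u4v0} for Figure~\ref{fig:k5_k222_e_contracted_degree_1} and Lemma~\ref{lem:k5_k222_not_wing_fm_e_not_contracted} to kill Figure~\ref{fig:k5_not_wing}; there is no separate class of configurations left over to chase by vertex exchanges.
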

\begin{proof}
    Assume $G$ is $2$-connected and has a forbidden minor but no $e$-preserving forbidden minor.  
    Then, $[e]$ is an edge of some minor $[G]$ whose contraction yields a forbidden minor.  
    Hence, $[G]$ is one of the graphs in Figures \ref{fig:k5_k222_e_contracted}, \ref{fig:k5_not_wing}, or \ref{fig:k5_k222_e_contracted_degree_1} with $[e]$ as $w_1w_2$, and thus applying Lemmas \ref{lem:G_not_k5_k222_u4v0} and \ref{lem:k5_k222_not_wing_fm_e_not_contracted} completes the proof.  
\end{proof}
A graph-nonedge pair $(G,f)$ is a \emph{winged graph pair} if $G \cup f$ has no $f$-preserving forbidden minor and has an \emph{$f$-winged graph minor}.  
The above discussion demonstrates that $(G_{\star},f_{\star})$ is a winged graph pair.  
Let $f_{\star} = uv$.  
The vertex set of any $f_{\star}$-winged graph minor $[G_{\star} \cup f_{\star}]$ is written as an indexed set $\{[a_i]\}$, where $u \in [a_1]^{-1}$, $v \in [a_2]^{-1}$, $[a_3]$ and $[a_4]$ are neighbors of $[a_1]$, and $[a_5]$ and $[a_6]$ are neighbors of $[a_2]$.  
Proposition \ref{lem:e_not_st}, below,
allows us to assume that $[a_1]^{-1} = \{u\}$ and $[a_2]^{-1} = \{v\}$, in which case we say that $[G_{\star} \cup f_{\star}]$ has \emph{singleton wing tips}.  
See Figure \ref{fig:converse_2}.  

Proposition \ref{lem:e_not_st} is proved using Lemma \ref{lem:wing-tips}, below.  
For any $f$-winged graph minor $[G \cup f]$ of a winged graph pair $(G,f)$, with $f=uv$, we call vertices $s \in [u]^{-1}$ and $t \in [v]^{-1}$ \emph{wing tips} of $[G \cup f]$ if either they are the only vertices in $[u]^{-1} \cup [v]^{-1}$ or they separate every other vertex in $[u]^{-1} \cup [v]^{-1}$ from every vertex not in $[u]^{-1} \cup [v]^{-1}$.  
Consider the following operation, similar to a vertex exchange, that transforms one minor $[G]_1$ of a graph $G$ into another minor $[G]_2$.  
Refer to Figure \ref{fig:exchange}.  
Consider an edge $[ab]_1$ of $[G]_1$, the subgraph $H$ of $G$ induced by the set $[a]_1^{-1}$, a connected subgraph $J$ of $H$, and a connected component $K$ of $H \setminus J$ that contains a vertex with a neighbor in the set $[b]_1^{-1}$.  
The minor $[G]_2$ obtained by replacing $[a]_1^{-1}$ with $[a]_1^{-1} \setminus V(K)$ and $[b]_1^{-1}$ with $[b]_1^{-1} \cup V(K)$ is said to be obtained via the \emph{component exchange from $[a]_1^{-1}$ to $[b]_1^{-1}$ that exchanges $K$}.  
\begin{lemma}[Wing tips of winged graph minors]
    \label{lem:wing-tips}
    For any winged graph pair $(G,f)$, with $f = uv$, and any $f$-winged graph minor $[G \cup f]$, there exists an $f$-winged graph minor $[G \cup f]_1$ with wing tips such that $[u]_1^{-1} \cup [v]_1^{-1}$ is a subset of $[u]^{-1} \cup [v]^{-1}$ and both $[f]_1^{-1}$ and $[f]^{-1}$ contain the same edges of $G$.  
\end{lemma}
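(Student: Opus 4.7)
I propose to prove Lemma \ref{lem:wing-tips} by an extremal/minimization argument combined with a component exchange. Let $\mathcal{F}$ be the family of $f$-winged graph minors $[G\cup f]_\alpha$ of $G\cup f$ satisfying the two prescribed conditions: the subset containment $[u]_\alpha^{-1}\cup[v]_\alpha^{-1}\subseteq[u]^{-1}\cup[v]^{-1}$, and the equality, as sets of edges of $G$, between $[f]_\alpha^{-1}$ and $[f]^{-1}$. The family is non-empty since it contains the given $[G\cup f]$, so I would select $[G\cup f]_1\in\mathcal{F}$ minimizing $|[u]_1^{-1}|+|[v]_1^{-1}|$ and claim it has wing tips. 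I would establish this by contradiction: assuming the wing tip condition fails, I would construct a strictly smaller member of $\mathcal{F}$.

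\textbf{Main construction.} Set $W_1=[u]_1^{-1}\cup[v]_1^{-1}$, $H_u=G[[u]_1^{-1}]$, and let $N_u\subseteq[u]_1^{-1}$ be the set of vertices with a neighbor in $[v]_1^{-1}$ (symmetrically, $N_v$). Preservation of the edges of $[f]^{-1}$ forces $N_u$ to remain entirely inside $[u]_1^{-1}$ under any exchange that keeps us in $\mathcal{F}$. By the symmetry between $[u]_1$ and $[v]_1$, I may assume the wing tip condition fails on the $u$-side, so that $[u]_1^{-1}$ contains at least two distinct vertices with a neighbor in $V(G)\setminus W_1$. I would then pick a minimal connected subgraph $J\subseteq H_u$ containing $\{u\}\cup N_u$ together with a single representative of $[u]_1^{-1}$ adjacent to $[a_j]_1^{-1}$ for each wing class $[a_j]_1$ with $j\geq 3$ in the winged minor. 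The failure of the wing tip condition then guarantees a non-empty connected component $K$ of $H_u\setminus J$ containing some vertex adjacent in $G$ to an external class $[a_i]_1^{-1}$ with $i\geq 3$.

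\textbf{Exchange and verification.} I would then perform the component exchange from $[u]_1^{-1}$ to $[a_i]_1^{-1}$ that exchanges $K$, yielding a new minor $[G\cup f]_2$. Verifying $[G\cup f]_2\in\mathcal{F}$ is routine: the subset condition is immediate because $[u]_2^{-1}=[u]_1^{-1}\setminus V(K)\subseteq[u]^{-1}\cup[v]^{-1}$; the edge condition holds because $V(K)\cap N_u=\emptyset$ by construction of $J$, so no edge of $G$ between $[u]_1^{-1}$ and $[v]_1^{-1}$ is disturbed; the new $[u]_2^{-1}$ is connected since only a single component of $H_u\setminus J$ is removed; and all winged-minor adjacencies persist because $J$ witnesses $[u]_2[a_j]$ for each $j\neq i$, while $V(K)\subseteq[a_i]_2^{-1}$ together with the $H_u$-adjacency of $K$ to $J$ witnesses $[u]_2[a_i]$. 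Since $|[u]_2^{-1}|<|[u]_1^{-1}|$, this contradicts the minimality of $[G\cup f]_1$.

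\textbf{Main obstacle.} The delicate case I expect will be the main obstacle is the degenerate situation where every $u$-side boundary vertex of $W_1$ already lies in $N_u$, so that any component exchange moving such a vertex out of $[u]_1^{-1}$ would lose an edge of $[f]^{-1}$ and violate condition (ii). My plan to resolve this is to first perform preliminary exchanges internal to $W_1$ (redistributing vertices between $[u]_1^{-1}$ and $[v]_1^{-1}$) in order to concentrate $N_u$ and $N_v$ so that at most one boundary vertex of $W_1$ remains in $N_u$; the very restricted structure of the winged graphs in Figure \ref{fig:k5_k222_e_contracted}, combined with the surjectivity requirement on the minor map, should make this always achievable while preserving both conditions defining $\mathcal{F}$. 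Once this preprocessing is done, the main minimization argument above goes through without further modification.
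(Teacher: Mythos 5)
Your minimization-plus-exchange framework matches the paper's strategy, but there is a genuine gap precisely where you flag it: the degenerate case in your ``Main obstacle'' paragraph is not resolved by the fix you propose, and this is the crux of the lemma.

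Your construction fixes a connected $J \subseteq H_u$ containing $\{u\} \cup N_u$ and one representative per wing class, then moves a component $K$ of $H_u \setminus J$. This guarantees $K \cap N_u = \emptyset$ by fiat, so the edge condition on $[f]^{-1}$ is automatic — but it also means the argument simply cannot start when every external boundary vertex of $[u]_1^{-1}$ either lies in $N_u$ or was already chosen as a representative, since then no suitable $K$ exists. Your proposed remedy — shuffling vertices between $[u]_1^{-1}$ and $[v]_1^{-1}$ ``to concentrate $N_u$'' — does not obviously help: a vertex $x$ that is both in $N_u$ and adjacent to some $[a_j]_1^{-1}$ with $j \geq 3$ cannot be pushed into $[v]_1^{-1}$ without creating a $G$-edge between $[v]_1^{-1}$ and $[a_j]_1^{-1}$, which would change the winged-graph adjacency pattern, nor can it be pushed out of $W_1$ without destroying an edge of $[f]^{-1}$. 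You assert ``the very restricted structure of the winged graphs... should make this always achievable'' but give no mechanism, and I don't believe there is one along these lines.

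The paper's proof handles this very case differently, and you should adopt that mechanism. Rather than pre-building a $J$ that contains $N_u$, the paper picks a boundary vertex $x \neq u$, lets $J$ be the component of $H_u \setminus x$ containing $u$, and performs a vertex exchange (or, in one subcase, a component exchange) chosen by a short case analysis on whether $J$ retains neighbors in $[a_3]_1^{-1}$ and $[a_4]_1^{-1}$. Crucially, this exchange may in principle move vertices of $N_u$ into an external class $[a_j]_1^{-1}$ — and the paper does not forbid this up front. Instead, it verifies post hoc that the result is a valid $f$-winged graph minor satisfying the edge condition, by invoking Lemma \ref{lem:k5_k222_not_wing_fm_e_not_contracted}: if a vertex with a neighbor in $[v]_1^{-1}$ had been moved to $[a_j]_2^{-1}$, the resulting rooted minor would carry the extra adjacency $[a_j][a_2]$ and hence be one of the graphs in Figure \ref{fig:k5_not_wing}, which by that lemma yields an $f$-preserving forbidden minor of $G \cup f$, contradicting the winged graph pair hypothesis. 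Thus the problematic configuration is ruled out by contradiction, not avoided by construction. Your proposal never invokes this lemma, yet it is exactly the ingredient that makes the exchange legal in the regime you correctly identify as dangerous; without it, the argument is incomplete.
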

\begin{proof}
    Let $[G \cup f]_1$ be any $f$-winged graph minor whose vertex set is $\{[a_i]_1\}$ such that (i) $[u]_1^{-1} \cup [v]_1^{-1}$ is a subset of $[u]^{-1} \cup [v]^{-1}$ and both $[f]_1^{-1}$ and $[f]^{-1}$ contain the same edges of $G$ and (ii) $[u]_1^{-1} \cup [v]_1^{-1}$ has minimum size over all such minors satisfying (i).  
    Assume there exist at least two vertices in $[u]_1^{-1}$ with neighbors in $[a_3]_1^{-1} \cup [a_4]_1^{-1}$.  
    Then, we can choose $x$ to be one of these vertices other than $u$.  
    Let $H$ be the subgraph of $G \cup f$ induced by $[u]_1^{-1}$ and let $J$ be the connected component of $H \setminus x$ that contains $u$.  
    Assume wlog that $x$ has a neighbor in $[a_3]_1^{-1}$.  
    If $J$ contains some vertex with a neighbor in $[a_4]_1^{-1}$, then let $[G \cup f]_2$ be the minor obtained via the vertex exchange from $[u]_1^{-1}$ to $[a_3]_1^{-1}$ that exchanges $x$ and fixes $J$.  
    Else, if $J$ contains some vertex with a neighbor in $[a_3]_1^{-1}$ and $x$ has a neighbor in $[a_4]_1^{-1}$, then let $[G \cup f]_2$ be the minor obtained via the vertex exchange from $[u]_1^{-1}$ to $[a_4]_1^{-1}$ that exchanges $x$ and fixes $J$.  
    Lastly, if no vertex in $[a_4]_1^{-1}$ neighbors either $x$ or any vertex in $J$, then some connected component $K$ of $H \setminus x$ other than $J$ contains some vertex with a neighbor in $[a_4]_1^{-1}$.  
    In this case, let $[G \cup f]_2$ be the minor obtained via the component exchange from $[u]_1^{-1}$ to $[a_4]_1^{-1}$ that exchanges $K$.  
    
    Next, in each case above, the neighborhoods of vertices in $[G \cup f]_1$ along with Lemma \ref{lem:k5_k222_not_wing_fm_e_not_contracted} show that $[G \cup f]_2$ is an $f$-winged graph minor such that $[u]_2^{-1} \cup [v]_2^{-1}$ is a proper subset of $[u]_1^{-1} \cup [v]_1^{-1}$ and both $[f]_2$ and $[f]_1$ contain the same edges of $G$.  
    However, this contradicts Property (ii).  
    Therefore, $[u]_1^{-1}$ contains exactly one vertex with a neighbor in $[a_3]_1^{-1} \cup [a_4]_1^{-1}$.  
    Symmetrically, $[v]_1^{-1}$ contains exactly one vertex with a neighbor in $[a_5]_1^{-1} \cup [a_6]_1^{-1}$.  
    Thus, $[G \cup f]_1$ satisfies Property (i) and has wing tips, as desired.  
\end{proof}

 Proposition \ref{lem:e_not_st} additionally requires tool Lemmas \ref{lem:tool} and \ref{lem:gluing} from Section \ref{sec:results} that are of independent interest and  are used several times throughout this paper.  
 We prove them now in the following subsection.  

\subsection{Proofs of Tool Lemmas \ref{lem:tool} and \ref{lem:gluing}}
\label{sec:tool_lemmas}

Lemma \ref{lem:tool}, about edge deletions which preserve $d$-SIP, is proved using Lemma \ref{lem:projection}, below.  
For any dimension $n > 0$, consider a subset $S \subset \mathbb{R}^n$ and let $S'$ be one of its coordinate projections given by the projection map $\pi$.  
For any point $x \in S'$, we denote the fiber $\pi^{-1}(x)$ by $S(x)$.  
Recall that in this paper, all sets are closed and compact, all maps are closed, and connected implies path-connected, unless otherwise specified.

\begin{lemma}[Connected projections of disconnected sets have disconnected fibers]
    \label{lem:projection}
    For any dimension $n > 0$, consider a disconnected compact subset $S \subset \mathbb{R}^n$ and let $S'$ be one of its coordinate projections.  
    If $S'$ is connected, then there exists a point $x \in S'$ whose fiber $S(x)$ is disconnected.  
\end{lemma}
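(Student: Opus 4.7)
The plan is to carry out a direct case analysis on whether the projections of the two pieces of a disconnection of $S$ overlap. First I would use compactness and disconnectedness of $S$ to write $S = A \sqcup B$ with $A, B$ disjoint, nonempty, and closed (hence compact) in $\mathbb{R}^n$; this is the standard decomposition of a disconnected compact set, and is available in the semi-algebraic setting of the paper where topological connectedness coincides with path-connectedness. Pushing these through the coordinate projection $\pi$ that defines $S'$, which is a closed map by compactness of $S$, yields compact subsets $\pi(A), \pi(B) \subseteq \mathbb{R}^k$ whose union is exactly $S'$.

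Next I would split on whether $\pi(A)$ and $\pi(B)$ meet. If $\pi(A) \cap \pi(B) = \emptyset$, then $S' = \pi(A) \sqcup \pi(B)$ is a disjoint union of two nonempty closed sets, contradicting the assumption that $S'$ is connected. Hence there must exist some $x \in \pi(A) \cap \pi(B)$, and for any such $x$ the fiber decomposes as
\[
S(x) = \bigl(S(x) \cap A\bigr) \sqcup \bigl(S(x) \cap B\bigr),
\]
where both pieces are nonempty (since $x$ lies in the image of both $A$ and $B$) and closed in $S(x)$. Therefore $S(x)$ is disconnected, which is the desired conclusion.

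I do not expect any real obstacle: the lemma is essentially the contrapositive of the basic topological fact that continuous images of connected sets are connected, repackaged so that the ``lost'' connectedness must live inside some fiber rather than in the image. The only care I would take is in invoking compactness of $S$ (and hence closedness of $\pi$ restricted to $S$) to guarantee that $\pi(A)$ and $\pi(B)$ really are closed in $\mathbb{R}^k$, so that the first case of the dichotomy produces an honest separation of $S'$ rather than merely two pieces whose closures might intersect.
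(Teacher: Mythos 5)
Your proof is correct and takes a genuinely different, and in fact cleaner, route than the paper's. The paper argues by contradiction: it assumes every fiber is connected, picks points $p \in S_1$, $q \in S_2$ in distinct components, takes a path $P$ in $S'$ joining their projections, shows each fiber over $P$ lies entirely in $S_1$ or entirely in $S_2$ (using disjointness and compactness of $S_1, S_2$), and derives a contradiction by ``coloring'' the points of $P$ according to which component their fiber sits in — essentially a connectedness-of-$P$ argument. Your approach is a direct dichotomy: decompose $S = A \sqcup B$ into nonempty disjoint closed pieces, observe that $\pi(A)$ and $\pi(B)$ are closed sets covering $S'$, note that connectedness of $S'$ forces $\pi(A) \cap \pi(B) \neq \emptyset$, and any point $x$ in the overlap has $S(x) = (S(x)\cap A) \sqcup (S(x)\cap B)$ disconnected. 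This avoids the path construction and the coloring step entirely, is constructive (it exhibits the bad $x$ rather than deducing its existence by contradiction), and uses only the closed-map property of $\pi$ on a compact domain. The one point you rightly flag — passing from ``not path-connected'' (the paper's convention for ``disconnected'') to a separation $S = A \sqcup B$ into closed pieces — requires topological disconnectedness, which you justify via the semi-algebraic context; the paper's own proof relies on the same kind of assumption when it decomposes $S$ into finitely many compact components, so your proof is no less rigorous on that point.
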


\begin{proof}
    Assume to the contrary that $S'$ is connected and the fiber $S(x)$ of any point $x \in S'$ is connected.  
    We can assume wlog that $S$ has exactly two connected components $S_1$ and $S_2$.  
    Consider any two points $p \in S_1$ and $q \in S_2$ and let $p'$ and $q'$ be their projections in $S'$.  
    By assumption, there exists a path $P$ in $S'$ between $p'$ and $q'$.  
    Since the fiber $S(x)$ of any point $x$ along $P$ is connected and the sets $S_1$ and $S_2$ are compact, $S(x)$ is a subset of either $S_1$ or $S_2$.  
    Hence, define $C(x)$ to be $1$ if $S(x) \subseteq S_1$, and $2$ otherwise.  
    Let $Y$ be the set of all points $y$ along $P$ such that $C(y) = C(p')$ and let $Z$ be the set of all points $z$ along $P$ such that $C(z) = C(q')$.  
    By our assumptions, $C(p') \neq C(q')$ and $Y \cap Z = \emptyset$.  
    However, this implies that $P$ is disconnected, which is a contradiction.  
    Thus, the lemma is proved.  
\end{proof}
\begin{remark}
    Lemma \ref{lem:projection} is true if we replace $S$ with any subset of a topological vector space.
\end{remark}
\begin{proof}[Proof of Lemma \ref{lem:tool}]
    Assume to the contrary that $(G,F)$ and $(G \cup F, F')$ are $d$-Cayley-connected but $(G, F \cup F')$ is not.  
    By definition, there exists a squared edge-length map $\ell$ such that the CCS $\Omega^d_{F \cup F'}(G,\ell)$ is disconnected.  
    The CCS $\Omega^d_{F}(G,\ell)$, which is connected by assumption, is a coordinate projection of $\Omega^d_{F \cup F'}(G,\ell)$.  
    Hence, letting $S = \Omega^d_{F \cup F'}(G,\ell)$, $S' = \Omega^d_{F}(G,\ell)$, and applying Lemma \ref{lem:projection} shows that there exists a point $x \in S'$ such that the fiber $S(x)$ is disconnected.  
    However, this implies that there exists a squared edge-length map $\ell'$ such that the CCS $\Omega^d_{F'}(G \cup F, \ell')$ is disconnected, which contradicts the fact that $(G \cup F, F')$ is $d$-Cayley-connected.  
    Therefore, $(G, F \cup F')$ is $d$-Cayley-connected.  
    The fact that $(G, F')$ is also $d$-Cayley-connected follows from the observation that the CCS $\Omega^d_{F'}(G,\ell)$ is a coordinate projection of $\Omega^d_{F \cup F'}(G,\ell)$.  
    This completes the proof.  
\end{proof}

Lemma \ref{lem:gluing}, about how $d$-SIP interacts with clique separators, is proved using Lemma \ref{lem:ccs_intersection}, below.  
For any two induced subgraphs $H_1$ and $H_2$ of a graph $G$, $H_1 \cap H_2$ is the subgraph of $G$ induced by $V(H_1) \cap V(H_2)$.  

\begin{lemma}[Intersection of CCSs]
    \label{lem:ccs_intersection}
    Consider a graph-nonedge pair $(G,f)$, a clique separator $C$ of $G \cup f$, and the set $\{G_i\}$ of all $C$-components of $G \cup f$ that contain $f$.  
    Also, consider any linkage $(G,\ell)$ and let $\ell_i$ be the restriction of $\ell$ to the edges of $G_i \setminus f$.  
    For any dimension $d \geq 0$, the following equation holds:
    \begin{equation*}
        \Omega^d_f(G,\ell) = \bigcap_i \Omega^d_f(G_i \setminus f,\ell_i).  
    \end{equation*}
\end{lemma}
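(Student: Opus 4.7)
The plan is to prove the two inclusions separately, where $(\subseteq)$ is a trivial restriction argument and the substantive direction is $(\supseteq)$, handled by a gluing argument along the clique separator $C$.

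For $(\subseteq)$, write $f = uv$ and take any $p \in \mathcal{C}^d(G,\ell)$ with $\|p(u)-p(v)\|^2 = t$. Each $G_i$ is an induced subgraph of $G \cup f$ containing both $u$ and $v$, so $p|_{V(G_i)}$ is a $d$-realization of $(G_i \setminus f, \ell_i)$ attaining $t$ on $f$, placing $t$ in every $\Omega^d_f(G_i \setminus f, \ell_i)$.

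For $(\supseteq)$, given $t$ in the intersection I would pick, for each $i$, a realization $p_i$ of $(G_i \setminus f, \ell_i)$ attaining $t$ on $f$. Since $V(C) \subseteq V(G_i)$ and $C$ is a clique in $G \cup f$, each restriction $p_i|_{V(C)}$ is a labeled realization in $\mathbb{R}^d$ of the same clique with identical prescribed pairwise squared distances: the $\ell$-values on the edges of $C$, together with the value $t$ for $f$ whenever both endpoints of $f$ lie in $V(C)$. Invoking the standard distance-geometric fact that a finite labeled configuration in $\mathbb{R}^d$ is determined up to a Euclidean isometry by its squared pairwise distances, I obtain isometries $R_i$ of $\mathbb{R}^d$ under which all the $R_i \circ p_i$ coincide on $V(C)$. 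Because distinct $C$-components of $G \cup f$ meet in exactly $V(C)$, the piecewise assignment $p(w) := R_i(p_i(w))$ for $w \in V(G_i)$ is well-defined; rigid motions preserve squared distances, so $p$ realizes $(G,\ell)$ (every edge of $G$ lies in some $G_i$) and $\|p(u)-p(v)\|^2 = t$ because $u,v$ both lie in a common $G_i$.

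The main obstacle I expect to work around is the case where at least one endpoint of $f$ lies outside $V(C)$: then only one $C$-component of $G \cup f$ contains $f$, and the gluing must additionally incorporate realizations of the remaining, $f$-avoiding $C$-components of $G \cup f$. Each such component intersects the others in exactly $V(C)$, and its $\ell$-restriction admits realizations (inherited from any element of $\mathcal{C}^d(G,\ell)$ in the nontrivial case), which can be aligned to the fixed $C$-realization by analogous isometries before being glued in. A minor technical nuisance is that the aligning isometry $R_i$ need not be unique when the $C$-subrealization has affine dimension strictly less than $d$, but uniqueness is irrelevant -- any valid choice suffices for the gluing.
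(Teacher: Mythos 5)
Your proposal is correct and follows essentially the same route as the paper's proof: restriction gives $\subseteq$, and the clique $C$ fixes all pairwise squared distances on $V(C)$ (including the length $t$ on $f$ when $f$ lies in $C$), so the component realizations can be aligned by isometries and glued. You work out the congruence-of-$C$-subconfigurations step and the well-definedness of the glued map in more detail than the paper, which simply asserts the gluing via ``translations, rotations, and overall reflection.''

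Your caveat about the case where an endpoint of $f$ lies outside $V(C)$ is well-taken and in fact exposes a genuine gap that the paper's own proof also glosses over. When exactly one $C$-component $G_1$ contains $f$, the intersection $\bigcap_i I_i$ reduces to $\Omega^d_f(G_1 \setminus f, \ell_1)$, and this set can be non-empty while some $f$-avoiding $C$-component $H_j$ has no $d$-realization under the restricted length map at all, forcing $\mathcal{C}^d(G,\ell)=\emptyset$ and hence $\Omega^d_f(G,\ell)=\emptyset$. In that situation the claimed equality fails; your phrase ``in the nontrivial case'' effectively restricts to $\mathcal{C}^d(G,\ell)\ne\emptyset$, where the argument is sound. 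The paper's proof takes ``corresponding $d$-realizations of each $(H_i\setminus f,\ell_i')$'' for granted without explaining where the realizations of the $f$-avoiding components come from, so it implicitly makes the same assumption. The downstream application inside Lemma~\ref{lem:gluing} is unaffected, because when $\Omega^d_f(G,\ell)=\emptyset$ the single-interval conclusion there is vacuous.
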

\begin{proof}
    Let $I = \Omega^d_f(G,\ell)$ and $I_i = \Omega^d_f(G_i \setminus f,\ell_i)$.  
    The fact that $I \subseteq \bigcap_i I_i$ follows immediately from the definitions of these sets.   
    To see that $\bigcap_i I_i \subseteq I$, let $\{H_i\}$ be the set of all $C$-components of $G \cup f$ and let $\ell'_i$ be the restriction of $\ell$ to the edges of $H_i \setminus f$.  
    Since $C$ is a clique, for any point in $\bigcap_i I_i$, corresponding $d$-realizations of each $(H_i \setminus f,\ell'_i)$ can be glued together on $C$ via translations, rotations, and overall reflection to obtain a $d$-realization of $(G,\ell)$.  
\end{proof}

Recall from Section \ref{sec:minor-notation} that induced subgraph $H$ of a graph $G$ is \emph{preserved} in a minor $[G]$ if each edge and nonedge of $H$ is preserved in $[G]$, and \emph{weakly retained} in $[G]$ if $H$ is preserved in $[G]$ and each nonedge of $H$ is retained in $[G]$.  

\begin{proof}[Proof of Lemma \ref{lem:gluing}]
     For one direction, assume that each pair $(G_i \setminus f,f)$ has the $d$-SIP.  
     Consider any linkage $(G,\ell)$ and let $\ell_i$ be the restriction of $\ell$ to the edges of $G_i \setminus f$.  
     Also, let $I = \Omega^d_f(G,\ell)$ and $I_i = \Omega^d_f(G_i \setminus f,\ell_i)$.  
     Since each set $I_i$ is a single interval, by assumption, Lemma \ref{lem:ccs_intersection} shows that $I$ is a single interval, and so $(G,f)$ has the $d$-SIP.  
     
     For the other direction, let $\{H_i\}$ be the set of all $C$-components of $G \cup f$, $H'_i = H_i \setminus f$, and $C' = C \setminus f$.  
     Assume wlog that $H_1$ contains $f$ and $(H_1 \setminus, f)$ does not have the $d$-SIP.  
     We will show that there exists an induced minor $[G]$ such that $([G],[f])$ does not have the $d$-SIP, and then apply the contrapositive of Lemma \ref{lem:cm_non-sip} to complete the proof.  
     Let $[G]_1$ be the minor obtained by contracting all edges in $H'_i \setminus C'$ for each $H'_i$ with $i \neq 1$.  
     If $C'$ does not contain $f$, then let the minor $[G]_2$ be obtained from $[G]_1$ by contracting any edge incident to each vertex $[H'_i \setminus C']_1$ with $i \neq 1$.  
     Note that $[f]_2$ is a nonedge of $[G]_2$, $H'_1$ is weakly retained in $[G]_2$, and $[G]_2 = [H'_1]_2$.  
     Hence, since $(H'_1, f)$ does not have the $d$-SIP, neither does $([G]_2,[f]_2)$, as desired.  
     
     Finally, if $C'$ contains $f$, then consider the minor $[G]_2$ obtained from $[G]_1$ as follows.  
     For any vertex $[H'_i \setminus C']_1$ adjacent to some vertex that is not an endpoint of $[f]_1$, contract the edge between these two vertices.  
     For any vertex $[H'_i \setminus C']_1$ that has degree $1$, contract the edge incident to it.  
     Note that $[f]_2$ is a nonedge in $[G]_2$, $H'_1$ is weakly retained in $[G]_2$, and $[G]_2$ can be obtained from $[H'_1]_2$ by adding vertices of degree $2$ that neighbor the endpoints of $[f]_2$.  
     Using these facts along with the triangle-inequality and the assumption that $(H_1, f)$ does not have the $d$-SIP, it is easy to see that $([G]_2,[f]_2)$ does not have the $d$-SIP, as desired.  
     This completes the proof.  
\end{proof}

A clique separator $C$ of a graph $G$ is \emph{minimal}, and referred to as a \emph{clique minimal separator (CMS)} of $G$, if for any of its induced proper subgraphs $C'$, the set of $C$-components of $G$ is not equal to the set of $C'$-components of $G$.  

\begin{proof}[Proof of Corollary \ref{cor:atom_gluing}]
    Let $\mathcal{G}_0 = \{G \cup f\}$ and $\mathcal{G}_0,\dots,\mathcal{G}_n$ be a maximal sequence such that $\mathcal{G}_{i+1}$ is obtained from $\mathcal{G}_i$ by replacing one of its graphs $H$ that has a CMS $C$ with the $C$-components of $H$.  
    It was shown in \cite{fudos1997graph} that, for any such sequence, $\mathcal{G}_n$ is the set of all atoms of $G \cup f$.  
    Lemma \ref{lem:gluing} shows that $(G,f)$ has the $d$-SIP if and only if, for each set $\mathcal{G}_i$ and each of its graphs $H$ that contain $f$, $(H \setminus f, f)$ has the $3$-SIP.  
    This completes the proof.  
\end{proof}
We are now ready to prove Proposition \ref{lem:e_not_st}. 

\begin{proposition}[Winged graph minors with singleton wing tips]
    \label{lem:e_not_st}
    Consider a winged graph pair $(G,f)$ where $G \cup f$ is $2$-connected.  
    If the converse direction of Theorem \ref{thm:3-sip_characterization} is true for any graph-nonedge pair whose graph has strictly fewer vertices than $G$, then either $(G,f)$ has the $3$-SIP or some $f$-winged graph minor of $G \cup f$ has singleton wing tips.  
\end{proposition}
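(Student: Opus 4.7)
The plan is to proceed by induction on $|V(G)|$ under the hypothesis that the converse of Theorem~\ref{thm:3-sip_characterization} holds for every graph-nonedge pair whose graph has strictly fewer vertices than $G$. Supposing $(G,f)$ admits no $f$-winged graph minor with singleton wing tips, the goal is to show that $(G,f)$ has the $3$-SIP.

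First, I reduce to the case where $G\cup f$ is an atom: if not, Corollary~\ref{cor:atom_gluing} reduces the claim to the $3$-SIP of $(A\setminus f,f)$ for each atom $A$ of $G\cup f$ containing $f$, each of which is a proper induced subgraph (so $|V(A)|<|V(G)|$) whose atoms are atoms of $G\cup f$, thereby inheriting the hypothesis that no atom containing $f$ has an $f$-preserving forbidden minor. The inductive hypothesis then delivers $3$-SIP on each piece, and Corollary~\ref{cor:atom_gluing} lifts it back to $(G,f)$.

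Assume now $G\cup f$ is an atom. By Lemma~\ref{lem:wing-tips} fix an $f$-winged graph minor with wing tips $s\in[u]^{-1}$, $t\in[v]^{-1}$, where $f=uv$; by hypothesis, without loss of generality $[u]^{-1}\supsetneq\{u\}$. The wing tip property makes $\{s,t\}$ a $2$-separator of $G\cup f$ whose inner side contains $V(A):=([u]^{-1}\cup[v]^{-1})\setminus\{s,t\}$ and whose outer side $V(B):=V(G)\setminus([u]^{-1}\cup[v]^{-1})$ is nonempty (the winged graph minor has four vertices outside $[u]^{-1}\cup[v]^{-1}$). If $s=u$ and $t=v$, then $\{u,v\}$ is the $2$-separator and $uv=f$ is an edge, making $\{u,v\}$ a clique separator of $G\cup f$ and contradicting the atom hypothesis. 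Hence, without loss of generality, $s\ne u$, and because $G\cup f$ is an atom, $\{s,t\}$ cannot be a clique separator, so $st$ is a nonedge of $G$.

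I finish by applying Lemma~\ref{lem:tool} with nonedge sets $F=\{st\}$ and $F'=\{f\}$: once $(G,st)$ and $(G\cup\{st\},f)$ are shown to be $3$-Cayley-connected, the lemma yields that $(G,\{st,f\})$ is $3$-Cayley-connected and consequently $(G,f)$ has the $3$-SIP. Adding the edge $st$ makes $\{s,t\}$ a clique separator in $G\cup\{st\}$, so Lemma~\ref{lem:gluing} reduces each of these two claims to the $3$-SIP for proper $\{s,t\}$-components of $G\cup\{st\}$, which have strictly fewer vertices than $G$ since $V(B)\ne\emptyset$. The inductive hypothesis applies to each such component once the forbidden-minor hypothesis is inherited; this inheritance is the main technical obstacle, and I plan to handle it by a substitution argument: any candidate $f$- or $st$-preserving forbidden minor $M$ of a component would use the added edge $st$ to witness adjacency $[s][t]$, and substituting this edge by an $s$-$t$ path through the opposite $\{s,t\}$-component of $G$ (absorbed into the vertex class of $M$ containing $s$, with the remaining vertices of the opposite component folded into existing classes of $M$ using the connectedness of each component) yields an $f$- or $st$-preserving forbidden minor of $G\cup f$; for $f$-preserving this contradicts the winged-graph-pair hypothesis directly, and for $st$-preserving one additionally ensures the substitution keeps $u$ and $v$ in distinct classes so as to upgrade the produced minor to an $f$-preserving forbidden minor of $G\cup f$, again contradicting the hypothesis.
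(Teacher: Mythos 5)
Your opening reduction to the atom case and the identification of the wing tips $s,t$ (via Lemma \ref{lem:wing-tips}) is consistent with the paper's setup, and the plan of applying Lemma \ref{lem:tool} with $F=\{st\}$, $F'=\{f\}$ is exactly what the paper does. The half of the argument for $(G\cup st,f)$ is also sound: if the $\{s,t\}$-component of $G\cup\{st,f\}$ containing $f$ had an $f$-preserving forbidden minor, then absorbing the opposite component into the $[s]$-class would produce an $f$-preserving forbidden minor of $G\cup f$ \emph{without touching the classes of $u$ and $v$}, contradicting the winged-graph-pair hypothesis. That argument closes cleanly.

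The gap is in the other half. You are trying to prove outright that $(G,st)$ has the $3$-SIP, but the paper does \emph{not} do this, and in general it is not true for an arbitrary choice of winged graph minor. The paper first uses Lemma \ref{lem:wing-tips} to pick an $f$-winged graph minor that in addition \emph{minimizes} $|[u]^{-1}\cup[v]^{-1}|$. It then explicitly entertains the alternative that $(G,st)$ fails to have the $3$-SIP; in that case it extracts an $st$-preserving forbidden minor of a proper $\{s,t\}$-component $H$, shows (via exactly your kind of substitution argument) that $H$ must contain $f$, and then applies Proposition \ref{prop:G_is_winged_graph} and Lemma \ref{lem:wing-tips} to $H$ to construct a new $f$-winged graph minor $[G\cup f]_1$ with $[u]_1^{-1}\cup[v]_1^{-1}\subsetneq V(H)\subseteq[u]^{-1}\cup[v]^{-1}$, contradicting minimality. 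Without that extremal choice the contradiction evaporates and the case must be carried forward, not dismissed.

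Your substitution does not close this case. When the component $H$ containing the $st$-preserving forbidden minor is the one that also contains $u$ and $v$, the forbidden minor $M$ of $H$ may well already put $u$ and $v$ in the same class (or leave them both on the substituted $s$-$t$ path, destined for the $[s]$-class). In that event the lifted forbidden minor of $G\cup f$ is $st$-preserving but contracts $f$, which is perfectly consistent with $(G,f)$ being a winged graph pair, and your ``additionally ensures the substitution keeps $u$ and $v$ in distinct classes'' is an assertion with no mechanism behind it. The missing idea is precisely the minimality of $|[u]^{-1}\cup[v]^{-1}|$ combined with the construction of a strictly smaller $f$-winged graph minor inside $H$; this is what converts the possible failure of $(G,st)$ to have the $3$-SIP into an impossibility rather than a live obstacle.
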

\begin{proof}
    Let $f = uv$ and let $[G \cup f]$ be any $f$-winged graph minor with wing tips $s$ and $t$, as given by Lemma \ref{lem:wing-tips}, which additionally minimizes the size of $[u]^{-1} \cup [v]^{-1}$.  
    If $[u]^{-1}$ and $[v]^{-1}$ each contain exactly one vertex, then the lemma is proved.  
    Otherwise, if $st$ is an edge of $G \cup f$, then $\{s,t\}$ is a CMS of $G \cup f$.  
    Since $G \cup f$ has no $f$-preserving forbidden minor, by assumption, neither does any $\{s,t\}$-component $H$ of $G \cup f$ that contains $f$.  
    Hence, since $|V(H)| < |V(G)|$, the converse direction of Theorem \ref{thm:3-sip_characterization} along with Lemma \ref{lem:gluing} show that $(G,f)$ has the $3$-SIP.  
    Therefore, assume that $st$ is a nonedge of $G \cup f$.  
    We will show that either Lemma \ref{lem:tool} can be used to prove that $(G,f)$ has the $3$-SIP or there exists an $f$-winged graph minor $[G \cup f]_1$ with wing tips such that $[u]_1^{-1} \cup [v]_1^{-1}$ is a proper subset of $[u]^{-1} \cup [v]^{-1}$, a contradiction.  
    This will complete the proof.  
    
    Since $f$ is an edge of $G \cup f$, it is not $st$.  
    A similar argument using the facts that $G \cup f$ has no $f$-preserving forbidden minor and $|V(H)| < |V(G)|$ along with the converse direction of Theorem \ref{thm:3-sip_characterization} and Lemma \ref{lem:gluing} show that $(G \cup st, f)$ has the $3$-SIP.  
    If $(G, st)$ has the $3$-SIP, then Lemma \ref{lem:tool} states that $(G,f)$ has the $3$-SIP.  
    Otherwise, note that $\{s,t\}$ is a clique separator of $G \cup st$, and so, by Lemma \ref{lem:gluing}, there exists an $\{s,t\}$-component $H$ of $G \cup st$ that contains $st$ and such that $(H \setminus st, st)$ does not have the $3$-SIP.  
    The fact that $|V(H)| < |V(G)|$ along with the contrapositive of the converse direction of Theorem \ref{thm:3-sip_characterization} gives an $st$-preserving forbidden minor of $H$.  
    Note that $H$ must contain $f$, or else $G \cup f$ clearly has an $f$-preserving forbidden minor, which is not the case.  
    For the same reason, $H \cup f$ has no $f$-preserving forbidden minor.
    Also, since $G \cup f$ is $2$-connected, so is $H \cup f$.  
    Therefore, Proposition \ref{prop:G_is_winged_graph} shows that $(H,f)$ is a winged graph pair with an $f$-winged graph minor $[H \cup f]_2$ whose vertex set is $\{[b_i]_2\}$.  
    
    Finally, since $H \cup f$ has an $st$-preserving forbidden minor, we can choose $[H \cup f]_2$ such that, wlog, $s$ is not contained in $[u]_2^{-1} \cup [v]_2^{-1}$.  
    Assume that $s \in [b_3]_2^{-1}$.  
    Using Lemma \ref{lem:wing-tips}, we can additionally assume that $[H \cup f]_2$ has wing tips $s'$ and $t'$.  
    Let $X = V(G \setminus H)$ and note that every vertex in $X$ is separated in $G$ from $V(H) \setminus \{s,t\}$ by $\{s,t\}$.  
    Hence, since $st$ is an edge of $H \cup f$ and $s \in [b_3]_2^{-1}$, there exists an $f$-winged graph minor $[G \cup f]_1$ with wing tips $s'$ and $t'$ and whose vertex set is $\{[a_i]_1\}$, where $[a_3]_1^{-1} = [b_3]_2^{-1} \cup X$ and $[a_i]_1^{-1} =[b_i]_2^{-1}$ for any $i \neq 3$.  
    Thus, since $[u]_1^{-1} \cup [v]_1^{-1} = [u]_2^{-1} \cup [v]_2^{-1}$ and $[u]_2^{-1} \cup [v]_2^{-1} \subsetneq V(H) \subseteq [u]^{-1} \cup [v]^{-1}$, we arrive at the desired contradiction.  
\end{proof}
Next we move to Proposition \ref{prop:chained_minor}, the last one in the first part of the proof of the converse direction of the main Theorem \ref{thm:3-sip_characterization}.  
For any graph $G$, subset $U \subseteq V(G)$, and vertices $s,t \in V(G) \setminus U$, $U$ is an \emph{$st$-separator} of $G$ if no connected component of $G \setminus U$ contains both $s$ and $t$, and additionally \emph{minimal} if no proper subset of $U$ is an $st$-separator of $G$.  
Let $f_{\star} = uv$.  
A pair of vertices $xy$ of $G_{\star}$ is a \emph{pin} of $[G_{\star} \cup f_{\star}]$ if $\{x,y\}$ is a minimal $uv$-separator of $G_{\star} \setminus [f_{\star}]^{-1}$ such that either $x \in [a_3]^{-1}$ and $y \in [a_4]^{-1}$ or $x \in [a_5]^{-1}$ and $y \in [a_6]^{-1}$.  
Note that the paths in $G_{\star} \setminus [f_{\star}]^{-1}$ given by the existence of $[G_{\star} \cup f_{\star}]$ show that any minimal $uv$-separator of $G_{\star} \setminus [f_{\star}]^{-1}$ that is contained in $V(G_{\star}) \setminus \{[u]^{-1},[v]^{-1}\}$ and has size at most two is a pin of $[G_{\star} \cup f_{\star}]$.  
Proposition \ref{lem:winged-graph_cut_vertices}, stated below and used to prove Proposition \ref{prop:chained_minor}, 
shows that $[G_{\star} \cup f_{\star}]$ has pins $xy$ and $wz$ with $x \in [a_3]^{-1}$, $y \in [a_4]^{-1}$, $w \in [a_5]^{-1}$, and $z \in [a_6]^{-1}$.  
See Figure \ref{fig:converse_3}.  
Proposition \ref{prop:chained_minor} then allows us to assume that each pin of $[G_{\star} \cup f_{\star}]$ is an edge of $G_{\star} \cup f_{\star}$, in which case we call $[G_{\star} \cup f_{\star}]$ \emph{chained}.  
See Figure \ref{fig:converse_4}.  

Proposition \ref{lem:winged-graph_cut_vertices} is proved using Lemma \ref{lem:K5_2-separator_case_1}, below.  
The proof of this lemma involves inspecting many simple cases, and so it is given in \ref{sec:proof_winged-graph_cut_vertices}.  
%
\begin{lemma}[Inductive structure of winged graph minors]
\label{lem:K5_2-separator_case_1}
Let $(G,f)$ be a winged graph pair, $[G \cup f]$ be an $f$-winged graph minor whose vertex set is $\{[c_i]\}$, and $X \subseteq [c_3]^{-1}$ and $Y \subseteq [c_4]^{-1}$ be the subsets of all vertices with neighbors in $[c_1]^{-1}$.  
Also, let $A \subseteq [c_3]^{-1}$ and $B \subseteq [c_4]^{-1}$ be the subsets of all vertices with neighbors in $[c_5]^{-1} \cup [c_6]^{-1}$ if $[G \cup f]$ is the graph in Figure \ref{fig:k5_wing}, otherwise; let $A \subseteq [c_3]^{-1}$ and $B \subseteq [c_4]^{-1}$ be the subsets of all vertices with neighbors in $[c_5]^{-1} \cup [c_7]^{-1}$ and $[c_6]^{-1} \cup [c_7]^{-1}$ if $[G \cup f]$ is the graph in Figure \ref{fig:k222_wing}.  
If some set in $\{X,Y\}$ and some set in $\{A,B\}$ both have size at least two, then there exists an $f$-winged graph minor $[G \cup f]_1$ whose vertex set $\{[d_i]_1\}$ is such that $[d_3]_1^{-1} \cup [d_4]_1^{-1}$ is a proper subset of $[c_3]^{-1} \cup [c_4]^{-1}$ and both $[f]_1^{-1}$ and $[f]^{-1}$ contain the same set of edges of $G$.  
\end{lemma}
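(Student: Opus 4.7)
The plan is to produce $[G\cup f]_1$ by a single vertex exchange that moves a vertex out of $[c_3]^{-1}\cup[c_4]^{-1}$, thereby strictly shrinking this union while preserving the winged graph structure and the edge set of $[f]^{-1}$. Using the symmetry that swaps $[c_3]\leftrightarrow[c_4]$ (which interchanges $X\leftrightarrow Y$ and $A\leftrightarrow B$), and treating the two sub-cases $|A|\ge 2$ and $|B|\ge 2$ analogously, I may assume without loss of generality that $|X|\ge 2$ and $|A|\ge 2$.

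Let $H$ be the subgraph of $G$ induced by $[c_3]^{-1}$, which is connected because $[c_3]$ is a minor vertex. I would pick $x\in X$ and a connected component $J$ of $H\setminus x$, then perform the vertex exchange from $[c_3]^{-1}$ to $[c_1]^{-1}$ that exchanges $x$ and fixes $J$. Two properties are then automatic: removing $x$ yields $[d_3]_1^{-1}\cup[d_4]_1^{-1}\subsetneq[c_3]^{-1}\cup[c_4]^{-1}$; and the edge $[d_1]_1[d_3]_1$ persists, since connectedness of $H$ implies that $J$ contains a neighbor of $x$, and $x$ has been absorbed into $[d_1]_1^{-1}$.

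The substantive work is to select $(x,J)$ so that two further conditions hold: (i) $V(J)$ still realizes every minor-adjacency of $[c_3]$ other than to $[c_1]$---a neighbor in $[c_4]^{-1}$, and a neighbor on the far side ($[c_5]^{-1}\cup[c_6]^{-1}$ for Figure \ref{fig:k5_wing}, or the respective $K_{2,2,2}$ sets for Figure \ref{fig:k222_wing}); and (ii) no vertex of $(V(H)\setminus V(J))\cup\{x\}$ has an edge in $G$ to $[c_2]^{-1}$, so that the absorbed vertices do not contribute new edges between $[d_1]_1^{-1}$ and $[d_2]_1^{-1}=[c_2]^{-1}$, ensuring $[f]_1^{-1}$ and $[f]^{-1}$ contain the same edges of $G$. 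If some $x\in X$ is not a cut vertex of $H$ and has no neighbor in $[c_2]^{-1}$, then $V(J)=V(H)\setminus\{x\}$, and the hypotheses $|X|\ge 2$, $|A|\ge 2$ together with the minor edge $[c_3][c_4]$ automatically supply the witnesses required by (i), while (ii) holds by the choice of $x$. When every vertex of $X$ is a cut vertex of $H$ or has a neighbor in $[c_2]^{-1}$, I would case-analyze the cut-vertex structure of $H$ and the distribution of $A$, $X$, and $[c_2]$-neighbors across the components of $H\setminus x$; when no direct vertex exchange toward $[c_1]^{-1}$ works, the symmetric exchange toward the far side using a vertex of $A$ becomes available, enabled by $|A|\ge 2$, and at least one such exchange must yield a valid minor.

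Finally, one must verify that the resulting rooted minor $[G\cup f]_1$ is genuinely of winged form. Its underlying graph still admits an edge whose contraction produces a forbidden minor, so it appears in Figures \ref{fig:k5_k222_e_contracted}, \ref{fig:k5_k222_e_contracted_degree_1}, or \ref{fig:k5_not_wing}. In the latter two figures, Lemmas \ref{lem:G_not_k5_k222_u4v0} and \ref{lem:k5_k222_not_wing_fm_e_not_contracted} would produce an $f$-preserving forbidden minor of $G$, contradicting that $(G,f)$ is a winged graph pair; hence $[G\cup f]_1$ is winged. The hard part of the proof will be conditions (i) and (ii) above: the cut-vertex structure of $H$ can distribute the needed adjacencies across many small components of $H\setminus x$, and one must check that some exchange direction together with some $(x,J)$ always respects both constraints simultaneously. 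This is presumably why the authors defer the full proof, with its many small cases, to the appendix.
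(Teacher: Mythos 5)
Your overall strategy---strictly shrink $[c_3]^{-1}\cup[c_4]^{-1}$ via a vertex or component exchange, then invoke Lemmas \ref{lem:G_not_k5_k222_u4v0} and \ref{lem:k5_k222_not_wing_fm_e_not_contracted} to rule out non-winged outcomes---matches the paper's, including the same two auxiliary lemmas. The gap is in your opening reduction. The automorphism swapping $[c_3]\leftrightarrow[c_4]$ sends $(X,Y,A,B)$ to $(Y,X,B,A)$, so it carries ``$|X|\ge 2$ and $|B|\ge 2$'' to ``$|Y|\ge 2$ and $|A|\ge 2$,'' \emph{not} to ``$|X|\ge 2$ and $|A|\ge 2$.'' The configuration $|X|\ge 2$, $|B|\ge 2$, $|A|=|Y|=1$ satisfies the lemma's hypothesis, is not swap-equivalent to your chosen case, and defeats your stated fallback (``the symmetric exchange\ldots enabled by $|A|\ge 2$'') precisely because $|A|=1$ there. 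This asymmetry between the near-side sets $X,Y$ and the far-side sets $A,B$ is exactly why the paper uses the three-way split---some set in $\{X,Y\}$ has size one; some set in $\{A,B\}$ has size one; all four have size at least two---and, in the one case it writes out ($|Y|=1$, $|X|\ge 2$), it picks an arbitrary $a\in A$ (needing only $A\ne\emptyset$) and studies components of $H\setminus a$.

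Two further points. Your condition (ii) is vacuous, and noticing this would simplify the plan: if some vertex of $[c_3]^{-1}$ had a neighbor in $[c_2]^{-1}$, the induced minor of $G\cup f$ that does not delete that edge contains the $K_5$ (resp.\ $K_{2,2,2}$) wing plus the extra edge $w_2w_3$ as a spanning subgraph, and that supergraph already has a $w_1w_2$-preserving forbidden minor (for the $K_5$ wing, branch sets $\{w_1,w_4\},\{w_2\},\{w_3\},\{w_5\},\{w_6\}$; for the $K_{2,2,2}$ wing, branch sets $\{w_1,w_4\},\{w_2\},\{w_3\},\{w_5\},\{w_6\},\{w_7\}$ with non-adjacent pairs $(\{w_1,w_4\},\{w_5\})$, $(\{w_2\},\{w_7\})$, $(\{w_3\},\{w_6\})$). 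That contradicts $(G,f)$ being a winged graph pair, so no such edge exists, and your ``cut vertex or has a neighbor in $[c_2]^{-1}$'' dichotomy collapses. Also, your condition (i) is underspecified: in the $K_5$ case $V(J)$ must retain witnesses for \emph{both} $[c_5]$ and $[c_6]$, and in the $K_{2,2,2}$ case for \emph{both} $[c_5]$ and $[c_7]$, not merely ``a neighbor on the far side.'' Finally, the paper's choice to delete the far-side vertex $a$ first, rather than the near-side $x$, pre-arranges the surviving component $H_1$ to contain a vertex of $X$ (using $|X|\ge 2$), and its sub-case split is then on $H_1$'s remaining adjacencies; your role inversion inherits a heavier case analysis, which you acknowledge you have not carried out.
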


\begin{proposition}[Pins of a winged graph minor]
\label{lem:winged-graph_cut_vertices}
For any winged graph pair $(G,f)$, any $f$-winged graph minor $[G \cup f]$, whose vertex set is $\{[a_i]\}$, has pins $xy$ and $wz$ with $x \in [a_3]^{-1}$, $y \in [a_4]^{-1}$, $w \in [a_5]^{-1}$, and $z \in [a_6]^{-1}$.  
\end{proposition}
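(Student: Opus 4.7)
The plan is to prove that any $f$-winged graph minor $[G \cup f]$ with vertex set $\{[a_i]\}$ (and $f = uv$) admits a pin $xy$ with $x \in [a_3]^{-1}$, $y \in [a_4]^{-1}$; the pin $wz$ with $w \in [a_5]^{-1}$, $z \in [a_6]^{-1}$ will follow from the analogous argument applied to the symmetric side of the winged graph (exchanging $\{[a_1],[a_3],[a_4]\}$ with $\{[a_2],[a_5],[a_6]\}$, with appropriate bookkeeping of the auxiliary vertex $[a_7]$ in the $K_{2,2,2}$-wing of Figure \ref{fig:k222_wing}). The structural observation at the base of the argument is that $[a_1]$'s only neighbors in $[G \cup f]$ are $[a_2], [a_3], [a_4]$, so in $G_0 := G \setminus [f]^{-1}$ every edge leaving $[a_1]^{-1}$ terminates in $[a_3]^{-1} \cup [a_4]^{-1}$; hence this union is automatically a $uv$-separator of $G_0$, and the task reduces to finding a minimal such separator of size exactly two that splits across $[a_3]^{-1}$ and $[a_4]^{-1}$.

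To handle the universal quantifier over all winged minors, I would start from the given $[G \cup f]$ and apply Lemma \ref{lem:K5_2-separator_case_1} iteratively, producing a strictly descending finite chain of $f$-winged graph minors along which the measure $|[a_3]^{-1} \cup [a_4]^{-1}|$ decreases and, by the lemma's conclusion, $[f]^{-1}$ is preserved as a set of $G$-edges at every step. The chain terminates at a minor $[G \cup f]^*$ with vertex set $\{[a_i^*]\}$ on which the lemma's hypothesis fails; since the associated sets $X^*, Y^*, A^*, B^*$ are each nonempty (each collects $G$-endpoints of edges lifting $[a_1^*][a_3^*]$, $[a_1^*][a_4^*]$, and the corresponding crossing edges to the opposite wing), failure forces either $|X^*| = |Y^*| = 1$ or $|A^*| = |B^*| = 1$. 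In the first sub-case, writing $X^* = \{x^*\}$ and $Y^* = \{y^*\}$, every $uv$-path in $G_0$ must leave $[a_1^*]^{-1}$ through $X^* \cup Y^*$, so $\{x^*, y^*\}$ is a $uv$-separator; minimality follows because removing only $x^*$ still leaves the edge from $[a_1^*]^{-1}$ into $y^*$ followed by a lifted path through $[a_4^*]^{-1}$ and $[a_5^*]^{-1}$ or $[a_6^*]^{-1}$ into $[a_2^*]^{-1}$, and symmetrically for $y^*$. The sub-case $|A^*| = |B^*| = 1$ produces an analogous minimal pair from the bottleneck where $[a_3^*]^{-1}, [a_4^*]^{-1}$ meet the opposite wing.

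The main obstacle, and the step I expect to require the most care, is transferring $\{x^*, y^*\}$ back to a pin of the \emph{original} $[G \cup f]$ rather than of $[G \cup f]^*$. Preservation of $[f]^{-1}$ along the chain means $G_0$ is the same graph throughout, so $\{x^*, y^*\}$ is already a minimal $uv$-separator of the original $G_0$, and iterating the containment $[a_3^{(k+1)}]^{-1} \cup [a_4^{(k+1)}]^{-1} \subseteq [a_3^{(k)}]^{-1} \cup [a_4^{(k)}]^{-1}$ delivered by Lemma \ref{lem:K5_2-separator_case_1} places $\{x^*, y^*\}$ inside $[a_3]^{-1} \cup [a_4]^{-1}$ of the original minor. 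What still needs ruling out is the possibility that both $x^*$ and $y^*$ land in $[a_3]^{-1}$ (or symmetrically both in $[a_4]^{-1}$) of the original; I would exclude this by exploiting that $[a_3]$ is not a cut vertex of the winged graph structure, so the original minor's adjacencies (the $K_4$ on $\{[a_3],[a_4],[a_5],[a_6]\}$ in the $K_5$-wing of Figure \ref{fig:k5_wing}, and the analogous $[a_7]$-mediated connectivity in the $K_{2,2,2}$-wing of Figure \ref{fig:k222_wing}) lift to a $uv$-walk in $G$ supported entirely outside $[a_3]^{-1}$ and using no edge of $[f]^{-1}$; this walk is then a $uv$-path in $G_0$ disjoint from $\{x^*, y^*\}$, contradicting the separator property. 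The symmetric case of both vertices in $[a_4]^{-1}$ is identical, so exactly one of $x^*, y^*$ lies in $[a_3]^{-1}$ and the other in $[a_4]^{-1}$, yielding the pin required for the arbitrary winged minor.
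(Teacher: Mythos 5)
Your proposal is correct and takes essentially the same route as the paper's proof: the paper establishes the pin $xy$ by induction on $|[a_3]^{-1}\cup[a_4]^{-1}|$, invoking Lemma \ref{lem:K5_2-separator_case_1} exactly as you do, pulling the separator back through the nested wing classes of the fixed graph $G\setminus[f]^{-1}$, and using the paths supplied by the original winged minor to force one endpoint into $[a_3]^{-1}$ and the other into $[a_4]^{-1}$, with $wz$ obtained by the symmetric argument. The differences are only presentational — you unroll the induction into an iterated chain terminating at $|X^*|=|Y^*|=1$ or $|A^*|=|B^*|=1$ and spell out the separator/minimality details that the paper dismisses as immediate — so the two arguments coincide.
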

\begin{proof}
    We prove the existence of $xy$ by induction on the size $n$ of $[a_3]^{-1} \cup [a_4]^{-1}$.  
    The existence of $xy$ is immediate when $n = 2$.  
    Assume $xy$ exists when $n \leq k$ for any $k \geq 2$, and we will prove it when $n=k+1$.  
    Let $X$, $Y$, $A$, and $B$ be sets as defined in Lemma \ref{lem:K5_2-separator_case_1}.  
    If either $|X| = |Y| = 1$ or $|A| = |B| = 1$ is true, then $xy$ clearly exists.  
    Otherwise, by applying Lemma \ref{lem:K5_2-separator_case_1}, we obtain an $f$-winged graph minor $[G \cup f]_1$, whose vertex set is $\{[b_i]_1\}$, such that $[b_3]_1^{-1} \cup [b_4]_1^{-1}$ is a proper subset of $[a_3]^{-1} \cup [a_4]^{-1}$ and both $[f]_1^{-1}$ and $[f]^{-1}$ contain the same set of edges of $G$.  
    Therefore, the inductive hypothesis states that there exists a pin $x'y'$ of $[G \cup f]_1$ with $x' \in [b_3]_1^{-1}$ and $y' \in [b_4]_1^{-1}$.  
    Observe that $x',y' \in [a_3]^{-1} \cup [a_4]^{-1}$ and $\{x',y'\}$ is a $uv$-separator of $G \setminus [f]^{-1}$.  
    Along with the paths in $G \setminus [f]^{-1}$ given by the existence of $[G \cup f]$, this shows wlog that $x' \in [a_3]^{-1}$ and $y' \in [a_4]^{-1}$.  
    Thus, setting $xy = x'y'$ completes the inductive proof.  
    The proof that $wz$ exists is symmetric.  
\end{proof}

The proof of Proposition \ref{prop:chained_minor} also requires Lemmas \ref{prop:hinge_edge} and \ref{lem:pins_have_3-sip}, below.  

\begin{lemma}[No pin-preserving forbidden minor]
\label{prop:hinge_edge}
    For any winged-graph pair $(G,f)$ and any pin $xy$ of an $f$-winged graph minor of $G \cup f$, $G \cup \{f,xy\}$ has no $f$-preserving forbidden minor.  
\end{lemma}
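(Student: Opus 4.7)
The plan is to argue by contradiction: assume $M$ is an $f$-preserving forbidden minor of $G \cup \{f, xy\}$ witnessed by a minor map $\pi$, and derive that $G \cup f$ itself has an $f$-preserving forbidden minor, contradicting the winged-graph pair assumption on $(G,f)$.

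First, I would analyze how the edge $xy$ contributes to $M$ under $\pi$. If $\pi(x) \neq \pi(y)$ and either $\pi(x)\pi(y)$ is not an edge of $M$ or some edge of $G \cup f$ (other than $xy$) already witnesses $\pi(x)\pi(y)$, then the restriction of $\pi$ to $V(G \cup f)$ still witnesses $M$ as an $f$-preserving minor of $G \cup f$, giving the contradiction. This disposes of the case where $xy$ is inessential.

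Otherwise, $xy$ is essential: either $\pi(x) = \pi(y)$ with $xy$ needed for connectivity of the class $\pi^{-1}(\pi(x))$ inside $G \cup \{f, xy\}$, or $\pi(x) \neq \pi(y)$ and $xy$ is the unique edge of $G \cup \{f, xy\}$ witnessing the edge $\pi(x)\pi(y) \in E(M)$. Here I would invoke the pin structure. Let $f = uv$; by definition of pin, $\{x,y\}$ is a minimal $uv$-separator in $G \setminus [f]^{-1}$ with $x \in [a_3]^{-1}$ and $y \in [a_4]^{-1}$. Since $[a_3][a_4]$ is an edge of the $f$-winged graph minor $[G \cup f]$, the sets $[a_3]^{-1}$ and $[a_4]^{-1}$ each induce connected subgraphs of $G$ and there is at least one edge of $G$ between them. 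This provides an alternative path $P$ from $x$ to $y$ in $G \setminus \{xy\}$ lying entirely within $[a_3]^{-1} \cup [a_4]^{-1}$, and I would use $P$ to replace the role of $xy$ by transferring vertices along $P$ into the appropriate $\pi$-classes to obtain a modified minor map $\pi'$ of $G \cup f$.

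The main obstacle is to carry out this modification so that $\pi'$ still witnesses a forbidden minor: every $\pi'$-class must remain connected in $G \cup f$, every edge of $M$ must still be witnessed, and $u$ and $v$ must remain in distinct classes so $f$ is preserved. I expect the argument here to split into cases based on how the $\pi$-classes intersect $[a_3]^{-1}$ and $[a_4]^{-1}$, and to rely crucially on the minimality of the $uv$-separator $\{x, y\}$, which constrains how vertices of $[a_3]^{-1} \cup [a_4]^{-1}$ can be adjacent to $[u]^{-1}$ and $[v]^{-1}$ and thus prevents the substitution from creating unwanted adjacencies or collapsing $[u]$ and $[v]$.
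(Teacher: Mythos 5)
Your approach — reroute the minor map so it avoids the edge $xy$, replacing $xy$ by a path $P$ through $[a_3]^{-1}\cup[a_4]^{-1}$ — is genuinely different from the paper's. The paper does not reroute at all. It invokes Lemma \ref{lem:wing-tips} to assume singleton wing tips $s,t$, uses Proposition \ref{lem:winged-graph_cut_vertices} to get a second pin so that $C=\{x,y,t\}$ is a size-$3$ separator of $G\cup\{f,xy\}$, then exploits the fact that every forbidden minor is $4$-connected on $\geq 5$ vertices: only one $C$-component $H$ can carry a vertex whose image lies outside $[C]$. A two-case analysis (on whether $H$ contains $f$) then forces either that $G\cup f$ already has an $f$-preserving forbidden minor, or that $[f]_4=[xt]_4$, $[s]_4=[x]_4$, and $xy$ is doubled — again yielding an $f$-preserving forbidden minor of $G\cup f$. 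Nothing in the paper's argument needs to reconnect a $\pi$-class.

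There is a genuine gap in your plan at the ``transfer vertices along $P$'' step. When $xy$ is essential to $\pi$ — either $\pi(x)=\pi(y)$ and $xy$ is the unique connecting edge of the class, or $\pi(x)\neq\pi(y)$ and $xy$ is the only witness of $\pi(x)\pi(y)\in E(M)$ — the vertices of $P$ may lie in several other $\pi$-classes, and moving them into $\pi^{-1}(\pi(x))$ (or distributing them between $\pi^{-1}(\pi(x))$ and $\pi^{-1}(\pi(y))$) can disconnect those other classes or delete edges of $M$ they were witnessing. You appeal to minimality of $\{x,y\}$ as a $uv$-separator of $G\setminus([u]^{-1}\cup[v]^{-1})$ to control this, but that minimality constrains adjacencies to $u$ and $v$ only; it says nothing about how $P$ threads through $\pi$-classes corresponding to the other vertices of $M$, and it does not by itself rule out that the reroute merges $\pi(u)$ with $\pi(v)$ through an intermediate class you have just emptied or disconnected. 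The paper avoids exactly this bookkeeping by replacing the local surgery with the global separator/$4$-connectivity argument; in particular, the subcase where the large $C$-component does not contain $f$ is resolved by showing $xy$ is \emph{doubled}, a phenomenon a naive reroute would not produce. To salvage your route you would need to carry out the exchange argument (in the style of the paper's vertex and component exchanges) with the wing-tip structure in hand, which is essentially reconstructing the ingredients you have not yet invoked.
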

\begin{proof}
    The lemma is immediate if $xy$ is an edge of $G \cup f$.  
    Otherwise, assume to the contrary that $f$ is preserved in some forbidden minor $[G \cup \{f,xy\}]$.  
    Let $[G \cup f]_1$ be an $f$-winged graph minor such that $xy$ is one of its pins.  
    By Lemma \ref{lem:wing-tips}, we can assume wlog that $[G \cup f]_1$ has wing tips $s$ and $t$.  
    Since $[G \cup f]_1$ has at least two pins, by Proposition \ref{lem:winged-graph_cut_vertices}, $C = \{x,y,t\}$ is a separator of $G \cup \{f,xy\}$.  
    Furthermore, since every forbidden minor is $4$-connected and has at least five vertices, there is a unique $C$-component $H$ of $G \cup \{f,xy\}$ that contains some vertex $w$ such that $[w]$ is not contained in $[C]$.  
    If $H$ contains $f$, then the paths in $G \cup f$ given by the existence of $[G \cup f]_1$ imply the existence of some minor $[G \cup f]_2$ in which $H$ is preserved and $[C]_2$ is a clique.  
    Combining these facts shows that $[G \cup f]_2$ has an $[f]_2$-preserving forbidden minor, and thus $G \cup f$ has an $f$-preserving forbidden minor, which is a contradiction.  

    Next, if $H$ does not contain $f$, then the existence of $[G \cup \{f,xy\}]$ implies the existence of a minor $[G \cup \{f,xy\}]_4$ that preserves $H$, has an $[f]_4$-preserving forbidden minor, and such that, for every vertex $w \in V(G) \setminus V(H)$, $[w]_4$ is contained in $[C]_4$.  
    Hence, both endpoints of $[f]_4$ are contained $[C]_4$.  
    Since $s$ and $t$ are wing tips of $[G \cup f]_1$, it cannot be that $[f]_4 = [xy]_4$.  
    Hence, assume wlog that $[f]_4 = [xt]_4$.  
    Then, we must have $[s]_4 = [x]_4$.  
    Since $C$ is a separator of $G \cup f$, $[y]_1^{-1} \setminus V(H)$ must be a subset of $[x]_4^{-1} \cup [y]_4^{-1}$.  
    Therefore, it is easy to see that $xy$ is doubled in $[G \cup \{f,xy\}]_4$, and so $G \cup f$ clearly has an $f$-preserving forbidden minor, which is a contradiction.  
    Thus, $G \cup \{f,xy\}$ has no $f$-preserving forbidden minor.  
\end{proof}
\begin{lemma}[Graph-pin pairs have the $3$-SIP]
    \label{lem:pins_have_3-sip}
    Consider any winged graph pair $(G,f)$ and any pin $xy$ of an $f$-winged graph minor $[G \cup f]$.  
    If the converse direction of Theorem \ref{thm:3-sip_characterization} is true for any graph-nonedge pair whose graph has strictly fewer vertices than $G$, then $(G \setminus \{[f]^{-1},xy\}, xy)$ has the $3$-SIP.  
\end{lemma}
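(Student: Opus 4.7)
The plan is to apply the inductive hypothesis via Corollary~\ref{cor:atom_gluing}, reducing the claim to a statement about individual atoms. Let $G' := G \setminus \{[f]^{-1}, xy\}$; the goal is to show $(G', xy)$ has the $3$-SIP.

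First I would observe that $\{x,y\}$ is a clique separator of $G' \cup xy$. By the definition of a pin, $\{x,y\}$ is a minimal $uv$-separator of $G \setminus [f]^{-1}$, and since the presence or absence of the edge $xy$ does not affect which vertex sets are separators, $\{x,y\}$ remains a $uv$-separator of $G' \cup xy$; together with the edge $xy$ this is a $2$-clique separator. Consequently, any atom $A$ of $G' \cup xy$ that contains the edge $xy$ lies in a single $\{x,y\}$-component, hence omits at least one of $u, v$, and in particular $|V(A)| < |V(G)|$. By Corollary~\ref{cor:atom_gluing} applied to $(G', xy)$, it suffices to show $(A \setminus xy, xy)$ has the $3$-SIP for every such $A$; by the inductive hypothesis applied to this pair (since $|V(A)| < |V(G)|$), it further suffices to show that each such $A$ has no $xy$-preserving forbidden minor.

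I would argue this by contradiction. Suppose some such atom $A$ has an $xy$-preserving forbidden minor $M$. Without loss of generality $A$ lies in the $\{x,y\}$-component of $G' \cup xy$ containing $u$, so $v \notin V(A)$. The goal is then to lift $M$ to an $f$-preserving forbidden minor of $G \cup \{f, xy\}$, contradicting Lemma~\ref{prop:hinge_edge}. Using the winged graph minor $[G \cup f]$, I would construct a path in $G$ from $y \in [a_4]^{-1}$ through one of the wing sets $[a_5]^{-1}$ or $[a_6]^{-1}$ into $[v]^{-1}$, ending at $v$; the internal vertices of this path lie in the opposite $\{x,y\}$-component (the one containing $v$), hence are disjoint from $V(A) \setminus \{y\}$. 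Enlarging the blob $W_2$ of $M$ containing $y$ by prepending this path absorbs $v$ into $W_2$; provided $u$ sits in a blob of $M$ distinct from $W_1$ (containing $x$) and $W_2$ and adjacent in $M$ to $W_2$, the resulting map is a minor of $G \cup \{f, xy\}$ whose underlying graph is still $K_5$ or $K_{2,2,2}$, now with $f = uv$ preserved.

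The main obstacle is verifying that this lift is valid. One must ensure that $u$ does not lie in $W_1$ or $W_2$ and that it is in a blob of $M$ adjacent to $W_2$; this may require a preliminary modification of $M$ (for instance using the vertex and component exchange operations introduced earlier in this section) exploiting the $2$-connectedness of the atom $A$ and the local structure near the pin. One must also confirm that no spurious adjacencies between blobs are introduced by the path into $[v]^{-1}$, which splits into a short case analysis distinguishing the $K_5$-winged and $K_{2,2,2}$-winged sub-cases.
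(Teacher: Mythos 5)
Your proof takes the same route as the paper's: reduce via a clique-separator gluing lemma to components of size strictly less than $|V(G)|$, apply the inductive hypothesis (the converse of Theorem~\ref{thm:3-sip_characterization}) to conclude that a failure of $3$-SIP would produce an $xy$-preserving forbidden minor of the component, and then lift that minor through the winged-graph structure to obtain a contradiction. The paper decomposes by $\{x,y\}$-components via Lemma~\ref{lem:gluing} rather than by atoms via Corollary~\ref{cor:atom_gluing}, and contradicts the defining property of a winged graph pair (no $f$-preserving forbidden minor of $G\cup f$) rather than Lemma~\ref{prop:hinge_edge}; both choices are valid, and your variant buys nothing in particular but is not wrong.

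There is one conceptual slip worth fixing. Since $u,v \in [f]^{-1}$, \emph{neither} endpoint of $f$ lies in $V(G')$, hence neither lies in any atom $A$ of $G' \cup xy$. So ``$\{x,y\}$ remains a $uv$-separator of $G' \cup xy$'' is not literally meaningful (there are no $u,v$ to separate), ``$A$ lies in the $\{x,y\}$-component containing $u$'' is empty, and — most importantly — the condition ``provided $u$ sits in a blob of $M$ distinct from $W_1$ and $W_2$'' is moot: $u \notin V(A)$, so $u$ is not in the domain of the minor map of $M$ at all. The lift must instead \emph{assign} $u$ to a blob, using the adjacency of $[a_1]^{-1}$ (containing $u$) to $[a_3]^{-1}$ and $[a_4]^{-1}$ (containing $x$ and $y$) provided by the winged structure, while simultaneously absorbing $v$ into a different blob via the path through $[a_2]^{-1}$. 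This is the same kind of lift the paper gestures at (``the paths in $G\setminus [f]^{-1}$ given by the existence of $[G\cup f]$''), so the idea is sound, but your description of it is currently built on the incorrect premise that $u$ may already be in $A$.
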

\begin{proof}
    Note that $\{x,y\}$ is a clique separator of $(G \setminus [f]^{-1}) \cup xy$.  
    Hence, by Lemma \ref{lem:gluing}, it suffices to show that show that $(H \setminus xy, xy)$ has the $3$-SIP for each $\{x,y\}$-component $H$ of $(G \setminus [f]^{-1}) \cup xy$ that contains $xy$.  
    If $(H \setminus xy, xy)$ does not have the $3$-SIP, then, since $|V(H)| < |V(G)|$, the contrapositive of the converse direction of Theorem \ref{thm:3-sip_characterization} shows that $H$ has an $xy$-preserving forbidden minor.  
    However, this fact along with the paths in $G \setminus [f]^{-1}$ given by the existence of $[G \cup f]$ imply that $G \cup f$ has an $f$-preserving forbidden minor, which is a contradiction.  
    Thus, $(H \setminus xy, xy)$ has the $3$-SIP, and the proof is complete.  
\end{proof}
Now we are ready to state and prove Proposition \ref{prop:chained_minor}.

\begin{proposition}[Chained winged-graph minors]
    \label{prop:chained_minor}
    For any winged graph pair $(G,f)$, if the converse direction of Theorem \ref{thm:3-sip_characterization} is true for any graph-nonedge pair whose graph has strictly fewer vertices than $G$, then there exists a winged graph pair $(G',f)$ with a chained $f$-winged graph minor $[G' \cup f]$ such that $G$ is a spanning subgraph of $G'$ and $(G,f)$ has the $3$-SIP if $(G' \setminus [f]^{-1},[f]^{-1})$ is $3$-Cayley-connected.  
\end{proposition}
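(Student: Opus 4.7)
The plan is to construct $G'$ by iteratively adding pins of $f$-winged graph minors as edges until every pin is an edge, and then transfer $3$-Cayley-connectedness from $G'$ back to $G$ via repeated applications of Lemma \ref{lem:tool}. Under the inductive hypothesis on vertex count, Proposition \ref{lem:e_not_st} lets me assume $G \cup f$ has an $f$-winged graph minor with singleton wing tips, so that $[f]^{-1} = \{f\}$ (otherwise $(G,f)$ already has the $3$-SIP and the conclusion of the proposition holds trivially for any $G' \supseteq G$ produced by the construction).

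For the construction, set $G_0 = G$; while some $f$-winged graph minor of $G_i \cup f$ has a pin $p_{i+1}$ that is a nonedge of $G_i$, set $G_{i+1} = G_i \cup p_{i+1}$; otherwise halt and let $G' = G_n$. This terminates after at most $\binom{|V(G)|}{2}$ steps. At each stage $(G_i,f)$ remains a winged graph pair: Lemma \ref{prop:hinge_edge} implies $G_{i+1} \cup f = G_i \cup \{f,p_{i+1}\}$ has no $f$-preserving forbidden minor, and since $G_{i+1} \cup f$ remains $2$-connected with a forbidden minor inherited from $G \cup f$, Proposition \ref{prop:G_is_winged_graph} guarantees a (possibly new) $f$-winged graph minor of $G_{i+1} \cup f$. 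By construction every pin of $[G' \cup f]$ is an edge, so $[G' \cup f]$ is chained, and $G$ is a spanning subgraph of $G'$.

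For the transfer, assume $(G' \setminus f, \{f\})$ is $3$-Cayley-connected, and let $P = \{p_1,\dots,p_n\} = E(G') \setminus E(G)$. Since pins have endpoints in $[a_3]^{-1} \cup [a_4]^{-1}$ or $[a_5]^{-1} \cup [a_6]^{-1}$, they are disjoint from $[f]^{-1}$, so $G_i \setminus f = (G \setminus f) \cup \{p_1,\dots,p_i\}$. I prove by induction on $i$ that $(G \setminus f, \{p_1,\dots,p_i\})$ is $3$-Cayley-connected; the base case $i = 0$ is trivial. For the inductive step, Lemma \ref{lem:pins_have_3-sip} applied to the winged graph pair $(G_{i-1},f)$ with pin $p_i$ (which is a nonedge of $G_{i-1}$) yields that $((G \setminus f) \cup \{p_1,\dots,p_{i-1}\}, p_i)$ has the $3$-SIP, hence is $3$-Cayley-connected; combining this with the inductive hypothesis via Lemma \ref{lem:tool} (with $F = \{p_1,\dots,p_{i-1}\}$ and $F' = \{p_i\}$) gives $(G \setminus f, \{p_1,\dots,p_i\})$ is $3$-Cayley-connected. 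Taking $i = n$ and applying Lemma \ref{lem:tool} once more with $F = P$, $F' = \{f\}$ (using the hypothesis $((G \setminus f) \cup P, \{f\}) = (G' \setminus f, \{f\})$ is $3$-Cayley-connected) yields that $(G \setminus f, \{f\})$ is $3$-Cayley-connected, i.e., $(G,f)$ has the $3$-SIP.

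The principal obstacle is maintaining the winged-graph-pair structure across the iteration so that Lemma \ref{lem:pins_have_3-sip} keeps applying at every stage: the $f$-winged graph minor used at stage $i$ need not be inherited from stage $i-1$, but Proposition \ref{prop:G_is_winged_graph} supplies a fresh one each time, and Lemma \ref{prop:hinge_edge} controls the no-$f$-preserving-forbidden-minor condition. A clean bookkeeping observation crucial to the induction is that $P$ is disjoint from $[f]^{-1}$, so the set-theoretic identity $G_i \setminus f = (G \setminus f) \cup \{p_1,\dots,p_i\}$ holds exactly, with no edge in $P$ being inadvertently removed or double-counted.
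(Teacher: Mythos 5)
Your construction and transfer follow the same road as the paper's proof: accumulate pins as edges while preserving the winged-graph-pair property via Lemma \ref{prop:hinge_edge}, then pass $3$-Cayley-connectedness down the chain using Lemmas \ref{lem:pins_have_3-sip} and \ref{lem:tool}. The paper packages the iteration as an extremal argument (pick the pair $(G', [G' \cup f])$ maximizing the number of pins that are edges, and improve any non-chained maximizer), but the combinatorial content is the same.

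The genuine gap is in the bookkeeping for $[f]^{-1}$. You reduce at the outset to $[f]^{-1} = \{f\}$ via Proposition \ref{lem:e_not_st}, but that only fixes the wing-tip sets for the \emph{initial} minor of $G_0 \cup f$. Your iteration then selects a pin from ``some $f$-winged graph minor of $G_i \cup f$'' at each stage, so nothing ties the minor used at stage $i+1$ to the one at stage $i$, and nothing prevents the final chained minor $[G' \cup f]$ from having non-singleton wing tips. If that happens, $(G' \setminus [f]^{-1}, [f]^{-1}) \neq (G', \{f\})$, and the conditional you establish (``$(G,f)$ has the $3$-SIP if $(G', \{f\})$ is $3$-Cayley-connected'') is not the one the proposition asserts. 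The same problem reappears inside your induction: Lemma \ref{lem:pins_have_3-sip} applied at stage $i$ yields the $3$-SIP of $(G_{i-1} \setminus \{[f]^{-1}_{(i-1)}, p_i\},\, p_i)$, and identifying that graph with $G_{i-1}$ requires $[f]^{-1}_{(i-1)}$ to contain no edges of $G_{i-1}$ — that is, singleton wing tips at stage $i-1$, not merely at stage $0$. The paper sidesteps the issue by never simplifying $[f]^{-1}$: when the pin $xy$ is adjoined, it explicitly forms the new minor $[G' \cup \{f,xy\}]_1$ with the \emph{same} vertex partition as $[G' \cup f]$, so $[f]^{-1}_1 = [f]^{-1}$ and the invariant ``$(G,f)$ has the $3$-SIP if $(G'\setminus[f]^{-1}, [f]^{-1})$ is $3$-Cayley-connected'' propagates verbatim. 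Your argument can be repaired by always taking the extension of the running minor rather than a fresh one supplied by Proposition \ref{prop:G_is_winged_graph}, so that the partition (and hence $[f]^{-1}$) is frozen for the whole iteration. A secondary issue: Proposition \ref{lem:e_not_st} hypothesizes that $G \cup f$ is $2$-connected, which Proposition \ref{prop:chained_minor} does not assume; your opening invocation is therefore out of scope as written, although harmless in context since the proposition is only ever applied to atoms.
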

\begin{proof}
    Let $(G',f)$ be any $f$-winged graph pair with an $f$-winged graph minor $[G' \cup f]$ such that (i) $G$ is a spanning subgraph of $G'$ and $(G,f)$ has the $3$-SIP if $(G' \setminus [f]^{-1},[f]^{-1})$ is $3$-Cayley-connected and (ii) the number $k$ of pins of $[G' \cup f]$ that are edges of $G' \cup f$ is maximized.  
    Note that $[G' \cup f]$ has at least two pins, by Proposition \ref{lem:winged-graph_cut_vertices}.  
    If $[G' \cup f]$ is chained, then we are done.  
    Otherwise, some pin $xy$ of $[G' \cup f]$ is a nonedge of $G' \cup f$.  
    Using Lemma \ref{prop:hinge_edge}, we see that $(G' \cup xy,f)$ is a winged graph pair.  
    Also, the minor $[G' \cup \{f,xy\}]_1$ such that, for any two vertices $u,v \in V(G')$, $[u]_1 = [v]_1$ if and only if $[u] = [v]$ is an $f$-winged graph minor.  
    Note that $G$ is a spanning subgraph of $G' \cup xy$ and the number of pins of $[G' \cup \{f,xy\}]_1$ that are edges of $G' \cup \{f,xy\}$ is strictly greater than $k$.  
    Furthermore, Lemma \ref{lem:pins_have_3-sip} shows that $(G' \setminus [f]^{-1},xy)$ has the $3$-SIP.  
    Therefore, if $((G' \setminus [f]^{-1}) \cup xy,[f]^{-1})$ is $3$-Cayley connected, then Lemma \ref{lem:tool} shows that $(G' \setminus [f]^{-1},[f]^{-1})$ is $3$-Cayley connected.  
    Consequently, $(G,f)$ has the $3$-SIP if $((G' \setminus [f]^{-1}) \cup xy,[f]^{-1})$ is $3$-Cayley connected.  
    Combining the above facts contradicts the maximality of $(G',f)$.  
    Thus, every pin of $[G' \cup f]$ is an edge of $G' \cup f$, and the proof is complete.  
\end{proof}

Now that a special structure has been established for $G_{\star} \cup f_{\star}$ in the first part of the proof of the converse direction of the main Theorem \ref{thm:3-sip_characterization}, we move to the second part of the proof.  
Despite this special structure, almost any linkage of $G_{\star}$ has a disconnected CS.  
Hence the main work of the second part of the proof is to establish that each connected component of the CS attains the same interval of lengths for $f_{\star}$.  

For any pin $xy$ of the chained $f_{\star}$-winged graph minor $[G_{\star} \cup f_{\star}]$, there exist unique $\{x,y\}$-components $H_u(xy)$ and $H_v(xy)$ of $G_{\star} \setminus [f_{\star}]^{-1}$ that contain $u$ and $v$, respectively.  
Since every pin of $[G_{\star} \cup f_{\star}]$ is an edge of $G_{\star} \cup f_{\star}$, each pin of $[G_{\star} \cup f_{\star}]$ other than $xy$ is contained in either $H_u(xy)$ or $H_v(xy)$, but not both.  
Hence, we define the \emph{ordered sets of pins} $\{x_iy_i\}$ of $[G_{\star} \cup f_{\star}]$ to be the set of all pins of $[G_{\star} \cup f_{\star}]$ indexed such that $H_v(x_iy_i)$ contains $x_{i+1}y_{i+1}$.  
The \emph{links} of $[G_{\star} \cup f_{\star}]$ are the graphs $H_u(x_1y_1)$ and $H_v(x_ny_n)$ and, for each $i < n$, the $\{x_{i+1},y_{i+1}\}$-component of $H_v(x_iy_i)$ that contains $x_iy_i$.  
Proposition \ref{prop:x2y2_3-Cayley-connected}  
shows that $(G_{\star}, \{ux_2,uy_2\})$ is $3$-Cayley-connected.  
See Figure \ref{fig:converse_4}.  
This proposition establishes that although the link of $G_{\star}$ between any consecutive pins $x_{i-1}y_{i-1}$ and $x_iy_i$ could force disconnectedness of the overall CS, the connected components of the CS can be viewed as partial reflections each of which attains the same connected set of lengths for the pair of nonedges $\{ux_i, uy_i\}$.  
The same statement holds for the pair of nonedges $\{vx_i, vy_i\}$ for any $i$, resulting in the same interval of lengths for $f_{\star}$.

Proposition \ref{prop:x2y2_3-Cayley-connected} is proved using Lemmas \ref{lem:links_3-connected} and \ref{lem:ux2_uy2_3-covering}, stated below and proved in \ref{sec:proof_x2y2_3-Cayley-connected}.  
\begin{lemma}[Links are $3$-connected]
    \label{lem:links_3-connected}
    Consider any winged graph pair $(G,f)$, where $G \cup f$ is $2$-connected, and any chained $f$-winged graph minor $[G \cup f]$ with singleton wing tips.  
    If the converse direction of Theorem \ref{thm:3-sip_characterization} is true for all graph-nonedge pairs whose graphs have strictly fewer vertices than $G$, then either $(G,f)$ has the $3$-SIP or each link of $[G \cup f]$ is $3$-connected.  
\end{lemma}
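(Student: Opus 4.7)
The plan is to prove the contrapositive: assuming that some link $L$ of the chained $f$-winged graph minor $[G \cup f]$ fails to be 3-connected, I will show $(G,f)$ has the 3-SIP. Write $f = uv$; since the wing tips are singletons, $[f]^{-1}$ consists only of the nonedge $uv$, so $G \setminus [f]^{-1} = G$. Each link is 2-connected, being a connected piece of $G \cup f$ bounded by 2-element cliques arising from the pin edges in the chained structure, so I may pick a 2-separator $\{a,b\}$ of $L$ (a cut vertex of $L$ is handled analogously). The key first reduction is that $\{a,b\}$ cannot $uv$-separate $G$: if it did, then $\{a,b\} \subseteq V(G) \setminus \{u,v\}$ would be a minimal 2-element $uv$-separator of $G \setminus [f]^{-1}$, hence a pin of $[G \cup f]$ by the observation immediately following the definition of a pin. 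But $L$ is the single link between consecutive pins $P_1, P_2$, so a new pin strictly inside $L$ would contradict consecutiveness. Therefore some connected component $D$ of $L \setminus \{a,b\}$ avoids $\{u,v\} \cup (P_1 \cup P_2) \setminus \{a,b\}$, since all of these sit in a single component of $G \setminus \{a,b\}$; set $D' = G[V(D) \cup \{a,b\}]$, a strict subgraph of $G$ on strictly fewer vertices.

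If $ab \in E(G)$, then $\{a,b\}$ is already a clique separator of $G \cup f$. Letting $G'$ be the unique $\{a,b\}$-component of $G \cup f$ containing $f$, we have $|V(G')| < |V(G)|$, and any $f$-preserving forbidden minor of $G' \cup f$ would lift to one of $G \cup f$ by merging $V(D) \setminus \{a,b\}$ into the class of $a$ (valid by connectedness of $D$), contradicting the hypothesis that $G \cup f$ has no $f$-preserving forbidden minor. The inductive hypothesis thus gives $(G' \setminus f, f)$ the 3-SIP, and the clique-gluing Lemma \ref{lem:gluing} delivers the 3-SIP of $(G,f)$.

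If $ab \notin E(G)$, I adapt the two-step augment-and-reduce strategy from the proof of Proposition \ref{lem:e_not_st}. First I show $(G \cup ab, f)$ has the 3-SIP: in $(G \cup ab) \cup f$, $\{a,b\}$ is a clique separator, and the component containing $f$ is $G' \cup ab := (G \setminus V(D)) \cup ab$, which has fewer vertices than $G$; an $f$-preserving forbidden minor of $G' \cup \{ab, f\}$ lifts to one of $G \cup f$ by replacing the edge $ab$ with an $a$--$b$ path through $D$ (splitting $V(D)$ between the $a$-class and the $b$-class to supply the needed crossing edge), so induction and clique-gluing yield this sub-claim. Next I show $(G, ab)$ has the 3-SIP by clique-gluing in $G \cup ab$: for each $\{a,b\}$-component $H_i$, induction applies provided $H_i$ has no $ab$-preserving forbidden minor; for the non-pocket components this follows from the path-through-$D$ substitution as above, while for $D' \cup ab$ itself one must extend an $ab$-preserving forbidden minor of $D' \cup ab$ to an $f$-preserving forbidden minor of $G \cup f$ by partitioning $V(G) \setminus V(D)$ into two connected pieces, each meeting $\{a,b\}$, with $u$ and $v$ on opposite sides; the chained structure of $[G \cup f]$ supplies a $u$--$v$ path through the remaining links and pins in $G \setminus V(D)$, which can be severed at a suitable edge to effect the partition. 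Finally, applying Lemma \ref{lem:tool} to $(G, ab)$ and $(G \cup ab, f)$ yields the 3-SIP of $(G,f)$, completing the contradiction. The main obstacle is precisely this last extension step: ensuring both connectedness of the two extended classes and the existence of a crossing edge while placing $u$ and $v$ on opposite sides, which is where the full chained structure (not merely the existence of a single pin pair) is used.
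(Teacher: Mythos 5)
Your proposal takes a genuinely different route from the paper (the paper leans on Lemma~\ref{lem:pin_pairs_contracted}, Proposition~\ref{prop:G_is_winged_graph}, Lemma~\ref{lem:G_not_k5_k222_u4v0}, and a maximality choice of the separator, rather than direct minor-lifting), but it has a real gap that is not cosmetic.

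The problem is the non-pocket component. After adding the clique $ab$ and gluing, you must show $(G',ab)$ has the $3$-SIP, where $G'=G\setminus V(D)$ contains $f$, all pins, and everything outside $L$. For the inductive hypothesis to give this, you need $G'\cup ab$ to have no $ab$-preserving forbidden minor. You assert this ``follows from the path-through-$D$ substitution as above,'' but the substitution only lifts an $ab$-preserving forbidden minor of $G'\cup ab$ to an $ab$-preserving forbidden minor of $G$. That contradicts nothing: the winged-graph-pair hypothesis rules out \emph{$f$-preserving} forbidden minors of $G\cup f$, and an $ab$-preserving minor need not place $u$ and $v$ in different classes. Indeed, $G'\cup ab$ can genuinely have such a minor (e.g.\ one in which $u$ and $v$ are merged), so you cannot simply declare the case closed. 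This is exactly the hard case in the paper's proof, where it invokes Lemma~\ref{lem:pin_pairs_contracted} applied to the pin $wz$ and, when that fails, the full machinery of the winged-graph structure plus a maximality choice of $\{x,y\}$ to derive a contradiction. Your argument has no substitute for that step.

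A secondary, smaller issue: your extension of an $ab$-preserving forbidden minor of the pocket $D'\cup ab$ to an $f$-preserving forbidden minor of $G\cup f$ needs a bipartition of $V(G)\setminus V(D)$ into two connected pieces separating $u$ from $v$ and $a$ from $b$. ``Severing a $u$--$v$ path at a suitable edge'' does not by itself produce two connected parts each meeting $\{a,b\}$ with the right vertex assignment; you would need an argument such as the $2$-connectivity of the $\{a,b\}$-component $G'\cup ab$ (e.g.\ via an $st$-ordering from $a$ to $b$). You flag this as the main obstacle, but the bigger obstacle is actually the non-pocket component above, which your outline does not resolve.
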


Recall the $d$-covering map property defined in Section \ref{sec:results}.  

\begin{lemma}[$(G,ux_2)$ and $(G,uy_2)$ both have the $3$-covering map property]
    \label{lem:ux2_uy2_3-covering}
    Let $(G,f)$ be a winged graph pair, with $f=uv$, that has a chained $f$-winged graph minor with singleton wing tips and whose ordered set of pins is $\{x_i,y_i\}$.  
    Also, consider the links $L_0$, containing $u$ and $x_1y_1$, and $L_1$, containing $x_1y_1$ and $x_2y_2$, of this minor.  
    If $L_0$ and $L_1$ are $3$-connected, then $(G,ux_2)$ and $(G,uy_2)$ have the $3$-covering map property.  
\end{lemma}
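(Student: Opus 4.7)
\medskip\noindent
\textbf{Proof plan.} The plan is to reduce the assertion to an application of Theorem \ref{prop:partial_3-tree_3-reflection} on the subgraph $G_u := L_0 \cup L_1$. Because the chained $f$-winged graph minor has singleton wing tips, the pin $x_2 y_2$ is an edge of $G$ and $\{x_2,y_2\}$ is a clique separator of $G$ separating $u$ from $v$. Only the edge-length $\ell(x_2y_2)$ couples the two sides, so any continuous deformation of the $G_u$-part of a realization of $G$ can be lifted to a continuous deformation of the whole realization by tracking the $G_v$-part under the rigid motion aligning the moving positions of $x_2,y_2$ to a reference frame. Consequently, the $3$-covering map property of $(G_u, \{ux_2, uy_2\})$ propagates to $(G, ux_2)$ and $(G, uy_2)$, because projecting $\phi_{\{ux_2,uy_2\}}$ onto a single coordinate preserves surjectivity of each connected component onto the Cayley configuration space.

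Next I will show that $G_u \cup \{ux_2, uy_2\}$ is a partial $3$-tree. The hypothesis that $L_0$ and $L_1$ are $3$-connected, together with the absence of an $f$-preserving forbidden minor in $G \cup f$, forces each link to be a $3$-connected partial $3$-tree: otherwise, a $K_5$ or $K_{2,2,2}$ minor of $L_i$ could be stitched to paths supplied by the remaining links and pin edges to exhibit an $f$-preserving forbidden minor of $G \cup f$, contradicting the winged graph pair hypothesis. Hence $G_u$, obtained as the clique-sum of $L_0$ and $L_1$ along the edge $x_1y_1$, is itself a partial $3$-tree. Lemma \ref{lem:pin_pairs_partial_3-tree} then yields that each individual augmentation $G_u \cup ux_2$ and $G_u \cup uy_2$ remains a partial $3$-tree, and Theorem \ref{prop:3-connected_partial_3-tree_star_lemma} (the star theorem), applied with center $u$ in the ambient $3$-connected partial $3$-tree provided by this augmentation, then packages both nonedges incident to $u$ into a single partial $3$-tree.

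Having verified $G_u \cup \{ux_2, uy_2\}$ is a partial $3$-tree, I will apply Theorem \ref{prop:partial_3-tree_3-reflection} to it with nonedge set $F = \{ux_2, uy_2\}$ to obtain the $3$-covering map property of $(G_u, F)$. By the lifting step this transfers to $(G, \{ux_2, uy_2\})$, and projecting $\phi_F$ onto each coordinate separately yields the $3$-covering map property of $(G, ux_2)$ and $(G, uy_2)$ individually, completing the proof.

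The main obstacle is the partial-$3$-tree verification in the middle step: a generic width-$3$ tree decomposition of $G_u$ need not place $u$ together with $x_2$ (or $y_2$) in a single bag, because $u$ and $\{x_2,y_2\}$ are separated in $G_u$ by the $2$-clique $\{x_1,y_1\}$. Overcoming this requires exploiting the $3$-connectedness of $L_1$ to route a bag of a suitable tree decomposition across the pin-separator; this routing is precisely what Lemma \ref{lem:pin_pairs_partial_3-tree} is set up to formalize, and Theorem \ref{prop:3-connected_partial_3-tree_star_lemma} then removes the final obstruction of combining two such augmentations from the common endpoint $u$ into one partial $3$-tree.
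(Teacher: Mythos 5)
Your proposal correctly identifies the reduction to $L = L_0 \cup L_1$ via the clique separator $\{x_2, y_2\}$, but there is a real gap in the middle step: you claim that $L \cup \{ux_2, uy_2\}$ is a partial $3$-tree and try to apply Theorem~\ref{prop:partial_3-tree_3-reflection} directly to that graph. The tools you invoke do not apply there. Lemma~\ref{lem:pin_pairs_partial_3-tree} is stated for a single $3$-connected \emph{link} $L$ and two vertices \emph{of that link}; here $u$ lies in $L_0$ while $x_2$ (and $y_2$) lie in $L_1$, so $u$ and $x_2$ are not both vertices of any one link, and the lemma gives no information about adding $ux_2$ to $L_0 \cup L_1$. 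Likewise, Theorem~\ref{prop:3-connected_partial_3-tree_star_lemma} requires its base graph to be $3$-connected, but $L_0 \cup L_1$ is \emph{not} $3$-connected---it has the $2$-separator $\{x_1, y_1\}$. In fact, adding an edge across that $2$-separator can raise treewidth, so the partial-$3$-tree claim for $L \cup ux_2$ is unjustified and likely false in general.

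The paper's proof avoids trying to get a single partial $3$-tree containing the long nonedge $ux_2$. Instead, it works with nonedge sets that stay inside a single $3$-connected link: $F_0 = \{ux_1, uy_1\}$ in $L_0$ and $F_1 = \{x_1x_2, y_1x_2\}$ in $L_1$, each of which becomes a partial $3$-tree upon augmentation by Lemma~\ref{lem:pin_pairs_partial_3-tree} and Theorem~\ref{prop:3-connected_partial_3-tree_star_lemma} (here legitimately, since $L_0$ and $L_1$ \emph{are} $3$-connected). Theorem~\ref{prop:partial_3-tree_3-reflection} then gives the covering map property for $(L_0, F_0)$ and $(L_1, F_1)$ separately. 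Since $\{x_1, y_1\}$ is a clique separator of $L$, every connected component of $\mathcal{C}^3(L,\ell)$ arises by rigidly gluing components (and reflections) of $\mathcal{C}^3(L_0,\ell_0)$ and $\mathcal{C}^3(L_1,\ell_1)$ along $x_1y_1$; this yields $\phi_{F_0 \cup F_1}(X_i) = \phi_{F_0 \cup F_1}(X_j)$ for all components $X_i, X_j$. Finally, because $\{x_1, y_1\}$ is a clique separator of $L$, the tetrahedral inequality applied to the lengths of $F_0 \cup F_1$ and $\ell(x_1y_1)$ solely determines the interval of attainable $ux_2$-lengths, which transfers the equality of images to $\phi_{ux_2}$. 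If you want to salvage your approach, you must adopt this two-link decomposition; working with the union $L_0 \cup L_1$ as a single partial $3$-tree with an added cross-separator edge does not go through.
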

Lemma \ref{lem:ux2_uy2_3-covering} is proved using the tool Theorems \ref{prop:3-connected_partial_3-tree_star_lemma} and \ref{prop:partial_3-tree_3-reflection}, which were stated in the Contributions Section \ref{sec:results} as independently interesting byproducts of the main result Theorem \ref{thm:3-sip_characterization}.  
We devote the next subsection to their proofs before stating and proving Proposition \ref{prop:x2y2_3-Cayley-connected} and completing the proof of the converse direction of the main Theorem \ref{thm:3-sip_characterization}.  
\subsection{Proofs of Tool Theorems \ref{prop:3-connected_partial_3-tree_star_lemma} and \ref{prop:partial_3-tree_3-reflection} concerning partial 3-trees and covering maps}
\label{sec:tool-thms}

The proof of Theorem \ref{prop:3-connected_partial_3-tree_star_lemma} requires Lemmas \ref{lem:3-flat_and_not_minimally_3-connected_implies_partial_3-tree} and \ref{lem:not_3-or-2_implies_not_1}, below, which may themselves be independently interesting results about partial $3$-trees.

\begin{lemma}[$3$-flattenable and not minimally $3$-connected implies partial $3$-tree]
    \label{lem:3-flat_and_not_minimally_3-connected_implies_partial_3-tree}
    If a graph $G$ is $3$-flattenable and $3$-connected and contains a $3$-connected proper spanning subgraph, then $G$ is a partial $3$-tree.
\end{lemma}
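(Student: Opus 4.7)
The plan is to use the forbidden-minor characterizations: $G$ is $3$-flattenable iff it avoids $K_5$ and $K_{2,2,2}$ as minors, and $G$ is a partial $3$-tree iff it additionally avoids $V_8$ and $C_5 \times C_2$ as minors. So, assuming $G$ is $3$-flattenable, it suffices to show that $G$ has neither a $V_8$ nor a $C_5 \times C_2$ minor. I would first extract from the hypothesis an edge $e \in E(G) \setminus E(H)$, where $H$ is the given $3$-connected proper spanning subgraph; since $H \subseteq G \setminus e$, the graph $G \setminus e$ is itself $3$-connected, so $G$ is not minimally $3$-connected. This ``spare'' edge $e$ is the only non-structural input I would use.

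Next I would argue by contradiction: suppose $G$ has, say, a $V_8$ minor witnessed by a minor map $\pi$ together with a choice of representative edges between branch sets. The main combinatorial ingredient is that adding any non-edge to $V_8$ produces a $K_5$ minor, and that adding any non-edge to $C_5 \times C_2$ produces a $K_5$ or $K_{2,2,2}$ minor. Both statements can be verified in a handful of cases using the high symmetry of these graphs: each has only two orbits of non-edges under its automorphism group, so it suffices to exhibit explicit branch sets for one representative non-edge per orbit. With this ingredient in hand, the goal becomes to choose the minor witness in $G$ so that $e$ is \emph{not} used as a representative edge between two branch sets that correspond to an edge of $V_8$; equivalently, either $e$ sits inside a single branch set, or $e$ joins branch sets that correspond to \emph{non-adjacent} vertices of $V_8$. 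In the non-adjacent case, including $e$ lifts the minor of $G$ from $V_8$ to $V_8$ with one extra edge, so by the combinatorial ingredient $G$ has a $K_5$ minor, contradicting $3$-flattenability; the interior-branch-set case reduces to the same conclusion after shrinking that branch set using the connectivity provided by $G \setminus e$.

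I expect the rerouting to be the main obstacle: one must show that if every minor witness uses $e$ as the unique representative of some edge of $V_8$, then by re-routing through $G \setminus e$ we can still realize a $V_8$ minor, at which point the extra edge $e$ promotes the minor beyond $V_8$. This is where $3$-connectivity of both $G$ and $G \setminus e$ does the real work — it supplies disjoint alternative paths between the two branch sets in question — and where I would invoke Lemma~\ref{lem:not_3-or-2_implies_not_1} to carry out the technical core of the rerouting. The $C_5 \times C_2$ case would then be handled symmetrically.
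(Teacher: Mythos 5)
Your proposal takes a genuinely different route from the paper, and unfortunately it has a gap at its core step that I do not see how to repair.

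The paper's argument does not go through forbidden minors at all. It invokes the constructive characterization of $3$-flattenability from \cite{belk2007realizability1}: a graph is $3$-flattenable iff it is a subgraph of a graph built from $K_4$s, $V_8$s, and $C_5\times C_2$s by $1$-, $2$-, and $3$-sums. Since $V_8$ and $C_5\times C_2$ are triangle-free, they participate only in $1$- and $2$-sums; $3$-connectivity of $G$ then forces $G$ to sit inside a single summand. If that summand is a $3$-tree, $G$ is a partial $3$-tree; otherwise $G$ is a $3$-connected subgraph of $V_8$ or $C_5\times C_2$, which (being $3$-regular and $3$-connected, hence minimally $3$-connected) forces $G$ to equal that graph, contradicting the hypothesis of a proper $3$-connected spanning subgraph.

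The gap in your approach is the step ``the extra edge $e$ promotes the minor beyond $V_8$.'' Suppose you succeed in exhibiting a $V_8$-minor model $(B_1,\dots,B_8)$ of $G$ that does not use $e$ as a representative edge (either by choosing such a model outright or by rerouting). That alone does \emph{not} give a $V_8$-plus-chord minor: the endpoints of $e$ may lie in a single branch set $B_i$, or in branch sets $B_i,B_j$ with $ij\in E(V_8)$, and in either of those cases $e$ contributes nothing new. The only favorable outcome is $e$ joining non-adjacent branch sets. But that outcome cannot occur under the hypotheses: if $e$ joined non-adjacent branch sets, $G$ would already have a $K_5$ or $K_{2,2,2}$ minor by your own combinatorial ingredient, contradicting $3$-flattenability. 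So in every spanning $V_8$-model of a $3$-flattenable $G$, every edge of $G$ --- in particular $e$ --- is either interior to a branch set or between adjacent branch sets. The ``promote beyond $V_8$'' move is therefore never available, and the contradiction you are driving toward never materializes. The ``shrink the branch set'' remark in the interior case is too vague to bridge this: after shrinking, the freed endpoint of $e$ must go somewhere, and there is no mechanism forcing it into a non-adjacent branch set (indeed, as above, it cannot go there). The appeal to Lemma~\ref{lem:not_3-or-2_implies_not_1} also does not fit this situation --- that lemma is about adding several nonedges sharing a vertex $w$ to a $3$-connected graph and reducing to adding one, which is a different question from rerouting a $V_8$ or $C_5\times C_2$ model around a deleted edge. (As a smaller point, $C_5\times C_2$ has three orbits of non-edges, not two: within-pentagon, cross at cyclic distance one, and cross at cyclic distance two.)

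A forbidden-minor route \emph{is} possible but needs a genuinely different engine, e.g.\ a splitter-theorem style argument: since $G$ is $3$-connected with a $V_8$ minor and $G\neq V_8$, there is a $3$-connected minor of $G$ obtained from $V_8$ by one edge addition or one vertex split. Edge addition gives a $K_5$ or $K_{2,2,2}$ minor by your ingredient, and a vertex split of a cubic vertex cannot keep both new vertices at degree $\geq 3$, so it is impossible --- forcing $G=V_8$ and hence no proper $3$-connected spanning subgraph. That line of reasoning reaches the contradiction without the false step, but it is a different argument from the one you wrote, and the paper's clique-sum route is shorter still.
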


\begin{proof}
    Assume that $G$ is $3$-connected and $3$-flattenable and contains a $3$-connected proper spanning subgraph.  
    In \cite{belk2007realizability1} it was shown that a graph is $3$-flattenable it and only if it is a subgraph of a graph constructed from $K_4$s, $V_8$s, and $C_5 \times C_2$s via $1$-sums, $2$-sums, and $3$-sums.  
    Since neither $V_8$ nor $C_5 \times C_2$ contains a $K_3$ subgraph, these graphs can only be involved in $1$-sums and $2$-sums.  
    Hence, $G$ is a subgraph of some graph $H$ constructed as described above.  
    Since $G$ is $3$-connected, $H$ is either a $3$-tree, $V_8$, or $C_5 \times C_2$.  
    Recall from Section \ref{sec:results} that the latter two graphs are forbidden minors for partial $3$-trees.  
    Therefore, $G$ is a partial $3$-tree unless it is once of these two graphs.  
    However, observe that no proper spanning subgraph of these two graphs is $3$-connected, and so $G$ is neither of these graphs.  
    This completes the proof.  
\end{proof}

\begin{lemma}[If $G \cup \{wu_i\}$ with $|\{wu_i\}| \in \{2,3\}$ has a forbidden minor, then so does $G \cup wu_i$ for some $wu_i$]
\label{lem:not_3-or-2_implies_not_1}
    Let $G$ be a $3$-connected graph and $F = \{wu_i\}$ be some size two or three set of its nonedges.  
    If there exists a forbidden minor $[G \cup F]$ in which $[w]$ and each $[u_i]$ are pairwise distinct vertices and each $wu_i$ is retained, then $G \cup wu_i$ has a forbidden minor for some $wu_i$.
\end{lemma}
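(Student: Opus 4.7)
The plan is to exploit the 3-connectivity of $G$ to re-route all but one of the minor-edges $[w][u_i]$ through paths in $G$, producing a forbidden minor of $G \cup wu_j$ for a single $j$.

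By the retention hypothesis, each minor-edge $[w][u_i]$ of $[G \cup F]$ arises exclusively from $wu_i$: no other edge of $G$ lies between $[w]^{-1}$ and $[u_i]^{-1}$. Since the forbidden minor is $K_5$ or $K_{2,2,2}$, both of which are $4$-regular, for each $i$ one can select a minor vertex $[x] \notin \{[w],[u_1],\ldots,[u_{|F|}]\}$ such that both $[w][x]$ and $[x][u_i]$ are minor-edges (in $K_{2,2,2}$ this forces $[x]$ to lie in the third part distinct from those containing $[w]$ and $[u_i]$). These two minor-edges witness the existence in $G$ of a $w$-to-$u_i$ path $\pi_i$ whose internal vertices lie in $[x]^{-1}$; by 3-connectivity of $G$ and Menger's theorem applied inside the subgraph of $G$ induced by $[x]^{-1}$ together with its boundary, one has multiple such paths to choose from. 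Fix $j=1$; for each $i \neq 1$, absorb the internal vertices of a chosen $\pi_i$ into $[u_i]^{-1}$. The resulting class $[u_i]^{-1} \cup S_i$ is connected and adjacent in $G$ to $[w]^{-1}$, supplying the minor-edge $[w][u_i]$ without needing $wu_i$.

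The main obstacle is showing that the modified vertex partition still presents a forbidden minor of $G \cup wu_1$: the class $[x]^{-1}$ must remain connected after losing $S_i$, and every other minor-edge $[x][u_\ell]$, $[x][w]$, and edges among the untouched classes must still be witnessed in $G$. The freedom provided by having three internally vertex-disjoint $w$-to-$u_i$ paths through $[x]^{-1}$ (via Menger applied to the 3-connected $G$) allows one to choose $\pi_i$ so that its internal vertices form a non-separating subpath of $[x]^{-1}$ and avoid designated witness vertices of the other minor-edges. The remaining configurations are handled by case analysis on the forbidden minor type ($K_5$ vs.\ $K_{2,2,2}$) and on $|F| \in \{2,3\}$.

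In the hardest case ($|F|=3$ with a $K_5$ minor), there is only a single intermediate class $[x]^{-1}$, so the absorptions for $i=2$ and $i=3$ would draw disjoint subpaths from the same $[x]^{-1}$; when such disjoint subpaths cannot both be chosen without disconnecting $[x]^{-1}$, one instead switches to a $K_{2,2,2}$ presentation (introducing an additional intermediate class obtained by splitting $[x]^{-1}$ using the multiplicity of paths), or first reduces to the $|F|=2$ case by applying the same re-routing to a single nonedge and verifying that the retention hypothesis carries over to $[G \cup \{wu_1, wu_i\}]$ for some $i \in \{2,3\}$.
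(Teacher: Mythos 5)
Your high-level idea --- re-route a $w$-to-$u_i$ connection through an intermediate class $[x]^{-1}$ so that the nonedge $wu_i$ is no longer needed --- is the same idea the paper pursues, but the execution has two gaps that are not easily patched.

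First, your appeal to ``Menger's theorem applied inside the subgraph of $G$ induced by $[x]^{-1}$ together with its boundary'' is not valid: $3$-connectivity of $G$ does not pass to induced subgraphs, and in particular $[x]^{-1}$ with its boundary vertices can be a tree or even a path. Menger applied to $G$ does give three internally vertex-disjoint $w$-to-$u_i$ paths, but you have no control over which classes they traverse; they are free to run through $[u_j]^{-1}$ ($j\ne i$), through $[w]^{-1}$ itself, or through several intermediate classes, so the ``multiple paths through $[x]^{-1}$'' you rely on need not exist. This same gap undermines the later assertion that paths can be chosen so that absorption ``avoids designated witness vertices of the other minor-edges,'' which is not argued at all. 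Second, your handling of the $|F|=3$, $K_5$ case is a stub: ``switch to a $K_{2,2,2}$ presentation by splitting $[x]^{-1}$'' is not a step one can take for free, since whether $[x]^{-1}$ splits into two classes that realize a $K_{2,2,2}$ depends entirely on the structure of $G$ inside $[x]^{-1}$, which you have not constrained.

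The paper avoids both problems by \emph{not} re-routing a whole path in one move. It first reduces $|F|=3$ to $|F|=2$ via a dedicated lemma (its Lemma~\ref{lem:not_3_implies_not_2}), and for $|F|=2$ (Lemma~\ref{lem:not_2_implies_not_1}) it takes the three internally disjoint $w$-to-$x$ paths in $G$, locates the first vertices $a,b,c$ of these paths outside $[w]^{-1}$, uses the degree constraint on $[w]$ in $[G']\setminus[F]$ to force $[a]=[b]$, and then advances the ``entry points'' $a',b'$ along the two paths one class at a time via carefully chosen vertex and component exchanges. The progress is tracked by a potential (length of the longest prefix of $P_a$, resp.\ $P_b$, absorbed into $[a]$) that either strictly increases or, in one step, produces a minor in which one of $wx,wy$ is \emph{doubled} --- i.e.\ realized by an actual edge of $G$ --- at which point that nonedge can be discarded. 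The case analysis then branches only on the local configuration of which classes the sets $V_a,V_b$ of neighbors point into, which is a finite and verifiable list. Your sketch, by contrast, tries to certify the whole rerouted partition in one shot and has no mechanism to guarantee that the absorption preserves connectivity of $[x]^{-1}$ or the remaining adjacencies; without the incremental potential argument (or some replacement for it), the proof does not go through.
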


The proof of Lemma \ref{lem:not_3-or-2_implies_not_1} is given in \ref{sec:star_thm}.  

\begin{proof}[Proof of Theorem \ref{prop:3-connected_partial_3-tree_star_lemma}]
    Let $G$ be a $3$-connected partial $3$-tree with a nonedge set $\{wu_i\}$ for which $G \cup wu_i$ is a partial $3$-tree for each $wu_i$.  
    If $G \cup \{wu_i\}$ is $3$-flattenable, then the theorem follows from Lemma \ref{lem:3-flat_and_not_minimally_3-connected_implies_partial_3-tree}.  
    Otherwise, let $F \subseteq \{wu_i\}$ be a minimum-sized set such that there exists a forbidden minor $[G \cup F]$.  
    By our assumptions, we have $|F| \geq 2$.  
    We first reduce the problem to the case where some nonedge in $F$ is retained in $[G \cup F]$.  
    If some $f \in F$ is contracted in every forbidden minor of $G \cup F$, then Proposition \ref{prop:G_is_winged_graph} shows that $(G \cup (F \setminus \{f\}), f)$ is a winged graph pair with an $f$-winged graph minor.  
    Hence, either Proposition \ref{lem:winged-graph_cut_vertices} or Lemma \ref{lem:wing-tips} shows that $G \cup (F \setminus \{f\})$ has a separator of size two, a contradiction.  
    Therefore, every nonedge in $F$ is preserved in some forbidden minor of $G \cup F$.  
    This allows us to assume wlog that the set $X$ of all nonedges in $F$ that are preserved in $[G \cup F]$ is non-empty.  
    The minimality condition on $F$ implies that each nonedge in $X$ is retained in $[G \cup F]$ and any two of these nonedges $wu_i$ and $wu_j$ are such that $[u_i] \neq [u_j]$.  
    
    Next, since $G$ is connected, each vertex of any forbidden minor has degree four, and $w$ is an endpoint of every nonedge in $F$, we have $|X| \leq 3$.  
    If $|X| > 1$, then Lemma \ref{lem:not_3-or-2_implies_not_1} can be used to show the existence of the proper subset $F' \subsetneq F$ such that $G \cup F'$ has a forbidden minor, a contradiction.  
    Therefore, wlog we have $X = \{wu_1\}$.  
    Let $H$ be the subgraph of $G$ induced by $[w]^{-1}$ and let $H_w$ be its connected component that contains $w$.  
    Also, let $\{[v_i]\}$ be the neighborhood of $[w]$ in $[G \cup F] \setminus [u_1]$.  
    There are several cases.  
    
    If each set $[v_i]^{-1}$ contains some vertex whose neighborhood in $G$ contains some vertex in $H_w$, then $G \cup wu_1$ clearly has a forbidden minor, a contradiction.  
    Otherwise, let $\{J_i\}$ be the set of connected components of $H \setminus H_w$.  
    The minimality of $F$ implies that each $J_i$ contains both exactly one vertex $u_j$ such that $wu_j \in F$ and some vertex whose neighborhood in $G$ contains some vertex in at least one of the sets $[v_i]^{-1}$ or $[u_1]^{-1}$.  
    If exactly two sets in $\{[v_i]^{-1}\}$, say $[v_1]^{-1}$ and $[v_2]^{-1}$, each contain some vertex that neighbora some vertex in $H_w$, then wlog $J_1$ contains some vertex whose neighborhood in $G$ contains some vertex in $[v_3]^{-1}$.  
    The minor $[G \cup F]_1$ obtained via the component exchange from $[w]^{-1}$ to $[v_3]^{-1}$ that exchanges $J_1$ either is or has a forbidden minor in which at least two pairs in $F$ are retained, which is a previously handled case.  
    
    Consequently, assume that exactly one set in $\{[v_i]^{-1}\}$, say $[v_1]^{-1}$, contains some vertex whose neighborhood in $G$ contains some vertex in $H_w$.  
    Then, $[v_2]^{-1}$ and $[v_3]^{-1}$ each contain some vertex that neighbors some vertex in some $J_i$.  
    Assume the set $\{J_i\}$ contains more than one component.  
    If only one component, say $J_1$, contains vertices with neighbors in $[v_2]^{-1} \cup [v_3]^{-1}$, then we arrive at a similar contradiction by considering the minor obtained via the component exchange from $[w]^{-1}$ to either $[u_1]^{-1}$ or $[v_3]^{-1}$, whichever contains a vertex with a neighbor in $J_2$, that exchanges $J_2$.  
    Otherwise, wlog $J_1$ contains some vertex that neighbors some vertex in $[v_2]^{-1}$ and $J_2$ contains some vertex that neighbors some vertex in $[v_3]^{-1}$, and we arrive at a similar contradiction by considering the minor obtained via the component exchange from $[w]^{-1}$ to $[v_3]^{-1}$ that exchanges $J_2$.  
    
    Lastly, assume $J_1$ is the only component in $\{J_i\}$.  
    First, we argue that we can assume $w$ is the only vertex in $H_w$.  
    Since $G$ is $3$-connected, each connected component $H'$ of $H_w \setminus w$ must contain some vertex that neighbors some vertex outside of $H$.  
    Our above assumptions imply that this vertex outside $H$ is contained in $[v_1]^{-1}$.  
    Hence, we can instead work with the minor obtained from $[G \cup F]$ via the component exchange from $[w]^{-1}$ to $[v_1]^{-1}$ that exchanges $H'$.  
    Repeating this argument for the new minor allows us to make the desired assumption on $H_w$.  
    
    Next, since $G$ is $3$-connected, there exist three vertex-disjoint paths $P_a$, $P_b$, and $P_c$ in $G$ between $w$ and $u_1$, where $a$, $b$, and $c$ are the first vertices along these paths, respectively, not in $[w]^{-1}$.  
    Let $K$ be the subgraph of $G$ induced by $[v_1]^{-1}$.  
    By our above assumptions, $K$ contains $a$, $b$, and $c$.  
    Let $a'$ be the first vertex in $P_a$ after $a$ that is not contained in $K$, and let $R_a$ be the path in $P_a$ between $a$ and the vertex immediately preceding $a'$.  
    Define $R_b$ and $R_c$ similarly.  
    Wlog, we can assume that some connected component $K_{bc}$ of $K \setminus R_a$ contains both $R_b$ and $R_c$.  
    Let $V_a$ be the set of vertices of $G \setminus ([w]^{-1} \cup [v_1]^{-1})$ that are adjacent to some vertex in $K \setminus K_{bc}$ and let $V_{bc}$ be the set of vertices of $G \setminus ([w]^{-1} \cup [v_1]^{-1})$ that are adjacent to some vertex in $K_{bc}$.  
    There are several cases.  
    
    For any neighbor $[x]$ of $[v_1]$ other than $[w]$, if $V_a$ is a subset of $[x]^{-1}$, then the minor obtained via the component exchange from $[v_1]^{-1}$ to $[x]^{-1}$ that exchanges $K \setminus K_{bc}$ either demonstrates that $G \cup (F \setminus \{wu_1\})$ has a forbidden minor, which contradicts the minimality of $F$, or places us in the above handled case where exactly two sets in $\{[v_i]^{-1}\}$ each contain some vertex that neighbors some vertex in $H_w$.  
    Otherwise, let $[x]$ be a neighbor of $[v_1]$ such that $V_{bc} \cap [x]^{-1}$ is non-empty.  
    If no such neighbor $[x]$ exists, then some vertex in $K_{bc}$ neighbors some vertex in $J_1$.  
    In this case, the minor obtained via the component exchange from $[v_1]^{-1}$ to $[w]^{-1}$ that exchanges $K_{bc}$ demonstrates that $G \cup (F \setminus \{wu_2\})$ has a forbidden minor, which contradicts the minimality of $F$.  
    Therefore, assume $[x]$ exists and, if possible, chooseit so that $V_a \cap [x]^{-1}$ is empty.  
    If $[G \cup F]$ is $K_5$, then the minor $[G \cup F]_1$ obtained via the component exchange from $[v_1]^{-1}$ to $[x]^{-1}$ that exchanges $K_{bc}$ either violates the minimality of $F$ or places us in the above handled case where exactly two sets in $\{[v_i]^{-1}\}$ each contain some vertex that neighbors some vertex in $H_w$.  
    If $[G \cup F]$ is $K_{2,2,2}$, then there is one additional possibility for $[G \cup F]_1$: it has a $K_5$ minor in which $wu_1$ is preserved.  
    However, in this case we could have instead considered $[G \cup F]_1$ to begin with.  
    Thus, all cases have been checked, and so the theorem is proved.  
\end{proof}

The proof of Theorem \ref{prop:partial_3-tree_3-reflection} requires Lemmas \ref{lem:preimage_closure} and \ref{lem:A_empty_config_space_connected}, below, which concern the following objects.  
Consider a graph $G$ with $F$ being a set of its nonedges such that $G \cup F$ is a partial $3$-tree, a squared edge-length map $\ell$, and the Cayley map $\phi_F$ from the CS $\mathcal{C}^3(G,\ell)$ to the CCS $\Omega^3_F(G,\ell)$.  
Let $Y$ be the relative interior of the CCS.  
For any realization $p$ in the CS and any $K_4$ subgraph $H$ of $T$, the orientation of the set of points to which $V(H)$ maps under $p$ is given by a determinant on these points - this determinant is $0$ if and only if these points are coplanar, and either positive or negative otherwise.  
The orientation vector $\tau(p) = (\tau_i(p))_{i\in \{1,\dots,m\}}$ is such that $\tau_i(p) = 0$ if the orientation determinant for $V(A_i)$ is $0$ under $p$, $\tau_i(p) = -1$ if this determinant is negative, and $\tau_i(p) = 1$ if this determinant is positive.  
Let $\Delta$ be the set of all vectors $\delta = (\delta_i)_{i\in \{1,\dots,m\}}$ with $\delta_i \in \{-1,1\}$.  
The $\delta$-oriented CS $\mathcal{C}^3_{\delta}(G,\ell)$ is the open subset of $\phi^{-1}(Y)$ containing every realization whose orientation vector is $\delta$.  

\begin{lemma}[Realizations in the CCS relative interior preimage]
    \label{lem:preimage_closure}
    Let $G$ be a graph with $F$ being a subset of its nonedges such that $G \cup F$ is a $3$-tree, $\ell$ be a squared edge-length map, $\phi_F$ be the Cayley map from the CS $\mathcal{C}^3(G,\ell)$ to the CCS $\Omega^3_F(G,\ell)$, and $Y$ be the the relative interior of the CCS.  
    Then:
    \begin{enumerate}[(i)]
        \item For each $\delta \in \Delta$, the image of the $\delta$-oriented CS under $\phi_F$ is $Y$.  
        \item Each connected component of the CS contains some realization in a $\delta$-oriented CS for some $\delta \in \Delta$.  
    \end{enumerate}
\end{lemma}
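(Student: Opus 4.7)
The plan is to exploit a construction order $v_1, v_2, \ldots, v_n$ of the 3-tree $T := G \cup F$, in which $v_1 v_2 v_3 v_4$ forms the base $K_4$ and each $v_k$ for $k \geq 5$ is attached to a triangle of three earlier vertices. Since every $K_4$ of a 3-tree is created by the addition of its last vertex, this labels the tetrahedra as $A_1, \ldots, A_m$ with $m = n-3$, where each $A_j$ has a unique ``top'' vertex $v_{j+3}$ whose position relative to the plane of its three parents determines $\tau_j$. Throughout, I extend $\ell$ to the edges of $T$ by a chosen $y \in \Omega^3_F(G,\ell)$ on $F$.

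For (i), fix $y \in Y$ and $\delta \in \Delta$, and place the vertices in 3-tree order, realizing each tetrahedron $A_j$ as $v_{j+3}$ is added. Since $T$ is $3$-flattenable, $(G,F)$ is $3$-convex by Theorem \ref{thm:sitharam_willoughby}, so $\Omega^3_F(G,\ell)$ is convex and points of $Y$ correspond to length assignments making every Cayley-Menger condition for each $A_j$ strict. Hence, at each step, the three spheres around the already-placed parents of $v_{j+3}$ meet in exactly two points, symmetric about the parents' plane, giving opposite signs of $\tau_j$; the orientation of the base $K_4$ is likewise a free binary choice. Choosing the sign agreeing with $\delta_j$ at every step produces a realization $p \in \mathcal{C}^3_\delta(G,\ell)$ with $\phi_F(p) = y$, so $\phi_F(\mathcal{C}^3_\delta(G,\ell)) \supseteq Y$; the reverse containment holds by definition.

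For (ii), take any $p \in C$ and set $y = \phi_F(p) \in \Omega^3_F(G,\ell)$. Convexity of the CCS implies $Y$ is dense in $\Omega^3_F(G,\ell)$, so pick $y_n \in Y$ with $y_n \to y$. Choose $\delta \in \Delta$ to agree with $\tau(p)$ wherever $\tau_i(p) \neq 0$, and arbitrarily otherwise. Build $p_n$ by running the same 3-tree construction as in (i) on $y_n$, but initializing $v_1,\ldots,v_4$ near $p(v_1),\ldots,p(v_4)$ and choosing at each step $k$ the reflection of $v_k$ closer to $p(v_k)$. Because the two candidate positions of $v_k$ are continuous functions of its parents' positions and the three incoming squared lengths, and because these candidates coalesce precisely when $A_{k-3}$ is degenerate in $p$ (in which case both limits equal $p(v_k)$), induction on $k$ gives $p_n \to p$ in $\mathcal{C}^3(G,\ell)$. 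Each $p_n$ has all orientations in $\{\pm 1\}$ (as $y_n \in Y$), so $p_n \in \mathcal{C}^3_\delta(G,\ell)$. Since $\mathcal{C}^3(G,\ell)$ is a compact real semi-algebraic set and hence locally path-connected, $C$ is open, and therefore $p_n \in C$ for large $n$, giving $C \cap \mathcal{C}^3_\delta(G,\ell) \neq \emptyset$.

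The main obstacle is the continuity argument in (ii): uniformly propagating the convergence $p_n \to p$ along the 3-tree order when many tetrahedra are simultaneously degenerate at $p$, so that each inductively chosen reflection of $v_k$ in $p_n$ is unambiguously selected by ``closer to $p(v_k)$'' and the explicit quadratic formula for $v_k$'s two candidate positions in terms of the parents' coordinates and the three incoming squared distances remains stable. The auxiliary fact of local path-connectedness of $\mathcal{C}^3(G,\ell)$, needed to conclude $p_n \in C$, is standard for semi-algebraic sets.
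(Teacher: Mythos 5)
Your argument for part (i) is the same as the paper's: use convexity of the CCS (via Theorem~\ref{thm:sitharam_willoughby}) to conclude that at $y\in Y$ every Cayley--Menger condition attached to the tetrahedra in $A$ is strict, then exploit the 3-tree elimination order to realize any prescribed sign pattern $\delta$ by choosing, at each step, the appropriate reflection of the new vertex across its parent triangle. The paper states this in one sentence (``Since $G\cup F$ is a $3$-tree, it follows that\dots''); you unfold exactly the construction it is referring to.

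For part (ii) the paper is much terser --- it says only that the claim ``follows easily'' from the CCS being cut out by tetrahedral inequalities --- whereas you supply an explicit approximation argument: pick $y_n\in Y$ with $y_n\to\phi_F(p)$, re-run the 3-tree construction at $y_n$ choosing the reflection closest to $p(v_k)$, show $p_n\to p$ by induction along the elimination order, and then invoke local path-connectedness of the semi-algebraic set $\mathcal{C}^3(G,\ell)$ to conclude $p_n$ eventually lies in the same component $C$ as $p$. This is a correct and more detailed way to cash out what the authors sweep under ``follows easily,'' and it is in the same spirit (density of $\phi_F^{-1}(Y)\cap\bigcup_\delta\mathcal{C}^3_\delta$ in the CS). Two small remarks. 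First, you fix a $\delta$ agreeing with $\tau(p)$ on nonzero entries, but your construction of $p_n$ never uses $\delta$ --- you select reflections by proximity to $p(v_k)$, so $\tau(p_n)$ need not equal your chosen $\delta$ at the indices where $\tau_i(p)=0$; this is harmless since the statement only requires $p_n$ to land in \emph{some} $\delta$-oriented CS, but the sentence ``so $p_n\in\mathcal{C}^3_\delta(G,\ell)$'' should read ``so $p_n\in\mathcal{C}^3_{\tau(p_n)}(G,\ell)$.'' Second, you set $m=n-3$; the paper's $m$ is $|A|$, the number of tetrahedra that are non-degenerate under \emph{some} realization, which may be smaller (for always-degenerate tetrahedra there is no reflection to choose and no corresponding $\delta_i$). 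This is purely notational and does not affect the argument. Both your proof and the paper's implicitly rely on the fact that for $y$ in the relative interior of the convex CCS the Cayley--Menger polynomial of each $A_j\in A$ is strictly positive (i.e., $\{q_j=0\}$ meets CCS only in its relative boundary); neither you nor the paper justifies this, so you are not on worse footing than the source.
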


\begin{proof}
    Since $G \cup F$ is $3$-flattenable, Theorem \ref{thm:sitharam_willoughby} shows that the CCS is convex.  
    Hence, each point $y \in Y$ is not on any boundary of the CCS.  
    This implies that each realization in $\phi_F^{-1}(y)$ is contained in a $\delta$-oriented CS for some $\delta \in \Delta$.  
    Since $G \cup F$ is a $3$-tree, it follows that, for each $\delta \in \Delta$, the $\delta$-oriented CS contains some realization in $\phi_F^{-1}(y)$, which proves (i).  
    Next, (ii) follows easily from the fact that the CCS is the semi-algebraic set defined by a set of tetrahedral inequalities: one for each $K_4$ subgraph of $G \cup F$ and subject to the edge-length map $\ell$.  
\end{proof}

\begin{lemma}[Oriented CSs are connected]
    \label{lem:A_empty_config_space_connected}
    Let $G$ be a graph with $F$ being a subset of its nonedges such that $G \cup F$ is a $3$-tree, let $\ell$ be a squared edge-length map, and let $\phi_F$ be the Cayley map from the CS $\mathcal{C}^3(G,\ell)$ to the CCS $\Omega^3_F(G,\ell)$.  
    Also, consider the set $A = \{A_1,\dots,A_m\}$ containing each $K_4$ subgraph of $G \cup F$ that maps to a non-coplanar set of points under some realization in the CS.  
    If the set $A$ is empty, then the CS is connected; otherwise, for each $\delta \in \Delta$, the $\delta$-oriented CS is connected.  
\end{lemma}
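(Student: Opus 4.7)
The plan is to induct on $|V(G)|$ using the recursive structure of the 3-tree $T := G \cup F$. For the base case $|V(G)| = 4$, $T = K_4$, and a short case analysis on $|F| \in \{0,1,2,3\}$ handles the claim directly: in each case the CS (when $A = \emptyset$) or each $\delta$-oriented CS (when $A \neq \emptyset$) is a familiar connected semi-algebraic set---such as an orbit of a rigid motion group, possibly crossed with a ball, sphere, or circle---whose connectedness can be verified by hand.

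For the inductive step, choose a leaf $v$ of $T$: a vertex of degree 3 whose neighbors $\{a,b,c\}$ form a triangle, which exists in every 3-tree on at least 5 vertices. Let $G' := G \setminus v$, $F' := F \setminus \{va,vb,vc\}$, and $\ell' := \ell|_{E(G')}$, so that $T' := G' \cup F'$ is a 3-tree on fewer vertices. Let $A'$ denote the analogue of $A$ for $(G', \ell')$. Since $T$ is 3-flattenable by Theorem \ref{thm:sitharam_willoughby}, every realization of $(G', \ell')$ extends to one of $(G, \ell)$, and restriction preserves coplanarity of every $K_4$ not containing $v$; together with the fact that the only $K_4$ of $T$ containing $v$ is $\{v,a,b,c\}$, this yields $A' = \{B \in A : v \notin V(B)\}$. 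In particular, $A = \emptyset$ implies $A' = \emptyset$, and when $A \neq \emptyset$ any $\delta \in \Delta$ restricts to a compatible $\delta' \in \Delta'$. By the inductive hypothesis, the CS (resp.\ $\delta'$-oriented CS) of $(G', \ell')$ is connected.

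Next, consider the forgetful projection $\pi$ sending $p \mapsto p|_{V(G')}$. For any $p' \in \pi(X)$, where $X$ is the object to be shown connected, the fiber $\pi^{-1}(p')$ is the set of $x \in \mathbb{R}^3$ at the prescribed squared distances from $p'(w)$ for each $G$-neighbor $w$ of $v$, subject to the $\delta$-orientation constraint on $\{v,a,b,c\}$ when $\{v,a,b,c\} \in A$. A case analysis on the number $k \in \{0,1,2,3\}$ of $G$-neighbors of $v$ shows every fiber is connected: for $k=3$ the fiber is a single point---either because $\delta$ selects one of the two reflections through the plane of $\{a,b,c\}$, or because coplanarity of $\{v,a,b,c\}$ forced by $\{v,a,b,c\} \notin A$ collapses the two reflections onto that plane; and for $k \leq 2$ the fiber is (an open subset of) a sphere, circle, arc, half-sphere, or half-space. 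In the $A = \emptyset$ case, one additionally uses that if some fiber in case $k \leq 2$ were disconnected or disagreed with the coplanarity constraint at some point, then at that point one would produce a realization of $(G,\ell)$ with a non-coplanar $K_4$, contradicting $A = \emptyset$.

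The conclusion is the standard path-lifting argument: given $p,q \in X$, connect $\pi(p)$ to $\pi(q)$ in the base by a path $\gamma$ via the inductive hypothesis, lift $\gamma$ to a path in $X$ starting at $p$, and finish inside the connected fiber over $\pi(q)$. The main obstacle is justifying the lifting step, namely that $\pi$ is sufficiently regular for such lifts to exist. I expect this to follow either from Hardt's triviality theorem applied to the semi-algebraic family of fibers, or more concretely from explicit continuous local sections constructed case-by-case in the four $k$-cases above, exploiting that each fiber is a smooth piece of an affine subspace or sphere whose defining centers, radii, and constraining planes depend continuously on $p'$ throughout the generic locus of $\phi_F^{-1}(Y)$.
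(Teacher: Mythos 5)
There is a genuine gap in the inductive step: the forgetful restriction $\pi\colon p \mapsto p|_{V(G')}$ is not, in general, surjective onto the CS (resp.\ $\delta'$-oriented CS) of $(G',\ell')$. You assert that ``every realization of $(G',\ell')$ extends to one of $(G,\ell)$'' citing $3$-flattenability via Theorem~\ref{thm:sitharam_willoughby}, but $3$-flattenability concerns flattening a high-dimensional realization of the \emph{same} linkage into $\mathbb{R}^3$, not extending a realization of a vertex-deleted subgraph. Concretely, placing $v$ at the prescribed squared distances $\ell(vw)$ from $p'(w)$ for each $G$-neighbor $w$ requires the tetrahedral inequality for $\{v,a,b,c\}$ to hold at $p'$. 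If some of $ab,ac,bc$ lie in $F'$, their lengths vary over the base; for example if $va,vb \in E(G)$ with $\ell(va)=\ell(vb)=1$, then a $p'$ with $\|p'(a)-p'(b)\| > 2$ cannot be extended at all. Thus $\pi(X)$ is generally a \emph{proper} subset of the inductively-connected base $X'$, and a path in $X'$ joining $\pi(p)$ to $\pi(q)$ may exit $\pi(X)$, at which point it admits no lift. The remaining ingredients you supply (connected fibers, Hardt triviality, explicit local sections) address the regularity of $\pi$ over its image, not its non-surjectivity, so they do not close this gap. Establishing connectivity of $\pi(X)$ directly would be circular, since it is not weaker than the original claim.

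For comparison, the paper's proof avoids this by projecting via the Cayley map $\phi_F$ rather than by forgetting a vertex. It takes $X$ as here, $Y$ as the CCS (or its relative interior), and notes $\phi(X)=Y$ (Lemma~\ref{lem:preimage_closure}(i)); $Y$ is convex, hence connected, by $3$-flattenability and Theorem~\ref{thm:sitharam_willoughby}. The fibers $\phi_F^{-1}(y)\cap X$ are shown connected in Lemma~\ref{lem:fixed_lengths_connected} by a leaf-peeling induction structurally like yours, but crucially carried out with \emph{all} Cayley lengths fixed at $y$; then the triangle $\{a,b,c\}$ is rigid in every realization, the tetrahedral inequality for $\{v,a,b,c\}$ holds uniformly because the fiber is nonempty, and the leaf-peeling restriction is genuinely surjective. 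Finally, connectivity of $X$ is deduced from connectivity of the fibers and the base via a quotient-space argument (showing $\psi\colon\tilde X\to Y$ is a homeomorphism), not by path-lifting. If you want to salvage your outline, the cleanest fix is to swap the forgetful vertex projection for the Cayley map, at which point you essentially reproduce the paper's decomposition.
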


The proof of Lemma \ref{lem:A_empty_config_space_connected} is given in \ref{sec:proof_partial_3-tree_3-reflection}.  

\begin{proof}[Proof of Theorem \ref{prop:partial_3-tree_3-reflection}]
    We must show that $(G,F)$ has the $3$-covering map property whenever $G \cup F$ is a partial $3$-tree.  
    It suffices to consider the case where $G \cup F$ is a $3$-tree.  
    Let $\ell$ be a squared edge-length map and let $\phi_F$ be the Cayley map from the CS $\mathcal{C}^3(G,\ell)$ to the CCS $\Omega^3_F(G,\ell)$.  
    We must show that the image of each connected component of the CS under $\phi_F$ is the CCS.  
    Let $A = \{A_1,\dots,A_m\}$ be the set containing each $K_4$ subgraph $A_i$ of $G \cup F$ that maps to a non-coplanar set of points under some realization in the CS.  
    If $A$ is empty, then Lemma \ref{lem:A_empty_config_space_connected}(i) shows that the CS is connected, and so the theorem is immediate.  
    Otherwise, Lemma \ref{lem:preimage_closure}(ii) and Lemma \ref{lem:A_empty_config_space_connected}(ii) show that each connected component in the CS contains a $\delta$-oriented CS for some $\delta \in \Delta$.  
    Lemma \ref{lem:preimage_closure}(i) then states that the image of this $\delta$-oriented CS under $\phi_F$ is $Y$.  
    The theorem now follows since $\phi_F$ is a closed map.  
\end{proof}
Now we are ready to state and prove Proposition \ref{prop:x2y2_3-Cayley-connected}.  
\begin{proposition}[$(G, \{ux_2,uy_2\})$ is $3$-Cayley-connected]
    \label{prop:x2y2_3-Cayley-connected}
    Consider a winged graph pair $(G,f)$, with $f=uv$, such that $G \cup f$ is $2$-connected.  
    If $(G,f)$ has a chained $f$-winged graph minor with singleton wing tips and an ordered set of pins $\{x_i,y_i\}$, then either $(G,f)$ has the $3$-SIP or $(G, \{ux_2,uy_2\})$ is $3$-Cayley-connected.  
\end{proposition}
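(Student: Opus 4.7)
The plan is to apply the two preceding helper lemmas in sequence and then combine their outputs.

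I would first invoke Lemma \ref{lem:links_3-connected}: its hypotheses are exactly the standing assumptions of the proposition, and it yields the dichotomy that either $(G,f)$ already has the $3$-SIP---which is the first disjunct of our conclusion---or every link of the chained winged graph minor is $3$-connected. Assume the latter; in particular, $L_0$ (containing $u$ and the first pin $x_1y_1$) and $L_1$ (containing $x_1y_1$ and $x_2y_2$) are both $3$-connected.

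With $L_0$ and $L_1$ both $3$-connected, Lemma \ref{lem:ux2_uy2_3-covering} applies and gives that both $(G, ux_2)$ and $(G, uy_2)$ enjoy the $3$-covering map property. Fix any squared edge-length map $\ell$ and let $\Omega := \Omega^3_{\{ux_2, uy_2\}}(G, \ell)$. For any connected component $C$ of $\mathcal{C}^3(G, \ell)$, the covering map property yields $\phi_{ux_2}(C) = \Omega^3_{ux_2}(G, \ell)$ and $\phi_{uy_2}(C) = \Omega^3_{uy_2}(G, \ell)$; each of these one-nonedge CCSs is therefore a single interval, being the continuous image of a connected set. Consequently, the image $\phi_{\{ux_2, uy_2\}}(C) \subseteq \Omega$ is a connected subset whose two axis-projections equal these full intervals.

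The remaining step---and the crux of the proof---is to deduce that $\Omega$ itself is connected. The geometric content, sketched in the paragraph preceding the proposition and illustrated in Figures \ref{fig:k3-e}-\ref{fig:k4-e} and \ref{fig:converse_4}, is that distinct connected components of $\mathcal{C}^3(G, \ell)$ are related by partial reflections of the $3$-connected links across their boundary pins, and these reflections leave the set of attainable \emph{pairs} of squared lengths for $\{ux_2, uy_2\}$ unchanged---so every $\phi_{\{ux_2, uy_2\}}(C)$ in fact coincides with $\Omega$, and $\Omega$ is connected as the image of any single $C$. The hard part will be rigorously justifying this equality, since the $3$-covering map properties for the two nonedges \emph{separately} only give axis-wise surjectivity, not simultaneous realizability of an arbitrary target pair $(a,b) \in \Omega$ within a prescribed component. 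I expect this to be routed through the partial $3$-tree structure of $L_0 \cup L_1 \cup \{ux_2, uy_2\}$ (obtained from Theorem \ref{prop:3-connected_partial_3-tree_star_lemma} applied to the fact, implicit in the proof of Lemma \ref{lem:ux2_uy2_3-covering}, that $L_0 \cup L_1 \cup ux_2$ and $L_0 \cup L_1 \cup uy_2$ are partial $3$-trees), followed by Theorem \ref{prop:partial_3-tree_3-reflection} to obtain the $3$-covering map property for $(L_0 \cup L_1, \{ux_2, uy_2\})$, and finally a clique-separator gluing argument (Lemma \ref{lem:ccs_intersection}) using that the pin $\{x_2, y_2\}$ separates $L_0 \cup L_1$ from the remainder of $G$ so that $\phi_{\{ux_2, uy_2\}}$ factors through $\mathcal{C}^3(L_0 \cup L_1, \ell|_{L_0 \cup L_1})$.
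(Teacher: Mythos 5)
Your first two steps---invoking Lemma~\ref{lem:links_3-connected} to reduce to the case where $L_0$ and $L_1$ are both $3$-connected, then invoking Lemma~\ref{lem:ux2_uy2_3-covering} to obtain the $3$-covering map properties of $(G,ux_2)$ and $(G,uy_2)$---match the paper's proof exactly, and you have correctly pinpointed the remaining difficulty: passing from separate, axis-wise covering information to connectedness of the \emph{joint} CCS.

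However, your proposed route for closing that gap fails for a structural reason: $L = L_0 \cup L_1$ is \emph{not} $3$-connected. The pin $\{x_1,y_1\}$ is a $2$-separator of $L$ (it separates the vertex $u$ in $L_0$ from the vertices $x_2,y_2$ in $L_1$), so Theorem~\ref{prop:3-connected_partial_3-tree_star_lemma} simply does not apply to $L$. Moreover, the premise you would feed it---that $L \cup ux_2$ and $L \cup uy_2$ are partial $3$-trees---is not ``implicit in the proof of Lemma~\ref{lem:ux2_uy2_3-covering}.'' That proof deliberately avoids any such global claim: it applies Lemma~\ref{lem:pin_pairs_partial_3-tree} and Theorems~\ref{prop:3-connected_partial_3-tree_star_lemma},~\ref{prop:partial_3-tree_3-reflection} only to $L_0$ and $L_1$ \emph{individually}, where $3$-connectedness is actually available, and then transfers information across the clique separator $\{x_1,y_1\}$ via the tetrahedral inequality. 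Nothing in that lemma or its proof asserts a joint partial-$3$-tree structure or a joint covering-map property on $L$. Indeed, your intended conclusion (the $3$-covering map property for $(L,\{ux_2,uy_2\})$) is a strictly stronger statement than what the proposition requires, and the paper never establishes it.

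The paper's actual closing step is softer and genuinely topological. Writing $\{X_i\}$ for the connected components of $\mathcal{C}^3(L,\ell)$ and $Y_i = \phi_{\{ux_2,uy_2\}}(X_i)$, the CCS is $\bigcup_i Y_i$ with each $Y_i$ (path-)connected; Lemma~\ref{lem:ux2_uy2_3-covering} forces all the $Y_i$ to have \emph{identical} coordinate projections, so every $Y_i$ lies in the same rectangle $R = Y_i(ux_2) \times Y_i(uy_2)$; each $Y_i$ then contains a curve $\tau_i \subset R$ whose two coordinate projections are the full intervals; and the Intermediate Value Theorem is applied to show that any two such curves in $R$ must share a point, giving $Y_i \cap Y_j \neq \emptyset$ and hence connectedness of the union. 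In short, the paper does not upgrade to a joint covering-map property; it shows only that the component images \emph{overlap pairwise}, which is all that connectedness demands. Your attempt to manufacture the stronger statement is where the argument breaks.
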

\begin{proof}
    Consider the links $L_0$, containing $u$ and $x_1y_1$, and $L_1$, containing $x_1y_1$ and $x_2y_2$, of $[G \cup f]$, and let $L = L_0 \cup L_1$.  
    By Lemma \ref{lem:links_3-connected}, we can assume wlog that $L_0$ and $L_1$ are both $3$-connected.  
    In this case, we will show that $(G, \{ux_2,uy_2\})$ is $3$-Cayley-connected.  
    Since $[G \cup f]$ has singleton wing tips and its pins are edges of $G$, it suffices to show that $(L, \{ux_2,uy_2\})$ is $3$-Cayley-connected.  
    By definition, we must show that the CCS $\Omega^3_{\{ux_2,uy_2\}}(L,\ell)$ is connected for any squared edge-length map $\ell$.  
    Consider the CS $\mathcal{C}^3(L,\ell)$ and its set of connected components $\{X_i\}$.  
    Let $Y_i$ be the image of $X_i$ under the Cayley map $\phi_{ux_2,uy_2}$.  
    Observe that $\Omega^3_{\{ux_2,uy_2\}}(L,\ell)$ is the union of the sets $Y_i$.  
    Furthermore, each $Y_i$ is connected since $X_i$ is connected and $\phi_{ux_2,uy_2}$ is continuous.  
    Therefore, we complete the proof by showing that $Y_i \cap Y_j \neq \emptyset$ for any $i$ and $j$.  
    
    Let $Y_i(ux_2)$ and $Y_j(ux_2)$ be the coordinate projections of $Y_i$ and $Y_j$, respectively, onto $ux_2$.  
    Also, let $Y_i(uy_2)$ and $Y_j(uy_2)$ be the coordinate projections of $Y_i$ and $Y_j$, respectively, onto $uy_2$.  
    Since $Y_i$ is connected, it contains a continuous curve $\tau_i$ whose coordinate projections are $Y_i(ux_2)$ and $Y_i(uy_2)$.  
    Similarly, $Y_j$ contains a continuous curve $\tau_j$ whose coordinate projections are $Y_j(ux_2)$ and $Y_j(uy_2)$.  
    Additionally, Lemma \ref{lem:ux2_uy2_3-covering} can be used to show that $Y_i(ux_2) = Y_j(ux_2)$ and $Y_i(uy_2) = Y_j(uy_2)$.  
    Therefore, $\tau_i$ and $\tau_j$ are contained in $Y_i(ux_2) \times Y_i(uy_2)$.  
    Thus, the Intermediate Value Theorem shows that these curves share some point, and so $Y_i \cap Y_j \neq \emptyset$.  
    This completes the proof.  
\end{proof}
 Finally, we  can complete the proof of the converse direction of Theorem \ref{thm:3-sip_characterization}.  
\begin{proof}[Proof of the converse direction of Theorem \ref{thm:3-sip_characterization}]
    Let $(G,f)$ be a graph-nonedge pair such that no atom of $G \cup f$ that contains $f$ has an $f$-preserving forbidden minor.  
    We will show that $(G,f)$ has the $3$-SIP by induction of the number $n$ of vertices in $G$.  
    When $n \leq 5$, our assumption implies that $G \cup f$ has no forbidden minor, and so Theorem \ref{thm:sitharam_willoughby} shows that $(G,f)$ has the $3$-SIP.  
    Next, assume that the converse direction of Theorem \ref{thm:3-sip_characterization} is true when $n \leq k$, for any integer $k \geq 5$, and we will prove it when $n = k + 1$.  
    If $G \cup f$ has no forbidden minor, then Theorem \ref{thm:sitharam_willoughby} shows that $(G,f)$ has the $3$-SIP.  
    Otherwise, Corollary \ref{cor:atom_gluing} allows us to assume wlog that $G \cup f$ is an atom.  
    Since atoms are $2$-connected, Proposition \ref{prop:G_is_winged_graph} shows that $(G,f)$ is a winged graph pair.  
    Consequently, Propositions \ref{lem:e_not_st}, \ref{lem:winged-graph_cut_vertices}, and \ref{prop:chained_minor} allow us to assume that $(G,f)$ has a chained $f$-winged graph minor $[G \cup f]$ with singleton wing tips.  
    Hence, let $\{x_iy_i\}$ be the ordered set of pins of $[G \cup f]$.  
    Proposition \ref{prop:x2y2_3-Cayley-connected} shows that $(G,\{ux_2,uy_2\})$ is $3$-Cayley-connected, where $f = uv$.  
    Therefore, by Lemma \ref{lem:tool}, it suffices to show that $(G \cup \{ux_2,uy_2\}, f)$ has the $3$-SIP, which wedo next.  
    
    Observe that for some vertex $w \in \{x_1,y_1\}$, $\{u,x_2,y_2\}$ is a clique $wv$-separator of $G \cup \{f,ux_2,uy_2\}$.  
    Hence, by Lemma \ref{lem:gluing}, it suffices to consider the $\{u,x_2,y_2\}$-component $H$ of $G \cup \{f,ux_2,uy_2\}$ that contains $f$ and show that $(H \setminus f, f)$ has the $3$-SIP.  
    If this is not the case, then, since $|V(H)| < |V(G)|$, the inductive hypothesis states that $H$ has an $f$-preserving forbidden minor.  
    However, since both $\{x_1,y_1\}$ and $\{x_2,y_2\}$ are minimal $uv$-separators of $G$, there exist paths in the $\{u,x_2,y_2\}$-component of $G \cup f$ that contains $w$ between $w$ and each vertex in the set $\{u,x_2,y_2\}$, and each path contains exactly one vertex in $\{u,x_2,y_2\}$.  
    This implies that some minor $[G \cup f]_1$ is $H$ with $[f]_1$ as $f$.  
    Consequently, $G \cup f$ has an $f$-preserving forbidden minor, which is a contradiction.  
    Thus, $(H \setminus f, f)$ has the $3$-SIP, and so the proof is complete.  
\end{proof}

\section{Proof of the forward direction of main Theorem \ref{thm:3-sip_characterization}}
\label{sec:forward}

We first outline the argument before filling in the details.  
See Figure \ref{fig:forward-outline} for a proof diagram.  
We use the same notational convention as in the converse direction of the proof of Theorem \ref{thm:3-sip_characterization}: since the notation $(G,f)$ is used to refer to general graph-nonedge pairs in the inductive and other steps of the proof, we designate  the notation $(G_{\star},f_{\star})$ throughout the proof for the particular pair in Theorem \ref{thm:3-sip_characterization}.  
Assume some atom of $G_{\star} \cup f_{\star}$ contains $f_{\star}$ and has an $f_{\star}$-preserving forbidden minor $[G_{\star} \cup f_{\star}]$, which we can ensure is an induced minor.  
We show that $(G_{\star},f_{\star})$ does not have the $3$-SIP.  
Proposition \ref{lem:no_type_2_no_3-sip}   handles the two cases where $G_{\star} \cup f_{\star}$ is one of the forbidden minors $K_5$ or $K_{2,2,2}$.  
Otherwise, if $f_{\star}$ is retained in the induced forbidden minor $[(G_{\star},f_{\star})]$, then  combining this proposition   with Lemma \ref{lem:cm_non-sip} completes the proof.  
Thus most of the proof focuses on the case where $f_{\star}$ is preserved but not retained, i.e., doubled, in a forbidden minor.  
In this case, Lemma \ref{lem:cm_non-sip} does not apply, and so the main obstacle is propagating the non-$3$-SIP certificate from the forbidden minor edges to $f_{\star}$.  

\begin{figure}[htbp]
    \centering
    \begin{subfigure}{0.19\linewidth}
        \centering
        \includegraphics[width=0.5\linewidth]{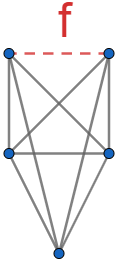}
        \subcaption{}
        \label{fig:k5_f}
    \end{subfigure}
    \begin{subfigure}{0.19\linewidth}
        \centering
        \includegraphics[width=0.8\linewidth]{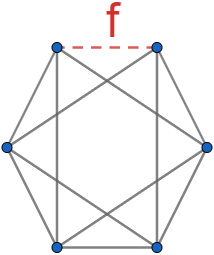}
        \subcaption{}
        \label{fig:k222_f}
    \end{subfigure}
    \begin{subfigure}{0.19\linewidth}
        \centering
        \includegraphics[width=0.6\linewidth]{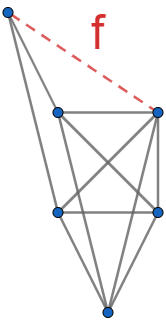}
        \subcaption{}
        \label{fig:k5_f_expanded}
    \end{subfigure}
    \begin{subfigure}{0.19\linewidth}
        \centering
        \includegraphics[width=0.8\linewidth]{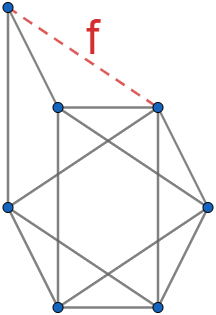}
        \subcaption{}
        \label{fig:k222_f_expanded}
    \end{subfigure}
    \caption{All minimal graph-nonedge pairs with no Type (2) edge.  
    See the discussion, Example \ref{ex:minimal}, Lemma \ref{lem:no_type_2_pairs_no_3-sip}, and Proposition \ref{lem:no_type_2_no_3-sip} in this section.  }
    \label{fig:no_type_2}
\end{figure}

Take for example $(G_{\star},f_{\star})$ to be the pair in either Figure \ref{fig:k5_f_expanded} or \ref{fig:k222_f_expanded}, where exactly one edge $e$ is contracted to reach $[G_{\star} \cup f_{\star}]$ and $[f_{\star}]^{-1}$ contains exactly one edge $f'$ other than $f_{\star}$.   
Proposition \ref{lem:no_type_2_no_3-sip} provides a squared edge-length map $\ell_1$ for $G_{\star} \setminus f'$ with $\ell_1(e) = 0$ and for which the CCS $\Omega^3_{\{f_{\star},f'\}}(G_{\star} \setminus f', \ell_1)$ is two disjoint intervals.  
See the plane where $\ell(e)=0$ in the schematic illustration in Figure \ref{fig:geom_idea}.  
Observe that any section of the CCS that fixes the length of $f'$ contains exactly one point, and so $\ell_1$ cannot witness the non-$3$-SIP property of $f_{\star}$.  
This property on sections holds precisely because $f_{\star}$ is doubled in $[G_{\star} \cup f_{\star}]$.  
We argue that there exists a squared edge-length map $\ell_2$ for $G_{\star} \setminus f'$ that agrees with $\ell_1$, except on $e$ which it maps to some real number $\epsilon > 0$, such that some section of the CCS $\Omega^3_{\{f_{\star},f'\}}(G_{\star} \setminus f', \ell_2)$ that fixes the length of $f'$ is the union of disjoint intervals.  
This section is schematically illustrated as  the intersection of the red fiber and the solid disks in Figure \ref{fig:geom_idea}.  

\medskip\noindent
Thus, letting $\ell'$ be the squared edge-length map for $G_{\star}$ that agrees with $\ell_2$ and sets $\ell'(f')$ to be the length of $f'$ in this section demonstrates that the CCS $\Omega^3_{f_{\star}}(G_{\star}, \ell')$ is two disjoint intervals, which  would complete the proof of the forward direction of the main Theorem \ref{thm:3-sip_characterization}. 

\begin{figure}
    \centering

    \pgfdeclarelayer{0}
    \pgfdeclarelayer{1}
    \pgfdeclarelayer{2}
    \pgfdeclarelayer{3}
    \pgfdeclarelayer{4}
    \pgfsetlayers{0,1,main,2,3,4}
    \begin{tikzpicture}
          \begin{pgfonlayer}{0}
              \draw[thick,->] (0,0,0) -- (4,0,0) node[below]{$\ell(f')$};
              \draw[thick,->] (0,0,0) -- (0,3,0) node[left]{$\ell(e)$};
              \draw[thick,->] (0,0,0) -- (0,0,5) node[below]{$\ell(f_{\star})$};
              
              \draw[thick,->] (0,2,0) -- (4,2,0);
              \draw[thick,->] (0,2,0) -- (0,2,5);
    
              \draw[thick,-] (-0.2,0) -- (0,0) node[left=0.1cm]{$\ell(e)=0$};
              \draw[thick,-] (-0.2,2) -- (0,2) node[left=0.1cm]{$\ell(e)=\epsilon$};
          \end{pgfonlayer}

          \begin{pgfonlayer}{1}
              \draw[blue!20,-] (1.5,-1.5,0) -- (0.68,0.93);
              \draw[blue!20,-] (2,-2,0) -- (2.8,1);
              \draw[color=blue!20] (1.73,1,0) ellipse (1.065 and 0.5);

              \draw[blue!20,-] (0.5,-0.5,0) -- (-0.32,1.93);
              \draw[blue!20,-] (1,-1,0) -- (1.8,2);
              \draw[color=blue!20] (0.73,2,0) ellipse (1.065 and 0.5);
          \end{pgfonlayer}

          \begin{pgfonlayer}{4}
              \draw[color=blue!20,dashed] (1.73,1,0) ellipse (1.065 and 0.5);
              \draw[color=blue!20,dashed] (0.73,2,0) ellipse (1.065 and 0.5);
          \end{pgfonlayer}

          \draw[blue,-] (0.5,-0.5,0) -- (1,-1,0);
          \draw[blue,-] (1.5,-1.5,0) -- (2,-2,0);

          \begin{pgfonlayer}{3}
              \filldraw[color=blue,fill=blue] (0.75,1.5,0) ellipse (0.9 and 0.4);
          \end{pgfonlayer}

          \filldraw[color=blue,fill=blue] (1.75,0.5,0) ellipse (0.9 and 0.4);

          \filldraw[color=red,fill=red] (1.2,0) circle (0.04);
          \draw[red,-] (1.2,0) -- (-0.35,-2);
          \filldraw[color=red,fill=red] (0.68,-0.68) circle (0.04);

          \begin{pgfonlayer}{3}
              \filldraw[color=red,fill=red] (2,2) circle (0.04);
              \draw[red,-] (2,2) -- (0.6,0.1);
              \draw[red,very thick,-] (1.65,1.52) -- (1.45,1.25);
              \draw[red,very thick,-] (1.10,0.78) -- (0.86,0.45);
          \end{pgfonlayer}
    \end{tikzpicture}



    \caption{See the proof sketch at the beginning of this section.  
    Shown are two solid cones (drawn hollow for clarity) that schematically illustrate the CCS $\Omega^3_{\{f_{\star},f',e\}}(G_{\star} \setminus \{f',e\}, \ell)$.  
    Two sections of this CCS that fix the length $\ell(e)$ for $e$ to be $0$ and $\epsilon > 0$, respectively, are shown in dark-blue: the former consists of two line-segments and the latter consists of two ellipses.  
    These sections  schematically illustrate the CCSs $\Omega^3_{\{f_{\star},f'\}}(G_{\star} \setminus f', \ell_1)$ and $\Omega^3_{\{f_{\star},f'\}}(G_{\star} \setminus f', \ell_2)$, respectively.  
    All sections of $\Omega^3_{\{f_{\star},f'\}}(G_{\star} \setminus f', \ell_1)$ that fix a length for $f'$ consist of a single point, as illustrated by the intersection of the red fiber with the CCS.  
    Some section of $\Omega^3_{\{f_{\star},f'\}}(G_{\star} \setminus f', \ell_2)$ that fixes a length for $f'$ consists of two disjoint intervals, as illustrated by the intersection of the red fiber with the CCS.  
    This section is the CCS $\Omega^3_{f_{\star}}(G_{\star},\ell')$, which demonstrates that $(G_{\star},f_{\star})$ does not have the $3$-SIP.  }
    \label{fig:geom_idea}
\end{figure}
While  this proof sketch is inherently geometric, we lack the tools to prove it directly using properties of the Euclidean distance cone.  
Instead, we  devise a lengthy combinatorial argument that reduces the problem to solving a few elementary geometry problems.  
First, Lemmas \ref{lem:cm_non-sip} and \ref{lem:gluing} (repeatedly used tools of independent interest stated in Section \ref{sec:results}) allow us to assume $(G_{\star},f_{\star})$ is \emph{minimal}  as defined below.
 Then Theorem \ref{thm:no_pair_in_D_has_3-sip}  establishes that minimal pairs do not have $3$-SIP, which then  completes the forward direction  
 of \ref{thm:3-sip_characterization}.

\begin{definition}[Types of edges of a graph-nonedge pair]
    Given a graph-nonedge pair $(G,f)$, the edges of $G$ can be partitioned into the following four classes.  
    Let $[G \cup f]$ be the minor obtained by contracting any one edge of $G$.  
    Then, the contracted edge is of Type:
    \begin{enumerate}
        \item if $[G \cup f]$ has no $[f]$-preserving forbidden minor, 
        \item if $[G \cup f]$ has some $[f]$-preserving forbidden minor, but no atom of $[G \cup f]$ that contains $[f]$ has such a minor, 
        \item if some atom of $[G \cup f]$ that contains $[f]$ has an $[f]$-preserving forbidden minor and $f$ is doubled in $[G \cup f]$, or 
        \item if none of the above apply, in which case the edge is called \emph{reducing}.
    \end{enumerate}
\end{definition}

\begin{definition}[Minimal pairs]
    \label{def:minimal_pair}
    A graph-nonedge pair $(G,f)$ is \emph{minimal} if $G \cup f$ is an atom that has an $f$-preserving forbidden minor and $G$ has no reducing edge.  
\end{definition}

\begin{example}
    \label{ex:minimal}
    A graph-nonedge pair $(G,f)$ is not minimal if it is the pair (i) in Figure \ref{fig:clique_sep} even though $G \cup f$ has an $f$-preserving forbidden minor, since $G \cup f$ is not an atom, or (ii) on the bottom-right of Figure \ref{fig:atoms} even though $G \cup f$ is an atom that has an $f$-preserving forbidden minor, since the edge whose contraction yields an $f$-retaining $K_5$ minor is reducing.  
    $(G,f)$ is minimal if it is any of the pairs in Figure \ref{fig:no_type_2} or if $G \cup f$ is any of the graphs in Figures \ref{fig:min_1}, \ref{fig:min_2}, \ref{fig:min_4}, \ref{fig:min_6}, \ref{fig:min_7}, or \ref{fig:min_9}.  
    \qed
\end{example}

\begin{remark}
    Notice that our definition of minimal pair does not prevent one minimal pair $(G,f)$  from properly containing another minimal pair $(G',f')$ where $G'$ is a proper induced subgraph of $G$.  
    For example, this is true if $G \cup f$ is the graph in Figure \ref{fig:min_2} and $G' \cup f'$ is the graph in Figure \ref{fig:min_1}; or $G \cup f$ is the graph in Figure \ref{fig:min_4} and $G' \cup f'$ is the graph in Figure \ref{fig:min_2} with $f$ and $f'$ being solid green.  
\end{remark}

Next, Theorem \ref{thm:no_pair_in_D_has_3-sip}, below, demonstrates that no minimal pair has the $3$-SIP.  
A squared edge-length map $\ell$ is a \emph{proper non-$d$-SIP} map for $(G,f)$ if $\ell$ maps each edge of $G$ to a positive real number and the CCS $\Omega^3_f(G,\ell)$ contains exactly two distinct positive real numbers.  
For example, a squared edge-length map for the pair in Figure \ref{fig:k5-e} assigns length $1$ to each in one $K_4$ subgraph of $G$ and length $2$ to the remaining three edges of $G$.  

\begin{figure}[b!]
    \centering
    \scalebox{0.8}{
    \hspace*{-3cm}
    \begin{tikzpicture}
        \node[draw] (forward) at (0,0) {Forward direction of Theorem \ref{thm:3-sip_characterization}};
        \node[draw] (min) at (0,-1.5) {Theorem \ref{thm:no_pair_in_D_has_3-sip}};
        \node[draw] (p8) at (-1,-4.5) {Proposition \ref{prop:IH}};

        \node[draw] (l39) at (-9,-24.5) {Lemma \ref{lem:G'_is_J+decorations}};
        \node[draw, align = left] (l38) at (0,-6.5) {
        Lemma \ref{lem:G'_f'_no_reducible_edge}\\
        {\small \textit{Lemmas \ref{lem:J+H'_atom},}}\\
        {\small \textit{\ref{lem:k_f-preserving_fbm},\ref{lem:I'_f'_cms},\ref{lem:gluing_min_k-clique-sum_graphs},\ref{lem:gluing_top_part_to_atom}}}
        };
        \node[draw, align = left] (l36) at (0,-12.5) {
        Lemma \ref{lem:f'_and_J}\\
        {\small \textit{Lemmas \ref{lem:I'_fbm_f_and_C'_nonedges_not_contracted},}}\\
        {\small \textit{\ref{lem:I'_f'_cms},\ref{lem:I'_sep_components_contain_C'-E},\ref{lem:I'_path},\ref{lem:path_in_graph_path_in_atom}}}
        };

        \node[draw, align = left] (l33) at (0,-14.5) {
        Lemma \ref{prop:C'_leq_4vert}\\
        {\small \textit{Lemmas \ref{lem:I'_sep_components_contain_C'-E},}}\\
        {\small \textit{\ref{lem:gluing_min_k-clique-sum_graphs},\ref{lem:path_in_graph_path_in_atom}}}
        };
        \node[draw, align = left] (l32) at (0,-16.5) {
        Lemma \ref{lem:C'_5_H'_wing}\\
        {\small \textit{Lemmas \ref{lem:G_not_k5_k222_u4v0},}}\\
        {\small \textit{\ref{lem:k5_k222_not_wing_fm_e_not_contracted},\ref{lem:clq_sep_props}}}
        };
        \node[draw, align = left] (l30) at (0,-20) {
        Lemma \ref{cor:top-level_structure}
        };
        \node[draw] (l29) at (0,-21.5) {Lemma \ref{lem:type_2_yields_smaller_expanded_f-component}};
        
        \node[draw, align = left] (l27) at (0,-24.5) {
        Lemma \ref{lem:I'_fbm_after_f-component_contraction}\\
        {\small \textit{Lemma \ref{lem:pair_e'_exists}}}
        };

        \node[draw] (l28) at (3,-23) {Lemma \ref{lem:G'_connected}};
        \node[draw] (l31) at (-1.5,-18.5) {Lemma \ref{lem:4-clique separator_of_H'_contains_f}};
        \node[draw] (l25) at (3,-24.5) {Lemma \ref{lem:expanded_clique_graphs_and_connections}};
        \node[draw] (l37) at (0,-8) {Lemma \ref{lem:C'_a-n}};
        \node[draw, align = left] (l34) at (3,-9.5) {
        Lemma \ref{prop:expC_3or4}\\
        {\small \textit{Lemmas \ref{lem:3top_uin}-\ref{lem:top_remaining}}}
        };

        \node[draw, align = left] (l51) at (3,-11) {
        Lemma \ref{lem:top-level_contraction_not_type_2}
        };

        \node[draw, align = left] (l50) at (3,-12.5) {
        Lemma \ref{lem:cliques_in_top}
        };

        \node[draw] (l49) at (3,-20) {
        Lemma \ref{lem:nonedge_C'_no_fbm_H'+f'}
        };

        \node[draw, align = left] (l48) at (3,-21.5) {
        Lemma \ref{lem:I'_one_vertex}\\
        {\small \textit{Lemma \ref{lem:path_in_graph_path_in_atom}}}
        };
        \node[draw] (p7) at (-3,-20) {Proposition \ref{lem:expanded_top-level_e-separating_clique_exists}};
        \node[draw] (l23) at (-3,-21.5) {Lemma \ref{lem:separating_minor_has_separating_pair}};

        \node[draw, align = left] (l22) at (-3,-23) {Lemma \ref{cor:minimal_k-clique-sum_component_containing_fbm_and_f}\\
        {\small \textit{Lemma \ref{lem:3-flat_fm_containing_e_pass_through_minimal_3-clique-sum_components}}}};

        \node[draw] (l21) at (-3,-24.5) {Lemma \ref{lem:preserving_forbidden_minor_implies_contraction_minor}};
        \node[draw, align = left] (p6) at (-6,-21.5) {
        Proposition \ref{lem:no_type_2_no_3-sip}\\
        {\small \textit{Lemma \ref{lem:no_type_2_neighborhoods}}}
        };
        \node[draw] (l19) at (-6,-23) {Lemma \ref{lem:no_type_2_pairs_no_3-sip}};

        \node[draw] (l18) at (-6,-24.5) {Lemma \ref{lem:4-cycle_non-sip}};
        \node[draw] (p9) at (3,-3) {Proposition \ref{prop:final_top_graphs}};
        \node[draw] (l41) at (1.5,-4.5) {Lemma \ref{lem:C'_a-i}};

        \node[draw] (l40) at (-9,-23) {Lemma \ref{lem:J_contains_C'_degree_1}};
        \node[draw, align = left] (p10) at (6,-23) {
        Proposition \ref{prop:IS}\\
        {\small \textit{Lemma \ref{lem:degree_3_decorations}}}
        };
        \node[draw, align = left] (l42) at (4,-4.5) {
        Lemma \ref{lem:C'_degree_1_reducible_3}\\
        {\small \textit{Lemmas \ref{lem:G_not_k5_k222_u4v0},\ref{lem:k5_k222_not_wing_fm_e_not_contracted},}}\\
        {\small \textit{\ref{lem:C'_degree_1_reducible_0}-\ref{lem:C'_degree_1_reducible_2},\ref{lem:gluing_min_k-clique-sum_graphs},\ref{lem:gluing_top_part_to_atom}}}
        };

         
         \draw[->] (min)--(forward);
         
         \draw[->] (p6) to [out=90,in=-150] (min);
         \draw[->] (p7) to [out=90,in=-140] (min);
         \draw[->] (p8)--(min);
         \draw[->] (p9)--(min);
         \draw[->] (p10) to [out=70,in=-5] (min);

         \draw[->] (l19)--(p6);

         \draw[->] (l18)--(l19);

         \draw[->] (l23)--(p7);

         \draw[->] (l22)--(l23);

         \draw[->] (l21)--(l22);

         \draw[->] (l38)--(p8);
         \draw[->] (l39) to [out=135,in=-140] (p8);

         \draw[->] (l36) to [out=110,in=-130] (l38);
         \draw[->] (l37)--(l38);
         \draw[->] (l22) to [out=139,in=-130] (l38);

         \draw[->] (l34)--(l37);
         \draw[->] (l36)--(l37);

         \draw[->] (l22) to [out=139,in=-135] (l36);
         \draw[->] (l33)--(l36);
         \draw[->] (l25) to [out=140,in=-45] (l36);

         \draw[->] (l33) to [out=48,in=-150] (l34);
         \draw[->] (l22) to [out=139,in=180] (l34);
         \draw[->] (l30) to [out=55,in=-135] (l34);
         \draw[->] (l51)--(l34);
         \draw[->] (l49) to [out=35,in=-45] (l34);

         \draw[->] (l23) to [out=150,in=170] (l51);
         \draw[->] (l33) to [out=40,in=-140] (l51);
         \draw[->] (l50)--(l51);

         \draw[->] (l22) to [out=-10,in=-100] (l50);
         \draw[->] (l25) to [out=48,in=-35] (l50);
         \draw[->] (l28) to [out=40,in=-40] (l50);
         \draw[->] (l33)--(l50);
         \draw[->] (l49)--(l50);

         \draw[->] (l25) to [out=140,in=-145] (l49);
         \draw[->] (l48)--(l49);

         \draw[->] (l28)--(l48);

         \draw[->] (l25) to [out=145,in=-50] (l33);
         \draw[->] (l32)--(l33);

         \draw[->] (l25) to [out=150,in=-60] (l32);
         \draw[->] (l31)--(l32);
         \draw[->] (l22) to [out=139,in=-180] (l32);
         \draw[->] (l30)--(l32);

         \draw[->] (l30)--(l31);

         \draw[->] (l27) to [out=135,in=-140] (l30);
         \draw[->] (l29)--(l30);

         \draw[->] (l25) to [out=155,in=-55] (l29);
         \draw[->] (l27)--(l29);
         \draw[->] (l28)--(l29);

         \draw[->] (l25) -- (l28);

         \draw[->] (l41)--(p9);
         \draw[->] (l42)--(p9);

         \draw[->] (l22) to [out=139,in=-120] (l42);
         \draw[->] (l38)--(l42);
         \draw[->] (l40)--(l42);

         \draw[->] (l37) to [out=30,in=-90] (l41);
         \draw[->] (l38)--(l41);
         \draw[->] (l40) to [out=90,in=-160] (l41);

         \draw[->] (l39)--(l40);
         \draw[->] (l25) to [out=157,in=-20] (l40);

         \draw[->] (l18) to [out=10,in=-170] (p10);
    \end{tikzpicture}
    }
    \caption{Proof outline for the forward direction of Theorem \ref{thm:3-sip_characterization}.  
    See the discussion in this section.  }
    \label{fig:forward-outline}
\end{figure}

\begin{theorem}[Every minimal pair has a proper non-$3$-SIP map]
\label{thm:no_pair_in_D_has_3-sip}
Every minimal graph-nonedge pair has a proper non-$3$-SIP map, and therefore does not have the $3$-SIP.  
\end{theorem}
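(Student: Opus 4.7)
The plan is to prove Theorem \ref{thm:no_pair_in_D_has_3-sip} by strong induction on $|E(G)|$, with the finitely many base cases being the minimal pairs in Figure \ref{fig:no_type_2}, i.e.\ those in which $G$ has no Type $(2)$ edge. In each base case the underlying graph is obtained from a forbidden minor ($K_5$ or $K_{2,2,2}$) by at most doubling the edge corresponding to $f$, and the CCS $\Omega^3_f(G,\ell)$ can be computed explicitly via Cayley--Menger determinants (for $K_5$-type) or the $K_{2,2,2}$ analysis of \cite{belk2007realizability1}. A direct calculation shows that a generic assignment of strictly positive squared lengths produces exactly two distinct attained values for $f$, giving the required proper non-$3$-SIP map.

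For the inductive step, suppose $(G_\star,f_\star)$ is a minimal pair possessing a Type $(2)$ edge $e$, and pick $f'\in[f_\star]^{-1}$ witnessing that $f_\star$ is doubled in some $f_\star$-preserving forbidden minor. The strategy is to construct a strictly smaller minimal pair $(G',f')$ by contracting $e$ and identifying $f'$ with the second copy of $f_\star$; apply the inductive hypothesis to obtain a proper non-$3$-SIP map $\ell'$ for $(G',f')$; and then lift $\ell'$ to a proper non-$3$-SIP map $\ell$ for $(G_\star,f_\star)$. Building $(G',f')$ requires locating a canonical top-level clique minimal separator of $G_\star\cup f_\star$ that separates $f_\star$ from the forbidden-minor substructure. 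The absence of reducing edges (Definition \ref{def:minimal_pair}) together with clique-gluing (Lemma \ref{lem:gluing}) ensures that the smaller pair is itself minimal and still carries a Type $(2)$ configuration relative to its own forbidden minor.

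The lifting step realizes the geometric picture in Figure \ref{fig:geom_idea}. First define $\ell_1$ on $G_\star\setminus f'$ by setting $\ell_1(e)=0$ and extending by $\ell'$ on the remaining edges. Since contracting $e$ recovers $(G',f')$, the CCS $\Omega^3_{\{f_\star,f'\}}(G_\star\setminus f',\ell_1)$ is essentially the graph of a continuous relation between $\ell(f_\star)$ and $\ell(f')$ lying in the hyperplane $\{\ell(e)=0\}$; by inductive hypothesis it consists of two disjoint arcs, but because $f_\star$ is doubled in the forbidden minor, no fiber over a fixed $\ell(f')$ already separates two values of $\ell(f_\star)$. Now perturb to $\ell_2$ with $\ell_2(e)=\epsilon>0$: continuity of the Cayley map and the Implicit Function Theorem thicken each arc into a two-dimensional stratum of the CCS, and for sufficiently small $\epsilon$ the projections of these two strata onto the $\ell(f_\star)$-axis remain disjoint. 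Fixing an appropriate value $\ell(f')=\ell'(f')$ then cuts the perturbed CCS in two disjoint intervals of $f_\star$-lengths, producing the desired proper non-$3$-SIP map $\ell$ for $(G_\star,f_\star)$.

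The main obstacle is the transversality assertion that the two strata do not coalesce in the $\ell(f_\star)$-direction under the $\epsilon$-perturbation. This reduces to a first-order nondegeneracy statement about the Jacobian of the Cayley map restricted to the fiber over the canonical separating clique. Because no edge of $G_\star$ is reducing, the structure of the subgraph above the separator is heavily constrained; enumerating the resulting cases (the hard combinatorial work of the argument) leaves only finitely many configurations to verify, each of which reduces to an elementary solid-geometry computation on a single tetrahedron sharing the separating clique with the rest of the graph. Checking each case closes the induction and proves the theorem.
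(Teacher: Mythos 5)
Your high-level structure matches the paper's: induction (the paper uses vertex count rather than edge count, but that is inessential), with base cases being the minimal pairs that have no Type~(2) edge (Proposition~\ref{lem:no_type_2_no_3-sip}, Figure~\ref{fig:no_type_2}), and the inductive step passing through a top-level expanded $f$-separating CMS to a smaller minimal pair $(J,f')$ (Propositions~\ref{lem:expanded_top-level_e-separating_clique_exists}, \ref{prop:IH}, \ref{prop:final_top_graphs}).

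The lifting step, however, has a genuine gap. You propose to realize the schematic picture of Figure~\ref{fig:geom_idea} literally: set $\ell_1(e)=0$, then perturb to $\ell_2(e)=\epsilon>0$, invoke the Implicit Function Theorem to ``thicken'' the two arcs into strata, and assert that the projections onto the $\ell(f_\star)$-axis stay disjoint for small $\epsilon$ via a first-order Jacobian nondegeneracy. This is precisely the direct geometric approach the paper states it cannot make rigorous: the disjointness of the two projected intervals is a global semi-algebraic property, not a local one, and you give no argument that the thickened strata do not overlap in the $\ell(f_\star)$-coordinate. Nothing in your proposal establishes the needed nondegeneracy, and it is unclear it even holds uniformly across all configurations. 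The paper sidesteps this entirely: it does not perturb at all. Instead, Propositions~\ref{prop:IS} and~\ref{prop:final_top_graphs} reduce the extension to finitely many explicit graphs for $H^{-1}$, and then Lemmas~\ref{lem:4-cycle_non-sip} and~\ref{lem:degree_3_decorations} construct \emph{exact} squared edge-length maps — e.g., by choosing the two new edge lengths to have sum and difference equal to the two values of $\ell(f')$ — so that each of the two values of $\ell(f')$ produces exactly one value of $\ell(f_\star)$ and these two values are distinct. No small-$\epsilon$ argument is needed, and the elementary geometry you allude to (a single tetrahedron sharing the separator) is indeed what is checked, but via explicit algebra, not transversality.

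A secondary issue: your base-case claim that ``a generic assignment of strictly positive squared lengths produces exactly two distinct attained values for $f$'' is false for $K_5$ — a generic length assignment to $K_5\setminus f$ typically has no $3$-realization at all, or has $\Omega^3_f$ equal to a nondegenerate interval. The paper uses the specific assignment $\ell\equiv 1$ for the $K_5$ case and the explicit $K_{2,2,2}$ construction from \cite{belk2007realizability1} (Lemma~\ref{lem:no_type_2_pairs_no_3-sip}). Your proposal needs to be similarly explicit, and needs to replace the perturbation/IFT step with the exact construction of Proposition~\ref{prop:IS}.
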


Next, we outline the proof of Theorem \ref{thm:no_pair_in_D_has_3-sip} using Propositions \ref{lem:no_type_2_no_3-sip}-\ref{prop:IS} below, which is by induction on the number $n$ of vertices in $G_{\star}$.  
The base case, where $G _{\star} \cup f_{\star}$ is $K_5$, is handled by Proposition \ref{lem:no_type_2_no_3-sip}.  
In the inductive step, this proposition lists all minimal graph-nonedge pairs with no Type (2) edge, shown in Figure \ref{fig:no_type_2}, and shows that they have proper non-$3$-SIP maps.  
A minimal pair whose graph has a Type (2) has the property that $f_{\star}$ is doubled in any forbidden minor in which it is preserved.  
This is the same predicament as in the above geometric proof sketch, which we now handle by exploiting the structure of minimal pairs with Type (2) edges.  
In particular, we show that such a pair contains nested smaller minimal pairs (containing different nonedges), the smallest being one of the pairs in Figure \ref{fig:no_type_2}.  
Hence, a natural first approach is to consider the set of all minimal pairs that properly contain exactly one minimal pair, and then proceed bottom-up.  
A subset of this set is shown in Figure \ref{fig:level1}; however, it is not clear how to produce the entire set.  
Instead, we proceed top-down as detailed in following example.  

\begin{figure}[htbp]
    \centering
    \begin{subfigure}[t]{0.19\linewidth}
        \centering
        \includegraphics[width=0.8\linewidth]{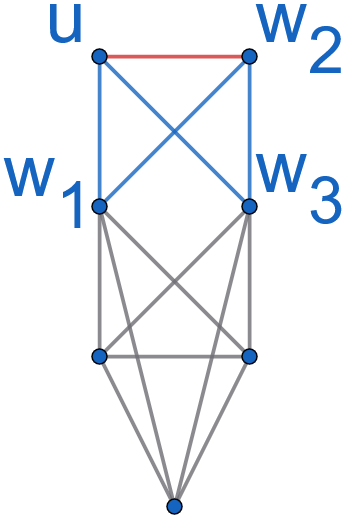}
        \label{fig:level1-1}
    \end{subfigure}
    \begin{subfigure}[t]{0.19\linewidth}
        \centering
        \includegraphics[width=0.8\linewidth]{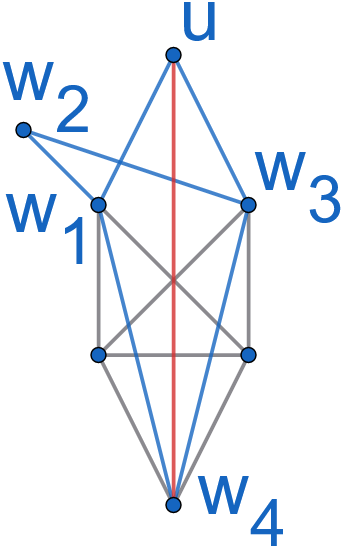}
        \label{fig:level1-2}
    \end{subfigure}
    \begin{subfigure}[t]{0.19\linewidth}
        \centering
        \includegraphics[width=\linewidth]{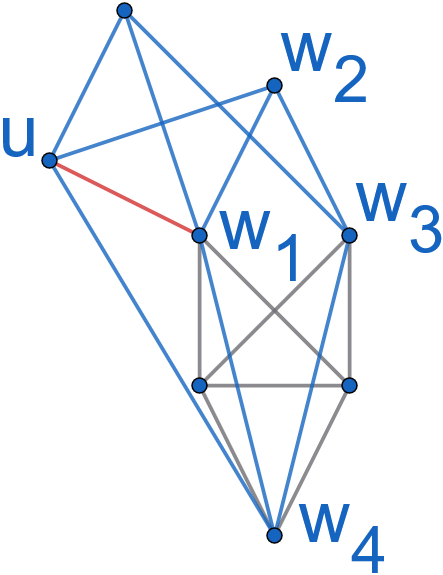}
        \label{fig:level1-3}
    \end{subfigure}
    \begin{subfigure}[t]{0.19\linewidth}
        \centering
        \includegraphics[width=\linewidth]{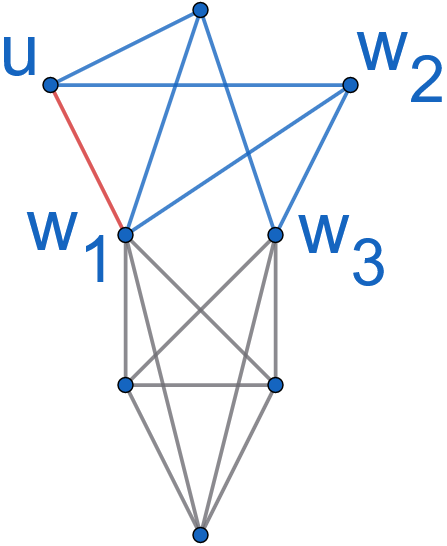}
        \label{fig:level1-4}
    \end{subfigure}
    \caption{Minimal pairs (nonedge in red) that properly contain exactly one minimal pair.  
    From left to right, the subgraphs of red and blue edges are the graphs in Figures \ref{fig:3top_vin_uonly_2}, \ref{fig:cycle_vin_uonly_2}, \ref{fig:cycle_vin_uplus_3}, and \ref{fig:3top_vin_uplus_2}, respectively.  
    See the proof sketch at the beginning of this section.  }
    \label{fig:level1}
\end{figure}

Let $G _{\star} \cup f_{\star}$ be the graph in Figure \ref{fig:min_2} with $f_{\star}$ being the green edge, let $[G _{\star} \cup f_{\star}]$ be the minor shown in Figure \ref{fig:min_3}, which is obtained by contracting a Type (2) edge; and let $C$ and $M$ be the purple and $K_5$ subgraphs of $[G _{\star} \cup f_{\star}]$, respectively.  
Note that $C$ is a CMS that separates $f_{\star}$ from some the forbidden minor $M$.  
Proposition \ref{lem:expanded_top-level_e-separating_clique_exists} guarantees the existence of such a CMS.  
Let $H$ be the $C$-component of $[G _{\star} \cup f_{\star}]$ that contains $f^{\star}$, shown in the dashed black box.  
Consider the subgraphs $C^{-1}$ and $H^{-1}$ of $G _{\star} \cup f_{\star}$, and let $G' = (G _{\star} \cup f_{\star}) \setminus (H^{-1} \setminus C^{-1})$.  
Note that $G'$ contains the smaller minimal pair in Figure \ref{fig:min_1}.  
Choosing the CMS $C$ carefully, as we have, ensures that $H^{-1}$ is not a subgraph of any minimal pair properly contained in $(G_{\star},f_{\star})$.  
Figures \ref{fig:min_4} and \ref{fig:min_5} illustrate CMSs with and without this property.  
This very strong property forces a special structure on $H^{-1}$ described in Proposition \ref{prop:final_top_graphs}: it is one of the graphs in Figures \ref{fig:3top_vin_uonly_2}-\ref{fig:3top_vin_uplus_2}.  
The proof of this proposition is highly technical and is given in Section \ref{prop:final_top_graphs}.  
We also get Proposition \ref{prop:IH}, which states that $(G',f')$ has a proper non-$3$-SIP map $\ell'$, where $f'$ is the nonedge of $C^{-1}$ in our example.  
This argument has reduced our initial problem of showing $(G_{\star}, f_{\star})$ does not have the $3$-SIP to solving elementary geometry problems about the graphs in Figures \ref{fig:3top_vin_uonly_2}-\ref{fig:3top_vin_uplus_2}.  
These problems are detailed and solved in Proposition \ref{prop:IS}, which extends $\ell'$ to a proper non-$3$-SIP map $\ell$ for $(G_{\star}, f_{\star})$. This completes the proof outline of  Theorem \ref{thm:no_pair_in_D_has_3-sip}.

\begin{remark}
    Figure \ref{fig:level1} shows minimal pairs that properly contain exactly one minimal pair.  
    Observe that all but one of the graphs in Figures \ref{fig:3top_vin_uonly_2}-\ref{fig:3top_vin_uplus_2} appear as $H^{-1}$ in these pairs.  
    We were not able to find an example of a minimal pair with $H^{-1}$ being the graph in Figure \ref{fig:chord_vin_uonly_2}, which possibly means that this graph cannot be $H^{-1}$ for any minimal pair or that it only appears as $H^{-1}$ for sufficiently large minimal pairs.  
    Our top-down proof covers both cases.  
\end{remark}

We now prove the above-mentioned Propositions \ref{lem:no_type_2_no_3-sip}-\ref{prop:IS}, starting with Proposition \ref{lem:no_type_2_no_3-sip}, which lists all minimal pairs with no Type (2) edge and shows they each have a proper non-$3$-SIP map.  
The proof requires Lemmas \ref{lem:no_type_2_pairs_no_3-sip} and \ref{lem:no_type_2_neighborhoods}, below.  
Lemma \ref{lem:no_type_2_pairs_no_3-sip} is proved using Lemma \ref{lem:4-cycle_non-sip}, also below.  

\begin{lemma}[Transferring non-$3$-SIP through $K_4$]
    \label{lem:4-cycle_non-sip}
    Let $G$ be obtained from $K_4$ be deleting one of its edges $f$, let $H$ be one of its $K_3$ subgraphs, and let $f'$ be the edge of $H$ that does not share an endpoint with $f$.  
    Also, let $\ell_1$ and $\ell_2$ be squared edge-length maps for $H$ that assign positive real numbers to each edge and differ only in the numbers they assign to $f'$.  
    If the linkages $(H,\ell_1)$ and $(H,\ell_2)$ both have some $3$-realization, then there exist squared edge-length maps $\ell'_1$ and $\ell'_2$ for $G$ such that 
    \begin{enumerate}[(i)]
        \item they agree with each other on the edges not in $H$ and assign these edges positive real numbers, 
        \item $\ell'_i$ agrees with $\ell_i$ on the edges of $H$, and 
        \item the CCSs $\Omega^3_f(G,\ell'_1)$ and $\Omega^3_f(G,\ell'_2)$ each contain exactly one positive value, and these values are distinct.
    \end{enumerate}
\end{lemma}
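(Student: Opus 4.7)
The plan is to choose $\ell'(bc)$ and $\ell'(bd)$ so that the spheres of radii $\sqrt{\ell'(bc)}$ and $\sqrt{\ell'(bd)}$ centred at $c$ and $d$ are simultaneously tangent in both $\ell_1$- and $\ell_2$-configurations. Tangency forces $b$ onto the line through $c$ and $d$ in every 3-realization, so the only remaining flexibility (rotating $a$ about that line) preserves $|ab|$. Hence each CCS $\Omega^3_f(G,\ell'_i)$ will collapse to a single value, and what remains is to verify it is positive and that the two values differ.

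Label the vertices of $H$ as $a,c,d$ with $f'=cd$ and let $b$ be the fourth vertex of $G$, so $f=ab$ and the edges of $G$ outside $H$ are $bc,bd$. Set $z_i := \sqrt{\ell_i(f')}$ with WLOG $z_1>z_2>0$, and write $\ell(ac):=\ell_1(ac)=\ell_2(ac)$ and $\ell(ad):=\ell_1(ad)=\ell_2(ad)$. Pick $\alpha := (z_1+z_2)/2$ and $\beta := (z_1-z_2)/2$, both positive, and set $\ell'(bc):=\alpha^2$, $\ell'(bd):=\beta^2$. Then $\alpha+\beta=z_1$ is the external-tangency condition for $\ell_1$ (placing $b$ on segment $cd$) and $\alpha-\beta=z_2$ is the internal-tangency condition for $\ell_2$ (placing $b$ beyond $d$). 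A short coordinate computation with $c=(0,0,0)$, $d=(z_i,0,0)$, and $a$ in the upper $xy$-plane gives
\[
|a_i b|^2 \;=\; \ell(ac) + \alpha^2 - \alpha z_i - \frac{\alpha\bigl(\ell(ac)-\ell(ad)\bigr)}{z_i},
\]
and the difference $|a_1 b|^2 - |a_2 b|^2$ factors as a non-zero multiple of $\bigl(z_1 z_2 - (\ell(ac)-\ell(ad))\bigr)$.

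This construction fails distinctness only when $z_1 z_2 = \ell(ac)-\ell(ad)$, which forces $\ell(ac)>\ell(ad)$. I expect this to be the main obstacle and plan to handle it via a complementary (mirror) construction: swap the roles of $bc$ and $bd$, setting $\ell'(bc):=\beta^2$ and $\ell'(bd):=\alpha^2$. Then $b$ sits on segment $cd$ for $\ell_1$ but beyond $c$ (rather than beyond $d$) for $\ell_2$, and the analogous computation yields a failure condition $z_1 z_2 = \ell(ad)-\ell(ac)$ of the opposite sign. Since $z_1 z_2 > 0$, the two failure conditions cannot both hold, so choosing the construction whose distinctness factor is non-zero always yields $|a_1 b|^2 \neq |a_2 b|^2$.

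Conditions (i) and (ii) are immediate from the construction. Positivity $|ab|^2 > 0$ reduces to $a \neq b$: since $a$ lies off the $x$-axis whenever triangle $acd$ is non-degenerate, this holds generically, and the degenerate edge cases (where some $\ell_i$ flattens $acd$ and $a$ accidentally coincides with $b$'s pinned position) are resolved by the same construction-swap, which places $b$ on the opposite side of line $cd$. The technical crux of the proof is therefore the coordinate computation combined with the two-construction switch, which together simultaneously force single-valuedness, distinctness, and positivity of the CCS values.
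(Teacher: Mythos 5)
Your proof is correct and follows the same strategy as the paper: choose $\ell'(bc)$ and $\ell'(bd)$ so that $b$ is collinear with $c,d$ in every realization of both linkages, which kills the dihedral flex and collapses each CCS to a single value, and swap $\ell'(bc)\leftrightarrow\ell'(bd)$ to escape the one degenerate case. The added value of your writeup is that it supplies what the paper dismisses as ``easy to see,'' and in doing so makes visible two points that the paper's wording glosses over. First, the paper's displayed relations $\ell_1(v_3v_4)=|\ell'(v_1v_3)-\ell'(v_1v_4)|$ and $\ell_2(v_3v_4)=\ell'(v_1v_3)+\ell'(v_1v_4)$ apply the sum and difference to the squared lengths; taken literally these are Pythagorean rather than degenerate-triangle relations and do not force $v_1,v_3,v_4$ onto a line, whereas your conditions $\alpha+\beta=z_1$ and $\alpha-\beta=z_2$ on actual lengths are the correct collinearity criterion and are surely the intended reading. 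Second, and more substantively, the paper supports its swap only with the remark that it can arrange $\ell'(v_1v_3)\neq\ell_1(v_2v_3)$, which rules out the coincidence $a=b$ (positivity) but does not by itself make the two CCS values distinct; you compute both values explicitly, factor the difference as a nonzero multiple of $z_1z_2-(\ell(ac)-\ell(ad))$, observe that the mirror construction's factor is $z_1z_2+(\ell(ac)-\ell(ad))$ so that, with $z_1z_2>0$, at most one factor can vanish, and note correctly (if a little informally) that positivity failure in each construction forces that construction's factor to vanish, so the same switch secures distinctness and positivity together. Same route, then, but yours is the proposal that actually carries the argument through.
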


\begin{proof}
    Let the vertex set of $G$ be $\{v_i\}$, where $H$ does not contain $v_1$, $f = v_1v_2$, and $f' = v_3v_4$.  
    Set $\ell_1'(v_1v_3)$ and $\ell_1'(v_1v_4)$ to be positive numbers such that $\ell_1(v_3v_4) = |\ell_1'(v_1v_3) - \ell_1'(v_1v_4)|$ and $\ell_2(v_3v_4) = \ell_1'(v_1v_3) + \ell_1'(v_1v_4)$.  
    By swapping the values of $\ell_1'(v_1v_3)$ and $\ell_1'(v_1v_4)$, we can additionally ensure that $\ell_1'(v_1v_3) \neq \ell_1(v_2v_3)$.  
    It is easy to see that Conditions (i)-(iii) are satisfied, and so the lemma is proved.  
\end{proof}

\begin{lemma}[Every pair in Figure \ref{fig:no_type_2} has a proper non-$3$-SIP map]
    \label{lem:no_type_2_pairs_no_3-sip}
    Every graph-nonedge pair in Figure \ref{fig:no_type_2} has a proper non-$3$-SIP map.  
\end{lemma}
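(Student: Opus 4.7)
The plan is to handle each of the four minimal pairs in Figure \ref{fig:no_type_2} by explicit construction of a proper non-$3$-SIP map. I will treat the two ``base'' pairs where $G_{\star} \cup f_{\star} \in \{K_5, K_{2,2,2}\}$ first, then extend to the two ``expanded'' pairs by leveraging Lemma \ref{lem:4-cycle_non-sip}.

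For Case (a), $G_{\star} \cup f_{\star} = K_5$, I follow the construction previewed in the text after Theorem \ref{thm:no_pair_in_D_has_3-sip}. Fix a $K_4$ subgraph $K$ of $G_{\star}$ containing exactly one endpoint $v_2$ of $f_{\star}$, assign length $1$ to every edge of $K$ (realizing $V(K)$ as a regular tetrahedron in $\mathbb{R}^3$), and assign a common length $\lambda > 0$ with $\lambda \neq 1$ to each of the three edges $v_1 v_i$ with $v_i \in V(K) \setminus \{v_2\}$. The apex $v_1$ then lies at one of exactly two reflected positions across the plane of its three neighbors, producing exactly two distinct positive values for $\|v_1 - v_2\|^2$, so $|\Omega^3_{f_{\star}}(G_{\star}, \ell)| = 2$; choosing $\lambda$ generically makes both values distinct and positive.

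For Case (b), $G_{\star} \cup f_{\star} = K_{2,2,2}$, with classes $\{a_1,a_2\}$, $\{b_1,b_2\}$, $\{c_1,c_2\}$ and $f_{\star} = a_1 b_1$, the argument is more delicate since $K_{2,2,2}$ has no $K_4$ subgraph. Place $a_2, b_2, c_1, c_2$ in a suitably chosen rigid configuration and observe that, in $G_{\star}$, the neighbor set of $a_1$ is $\{b_2, c_1, c_2\}$ and the neighbor set of $b_1$ is $\{a_2, c_1, c_2\}$, so each of $a_1$ and $b_1$ is determined up to reflection across the plane through its three neighbors. Choose the rigid configuration and the eleven edge-lengths of $G_{\star}$ using the non-$3$-flattenability construction in the proof of Theorem~8 of \cite{belk2007realizability1} so that among the (at most) four reflection combinations, exactly two give 3-realizations of $(G_{\star},\ell)$ and these two yield distinct positive values of $\|a_1 - b_1\|^2$.

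For Cases (c) and (d), the expanded pairs, the extra edge $e$ whose contraction recovers the base forbidden minor, together with the property $[f_{\star}]^{-1} = \{f_{\star}, f'\}$ for a single edge $f'$ of $G_{\star}$, forces the existence of a $K_4 - f_{\star}$ subgraph of $G_{\star} \cup f_{\star}$ on the endpoints of $e$ together with the other endpoint of $f'$ and a common neighbor. Let $H$ be the $K_3$ subgraph of this $K_4 - f_{\star}$ that avoids an endpoint of $f_{\star}$, and let $g$ be the edge of $H$ not sharing an endpoint with $f_{\star}$. From the two realizations produced in Case (a) or (b), extract two squared edge-length maps $\ell_1, \ell_2$ on $H$ that differ only on $g$, each with a $3$-realization; apply Lemma \ref{lem:4-cycle_non-sip} to produce $\ell'_1, \ell'_2$ on $K_4 - f_{\star}$ agreeing except on $g$, each forcing a single positive value of $f_{\star}$ in $\Omega^3_{f_{\star}}$, with these two values $d_1 \neq d_2$. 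Extend both $\ell'_i$ consistently to a single length map $\ell$ on all of $G_{\star}$ using the common values of the two base-case realizations on edges outside $K_4 - f_{\star}$; the CS $\mathcal{C}^3(G_{\star}, \ell)$ then has two components realizing $d_1$ and $d_2$, giving $\Omega^3_{f_{\star}}(G_{\star}, \ell) = \{d_1, d_2\}$.

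The main obstacle is Case (b): the absence of a $K_4$ subgraph in $K_{2,2,2}$ rules out the ``reflect one vertex across a rigid tetrahedral face'' argument of Case (a), so the rigid configuration of $\{a_2, b_2, c_1, c_2\}$ must be tuned so that the four reflection combinations of $(a_1, b_1)$ produce exactly two distinct values of $\|a_1 - b_1\|^2$ rather than one, three, or four. A secondary difficulty in Cases (c) and (d) is verifying that the extension from $K_4 - f_{\star}$ to all of $G_{\star}$ does not introduce spurious components of the CS carrying additional $f_{\star}$-values beyond $\{d_1, d_2\}$.
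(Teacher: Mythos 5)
Your proposal follows essentially the same route as the paper: an explicit construction for the $K_5$ pair, a citation to the construction in the proof of Theorem~8 of \cite{belk2007realizability1} for the $K_{2,2,2}$ pair, and Lemma~\ref{lem:4-cycle_non-sip} to transfer the two base non-$3$-SIP maps to the expanded pairs in Figures~\ref{fig:k5_f_expanded} and \ref{fig:k222_f_expanded}. One small point in your favor: the paper's proof assigns squared length~$1$ to \emph{every} edge of $K_5 \setminus f_\star$, which makes $\Omega^3_{f_\star} = \{0, 8/3\}$ (when both apexes sit on the same side of the rigid triangle they coincide), so $0$ appears and the map is not strictly \emph{proper}; your choice of $\lambda \neq 1$ on the three edges outside the rigid $K_4$ avoids this and matches the worked example given just before the statement of Theorem~\ref{thm:no_pair_in_D_has_3-sip}.
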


\begin{proof}
    For the graph-nonedge pair $(G,f)$ in Figure \ref{fig:k5_f}, let $\ell$ be the squared edge-length map that assigns $1$ to each edge of $G$.  
    Then, $\Omega^3_f(G,\ell)$ clearly contains exactly two points.  
    A proper non-$3$-SIP map $\ell'$ for the graph-nonedge pair $(G',f')$ in Figure \ref{fig:k222_f} is constructed in the proof of Theorem 8 in \cite{belk2007realizability1}.  
    Lemma \ref{lem:4-cycle_non-sip}, $\ell$, and the two values in $\Omega^3_f(G,\ell)$ can be used to construct a non-$3$-SIP squared edge-length map for the graph-nonedge pair in Figure \ref{fig:k5_f_expanded}.  
    Similarly, Lemma \ref{lem:4-cycle_non-sip}, $\ell'$, and the two points in $\Omega^3_{f'}(G',\ell')$ can be used to construct a non-$3$-SIP squared edge-length map for the graph-nonedge pair in Figure \ref{fig:k222_f_expanded}.  
\end{proof}

\begin{lemma}[No Type (2) edge properties]
    \label{lem:no_type_2_neighborhoods}
    Let $(G,f)$ be a minimal graph-nonedge pair, where $f=uv$, let $[G \cup f]$ be an $f$-preserving forbidden minor, and let $H = [G \cup f] \setminus \{[u],[v]\}$.  
    If $G$ has no Type (2) edge, then the following statements are true:
    \begin{enumerate}[(i)]
        \item any edge of $G$ that does not share an endpoint with $f$ is preserved in $[G \cup f]$, 
        \item for any vertex $[w] \in V(H)$, $|[w]^{-1}|=1$, 
        \item any vertex $x \in ([u]^{-1} \cup [v]^{-1}) \setminus \{u,v\}$ neighbors $u$, $v$, and at least two vertices in the subgraph $H^{-1}$ of $G \cup f$, but no vertex in $[x]^{-1} \setminus \{u,v\}$, and 
        \item the sets $[u]^{-1}$ and $[v]^{-1}$ each have cardinality at most $2$.
    \end{enumerate}
\end{lemma}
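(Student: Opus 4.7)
The plan is to exploit the observation that in a minimal pair $(G,f)$ with no Type (2) edge, every edge of $G$ is forced to be Type (1) or Type (3). The key repeated observation I will use throughout is this: if an edge $e\in E(G)$ is contracted in $[G\cup f]$, then $[G\cup f]$ is a further minor of the one-edge contraction $[G\cup f]_e$, so $[G\cup f]_e$ admits an $[f]_e$-preserving forbidden minor and $e$ cannot be Type (1); hence $e$ must be Type (3). By the definition of Type (3), $f$ must then be doubled in $[G\cup f]_e$, which forces $e$ to share an endpoint with $f=uv$ and its other endpoint to be adjacent in $G$ to the opposite endpoint of $f$.

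Statements (i) and (ii) follow immediately from this observation. An edge avoiding $\{u,v\}$ cannot double $f$ when contracted, so it cannot be Type (3) and hence must be preserved in $[G\cup f]$, giving (i). For (ii), vertices of $[w]^{-1}$ with $[w]\in V(H)$ avoid $\{u,v\}$, so any internal edge of $[w]^{-1}$ would be preserved by (i); but connectivity of $[w]^{-1}$ together with $|[w]^{-1}|\geq 2$ produces an internal edge that must be contracted, a contradiction.

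For (iii), applying the same observation to edges inside $[u]^{-1}$ not incident to $u$ precludes the existence of any such edge, so $[u]^{-1}$ induces a star at $u$; this gives both that $x$ is adjacent to $u$ (by connectivity) and that $x$ has no neighbor in $[u]^{-1}\setminus\{u\}$. The contracted edge $ux$ is Type (3), and its doubling condition forces $xv\in E(G)$. For the two $H^{-1}$-neighbors, I argue by contradiction using explicit alternative-minor constructions. If $x$ has no neighbor in $H^{-1}$, then either $\{u,v\}$ is a clique separator of $G\cup f$ (violating the atom hypothesis), or some $y\in[v]^{-1}\setminus\{v\}$ is adjacent to $x$; in the latter case, moving $x$ into $[v]^{-1}$ produces a valid alternative $f$-preserving forbidden minor (connectivity of $[v]^{-1}\cup\{x\}$ uses $xv\in E(G)$, and no required minor-edge is lost because $x$ has no $H^{-1}$-neighbors), which certifies that contracting $xy$ yields an $[f]_{xy}$-preserving forbidden minor without doubling $f$, so $xy$ is reducing and contradicts minimality. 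If $x$ has exactly one $H^{-1}$-neighbor $w$, the same recipe applies by relocating $x$ into $[w]^{-1}$ and exhibiting $xw$ as reducing.

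For (iv), my plan is analogous: assuming $|[u]^{-1}|\geq 3$, I select two leaves $x_1,x_2\in[u]^{-1}\setminus\{u\}$ from (iii)'s star structure and relocate $x_1$ into the minor-vertex set of one of its $H^{-1}$-neighbors, using (iii)'s two-neighbor guarantee on $x_2$ together with a pigeonhole argument on $|V(H)|\in\{3,4\}$ to ensure the resulting partition is still a valid $f$-preserving forbidden minor, and then identify the corresponding edge as reducing to contradict minimality. The hard part throughout is precisely the alternative-minor construction in the last sub-claim of (iii) and in (iv): one must verify that the relocated vertex set remains connected, that every minor-edge incident to $[u]$ is still realized by the surviving $[u]^{-1}$-vertices, and that the final contraction leaves $f$ undoubled so the reducing-edge certificate actually sticks. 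Since the neighborhood structure in $H$ differs between $K_5$ and $K_{2,2,2}$, separate case analyses of the two forbidden minors will likely be needed.
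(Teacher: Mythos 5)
Your proposal is correct and follows essentially the same route as the paper: deduce that every contracted edge of $G$ must be Type (3), hence incident to $f$ with its other endpoint adjacent to the opposite endpoint of $f$, and then use vertex-exchange constructions to show that a leaf of the star $[u]^{-1}$ with fewer than two $H^{-1}$-neighbors, or a third vertex in $[u]^{-1}$, yields an alternative $f$-preserving forbidden minor that contracts a non-$f$-incident edge, contradicting minimality. The paper packages the contradiction as a violation of Statement (i) rather than re-deriving ``not Type (1), not Type (3)'' for each exchanged edge, but the content is identical.
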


The proof of Lemma \ref{lem:no_type_2_neighborhoods} is given in \ref{sec:prop_6}.  

\begin{proposition}[No Type (2) edge implies non-$3$-SIP map]
    \label{lem:no_type_2_no_3-sip}
    Figure \ref{fig:no_type_2} shows all minimal graph-nonedge pairs that have no Type (2) edge, each of which has a proper non-$3$-SIP map.  
\end{proposition}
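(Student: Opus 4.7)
The plan is to observe that the ``proper non-$3$-SIP map'' assertion is precisely Lemma~\ref{lem:no_type_2_pairs_no_3-sip}, so the real content is the enumeration: showing that every minimal graph-nonedge pair $(G,f)$ with no Type~(2) edge is isomorphic to one of the four pairs in Figure~\ref{fig:no_type_2}. Let $f = uv$, fix an $f$-preserving forbidden minor $[G \cup f] \in \{K_5, K_{2,2,2}\}$, and put $H = [G \cup f] \setminus \{[u],[v]\}$. Applying Lemma~\ref{lem:no_type_2_neighborhoods} immediately gives the structural backbone: every vertex of $H$ has singleton preimage in $G$ (so $H^{-1} \cong H$), $|[u]^{-1}|, |[v]^{-1}| \le 2$, and any extra vertex $x \in [u]^{-1} \setminus \{u\}$ is adjacent to $u$, $v$, and at least two vertices of $H^{-1}$, with the symmetric statement for $y \in [v]^{-1} \setminus \{v\}$.

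The plan is then to case-split on $(|[u]^{-1}|, |[v]^{-1}|)$ together with the isomorphism type of $[G \cup f]$. In the $(1,1)$ case every preimage is a singleton, so $G \cup f \cong [G \cup f]$ and we recover Figures~\ref{fig:k5_f} and~\ref{fig:k222_f} directly. In the $(2,1)$ case write $[u]^{-1} = \{u,x\}$ and let $N_u, N_x \subseteq V(H)$ denote the respective neighborhoods in $H$; realizability of the minor forces $N_u \cup N_x = V(H)$, while Lemma~\ref{lem:no_type_2_neighborhoods}(iii) forces $|N_x| \geq 2$. A finite enumeration of the possibilities for $(N_u, N_x)$ is then trimmed by two constraints: the atom hypothesis eliminates configurations where $u$'s neighborhood in $G \cup f$ induces a clique separator of the $u$-vertex (which happens whenever $|N_x|=3$ and $|N_u| < |V(H)|$), and the no-reducing-edge hypothesis eliminates configurations where $N_u \cap N_x \neq \emptyset$ or where $|N_x|=3$ with $N_u = V(H)$. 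The unique surviving configuration in each forbidden-minor type recovers Figures~\ref{fig:k5_f_expanded} and~\ref{fig:k222_f_expanded}; the symmetric $(1,2)$ case yields the same pairs, and the $(2,2)$ case is ruled out by combining these arguments for both extra vertices.

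The main obstacle is the reducing-edge analysis. The driving idea: if there exist $h_i \in N_u \setminus N_x$ and $h_j \in N_x \setminus N_u$ that can be ``bridged'' via an overlap vertex $h_k \in N_u \cap N_x$ (or if $N_u = V(H)$ with $|N_x|=3$), then contracting a carefully chosen $H$-edge such as $h_i h_j$ produces a minor in which the merged vertex $h_{ij}$ inherits edges to both $u$ (via $h_i$) and $x$ (via $h_j$), thereby restoring the full $K_5$ or $K_{2,2,2}$ structure on a vertex set that still contains $u$ and $x$ as separate vertices. Crucially, in this restored forbidden minor, the preimages $[u]^{-1} = \{u\}$ and $[v]^{-1}=\{v\}$ are singletons with no $G$-edge between them (since $uv = f$ is a nonedge of $G$), so $f$ is retained, not doubled---making the contraction a Type~(4) reducing edge and contradicting minimality. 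Tracking through every configuration (including the $K_{2,2,2}$ case, where $H = K_4 - c_1c_2$ carries the nonedge constraint that $u,x$ cannot neighbor $[a_2]^{-1}$) leaves precisely the four pairs in Figure~\ref{fig:no_type_2}.
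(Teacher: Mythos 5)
Your high-level plan (offload the existence of the non-$3$-SIP maps to Lemma~\ref{lem:no_type_2_pairs_no_3-sip}, use Lemma~\ref{lem:no_type_2_neighborhoods} for the structural backbone, then case-split on the preimage sizes and on the neighborhoods $N_u,N_x$ of $u,x$ in $H^{-1}$) matches the paper's strategy. But there is a concrete gap in the elimination step you flag as the main obstacle. You claim that the no-reducing-edge hypothesis handles every configuration with $N_u \cap N_x \neq \emptyset$, and you justify this via contracting an $H$-edge $h_ih_j$ with $h_i \in N_u\setminus N_x$, $h_j \in N_x\setminus N_u$, and a third overlap vertex $h_k \in N_u\cap N_x$ (or via the separate case $N_u = V(H)$ with $|N_x|=3$). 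This covers neither the configuration where $N_x\subsetneq N_u$, e.g.\ $N_u = V(H)$ and $|N_x|=2$, where $N_x\setminus N_u=\emptyset$ so no $h_j$ exists, nor does your atom criterion ``$|N_x|=3$ and $|N_u|<|V(H)|$'' catch it (here $|N_x|=2$). That configuration is perfectly consistent with Lemma~\ref{lem:no_type_2_neighborhoods}, so it must be ruled out, and the contraction needed is of the form $xh_1$ (an edge from $x$ into $H$), not an $H$-edge---or, equivalently, one should observe that $x$'s neighborhood, rather than $u$'s, is a clique separator. The paper handles this uniformly with a vertex exchange from $[u]^{-1}$ to $[w]^{-1}$ in its Case~2 ($|N_u|\ge 2$), which works regardless of whether $N_x\setminus N_u$ is empty.

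Two further places need more than a wave: the $K_{2,2,2}$ branch changes the quantities you rely on (in $K_{2,2,2}$, $[u]$ has only three neighbors among the four vertices of $H$, $H$ contains a nonedge, and $v$ is itself non-adjacent to one vertex of $H$), so your criteria ``$|N_x|=3$'' and ``$N_u\cup N_x=V(H)$'' do not transfer literally; and the $(2,2)$ case is not ``combining the arguments for both extra vertices'' in an obvious way, since the paper's treatment there (end of Case~3) again requires a dedicated vertex-exchange argument that your framework does not supply. The structural skeleton is right; the elimination step needs the paper's vertex-exchange tool (or an explicit enumeration of contractions $xh$ and $yh$ in addition to $h_ih_j$) to actually close all cases.
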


The proof of Proposition \ref{lem:no_type_2_no_3-sip}  involves a case analysis detailed in \ref{sec:prop_6}.  We give the essential ideas and structure here.
It is easy to check that the pairs in Figure \ref{fig:no_type_2} are minimal, and Lemma \ref{lem:no_type_2_pairs_no_3-sip} show that they each have a proper non-$3$-SIP map.  
It remains to show that these are the only minimal pairs with no Type (2) edge.  
Lemma \ref{lem:no_type_2_neighborhoods} provides a means to partition all candidate minimal pairs, after which the proof is completed by exhaustive case checking.  In the inductive step, Proposition  \ref{lem:no_type_2_no_3-sip} allows us to assume that $G_{\star}$ has a Type (2) edge.  
We now present several definitions that will allow us to exploit the presence of such an edge.  

For any induced subgraphs $H$ and $H'$ of a graph $G$, $H \setminus H'$ is the subgraph of $G$ induced by $V(H) \setminus V(H')$.  
For any edge $e$ of $G$ and minor $[G]_1$, $[G]_1$ is \emph{$e$-separating} if it has an $[e]_1$-preserving forbidden minor but none of its atoms that contain $[e]_1$ has such a minor.  
For any minor $[H]_2$, $[G]_1$ is an \emph{extension} of $[H]_2$ if $|V([G]_1)| = |V([H]_2)|$ and, for any two vertices $u,v \in V(H)$, $[u]_1 = [v]_1$ if and only if $[u]_2 = [v]_2$.  
If $H$ contains $e$, then $e$ being contracted, preserved, or retained in $[H]_2$ of $H$ are well-defined concepts (see Section \ref{sec:minor-notation}). 
We extend this terminology also for $e \not\in H$: $e$ is \emph{strongly contracted in $[H]_2$} if it is contracted in every extension of $[H]_2$ and \emph{weakly preserved (resp. retained) in $[H]_2$} if it is preserved (resp. retained) in at least one extension of $[H]_2$.
 
 \begin{note}
      When the context is clear, we will drop ``strongly/weakly" from the above terminology; for example, we will simply refer to an $e$-preserving minor of $H$ even when $e$ is not in $H$.
 \end{note}

\begin{example}
    \label{ex:extension}
    Let $G \cup f$ be the graph in Figure \ref{fig:min_4}, where $f$ is the green edge, and let $[G \cup f]$ be its minor shown in Figure \ref{fig:min_5}.  
    Observe that the subgraph $H$ of $G \cup f$ induced by the bottom five vertices and the vertex incident to the two blue edges has an $f$-preserving $K_5$ minor.  
    $[G \cup f]$ is an $f$-separating minor since its $K_5$ subgraph is its only atom that has an $f$-preserving forbidden minor.  
    Lastly, let $J$ be the subgraph of $G \cup f$ induced by the bottom five vertices.  
    Note that $[J]^{-1} = H$.  
    \qed
\end{example}

\begin{definition}[Expanded edge-separating pairs and CMSs]
    \label{def:expanded_e-separating_pair}
    Let $G$ be a graph and $e$ be any of its edges.  
    An \emph{$e$-separating pair $(C,M)$ of an $e$-separating minor $[G]$} contains an atom $M$ of $[G]$ with an $[e]$-preserving forbidden minor and a CMS $C$ of $[G]$ that is, for some endpoint $[u]$ of $[e]$ and any vertex $[x]$ of $M \setminus C$, a minimal clique $[ux]$-separator.  
    Consider the $C$-components $I$ and $H$ of $[G]$ that contain $M$ and $[e]$, respectively.  
    $C$, $I$, and $H$ are the \emph{CMS}, \emph{minor-component}, and \emph{$e$-component} of $(C,M)$, respectively.  
    
    Next, the pair $(C^{-1},M^{-1})$ is an \emph{expanded $e$-separating pair of $G$}, and $C^{-1}$, $I^{-1}$, and $H^{-1}$ are the \emph{expanded CMS}, \emph{expanded minor-component}, and \emph{expanded $e$-component} of $(C^{-1},M^{-1})$, respectively.   
    $(C^{-1},M^{-1})$ is a \emph{top-level} expanded $e$-separating pair of $G$, and is said to be \emph{in the top-level of $G$}, if no proper subgraph of $H^{-1}$ is the expanded $e$-component of any expanded $e$-separating pair of $G$.  
    $(C,M)$ is a \emph{top-level} $e$-separating pair of $[G]$ if $(C^{-1},M^{-1})$ is in the top-level of $G$.  

    Finally, an \emph{$e$-separating CMS $E$ of $[G]$} is a CMS in some $e$-separating pair $(E,M)$ of $[G]$.  
    We treat the minor-component and $e$-component of $(E,M)$ as belonging to $E$.  
    $E^{-1}$ is an \emph{expanded $e$-separating CMS of $G$}, and is in the \emph{top-level} if $(E^{-1},M^{-1})$ is in the top-level.  
\end{definition}

\begin{example}
    \label{ex:cms}
    Refer to the above proof sketch of Theorem \ref{thm:no_pair_in_D_has_3-sip} where the graph $G \cup f$ in Figure \ref{fig:min_2} and its minor $[G \cup f]$ in Figure \ref{fig:min_3} are discussed.  
    Observe that $(C,M)$ is an $f$-separating pair of $[G \cup f]$, and hence $C$ is an $f$-separating CMS.  
    The $f$-component of $C$ is $H$ and the minor component is $M$.  
    The subgraph $C^{-1}$ of $G \cup f$ is a top-level expanded $f$-separating CMS whose expanded $f$-component and expanded minor component are $H^{-1}$ and $M^{-1}$, respectively.  
    Note that the orange subgraph of $G \cup f$ is also a top-level expanded $f$-separating CMS whose expanded $f$-component and expanded minor component are $H^{-1}$ and $M^{-1}$, respectiveely.  

    Next, the above statements are true if we let $G \cup f$ be the graph in Figure \ref{fig:min_4}, where $f$ is the green edge, $[G \cup f]$ be the minor shown in Figure \ref{fig:min_5}, and $C$ and $M$ be the blue and $K_5$ subgraphs of $[G \cup f]$, respectively, except that $C^{-1}$ is not in the top-level of $G \cup f$.  
    To verify this, note that the purple subgraph $E^{-1}$ of $G \cup f$ is an expanded $f$-separating CMS whose expanded $f$-component $J^{-1}$ is the subgraph in the dashed black box, which is a proper subgraph of $H^{-1}$.  
    It is easy to check that $E^{-1}$ is in the top-level of $G \cup f$.  
    Observe that the degree two vertex whose neighbors are the vertices of $C$ is not contained in $H$ or $M$.  

    Finally, the above statements are true for the pairs of graphs in Figures \ref{fig:min_7} and \ref{fig:min_8} and in Figures \ref{fig:min_9} and \ref{fig:min_10}, except, in the latter case, $E^{-1}$ is the connected subgraph consisting of one purple edge and one blue edge.  
    \qed
\end{example}

Contracting any Type (2) edge of $G_{\star} \cup f_{\star}$ yields an $f_{\star}$-separating minor by definition.  
Hence, Proposition \ref{lem:expanded_top-level_e-separating_clique_exists}, below, 
shows that $G_{\star} \cup f_{\star}$ has a top-level expanded $f_{\star}$-separating CMS.  
We extract three repeatedly used tools from the proof of this proposition, stated as Lemmas \ref{lem:preserving_forbidden_minor_implies_contraction_minor} - \ref{lem:separating_minor_has_separating_pair}, below.  

\begin{lemma}[Edge-preserving forbidden minor implies edge-preserving induced forbidden minor]
    \label{lem:preserving_forbidden_minor_implies_contraction_minor}
    If an edge of a graph is preserved in some forbidden minor, then some such minor is induced.  
\end{lemma}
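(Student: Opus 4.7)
The plan is to take an $e$-preserving forbidden minor $[G]$ of $G$ with minor map $\pi$ and deletion set $D$, and convert it into an $e$-preserving \emph{induced} forbidden minor. The first step is to drop $D$: consider the induced minor $[G]_1$ of $G$ determined by $\pi$ alone. By construction, $[G]$ is a spanning subgraph of $[G]_1$, and $e$ is preserved in $[G]_1$ if and only if it is preserved in $[G]$.

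If $[G] = K_5$, then $[G]_1$ is a $5$-vertex graph containing all $\binom{5}{2}$ edges, so $[G]_1 = K_5$ and we are done. Otherwise $[G] = K_{2,2,2}$, and $[G]_1$ is a $6$-vertex graph containing a spanning $K_{2,2,2}$. Since $K_{2,2,2}$ has only three nonedges, all within-part, $[G]_1$ is $K_{2,2,2}$ plus some subset $S$ of these three within-part pairs. If $S = \emptyset$, we are done; otherwise some part $X$ of $K_{2,2,2}$ carries an ``extra'' edge in $[G]_1$, and I label the remaining two parts $Y$ and $Z$.

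Next, I would execute one further contraction inside $[G]_1$: pick an edge $yz$ with $y \in Y$ and $z \in Z$, and merge $\pi^{-1}(y)$ with $\pi^{-1}(z)$ to form a new partition $\pi'$ defining a $5$-vertex induced minor $[G]_2$ of $G$. A direct check shows $[G]_2 = K_5$: the intra-$X$ pair is already an edge (the extra); the intra-$Y$ pair becomes an edge because the merged $\{y,z\}$-class inherits $z$'s adjacency to the other $Y$-vertex from $K_{2,2,2}$; and the intra-$Z$ pair becomes an edge symmetrically. Since the endpoints of $e$ in $[G]_1$ form a single unordered pair, at most one of the four candidate edges $yz$ contracts $e$; hence at least three choices leave $e$ preserved in $[G]_2$.

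The main obstacle is verifying that such a contraction produces $K_5$ and is valid. Validity is automatic because $yz$ is an edge of $K_{2,2,2} \subseteq [G]_1$ and therefore cannot lie in $D$, so some edge of $G$ between $\pi^{-1}(y)$ and $\pi^{-1}(z)$ exists. The $K_5$ check is a short finite case analysis on which subset $S$ is nonempty; when multiple parts carry extras, the same strategy works as long as $X$ is chosen to be one of those parts.
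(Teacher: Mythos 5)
Your proof is correct and follows essentially the same route as the paper: drop the deletion set to obtain an induced minor with the forbidden minor as a spanning subgraph, observe that the $5$-vertex case is automatically $K_5$, and in the $K_{2,2,2}$-plus-extra-edges case perform one more contraction to reach an $e$-preserving $K_5$. The only difference is that the paper's proof stops at ``$[G]$ has an $[e]$-preserving $K_5$ minor'' and implicitly falls back to its opening sentence, whereas you spell out the contraction (a $Y$--$Z$ cross edge) and the counting argument showing at least one choice avoids contracting $e$; that extra detail is genuinely the content the paper elides, and it is correct.
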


\begin{proof}
    If an edge $e$ of a graph $G$ is preserved in some $K_5$ minor, then it is clearly preserved in some induced $K_5$ minor.  
    Otherwise, if $e$ is preserved in some $K_{2,2,2}$ minor, then consider the minor $[G]$ obtained by performing all contractions between $G$ and this $K_{2,2,2}$ minor.  
    If $[G]$ is $K_{2,2,2}$, then we are done.  
    Otherwise, $[G]$ has $K_{2,2,2}$ as a proper spanning subgraph, in which case $[G]$ has an $[e]$-preserving $K_5$ minor.  
\end{proof}

\begin{lemma}[Edge-preserving forbidden minors extend minors of atoms]
    \label{cor:minimal_k-clique-sum_component_containing_fbm_and_f}
    If an edge of a graph $G$ is preserved in some forbidden minor, then some such minor is an extension of a minor of some atom of $G$.  
\end{lemma}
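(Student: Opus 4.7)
The plan is to induct on $|V(G)|$. The base case, where $G$ itself is an atom, is immediate: take $A=G$, and by Lemma \ref{lem:preserving_forbidden_minor_implies_contraction_minor} pick an induced $e$-preserving forbidden minor of $G$, which is trivially an extension of itself viewed as a minor of $A$. For the inductive step, assume $G$ is not an atom and fix a CMS $S$ of $G$. Let $H$ be an $S$-component of $G$ containing $e$; when both endpoints of $e$ lie in $S$ several choices of $H$ are available, and the choice will be pinned down below.

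The heart of the proof is to produce an $e$-preserving forbidden minor of $H$ whose minor map is the restriction to $V(H)$ of some $e$-preserving forbidden minor of $G$. Start with an induced $e$-preserving forbidden minor $[G]$ with minor map $\pi$, furnished by Lemma \ref{lem:preserving_forbidden_minor_implies_contraction_minor}. The image $\pi(S)$ is a clique in $[G]$, since $S$ is a clique and no edges of $[G]$ are deleted. Because $[G]$ is $K_5$ or $K_{2,2,2}$, it is $4$-connected and has no clique separator, so $\pi(S)$ is not a separator of $[G]$; consequently every vertex class of $[G]$ that avoids $S$ lies in a single $S$-component $H^{*}$ of $G$. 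Using Lemma \ref{lem:3-flat_fm_containing_e_pass_through_minimal_3-clique-sum_components}, we can re-choose $[G]$ (and, when the endpoints of $e$ lie in $S$, possibly the component $H$) so that $H^{*}=H$; then every vertex class of $[G]$ meets $V(H)$. Restriction of $\pi$ to $V(H)$ yields nonempty classes, and each restricted class is connected in $H$: any two vertices of a restricted class that lie in $S$ are adjacent, so pieces of a class separated by excursions outside of $H$ can be reconnected through $S$. All minor edges of $[G]$ either survive in the restriction or are realized by edges of $S$, and $e$ remains preserved. The inductive hypothesis applied to $H$ then yields an atom $A$ of $H$ and a minor $[A]$ of $A$ such that the forbidden minor of $H$ extends $[A]$. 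As used in the proof of Corollary \ref{cor:atom_gluing}, the atom $A$ of $H$ is also an atom of $G$. Extending $\pi|_{V(H)}$ back to $V(G)$ by using the original $\pi$ on $V(G)\setminus V(H)$ delivers an $e$-preserving forbidden minor of $G$ that extends $[A]$, closing the induction.

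The main obstacle is the localization step. If the original $[G]$ has its ``Type A'' classes, those disjoint from $S$, concentrated in an $S$-component $H^{*}\neq H$, naive restriction to $V(H)$ leaves some classes empty. Re-choosing $[G]$ so that the Type A classes instead sit in $H$ is precisely where the $4$-connectivity of $K_5$ and $K_{2,2,2}$, together with Lemma \ref{lem:3-flat_fm_containing_e_pass_through_minimal_3-clique-sum_components}, do the essential work. Once localization is achieved, the verification of connectivity of restricted classes, the supply of missing minor edges by the clique $S$, and the extension of the minor map back to $V(G)$ are straightforward.
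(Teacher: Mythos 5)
Your inductive approach is structurally close to the paper's proof of Lemma \ref{lem:3-flat_fm_containing_e_pass_through_minimal_3-clique-sum_components}, which the paper combines with Lemma \ref{lem:preserving_forbidden_minor_implies_contraction_minor} to give a two-line proof of the present lemma. However, your argument has a genuine gap in the localization step.

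The problem is your insistence that the $S$-component $H$ to which you restrict contain $e$. You assert that ``we can re-choose $[G]$ (and, when the endpoints of $e$ lie in $S$, possibly the component $H$) so that $H^{*}=H$.'' When at least one endpoint of $e$ lies outside $S$, the component $H$ is forced by $e$, so you are claiming that some $e$-preserving induced forbidden minor of $G$ has all its $S$-avoiding vertex classes inside $H$. This is false in general, and Lemma \ref{lem:3-flat_fm_containing_e_pass_through_minimal_3-clique-sum_components} does not provide it: that lemma lets you choose the component to be $H^{*}$, the one actually carrying the $S$-avoiding classes; it gives no control over whether $H^{*}$ is the component carrying $e$.

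A concrete counterexample: let $G$ be the clique-sum of $K_5$ on $\{v_1,\dots,v_5\}$ and $K_4$ on $\{v_1,v_2,v_3,w\}$ along the triangle $S=\{v_1,v_2,v_3\}$, and let $e=v_1w$. The minor map that keeps $v_1,v_3,v_4,v_5$ as singletons and merges $v_2$ with $w$ gives an $e$-preserving $K_5$ minor of $G$, so the hypothesis of the lemma holds. Here $H=K_4$ is the $S$-component containing $e$, but $H^{*}=K_5$, since the two $S$-avoiding classes are $\{v_4\}$ and $\{v_5\}$. No $e$-preserving forbidden minor of $G$ can satisfy $H^{*}=H$, since $H\setminus S$ is the single vertex $w$, so at most one class can avoid $S$, whereas a $K_5$ minor needs at least two such classes when $|S|=3$. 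In fact no forbidden minor of $G$ extends a minor of $H$ at all, as $H=K_4$ has four vertices while any forbidden minor has at least five, and extension requires equal vertex counts. The lemma is nonetheless true here: the $K_5$ minor above extends the identity minor of the atom $K_5=H^{*}$.

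The fix is to drop the requirement that $H$ contain $e$. The statement asks only for \emph{some} atom of $G$; neither the atom nor its minor need have anything to do with $e$ --- the extension relation is the only link required. Choosing $H=H^{*}$ (the unique $S$-component meeting every $S$-avoiding class, which exists precisely because $\pi(S)$ cannot separate $K_5$ or $K_{2,2,2}$) makes the restriction step go through, and this is exactly what Lemma \ref{lem:3-flat_fm_containing_e_pass_through_minimal_3-clique-sum_components} does, inducting on the number of CMSs rather than on $|V(G)|$. With that change your remaining steps (connectivity of restricted classes via the clique $S$, atoms of components being atoms of $G$, and transitivity of the extension relation) are sound, but at that point you are re-deriving that lemma rather than doing anything the paper's proof does not already do by citing it.
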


\begin{proof}
    Observe that $K_5$ and $K_{2,2,2}$ are both atoms.  
    Hence, the lemma follows by applying Lemma \ref{lem:preserving_forbidden_minor_implies_contraction_minor} followed by Lemma \ref{lem:3-flat_fm_containing_e_pass_through_minimal_3-clique-sum_components} in \ref{sec:atoms}.  
\end{proof}

\begin{lemma}[Graphs with edge-separating forbidden minors have an edge-separating pair]
    \label{lem:separating_minor_has_separating_pair}
    Given a graph $G$ and one of its edges $e$, if $G$ has an $e$-separating forbidden minor, then that minor has an $e$-separating pair.  
\end{lemma}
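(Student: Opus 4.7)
The plan is to locate the atom $M$ promised by the hypothesis via Lemma~\ref{cor:minimal_k-clique-sum_component_containing_fbm_and_f}, to observe that $M$ must avoid $[e]$ by the $e$-separating hypothesis, and then to carve out the CMS $C$ from the boundary of $M$ using the tree-like CMS decomposition of $[G]$.

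First I would apply Lemma~\ref{cor:minimal_k-clique-sum_component_containing_fbm_and_f} to the edge $[e]$ of $[G]$ to obtain an $[e]$-preserving forbidden minor that is the extension of a minor of some atom $M$ of $[G]$. Since $[G]$ is $e$-separating, no atom containing $[e]$ has an $[e]$-preserving forbidden minor, hence $M$ does not contain $[e]$. Because $M$ is an induced subgraph and $[e]$ is an edge of $[G]$, at least one endpoint of $[e]$ must lie outside $V(M)$; fix such an endpoint and call it $[u]$.

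Next I would build $C$ from the recursive atom decomposition of $[G]$ used in the proof of Corollary~\ref{cor:atom_gluing}: iteratively split along CMSs until every piece is an atom. The splittings define a tree on the atoms, and standard atom theory says that the CMS which finally isolates $M$ from the rest of $[G]$ is contained in $V(M)$, is a clique of $[G]$, and separates $V(M)\setminus V(C)$ from every vertex on the $[u]$-side. Take $C$ to be this CMS. The separation requirement of Definition~\ref{def:expanded_e-separating_pair} is then immediate from the construction, and $C$ is a CMS by choice.

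Finally I would verify the minimality condition, namely that for the chosen $[u]$ and every $[x] \in V(M)\setminus V(C)$, $C$ is a \emph{minimal} clique $[ux]$-separator. For any proper subclique $C'\subsetneq C$, the CMS-minimality of $C$ (minimality as a clique separator of $[G]$ with the same associated components) yields a vertex of $C\setminus C'$ still connected in $[G]\setminus C'$ to the $[u]$-side. Since $M$ is an atom and thus has no clique separator of its own, $C'$ cannot separate any vertex of $C\setminus C'$ from $[x]$ inside $M$; concatenating the two paths gives a $[u]$-to-$[x]$ walk in $[G]\setminus C'$, establishing minimality. The main obstacle is exactly this last step: the subtle passage from the global, two-sided CMS-minimality of $C$ in $[G]$ to the pairwise, local minimality as a $[ux]$-separator for every $[x]\in V(M)\setminus V(C)$, which requires careful use of both the no-clique-separator property of the atom $M$ and the richness of $M$ forced by its carrying an $[e]$-preserving forbidden minor (so $|V(M)|\ge 5$ and $M$ is $4$-edge-connected in a strong sense through its forbidden minor).
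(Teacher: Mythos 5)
Your argument follows the same skeleton as the paper's proof: invoke Lemma~\ref{cor:minimal_k-clique-sum_component_containing_fbm_and_f} to obtain the atom $M$ carrying an $[e]$-preserving forbidden minor, observe that the $e$-separating hypothesis forces $M$ to avoid $[e]$ so that one endpoint $[u]$ lies outside $V(M)$, and then extract a CMS $C$ separating $M \setminus C$ from $[u]$. The paper simply asserts that such a $C$ exists; you go further and attempt to construct and verify it, which is fine in spirit but has one substantive misstep worth flagging.

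The step that goes astray is the minimality verification. You claim that CMS-minimality in the paper's sense (for every proper induced subgraph $C'$ of $C$, the $C$-components of $[G]$ differ from the $C'$-components) yields a vertex of $C \setminus C'$ that is still connected in $[G] \setminus C'$ to the $[u]$-side. That implication does not hold in general: a clique separator can be a CMS while some of its vertices have no neighbor at all in the component containing $[u]$, so deleting $C'$ can leave every vertex of $C \setminus C'$ cut off from $[u]$. (Take $C=\{a,b\}$ with $a$ attached to components $J_1, J_2$, $b$ attached to $J_2, J_3$, and $[u]\in J_1$: $C$ is a CMS, but $b$ is not connected to $J_1$ in $[G]\setminus\{a\}$.) What actually makes the minimality work is the symmetric version of the observation you already make on the $M$-side: the CMS $C$ you want is the one between $M$ and the adjacent atom $M'$ toward $[u]$ in the atom decomposition, and since $M'$ is also an atom with no clique separator of its own, every vertex of $C$ must have a neighbor in $M' \setminus C$, hence a path to $[u]$ avoiding any proper subset of $C$. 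Relatedly, your phrase ``the CMS which finally isolates $M$ from the rest of $[G]$'' is ambiguous, since $M$ can border several CMSs; you need to specify the one on the $[u]$-side. With those corrections your route matches the paper's, which itself leaves all of this tacit by merely asserting the existence of $C$.
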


\begin{proof}
    By definition, any $e$-separating forbidden minor $[G]$ has an $[e]$-preserving forbidden minor but none of its atoms that contain $[e]$ have such a minor.  
    Hence, Lemma \ref{cor:minimal_k-clique-sum_component_containing_fbm_and_f} shows that some atom $M$ of $[G]$ does not contain some endpoint $[u]$ of $[e]$ and has an $[e]$-preserving forbidden minor.  
    Therefore, there must exist some CMS $C$ of $[G]$ that is, for any vertex $[x]$ of $M \setminus C$, a clique minimal $[ux]$-separator and such that $M \setminus C$ is non-empty.  
    Thus, $(C,M)$ is an $e$-separating pair of $[G]$.  
\end{proof}

\begin{proposition}[Graphs with edge-separating minors have an expanded edge-separating CMS]
    \label{lem:expanded_top-level_e-separating_clique_exists}
    Given a graph $G$ and one of its edges $e$, if $G$ has an $e$-separating minor, then $G$ has a top-level expanded $e$-separating CMS.  
\end{proposition}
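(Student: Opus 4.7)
The plan is to combine Lemma \ref{lem:separating_minor_has_separating_pair} with a straightforward finiteness/minimality argument over the graph $G$. First, since $G$ is assumed to have an $e$-separating minor $[G]$, Lemma \ref{lem:separating_minor_has_separating_pair} furnishes an $e$-separating pair $(C,M)$ of $[G]$. By Definition \ref{def:expanded_e-separating_pair}, this in turn yields an expanded $e$-separating pair $(C^{-1},M^{-1})$ of $G$. Hence the collection $\mathcal{P}$ of all expanded $e$-separating pairs of $G$ is nonempty.

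Next, I would choose a pair $(C^{-1}_\star, M^{-1}_\star)\in\mathcal{P}$ whose expanded $e$-component $H^{-1}_\star$ has the smallest number of vertices; such a minimizer exists because $G$ is finite and hence $\mathcal{P}$ induces only finitely many distinct expanded $e$-components. The claim is that this pair is already top-level. Suppose some proper subgraph $H'\subsetneq H^{-1}_\star$ is the expanded $e$-component of some pair $(C'^{-1},M'^{-1})\in\mathcal{P}$. By the definition of the $(\cdot)^{-1}$ operation, every expanded $e$-component is an induced subgraph of $G$, since its vertex set is the union of the pre-images $[w]^{-1}$ over the vertices $[w]$ of the corresponding $e$-component of the underlying minor. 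Two induced subgraphs of $G$ with the same vertex set coincide, so $H'\subsetneq H^{-1}_\star$ forces $|V(H')|<|V(H^{-1}_\star)|$, contradicting the minimality in the choice of $H^{-1}_\star$.

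Therefore no proper subgraph of $H^{-1}_\star$ arises as the expanded $e$-component of any member of $\mathcal{P}$, which is precisely the top-level condition in Definition \ref{def:expanded_e-separating_pair}. Consequently $(C^{-1}_\star, M^{-1}_\star)$ is in the top-level of $G$, and $C^{-1}_\star$ is the desired top-level expanded $e$-separating CMS.

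I do not anticipate a serious obstacle: the nontrivial content is already packaged inside Lemma \ref{lem:separating_minor_has_separating_pair}, and the remainder of the argument is a minimality principle on the finite poset of induced subgraphs of $G$. The only point that merits a brief verification is that expanded $e$-components are induced subgraphs of $G$, which is immediate from Definition \ref{def:expanded_e-separating_pair} and the minor notation of Section \ref{sec:minor-notation}.
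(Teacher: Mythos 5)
Your proof is correct and follows the same route as the paper: Lemma \ref{lem:separating_minor_has_separating_pair} produces an $e$-separating pair, hence an expanded $e$-separating pair, and then finiteness of $G$ gives a minimal (hence top-level) one. The paper condenses the minimality step into a ``clearly''; you have simply spelled out that implicit finiteness argument, including the useful observation that expanded $e$-components are induced subgraphs of $G$.
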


\begin{proof}
    Lemma \ref{lem:separating_minor_has_separating_pair} states that any $e$-separating minor has some $e$-separating pair.  
    Thus, $G$ clearly has a top-level expanded $e$-separating pair, and hence a top-level expanded $e$-separating CMS.  
\end{proof}

Next, let $C^{-1}$ be any top-level expanded $f_{\star}$-separating CMS of $G_{\star} \cup f_{\star}$, let $H^{-1}$ be its expanded $f_{\star}$-component, and let $G' = G_{\star} \setminus (H^{-1} \setminus C^{-1})$.  
Proposition \ref{prop:IH}, below, demonstrates the nested structure of minimal pairs: $G'$ contains a minimal pair whose nonedge $f'$ is contained in $C^{-1}$, and $G'$ can be constructed from this pair by repeatedly adding degree two vertices.  
The inductive hypothesis then gives a proper non-$3$-SIP map $\ell'$ for $(G',f')$.  
Combining this with Proposition \ref{prop:final_top_graphs}, which lists all possible graphs for $H^{-1}$, allows us to extend $\ell'$ to a proper non-$3$-SIP map $\ell$ for $(G_{\star}, f_{\star})$.  
This extension process is detailed in Proposition \ref{prop:IS}, also below.  

Before stating these propositions and discussing their proofs, we give examples.  
If $G_{\star} \cup f_{\star}$ is the graph in Figure \ref{fig:min_2} (resp. \ref{fig:min_7}), where $f_{\star}$ is the green edge, then we can take $G' \cup f'$ to be the graph in Figure \ref{fig:min_1} (resp. \ref{fig:min_6}), where $f'$ is the green edge.  
Observe that $\ell$ can be chosen to agree with $\ell'$ on edges of $G'$ and to assign squared lengths to the two remaining edges of $G_{\star}$ whose sum and difference are the two values in the CCS $\Omega^3_{f'}(G',\ell')$.  
This ensures that $\ell$ is a proper non-$3$-SIP map for $(G_{\star}, f_{\star})$.  
Similarly, $\ell$ can be extended to obtain a non-$3$-SIP map for the pair in Figure \ref{fig:min_4} (resp. \ref{fig:min_9}).  

\begin{figure}[htb]
    \centering
    \begin{subfigure}{0.19\linewidth}
        \centering
        \includegraphics[width=0.5\linewidth]{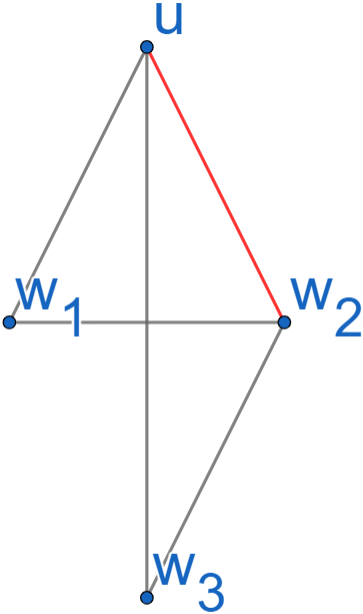}
        \subcaption{}
        \label{fig:3top_vin_uonly_2}
    \end{subfigure}
    \begin{subfigure}{0.19\linewidth}
        \centering
        \includegraphics[width=0.8\linewidth]{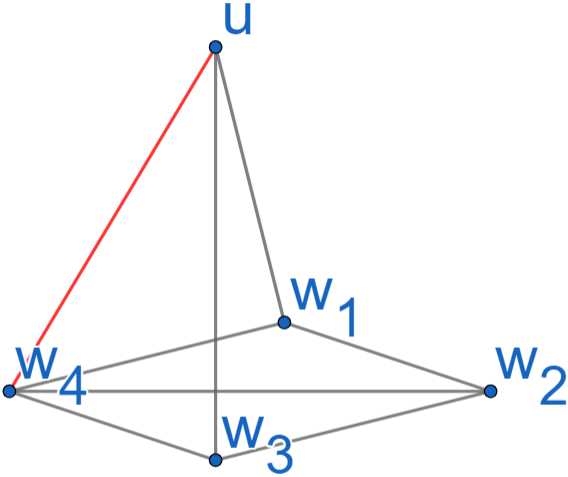}
        \subcaption{}
        \label{fig:chord_vin_uonly_2}
    \end{subfigure}
    \begin{subfigure}{0.19\linewidth}
        \centering
        \includegraphics[width=0.8\linewidth]{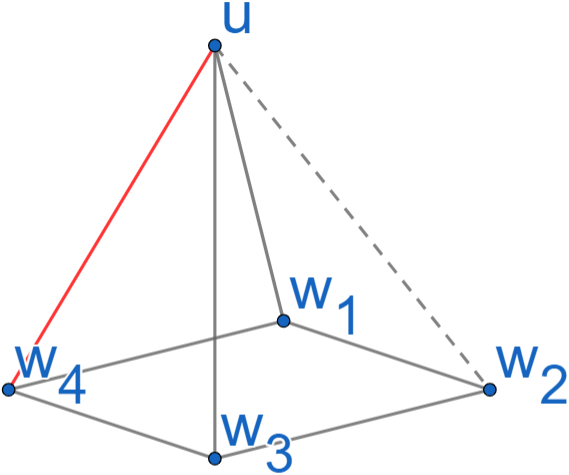}
        \subcaption{}
        \label{fig:cycle_vin_uonly_2}
    \end{subfigure}
    \begin{subfigure}{0.19\linewidth}
        \centering
       \includegraphics[width=0.8\linewidth]{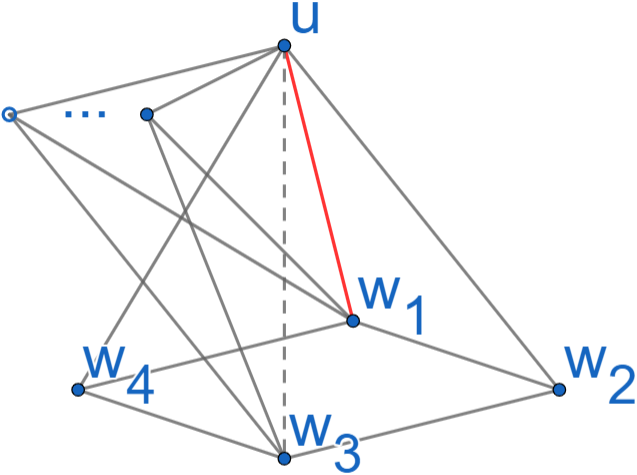}
        \subcaption{}
        \label{fig:cycle_vin_uplus_3}
    \end{subfigure}
    \begin{subfigure}{0.19\linewidth}
        \centering
        \includegraphics[width=0.7\linewidth]{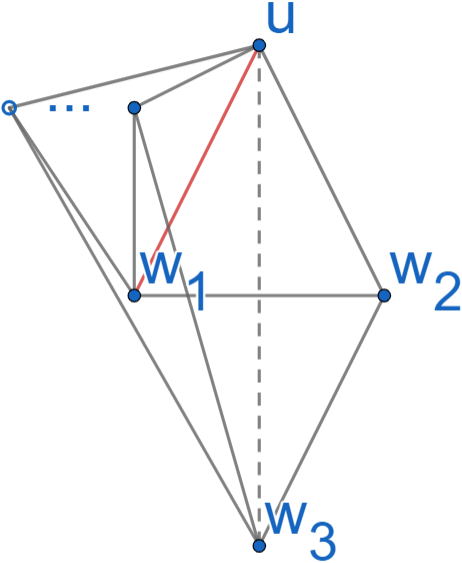}
        \subcaption{}
        \label{fig:3top_vin_uplus_2}
    \end{subfigure}
    \begin{subfigure}{0.19\linewidth}
        \centering
        \includegraphics[width=0.8\linewidth]{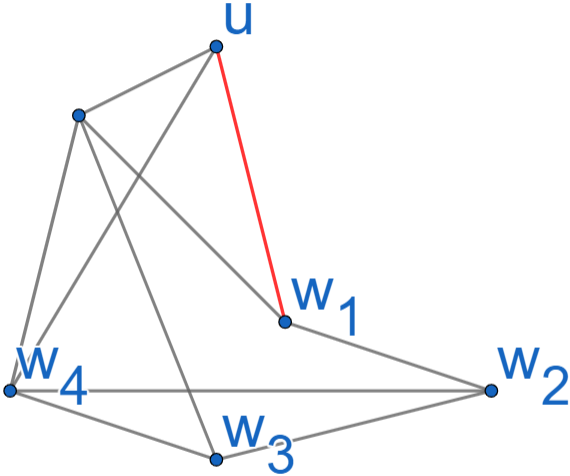}
        \subcaption{}
        \label{fig:deg1_vin_uplus_1}
    \end{subfigure}
    \begin{subfigure}{0.19\linewidth}
        \centering
        \includegraphics[width=0.8\linewidth]{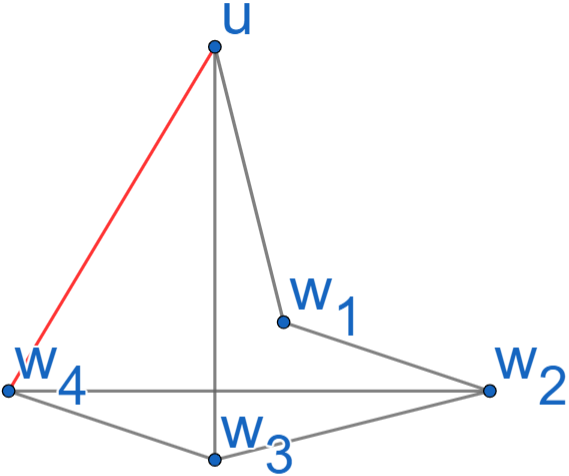}
        \subcaption{}
        \label{fig:deg1_vin_uonly_2}
    \end{subfigure}
    \begin{subfigure}{0.19\linewidth}
        \centering
        \includegraphics[width=0.8\linewidth]{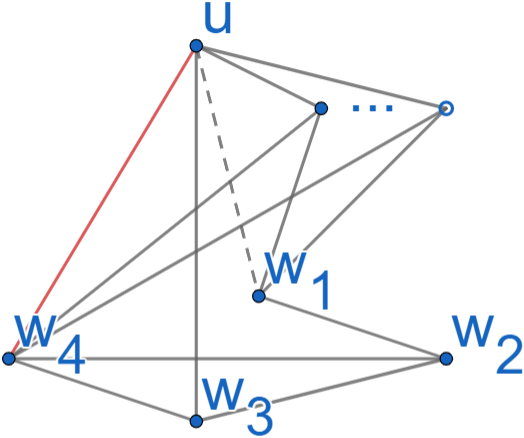}
        \subcaption{}
        \label{fig:deg1_vin_uplus_5}
    \end{subfigure}
    \begin{subfigure}{0.19\linewidth}
        \centering
        \includegraphics[width=0.9\linewidth]{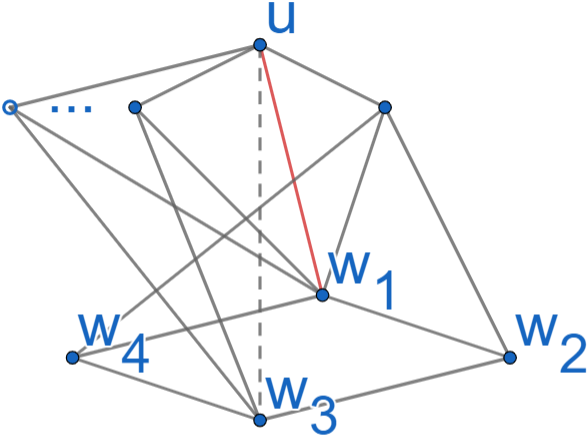}
        \subcaption{}
        \label{fig:cycle_vin_uplus_2}
    \end{subfigure}
    \begin{subfigure}{0.19\linewidth}
        \centering
        \includegraphics[width=0.8\linewidth]{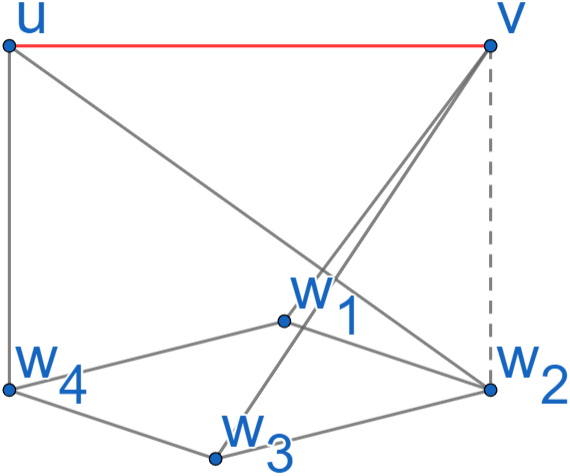}
        \subcaption{}
        \label{fig:cycle_vout_uvonly}
    \end{subfigure}
    \begin{subfigure}{0.19\linewidth}
        \centering
        \includegraphics[width=\linewidth]{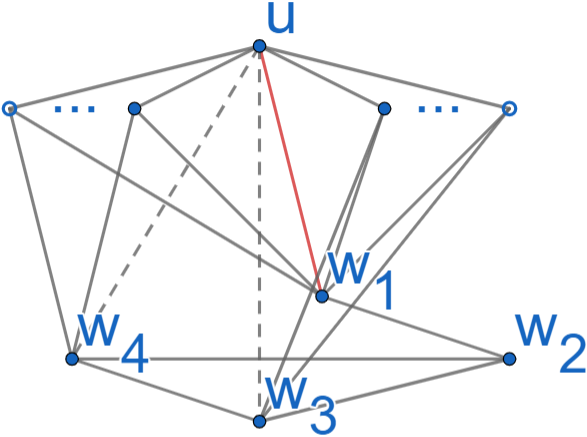}
        \subcaption{}
        \label{fig:deg1_vin_uplus_3}
    \end{subfigure}
    \begin{subfigure}{0.19\linewidth}
        \centering
        \includegraphics[width=0.9\linewidth]{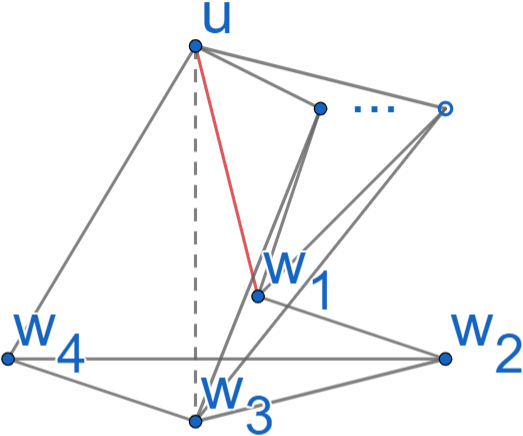}
        \subcaption{}
        \label{fig:deg1_vin_uplus_4}
    \end{subfigure}
    \begin{subfigure}{0.19\linewidth}
        \centering
        \includegraphics[width=0.8\linewidth]{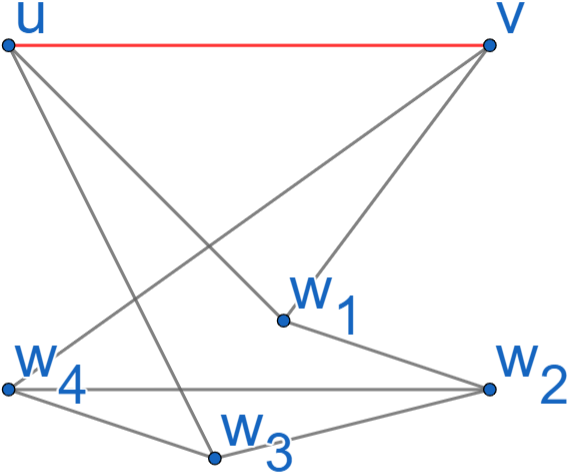}
        \subcaption{}
        \label{fig:deg1_vout_uvonly}
    \end{subfigure}
    \begin{subfigure}{0.19\linewidth}
        \centering
        \includegraphics[width=0.6\linewidth]{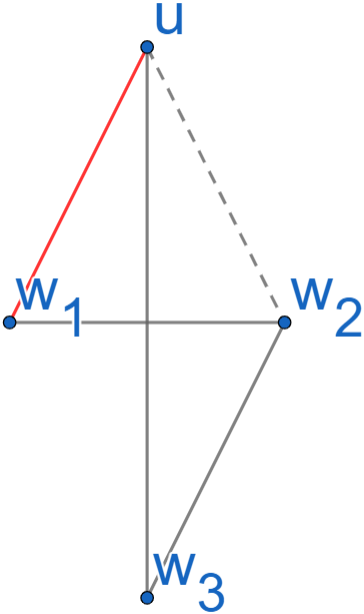}
        \subcaption{}
        \label{fig:3top_vin_uonly_1}
    \end{subfigure}
    \begin{subfigure}{0.19\linewidth}
        \centering
        \includegraphics[width=0.8\linewidth]{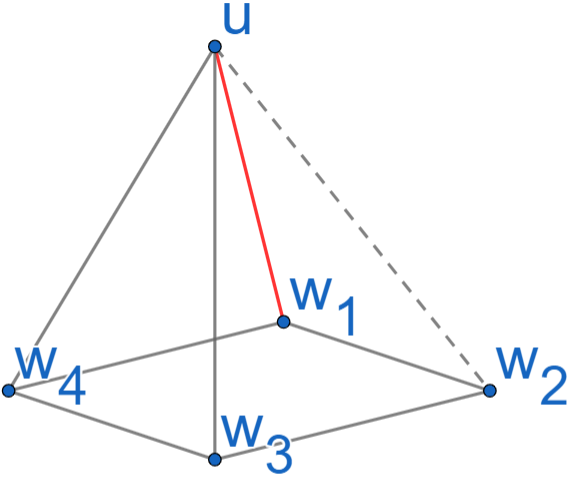}
        \subcaption{}
        \label{fig:cycle_vin_uonly_1}
    \end{subfigure}
    \begin{subfigure}{0.19\linewidth}
        \centering
        \includegraphics[width=0.8\linewidth]{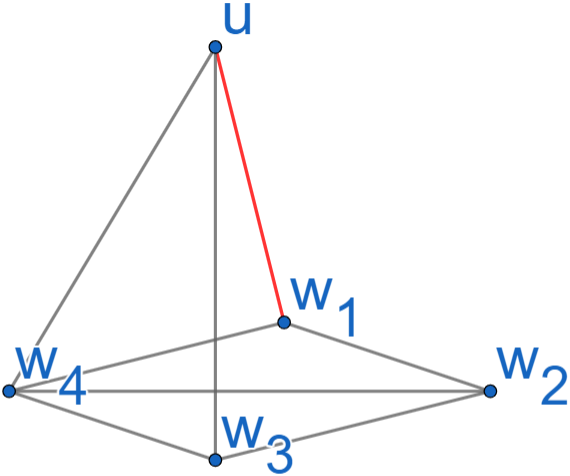}
        \subcaption{}
        \label{fig:chord_vin_uonly_1}
    \end{subfigure}
    \begin{subfigure}{0.19\linewidth}
        \centering
        \includegraphics[width=0.8\linewidth]{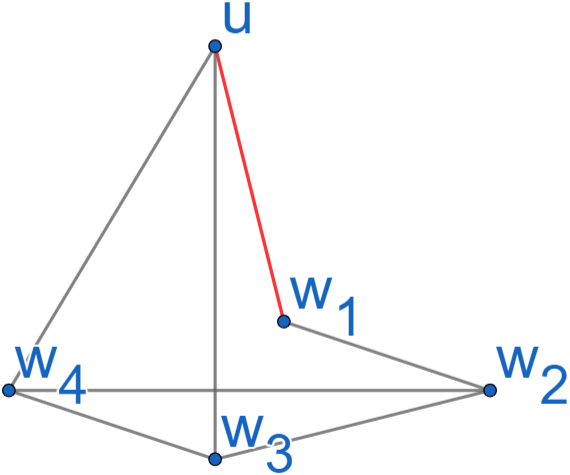}
        \subcaption{}
        \label{fig:deg1_vin_uonly_1}
    \end{subfigure}
    \begin{subfigure}{0.19\linewidth}
        \centering
        \includegraphics[width=0.8\linewidth]{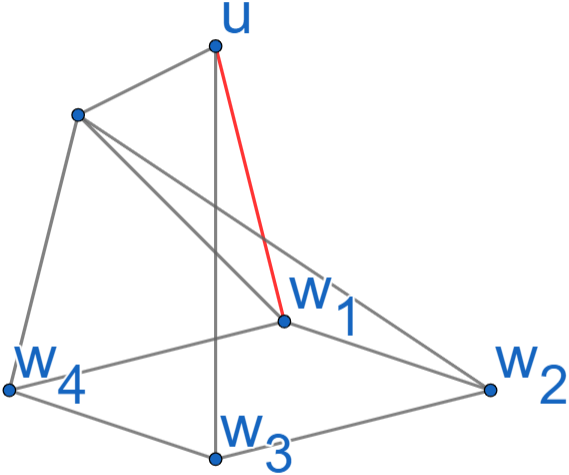}
        \subcaption{}
        \label{fig:cycle_vin_uplus_4}
    \end{subfigure}
    \caption{See the discussion, Proposition \ref{prop:IS}, and Lemma \ref{prop:expC_3or4} in this section. All candidate expanded $f$-components of a top-level expanded $f$-separating CMS that have fewer than five vertices.  
    We argue that all except (a)-(e) can be eliminated, and Figure \ref{fig:level1} shows instantiations of (a), (c), (d), and (e).  
    Dashed line-segments may be edges or nonedges and hollow vertices may or may not be present.  
    In (e), (h), (i), (k), or (l) either one of the dashed line-segments is an edge or at least one of the hollow vertices is present.  
   }
    \label{fig:top-level_H'_cases}
\end{figure}

\begin{proposition}[Minimal pair inductive structure]
    \label{prop:IH}
    Let $(G,f)$ be a minimal graph-nonedge pair with a top-level expanded $f$-separating CMS $C^{-1}$ whose expanded $f$-component is $H^{-1}$, and let $G' = G \setminus (H^{-1} \setminus C^{-1})$.  
    There exists a nonedge $f' = uv$ of $C^{-1}$ and an induced subgraph $J$ of $G'$ that contains $f'$ such that $(J,f')$ is minimal and $G'$ can be obtained from $J$ by adding some vertices whose neighborhoods are the set $\{u,v\}$.  
    Furthermore, these properties are true when (i) $f' = w_1w_3$ and $H^{-1}$ is the graph in Figure \ref{fig:cycle_vin_uonly_2}, $f$ is the red edge, and $C^{-1}$ is the subgraph induced by $\{w_i\}$, or (ii) $f'$ is any nonedge of $C^{-1}$ and $H^{-1}$ is the graph in Figure \ref{fig:cycle_vin_uplus_3}, $f$ is the red edge, and $C^{-1}$ is the subgraph induced by $\{w_i\}$.  
\end{proposition}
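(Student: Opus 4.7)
The plan is to construct the nonedge $f'$ and the subgraph $J$ directly from the $f$-separating structure, then verify minimality of $(J,f')$ and the degree-two attachment of $G'\setminus J$, and finally carry out the two case checks.

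First, I would identify $f'$. Since $C^{-1}$ is the expansion of the CMS $C$ of some $f$-separating minor $[G\cup f]$, and the minor-component $M$ contains an $[f]$-preserving forbidden minor, some vertex $[c]\in V(C)$ must have a preimage $[c]^{-1}$ of size at least two in $G$; otherwise $G$ would have no edge contractible to produce $[G\cup f]$ and we would be in the setting already handled by Proposition \ref{lem:no_type_2_no_3-sip}. Choose such $[c]$ and pick two non-adjacent $u,v\in[c]^{-1}$; set $f':=uv$. Then $G'\cup f'$ inherits an $f'$-preserving forbidden minor by taking the $[f]$-preserving forbidden minor of $M$ and reinterpreting the identification $[u]=[v]$ (previously accomplished by contractions inside $[c]^{-1}$) as the nonedge $f'$. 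The minor-component $M^{-1}$ lies entirely inside $G'$, so this construction is realized in $G'\cup f'$.

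Next, take $J$ to be the atom of $G'\cup f'$ that contains $f'$, whose existence and uniqueness come from the atom decomposition underlying Corollary \ref{cor:atom_gluing}. By construction $J\cup f'$ is an atom, and Lemma \ref{cor:minimal_k-clique-sum_component_containing_fbm_and_f} lets us push the $f'$-preserving forbidden minor of $G'\cup f'$ into this atom, so $J\cup f'$ has an $f'$-preserving forbidden minor. The third minimality condition --- that $J$ has no reducing edge --- is the main obstacle. Suppose $e$ were a reducing edge of $J$. I would argue $e$ is then reducing as an edge of $G$, contradicting minimality of $(G,f)$. The lifting uses the top-level property of $C^{-1}$: any new $f$-separating structure in $G$ arising after contracting $e$ would, by the top-level assumption, correspond to an expanded $f$-component no smaller than $H^{-1}$, and this separating structure necessarily passes through $\{u,v\}\subseteq C^{-1}$, so it pulls back to a corresponding $f'$-separating structure after contracting $e$ in $J$, making $e$ non-reducing in $J$ as well.

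The degree-two attachment of $G'\setminus J$ follows because $J$ is an atom, so every vertex of $G'\setminus J$ is separated from $J$ by a clique of $G'\cup f'$; combined with $G'\cap H^{-1}=C^{-1}$ and the placement of $f'$ inside the preimage $[c]^{-1}$, the only such clique that is forced is $\{u,v\}$, so these extra vertices must have neighborhood exactly $\{u,v\}$. For the furthermore part, I would run the above construction directly on the specified pictures: in case (i), with $C^{-1}$ the cycle on $\{w_i\}$ in Figure \ref{fig:cycle_vin_uonly_2}, the nonedge $f'=w_1w_3$ splits $C^{-1}\cup f'$ into two triangles sharing $f'$, which glue with the expanded minor-component to produce the required $f'$-preserving forbidden minor and atom $J$; in case (ii), symmetry of $C^{-1}$ in Figure \ref{fig:cycle_vin_uplus_3} makes the argument go through for every nonedge of $C^{-1}$. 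These case checks are routine once the general machinery is in place; the real difficulty throughout is the reducing-edge lifting in Step 2.
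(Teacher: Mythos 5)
Your construction of $f'$ has a fatal defect. You propose to pick $u,v$ non-adjacent within a single preimage $[c]^{-1}$ of a vertex $c$ of the CMS $C$. But by the definition of a rooted minor (Section \ref{sec:minor-notation}), each $[c]^{-1}$ induces a \emph{connected} subgraph of $G\cup f$, and by Lemma \ref{prop:C'_leq_4vert} a top-level expanded $C^{-1}$ has at most $4$ vertices while $C$ has $2$ or $3$; so each preimage $[c]^{-1}$ has at most $2$ vertices and hence contains no nonedge at all. Every nonedge of $C^{-1}$ has endpoints in \emph{distinct} preimages. This is visible already in the two ``Furthermore'' cases: for Figure \ref{fig:cycle_vin_uonly_2} the $4$-cycle $C^{-1}$ is the expansion of $K_3$, and the nonedge $w_1w_3$ straddles two preimages. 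Consequently your ``reinterpret the identification $[u]=[v]$ as the nonedge $f'$'' step, which is the mechanism by which you claim $G'\cup f'$ has an $f'$-preserving forbidden minor, has nothing to act on; this is exactly why the paper needs Lemma \ref{lem:I'_fbm_f_and_C'_nonedges_not_contracted}, which upgrades an $[f]$-preserving forbidden minor of $I$ to an $[f']$-preserving one for a nonedge $f'$ between distinct preimages of $C$ via a non-trivial case analysis on $K_5$ vs.\ $K_{2,2,2}$.

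Beyond that, the two remaining steps are asserted rather than proved. For the no-reducing-edge condition, the assertion that any $f$-separating structure produced by contracting $e$ in $G$ ``necessarily passes through $\{u,v\}$'' is not justified and is in fact the whole content of Lemma \ref{lem:G'_f'_no_reducible_edge}, whose proof (\ref{sec:proof_IH_2}) runs through nine subcases and crucially uses the specific shapes of $H^{-1}$ and $C^{-1}$ from Lemmas \ref{lem:C'_a-n} and \ref{lem:f'_and_J}; also the \emph{choice} of $f'$ among the nonedges of $C^{-1}$ matters, and the lemma must argue that some choice works while the top-level property rules out the rest. For the degree-two attachment, your statement ``the only such clique that is forced is $\{u,v\}$'' does not follow from $J$ being an atom; the paper instead argues by contradiction (Lemma \ref{lem:G'_is_J+decorations}): a CMS $E$ of $G'\cup f'$ inside $J$ not equal to $\{u,v\}$ lets you contract an edge $xy$ leaving $J$ weakly retained, exhibiting a reducing edge of $(G',f')$ and contradicting the property just established. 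The overall two-lemma structure (first no-reducing-edge for $(G',f')$, then deduce the decoration structure) is what makes the proof go through, and your sketch does not reproduce it.
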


\begin{proposition}[Pruned possible graphs for top-level $H^{-1}$]
    \label{prop:final_top_graphs}
    Let $H^{-1}$ be the expanded $f$-component of any top-level expanded $f$-separating CMS $C^{-1}$ of a minimal graph-nonedge pair.  
    Then, $H^{-1}$ is one of the graphs in Figures \ref{fig:3top_vin_uonly_2} - \ref{fig:3top_vin_uplus_2} such that the dashed line-segment is a nonedge, $f$ is the red edge, and $C^{-1}$ is the subgraph induced by $\{w_i\}$.  
\end{proposition}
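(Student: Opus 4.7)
The plan is to first establish a complete enumeration of candidate shapes for $H^{-1}$ (subject only to the constraint of being the expanded $f$-component of a top-level expanded $f$-separating CMS in \emph{some} graph-nonedge pair), and then to prune this list using the strong minimality hypothesis on $(G,f)$. The first step will rely on bounding $|V(C^{-1})|$ via Lemma \ref{prop:C'_leq_4vert} and on analyzing how vertices of $H^{-1}\setminus C^{-1}$ attach to $C^{-1}$, using the top-level property to rule out nontrivial $f$-separating pairs strictly nested inside $H^{-1}$. I expect this enumeration to reproduce the list in Figure \ref{fig:top-level_H'_cases}, namely the cases (a)--(r).

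Next I would prune this list in two stages. First, by appealing to the fact that $H^{-1}$ must sit above a smaller minimal pair $(J,f')$ inside $G' = G \setminus (H^{-1} \setminus C^{-1})$ (the structure supplied by Proposition \ref{prop:IH}), I would eliminate cases (j)--(r): in each of these graphs one can locate a clique separator strictly between $C^{-1}$ and some $f$-preserving forbidden minor of $G \cup f$, contradicting the top-level property of $C^{-1}$. Since every $f$-preserving forbidden minor lives on very few vertices ($K_5$ or $K_{2,2,2}$), this localization can be made explicit. Second, for the cases (f)--(i), which contain a degree-$1$ decoration (a hollow vertex), I would contract the pendant edge and show that the resulting pair violates minimality: the contracted edge turns out to be reducing, because the contraction both preserves some $f$-preserving forbidden minor in an atom containing $f$ and does not create a new clique separator between $f$ and that minor. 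This is the content of what should become a ``degree-$1$ reduction'' lemma; the natural tools to establish it are minor manipulations of the kind in Lemma \ref{lem:G_not_k5_k222_u4v0} and Lemma \ref{lem:k5_k222_not_wing_fm_e_not_contracted}.

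What survives are precisely cases (a)--(e) of Figure \ref{fig:top-level_H'_cases}, which are the graphs in Figures \ref{fig:3top_vin_uonly_2}--\ref{fig:3top_vin_uplus_2}; the identifications of $C^{-1}$ as the subgraph induced by $\{w_i\}$, of $f$ as the red edge, and of the dashed line-segment as a nonedge are read off directly from the candidate figures. The main obstacle I anticipate is the degree-$1$ pruning stage: one must simultaneously verify that contracting the pendant edge is not of Type (1) (i.e.\ the contracted graph still has an $f$-preserving forbidden minor in an atom containing $f$) and not of Type (2) (no new CMS peels $f$ off that forbidden minor), and this requires a careful enumeration according to how the hollow vertex attaches to $C^{-1}$ and to whether the underlying forbidden minor is $K_5$ or $K_{2,2,2}$. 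I would proceed by contradiction, in each sub-case explicitly exhibiting the $f$-preserving forbidden minor that remains after the pendant contraction, thereby showing the contracted edge is reducing and closing the case analysis.
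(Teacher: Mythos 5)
Your overall framework (enumerate candidates via $|V(C^{-1})|\le 4$ and the top-level property, then prune using minimality) matches the paper's architecture, and you correctly identify Lemma~\ref{prop:C'_leq_4vert} and the minor-manipulation Lemmas~\ref{lem:G_not_k5_k222_u4v0} and~\ref{lem:k5_k222_not_wing_fm_e_not_contracted} as relevant tools. However, there are several concrete gaps.

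First, your proposed mechanism for the first pruning stage is misdirected. You want to ``locate a clique separator strictly between $C^{-1}$ and some $f$-preserving forbidden minor of $G \cup f$, contradicting the top-level property of $C^{-1}$.'' But the forbidden minor lives in the expanded minor-component $I^{-1}$, on the opposite side of $C^{-1}$ from $H^{-1}$, and the top-level property constrains only $H^{-1}$ (no proper subgraph of $H^{-1}$ is the expanded $f$-component of another expanded $f$-separating pair). A separator nested between $C^{-1}$ and the forbidden minor would shrink $I^{-1}$, not $H^{-1}$, and so yields no contradiction to the top-level property. The paper instead proves (Lemmas~\ref{lem:C'_a-n} and~\ref{lem:C'_a-i}) that in the offending cases some edge is \emph{reducing}, contradicting minimality of $(G,f)$ rather than top-level of $C^{-1}$; the key move is a single contraction that identifies $[f]$ with a nonedge $f'$ of $C^{-1}$ and then imports the forbidden-minor atom $J$ supplied by Lemma~\ref{lem:G'_f'_no_reducible_edge} and glues via Lemmas~\ref{lem:gluing_min_k-clique-sum_graphs} / \ref{lem:gluing_top_part_to_atom}.

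Second, you misread the structure of the ``degree-$1$'' cases. The label ``deg1'' in Figures~\ref{fig:deg1_vin_uplus_1}--\ref{fig:deg1_vin_uonly_1} refers to $C^{-1}$ having the shape of Figure~\ref{fig:exp_clique_sep_k3_degree_1} (a four-vertex $C^{-1}$ with a degree-$1$ pattern in its edges), not to a pendant degree-$1$ vertex decorating $H^{-1}$; the hollow vertices in Figure~\ref{fig:top-level_H'_cases} are optional copies of high-degree attached vertices, not pendants. Consequently, ``contract the pendant edge'' is not a well-defined reduction, and the paper instead contracts edges such as $uw_4$, $uw_1$, $xw_1$, $w_2w_4$, or $w_2w_3$ (see Lemmas~\ref{lem:C'_degree_1_reducible_0}--\ref{lem:C'_degree_1_reducible_2}) and runs a case analysis on how the resulting minor relocates the forbidden minor. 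Third, your case grouping is off: the paper prunes $(n)$--$(r)$ in Lemma~\ref{lem:C'_a-n}, then $(i)$--$(m)$ in Lemma~\ref{lem:C'_a-i}, and finally $(f)$--$(h)$ in Lemma~\ref{lem:C'_degree_1_reducible_3}; your second stage folds in case $(i)$ (Figure~\ref{fig:cycle_vin_uplus_2}), which is a cycle-type $C^{-1}$ and has no ``degree-$1$'' structure to contract against. Fixing these would require substituting the reducing-edge argument for your separator argument throughout, and recognizing that the $(f)$--$(h)$ cases are eliminated by a targeted contraction of a $u w_i$ or $x w_i$ edge combined with the minor surgery of Lemmas~\ref{lem:G_not_k5_k222_u4v0} and~\ref{lem:k5_k222_not_wing_fm_e_not_contracted}.
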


The proofs of Propositions \ref{prop:IH} and \ref{prop:final_top_graphs} are very technical and involve extensive case analyses, and so we devote a separate Section \ref{sec:prop-8-9} to them.  

Proposition \ref{prop:IS} is proved using Lemma \ref{lem:degree_3_decorations}, below.  

\begin{figure}[htb]
    \centering
    \begin{subfigure}{0.49\textwidth}
        \centering
        \includegraphics[width = 0.7\textwidth]{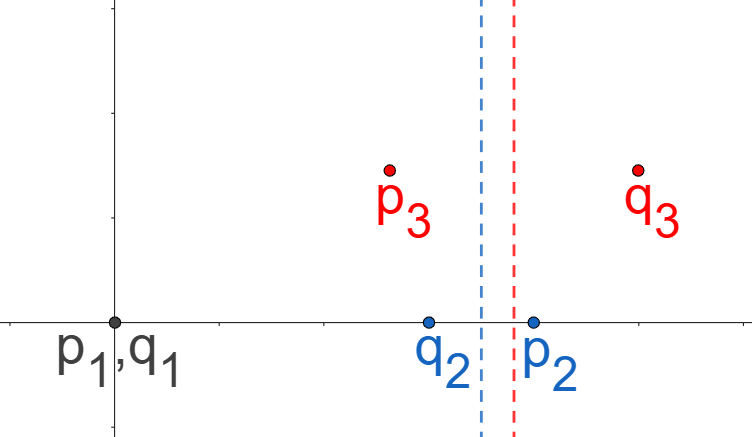}
        \subcaption{}
        \label{fig:triangle_1}
    \end{subfigure}
    \begin{subfigure}{0.49\textwidth}
        \centering
        \includegraphics[width = 0.7\textwidth]{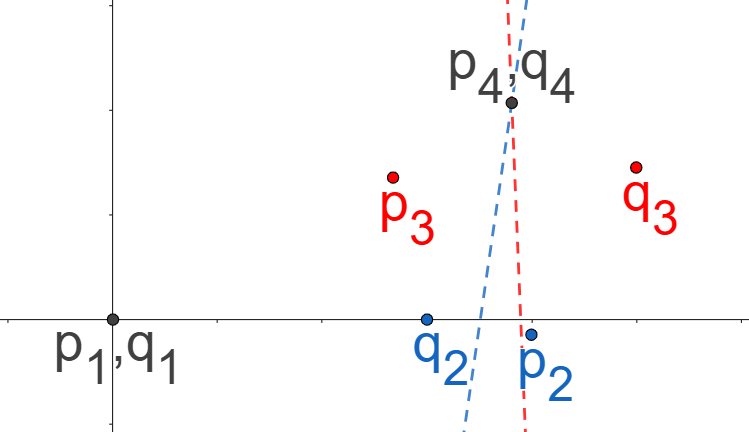}
        \subcaption{}
        \label{fig:triangle_2}
    \end{subfigure}
    \caption{Realizations in the proof of Lemma \ref{lem:degree_3_decorations}.  }
    \label{fig:triangle}
\end{figure}

\begin{lemma}[Degree $3$ decorations preserve non-$3$-SIP]
    \label{lem:degree_3_decorations}
    Let $G$ be $K_4$ and $H$ be one of its $K_3$ subgraphs.  
 Let $\ell_1$ and $\ell_2$ be any squared edge-length maps for $H$ such that some edge is assigned distinct positive real numbers by both maps.  
    If the linkages $(H,\ell_1)$ and $(H,\ell_2)$ both have $2$-realizations, then there exist squared edge-length maps $\ell'_1$ and $\ell'_2$ for $G$ such that 
    \begin{enumerate}[(i)]
        \item they agree with each other on the edges not in $H$ and assign these edges positive real numbers, 
        \item $\ell'_i$ agrees with $\ell_i$ on the edges of $H$, 
        \item the linkages $(G,\ell'_1)$ and $(G,\ell'_2)$ both have $2$-realizations.  
    \end{enumerate}
\end{lemma}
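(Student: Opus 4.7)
The plan is to place the two realizations of $(H,\ell_1)$ and $(H,\ell_2)$ inside one common copy of $\mathbb{R}^2$ and find a single point $q\in\mathbb{R}^2$ whose squared distances to the respective placements of $v_2,v_3,v_4$ agree across the two realizations. Setting $\ell_j'(v_1v_k)$ to be those common squared distances for $k=2,3,4$ will then simultaneously extend each $\ell_j$ to a $2$-realizable edge-length map on $G=K_4$, with $v_1$ placed at $q$ in each picture.

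Concretely, write $V(H)=\{v_2,v_3,v_4\}$ and let $v_1$ be the remaining vertex of $G$. By hypothesis some edge of $H$ is assigned distinct positive values by $\ell_1$ and $\ell_2$; after relabeling $v_3$ and $v_4$ if necessary, take this edge to be $v_2v_3$, so that $r_j := \sqrt{\ell_j(v_2v_3)}$ satisfies $r_1\neq r_2$ and $r_1,r_2>0$. Place the realization of $(H,\ell_j)$ in $\mathbb{R}^2$ with $v_2$ at the origin, $v_3$ at $(r_j,0)$, and $v_4$ at $p_4^{(j)}=(x_j,y_j)$ with $y_j\ge 0$, where $x_j,y_j$ are forced by $\ell_j$. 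For a candidate $q=(q_x,q_y)$ and $a_k^{(j)} := |q-p_k^{(j)}|^2$: the identity $a_2^{(1)}=a_2^{(2)}=q_x^2+q_y^2$ is automatic; the equality $a_3^{(1)}=a_3^{(2)}$ reduces, using $r_1\neq r_2$, to the single linear constraint $q_x=(r_1+r_2)/2$; and $a_4^{(1)}=a_4^{(2)}$ is one further linear equation in $(q_x,q_y)$, which uniquely determines $q_y$ once $q_x$ is fixed, provided $y_1\neq y_2$. Unless both triangles are collinear, one can always ensure $y_1\neq y_2$ by reflecting one triangle across the $x$-axis (which preserves $\ell_j$).

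With such a $q$ in hand, define $\ell_j'$ to agree with $\ell_j$ on $E(H)$ and to assign the common value $a_k^{(1)}=a_k^{(2)}$ to each edge $v_1v_k$; then (i) and (ii) are immediate, and placing $v_1$ at $q$ in each planar picture gives the $2$-realizations demanded by (iii). Positivity of the new squared lengths holds because $q_x=(r_1+r_2)/2>0$ rules out $q=p_2$, the inequality $q_x\neq r_j$ rules out $q=p_3^{(j)}$, and any residual coincidence $q=p_4^{(j)}$ can be eliminated by toggling the reflection on $T_2$. The main obstacle, and the only step where the hypothesis that some edge is assigned distinct positive values is genuinely used, is securing $r_1\neq r_2$: this is what makes the locus of solutions $q$ a proper line, rather than all of $\mathbb{R}^2$, and thereby lets the next constraint cut it down to a single point. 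The fully collinear edge case (both $y_j=0$) is treated separately via a short one-variable computation.
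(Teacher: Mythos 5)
Your setup — pinning a vertex of $H$ at the origin in both placements and seeking the intersection of the two remaining perpendicular bisectors — is essentially the paper's approach (modulo labeling, the paper pins the vertex incident to the edge whose lengths differ). The gap is in how you resolve the degenerate case where the two bisectors are parallel. You use a discrete move, reflecting one realization across the $x$-axis, to force $y_1\neq y_2$; the paper instead rotates one realization by a continuous parameter $\theta$ and avoids a finite set of bad angles. These are genuinely different ideas, and the discrete one is not strong enough.

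Concretely, consider the fully collinear case you defer to a ``short one-variable computation.'' With $v_2$ at the origin, $v_3$ at $(r_j,0)$, $v_4$ at $(x_j,0)$, your two constraints become $q_x=(r_1+r_2)/2$ and (if $x_1\neq x_2$) $q_x=(x_1+x_2)/2$, and these are incompatible unless $r_1+r_2=x_1+x_2$. Nothing in the hypotheses forces this: e.g.\ $\ell_1=(1,0,1)$, $\ell_2=(4,25,9)$ on the edges $(v_2v_3,v_2v_4,v_3v_4)$ gives $r_1=1,r_2=2,x_1=0,x_2=5$, and neither reflection (across the $x$-axis, which does nothing here, or across the $y$-axis, which changes signs of $r_1,x_1$) produces a common $q$. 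One genuinely needs a rotation; after rotating one degenerate realization off the $x$-axis, the bisectors are no longer both vertical and a unique intersection exists for all but finitely many angles. There is a second, smaller fragility: you propose to ``toggle the reflection on $T_2$'' to avoid $q=p_4^{(j)}$, but if $y_1=-y_2$ that toggle re-creates $y_1=y_2$ and destroys the unique-$q_y$ argument, so the two bad events cannot always be simultaneously dodged by a single bit. The rotation parameter fixes both issues at once because one only has to avoid finitely many $\theta$.
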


Lemma \ref{lem:degree_3_decorations} is proved in \ref{sec:prop-10}.  
The crux of the proof is illustrated in Figure \ref{fig:triangle}: consider realizations $p$ and $q$ of $(H,\ell_1)$ and $(H,\ell_2)$, respectively, with $p_1$ and $q_1$ at the origin and with $p_2$ and $q_2$ on the positive $x$-axis.  
All points equidistant from $p_2$ and $q_2$ lie on the dashed blue line, and all points equidistant from $p_3$ and $q_3$ lie on the dashed red line.  
We show that the $G$ linkages can be chosen so that the realization $p$  is rotated to a position where these lines intersect, and then the lemma easily follows.  


\begin{proposition}[Extending non-$3$-SIP maps for minimal pairs]
    \label{prop:IS}
    Let $G'$ be any of the graphs in Figures \ref{fig:3top_vin_uonly_2} - \ref{fig:3top_vin_uplus_2} such that the dashed line-segment is a nonedge, let $H$ be the subgraph of $G$ induced by the set $\{w_i\}$, let $f$ be the red edge, and let $f'$ be the nonedge of $H$ that does not share an endpoint with $f$ if possible, or any other nonedge of $H$ otherwise.  
    Also, let $G = (G' \setminus f) \cup f'$.  
    Lastly, let $\ell_1$ and $\ell_2$ be squared edge-length maps for $H$ that assign each edge a positive real number and differ only in the numbers assigned to $f'$.  
    If the linkages $(H,\ell_1)$ and $(H,\ell_2)$ both have at least one $3$-realization, then there exist squared edge-length maps $\ell'_1$ and $\ell'_2$ for $G$ such that 
    \begin{enumerate}[(i)]
        \item they agree with each other on the edges not in $H$ and assign these edges positive real numbers, 
        \item $\ell'_i$ agrees with $\ell_i$ on the edges of $H$, and 
        \item the CCSs $\Omega^3_f(G,\ell'_1)$ and $\Omega^3_f(G,\ell'_2)$ each contain exactly one positive value, and these values are distinct.
    \end{enumerate}
\end{proposition}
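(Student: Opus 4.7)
The plan is to verify the proposition by a short case analysis over the five graphs $G'$ in Figures \ref{fig:3top_vin_uonly_2}--\ref{fig:3top_vin_uplus_2}, combining Lemma \ref{lem:4-cycle_non-sip} (which handles a $K_4$-minus-$f$ skeleton) with Lemma \ref{lem:degree_3_decorations} (which handles degree-$3$ decoration vertices).

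First I would identify in each figure a $K_4$-minus-$f$ skeleton: namely, $H$ together with the unique ``spoke'' vertex $v \notin V(H)$ that is adjacent to three of the four vertices of $H$ but not to one endpoint of $f$. Under the swap $G = (G' \setminus f) \cup f'$, this skeleton is exactly the graph of Lemma \ref{lem:4-cycle_non-sip}, with $f$ as the missing edge and $f'$ as an edge of the triangle $H$ disjoint from $f$ whenever such a disjoint nonedge exists. Applying Lemma \ref{lem:4-cycle_non-sip} with the given $\ell_1,\ell_2$ then produces squared lengths (the same in both maps) for the two spoke edges $v v_3, v v_4$, chosen so that the triangle $v v_3 v_4$ is collinearly degenerate one way under $\ell_1$ and the other way under $\ell_2$; this forces $f$ to attain a single distinct positive value in each restricted $\Omega^3_f$.

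Next, for each remaining vertex of $G'$ (the degree-$3$ decoration vertices glued to triangles of $H$, none of which is adjacent to an endpoint of $f$), I would invoke Lemma \ref{lem:degree_3_decorations} with its $H$ instantiated to the attaching triangle and with $\ell_1,\ell_2$ restricted to that triangle. This yields common squared lengths for the three edges incident to the decoration so that both linkages on the decorated triangle retain $2$-realizations, which lift to $3$-realizations of all of $G$ compatible with the skeleton realization already fixed. Iterating over all decorations produces the desired $\ell'_1,\ell'_2$ on $G$. Property (i) holds by the choice of common lengths; (ii) holds by construction on $H$; and (iii) persists through the decoration step because a decoration vertex neither shares an endpoint with $f$ nor creates any new path between $f$'s endpoints, so it adds no new constraint on the length of $f$.

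The main obstacle is the boundary case where $H$ has no nonedge disjoint from $f$, so that $f'$ must share an endpoint with $f$ and Lemma \ref{lem:4-cycle_non-sip} does not apply verbatim. In that situation I would use the analogous elementary construction directly on the $K_4$-minus-$f$ skeleton: place the spoke vertex $v$ so that the triangle determining $f$ through the shared endpoint is collinearly degenerate in each of the two orientations, again yielding a single distinct value for $f$ in each linkage. The rest of the case analysis (the degree-$3$ decoration extension) proceeds uniformly, so once the two small base constructions are pinned down the proposition follows.
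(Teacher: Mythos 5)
Your overall strategy — identifying a $K_4$-minus-$f$ skeleton handled by Lemma~\ref{lem:4-cycle_non-sip} and then extending over the remaining degree-$3$ vertices via Lemma~\ref{lem:degree_3_decorations} — is the same two-lemma decomposition the paper uses. For Figures~\ref{fig:3top_vin_uonly_2}--\ref{fig:cycle_vin_uonly_2} your plan matches the paper exactly, and for Figure~\ref{fig:cycle_vin_uplus_3} your Lemma-\ref{lem:4-cycle_non-sip}-then-Lemma-\ref{lem:degree_3_decorations} pipeline is also the paper's argument.

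The gap is in your ``boundary case.'' For Figure~\ref{fig:3top_vin_uplus_2}, where $f'$ cannot be chosen disjoint from $f$, you concede that Lemma~\ref{lem:4-cycle_non-sip} does not apply verbatim and propose an unverified ad hoc degeneration (making the triangle through the shared endpoint collinear in each orientation). You do not establish that this produces a \emph{single} positive value for $f$ in each CCS, nor that the two values are distinct, and you would need to re-derive the relevant tetrahedral-inequality bookkeeping from scratch. The paper avoids inventing a new construction here: it applies Lemma~\ref{lem:4-cycle_non-sip} \emph{twice in a chain} on the subgraph with one unlabeled vertex $x$ — first with $xw_2$ playing the role of $f$ (producing two distinct achievable lengths for $xw_2$), then with $xw_2$ playing the role of $f'$ and $uw_1$ as $f$ — before finishing with Lemma~\ref{lem:degree_3_decorations}. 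That relay keeps the entire argument inside the already-proved lemma and is the idea your proposal is missing; without it (or a careful replacement), the hard case is not actually closed. Your parenthetical claim that a decoration vertex ``adds no new constraint on the length of $f$'' is also not quite what is needed: what Lemma~\ref{lem:degree_3_decorations} supplies is common edge lengths making both decorated linkages realizable, and that realizability (not absence of constraints) is the thing that must be checked.
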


\begin{proof}
    If $G'$ is one of the graphs in Figures \ref{fig:3top_vin_uonly_2} - \ref{fig:cycle_vin_uonly_2}, then Lemma \ref{lem:4-cycle_non-sip} applies and proves the proposition.  
    Else, if $G'$ is the graph in Figure \ref{fig:cycle_vin_uplus_3}, then we can first apply Lemma \ref{lem:4-cycle_non-sip} to the subgraph of $G'$ obtained by deleting the unlabeled vertices, and then add these vertices back and apply Lemma \ref{lem:degree_3_decorations} to prove the proposition.  
    Lastly, if $G'$ is the graph in Figure \ref{fig:3top_vin_uplus_2}, then consider the subgraph of $G'$ obtained by deleting all but one unlabeled vertex, which we call $x$.  
    We first apply Lemma \ref{lem:4-cycle_non-sip} to this subgraph while treating $xw_2$ as $f$, and then we apply this lemma again while treating $xw_2$ as $f'$ and $uw_1$ as $f$.  
    Finally, we add the deleted vertices and apply Lemma \ref{lem:degree_3_decorations} to prove the proposition.  
\end{proof}

We are now ready to prove Theorem \ref{thm:no_pair_in_D_has_3-sip} and the forward direction of Theorem \ref{thm:3-sip_characterization}.  

\begin{proof}[Proof of Theorem \ref{thm:no_pair_in_D_has_3-sip}]
    We show that a minimal graph-nonedge pair $(G,f)$ has a proper non-$3$-SIP map by induction on the number $n$ of vertices in $G$.  
    Since $G \cup f$ has a forbidden minor, we have $n \geq 5$.  
    When $n = 5$, $G \cup f$ is $K_5$, and so the theorem follows from Proposition \ref{lem:no_type_2_no_3-sip}.  
    Assume that the theorem is true when $n \leq k$, for any integer $k \geq 5$, and we will show that it is true when $n = k + 1$.  
    The same proposition allows us to assume that $(G,f)$ has a Type (2) edge.  
    Hence, $(G,f)$ has a top-level expanded $f$-separating CMS $C^{-1}$, by Proposition \ref{lem:expanded_top-level_e-separating_clique_exists}.  
    Let $H^{-1}$ be the expanded $f$-component of $C^{-1}$ and $G' = G \setminus (H^{-1} \setminus C^{-1})$.  
    Consider any minimal graph-nonedge pair $(J,f')$ obtained by applying Proposition \ref{prop:IH} to $(G,f)$ and $C^{-1}$.  
    Proposition \ref{prop:final_top_graphs} tells us that $H^{-1}$ is one of the graphs in Figures \ref{fig:3top_vin_uonly_2} - \ref{fig:3top_vin_uplus_2} such that the dashed line-segment is a nonedge, $f$ is the red edge, and $C^{-1}$ is the subgraph induced by $\{w_i\}$.  
    Proposition \ref{prop:IH} allows us to choose $f' = w_1w_3$ if $H^{-1}$ is the graph in Figure \ref{fig:cycle_vin_uonly_2}, and $f' = w_2w_4$ if $H^{-1}$ is the graph in Figure \ref{fig:cycle_vin_uplus_3}.  

    Next, since $J$ has strictly fewer vertices than $G$, the inductive hypothesis applies and states that $(J,f')$ has a proper non-$3$-SIP map.  
    Using the relationship between $G'$ and $J$ described in Proposition \ref{prop:IH} along with the triangle-inequality, we can obtain a proper non-$3$-SIP map $\ell_{G'}$ for $(G',f')$.  
    By definition, this means that the CCS $\Omega^3_{f'}(G',\ell_{G'})$ contains exactly two distinct positive real numbers.  
    Let $\ell_1$ and $\ell_2$ be squared edge-length maps for the edges of $C^{-1} \cup f'$ that agree with $\ell_{G'}$ on the edges of $C^{-1}$ and assign different squared lengths from $\Omega^3_{f'}(G',\ell_{G'})$ to $f'$.  
    Note that the linkages $(C^{-1} \cup f', \ell_1)$ and $(C^{-1} \cup f', \ell_2)$ each have at least one $3$-realization.  
    Therefore, consider the edge-length maps $\ell'_1$ and $\ell'_2$ for the edges of $(H^{-1} \setminus f) \cup f'$ given by Proposition \ref{prop:IS}.  
    Define $\ell_G$ to be the edge-length map for the edges of $G$ that agrees with $\ell_{G'}$ on the edges of $G'$ and $\ell'_1$ on the edges of $H^{-1}$.  
    Clearly, $\ell_G$ is a proper non-$3$-SIP map for $(G,f)$, which completes the proof by induction.  
\end{proof}

\begin{proof}[Proof of the forward direction of Theorem \ref{thm:3-sip_characterization}]
    Let $(G,f)$ be a graph-nonedge pair such that some atom of $G \cup f$ that contains $f$ has an $f$-preserving forbidden minor.  
    Corollary \ref{cor:atom_gluing} allows us to assume wlog that $G \cup f$ is an atom.  
    If $(G,f)$ has a reducing edge, then let $[G \cup f]$ be the minor obtained by contracting any such edge.  
    By definition, $[G \cup f]$ is $f$-retaining has an atom that contains $[f]$ and an $[f]$-preserving forbidden minor.  
    Hence, by Lemma \ref{lem:cm_non-sip}, if $([G \cup f] \setminus [f], [f])$ does not have the $3$-SIP, then neither does $(G,f)$.  
    Therefore, we can assume wlog that $(G,f)$ has no reducing edge.  
    Combining this with the fact that $G \cup f$ is an atom shows that $(G,f)$ is minimal, and thus the theorem follows from Theorem \ref{thm:no_pair_in_D_has_3-sip}.  
\end{proof}

\subsection{Proofs of Propositions \ref{prop:IH} and \ref{prop:final_top_graphs}}
\label{sec:prop-8-9}

Consider the graph-nonedge pair $(G_{\star},f_{\star})$ from Section \ref{sec:forward}, let $C^{-1}$ be any top-level expanded $f_{\star}$-separating CMS of $G_{\star} \cup f_{\star}$, and let $H^{-1}$ be its expanded $f_{\star}$-component.  
We present the highly technical proofs of Proposition \ref{prop:IH}, which demonstrates the above-mentioned nested structure of minimal pairs, and Proposition \ref{prop:final_top_graphs}, which lists all possible graphs for $H^{-1}$.  
The proofs are completed in two parts: the first part enumerates a finite set of candidate graphs for $H^{-1}$ and the second part gradually eliminates all but a few of these candidates.  
The first part begins with Lemma \ref{lem:expanded_clique_graphs_and_connections}, below, which enumerates all possible graphs for $C^{-1}$ and describes the resulting structure on $G_{\star} \cup f_{\star}$.  
Recall from Section \ref{sec:minor-notation} that an induced subgraph $H$ of a graph $G$ is \emph{retained} in a minor $[G]$ if edge and nonedge of $H$ is retained in $[G]$.  
For any induced subgraphs $H_1$ and $H_2$ of $G$, $H_1 \cup H_2$ is the subgraph of $G$ induced by $V(H_1) \cup V(H_2)$.  

\begin{figure}[htb]
    \centering
    \begin{subfigure}{0.24\textwidth}
        \centering
        \includegraphics[width = 0.5\textwidth]{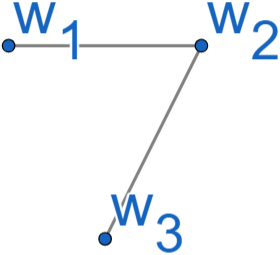}
        \subcaption{}
        \label{fig:exp_clique_sep_k2}
    \end{subfigure}
    \begin{subfigure}{0.24\textwidth}
        \centering
        \includegraphics[width = 0.5\textwidth]{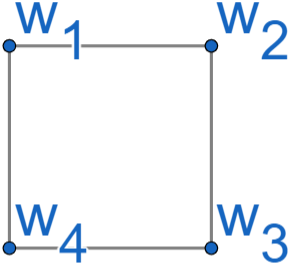}
        \subcaption{}
        \label{fig:exp_clique_sep_k3_cycle}
    \end{subfigure}
    \begin{subfigure}{0.24\textwidth}
        \centering
        \includegraphics[width = 0.5\textwidth]{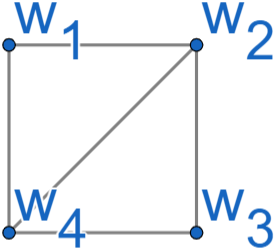}
\subcaption{}
        \label{fig:exp_clique_sep_k3_chord}
    \end{subfigure}
    \begin{subfigure}{0.24\textwidth}
        \centering
        \includegraphics[width = 0.5\textwidth]{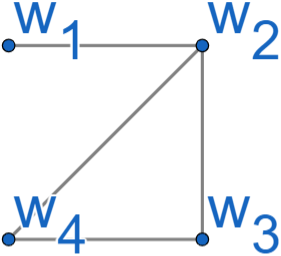}
\subcaption{}
        \label{fig:exp_clique_sep_k3_degree_1}
    \end{subfigure}
    \begin{subfigure}{0.19\textwidth}
        \centering
        \includegraphics[width = 0.6\textwidth]{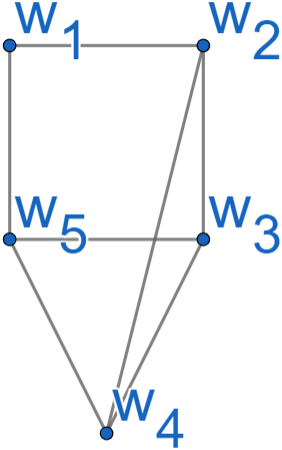}
\subcaption{}
        \label{fig:exp_clique_sep_k4_cycle}
    \end{subfigure}
    \begin{subfigure}{0.19\textwidth}
        \centering
        \includegraphics[width = 0.6\textwidth]{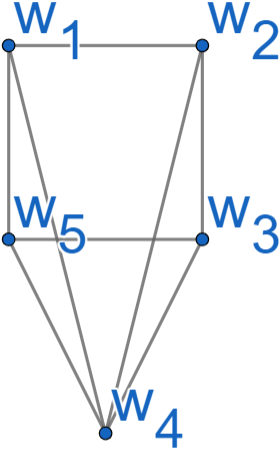}
\subcaption{}
        \label{fig:exp_clique_sep_k4_cycle_2}
    \end{subfigure}
    \begin{subfigure}{0.19\textwidth}
        \centering
        \includegraphics[width = 0.6\textwidth]{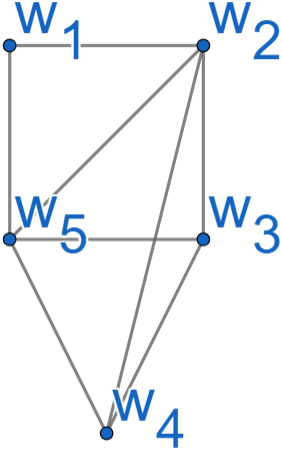}
\subcaption{}
        \label{fig:exp_clique_sep_k4_chord}
    \end{subfigure}
    \begin{subfigure}{0.19\textwidth}
        \centering
        \includegraphics[width = 0.6\textwidth]{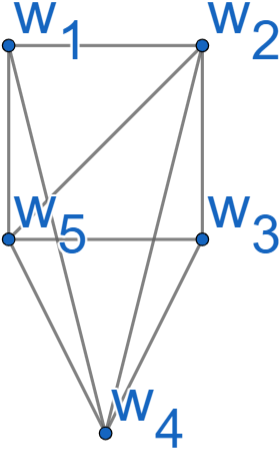}
\subcaption{}
        \label{fig:exp_clique_sep_k4_chord_2}
    \end{subfigure}
    \begin{subfigure}{0.19\textwidth}
        \centering
        \includegraphics[width = 0.6\textwidth]{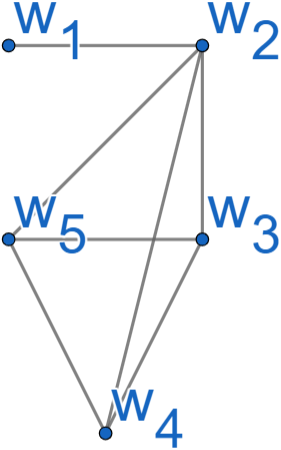}
\subcaption{}
        \label{fig:exp_clique_sep_k4_degree_1}
    \end{subfigure}
    \caption{All possible graphs for top-level expanded $f$-separating of minimal graph-nonedge pairs (Definition \ref{def:expanded_e-separating_pair} in this section).  
    See the discussion and Lemmas \ref{lem:expanded_clique_graphs_and_connections} and \ref{prop:C'_leq_4vert} in this section.  }
    \label{fig:expanded_e-separating_cliques}
\end{figure}

\begin{figure}[htb]
    \centering
    \begin{subfigure}{0.19\textwidth}
        \centering
        \includegraphics[width = 0.5\textwidth]{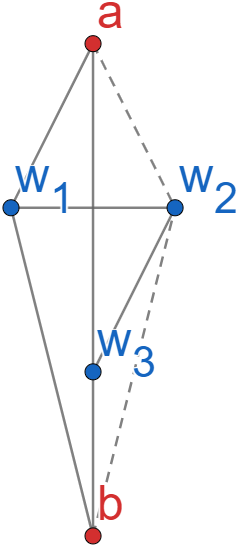}
        \subcaption{}
        \label{fig:exp_clique_sep_k2_conn}
    \end{subfigure}
    \begin{subfigure}{0.19\textwidth}
        \centering
        \includegraphics[width = 0.5\textwidth]{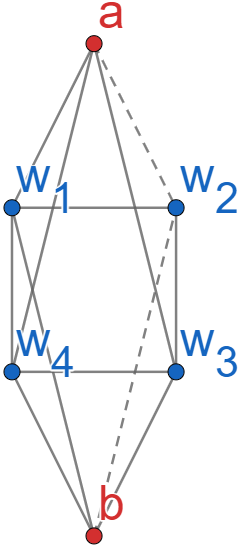}
        \subcaption{}
        \label{fig:exp_clique_sep_k3_cycle_conn_2}
    \end{subfigure}
    \begin{subfigure}{0.19\textwidth}
        \centering
        \includegraphics[width = 0.5\textwidth]{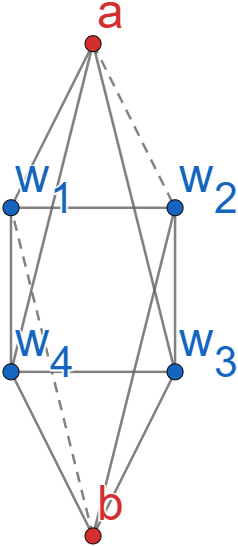}
        \subcaption{}
        \label{fig:exp_clique_sep_k3_cycle_conn_1}
    \end{subfigure}
    \begin{subfigure}{0.19\textwidth}
        \centering
        \includegraphics[width = 0.5\textwidth]{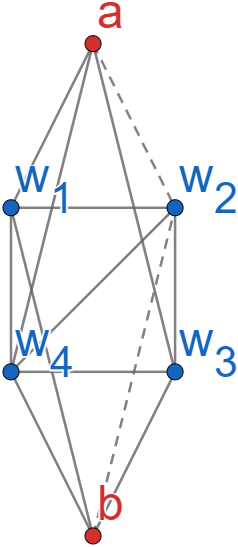}
\subcaption{}
        \label{fig:exp_clique_sep_k3_chord_conn}
    \end{subfigure}
    \begin{subfigure}{0.19\textwidth}
        \centering
        \includegraphics[width = 0.5\textwidth]{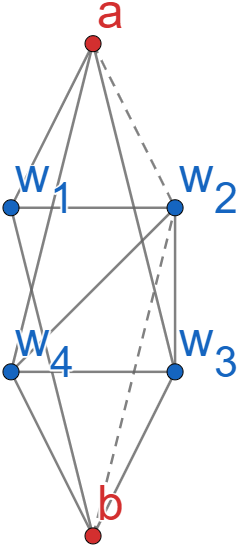}
\subcaption{}
        \label{fig:exp_clique_sep_k3_degree_1_conn}
    \end{subfigure}
        \begin{subfigure}{0.19\textwidth}
        \centering
        \includegraphics[width = 0.5\textwidth]{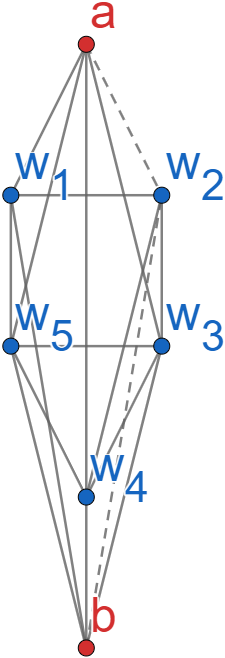}
\subcaption{}
        \label{fig:exp_clique_sep_k4_cycle_conn_1}
    \end{subfigure}
    \begin{subfigure}{0.19\textwidth}
        \centering
        \includegraphics[width = 0.5\textwidth]{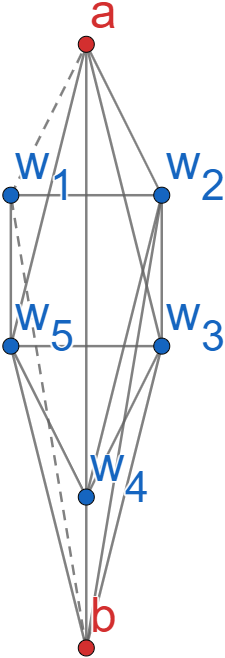}
\subcaption{}
        \label{fig:exp_clique_sep_k4_cycle_conn_2}
    \end{subfigure}
    \begin{subfigure}{0.19\textwidth}
        \centering
        \includegraphics[width = 0.5\textwidth]{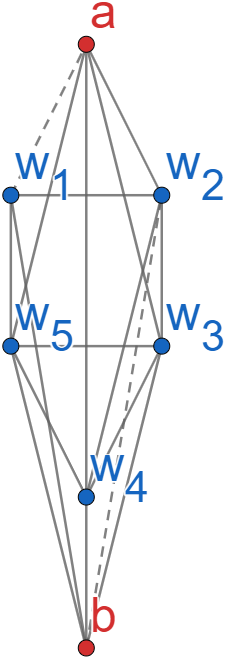}
\subcaption{}
        \label{fig:exp_clique_sep_k4_cycle_conn_3}
    \end{subfigure}
    \begin{subfigure}{0.19\textwidth}
        \centering
        \includegraphics[width = 0.5\textwidth]{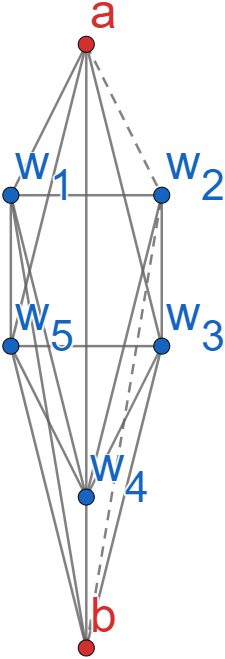}
\subcaption{}
        \label{fig:exp_clique_sep_k4_cycle_2_conn_1}
    \end{subfigure}
    \begin{subfigure}{0.19\textwidth}
        \centering
        \includegraphics[width = 0.5\textwidth]{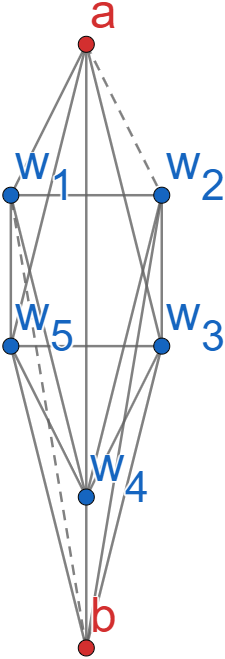}
\subcaption{}
        \label{fig:exp_clique_sep_k4_cycle_2_conn_2}
    \end{subfigure}
    \begin{subfigure}{0.19\textwidth}
        \centering
        \includegraphics[width = 0.5\textwidth]{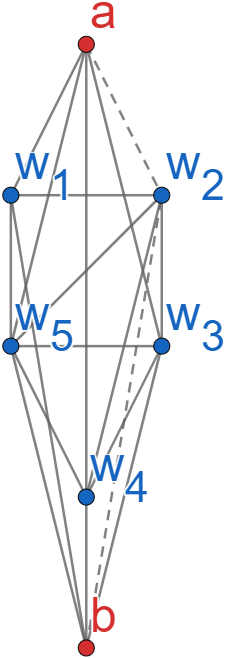}
\subcaption{}
        \label{fig:exp_clique_sep_k4_chord_conn}
    \end{subfigure}
    \begin{subfigure}{0.19\textwidth}
        \centering
        \includegraphics[width = 0.5\textwidth]{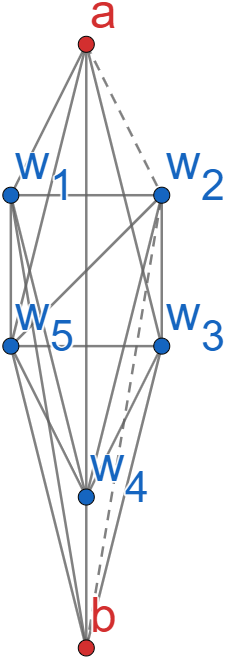}
\subcaption{}
        \label{fig:exp_clique_sep_k4_chord_2_conn}
    \end{subfigure}
        \begin{subfigure}{0.19\textwidth}
        \centering
        \includegraphics[width = 0.5\textwidth]{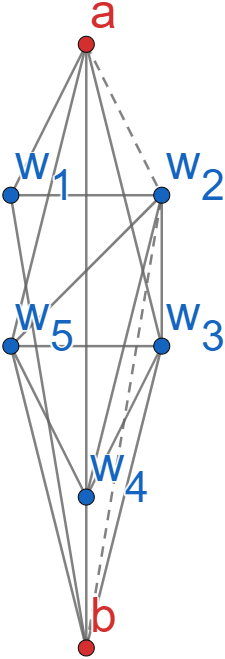}
\subcaption{}
        \label{fig:exp_clique_sep_k4_degree_1_conn}
    \end{subfigure}
    \caption{Graphs in Lemma \ref{lem:expanded_clique_graphs_and_connections}.  
    Dashed line-segments may or may not be edges.  }
    \label{fig:expanded_e-separating_cliques_conn}
\end{figure}

\begin{lemma}[Unpruned possible graphs for top-level $C^{-1}$]
    \label{lem:expanded_clique_graphs_and_connections}
    Let $C^{-1}$ be any top-level expanded $f$-separating CMS of a minimal graph-nonedge pair $(G,f)$, and let its expanded $f$-component and expanded minor component be $H^{-1}$ and $I^{-1}$, respectively.  
    Then, the following statements are true:
    \begin{enumerate}
        \item $C^{-1}$ is 
        one of the graphs in Figure \ref{fig:expanded_e-separating_cliques} and 
        
        \item the minor $[H^{-1} \cup I^{-1}]$ obtained by contracting every edge in both $H^{-1} \setminus C^{-1}$ and $I^{-1} \setminus C^{-1}$ is one of the graphs in  Figure \ref{fig:expanded_e-separating_cliques_conn} with $[C^{-1}]$ as the subgraph induced by the set $\{w_i\}$.  
    \end{enumerate}
\end{lemma}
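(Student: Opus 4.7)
The plan is to prove the two assertions by a case analysis on the size of the CMS $C \subseteq V([G \cup f])$ in the minor $[G \cup f]$ whose $f$-separating pair $(C,M)$ gives rise to the expanded pair $(C^{-1}, M^{-1})$. First I would establish that $2 \le |V(C)| \le 4$: the lower bound follows since $G \cup f$ is an atom and hence $2$-connected, so every separator of $[G \cup f]$ has size at least $2$; the upper bound comes from the $4$-connectedness of the forbidden minors $K_5$ and $K_{2,2,2}$ together with the $f$-separating property. Specifically, at least one endpoint of $[f]$ must lie in $C$ (otherwise $[f]$'s endpoints are both in the $f$-component, and combining $M$ with a path through $C$ would put the $[f]$-preserving forbidden minor inside an atom containing $[f]$, contradicting the $f$-separating condition); and if $|V(C)| \ge 5$ then the $[f]$-component $H$ of $[G \cup f]$ contains a $K_5$ on $C$ that preserves $[f]$, again violating the hypothesis.

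For part (1), for each value of $|V(C)| \in \{2,3,4\}$ I would enumerate the possible connected induced subgraphs $C^{-1}$ of $G$ whose contraction along the partition $\{[w_i]^{-1}\}_{w_i \in V(C)}$ yields the clique $C$. Because $G \cup f$ is an atom, $C^{-1}$ itself cannot induce a clique in $G$, so the expansion must introduce at least one nonedge, either within some set $[w_i]^{-1}$ or between adjacent sets. The top-level condition on $C^{-1}$ forbids a nested expanded $f$-component inside $H^{-1}$, and the minimality of $(G,f)$ forbids reducing edges inside $C^{-1}$; together these eliminate the large family of expansions in which some $[w_i]^{-1}$ is internally complex enough to admit a further $f$-separating structure. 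Carrying this through produces the single edge of Figure \ref{fig:exp_clique_sep_k2} when $|V(C)|=2$, the three graphs of Figures \ref{fig:exp_clique_sep_k3_cycle}--\ref{fig:exp_clique_sep_k3_degree_1} when $|V(C)|=3$, and the five graphs of Figures \ref{fig:exp_clique_sep_k4_cycle}--\ref{fig:exp_clique_sep_k4_degree_1} when $|V(C)|=4$.

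For part (2), once $C^{-1}$ is fixed, the minor $[H^{-1} \cup I^{-1}]$ has vertex set $V(C^{-1}) \cup \{[H^{-1} \setminus C^{-1}],\, [I^{-1} \setminus C^{-1}]\}$, so it is determined completely by the adjacencies of the two ``blob'' vertices with each $w_i \in V(C)$. These adjacencies are constrained by the requirement that contracting $H^{-1} \setminus C^{-1}$ (respectively $I^{-1} \setminus C^{-1}$) into the single vertex of $[G \cup f]$ that it represents recovers $C$ as a clique: every vertex of $V(C)$ whose preimage $[w_i]^{-1}$ has no neighbor outside $C^{-1}$ on the $H$-side must instead be joined to the blob $[H^{-1} \setminus C^{-1}]$ through the contracted edge, and symmetrically for $I$. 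Combining this with the fact that $C$ is a \emph{minimal} $[ux]$-separator for the relevant $[u]$ and $[x]$ then forces each blob to be adjacent to a particular subset of $\{w_i\}$, yielding exactly the configurations in Figure \ref{fig:expanded_e-separating_cliques_conn}, with the dashed line-segments encoding the residual freedom not pinned down by these constraints.

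The main obstacle is the sheer volume of sub-cases at the expansion step, especially for $|V(C)| = 4$, where one must treat every partition of $V(C^{-1})$ into the sets $[w_i]^{-1}$ and then rule out, using the top-level and no-reducing-edge conditions together with the $f$-preserving forbidden minor in $M$, every configuration except those listed. The delicate part is the interplay between the clique requirement on $C$ inside the minor $[G \cup f]$ and the atom property of $G \cup f$ in the host graph, which together force certain edges between distinct sets $[w_i]^{-1}$ to exist while forbidding the corresponding edges within a single $[w_i]^{-1}$; the bookkeeping here is the most error-prone part of the argument, but once those cases are settled, part (2) is a relatively mechanical enumeration.
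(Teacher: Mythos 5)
Your argument for the upper bound on $|V(C)|$ has a genuine gap, and one of its supporting claims is false.

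First, the parenthetical claim that ``at least one endpoint of $[f]$ must lie in $C$'' contradicts the definition of an $f$-separating pair and is disproved by the paper's own case analysis. By Definition~\ref{def:expanded_e-separating_pair}, $C$ is a minimal $[ux]$-separator where $[u]$ is an endpoint of $[e]$; so $[u] \notin C$, and the $e$-component $H$ contains \emph{both} endpoints of $[e]$ because $H$ contains $[e]$. The other endpoint $[v]$ may or may not lie in $C$ --- the paper's subsequent lemmas (e.g.\ Lemma~\ref{lem:3top_uout} and the ``$v$ out'' cases in Figure~\ref{fig:top-level_H'_cases}) treat precisely the situation where $C^{-1}$ contains no endpoint of $f$. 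Your parenthetical reasoning also misidentifies what the $f$-separating condition constrains: it restricts \emph{atoms of the minor containing $[f]$}, not the full $C$-component $H$; since $C$ is a clique separator, $M$ lies entirely on the other side of $C$ from any such atom and cannot be ``combined with a path through $C$'' inside it.

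Second, the argument that $|V(C)|\geq 5$ forces the $f$-component $H$ to contain an $[f]$-preserving $K_5$, ``violating the hypothesis,'' does not close the loop. $H$ is a $C$-component, not an atom, and the $f$-separating condition constrains only the atom $J$ of $[G\cup f]_1$ that contains $[f]_1$. From $|V(C)|\geq 5$ the correct conclusion (and the one the paper draws) is that $J$ \emph{cannot contain} all of $C$; this yields a smaller clique $[ux]_1$-separator $E\subsetneq C$ whose expanded $f$-component is a proper subgraph of $H^{-1}$. The contradiction then comes from the \emph{top-level} hypothesis on $(C^{-1},M^{-1})$, not from $f$-separating or $4$-connectedness alone. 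Your sketch never invokes the top-level property for the bound, so it cannot reach a contradiction. (Your part~(2) outline is in the same spirit as the paper's --- constrain blob adjacencies by the clique/minimality/atom conditions --- and is essentially sound once part~(1) is repaired, though the paper phrases it as a contradiction argument against deleting blob-to-$w_i$ edges.)
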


The proof of Lemma \ref{lem:expanded_clique_graphs_and_connections} is given in \ref{sec:prop-8}.  
The main obstacle to proving Statement (1) is the case where $C^{-1}$ has at least six vertices, which we show is not possible when $C^{-1}$ is in the top-level.  
Statement (2) then follows from the facts that $G_{\star} \cup f_{\star}$ is an atom and $C^{-1}$ is an expanded $f$-separating CMS.  

Next, Lemma \ref{cor:top-level_structure}, below, describes the neighborhoods of vertices in $H^{-1} \setminus C^{-1}$, showing that each edge of this subgraph is incident to some endpoint of $f_{\star}$.  
This lemma is proved using using Lemmas \ref{lem:I'_fbm_after_f-component_contraction} and \ref{lem:type_2_yields_smaller_expanded_f-component}, also below.


\begin{lemma}[No Type (1) edge in $(H^{-1} \setminus f) \setminus C^{-1}$]
    \label{lem:I'_fbm_after_f-component_contraction}
    Let $C^{-1}$ be any expanded $f$-separating CMS of a minimal graph-nonedge pair $(G,f)$, with $f = uv$, and let its expanded $f$-component and expanded minor component be $H^{-1}$ and $I^{-1}$, respectively.  
    Contracting any edge of $(H^{-1} \setminus f) \setminus C^{-1}$ yields a minor $[G \cup f]$ such that $[I^{-1}]$ has an $[f]$-preserving forbidden minor.  
\end{lemma}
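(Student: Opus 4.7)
The plan is to show that the $[f]_0$-preserving forbidden minor of $M$ supplied by the $f$-separating pair underlying $C^{-1}$ lifts to a minor map of $G\cup f$ which, after a mild re-routing, factors through the contraction of $e$ and restricts to a forbidden minor of $[I^{-1}]$. First I would unpack Definition~\ref{def:expanded_e-separating_pair}: there is an $f$-separating minor $[G\cup f]_0$ of $G\cup f$ with an $f$-separating pair $(C,M)$ realizing $C^{-1}$, $H^{-1}$, and $I^{-1}$, and by the definition of such a pair $M$ carries an $[f]_0$-preserving forbidden minor. Composing the $M$-minor map with the contraction map $\pi_0:V(G\cup f)\to V([G\cup f]_0)$, and invoking Lemma~\ref{lem:pair_e'_exists}, produces a minor map $\pi:V(G\cup f)\to V(K)$ with $K\in\{K_5,K_{2,2,2}\}$ such that $\pi$ realizes a forbidden minor, $\pi(u)\ne\pi(v)$, the restriction of $\pi$ to $V(I^{-1})$ realizes a forbidden minor of $I^{-1}$, and the only parts of $\pi$ meeting $V(H^{-1})\setminus V(C^{-1})$ are $\pi(u)$ and possibly $\pi(v)$. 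This last structural point uses that $C$ separates $[u]$ from $V(M)\setminus V(C)$ in $[G\cup f]_0$, so any connected part not already containing $u$ or $v$ can be chosen to stay inside $V(I^{-1})$.

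Now let $x,y$ be the endpoints of the contracted edge $e$; both lie in $V(H^{-1})\setminus V(C^{-1})$, which is disjoint from $V(I^{-1})$ since $V(H^{-1})\cap V(I^{-1})=V(C^{-1})$. If $\pi(x)=\pi(y)$ then $\pi$ factors through the contraction of $e$ to a minor map $\bar\pi:V([G\cup f])\to V(K)$; since the contraction does not touch $V(I^{-1})$, the restriction $\bar\pi|_{V([I^{-1}])}$ coincides with $\pi|_{V(I^{-1})}$ and realizes the required forbidden minor of $[I^{-1}]$, while $\bar\pi([u])\ne\bar\pi([v])$ witnesses $[f]$-preservation. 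Otherwise the structural restriction on $\pi$ forces $\{\pi(x),\pi(y)\}=\{\pi(u),\pi(v)\}$; I would then re-route $\pi$ inside the connected subgraph $V(H^{-1})\setminus V(C^{-1})$ by absorbing a path from $x$ to $y$ into $\pi(u)$'s part so that both $x$ and $y$ are assigned to $\pi(u)$. For the modified $\pi$ to remain a valid minor map, the complementary fragment of $\pi(v)$ must stay connected and still meet $V(C^{-1})$, which I would arrange using that $C$ is a minimal clique separator and that $G\cup f$ is an atom.

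I expect the re-routing step to be the main obstacle: guaranteeing that after absorbing the $x$-$y$ path the residual part of $\pi(v)$ remains connected and retains its contact with $V(C^{-1})$ so that it can still reach $V(M^{-1})$. This is precisely where Lemma~\ref{lem:pair_e'_exists} is used, supplying a well-chosen endpoint around which to grow the absorbing part; the atom structure of $G\cup f$ and the minimality of $(G,f)$ then rule out the pathological two-separations inside $V(H^{-1})\setminus V(C^{-1})$ that would block the construction. Once $\pi$ has been so adjusted, it descends to $\bar\pi$ on $V([G\cup f])$ as in the easy case, producing the desired $[f]$-preserving forbidden minor of $[I^{-1}]$.
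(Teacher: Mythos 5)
Your overall plan is in the same spirit as the paper's: exhibit a forbidden minor of $I^{-1}$ that weakly preserves $f$, and then arrange for the contracted edge $e$ to be contracted in a suitable extension to $G\cup f$, so that the extension factors through $[G\cup f]$. The paper's actual proof is three lines: $I^{-1}$ has an $f$-preserving forbidden minor $[I^{-1}]_1$ by definition; $I^{-1}$ is untouched by contracting $e$; and Lemma \ref{lem:pair_e'_exists} supplies paths $P_{ux}$, $P_{vy}$ from $u$ and $v$ to a preserved pair $xy$ of $[I^{-1}]_1$ whose \emph{images after contracting $e$} are vertex-disjoint, which is exactly the data needed to build an extension in which both $f$ is preserved and $e$ is contracted.

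The gap in your writeup is the treatment of Lemma \ref{lem:pair_e'_exists}. You invoke it only to get a minor map $\pi$ with some structural constraints, and then set up a case split on whether $\pi(x)=\pi(y)$ (for $x,y$ the endpoints of $e$), relegating the hard case to an ad hoc re-routing. But the lemma's conclusion already kills that hard case: if the extension is built by absorbing $[P_{ux}]$ into $[x]_1$'s part and $[P_{vy}]$ into $[y]_1$'s part, then both endpoints of $e$ cannot land in distinct path-parts, because that would force the contracted vertex $[e]$ to lie on both $[P_{ux}]$ and $[P_{vy}]$, contradicting their vertex-disjointness in $[G\cup f]$. Any endpoint of $e$ not on either path is free to be assigned, so $e$'s endpoints can always be placed in a common part. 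Thus the bad branch of your dichotomy never occurs when the lemma is used as intended, and no re-routing is needed.

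The re-routing you propose is also where your argument has a substantive hole, not merely an inefficiency. Absorbing a path from $x$ to $y$ into $\pi(u)$'s part may split the residual $\pi(v)$-part into components, or sever its only contact with $C^{-1}$ (hence with $I^{-1}$); you wave at minimality, the atom structure of $G\cup f$, and ``a well-chosen endpoint'' from Lemma \ref{lem:pair_e'_exists}, but none of those supplies a mechanism for repairing connectivity of the complement. The combinatorics that make the repair unnecessary is precisely the content of Lemma \ref{lem:pair_e'_exists}: it does not produce a generic forbidden minor for you to patch afterwards, it constructs paths that are already compatible with the contraction. You should restructure the proof to build the extension directly from those paths, as the paper does, and drop the case split.

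One smaller concern: your structural claim that ``the only parts of $\pi$ meeting $V(H^{-1})\setminus V(C^{-1})$ are $\pi(u)$ and possibly $\pi(v)$'' is not automatic. Every vertex of $H^{-1}\setminus C^{-1}$ other than $u$, $v$ must still be assigned to some part, and that part only has to be connected and intersect $C^{-1}$, not coincide with $\pi(u)$ or $\pi(v)$. This means the case $\pi(x)\ne\pi(y)$ need not collapse to $\{\pi(x),\pi(y)\}=\{\pi(u),\pi(v)\}$, so even the case analysis as you frame it is not exhaustive.
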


\begin{lemma}[No Type (2) edge in $(H^{-1} \setminus f) \setminus C^{-1}$]
    \label{lem:type_2_yields_smaller_expanded_f-component}
    Let $C^{-1}$ be an expanded $f$-separating CMS of a minimal graph-nonedge pair $(G,f)$, and let $H^{-1}$ be its expanded $f$-component.  
    If $C^{-1}$ is in the top level of $(G,f)$, then no edge of $(H^{-1} \setminus f) \setminus C^{-1}$ is of Type (2).
\end{lemma}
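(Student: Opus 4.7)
The plan is to argue by contradiction: I will assume some edge $e \in (H^{-1} \setminus f) \setminus C^{-1}$ is of Type (2), and then produce an expanded $f$-separating pair of $G \cup f$ whose expanded $f$-component is a proper subgraph of $H^{-1}$, which contradicts the top-level property of $C^{-1}$.

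First I will set up the new separating data. Let $[G \cup f]_e$ denote the minor obtained by contracting only $e$. By Lemma \ref{lem:I'_fbm_after_f-component_contraction}, $[I^{-1}]_e$ already carries an $[f]_e$-preserving forbidden minor inside $[G \cup f]_e$. By the Type (2) assumption, no atom of $[G \cup f]_e$ containing $[f]_e$ carries such a minor, so $[G \cup f]_e$ is $[f]_e$-separating, and Lemma \ref{lem:separating_minor_has_separating_pair} supplies an $[f]_e$-separating pair $(E, N)$. Combined with Lemma \ref{cor:minimal_k-clique-sum_component_containing_fbm_and_f}, I can arrange $N$ to be the atom of $[G \cup f]_e$ that carries the forbidden minor coming from $[I^{-1}]_e$. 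This makes $(E^{-1}, N^{-1})$ an expanded $f$-separating pair of $G \cup f$; write $H''^{-1}$ for its expanded $f$-component.

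The crux is to show $H''^{-1} \subsetneq H^{-1}$. The key structural input is that, because $e$ has both endpoints inside $V(H^{-1}) \setminus V(C^{-1})$, the vertex set $V(C^{-1})$ continues to separate the contracted $[H^{-1}]_e \setminus [C^{-1}]_e$ from $[I^{-1}]_e \setminus [C^{-1}]_e$ in $[G \cup f]_e$. Since $N$ has been placed inside $[I^{-1}]_e$ and the clique minimal separator $E$ separates $[f]_e$ from $N$, every $[f]_e$-to-$N$ path in $[G \cup f]_e$ must cross both $E$ and (the image of) $C^{-1}$. The minimality of $E$ as a clique separator, together with the fact that $C^{-1}$ (viewed in $[G \cup f]_e$) is itself a non-clique separator between $[f]_e$ and $N$, will force $V(E^{-1}) \subseteq V(H^{-1})$, and therefore $V(H''^{-1}) \subseteq V(H^{-1})$. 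To upgrade to strict containment I will use that $G \cup f$ is an atom: if $V(H''^{-1}) = V(H^{-1})$ then $E^{-1}$ would have to play the role of $C^{-1}$ as the ``outer boundary'' of $H^{-1}$, and the minimality and cliqueness of $E$ in $[G \cup f]_e$ would be inherited by $E^{-1}$ in $G \cup f$, contradicting the fact that $G \cup f$ has no clique separator.

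The main obstacle will be making the localization $V(E^{-1}) \subseteq V(H^{-1})$ and the subsequent strict containment fully rigorous. This will require a careful case analysis based on whether $[e]_e$ lies in $E$ or in one of the $E$-components, together with a lifting argument that transfers the separator and cliqueness properties of $E$ in $[G \cup f]_e$ back to the induced subgraph $E^{-1}$ in $G \cup f$, while keeping track of its relationship to the pre-existing non-clique separator $C^{-1}$.
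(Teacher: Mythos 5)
Your overall strategy is the right one (contract the Type (2) edge, extract an $f$-separating pair of the resulting minor, and show its expanded $f$-component is strictly smaller than $H^{-1}$), and you correctly invoke Lemmas \ref{lem:I'_fbm_after_f-component_contraction} and \ref{lem:separating_minor_has_separating_pair}. But the core localization step --- showing $E$ (equivalently $E^{-1}$) lies inside $[H^{-1}]$ --- is asserted rather than proved, and the mechanism you gesture at would not establish it. You say that the minimality of $E$ and the fact that $[C^{-1}]$ is a non-clique separator ``will force $V(E^{-1})\subseteq V(H^{-1})$,'' but this is not true in general: a minimal clique $[ux]$-separator in the contracted graph could perfectly well sit deep inside $[I^{-1}]$, on the far side of $[C^{-1}]$, with nothing in your setup to rule this out. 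The missing idea in the paper's proof is that \emph{any} CMS $E$ of the contracted minor must contain the contraction vertex $[a]$, because otherwise $E^{-1}$ would be a clique separator of $G\cup f$, contradicting that $G\cup f$ is an atom (since $(G,f)$ is minimal). Once $[a]\in E$, and since $[a]\in [H^{-1}]\setminus [C^{-1}]$ and $[C^{-1}]$ is a separator while $E$ is a clique, $E\subseteq [H^{-1}]$ follows immediately.

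Two further issues. First, you try to place $N$ \emph{inside} $[I^{-1}]_e$; the paper instead proves that $N$ \emph{contains} $[G']=(G\cup f)\setminus(H^{-1}\setminus C^{-1})$ (via Lemma \ref{lem:G'_connected} and the $E\subseteq[H^{-1}]$ localization). This is essential: it is precisely because $N$ contains $[C^{-1}]$ that the vertices of $[C^{-1}]\setminus E$ are pulled out of the $f$-component, making it smaller. Second, your strictness argument (``if $V(H''^{-1})=V(H^{-1})$ then $E^{-1}$ would be a clique separator of $G\cup f$'') does not go through --- $E^{-1}$ contains both endpoints of the contracted edge, so it need not be a clique. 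The paper's strictness step is different and cleaner: $C^{-1}$ is retained in $[G\cup f]$, Lemma \ref{lem:expanded_clique_graphs_and_connections}(1) guarantees $[C^{-1}]$ is not a clique, hence the clique $E$ cannot contain all of $[C^{-1}]$, so $[C^{-1}]\setminus E\neq\emptyset$; combined with $[C^{-1}]\setminus E\subseteq N\cap[H^{-1}]$ this forces the $f$-component of $(E,N)$ to be a proper subgraph of $[H^{-1}]$.
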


Lemmas \ref{lem:I'_fbm_after_f-component_contraction} and \ref{lem:type_2_yields_smaller_expanded_f-component} are proved in \ref{sec:prop-8}.  
To prove Lemma \ref{lem:I'_fbm_after_f-component_contraction}, we show that the $f_{\star}$-preserving forbidden minor of $I^{-1}$, which exists by definition, is not destroyed by the contraction leading to $[G_{\star} \cup f_{\star}]$.  
Lemma \ref{lem:type_2_yields_smaller_expanded_f-component} is proved by assuming to the contrary that $(H^{-1} \setminus f_{\star}) \setminus C^{-1}$ contains some Type (2) edge and then concluding that $C^{-1}$ is not in the top-level, a contradiction.  

\begin{lemma}[Structure of top-level $H^{-1}$]
    \label{cor:top-level_structure}
    Let $C^{-1}$ be a top-level expanded $f$-separating CMS of a minimal graph-nonedge pair $(G,f)$, with $f=uv$, and let $H^{-1}$ be its expanded $f$-component.  
    The neighborhood of any vertex $w$ in $H^{-1} \setminus (C^{-1} \cup \{u,v\})$) contains only $u$, $v$, and at least one vertex in $C^{-1} \setminus \{u,v\}$.  
    Furthermore, the vertices $u$ and $v$ may have neighbors in $C^{-1}$.  
\end{lemma}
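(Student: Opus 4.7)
The plan is to prove the two assertions of Lemma \ref{cor:top-level_structure} separately, using the two preceding tool lemmas for the ``only'' part and the atom property of $G\cup f$ for the ``at least one'' part.

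\textbf{Step 1: every neighbor of $w$ lies in $\{u,v\}\cup C^{-1}$.} First I would observe that since $C^{-1}$ is the expansion of a CMS separating $H$ from the remaining $C$-components of the underlying minor, every neighbor of $w$ in $G\cup f$ already lies in $H^{-1}$. It then suffices to rule out neighbors of $w$ in $H^{-1}\setminus(C^{-1}\cup\{u,v\})$. Suppose $z$ is such a neighbor. Since $w,z\notin C^{-1}$ and $w\notin\{u,v\}$ (so $wz\neq f$), the edge $wz$ belongs to $(H^{-1}\setminus f)\setminus C^{-1}$. Lemma \ref{lem:I'_fbm_after_f-component_contraction} shows $wz$ is not of Type~(1), Lemma \ref{lem:type_2_yields_smaller_expanded_f-component} shows $wz$ is not of Type~(2), and the minimality of $(G,f)$ rules out Type~(4). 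Hence $wz$ is of Type~(3), which forces $f=uv$ to be doubled after contracting $wz$. Because the only identification performed is $w\sim z$, the preimage of $[f]$ in the resulting minor gains a second edge beyond $uv$ only if $\{w,z\}\cap\{u,v\}\neq\emptyset$. As $w\notin\{u,v\}$, we conclude $z\in\{u,v\}$, contradicting $z\notin\{u,v\}$.

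\textbf{Step 2: $w$ has at least one neighbor in $C^{-1}\setminus\{u,v\}$.} From Step~1 every neighbor of $w$ lies in $\{u,v\}\cup C^{-1}$. Suppose for contradiction that no neighbor of $w$ lies in $C^{-1}\setminus\{u,v\}$. Then all neighbors of $w$ lie in $\{u,v\}$. Since $uv=f$ is an edge of $G\cup f$, the pair $\{u,v\}$ induces a $2$-clique. Because $(G,f)$ is minimal, $G\cup f$ is an atom and contains an $f$-preserving forbidden minor, so $\lvert V(G\cup f)\rvert\geq 5$; in particular $V(G\cup f)\setminus\{u,v,w\}$ is non-empty. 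Since $w$'s only neighbors in $G\cup f$ are $u$ and $v$, the vertex $w$ is isolated from the remainder of $V(G\cup f)\setminus\{u,v\}$, so $\{u,v\}$ is a clique separator of $G\cup f$, contradicting the fact that $G\cup f$ is an atom. Thus $w$ must have a neighbor in $C^{-1}\setminus\{u,v\}$, completing the proof.

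The apparent difficulty is Step~2, since a priori one might expect to have to modify the minor map or invoke the top-level property of $C^{-1}$ to force the extra neighbor. The observation that $\{u,v\}$ itself would otherwise be a $2$-clique separator collapses this subtlety once the atom hypothesis is applied. Step~1 is then a direct Type-classification argument feeding off Lemmas \ref{lem:I'_fbm_after_f-component_contraction} and \ref{lem:type_2_yields_smaller_expanded_f-component}, together with the fact that minimality excludes reducing edges.
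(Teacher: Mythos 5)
There is a genuine gap. The lemma's conclusion — and the form in which it is invoked later, e.g., in the proof of Lemma \ref{lem:3top_uin}, where it is read as ``each such vertex neighbors $u$, $w_1$, and at least one vertex of $C^{-1}\setminus w_1$'' — includes the assertion that $w$ is adjacent to \emph{both} $u$ and $v$. Your Step~1 only shows that every neighbor of $w$ lies in $\{u,v\}\cup C^{-1}$, and your Step~2 only produces a neighbor in $C^{-1}\setminus\{u,v\}$; nowhere do you establish that $wu$ and $wv$ are edges. Your Step~2 contradiction (``$\{u,v\}$ would be a clique separator'') already presupposes that $w$'s only neighbors could be $u$ and $v$, but it does not force both incidences to actually occur, and you cannot get them from Step~1's subset containment alone.

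The paper closes this by running the Type-(3) argument in the \emph{opposite} direction as well. First observe that $w$ must have some neighbor $z$ in $H^{-1}\setminus C^{-1}$ (this set is connected and has at least two vertices, e.g., $w$ and the endpoint of $f$ outside $C^{-1}$). Your Step~1 then forces $z\in\{u,v\}$; say $z=u$. But now $wu$ is itself an edge of $(H^{-1}\setminus f)\setminus C^{-1}$, so the very same Type classification you used in Step~1 applies to $wu$: it is Type~(3), hence contracting $wu$ must double $f$ in the resulting minor, and since $[u]^{-1}=\{u,w\}$ and $[v]^{-1}=\{v\}$ after that contraction, doubling requires $wv$ to also be an edge. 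That is the step you are missing: the Type-(3)/doubling dichotomy must be applied not only to candidate edges into $H^{-1}\setminus(C^{-1}\cup\{u,v\})$ (where it yields a contradiction) but also to the surviving candidate edges into $\{u,v\}$ (where it forces the other incidence). Once $wu, wv\in E$ are established, your Step~2 (atom $\Rightarrow$ a neighbor outside $\{u,v\}$) combines with Step~1 to give the neighbor in $C^{-1}\setminus\{u,v\}$, and the proof is complete.
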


\begin{proof}
    Since $w$ is a vertex of $H^{-1} \setminus (C^{-1} \cup \{u,v\})$, it must be adjacent to some vertex $z$ in $H^{-1} \setminus C^{-1}$.  
    Note that Lemmas \ref{lem:I'_fbm_after_f-component_contraction} and \ref{lem:type_2_yields_smaller_expanded_f-component} imply that $wz$ is of Type (3).  
    Hence, $z$ must be either $u$ or $v$, and both $wu$ and $wv$ must be edges.  
    Next, since $G \cup f$ is an atom, $w$ must be adjacent to some vertex other than $u$ or $v$.  
    The same two lemmas imply that this vertex is contained in $C^{-1}$.  
    Thus, the lemma is proved.  
\end{proof} 

Lemmas \ref{lem:expanded_clique_graphs_and_connections} and \ref{cor:top-level_structure} are used to prove Lemma \ref{prop:C'_leq_4vert}, below, which eliminates the graphs in Figures \ref{fig:expanded_e-separating_cliques}(e)-(i) as possibilities for  $C^{-1}$.  
The proof of Lemma \ref{prop:C'_leq_4vert} also requires Lemma \ref{lem:C'_5_H'_wing}, below.  

\begin{figure}[!htbp]
    \centering
    \begin{subfigure}{0.24\textwidth}
        \centering
        \includegraphics[width = 0.8\textwidth]{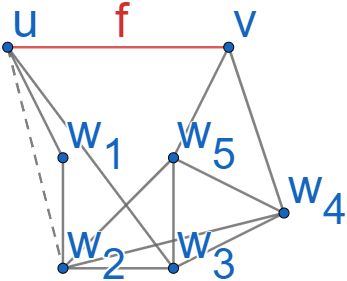}
        \subcaption{}
        \label{fig:5_degree_1}
    \end{subfigure}
    \begin{subfigure}{0.24\textwidth}
        \centering
        \includegraphics[width = 0.8\textwidth]{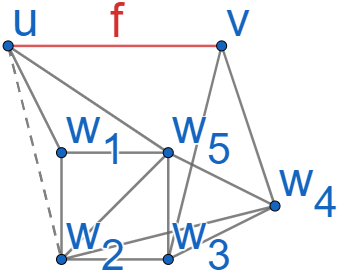}
        \subcaption{}
        \label{fig:5_chord_arg_1}
    \end{subfigure}
    \begin{subfigure}{0.24\textwidth}
        \centering
        \includegraphics[width = 0.8\textwidth]{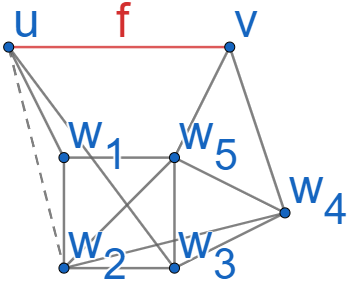}
        \subcaption{}
        \label{fig:5_k5_chord-e}
    \end{subfigure}
    \begin{subfigure}{0.24\textwidth}
        \centering
        \includegraphics[width = 0.8\textwidth]{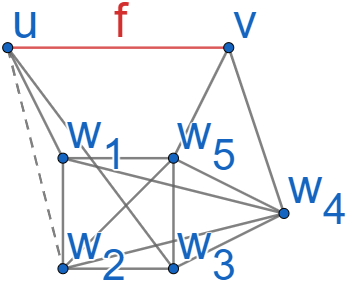}
        \subcaption{}
        \label{fig:5_k5_chord}
    \end{subfigure}
    \begin{subfigure}{0.24\textwidth}
        \centering
        \includegraphics[width = 0.8\textwidth]{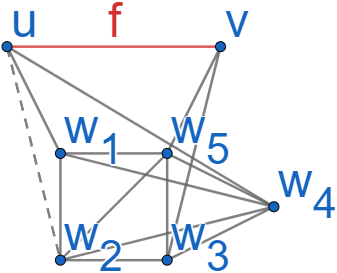}
        \subcaption{}
        \label{fig:5_k5_chord_arg_2}
    \end{subfigure}
    \begin{subfigure}{0.24\textwidth}
        \centering
        \includegraphics[width = 0.8\textwidth]{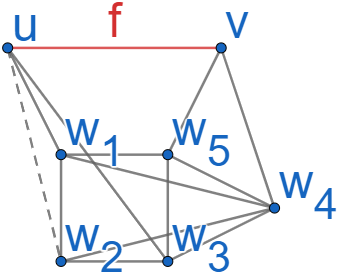}
        \subcaption{}
        \label{fig:5_cycle_4_u3}
    \end{subfigure}
    \begin{subfigure}{0.24\textwidth}
        \centering
        \includegraphics[width = 0.8\textwidth]{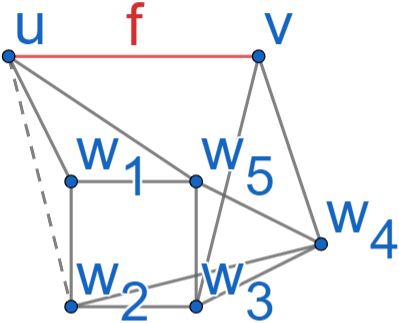}
        \subcaption{}
        \label{fig:5_cycle_arg_3_2}
    \end{subfigure}
    \begin{subfigure}{0.24\textwidth}
        \centering
        \includegraphics[width = 0.8\textwidth]{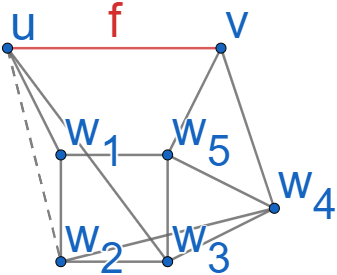}
        \subcaption{}
        \label{fig:5_cycle_3_u3}
    \end{subfigure}
    \begin{subfigure}{0.24\textwidth}
        \centering
        \includegraphics[width = 0.8\textwidth]{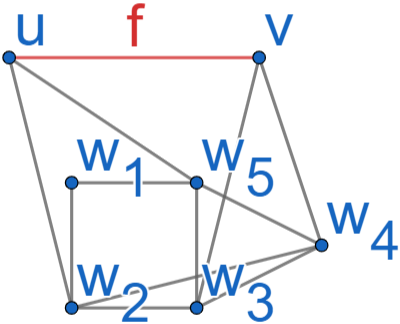}
        \subcaption{}
        \label{fig:5_cycle_3_arg_3_2}
    \end{subfigure}
    \begin{subfigure}{0.24\textwidth}
        \centering
        \includegraphics[width = 0.8\textwidth]{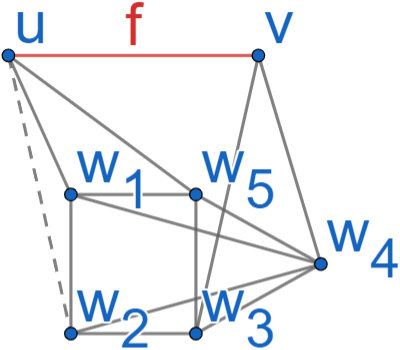}
        \subcaption{}
        \label{fig:5_cycle_4_u2_edge}
    \end{subfigure}
    \begin{subfigure}{0.24\textwidth}
        \centering
        \includegraphics[width = 0.8\textwidth]{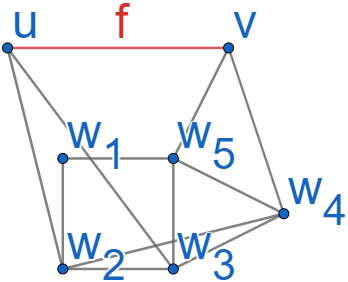}
        \subcaption{}
        \label{fig:5_cycle_3_u2_edge}
    \end{subfigure}
    \begin{subfigure}{0.24\textwidth}
        \centering
        \includegraphics[width = 0.8\textwidth]{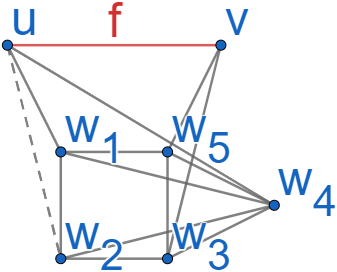}
        \subcaption{}
        \label{fig:5_cycle_4_arg_4}
    \end{subfigure}
    \caption{All possible expanded $f$-components of a top-level expanded $f$-separating CMS that has exactly five vertices.  
    Dashed line-segments may be edges or nonedges.  
    See Lemma \ref{lem:C'_5_H'_wing} in this section.  }
    \label{fig:5_top}
\end{figure}

\begin{lemma}[Cases for $H^{-1}$ when $C^{-1}$ has $5$ vertices]
\label{lem:C'_5_H'_wing}
     Let $C^{-1}$ be a top-level expanded $f$-separating CMS of a minimal graph-nonedge pair $(G,f)$.  
     If $|V(C^{-1})| = 5$, then the expanded $f$-component of $C^{-1}$ is one of the graphs in Figure \ref{fig:5_top} with $f$ as the red edge and $C^{-1}$ as the subgraph induced by the set $\{w_i\}$.  
\end{lemma}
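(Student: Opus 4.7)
The plan is to proceed by a case analysis on the shape of $C^{-1}$ (one of the five 5-vertex graphs in Figure \ref{fig:expanded_e-separating_cliques}(e)-(i) guaranteed by Lemma \ref{lem:expanded_clique_graphs_and_connections}(1)) together with the placement of the endpoints $u,v$ of $f$, and to then bound the possible attachments of the ``extra'' vertices $H^{-1} \setminus C^{-1}$. By Lemma \ref{cor:top-level_structure}, every vertex $w$ of $H^{-1} \setminus (C^{-1} \cup \{u,v\})$ is adjacent to $u$, to $v$, and to at least one vertex of $C^{-1}\setminus\{u,v\}$, and has no other neighbors. Thus $H^{-1}$ is forced to be $C^{-1}$ together with a (possibly empty) collection of such ``pendant triangles'' based at $uv$, plus possibly some extra edges from $u$ or $v$ to other vertices of $C^{-1}$; the lemma reduces to pruning these placements.

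First I would handle the placement of $u,v$ inside $C^{-1}$ using the connection structure recorded by the minor $[H^{-1} \cup I^{-1}]$ in Lemma \ref{lem:expanded_clique_graphs_and_connections}(2). Because $C^{-1}$ must arise as an expanded CMS and $I^{-1}$ contributes an $f$-preserving forbidden minor on the other side, the positions of $[u]$ and $[v]$ in $[C^{-1}]$ are heavily constrained: they must be chosen so that after contracting $H^{-1}$ and $I^{-1}$, the resulting 5-vertex graph is one of those listed in Figure \ref{fig:expanded_e-separating_cliques_conn}. This eliminates many \emph{a priori} placements of $f$ in $C^{-1}$ and fixes $f$ up to symmetry as one of the red edges shown in Figure \ref{fig:5_top}.

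Next I would bound the number and attachments of extra vertices. For each fixed $(C^{-1},f)$ and each choice of how many vertices of $H^{-1}\setminus C^{-1}$ exist, and which subset of $C^{-1}\setminus\{u,v\}$ each attaches to, I would check that the resulting graph is realized in Figure \ref{fig:5_top} or can be excluded. The two main exclusion tools are: (i) the winged-graph reduction lemmas (Lemma \ref{lem:G_not_k5_k222_u4v0} and Lemma \ref{lem:k5_k222_not_wing_fm_e_not_contracted}) to rule out any configuration that would produce an $f$-preserving forbidden minor inside $H^{-1}$ itself, contradicting top-levelness of $C^{-1}$ (since then some proper subgraph of $H^{-1}$ would already be the expanded $f$-component of a deeper expanded $f$-separating pair); and (ii) the clique-separator properties collected in Lemma \ref{lem:clq_sep_props}, which ensure $G\cup f$ is an atom, thereby excluding attachments that would create a size $\leq 3$ clique separator cutting off the forbidden minor in $I^{-1}$ from $f$. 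Finally, minimality (no reducing edges) is used to kill any configuration admitting a contraction that still keeps $G\cup f$ an atom with an $f$-preserving forbidden minor.

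The hard part, I expect, is exhaustively handling the case where $C^{-1}$ is one of the two ``cycle''-like graphs in Figures \ref{fig:exp_clique_sep_k4_cycle} or \ref{fig:exp_clique_sep_k4_cycle_2}: there the placements of $u,v$ admit the most variety (endpoints on a shared triangle vs.\ on opposite ends), and the extra vertices can attach to several different vertices of $C^{-1}\setminus\{u,v\}$, giving rise to the richer family (f)--(l) of Figure \ref{fig:5_top}. Managing these subcases requires careful bookkeeping: for each attachment pattern I must exhibit either an $f$-preserving forbidden minor inside $H^{-1}$ (via a winged-graph argument), a clique separator in $G\cup f$, or a reducing edge (using Lemmas \ref{cor:top-level_structure} and \ref{lem:type_2_yields_smaller_expanded_f-component} to replace a failed case with a strictly smaller top-level expanded $f$-component, violating top-levelness of $C^{-1}$). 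The remaining survivors will match the configurations in Figure \ref{fig:5_top}, completing the enumeration. Because of the length and repetitive structure of this case check, I would defer the detailed verification to an appendix, in line with the paper's treatment of similar case analyses.
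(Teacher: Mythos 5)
Your proposal takes a genuinely different route from the paper, and the difference matters: the paper does not work directly with $H^{-1}$ in $G$ but instead passes to the $f$-separating minor $[G \cup f]$ obtained by contracting one edge of $C^{-1}$, so that $C$ becomes a $K_4$, and then pins down the $f$-component $H$ \emph{in that minor} to be exactly the winged graph of Figure~\ref{fig:k5_wing}. Once $H$ is forced to be that single graph, Lemma~\ref{lem:expanded_clique_graphs_and_connections}(2) translates this back to the short list of possible $H^{-1}$'s in Figure~\ref{fig:5_top}. Your plan of directly enumerating $(C^{-1}, u, v, \text{attachments})$ is much heavier and skips the structural collapse that makes the paper's argument tractable.

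More importantly, there is a gap in your pruning strategy. The paper's first and crucial step is to prove that the $f$-component $H$ has no $[f]$-preserving forbidden minor; this is established via Lemma~\ref{lem:4-clique separator_of_H'_contains_f}, which shows every atom of $H$ contains $[f]$, combined with Lemma~\ref{lem:clq_sep_props}, Lemma~\ref{cor:minimal_k-clique-sum_component_containing_fbm_and_f}, and the fact that $[G \cup f]$ is $f$-separating. Your proposal instead appeals to Lemma~\ref{lem:G_not_k5_k222_u4v0} and Lemma~\ref{lem:k5_k222_not_wing_fm_e_not_contracted} to ``rule out any configuration that would produce an $f$-preserving forbidden minor inside $H^{-1}$,'' but those lemmas \emph{consume} the fact that a given edge must be contracted in all forbidden minors; they do not themselves establish it. You cannot cite Lemma~\ref{lem:nonedge_C'_no_fbm_H'+f'} here either, since it depends on Lemma~\ref{prop:C'_leq_4vert}, which in turn depends on the very lemma you are proving. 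The heuristic you offer — ``then some proper subgraph of $H^{-1}$ would already be the expanded $f$-component of a deeper pair, violating top-levelness'' — is the right intuition, but it is not immediate; turning it into a correct deduction is precisely what Lemma~\ref{lem:4-clique separator_of_H'_contains_f} accomplishes, and your outline does not supply a substitute for it. Without that ingredient the winged-graph machinery cannot be engaged, and the subsequent case check is not grounded.
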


\begin{lemma}[Pruned possible graphs for $C^{-1}$]
    \label{prop:C'_leq_4vert}
    If $C^{-1}$ is a top-level expanded $f$-separating CMS of a minimal graph-nonedge pair $(G,f)$, then $C^{-1}$ is one of the graphs in Figures \ref{fig:exp_clique_sep_k2}-\ref{fig:exp_clique_sep_k3_degree_1}.  
\end{lemma}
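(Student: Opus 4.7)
The plan is to prove Lemma \ref{prop:C'_leq_4vert} by contradiction: assume $C^{-1}$ is one of the graphs in Figures \ref{fig:exp_clique_sep_k4_cycle}--\ref{fig:exp_clique_sep_k4_degree_1}, i.e.\ the underlying CMS $C$ in the minor is $K_4$, and show that this is incompatible with either the minimality of $(G,f)$ or the top-level property of $C^{-1}$. By Lemma \ref{lem:expanded_clique_graphs_and_connections}(1), $C^{-1}$ is one of the nine candidates in Figure \ref{fig:expanded_e-separating_cliques}, so it suffices to eliminate these five $K_4$-based candidates.

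First I would fix one such candidate and invoke Lemma \ref{lem:expanded_clique_graphs_and_connections}(2) to get an explicit description of $[H^{-1}\cup I^{-1}]$, pinning down how the forbidden-minor component $I^{-1}$ and the $f$-component $H^{-1}$ attach to $C^{-1}$. Then Lemma \ref{cor:top-level_structure} constrains the neighborhoods of vertices in $H^{-1}\setminus(C^{-1}\cup\{u,v\})$: each such vertex is adjacent only to $u$, $v$, and at least one vertex of $C^{-1}\setminus\{u,v\}$. Combined with the adjacency pattern from Lemma \ref{lem:expanded_clique_graphs_and_connections}(2), this forces a very limited layout for how the $K_4$ separator $C$ sees both $f$ and the forbidden minor.

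The key step is then to exhibit, in each case, a proper $3$-vertex subclique $C'\subseteq V(C)$ whose expansion $C'^{-1}$ is still an expanded $f$-separating CMS but with a strictly smaller expanded $f$-component, contradicting that $C^{-1}$ is top-level. Lemma \ref{lem:I'_sep_components_contain_C'-E} identifies which vertices of $C^{-1}$ must lie on the minor-component side of any separator, and Lemma \ref{lem:path_in_graph_path_in_atom} certifies that in the atom $G\cup f$ every path from $f$ to any chosen vertex of the forbidden-minor atom $M$ must traverse $C'$. Lemma \ref{lem:gluing_min_k-clique-sum_graphs} then reassembles the clique-sum decomposition to confirm that $(C', M)$ remains an $f$-separating pair in a suitable minor.

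The main obstacle will be the degree-$1$ case (Figure \ref{fig:exp_clique_sep_k4_degree_1}), where the expanded separator carries a pendant vertex attached to the $K_4$ base; here the pendant vertex may be forced into the expanded $f$-component rather than being droppable from the separator, so one must rule out a new route from $M$ to $f$ that bypasses every $K_3$ subclique of $C$ via that pendant. A subcase analysis, keyed to whether the pendant is adjacent to $u$, to $v$, or to neither, together with the atom-connectivity guaranteed by Lemma \ref{lem:path_in_graph_path_in_atom}, should eliminate this final case. For the four cycle/chord variants the argument is more uniform: the extra vertex or chord in the expansion can be absorbed into either the $M$-side or the $f$-side, and the resulting three-vertex subclique of $C$ furnishes the required smaller expanded $f$-separating CMS, yielding the contradiction and completing the proof.
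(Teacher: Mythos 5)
Your proposal and the paper's proof both begin with Lemma \ref{lem:expanded_clique_graphs_and_connections}(1) to reduce to nine candidates and then try to eliminate the five cases where $C$ (the CMS in the minor) is $K_4$, but from there the strategies diverge and yours has a gap.

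The paper's proof contradicts the \emph{minimality of the pair} $(G,f)$: it invokes Lemma \ref{lem:C'_5_H'_wing} --- a tool you never mention --- to pin down the exact structure of the expanded $f$-component $H^{-1}$ (one of the twelve graphs in Figure \ref{fig:5_top}), and then in each case exhibits a \emph{reducing edge} of $(G,f)$, violating Definition \ref{def:minimal_pair}. You instead try to contradict the \emph{top-level} property of $C^{-1}$ by finding a proper $3$-vertex subclique $C'\subseteq V(C)$ that is still an $f$-separating CMS with a strictly smaller expanded $f$-component. This is where the gap lies: $C$ is by definition a clique \emph{minimal} separator of the $f$-separating minor, so no proper subset of $V(C)$ separates the endpoint $[u]$ of $[f]$ from the forbidden-minor atom $M$ in that minor. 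To make a $3$-subclique separate you must pass to a different minor --- typically by contracting an edge to absorb the fourth vertex of $C$ into one side --- but your sketch does not explain how such a contraction preserves the $f$-separating pair structure, nor why it shrinks (rather than possibly enlarges) the expanded $f$-component. The remark that the extra vertex ``can be absorbed into either the $M$-side or the $f$-side'' is not safe: absorbing it into the $f$-side does not shrink $H^{-1}$, and absorbing it into the $M$-side requires a path bypassing the rest of $C$ that a minimal separator specifically precludes.

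This obstruction is exactly why the paper's proof of Lemma \ref{lem:expanded_clique_graphs_and_connections}(1) rules out $|V(C^{-1})|\geq 6$ but cannot rule out $|V(C^{-1})|=5$ by the same trick: for $C\supseteq K_5$ the $f$-component atom cannot contain $C$ (since $K_5$ alone forces a forbidden minor), giving a smaller CMS for free, whereas $K_4$ is $3$-flattenable and an atom containing $[f]$ can contain $C$ entirely. The paper therefore switches to the reducing-edge route, which does not require producing a smaller separator. To repair your argument you would need either to justify the minor-passage step (and show the resulting $f$-component strictly shrinks), or to adopt the reducing-edge strategy, and in either case you would need the structural enumeration of $H^{-1}$ from Lemma \ref{lem:C'_5_H'_wing} to make the case analysis tractable.
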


Lemmas \ref{lem:C'_5_H'_wing} and \ref{prop:C'_leq_4vert} are proved in Section \ref{sec:prop-8}.  
Lemma \ref{lem:C'_5_H'_wing} is proved by showing that, in any $f_{\star}$-separating minor $[G_{\star} \cup f_{\star}]$ where $[C^{-1}]$ is a CMS with four vertices, $[H^{-1}]$ is the graph in Figure \ref{fig:k5_wing} such that $[f_{\star}]$ is $w_1w_2$.  
The lemma then follows easily from Lemma \ref{lem:expanded_clique_graphs_and_connections} Statement (2).  
Lemma \ref{prop:C'_leq_4vert} is proved by assuming to the contrary that $C^{-1}$ is one of the graphs in Figures \ref{fig:expanded_e-separating_cliques}(e)-(i) and then concluding that $(G_{\star},f_{\star})$ is not minimal, a contradiction.  

The above lemmas have considerably reduced the possible graphs for $H^{-1}$, which are now enumerated in Lemma \ref{prop:expC_3or4}, below.  
The proof of this lemma, given in \ref{sec:proof_top-level_H'_cases}, involves case checking using techniques as in the proof of Lemma \ref{prop:C'_leq_4vert}, but for far more cases.  

\begin{lemma}[Unpruned possible graphs for top-level $H^{-1}$]
    \label{prop:expC_3or4}
    If $C^{-1}$ is a top-level expanded $f$-separating CMS of a minimal graph-nonedge pair $(G,f)$, then the expanded $f$-component of $C^{-1}$ is one of the graphs in Figure \ref{fig:top-level_H'_cases} with $f$ as the red edge and $C^{-1}$ as the subgraph induced by the vertex set $\{w_i\}$.  
\end{lemma}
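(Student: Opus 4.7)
The proof will proceed by exhaustive case analysis, partitioning according to the shape of the expanded CMS $C^{-1}$ and the position of the endpoints $u,v$ of $f$. I would begin by invoking Lemma \ref{prop:C'_leq_4vert} to fix $C^{-1}$ as one of the four graphs in Figures \ref{fig:exp_clique_sep_k2}-\ref{fig:exp_clique_sep_k3_degree_1}, and then split further according to whether each endpoint of $f$ lies in $C^{-1}$ or in $H^{-1} \setminus C^{-1}$. Since $H^{-1}$ necessarily contains $f$ and $C^{-1} \subseteq H^{-1}$, the vertex set $V(H^{-1})$ decomposes as $V(C^{-1}) \cup \{u,v\} \cup X$, where $X$ is the set of remaining vertices. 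My plan is to first enumerate the possible positions of $u$ and $v$ relative to $C^{-1}$ (which yields the ``vin/vout'' and ``uin/uonly/uplus'' naming scheme used in Figure \ref{fig:top-level_H'_cases}), and then, for each such placement, catalog the possible $X$.

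The core structural constraints will come from three previously proved results. Lemma \ref{cor:top-level_structure} restricts any vertex $w \in X$ to have neighborhood contained in $\{u,v\} \cup (V(C^{-1}) \setminus \{u,v\})$, and furthermore to have at least one neighbor in $C^{-1} \setminus \{u,v\}$; this rules out most ``free'' extensions and forces $X$-vertices to be adjacent to $u,v$ and hung off $C^{-1}$. Lemma \ref{lem:expanded_clique_graphs_and_connections}(2) then pins down the contracted minor $[H^{-1} \cup I^{-1}]$ as one of the graphs in Figure \ref{fig:expanded_e-separating_cliques_conn}, which fixes exactly how $f$ attaches to $C^{-1}$ through $H^{-1}$ (in particular how many ``top vertices'' above $C^{-1}$ must be present, and whether they collapse into $u$ or $v$). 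Finally, the minimality of $(G,f)$, together with Lemmas \ref{lem:I'_fbm_after_f-component_contraction} and \ref{lem:type_2_yields_smaller_expanded_f-component}, ensures there is no Type~(1) or Type~(2) edge inside $(H^{-1} \setminus f) \setminus C^{-1}$, excluding many tentative additions of extra edges or vertices.

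With these three tools, for each configuration $(C^{-1}, \text{placement of } u, \text{placement of } v)$ I would: (i) list all possible sets $X$ consistent with Lemma \ref{cor:top-level_structure} and $|V(H^{-1})| \le 4$ (the bound coming from the fact that $H^{-1}$ has fewer than five vertices, as also witnessed by the figure); (ii) check each resulting candidate against the required connection pattern from Lemma \ref{lem:expanded_clique_graphs_and_connections}(2); and (iii) verify that no edge of the candidate is reducing (else $(G,f)$ would not be minimal) and that $C^{-1}$ is indeed in the top level (else we would find a strictly smaller expanded $f$-component, contradicting top-levelness). The helper Lemmas \ref{lem:3top_uin}--\ref{lem:top_remaining} referenced in the proof diagram would be organized along exactly these case-splits, one per $C^{-1}$ shape together with one per placement of $u,v$, each handling the allowable $X$ and the allowable edges among $C^{-1} \cup \{u,v\} \cup X$.

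The main obstacle will be the sheer combinatorial explosion of sub-cases: there are four candidate $C^{-1}$'s, several placements of $\{u,v\}$ in each, and within each placement several possible $X$'s together with choices of optional edges (the dashed line-segments in Figure \ref{fig:top-level_H'_cases}) and optional degree-one or degree-two attachments (the hollow vertices). Many candidates will look viable on the surface but must be eliminated because some edge is reducing --- these checks require actually exhibiting a contraction that produces an $f$-retaining forbidden minor or decomposes the pair into a smaller minimal one via Lemma \ref{lem:gluing}. I expect the trickiest subcases to be those where $C^{-1}$ is a triangle with chord or a triangle with a degree-one pendant, because there the top-levelness hypothesis interacts subtly with the optional edges, and one must argue that certain combinations would give a strictly smaller expanded $f$-component (contradicting the top-level choice of $C^{-1}$). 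Once every case is exhausted, the only surviving possibilities will be precisely the eighteen graphs in Figure \ref{fig:top-level_H'_cases}, completing the proof.
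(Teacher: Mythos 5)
Your proposal is essentially the same approach as the paper: the paper's proof of this lemma is literally a one-liner delegating to Lemma \ref{prop:C'_leq_4vert} (which fixes $C^{-1}$ to Figures \ref{fig:exp_clique_sep_k2}--\ref{fig:exp_clique_sep_k3_degree_1}) together with Lemmas \ref{lem:3top_uin}--\ref{lem:top_remaining}, which are organized exactly along the case-split you describe — $C^{-1}$ shape crossed with the position of the endpoints of $f$ — and use Lemma \ref{cor:top-level_structure}, Lemma \ref{lem:expanded_clique_graphs_and_connections}(2), the exclusion of Type~(1) and Type~(2) edges in $(H^{-1} \setminus f) \setminus C^{-1}$, and reducing-edge/top-levelness arguments as the main filters, just as you sketch. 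One small factual slip: you assert $|V(H^{-1})| \le 4$ ``as also witnessed by the figure,'' but this is false — several candidates in Figure \ref{fig:top-level_H'_cases} contain $u$, $v$, all of $C^{-1}$, and additional hollow vertices, so $H^{-1}$ can be arbitrarily large; the correct bound (from Lemma \ref{prop:C'_leq_4vert}) is $|V(C^{-1})| \le 4$, and the extra vertices of $H^{-1} \setminus C^{-1}$ are instead controlled by the degree constraints of Lemma \ref{cor:top-level_structure}. This misstatement would not change your plan materially, since the enumeration is over neighborhood patterns rather than raw vertex counts, but it should be corrected before a careful execution.
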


Next, Lemma \ref{lem:C'_a-n}, stated below and proved in \ref{sec:prop-8}, reduces the possible graphs for $H^{-1}$ to those in Figures \ref{fig:3top_vin_uonly_2} - \ref{fig:deg1_vout_uvonly}.  
The proof argues that $(G_{\star},f_{\star})$ has a reducing edge if $H^{-1}$ is not one of these graphs, a contradiction.  

\begin{lemma}[Pruning cases for $H^{-1}$]
    \label{lem:C'_a-n}
    The expanded $f$-component $H^{-1}$ of any expanded $f$-separating CMS $C^{-1}$ of a minimal graph-nonedge pair is one of the graphs in Figures \ref{fig:3top_vin_uonly_2} - \ref{fig:deg1_vout_uvonly}, where $C^{-1}$ is the subgraph induced by the vertex set $\{w_i\}$.  
    Furthermore, if $H^{-1}$ is one of the graphs in Figures \ref{fig:cycle_vin_uonly_2}, \ref{fig:cycle_vin_uplus_3}, \ref{fig:3top_vin_uplus_2}, \ref{fig:cycle_vin_uplus_2}, \ref{fig:deg1_vin_uplus_3}, or \ref{fig:deg1_vin_uplus_4}, then at least one dashed line-segment is a nonedge.  
\end{lemma}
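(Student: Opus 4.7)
The plan is to argue by contradiction using the finite candidate list supplied by Lemma \ref{prop:expC_3or4}. Assume $H^{-1}$ is either (i) one of the graphs in Figures \ref{fig:3top_vin_uonly_1}--\ref{fig:cycle_vin_uplus_4}, i.e.\ one of cases (n)-(r) in Figure \ref{fig:top-level_H'_cases} that are not in the allowed list, or (ii) one of Figures \ref{fig:cycle_vin_uonly_2}, \ref{fig:cycle_vin_uplus_3}, \ref{fig:3top_vin_uplus_2}, \ref{fig:cycle_vin_uplus_2}, \ref{fig:deg1_vin_uplus_3}, or \ref{fig:deg1_vin_uplus_4} with every dashed line-segment realized as an edge. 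In each such case the aim is to exhibit an edge $e$ of $G_{\star}$ whose contraction is reducing, contradicting the minimality of $(G_{\star}, f_{\star})$.

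I would first set up the common framework. Let $(G_{\star}, f_{\star})$, with $f_{\star} = uv$, be a minimal pair with top-level expanded $f_{\star}$-separating CMS $C^{-1}$, expanded $f_{\star}$-component $H^{-1}$, and expanded minor-component $I^{-1}$. By Lemma \ref{lem:expanded_clique_graphs_and_connections}(2), the contraction minor $[H^{-1} \cup I^{-1}]$ has $[C^{-1}]$ as a clique on $\{w_i\}$, and by definition $[I^{-1}]$ carries an $[f_{\star}]$-preserving forbidden minor. By Corollary \ref{cor:top-level_structure}, every vertex $w$ of $H^{-1} \setminus (C^{-1} \cup \{u,v\})$ has its neighborhood contained in $\{u,v\} \cup (C^{-1} \setminus \{u,v\})$, with both $wu$ and $wv$ being edges.

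The main body then proceeds case-by-case. For each of (n)-(r), I would pick a specific edge $e$ of $H^{-1}$ (typically an edge joining a top vertex on the $u$-side to a vertex of $C^{-1}\setminus\{u,v\}$, or a ``collapsible'' edge between the two sides of $C^{-1}$) and verify that the contraction minor $[G_{\star} \cup f_{\star}]_e$ satisfies (A) $[f_{\star}]_e$ is preserved and retained (not doubled), and (B) some atom of $[G_{\star} \cup f_{\star}]_e$ containing $[f_{\star}]_e$ inherits the $[f_{\star}]$-preserving forbidden minor from $[I^{-1}]$. Together (A) and (B) say $e$ is reducing. Since the chosen contraction happens entirely inside $H^{-1}$, it does not disturb $I^{-1}$, so the forbidden minor of $[I^{-1}]$ survives; this is made precise using Lemmas \ref{lem:gluing_min_k-clique-sum_graphs} and \ref{lem:gluing_top_part_to_atom} to track the atom still containing $[f_{\star}]_e$ and the inherited forbidden minor. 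For the cases (c), (d), (e), (i), (k), (l) with every dashed line-segment present as an edge, the ``extra'' dashed-line edge plays the role of $e$: its contraction collapses a redundant triangle but leaves both the preservation of $[f_{\star}]$ and the atom-forbidden-minor structure intact.

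The main obstacle I expect is verifying condition (A), that $[f_{\star}]_e$ remains retained after contracting $e$. Since $f_{\star} = uv$, doubling can only occur if $e$ shares an endpoint with $f_{\star}$ and its other endpoint is itself joined to the opposite endpoint of $f_{\star}$ by an edge of $G_{\star}$. I would rule this out by choosing $e$ within $H^{-1}$ away from this configuration, using Corollary \ref{cor:top-level_structure} to pin down exactly which pairs of vertices are joined by edges in $H^{-1}$. Because the list of cases is finite and each is a small explicit graph, the verification reduces to a finite inspection, but the number of subcases induced by the hollow vertices and dashed line-segments in Figure \ref{fig:top-level_H'_cases} is substantial, so the bulk of the work is bookkeeping rather than mathematical content.
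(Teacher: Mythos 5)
Your proposal follows the paper's broad outline (start from the candidate list of Lemma \ref{prop:expC_3or4}, exhibit a reducing edge, contradict minimality), but it misses the single ingredient that actually closes the argument, and as stated it would not go through.

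The gap is in your condition (B). You want the atom of $[G_{\star}\cup f_{\star}]_e$ containing $[f_{\star}]_e$ to ``inherit the $[f_{\star}]$-preserving forbidden minor from $[I^{-1}]$.'' But $I^{-1}$ does not contain $f_{\star}$; $f_{\star}$ has at least one endpoint in $H^{-1}\setminus C^{-1}$, so $I^{-1}$ has no forbidden minor that literally preserves $f_{\star}$. What $I^{-1}$ has, by Lemma \ref{lem:I'_fbm_f_and_C'_nonedges_not_contracted}, is an $f'$-preserving forbidden minor for nonedges $f'$ of $C^{-1}$. The paper's proof therefore routes everything through Lemma \ref{lem:f'_and_J}: it fixes a nonedge $f'$ of $C^{-1}$ and the atom $J$ of $G'\cup f'$ (where $G' = G\setminus(H^{-1}\setminus C^{-1})$) that contains $f'$ and carries an $f'$-preserving forbidden minor, and then \emph{chooses the contracted edge $e$ so that $[f]=[f']$ after contraction and $G'\cup f'$ occurs as an induced subgraph of $[G\cup f]$}. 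It is only because $[f]$ lands exactly on $f'$ that the atom $J$ certifies an $[f]$-preserving forbidden minor in an atom containing $[f]$. Your proposal never invokes Lemma \ref{lem:f'_and_J}, never identifies $f'$, and never requires $[f]=[f']$; picking ``a top vertex on the $u$-side to a vertex of $C^{-1}$'' does not by itself produce the needed forbidden minor in the right atom.

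Two smaller points. First, your reading of the cases with all dashed line-segments present as edges (``its contraction collapses a redundant triangle'') is not what the paper does; the same $[f]=[f']$ single-contraction mechanism applies there, using the extra edge to perform the contraction. Second, your condition (A) reasoning about doubling is fine as far as it goes, but it is subsumed by the $[f]=[f']$ choice: since $f'$ is a nonedge of $C^{-1}$, $[f]$ is automatically retained, so no separate analysis of doubling configurations is needed.
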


All this reduction work leads to Lemmas \ref{lem:G'_f'_no_reducible_edge} and \ref{lem:G'_is_J+decorations}, below, from which Proposition \ref{prop:IH} immediately follows.  
The proof of Lemma \ref{lem:G'_f'_no_reducible_edge}, given in \ref{sec:proof_IH_2}, is another instance of checking many cases.  

\begin{lemma}[Additional structure necessary for induction]
    \label{lem:G'_f'_no_reducible_edge}
    Let $(G,f)$ be a minimal graph-nonedge pair with a top-level expanded $f$-separating CMS $C^{-1}$ whose expanded $f$-component is $H^{-1}$, and let $G' = G \setminus (H^{-1} \setminus C^{-1})$.  
    There exists a nonedge $f'$ of $C^{-1}$ such that some atom of $G' \cup f'$ contains $f'$ and has an $f'$-preserving forbidden minor and the pair $(G',f')$ has no reducing edge.  
    Furthermore, these statements are true when (i) $f' = w_1w_3$ and $H^{-1}$ is the graph in Figure \ref{fig:cycle_vin_uonly_2}, $f$ is the red edge, and $C^{-1}$ is the subgraph induced by $\{w_i\}$, or (ii) $f'$ is any nonedge of $C^{-1}$ and $H^{-1}$ is one of the graphs in Figures \ref{fig:cycle_vin_uplus_3}, \ref{fig:cycle_vin_uplus_2}, or \ref{fig:cycle_vout_uvonly}, $f$ is the red edge, and $C^{-1}$ is the subgraph induced by $\{w_i\}$.  
\end{lemma}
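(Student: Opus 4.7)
The plan is to perform a case analysis on the possible graphs for $H^{-1}$, which by Lemma \ref{lem:C'_a-n} must be one of those in Figures \ref{fig:3top_vin_uonly_2}--\ref{fig:deg1_vout_uvonly}, and on the compatible $C^{-1}$ from Lemma \ref{prop:C'_leq_4vert}. For each such $H^{-1}$, I would choose $f'$ as a nonedge of $C^{-1}$ that does not share an endpoint with $f$ whenever possible, mirroring the choice made in Proposition \ref{prop:IS}. In the special cases (i) and (ii) of the ``furthermore'' clause, $f'$ is specified explicitly and will be verified directly.

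For the first conclusion, I would show that $G' \cup f'$ has an atom containing $f'$ with an $f'$-preserving forbidden minor. By definition of an expanded $f$-separating pair, $I^{-1}$ already carries an $f$-preserving forbidden minor. In the associated $f$-separating minor $[G \cup f]$, the vertex sets $[u]$ and $[v]$ are connected through $[H^{-1}]$ via the clique $[C^{-1}]$. When $H^{-1}$ is collapsed and replaced by the single edge $f'$ sitting inside $C^{-1}$, the forbidden minor persists, because $f'$ supplies exactly the connection that the contracted $H^{-1}$ used to provide between the endpoints of $f$. Lemma \ref{cor:minimal_k-clique-sum_component_containing_fbm_and_f} then promotes this $f'$-preserving forbidden minor to one extending from an atom of $G' \cup f'$ that contains $f'$.

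For the second conclusion, I would argue by contradiction. Suppose some edge $e \in E(G')$ is reducing in $(G',f')$. Then contracting $e$ in $G' \cup f'$ yields a minor in which $[f']$ is retained and some atom containing $[f']$ has an $[f']$-preserving forbidden minor. Contracting the same $e$ in $G \cup f$ produces a corresponding minor of $G \cup f$. Using the structural control on $H^{-1}$ from Lemma \ref{cor:top-level_structure} (every vertex of $H^{-1}\setminus C^{-1}$ has neighborhood contained in $\{u,v\} \cup V(C^{-1})$) together with the explicit enumeration in Lemma \ref{lem:C'_a-n}, the $[f']$-preserving forbidden minor of the contracted $G'$ can be lifted to an $[f]$-preserving forbidden minor of the contracted $G\cup f$ in which $[f]$ is retained inside an atom. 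This would either show that $e$ is reducing in $(G,f)$, contradicting the minimality of $(G,f)$, or produce a top-level expanded $f$-separating CMS with strictly smaller expanded $f$-component, contradicting the top-level choice of $C^{-1}$ via Lemma \ref{lem:type_2_yields_smaller_expanded_f-component}.

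The main obstacle will be the sheer number of cases and the subtlety of the lifting step: for each $H^{-1}$ in Figures \ref{fig:3top_vin_uonly_2}--\ref{fig:deg1_vout_uvonly} one must check that the particular ``paths-through-$H^{-1}$'' connecting the endpoints of $f$ in the original forbidden minor can indeed be replaced by the chosen $f'$, and simultaneously that $f$ is retained rather than doubled after the lift. The cases where $H^{-1}$ contains optional pendant vertices or optional dashed edges (precisely those listed in the ``furthermore'' clause) are the most delicate, since the presence or absence of these features changes both the set of admissible $f'$ and the set of candidate reducing contractions; it is these cases that require explicit verification and motivate the specific choices $f' = w_1w_3$ for Figure \ref{fig:cycle_vin_uonly_2} and ``any nonedge of $C^{-1}$'' for Figures \ref{fig:cycle_vin_uplus_3}, \ref{fig:cycle_vin_uplus_2}, and \ref{fig:cycle_vout_uvonly}.
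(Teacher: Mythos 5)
Your high-level strategy matches the paper's: establish existence of $f'$ satisfying the first condition, then argue by contradiction that a reducing edge of $(G',f')$ lifts to a reducing edge of $(G,f)$. But there is a genuine gap in your treatment of the lift. The paper's proof makes a crucial case distinction on whether the reducing edge $h$ of $(G',f')$ has both endpoints in $C^{-1}$. When $h$ is disjoint from $C^{-1}$, the contraction leaves $[H^{-1}]$ intact, and the ``smooth'' lifting you describe goes through via Lemmas~\ref{lem:k_f-preserving_fbm}, \ref{lem:J+H'_atom}, and \ref{lem:gluing_min_k-clique-sum_graphs}. But when $h \subseteq C^{-1}$, contracting $h$ destroys the gluing structure between $[H^{-1}]$ and $[G']$, and showing that $h$ itself is reducing for $(G,f)$ often fails. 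In those nine subcases the paper must instead exhibit a \emph{different} edge of $G$ (such as $uw_1$, $w_2w_3$, or $w_2w_4$) as the reducing edge, or switch the choice of $f'$ (e.g.\ from $w_1w_3$ to $w_1w_4$), and each subcase needs its own atom-gluing argument. Your sketch, which contracts the same edge in both graphs and lifts the forbidden minor, does not account for this possibility.

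A second, smaller gap: you treat the existence of a suitable $f'$ satisfying the first condition as immediate (``$f'$ supplies exactly the connection that the contracted $H^{-1}$ used to provide''). In fact the $f$-preserving forbidden minor of $I^{-1}$ may use a different pair of vertices of $C^{-1}$ than the endpoints of your chosen $f'$; the paper needs Lemma~\ref{lem:I'_fbm_f_and_C'_nonedges_not_contracted} (which re-routes the forbidden minor so it preserves an arbitrary nonedge of $C^{-1}$) and Lemma~\ref{lem:f'_and_J} (which uses the atom-graph-is-a-path structure of $G'$ to locate an atom of $G'\cup f'$ actually containing $f'$). Without these, the existence claim for $f'$, and in particular the strengthened ``furthermore'' clause for the cycle-type $C^{-1}$, is not justified.
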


\begin{lemma}[Structure of $G' \cup f'$]
    \label{lem:G'_is_J+decorations}
    Consider a minimal graph-nonedge pair $(G,f)$ with a top-level expanded $f$-separating CMS $C^{-1}$ whose expanded $f$-component is $H^{-1}$, and let $G' = G \setminus (H^{-1} \setminus C^{-1})$.  
    If $f' = uv$ is a nonedge of $C^{-1}$ such that some atom $J$ of $G' \cup f'$ contains $f'$ and has an $f'$-preserving forbidden minor and the pair $(G',f')$ has no reducing edge, then $G' \cup f'$ can be obtained from $J$ by adding some vertices whose neighborhoods are $\{u,v\}$.  
\end{lemma}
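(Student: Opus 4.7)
The plan is to show that every vertex $w \in V(G' \cup f') \setminus V(J)$ has neighborhood exactly $\{u, v\}$, which immediately yields the claim: $G' \cup f'$ is $J$ together with such degree-$2$ \emph{decorations} on the edge $uv$. Since $J$ contains $f' = uv$, both $u$ and $v$ are in $V(J)$, and the candidates for extra vertices are precisely those in $V(G' \cup f') \setminus V(J)$.

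The main step is to establish $N(w) \subseteq \{u, v\}$. I will argue by contradiction: suppose $w$ has a neighbor $z \notin \{u, v\}$, and claim that the edge $wz$ is reducing in $(G', f')$, contradicting the hypothesis. Let $[G' \cup f']$ be the contraction of $wz$. Since $w, z \notin \{u, v\}$, the preimage $[f']^{-1} = \{u, v\}$ contains no edges of $G'$; hence $f'$ is retained (not doubled) in $[G' \cup f']$. The $f'$-preserving forbidden minor of $J$ uses at most the vertex $z$ (and does not use $w$, since $w \notin V(J)$), so it survives in $[J]$ as an $[f']$-preserving forbidden minor. Because contracting a single edge can only coarsen the clique-minimal-separator decomposition, the atom of $[G' \cup f']$ containing $[f']$ contains the image of $V(J)$ and therefore hosts this forbidden minor; the appendix lemmas \ref{lem:gluing_min_k-clique-sum_graphs} and \ref{lem:gluing_top_part_to_atom} are used to make this atom-tracking precise. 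Thus $wz$ is reducing, a contradiction, so $N(w) \subseteq \{u, v\}$.

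The second step is to show $|N(w)| = 2$, ruling out $N(w) = \{u\}$, $N(w) = \{v\}$, and $N(w) = \emptyset$. The graph $G' \cup f'$ is connected (it is obtained from the connected graph $G$ by removing only the interior vertices of the $H$-component $H^{-1}$, then adding the edge $f'$), so $w$ cannot be isolated. For the pendant case, since $G \cup f$ is a $2$-connected atom on at least five vertices, $w$ has degree $\geq 2$ in $G$, so its missing neighbors lie in $V(H^{-1}) \setminus V(C^{-1})$; the $H$-component/separator structure then forces $w \in V(H^{-1})$, and since $w \in V(G')$ we get $w \in V(C^{-1}) \setminus \{u, v\}$. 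But a vertex of $V(C^{-1}) \setminus \{u, v\}$ that would be of degree at most $1$ in $G' \cup f'$ is precisely the kind handled by Lemma \ref{lem:J_contains_C'_degree_1}, which places it in $V(J)$, contradicting $w \notin V(J)$.

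The main obstacle is Step 1: verifying that after contracting $wz$, some atom of $[G' \cup f']$ containing $[f']$ inherits the forbidden minor of $J$. This requires careful bookkeeping of how clique minimal separators transform under a single edge contraction, and the proof must invoke the appendix gluing lemmas in exactly the right direction. A secondary subtlety in Step 2 is the use of Lemma \ref{lem:J_contains_C'_degree_1} to rule out degree-$1$ decorations, which itself unpacks the allowed shapes of $C^{-1}$ from Lemma \ref{prop:C'_leq_4vert} together with the no-reducing-edge hypothesis on $(G', f')$.
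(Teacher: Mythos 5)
The core of Step 1 contains a claim that is simply false: that ``contracting a single edge can only coarsen the clique-minimal-separator decomposition.'' Contraction \emph{can} create new clique separators and therefore split a former atom into smaller pieces. Concretely, consider a graph on vertices $\{1,2,3,4,5\}$ with $1$ and $2$ each adjacent to every vertex of $\{3,4,5\}$ and no other edges; this is an atom (no clique separator). Contracting the edge $13$ produces a graph on $\{[13],2,4,5\}$ in which $\{[13],2\}$ is a clique separator (removing it isolates $4$ and $5$). So the image of an atom under a single contraction need not live inside a single atom of the new graph, and the invocation of Lemmas \ref{lem:gluing_min_k-clique-sum_graphs} and \ref{lem:gluing_top_part_to_atom} does not substitute for this --- those lemmas are about \emph{gluing} prescribed pieces, not about contraction being ``atom-preserving.'' Consequently, for an arbitrary neighbor $z$ of $w$ with $z \notin \{u,v\}$, it is not automatic that some atom of $[G'\cup f']$ containing $[f']$ inherits the forbidden minor of $J$, which is exactly what you need to conclude that $wz$ is of Type (4).

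The paper's proof avoids this precisely by \emph{not} contracting an arbitrary incident edge. It picks a CMS $E$ of $G'\cup f'$ contained in $J$, a vertex $x\in E$, and a vertex $y\notin J$ adjacent to $x$ with the separation property that every path from $y$ to $J\setminus E$ passes through $E$. This last condition forces every neighbor of $y$ inside $J$ to lie in the clique $E$; after contracting $xy$, the merged vertex $[x]$ gains only neighbors inside $E$, which are already adjacent to $x$. Hence no nonedge of $J$ is doubled, i.e.\ $J$ is \emph{weakly retained} in $[G'\cup f']$, and $[J]$ is literally an isomorphic copy of $J$ sitting inside some atom that contains $[f']$. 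That is what licenses the ``some atom of $[G'\cup f']$ has an $[f']$-preserving forbidden minor'' step. Your proof needs this choice of contracted edge, not just any $wz$.

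Step 2 also has a gap. You invoke Lemma \ref{lem:J_contains_C'_degree_1} to conclude $w\in V(J)$, but that lemma only yields $\{w_1,w_3,w_4\}\subseteq V(J)$ when $C^{-1}$ is the graph of Figure \ref{fig:exp_clique_sep_k3_cycle} (so it does not rule out $w=w_2$), only applies to two of the four shapes of $C^{-1}$ permitted by Lemma \ref{prop:C'_leq_4vert}, and carries the same no-reducing-edge hypothesis you are in the middle of establishing. In the paper this case is absorbed by the same contraction argument: when $w$ is pendant on $u$, the CMS $E$ is the singleton $\{u\}$, $x=u$ is ``the only vertex in $E$,'' $y=w$, and $J$ is weakly retained because $w$'s only $J$-neighbor is $u\in E$. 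You do not need a separate appeal to the structure of $C^{-1}$; the pendant and non-pendant cases are handled uniformly once the right $(x,y)$ is chosen.
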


\begin{proof}
    Assume to the contrary that $G' \cup f'$ cannot be obtained from $J$ by adding vertices whose neighborhood are $\{u,v\}$.  
    Then, there exists a CMS $E$ of $G' \cup f'$ contained in $J$ such that $E$ contains some vertex $x$ that is either the only vertex in $E$ or some vertex other than $u$ or $v$.  
    Also, there exists a vertex $y$ not contained in $J$ and such that $xy$ is an edge of $G' \cup f'$ and all paths in $G' \cup f'$ between $y$ and any vertex in $J \setminus E$ contains some vertex in $E$.  
    Hence, $J$ is weakly retained in the minor $[G' \cup f']$ obtained by contracting $xy$.  
    Furthermore, $[G' \cup f']$ is $f'$-retaining and contains some atom that contains $[f']$ and has an $[f']$-preserving forbidden minor.  
    The above facts show that $xy$ is reducing with respect to $(G',f')$, which is a contradiction.  
\end{proof}

\begin{proof}[Proof of Proposition \ref{prop:IH}]
    The proposition follows easily from Lemmas \ref{lem:G'_f'_no_reducible_edge} and \ref{lem:G'_is_J+decorations}.  
\end{proof}

Finally, Proposition \ref{prop:final_top_graphs} follows immediately from Lemmas \ref{lem:C'_a-i} and \ref{lem:C'_degree_1_reducible_3}, stated below and proved in \ref{sec:proof_final_top_graphs}, which are the final reductions of the possible graphs for $H^{-1}$.  
The proofs of both lemmas argue that $(G_{\star},f_{\star})$ has a reducing edge if either lemma is false, a contradiction.  

\begin{lemma}[Additional pruning of cases for $H^{-1}$]
    \label{lem:C'_a-i}
    For any minimal graph-nonedge pair $(G,f)$ with a top-level expanded $f$-separating CMS $C^{-1}$, the expanded $f$-component of $C^{-1}$ is one of the graphs in Figures \ref{fig:3top_vin_uonly_2} - \ref{fig:deg1_vin_uplus_5}, where $C^{-1}$ is the subgraph induced by the vertex set $\{w_i\}$.  
    Furthermore, if $H^{-1}$ is one of the graphs in Figures \ref{fig:cycle_vin_uonly_2}, \ref{fig:cycle_vin_uplus_3}, or \ref{fig:3top_vin_uplus_2}, then the dashed line-segment is a nonedge.  
\end{lemma}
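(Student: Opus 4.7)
The plan is to proceed case-by-case exactly as the proof strategy sketched for Lemma \ref{lem:C'_degree_1_reducible_3} indicates: starting from the already-narrowed list given by Lemma \ref{lem:C'_a-n}, show that whenever $H^{-1}$ is one of the figures that must still be eliminated, or one of Figures \ref{fig:cycle_vin_uonly_2}, \ref{fig:cycle_vin_uplus_3}, \ref{fig:3top_vin_uplus_2} in which the dashed line-segment is actually an edge, the pair $(G,f)$ must contain a reducing edge, contradicting the assumption that $(G,f)$ is minimal. Concretely, the cases to rule out are the graphs in Figures \ref{fig:cycle_vin_uplus_2}, \ref{fig:cycle_vout_uvonly}, \ref{fig:deg1_vin_uplus_3}, \ref{fig:deg1_vin_uplus_4}, and \ref{fig:deg1_vout_uvonly}, together with the three ``dashed edge present'' subcases.

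For each remaining case I would exhibit a specific candidate edge $e \in E(G)$ (usually an edge inside $H^{-1}\setminus C^{-1}$, or, in the dashed-edge subcases, the dashed edge itself that sits in $H^{-1}$) whose contraction yields a minor $[G\cup f]$ for which (a) $f$ is retained, i.e., no edge of $G$ in $[f]^{-1}$ is created, and (b) some atom of $[G\cup f]$ that contains $[f]$ has an $[f]$-preserving forbidden minor. By definition, such an $e$ is a reducing edge, contradicting the minimality of $(G,f)$. The verification of (a) is easy once one knows the neighborhood constraints coming from Lemma \ref{cor:top-level_structure} (every vertex of $H^{-1}\setminus (C^{-1}\cup\{u,v\})$ is adjacent only to $u$, $v$, and vertices of $C^{-1}\setminus\{u,v\}$), which means the endpoints of $e$ do not already form a triangle with both endpoints of $f$, so contraction does not double $f$. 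The verification of (b) is where the work is: one must show the contraction still admits an $[f]$-preserving $K_5$ or $K_{2,2,2}$ minor passing entirely through a single atom containing $[f]$, which follows by combining the guaranteed $f$-preserving forbidden minor of $G\cup f$ with the topological information about $C^{-1}$ from Lemma \ref{lem:expanded_clique_graphs_and_connections}(2) and the paths through $H^{-1}$ forced by $G\cup f$ being an atom.

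For the ``dashed-edge is present'' subcases inside Figures \ref{fig:cycle_vin_uonly_2}, \ref{fig:cycle_vin_uplus_3}, and \ref{fig:3top_vin_uplus_2}, the presence of the extra chord creates an additional $K_3$ in $H^{-1}$ that shares both of its non-$f$-endpoint vertices with $C^{-1}$; contracting this chord does not affect $f$ and collapses $H^{-1}$ to a strictly smaller expanded $f$-component sitting on the same (or a smaller) expanded CMS. Using Lemma \ref{lem:preserving_forbidden_minor_implies_contraction_minor} and the fact that $C^{-1}$ is in the top level, this contraction still exhibits an $[f]$-preserving forbidden minor inside an atom containing $[f]$, so the chord itself is reducing.

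The main obstacle will be the casework verifying property (b) for the smallest configurations, particularly the ``degree-1'' figures (\ref{fig:deg1_vin_uplus_3}, \ref{fig:deg1_vin_uplus_4}, \ref{fig:deg1_vout_uvonly}), where $H^{-1}\setminus C^{-1}$ has very few vertices and one has less freedom to choose $e$ without accidentally either doubling $f$ (falling into a Type (3) situation) or destroying the forbidden minor (falling into Type (1) or Type (2)). For these one must repeatedly invoke Lemma \ref{lem:expanded_clique_graphs_and_connections}(2) to pin down the minor $[H^{-1}\cup I^{-1}]$ and argue that the forbidden minor it carries survives under the chosen contraction and stays confined to a single atom after the contraction; this is essentially the same delicate verification done in the proof of Lemma \ref{lem:C'_degree_1_reducible_3} and will rely on the helper Lemmas \ref{lem:C'_degree_1_reducible_0}--\ref{lem:C'_degree_1_reducible_2} from \ref{sec:proof_final_top_graphs}.
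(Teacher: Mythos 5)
Your high-level strategy --- exhibit a reducing edge for each of the cases to be eliminated, contradicting minimality --- is the right one and matches the paper. However, your plan for verifying condition (b) (that after contraction some atom containing $[f]$ still carries an $[f]$-preserving forbidden minor) misses the key tool. You propose to combine Lemma \ref{cor:top-level_structure} and Lemma \ref{lem:expanded_clique_graphs_and_connections}(2) with the atom property of $G\cup f$, but this is not enough on its own: one must know concretely that there is an atom $J$ of $G' \cup f'$ (with $G' = G\setminus(H^{-1}\setminus C^{-1})$) which contains $f'$ and carries an $f'$-preserving forbidden minor, and moreover that $J$ contains specific vertices of $C^{-1}$. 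This is exactly what Lemma \ref{lem:G'_f'_no_reducible_edge} and Lemma \ref{lem:J_contains_C'_degree_1} deliver, and the paper's proof of \ref{lem:C'_a-i} is structured around them: the choice of the reducing edge (e.g. $xw_3$ versus $xw_2$ in Case 1) hinges on whether $J$ contains $w_2$, a distinction you cannot make without this information. The gluing argument (via Lemmas \ref{lem:gluing_min_k-clique-sum_graphs} and \ref{lem:gluing_top_part_to_atom}) is then applied to $J$ and $[H^{-1}]$ to confirm the result is an atom; Lemma \ref{lem:expanded_clique_graphs_and_connections}(2) alone does not give you this atomicity.

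You also cite the helper Lemmas \ref{lem:C'_degree_1_reducible_0}--\ref{lem:C'_degree_1_reducible_2} as the relevant support, but these are the constituents of the separate Lemma \ref{lem:C'_degree_1_reducible_3} (which eliminates Figures \ref{fig:deg1_vin_uplus_1}--\ref{fig:deg1_vin_uplus_5}); they do not appear in the proof of \ref{lem:C'_a-i} and address different configurations. Finally, the ``dashed line-segment is a nonedge'' part of the statement for Figures \ref{fig:cycle_vin_uonly_2}, \ref{fig:cycle_vin_uplus_3}, and \ref{fig:3top_vin_uplus_2} is already supplied by Lemma \ref{lem:C'_a-n}; there is no separate ``dashed-edge-present subcase'' to eliminate here, so the extra work you outline for those is unnecessary. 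The substantive content that needs new argument is eliminating Figures \ref{fig:cycle_vin_uplus_2}--\ref{fig:deg1_vout_uvonly}, and for that the missing ingredient is the $J$-structure from Lemmas \ref{lem:G'_f'_no_reducible_edge} and \ref{lem:J_contains_C'_degree_1}.
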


\begin{lemma}[$H^{-1}$ is not graph in Figures \ref{fig:deg1_vin_uplus_1}-\ref{fig:deg1_vin_uplus_5}]
    \label{lem:C'_degree_1_reducible_3}
    None of the following graphs are the expanded $f$-component of any top-level expanded $f$-separating CMS of a minimal graph-nonedge pair:
    \begin{enumerate}[(i)]
        \item the graph in Figure \ref{fig:deg1_vin_uplus_1}, 

        \item neither of the graphs in Figures \ref{fig:deg1_vin_uonly_2} or \ref{fig:deg1_vin_uplus_5} with $uw_1$ being an edge, or 

        \item the graph in Figure \ref{fig:deg1_vin_uplus_5} with $uw_1$ being a nonedge.  
    \end{enumerate}
\end{lemma}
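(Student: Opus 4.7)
The plan is to prove each of parts (i), (ii), (iii) by contradiction in a unified way: assume $(G_{\star},f_{\star})$ is a minimal graph-nonedge pair with a top-level expanded $f_{\star}$-separating CMS $C^{-1}$ whose expanded $f_{\star}$-component $H^{-1}$ is one of the configurations listed, and then exhibit a reducing edge of $G_{\star}$, contradicting the assumption that $G_{\star}$ has no reducing edge (by Definition \ref{def:minimal_pair} of minimality). Throughout, I would let $f_{\star} = uv$, and use Lemma \ref{cor:top-level_structure} to control the neighborhoods of the decoration vertices in $H^{-1} \setminus (C^{-1} \cup \{u,v\})$: each such vertex is adjacent only to $u$, $v$, and at least one vertex of $C^{-1} \setminus \{u,v\}$. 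Let $I^{-1}$ denote the expanded minor-component of $C^{-1}$, which contains an $f_{\star}$-preserving forbidden minor because $C^{-1}$ is $f_{\star}$-separating.

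First I would handle (i), the graph in Figure \ref{fig:deg1_vin_uplus_1}. I would choose a candidate edge $e$ incident to a degree-1 decoration vertex $x$ (an edge from $x$ to $u$, $v$, or to a $w_i$), and consider the minor $[G_{\star} \cup f_{\star}]$ obtained by contracting $e$. To show $e$ is reducing, I must verify two things: first, that $f_{\star}$ is not doubled in $[G_{\star} \cup f_{\star}]$ — this follows from the neighborhood structure given by Lemma \ref{cor:top-level_structure} together with a careful choice of $e$ so that the merged vertex does not create a second edge between $[u]$ and $[v]$; and second, that some atom of $[G_{\star} \cup f_{\star}]$ containing $[f_{\star}]$ has an $[f_{\star}]$-preserving forbidden minor. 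The latter is established by observing that the $f_{\star}$-preserving forbidden minor of $I^{-1}$ survives after contracting $e$ (using Lemma \ref{lem:I'_fbm_after_f-component_contraction} in spirit), and then invoking Lemmas \ref{lem:gluing_min_k-clique-sum_graphs} and \ref{lem:gluing_top_part_to_atom} to assemble the resulting structure into a single atom of $[G_{\star} \cup f_{\star}]$ containing $[f_{\star}]$. Lemmas \ref{lem:G_not_k5_k222_u4v0} and \ref{lem:k5_k222_not_wing_fm_e_not_contracted} may be invoked to transport the forbidden minor into a form that certifies $[f_{\star}]$-preservation rather than mere $[f_{\star}]$-presence.

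Second, cases (ii) and (iii) follow essentially the same template but with a more constrained choice of $e$. For (ii), the presence of the edge $uw_1$ in Figure \ref{fig:deg1_vin_uonly_2} or \ref{fig:deg1_vin_uplus_5} gives access to a short path through $w_1$ that can be collapsed: I would contract an edge incident to a decoration vertex whose neighborhood includes $w_1$, pushing the contraction toward $I^{-1}$ without doubling $f_{\star}$. This is exactly the situation handled by the previously established reducibility Lemmas \ref{lem:C'_degree_1_reducible_0}, \ref{lem:C'_degree_1_reducible_1}, and \ref{lem:C'_degree_1_reducible_2}, which I would cite directly to reduce case (ii) to showing that the contracted pair fits their hypotheses. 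For (iii), with $uw_1$ a nonedge in Figure \ref{fig:deg1_vin_uplus_5}, the decorations must be wired differently to $C^{-1}$; here I would split on which of the optional hollow vertices and dashed line-segments are present, and in each subcase produce a reducing edge, again via the previous three reducibility lemmas.

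The main obstacle I anticipate is case (iii): without $uw_1$, the natural candidate edges for contraction tend either to double $f_{\star}$ (because the shortcut through $w_1$ has been removed, forcing the merged vertex to sit in $[u]^{-1}$ or $[v]^{-1}$ in a way that creates a parallel edge) or to destroy the $f_{\star}$-preserving forbidden minor in $I^{-1}$. Overcoming this will require a careful subcase analysis depending on which hollow vertices are present and which dashed line-segments are edges; in each subcase I expect to produce a reducing edge by contracting one of the degree-1 decorations incident to $v$, using the absence of $uw_1$ to guarantee that no doubling of $f_{\star}$ occurs, and then applying Lemma \ref{lem:C'_degree_1_reducible_2} (or \ref{lem:C'_degree_1_reducible_1}) together with Lemmas \ref{lem:gluing_min_k-clique-sum_graphs} and \ref{lem:gluing_top_part_to_atom} to finish. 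In every case the contradiction with minimality of $(G_{\star},f_{\star})$ completes the proof.
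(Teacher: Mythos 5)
You have missed that the paper's proof of Lemma~\ref{lem:C'_degree_1_reducible_3} is a single sentence: parts (i), (ii), and (iii) of the statement are \emph{verbatim} the statements of Lemmas~\ref{lem:C'_degree_1_reducible_0}, \ref{lem:C'_degree_1_reducible}, and \ref{lem:C'_degree_1_reducible_2}, which are stated and proved just above it in the same appendix section. The entire proof is ``follows immediately from Lemmas~\ref{lem:C'_degree_1_reducible_0}--\ref{lem:C'_degree_1_reducible_2}.'' You do eventually cite these sublemmas for cases (ii) and (iii), so you noticed they exist, but you treat them as a ``template'' to be instantiated with extra work, rather than as exactly the three constituent claims. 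For case (i) you give a fresh (and vague) sketch without noticing that Lemma~\ref{lem:C'_degree_1_reducible_0} already says exactly what part (i) asserts. Also note the label $\ref{lem:C'_degree_1_reducible_1}$ you cite does not exist; the middle sublemma's label is \ref{lem:C'_degree_1_reducible}.

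Beyond the organizational point, the sketch you give for what would become the sublemma proofs is off in a substantive detail: you propose contracting an edge incident to a degree-one decoration vertex $x$. In the actual proofs the reducing edge is usually \emph{not} such an edge. In the proof of Lemma~\ref{lem:C'_degree_1_reducible_0} the contracted edge is $uw_4$ (an endpoint of $f$ to a vertex of $C^{-1}$); in Lemma~\ref{lem:C'_degree_1_reducible} it is either $uw_1$ or $w_2w_4$; only in Lemma~\ref{lem:C'_degree_1_reducible_2} does an edge $xw_1$ incident to a decoration vertex appear, and even there the fallback is $w_2w_3$. Each of these proofs also relies centrally on the nonedge $f'$, the atom $J$, and the atom $K$ coming from Lemma~\ref{lem:G'_f'_no_reducible_edge} and Lemma~\ref{lem:J_contains_C'_degree_1} — machinery absent from your sketch, and not substitutable by Lemma~\ref{lem:I'_fbm_after_f-component_contraction}, which is a statement about contractions \emph{inside} $(H^{-1}\setminus f)\setminus C^{-1}$ and does not directly cover contractions of $uw_i$ or $w_iw_j$. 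So even as a sketch of the sublemmas, the reducing-edge choice and the supporting-lemma inventory would need significant correction to match what actually works.
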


\begin{proof}[Proof of Proposition \ref{prop:final_top_graphs}]
    The proposition follows immediately from Lemmas \ref{lem:C'_a-i} and \ref{lem:C'_degree_1_reducible_3}.  
\end{proof}

\section{Open problems and conjectures}
\label{sec:conclusions}

\begin{op}
    Give a polynomial time algorithm to decide the characterization in Theorem \ref{thm:3-sip_characterization}.  
\end{op}

The polynomial-time algorithm in \cite{robertson2004graph} for finite forbidden minors will likely be useful for this problem.  


\begin{conjecture}
    For any dimension $d \geq 1$, a graph-nonedge pair $(G,f)$ has the $d$-SIP if and only if no atom of $G \cup f$ that contains $f$ has an $f$-preserving $d$-forbidden minor.  
\end{conjecture}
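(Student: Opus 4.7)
The plan is to replicate the bidirectional induction of Theorem \ref{thm:3-sip_characterization} on $|V(G_\star)|$ after reducing to atoms via Corollary \ref{cor:atom_gluing}, since that corollary and the repeatedly used Lemmas \ref{lem:cm_non-sip}, \ref{lem:tool}, and \ref{lem:gluing} are already stated and proved for arbitrary $d$, so they transfer verbatim. The base cases for both directions follow from Theorem \ref{thm:sitharam_willoughby}: if $G\cup f$ has no $d$-forbidden minor, then $G\cup f$ is $d$-flattenable, and hence $(G,f)$ has the $d$-SIP. Robertson--Seymour gives a finite forbidden-minor set $\mathcal{F}_d$, so Observation \ref{obs:3-sip-retained} immediately generalizes, covering the case in which $f_\star$ is retained in an induced $d$-forbidden minor, provided we first establish, for each $M\in\mathcal{F}_d$ and each edge $e$ of $M$, a proper non-$d$-SIP map for $(M\setminus e,e)$ --- this is the natural replacement for the geometric content of Proposition \ref{lem:no_type_2_no_3-sip} and would be proved by a stress-matrix / rigidity-theoretic argument on the minimally non-$d$-flattenable graphs $M$, analogous to the argument in \cite{belk2007realizability1} for $K_{2,2,2}$.

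For the forward direction, the combinatorial scaffolding of Section \ref{sec:forward} --- reduction to minimal pairs, the type-classification of edges, the notion of top-level expanded $f_\star$-separating CMS, Proposition \ref{lem:expanded_top-level_e-separating_clique_exists} (which is dimension-independent), and the $K_4$-decoration propagation of Lemmas \ref{lem:4-cycle_non-sip} and \ref{lem:degree_3_decorations} --- all extend to arbitrary $d\ge 2$ without change. What must be redone is the case enumeration that pins down the possible structure of the expanded $f_\star$-component $H^{-1}$ (Lemmas \ref{prop:expC_3or4}, \ref{lem:C'_a-n}, \ref{lem:C'_a-i}, \ref{lem:C'_degree_1_reducible_3}) and the expanded CMS $C^{-1}$ (Lemmas \ref{lem:expanded_clique_graphs_and_connections}, \ref{prop:C'_leq_4vert}). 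The strategy here should be to replace explicit case-checks against $K_5$ and $K_{2,2,2}$ with uniform arguments that only use the properties every $M\in\mathcal{F}_d$ shares: being $(d+1)$-connected, having minimum degree $\ge d+1$, and being an atom of minimum order witnessing non-$d$-flattenability. The resulting catalogs of $H^{-1}$ and $C^{-1}$ should still be finite, after which Proposition \ref{prop:IS}'s extension of non-$d$-SIP maps from $(J,f')$ to $(G_\star,f_\star)$ generalizes by the same $K_4$-triangle-inequality construction (valid for all $d\ge 2$).

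For the converse direction, the key abstractions to generalize are the winged graph minors of Proposition \ref{prop:G_is_winged_graph}, their chained structure (Proposition \ref{prop:chained_minor}), and the partial $3$-tree tools Theorems \ref{prop:3-connected_partial_3-tree_star_lemma} and \ref{prop:partial_3-tree_3-reflection}. I would define an \emph{$f$-winged $M$-minor} for each $M\in\mathcal{F}_d$ in analogy with Figure \ref{fig:k5_k222_e_contracted}, then show via a vertex/component exchange argument (the $d$-dimensional analog of Lemmas \ref{lem:G_not_k5_k222_u4v0}--\ref{lem:k5_k222_not_wing_fm_e_not_contracted} and \ref{lem:K5_2-separator_case_1}) that every atom $G_\star\cup f_\star$ with a forbidden minor but no $f_\star$-preserving forbidden minor has such a winged minor. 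The chained structure then yields a maximal sequence of $(d-1)$-clique separators of $G_\star$ separating the endpoints of $f_\star$, with $d$-connected partial $d$-tree links between consecutive separators. The analogs of Theorems \ref{prop:3-connected_partial_3-tree_star_lemma} and \ref{prop:partial_3-tree_3-reflection} should hold with $K_4$ replaced by $K_{d+1}$ and tetrahedral orientations replaced by $(d+1)$-simplex orientations: convexity of the CCS from Theorem \ref{thm:sitharam_willoughby} and the same fiberwise connectedness argument (Lemma \ref{lem:A_empty_config_space_connected}) go through. Proposition \ref{prop:x2y2_3-Cayley-connected}'s Intermediate-Value-Theorem conclusion then closes the induction.

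The main obstacle will be the winged-graph reduction and the structural classification of $H^{-1}$, both of which in the current proof rely heavily on specific combinatorial features of $K_5$ and $K_{2,2,2}$. Without an explicit description of $\mathcal{F}_d$ for $d\ge 4$, one must argue uniformly over all $M\in\mathcal{F}_d$, and the crucial step of showing that the absence of an $f$-preserving forbidden minor forces a size-$(d-1)$ clique $uv$-separator requires a new, dimension-agnostic vertex-exchange argument --- plausibly using that any proper minor of $M$ is $d$-flattenable and hence has no $\mathcal{F}_d$-minor. I anticipate this uniform winged-minor theorem, together with establishing the partial $d$-tree star theorem for $d\ge 4$ (whose proof in Theorem \ref{prop:3-connected_partial_3-tree_star_lemma} relied on the list of forbidden minors for partial $3$-trees), to be the genuine new work; the rest should be mostly bookkeeping once these two pieces are in place.
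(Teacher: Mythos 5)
This statement is an open conjecture (Section~\ref{sec:conclusions}), not a theorem; the paper contains no proof of it, so there is nothing to compare your attempt against. What you have written is a plausible research program, not a proof, and you are candid about that — you repeatedly flag the winged-minor reduction, the $H^{-1}$/$C^{-1}$ catalogs, and the partial $d$-tree star theorem as ``genuine new work.'' Those flags are exactly where the gaps are, and they are not small.

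The deepest problem is one the paper itself names immediately after stating the conjecture: because the $d$-flattenability forbidden-minor set $\mathcal{F}_d$ is unknown for $d\ge 4$ (and conjectured to be large — at least $75$ already for partial $4$-trees), \emph{a proof must avoid using properties of the explicit forbidden minors}. Your plan does not heed this. Establishing a proper non-$d$-SIP map for $(M\setminus e, e)$ for each $M\in\mathcal{F}_d$ and each edge $e$, defining an $f$-winged $M$-minor per $M$, and redoing the $H^{-1}$/$C^{-1}$ case enumerations all require handling each $M$ individually, which is precisely what is infeasible without knowing $\mathcal{F}_d$. You gesture at ``uniform arguments that only use the properties every $M\in\mathcal{F}_d$ shares,'' but the shared properties you list ($(d+1)$-connectivity, minimum degree $\ge d+1$, atomhood) were not what carried the $d=3$ case-analyses in Lemmas~\ref{lem:G_not_k5_k222_u4v0}--\ref{lem:k5_k222_not_wing_fm_e_not_contracted}, \ref{prop:expC_3or4}, or \ref{lem:C'_5_H'_wing}; those depended on the exact adjacency structure of $K_5$ and $K_{2,2,2}$. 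Nothing in your sketch shows how the finiteness of the $H^{-1}$ catalog survives once the forbidden minors can have arbitrarily many vertices, nor how the vertex/component-exchange argument forces a size-$(d-1)$ separator without a description of $M$.

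A secondary gap is the partial $d$-tree star theorem. Your proposed analog of Theorem~\ref{prop:3-connected_partial_3-tree_star_lemma} would need the forbidden-minor list for partial $d$-trees, which (as the paper notes) is itself open for $d\ge 4$, and the proof of the $d=3$ case via Lemma~\ref{lem:3-flat_and_not_minimally_3-connected_implies_partial_3-tree} used the Belk--Connelly decomposition of $3$-flattenable graphs into $K_4$s, $V_8$s, and $C_5\times C_2$s via clique sums — there is no known analog for higher $d$. Until one has (a) a dimension-agnostic replacement for the winged-minor classification that bypasses the explicit $\mathcal{F}_d$ and (b) a higher-dimensional partial-tree star theorem that likewise avoids enumerating forbidden minors, the plan cannot close the induction in either direction. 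Those are the real obstructions, and your proposal leaves both open.
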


Since the number of forbidden minors for $d$-flattenability is conjectured to grow asymptotically with the number of forbidden minors for partial $d$-trees \cite{belk2007realizability1}, which is at least $75$ for $d=4$ \cite{sanders1993linear}, a proof of this conjecture must avoid using properties of the explicit $d$-flattenability forbidden minors.  
To this end, we posit the following conjecture.  


\begin{conjecture}
    For any dimension $d \geq 2$, a graph $G$ is a forbidden minor for $d$-flattenability if and only if both of the following are true:
    \begin{enumerate}
        \item for any edge $e$ of $G$, $(G \setminus e, e)$ does not have the $d$-SIP and 
        \item for any minor $[G]$ obtained via a single edge deletion or contraction and any edge edge $e$ of $[G]$, $([G] \setminus e, e)$ has the $d$-SIP.  
    \end{enumerate}
\end{conjecture}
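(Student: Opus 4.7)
The plan is to prove both directions of the equivalence by combining Theorem \ref{thm:sitharam_willoughby} with the minimality of forbidden minors under minor operations, and to reduce the remaining content to the (also conjectured) generalization of Corollary \ref{cor:flat-iff-sip} to $d \geq 4$. The key observation at the outset is that condition (2) is precisely the hypothesis of statement (3) in Corollary \ref{cor:flat-iff-sip} applied to each one-step proper minor of $G$, so the conjecture essentially characterizes forbidden minors as those graphs for which \emph{every} edge is non-$d$-SIP-inducing while every one-step simplification is fully $d$-SIP-saturated.

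For the forward direction, assume $G$ is a $d$-flattenability forbidden minor. Condition (2) is immediate: every one-step proper minor $[G]$ is $d$-flattenable by the minimality of $G$, so Theorem \ref{thm:sitharam_willoughby} applied to $[G]$ with $F = \{e\}$ shows that $([G] \setminus e, e)$ is $d$-convex, i.e.\ has $d$-SIP for the single nonedge. Condition (1) is the main difficulty, since the contrapositive of Theorem \ref{thm:sitharam_willoughby} only yields \emph{some} edge $e$ with $(G \setminus e, e)$ not $d$-SIP, not all edges. To upgrade this, I would first establish that every $d$-forbidden minor is an atom: if $G$ had a clique separator, then by the standard clique-sum preservation of $d$-flattenability (the analog of Lemma \ref{lem:gluing} in the flattenability setting) the non-$d$-flattenability of $G$ would descend to some $C$-component, which is a proper minor, contradicting minimality. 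Granting atomicity, for each edge $e$ of $G$, the graph $G \cup e = G$ is an atom containing $e$ and having an $e$-preserving forbidden minor (namely $G$ itself, trivially), so the conjectured generalization of Theorem \ref{thm:3-sip_characterization} to dimension $d$ yields that $(G \setminus e, e)$ is not $d$-SIP. For $d \leq 3$ this step is Theorem \ref{thm:3-sip_characterization} directly, and no further ingredient is required.

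For the backward direction, assume (1) and (2). From (1), pick any edge $e$; since $(G \setminus e, e)$ is not $d$-convex, the contrapositive of Theorem \ref{thm:sitharam_willoughby} applied with $F = \{e\}$ gives that $G$ is not $d$-flattenable. From (2) together with the generalization of Corollary \ref{cor:flat-iff-sip}, every one-step proper minor of $G$ is $d$-flattenable, and since $d$-flattenability is minor closed, every proper minor of $G$ is $d$-flattenable. Hence $G$ is a minimal non-$d$-flattenable graph, i.e.\ a $d$-forbidden minor.

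The main obstacle is condition (1) in the forward direction, which is strictly stronger than what Theorem \ref{thm:sitharam_willoughby} delivers. A direct approach bypassing the generalization of Theorem \ref{thm:3-sip_characterization} would require, for each edge $e$ of a forbidden minor $G$, an explicit construction of a length assignment on $G \setminus e$ whose $d$-dimensional CCS projected onto $e$ is a union of two disjoint intervals. The model case is $K_5$ with $d = 3$ (Figure \ref{fig:k5-e}): once the three remaining vertices fix a plane, each endpoint of $e$ has exactly two reflective positions, producing two isolated Cayley values. I expect the general case to rest on a Cayley-Menger rank-drop argument, where the realization obstruction imposed by a forbidden minor $G$ is a rank-$(d{+}1)$ defect on the Gram matrix of vertex coordinates, and deleting a single edge precisely isolates a two-fold reflective ambiguity across some hyperplane determined by the remaining vertices. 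Turning this geometric intuition into an algebraic proof that works uniformly for all (as yet undetermined) forbidden minors in dimensions $d \geq 4$, without invoking the stronger earlier conjecture, is the hard part.
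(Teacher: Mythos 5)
The statement you were asked to prove is a Conjecture in the paper's open-problems section (Section \ref{sec:conclusions}); the paper offers no proof of it, so there is no ``paper proof'' to compare against. What can be assessed is whether your argument is correct and what it actually establishes. For $d = 2, 3$ your argument is essentially correct. You note, correctly, that every $d$-flattenability forbidden minor is an atom (the clique-sum preservation argument is standard, and the paper itself uses this fact without proof inside the proof of Corollary \ref{cor:flat-iff-sip}); with atomicity in hand, condition (1) of the forward direction follows by applying the forward direction of Theorem \ref{thm:3-sip_characterization} to $(G \setminus e, e)$ with $G$ as its own trivially $e$-preserving forbidden minor, and condition (2) follows from Theorem \ref{thm:sitharam_willoughby} applied to each one-step proper minor. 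The backward direction combines the contrapositive of Theorem \ref{thm:sitharam_willoughby} (for $G$ itself) with Corollary \ref{cor:flat-iff-sip} (for each one-step minor) and minor-closedness of $d$-flattenability. All of these steps are sound for $d \le 3$.

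The genuine gap --- which you flag explicitly, and which is exactly why this remains a conjecture --- is that for $d \ge 4$ both condition (1) and the backward direction invoke the conjectured generalization of Theorem \ref{thm:3-sip_characterization} (equivalently of Corollary \ref{cor:flat-iff-sip}) to higher dimensions, which is itself the paper's earlier open Conjecture. This is also somewhat circular with respect to the paper's stated motivation: the present conjecture is posited precisely \emph{to help} prove the $d$-SIP characterization conjecture without invoking the (unknown) explicit $d$-forbidden minors, so deriving it \emph{from} that characterization does not advance the open problem. Your argument does establish a clean implication (the higher-dimensional $d$-SIP characterization conjecture implies this one), which is a worthwhile observation, but no more. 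Your suggested escape from the circularity --- a uniform, graph-agnostic non-$d$-SIP certificate built from Cayley--Menger rank drops --- is the right kind of idea, but as written it is a heuristic sketch: it would need to produce, for \emph{every} edge $e$ of an \emph{arbitrary} (undetermined) forbidden minor $G$, an explicit squared-length map on $G \setminus e$ whose CCS in $\mathbb{R}^d$ restricted to $e$ splits into two disjoint intervals, and you have not shown how the claimed ``two-fold reflective ambiguity'' is isolated by a single edge deletion in a graph you cannot exhibit.
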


Finally, an important motivation for studying the $d$-SIP is to characterize graphs $G$ and nonedge sets $F$ for which $(G,F)$ is $d$-convex.  
Recall Theorem \ref{thm:2-convexity} states that, for any dimension $d \leq 2$, $(G,F)$ is $d$-convex if and only if each atom of $G \cup F$ that contains a non-empty subset of $F$ has no $K_{d+2}$ minor.  
The following is an equivalent restatement of the right-hand-side of this theorem via Theorem \ref{thm:2-SIP}: for any $f \in F$, $(G \cup (F \setminus \{f\}, f)$ has the $d$-SIP.  

\begin{conjecture}
    For any dimension $d \geq 1$ and any graph $G$ with nonedge set $F$, $(G,F)$ is $d$-convex if and only if, for any $f \in F$, $(G \cup (F \setminus \{f\}, f)$ has the $d$-SIP.  
\end{conjecture}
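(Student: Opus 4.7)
The plan is to prove the equivalence by induction on $|F|$, with the base case $|F|=1$ being tautological since then $(G,F)=(G,f)$ and $d$-convexity coincides with $d$-SIP. The forward direction, that $(G,F)$ being $d$-convex implies each pair $(G \cup (F \setminus \{f\}), f)$ has the $d$-SIP, is direct: for any length map $\ell'$ on $G \cup (F \setminus \{f\})$, letting $\ell$ be its restriction to $E(G)$, the set $\Omega^d_f(G \cup (F \setminus \{f\}), \ell')$ equals the image under the $f$-coordinate projection of the slice $\Omega^d_F(G, \ell) \cap \{x_g = \ell'(g) : g \in F \setminus \{f\}\}$. As a coordinate projection of the intersection of a convex set with an affine subspace, this is convex, hence an interval, establishing the $d$-SIP.

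For the converse, suppose the conjecture holds for all nonedge sets of size less than $m = |F|$ and that each $(G \cup (F \setminus \{f\}), f)$ has the $d$-SIP. First I would, for each fixed $f_0 \in F$, apply the inductive hypothesis to the smaller pair $(G \cup \{f_0\}, F \setminus \{f_0\})$: for each $g \in F \setminus \{f_0\}$, the pair $((G \cup \{f_0\}) \cup ((F \setminus \{f_0\}) \setminus \{g\}), g)$ equals $(G \cup (F \setminus \{g\}), g)$, which has the $d$-SIP by hypothesis, so the inductive hypothesis yields $d$-convexity of $(G \cup \{f_0\}, F \setminus \{f_0\})$. Translating back, for every $\ell$ and every $f_0 \in F$ and every $c$, the slice $\Omega^d_F(G, \ell) \cap \{x_{f_0} = c\}$ is convex; by iterating this slicing, every axis-parallel coordinate-subspace section of $\Omega^d_F(G, \ell)$ is convex, and additionally each one-dimensional axis-parallel slice is an interval by the direct $d$-SIP hypothesis.

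The main obstacle is upgrading axis-parallel coordinate-subspace convexity of $\Omega^d_F(G, \ell)$ to full convexity. For general compact semi-algebraic sets this implication fails: the $L$-shape in $\mathbb{R}^2$ has all its horizontal and vertical line sections being intervals yet is not convex. The plan is to exploit the specific structure of CCSs as semi-algebraic projections of realization spaces, which admit Cayley--Menger/Schoenberg descriptions linking them to the Euclidean distance matrix cone. A promising route is via Tietze--Nakajima: a closed connected locally convex subset of $\mathbb{R}^n$ is convex. Connectedness of $\Omega^d_F(G, \ell)$ follows since any nontrivial separation would force some $1$-dimensional axis-parallel slice to be disconnected, contradicting the $d$-SIP hypothesis. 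Local convexity around each interior point is the key step: one would argue that the local semi-algebraic stratification around a point is generated by the axis-parallel hyperplane sections through it, so convexity of every such section forces the local tangent cone to be convex.

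Formalizing this local-to-global passage is the principal hurdle. A parallel strategy mirroring Sections \ref{sec:reverse} and \ref{sec:forward} is to use Corollary \ref{cor:atom_gluing}-style reductions to the case where $G \cup F$ is an atom, then exploit the conjectured extension of Theorem \ref{thm:3-sip_characterization} to general $d$ (together with Theorem \ref{thm:sitharam_willoughby}) to get forbidden-minor control on each atom, and glue convexity back together across clique separators. The hard part will be establishing a gluing lemma that turns CCS convexity on each atom plus the $d$-SIP hypothesis into CCS convexity of the whole, analogous to but strictly stronger than Lemma \ref{lem:gluing}.
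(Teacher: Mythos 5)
This statement is an open conjecture in the paper (Section \ref{sec:conclusions}); the authors provide no proof, so there is no ground-truth argument to compare against. Your proposal should therefore be judged on its own merits. The forward direction is correct: if $(G,F)$ is $d$-convex then for each $f \in F$ and each squared edge-length map $\ell'$ on $G \cup (F \setminus \{f\})$, the set $\Omega^d_f(G \cup (F \setminus \{f\}), \ell')$ is the $f$-coordinate projection of the intersection of the convex set $\Omega^d_F(G,\ell)$ with an axis-parallel affine subspace, hence convex. Likewise, the inductive bookkeeping in the converse is sound: the identification $(G \cup \{f_0\}) \cup ((F \setminus \{f_0\}) \setminus \{g\}) = G \cup (F \setminus \{g\})$ is exactly right, the inductive hypothesis legitimately applies, and the conclusion that every proper axis-parallel coordinate slice of $\Omega^d_F(G,\ell)$ is convex is a genuine and nontrivial consequence. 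You also correctly flag that axis-parallel slice convexity does not imply global convexity, and you correctly identify the local-to-global passage as the unfilled hurdle.

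However, there is a second gap you do not flag, and it undermines the route you sketch. You assert that connectedness of $\Omega^d_F(G,\ell)$ "follows since any nontrivial separation would force some $1$-dimensional axis-parallel slice to be disconnected." That implication is false in general: the two-point set $\{(0,0),(1,1)\} \subset \mathbb{R}^2$ has every axis-parallel line slice equal to a singleton or empty (hence an interval), every codimension-one slice convex, yet is disconnected. So the hypotheses you have derived — that every $1$-dimensional axis-parallel slice is an interval and every higher-dimensional axis-parallel slice is convex — do not by themselves rule out disconnectedness, and therefore cannot feed into a Tietze-type local-convexity argument without additional input specific to the semi-algebraic structure of CCSs. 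To make either of your two routes work, you would need a substantive structural fact about $\Omega^d_F(G,\ell)$ (for instance, a positive-codimension analogue of Lemma \ref{lem:projection} that links disconnectedness of the full CCS to disconnectedness of some axis-parallel fiber, using the tetrahedral-inequality/Cayley--Menger description) rather than a purely point-set-topological inference; the gluing strategy via an extension of Lemma \ref{lem:gluing} to nonedge sets seems the more promising of the two, but as you say, it is not yet established.
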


\section{Acknowledgements}

The authors were partially supported by grants from the National Science Foundation Division of Mathematical Sciences [NSF DMS 1564480, NSF DMS 1563234] awarded to M.S.  

We thank the Fields Institute (thematic program Spring 2021, focus program Summer 2023), and ICERM (semester program Spring 2025) for support and facilitation, and Sean Dewar and Eleftherios Kastis for useful discussions, e.g. towards generalizing Lemma \ref{lem:tool}.  
We thank Matthias Himmelmann for recreating the top-right of Figure \ref{fig:2_and_3-convexity_ex} from \cite{vinzant2014spectrahedron}.  

\bibliographystyle{elsarticle-num.bst}

\appendix

\section{Proofs for the converse direction}
\label{app:converse}

\subsection{Proof of Lemma \ref{lem:K5_2-separator_case_1} for Proposition \ref{lem:winged-graph_cut_vertices}}
\label{sec:proof_winged-graph_cut_vertices}

\begin{proof}[Proof of Lemma \ref{lem:K5_2-separator_case_1}]  
    The proof is split into three cases: some set in $\{X,Y\}$ has size one, some set in $\{A,B\}$ has size one, and each set in $\{X,Y,A,B\}$ has size at least two.  
    We prove the lemma in the first case.  
    The other two cases are similar and involve inspecting many cases, so we omit them, but they are available upon request.  
    We further focus on the case where $[G \cup f]$ is the graph in Figure \ref{fig:k5_wing}.  
    The case where $[G \cup f]$ is the graph in Figure \ref{fig:k222_wing} is similar, so we omit the details but highlight the differences at the end of the proof.  
    
    Wlog, assume that $|Y| = 1$ and $|X| > 1$.  
    We will show how to perform a sequence of vertex and component exchanges that transform $[G \cup f]$ into the desired $f$-winged graph minor $[G \cup f]_1$.  
    Let $H$ be the subgraph of $G \cup f$ induced by $[c_3]^{-1}$ and consider any vertex $a \in A$ with a neighbor in $[c_5]^{-1}$.  
    Since $|X| > 1$, some connected component $H_1$ of $H \setminus a$ contains a vertex in $X$.  
    There are four cases.  

    \smallskip
    \noindent\textbf{Case 1:} $H_1$ contains some vertices with neighbors in $[c_4]^{-1}$ and $[c_6]^{-1}$.  
    \smallskip

    Let $[G \cup f]_1$ be obtained via the vertex exchange from $[c_3]^{-1}$ to $[c_5]^{-1}$ that exchanges $a$ and fixes $H_1$.  
        Clearly, $[G \cup f]_1$ is one of the graphs in Figures \ref{fig:k5_not_wing_u4_vgeq1}, \ref{fig:k5_not_wing_u3}, or \ref{fig:k5_wing} such that $[f]_1$ is $w_1w_2$, $[d_3]_1^{-1} \cup [d_4]_1^{-1}$ is a proper subset of $[c_3]^{-1} \cup [c_4]^{-1}$, and both $[f]_1^{-1}$ and $[f]^{-1}$ contain the same edges of $G$.  
        Using Lemma \ref{lem:k5_k222_not_wing_fm_e_not_contracted}, we see that either $G \cup f$ has an $f$-preserving forbidden minor, which is a contradiction, or $[G \cup f]_1$ has the desired properties.  

    \begin{note}
        \label{note:1}
        In the remainder of the proof, we examine many cases for which similar arguments apply.  
        As to not overburden the proof, we will state how to obtain $[G \cup f]_1$ and make observations about neighborhoods of vertices, and we leave it to the reader to use these observations along with Lemma \ref{lem:k5_k222_not_wing_fm_e_not_contracted} to verify that either $G \cup f$ has an $f$-preserving forbidden minor, which is a contradiction, or $[G \cup f]_1$ has the desired properties.  
    \end{note}

    \noindent\textbf{Case 2:} $H_1$ does not contain a vertex with a neighbor in either $[c_4]^{-1}$ or $[c_6]^{-1}$.  
    \smallskip
    
    Note that $H \setminus H_1$ contains some vertices with neighbors in $[c_4]^{-1}$ and $[c_6]^{-1}$.  
    Let $[G \cup f]_1$ be the minor obtained via the component exchange from $[c_3]^{-1}$ to $[c_1]^{-1}$ that exchanges $H_1$.  

    \smallskip
    \noindent\textbf{Case 3:} $H_1$ contains some vertex with a neighbor in $[c_6]^{-1}$ but no vertex with a neighbor in $[c_4]^{-1}$.  
    \smallskip
    
    Let $a'$ be any vertex in $H_1$ with a neighbor in $[c_6]^{-1}$ and note that $H \setminus H_1$ is a subgraph of some connected component $J_1$ of $H  \setminus a'$ and contains some vertex with a neighbor in $[c_4]^{-1}$.  
    If some connected component $J_2$ of $H \setminus a'$ other than $J_1$ contains some vertex in $X$, then let $[G \cup f]_1$ be the minor obtained via the component exchange from $[c_3]^{-1}$ to $[c_1]^{-1}$ that exchanges $J_2$.  
    Otherwise, $J_1$ contains some vertex in $X$ since $|X| \geq 2$.  
        Hence, let $[G \cup f]_1$ be the minor obtained via the vertex exchange from $[c_3]^{-1}$ to $[c_6]^{-1}$ that exchanges $a'$ and fixes $J_1$.  

    \smallskip
    \noindent\textbf{Case 4:} $H_1$ contains some vertex with a neighbor in $[c_4]^{-1}$ but no vertex with a neighbor in $[c_6]^{-1}$.  
    \smallskip

    This case contains many subcases.  
    The reader should keep in mind the edges and nonedges of $[G \cup f]$ as well as the assumptions leading to each subcase.  
    
    Note that $H \setminus H_1$ contains some vertex with a neighbor in $[c_6]^{-1}$.
    If $H \setminus H_1$ contains some vertex with a neighbor in $[c_4]^{-1}$, then let $[G \cup f]_1$ be the minor obtained via the component exchange from $[c_3]^{-1}$ to $[c_1]^{-1}$ that exchanges $H_1$.  
    Otherwise, 
        if $H_1$ contains some vertex with a neighbor in $[c_5]^{-1}$, then let $[G \cup f]_1$ be the minor obtained via the vertex exchange from $[c_3]^{-1}$ to $[c_6]^{-1}$ that exchanges $a$ and fixes $H_1$.  
        Hence, assume that $A \subseteq V(H-H_1)$.  
            If some connected component $H_2$ of $H \setminus a$ other than $H_1$ contains some vertex with a neighbor in $[c_6]^{-1}$, then let $[G \cup f]_1$ be the minor obtained via the component exchange from $[c_3]^{-1}$ to $[c_6]^{-1}$ that exchanges $H_2$.  
            Otherwise, $a$ has some neighbor in $[c_6]^{-1}$.  
                If $H_2$ contains some vertex with a neighbor in $[c_5]^{-1}$, then $[G \cup f]_1$ is the minor obtained via the component exchange from $[c_3]^{-1}$ to $[c_5]^{-1}$ that exchanges $H_2$.  
                
                Therefore, assume that $A = \{a\}$, and so $|B| \geq 2$ by assumption.  
                Hence, there exists a vertex $b \in B \setminus Y$ since $|Y|=1$.  
                Wlog, assume that $b$ neighbors some vertex in $[c_5]^{-1}$.  
                Let $K$ be the subgraph of $G \cup f$ induced by $[c_4]^{-1}$ and let $K_1$ be the connected component of $K \setminus b$ that contains the vertex $y \in Y$.
                By arguments similar to those above, we can conclude that either (a) $K_1$ contains some vertex with a neighbor in $[c_3]^{-1}$ but no vertex with a neighbor in $[c_6]^{-1}$ or (b) $K_1$ contains some vertex with a neighbor in $[c_6]^{-1}$ but no vertex with a neighbor in $[c_3]^{-1}$.  
                    
                    Assume that Statement (a) is true, and so $K \setminus K_1$ contains some vertex with a neighbor in $[c_6]^{-1}$.
                        If $K_1$ contains some vertex with a neighbor in $[c_5]^{-1}$, then let $[G \cup f]_1$ be the minor obtained via the vertex exchange from $[c_4]^{-1}$ to $[c_6]^{-1}$ that exchanges $b$ and fixes $K_1$.  
                        Otherwise, we have that $B \subseteq V(K \setminus K_1)$.
                        Hence, some connected component $K_2$ of $K \setminus b$ other than $K_1$ contains some vertex in $B$.  
                            If $K_2$ contains some vertex with a neighbor in $[c_6]^{-1}$, then let $[G \cup f]_1$ be the minor obtained via the component exchange from $[c_4]^{-1}$ to $[c_6]^{-1}$ that exchanges $K_2$.  
                            Otherwise, $b$ must have some neighbor in $[c_6]^{-1}$.
                            In this case, let $[G \cup f]_1$ be the minor obtained via the component exchange from $[c_4]^{-1}$ to $[c_5]^{-1}$ that exchanges $K_2$.  
            
                    Next, assume that Statement (b) is true, and so $K \setminus K_1$ contains some vertex with a neighbor $w \in [c_3]^{-1}$.
                    By a previous assumption, we have $w \in V(H_1)$.
                    Hence, $H \setminus H_1$ is a subgraph of some connected component $J_1$ of $H \setminus w$.
                        If some connected component $J_2$ of $H \setminus w$ other than $J_1$ contains some vertex in $X$, then let $[G \cup f]_1$ be the minor obtained via the component exchange from $[c_3]^{-1}$ to $[c_1]^{-1}$ that exchanges $J_2$.  
                         Otherwise, $J_1$ contains some vertex in $X$ since $|X| \geq 2$.  
                            If $w \in X$, then consider the minor $[G \cup f]_2$ obtained via the component exchange from $[c_4]^{-1}$ to $[c_6]^{-1}$ that exchanges $K_1$, and let it have have the vertex set $\{[b_i]_2\}$.  
                            Let $[G \cup f]_1$ be the minor obtained via the vertex exchange from $[b_3]_2^{-1}$ to $[b_4]_2^{-1}$ that exchanges $w$ and fixes $J_1$.  
                            Otherwise, 
                                if some connected component $H_2$ of $H \setminus a$ other than $H_1$ contains some vertex in $X$, then let $[G \cup f]_1$ be the minor obtained via the component exchange from $[c_3]^{-1}$ to $[c_1]^{-1}$ that exchanges $H_2$.  
                                Therefore, it must be that $J_1 \cap H_1$ contains some vertex in $X$.  
                                    If $a \in X$, then consider the minor $[G \cup f]_2$ obtained via the component exchange from $[c_4]^{-1}$ to $[c_6]^{-1}$ that exchanges $K_1$, and let it have the vertex set $\{[b_i]_2\}$.  
                                    Let $[G \cup f]_1$ be the minor obtained via the component exchange from $[b_3]_2^{-1}$ to $[b_4]_2^{-1}$ that exchanges $H_1$.  
                                    
                                    Lastly, by the above arguments, it suffices to assume that $X \subseteq V(J_1 \cap H_1)$.  
                                    Consider any vertex $x_1 \in X$ and let $R_1$ be the connected component of $H \setminus x_1$ that contains $w$.  
                                        If $R_1$ contains $a$, then let $[G \cup f]_1$ be the minor obtained via the vertex exchange from $[c_3]^{-1}$ to $[c_1]^{-1}$ that exchanges $x_1$ and fixes $R_1$.  
                                        Otherwise, let $R_2$ be the connected component of $H \setminus x_1$ that contains $a$.  
                                        Since $|X| \geq 2$, there exists a vertex $x_2 \in X$ in some connected component of $H \setminus x_1$.  
                                        By the above arguments, this connected component must be either $R_1$ or $R_2$.  
                                            If $x_2$ is contained in $R_1$, then consider the minor $[G \cup f]_2$ obtained via the component exchange from $[c_4]^{-1}$ to $[c_6]^{-1}$ that exchanges $K_1$, and let it have the vertex set $\{[b_i]_2\}$.  
                                            Let $[G \cup f]_1$ be the minor obtained via the component exchange from $[b_3]_2^{-1}$ to $[b_4]_2^{-1}$ that exchanges $R_1$.  
                                            Otherwise, let $[G \cup f]_2$ be obtained via the component exchange from $[c_4]^{-1}$ to $[c_6]^{-1}$ that exchanges $K_1$.  
                                            Let $[G \cup f]_1$ be the minor obtained via the vertex exchange from $[b_3]_2^{-1}$ to $[b_4]_2^{-1}$ that exchanges $x_1$ and fixes $R_2$.  

    \smallskip
    The cases above are exhaustive, and so the lemma is proved for the case where $[G \cup f]$ is the graph in Figure \ref{fig:k5_wing}.  

    Finally, consider the case where $[G \cup f]$ is the graph in Figure \ref{fig:k222_wing}.  
    As stated at the beginning of this proof, the argument is similar, so we omit the details but note the differences in the argument by examining the case where $H_1$ some contains vertices with neighbors in $[c_4]^{-1}$ and $[c_7]^{-1}$.  
    Let $[G \cup f]_1$ be obtained via the vertex exchange from $[c_3]^{-1}$ to $[c_5]^{-1}$ that exchanges $a$ and fixes $H_1$.  
    Clearly, $[G \cup f]_1$ is some graph $G'$ that can be obtained from one of the graphs in Figures \ref{fig:k222_not_wing_u4_vgeq1} - \ref{fig:k222_not_wing_u2_diagonals} or \ref{fig:k222_wing} by adding edges between its vertices, $[f]_1$ is $w_1w_2$, $[d_3]_1^{-1} \cup [d_4]_1^{-1}$ is a proper subset of $[c_3]^{-1} \cup [c_4]^{-1}$, and both $[f]_1^{-1}$ and $[f]^{-1}$ contain the same edges of $G$.  
    If $G'$ is the graph Figure \ref{fig:k222_wing}, then $[G \cup f]_1$ has the desired properties.  
    Otherwise, it is easy to see that $G \cup f$ has an $f$-preserving forbidden minor, which is a contradiction.  

    This completes the proof.  
\end{proof}

\subsection{Proof of Lemma \ref{lem:not_3-or-2_implies_not_1} for Theorem \ref{prop:3-connected_partial_3-tree_star_lemma}}
\label{sec:star_thm}

Lemmas \ref{lem:not_3-or-2_implies_not_1} follows immediately from Lemmas \ref{lem:not_2_implies_not_1} and \ref{lem:not_3_implies_not_2}, below.  

\begin{lemma}[If $G \cup \{wx,wy\}$ has a forbidden minor, then so does $G \cup wx$]
\label{lem:not_2_implies_not_1}
    Let $G$ be a $3$-connected graph and $F = \{wx,wy\}$ be a set of its nonedges.  
    If there exists a forbidden minor $[G \cup F]$ in which $[w]$, $[x]$, and $[y]$ are distinct vertices and $wx$ and $wy$ are retained, then some $f \in F$ is such that $G \cup f$ has a forbidden minor.
\end{lemma}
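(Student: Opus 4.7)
The plan is to leverage the 3-connectivity of $G$ to re-route one of the two retained edges (say $[w][y]$) through $G$, thereby producing a forbidden minor of $G \cup wx$ (or symmetrically $G \cup wy$). Let $V_u = [u]^{-1}$ for each vertex $[u]$ of the forbidden minor $[G \cup F]$ (which is $K_5$ or $K_{2,2,2}$). Since $wx$ and $wy$ are retained, the classes $V_w$, $V_x$, $V_y$ are distinct and contain $w$, $x$, $y$ respectively, and $G$ itself has no edges between $V_w$ and $V_x$ or between $V_w$ and $V_y$; in particular, if we use the same partition for a minor of $G \cup wx$, the only missing edge compared to $[G \cup F]$ is $[w][y]$, which is not enough to kill forbidden-minor-hood ($K_5 - e$ and $K_{2,2,2} - e$ both fail to be forbidden minors). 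So a naive partition is insufficient; we must modify it.

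First I would try to produce a forbidden minor of $G \cup wx$ by finding a path $P$ in $G$ from $w$ to $y$ whose internal vertices lie entirely in $\bigcup_{c \notin \{w,x,y\}} V_c$, i.e., $P$ avoids $V_x$. If such a $P$ exists, I would extend $V_y$ to $V_y' := V_y \cup (\text{internal vertices of } P \text{ outside } V_w)$; this class remains connected because $P$ itself witnesses connectivity, and by construction the first edge of $P$ leaving $V_w$ yields an edge between $V_w$ and $V_y'$ in $G \cup wx$. Shrinking the other classes $V_c$ by removing the vertices absorbed into $V_y'$ then gives a candidate forbidden-minor partition for $G \cup wx$. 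By 3-connectivity there are three internally vertex-disjoint $w$-$y$ paths in $G$, so at most one uses $x$ as an internal vertex; this already gives two candidate paths avoiding the vertex $x$, and so if $|V_x|=1$ we always win. If no path from $w$ to $y$ avoids $V_x$, then $V_x$ is a $w$-$y$ separator of $G$; this forces $|V_x|\ge 3$, and by a symmetric argument I would instead seek a path from $w$ to $x$ avoiding $V_y$ to construct a forbidden minor of $G \cup wy$ — the point being that $V_x$ and $V_y$ cannot simultaneously separate $w$ from the other endpoint, for otherwise the removal of $V_x \cup V_y$ would create a $w$-$(V_a \cup V_b)$ structure incompatible with 3-connectivity together with the high edge-count of $K_5/K_{2,2,2}$.

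The main obstacle is verifying that after absorbing $P$'s internal vertices into $V_y'$, the remaining classes $V_c$ (for $c \notin \{w,x,y\}$) stay connected, non-empty, and retain all the inter-class edges required by the forbidden minor structure. A class $V_c$ with $|V_c|=1$ that gets emptied is the sharpest difficulty, and a class that loses all its edges to some $V_{c'}$ is the second. I would address both by choosing $P$ to be a shortest path between $V_w$ and $V_y$ in $G \setminus V_x$, which minimizes the number of intermediate classes touched, and by exploiting the large number of edges between classes guaranteed by $K_5$ or $K_{2,2,2}$: each non-$V_w$ pair of classes has an edge in $G$, so the loss of one or two vertices from a class does not destroy all such edges unless that class is very small. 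When small-class obstructions do occur, I would use a second vertex-disjoint $w$-$y$ path given by Menger's theorem either to choose an alternative $P$ or to enlarge $V_y'$ further (absorbing additional vertices to restore a lost inter-class edge), handing off to the $G \cup wy$ direction whenever the $G \cup wx$ direction runs out of room.

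The hard part will be organizing the case split cleanly, particularly the $K_{2,2,2}$ case: because $K_{2,2,2}$ has within-part non-edges, some re-routings might accidentally create an edge inside a part (which is fine for producing a $K_5$ minor but not for preserving a $K_{2,2,2}$ structure), and absorbing too many vertices can contract $[w][y]$ rather than create the edge. I expect the cleanest organization is to not insist on preserving the exact minor type: whenever we can create a modified partition that induces \emph{any} forbidden minor in $G \cup wx$ or $G \cup wy$, we are done. Under this flexibility the three internally vertex-disjoint $w$-$y$ paths furnished by 3-connectivity, combined with the dual three internally vertex-disjoint $w$-$x$ paths, provide enough freedom that at least one re-routing succeeds, completing the proof.
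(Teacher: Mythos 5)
Your high-level plan — re-route the $w$-$y$ connection through $G$ by absorbing a $V_x$-avoiding $w$-$y$ path into $V_y$, giving a forbidden minor of $G\cup wx$ — is a sensible direction, and it is dual (replacing $w$-$x$ paths by $w$-$y$ paths) to what the paper actually does. However, the proposal leaves the crux of the argument unresolved, and the gap is not cosmetic.

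The problem is the step ``extend $V_y$ to $V_y'$ and shrink the other classes accordingly.'' You acknowledge this can empty a class or destroy an inter-class edge, and you propose to recover by minimizing the number of classes touched and by falling back on alternative Menger paths, but you never establish that this recovery \emph{always} succeeds. In fact a one-shot absorption is hard to certify: the internal vertices of $P$ may be spread across several classes, removing them may disconnect a class even if it isn't emptied, and a replacement path can create the same problem in a new place. Your claim that ``if $|V_x|=1$ we always win'' is exactly where this bites — avoiding the vertex $x$ is necessary but not sufficient, since the absorption must also leave every remaining class connected and every required inter-class edge intact, and nothing in the sketch forces that. Similarly, the assertion that $V_x$ and $V_y$ ``cannot simultaneously separate'' $w$ from the respective opposite endpoint is stated as if obvious but is not verified, and the concluding appeal to ``enough freedom'' from the combined $w$-$y$ and $w$-$x$ path systems is precisely the part that needs a proof.

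The paper avoids the one-shot construction altogether. It fixes three internally disjoint $w$-$x$ paths, uses the degree constraint on $[w]$ in $K_5$/$K_{2,2,2}$ to arrange that two of them first exit $[w]^{-1}$ into the same class $[a]^{-1}$, and then argues by a potential function (the lengths of the initial segments of $P_a$, $P_b$ captured inside $[a]^{-1}$) that a sequence of vertex and component exchanges either strictly increases the potential or reaches a minor in which $wx$ or $wy$ is \emph{doubled} — at which point deleting that nonedge still leaves a forbidden minor of $G\cup wy$ or $G\cup wx$ respectively. This iterative ``double one of the two nonedges'' strategy sidesteps exactly the bookkeeping that your one-shot absorption would have to carry out, but it requires the extended case analysis (split by $K_5$ vs.\ $K_{2,2,2}$ and by where the neighborhoods $V_a$, $V_b$ land) that the paper supplies and your sketch postpones. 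To turn your proposal into a proof you would need to either (a) prove the absorption lemma rigorously — showing that some choice among the Menger paths always yields a valid re-partition — or (b) switch to an iterative scheme like the paper's, in which case you are essentially reproving their argument.
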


\begin{proof}
    Let $G' = G \cup F$ and assume that the described forbidden minor $[G']$ exists.  
    Since $G$ is $3$-connected, it contains at least three internally vertex-disjoint paths $P_a$, $P_b$, and $P_c$ between $w$ and $x$, where $a$, $b$, and $c$ are the first vertices along these paths not contained in $[w]^{-1}$, respectively.  
    Furthermore, since $wx$ and $wy$ are retained in $[G']$, $[x]^{-1} \cup [y]^{-1}$ does not contain $a$, $b$, or $c$.  
    Also, since $[w]$ has degree two in $[G'] \setminus [F]$, we can choose these paths such that $[a]=[b]$.  
    For any minor $[G']_{\star}$ in which $[w]_{\star}$, $[x]_{\star}$, and $[y]_{\star}$ are distinct vertices and $wx$ and $wy$ are retained, define $a([G']_{\star})$ to be the length of the longest path in $G'$ that contains $a$ and consists only of vertices in $V(P_a) \cap [a]_{\star}^{-1}$.  
    Define $b([G']_{\star})$ similarly.  
    We will show how to obtain a minor $[G']_1$ from $[G']$ via a sequence of contractions and component exchanges such that (i) $[G']_1$ is a forbidden minor and $[w]_1$, $[x]_1$, and $[y]_1$ are distinct vertices and one of the following statements is true: (ii) either $wx$ or $wy$ is doubled in $[G']_1$ or (iii) $a([G']_1) \geq a([G'])$, $b([G']_1) \geq b([G'])$, and at least one of these inequalities is strict.  
    If Statements (i) and (ii) are true, then the lemma is clearly proved.  
    Otherwise, if Statements (i) and (iii) are true, then since $P_a$ and $P_b$ are each finite and end with $x$, we can repeat this argument for $[G']_1$, and so on, until we obtain a forbidden minor of $G'$ for which Statements (i) and (ii) are true.  
    
    We complete the proof by showing how to obtain $[G']_1$.  
    Let $a'$ be the first vertex after $a$ in $P_a$ that is not contained in $[a]^{-1}$.  
    Define $b'$ similarly.  
    First, we show that we can wlog assume neither $a'$ nor $b'$ is contained in $[w]^{-1}$.  
    If this is not the case, then assume wlog that $a' \in [w]^{-1}$.  
    Let $a''$ be the first vertex in $P_a$ after $a'$ that is not contained in $[w]^{-1}$, and define $S_{a'}$ to be the path in $P_a$ between $a'$ and the vertex that immediately precedes $a''$.  
    Let $J$ be the subgraph of $G'$ induced by $[w]^{-1}$ and let $J_w$ be the connected component of $J \setminus S_{a'}$ that contains $w$.  
    Since $P_a$, $P_b$, and $P_c$ are pairwise vertex-disjoint, $J_w$ contains every vertex along $P_a$, $P_b$, and $P_c$ between $w$ and the vertices immediately preceding $a$, $b$, and $c$, respectively.  
    If some vertex in $J \setminus J_w$ neighbors some vertex in $[d]^{-1}$, where $[d]$ is the neighbor of $[w]$ other than $[a]$, $[x]$, or $[y]$, then let $[G']_2$ be the minor obtained via the component exchange from $[w]^{-1}$ to $[d]^{-1}$ that exchanges $J \setminus J_w$.  
    Otherwise, let $[G']_2$ be the minor obtained via the component exchange from $[w]^{-1}$ to $[a]^{-1}$ that exchanges $J \setminus J_w$.  
    Clearly, $[G']_2$ satisfies (i), $a([G']_2) \geq a([G'])$, and $b([G']_2) \geq b([G'])$.  
    Repeating this argument for $[G']_2$, and so on, yields a minor with the desired property, which we can choose to be $[G']$ to begin with.  
    
    Next, let $H$ be the subgraph of $G'$ induced by $[a]^{-1}$.  
    Also, let $R_a$ be the path in $P_a$ between $a$ and the vertex that immediately precedes $a'$.  
    Define $R_b$ similarly.  
    Consider the connected component $H_b$ of $H \setminus R_a$ that contains $b$, and let $H_a = H \setminus H_b$.  
    Note that $H_a$ contains $R_a$ and $H_b$ contains $R_b$.  
    Let $V_a$ be the subset of all vertices of $G' \setminus ([a]^{-1} \cup [w]^{-1})$ that the vertices in $H_a$ neighbor.  
    Define $V_b$ similarly.  
    Note that $a' \in V_a$ and $b' \in V_b$ since neither $a'$ nor $b'$ is contained in $[w]^{-1}$.  
    There are two cases.  

    \smallskip
    \noindent\textbf{Case 1:} $[G']$ is $K_5$.  
    \smallskip
    
    If $V_a$ is a subset of either $[x]^{-1}$ or $[y]^{-1}$, then the minor $[G']_1$ obtained via the component exchange from $[a]^{-1}$ to $[V_a]^{-1}$ that exchanges $H_a$ satisfies Statements (i) and (ii).  
    Else, if $V_a$ is a subset of $[d]^{-1}$, then the minor $[G']_1$ obtained via the component exchange from $[a]^{-1}$ to $[d]^{-1}$ that exchanges $H_a$ satisfies Statements (i) and (iii).  
    Otherwise, if $V_a$ is a subset of $[d]^{-1} \cup [x]^{-1}$ but not a subset $[d]^{-1}$ or $[x]^{-1}$, then the minor $[G']_1$ obtained via the component exchange from $[a]^{-1}$ to $[y]^{-1}$ that exchanges $H_b$ satisfies Statements (i) and (ii).  
    A similar argument works if $V_a$ is a subset of $[d]^{-1} \cup [y]^{-1}$ but not a subset $[d]^{-1}$ or $[y]^{-1}$.  
    If $V_a$ is a subset of $[x]^{-1} \cup [y]^{-1}$ but not a subset $[x]^{-1}$ or $[y]^{-1}$, then $V_b$ contains some vertex in $[d]^{-1}$.  
    Furthermore, using arguments similar to those above, we can assume $V_b$ contains some vertex in $[x]^{-1} \cup [y]^{-1}$.  
    Wlog, if this vertex is contained in $[y]^{-1}$, then the minor $[G']_1$ obtained via the component exchange from $[a]^{-1}$ to $[x]^{-1}$ that exchanges $H_a$ satisfies Statements (i) and (ii).  
    Lastly, the above arguments allow us to assume that $V_a$ and $V_b$ each contain vertices in $[d]^{-1}$, $[x]^{-1}$, and $[y]^{-1}$, in which case the minor $[G']_1$ obtained via the component exchange from $[a]^{-1}$ to $[x]^{-1}$ that exchanges $H_a$ satisfies Statements (i) and (ii).  

    \smallskip
    \noindent\textbf{Case 2:} $[G']$ is $K_{2,2,2}$.  
    \smallskip
    
    Wlog, assume that the neighborhood of $[a]$ is $\{[d],[e],[w],[x]\}$, where $[e]$ is the vertex in $V([G']) \setminus \{[a],[d],[w],[x],[y]\}$.  
    If $V_a$ is a subset of $[x]^{-1}$, then the minor $[G']_1$ obtained via the component exchange from $[a]^{-1}$ to $[x]^{-1}$ that exchanges $H_a$ satisfies Statements (i) and (ii).  
    Else, if $V_a$ is a subset of $[d]^{-1}$, then the minor $[G']_1$ obtained via the component exchange from $[a]^{-1}$ to $[d]^{-1}$ that exchanges $H_a$ satisfies Statements (i) and (iii).  
    Otherwise, if $V_a$ is a subset of $[e]^{-1}$, then consider the minor $[G']_2$ obtained via the component exchange from $[a]^{-1}$ to $[e]^{-1}$ that exchanges $H_a$.  
    Observe that contracting $[bd]_2$ yields a $K_5$ minor $[G']_3$ in which $[w]_3$, $[x]_3$, and $[y]_3$ are distinct vertices and $wx$ and $wy$ are retained.  
    Hence, the existence of $[G']_1$ follows from Case 1.  
    Similar arguments apply for all remaining cases of $V_a$.  

    \smallskip
    The above cases are exhaustive, and so the lemma is proved.  
\end{proof}

\begin{lemma}[If $G \cup \{wx,wy,wz\}$ has a forbidden minor, then so does $G \cup \{wx,wy\}$]
\label{lem:not_3_implies_not_2}
    Let $G$ be a $3$-connected graph and $F = \{wx,wy,wz\}$ be some subset of its nonedges.  
    If there exists a forbidden minor $[G \cup F]$ in which $[w]$, $[x]$, $[y]$, and $[z]$ are distinct vertices and $wx$, $wy$, and $wz$ are retained, then some size two subset $F'$ of $F$ is such that $G \cup F'$ has a forbidden minor.
\end{lemma}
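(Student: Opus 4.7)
The plan is to reduce the three-nonedge case directly to the two-nonedge case already handled by Lemma \ref{lem:not_2_implies_not_1}. I would apply that lemma to the augmented graph $G' := G \cup wz$ with the residual pair of nonedges $F' := \{wx, wy\}$, and read off its conclusion: $G' \cup f = G \cup \{wz, f\}$ has a forbidden minor for some $f \in \{wx, wy\}$. Setting $F'' := \{wz, f\}$ gives the desired size-two subset of $F$.

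The bulk of the work is verifying the hypotheses for this application. First, adding a single edge to a $3$-connected graph preserves $3$-connectivity (for any 2-set $S$, $G' \setminus S$ is a supergraph of $G \setminus S$, hence connected), so $G'$ is $3$-connected. Second, $wx$ and $wy$ are nonedges of $G$ distinct from $wz$, hence nonedges of $G'$. Third, $G' \cup F' = G \cup F$ as graphs, so the given forbidden minor $[G \cup F]$ is equally a forbidden minor of $G' \cup F'$, in which $[w], [x], [y]$ are distinct by hypothesis. The only nontrivial check is that $wx$ and $wy$ are still retained when $[G \cup F]$ is viewed as a minor of $G' \cup F'$ over the \emph{enlarged} underlying graph $G'$. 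Retention over $G$ says that the product sets $[w]^{-1} \times [x]^{-1}$ and $[w]^{-1} \times [y]^{-1}$ contain no edges of $G$. The only new edge in $G'$ is $wz$, with endpoints $w \in [w]^{-1}$ and $z \in [z]^{-1}$; since $[x], [y], [z]$ are pairwise distinct in the minor, $[z]^{-1}$ is disjoint from $[x]^{-1}$ and $[y]^{-1}$, so $wz$ lies in neither product set. Hence $wx$ and $wy$ remain retained over $G'$, and Lemma \ref{lem:not_2_implies_not_1} applies.

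I expect no serious obstacle here: the argument is essentially bookkeeping confirming that augmenting $G$ by $wz$ neither damages $3$-connectivity nor creates a doubling of $wx$ or $wy$ in the given minor. Both potential failures are ruled out directly by the pairwise-distinctness hypothesis on $[w], [x], [y], [z]$, so the reduction to the previously established two-nonedge lemma is immediate once the correct reframing is chosen.
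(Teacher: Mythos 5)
Your proof is correct, and it takes a genuinely different and considerably cleaner route than the paper's. The paper proves Lemma \ref{lem:not_3_implies_not_2} from scratch by rerunning a case analysis structurally parallel to the proof of Lemma \ref{lem:not_2_implies_not_1}, but with modified bookkeeping: since $[w]$ has degree one (rather than two) in $[G \cup F] \setminus [F]$, the paper works with two internally vertex-disjoint $w$--$x$ paths instead of three, and the component-exchange arguments are adjusted accordingly. Your reduction bypasses all of this: you absorb $wz$ into the base graph, verify that $G \cup wz$ is still $3$-connected and that the hypotheses of Lemma \ref{lem:not_2_implies_not_1} transfer, and then apply that lemma as a black box to obtain the size-two subset $\{wz, f\}$. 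The reduction is sound, and since Lemma \ref{lem:not_2_implies_not_1} is proved earlier and independently, there is no circularity.

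One small conceptual remark on your ``nontrivial check.'' You frame retention of $wx$ and $wy$ as a condition over the base graph $G$ (respectively $G'$), and then verify that the new edge $wz$ does not land in $[w]^{-1} \times [x]^{-1}$ or $[w]^{-1} \times [y]^{-1}$. In the paper's definitions, retention of the edge $wx$ in the rooted minor $[G \cup F]$ is a property relative to the graph $G \cup F$ whose minor is being taken, not relative to $G$: it says $[wx]^{-1}$ contains at most one edge of $G \cup F$. Since $G' \cup F' = G \cup F$ as a graph and the minor map is unchanged, the hypothesis ``$wx, wy$ retained in $[G' \cup F']$'' is literally the same statement as ``$wx, wy$ retained in $[G \cup F]$,'' so there is nothing to verify at all. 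Your extra check is harmless but vacuous. The substantive hypotheses you do need to verify for the application of Lemma \ref{lem:not_2_implies_not_1} are exactly the other two you identify: that $G' = G \cup wz$ remains $3$-connected, and that $wx, wy$ are nonedges of $G'$. Both are immediate, and the proof goes through.
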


\begin{proof}
    The proof is similar to the proof of Lemma \ref{lem:not_2_implies_not_1}.  
    Let $G' = G \cup F$ and assume that $[G']$ exists.  
    Since $G$ is $3$-connected, it contains at least two internally vertex-disjoint paths $P_a$ and $P_b$ between $w$ and $x$, where $a$ and $b$ are the first vertices along these paths that are not contained in $[w]^{-1}$, respectively.  
    Since $wx$, $wy$, and $wz$ are retained in $[G']$, $[x]^{-1} \cup [y]^{-1} \cup [z]^{-1}$ does not contain $a$ or $b$.  
    Also, since $[w]$ has degree one in $[G'] \setminus [F]$, we have $[a]=[b]$.  
    We will show how to obtain a minor $[G']_1$ from $[G']$ via a sequence of contractions and component exchanges such that $[G']_1$ is a forbidden minor in which $[w]_1$, $[x]_1$, $[y]_1$, and $[z]_1$ are distinct vertices and either $wx$, $wy$, or $wz$ is doubled.  
    If such a minor exists, then the lemma is clearly proved.  
    
    We complete the proof by showing how to obtain $[G']_1$.  
    Let $a'$ be the first vertex after $a$ in $P_a$ that is not contained in $[a]^{-1}$.  
    Define $b'$ similarly.  
    First, we show that we can wlog assume neither $a'$ nor $b'$ is contained in $[w]^{-1}$.  
    If this is not the case, then assume that $a' \in [w]^{-1}$.  
    Since $[w]$ has degree one in $[G'] \setminus [F]$, the first vertex $a''$ after $a'$ in $P_a$ that is not contained in $[w]^{-1}$ is contained in $[a]^{-1}$.  
    Define $S_{a'}$ to be the path in $P_a$ between $a'$ and the vertex that immediately precedes $a''$.  
    Let $J$ be the subgraph of $G'$ induced by $[w]^{-1}$ and let $J_w$ be the connected component of $J \setminus S_{a'}$ that contains $w$.  
    Since $P_a$ and $P_b$ are vertex-disjoint, $J_w$ contains every vertex along $P_b$ between $w$ and the vertex immediately preceding $b$.  
    Hence, the minor $[G']_2$ obtained via the component exchange from $[w]^{-1}$ to $[a]^{-1}$ that exchanges $J \setminus J_w$ is a forbidden minor in which $[w]_2$, $[x]_2$, $[y]_2$, and $[z]_2$ are distinct vertices and $wx$, $wy$, and $wz$ are retained.  
    By repeating this argument for $[G']_2$, and so on, we eventually obtain a minor with the desired property, which we can choose $[G']$ to be to begin with.  
    
    Next, let $H$ be the subgraph of $G'$ induced by $[a]^{-1}$.  
    Also, let $R_a$ be the path in $P_a$ between $a$ and the vertex that immediately precedes $a'$.  
    Define $R_b$ similarly.  
    Consider the connected component $H_b$ of $H \setminus R_a$ that contains $b$, and let $H_a = H \setminus H_b$.  
    Note that $H_a$ contains $R_a$ and $H_b$ contains $R_b$.  
    Let $V_a$ be the subset of all vertices of $G \setminus ([a]^{-1} \cup [w]^{-1})$ that the vertices in $H_a$ neighbor.  
    Define $V_b$ similarly.  
    Note that $a' \in V_a$ and $b' \in V_b$ since neither $a'$ nor $b'$ is contained in $[w]^{-1}$.  
    There are two cases.  

    \smallskip
    \noindent\textbf{Case 1:} $[G']$ is $K_5$.  
    \smallskip
    
    If $V_a$ is a subset of either $[x]^{-1}$, $[y]^{-1}$, or $[z]^{-1}$, then the minor $[G']_1$ obtained via the component exchange from $[a]^{-1}$ to $[V_a]^{-1}$ that exchanges $H_a$ has the desired properties.  
    Else, if $V_a$ is a subset of the union of any two of these sets, wlog say $[x]^{-1} \cup [y]^{-1}$, but not a subset of any one of these sets, then the minor $[G']_1$ obtained via the component exchange from $[a]^{-1}$ to $[z]^{-1}$ that exchanges $H_b$ has the desired properties.  
    Otherwise, if $V_a$ and $V_b$ each contain vertices in $[x]^{-1}$, $[y]^{-1}$, and $[z]^{-1}$, then the minor $[G']_1$ obtained via the component exchange from $[a]^{-1}$ to $[x]^{-1}$ that exchanges $H_a$ has the desired properties.  

    \smallskip
    \noindent\textbf{Case 2:} $[G']$ is $K_{2,2,2}$.  
    \smallskip
    
    Wlog, assume that the neighborhood of $[a]$ is $\{[d],[w],[x],[y]\}$, where $[d]$ is the vertex in $V([G']) \setminus \{[a],[w],[x],[y],[z]\}$.  
    If $V_a$ is a subset of either $[x]^{-1}$ or $[y]^{-1}$, then the minor obtained via the component exchange from $[a]^{-1}$ to $[V_a]^{-1}$ that exchanges $H_a$ has the desired properties.  
    Else, if $V_a$ is a subset of $[d]^{-1}$, then consider the minor $[G']_2$ obtained via the component exchange from $[a]^{-1}$ to $[d]^{-1}$ that exchanges $H_a$.  
    Contracting $[bx]_2$ yields a minor with the desired properties.  
    Otherwise, if $V_a$ is a subset of $[x]^{-1} \cup [y]^{-1}$ but not a subset of $[x]^{-1}$ or $[y]^{-1}$, then the minor $[G']_2$ obtained via the component exchange from $[a]^{-1}$ to $[d]^{-1}$ that exchanges $H_b$ has the desired properties.  
    Similar arguments apply for all remaining cases of $V_a$.  

    \smallskip
    The above cases are exhaustive, and so the lemma is proved.  
\end{proof}

\begin{proof}[Proof of Lemma \ref{lem:not_3-or-2_implies_not_1}]
    The lemma follows immediately from Lemmas \ref{lem:not_2_implies_not_1} and \ref{lem:not_3_implies_not_2}.  
\end{proof}

\subsection{Proofs of Lemma \ref{lem:A_empty_config_space_connected} for Theorem \ref{prop:partial_3-tree_3-reflection}}
\label{sec:proof_partial_3-tree_3-reflection}

Lemma \ref{lem:A_empty_config_space_connected} is proved using Lemma \ref{lem:fixed_lengths_connected}, below.  
Recall the set $\Delta$ of vectors, defined in Section \ref{sec:tool-thms}.  

\begin{lemma}[Fibers are connected]
    \label{lem:fixed_lengths_connected}
    Let $G$ be a graph with $F$ being a subset of its nonedges such that $G \cup F$ is a $3$-tree, let $\ell$ be a squared edge-length map, and let $\phi_F$ be the Cayley map from the CS $\mathcal{C}^3(G,\ell)$ to the CCS $\Omega^3_F(G,\ell)$.  
    Also, consider the set $A = \{A_1,\dots,A_m\}$ containing each $K_4$ subgraph of $G \cup F$ that maps to a non-coplanar set of points under some realization in the CS.  
    If $A$ is empty, then $\phi^{-1}_F(y)$ is connected for any point $y$ in the CCS; otherwise, $\phi^{-1}_F(y) \cap \mathcal{C}^3_{\delta}(G,\ell)$ is connected for any point $y$ in the relative interior of the CCS and any $\delta \in \Delta$.  
\end{lemma}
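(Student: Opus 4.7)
The plan is to prove Lemma~\ref{lem:fixed_lengths_connected} by induction on $n=|V(G\cup F)|$, exploiting the recursive construction of $3$-trees: either $G\cup F = K_4$ (base case), or $G\cup F$ is obtained from a smaller $3$-tree $T'$ by attaching a simplicial vertex $v$ of degree $3$ adjacent to a triangle on some $\{a,b,c\}\subseteq V(T')$.

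For the base case $n=4$, where $G\cup F=K_4$, fixing $y\in\Omega^3_F(G,\ell)$ determines all six squared edge lengths, so $\phi_F^{-1}(y)$ is a single orbit of realizations under the Euclidean group $E(3)=O(3)\ltimes\mathbb{R}^3$. When $A=\emptyset$, every such realization is coplanar, so reflection through its affine span lies in its $E(3)$-stabilizer; this orientation-reversing element makes the orbit $E(3)/\mathrm{Stab}$ connected. When $A\neq\emptyset$ and $y\in Y$, every realization is non-coplanar, so intersecting with $\mathcal{C}^3_\delta$ selects one of the two $SE(3)$-orbits, which is connected because $SE(3)$ is.

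For the inductive step, remove $v$ to obtain $T'=G'\cup F'=T\setminus v$, a $3$-tree on $n-1$ vertices, with induced squared edge-length map $\ell'$ and nonedge set $F'$. Let $A'$ be the analogous ``non-coplanar $K_4$'' set for $T'$; then $A'\subseteq A$, and $A$ is obtained from $A'$ by possibly adjoining $K_4=\{v,a,b,c\}$. Project $y$ to $y'\in\Omega^3_{F'}(G',\ell')$ and restrict $\delta$ to the indices in $A'$ to get $\delta'$. Consider the forgetful map $\pi:\phi_F^{-1}(y)\to\phi_{F'}^{-1}(y')$ that restricts a realization to $V(T')$, and its restriction to $\delta$-oriented CSs. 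By induction, the image is connected (after intersecting with $\mathcal{C}^3_{\delta'}$ in the $A'\neq\emptyset$ case). I would then argue that $\pi$ is surjective and has connected fibers that vary continuously: given a realization $p'$ of $T'$ and the placements of $a,b,c$, the point $v$ must lie at specified distances from $a,b,c$, yielding a single point when $K_4=\{v,a,b,c\}\notin A$ (since $v$ is forced into the plane of $abc$), and two points when $\{v,a,b,c\}\in A$, in which case $\delta$ selects one. A short local computation shows that this choice varies continuously over the base, so the total space is connected.

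The main obstacle will be handling the degenerate locus inside $\phi_{F'}^{-1}(y')$ where $a,b,c$ are collinear in some realization of $T'$: at such points the fiber of $\pi$ is a whole circle (rotations of $v$ about the line through $a,b,c$), so $\pi$ fails to be a fiber bundle and the orientation of $\{v,a,b,c\}$ is not well defined. I expect to circumvent this by showing that when $A\neq\emptyset$ and $y\in Y$ the relative-interior condition forces every $K_4\in A$ to be non-degenerate in every realization in $\phi_F^{-1}(y)$, so collinearity of $\{a,b,c\}$ cannot occur on $\{v,a,b,c\}\in A$; for triangles $\{a,b,c\}$ that are not the base of any $K_4$ in $A$ (or when $A=\emptyset$), the extra circular fibers only augment connectedness, since they glue together the two local branches of $\pi$. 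The remainder is a bookkeeping argument, tracking how each orientation coordinate $\delta_i$ is determined by the restriction of the realization to the corresponding $K_4$, which is either already encoded in $\delta'$ (when $A_i\in A'$) or fixed by the local side-of-plane choice for $v$ (when $A_i$ is the new $K_4$ on $\{v,a,b,c\}$).
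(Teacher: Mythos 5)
Your proposal is correct and takes essentially the same route as the paper: induct on $|V(G\cup F)|$ along the $3$-tree construction order, handle the $K_4$ base case via rigid-motion orbits, and in the inductive step relate the CS for $T$ to that for $T\setminus v$ by attaching $v$ over the clique $\{a,b,c\}$. The paper phrases the inductive step as gluing (realizations of $T$ are built from a realization in $X_k$ and the unique-up-to-congruence placement of the last $K_4$), which is the dual description of your projection $\pi$.

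One small simplification you can use: the degenerate locus you worry about does not actually arise as a realization-dependent phenomenon. Because $\{a,b,c\}$ is a $3$-clique of the $3$-tree $G\cup F$, all three of its pairwise squared lengths are fixed by $\ell'$, so collinearity of $a,b,c$ is a property of the lengths alone (a vanishing Cayley--Menger determinant), hence either holds in every realization or in none. Likewise, whether a $K_4$ subgraph is coplanar (i.e.\ lies in $A$) is determined by its six fixed squared lengths, which is what guarantees $A' = A\setminus\{K_4(v,a,b,c)\}$ and validates your bookkeeping of $\delta$ versus $\delta'$. With that observation, the circular-fiber case either occurs globally (and then the extra $SO(2)$ factor is absorbed into the rigid-motion quotient) or never, so $\pi$ does not need any local branch-gluing argument.
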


\begin{proof}
    Let $T = G \cup F$ and $\ell': E(T) \rightarrow \mathbb{R}$ be the map that agrees with $\ell$ on $E(G)$ and with $y$ on $F$.  
    If $A$ is empty, then observe that $\phi^{-1}_F(y)$ is connected if and only if the CS $\mathcal{C}^3(T,\ell')$ is connected.  
    Otherwise, $\phi^{-1}_F(y) \cap \mathcal{C}^3_{\delta}(G,\ell)$ is connected if and only if the set $\mathcal{C}^3_{\delta}(T,\ell')$ is connected.  
    Let $X$ be $\mathcal{C}^3(T,\ell')$ if $A$ is empty, and $\mathcal{C}^3_{\delta}(T,\ell')$ otherwise.  
    We will show that $X$ is connected by induction on $n = |V(T)|$.  
    
    When $n = 4$, it is easy to verify that $X$ contains exactly one realization up to translations and rotations, and hence $X$ is connected.  
    Assume that $X$ is connected when $n = k$ for any integer $k \geq 4$, and we will prove it is connected when $n = k + 1$.  
    Observe that there exists a sequence $T_1,\dots,T_{k+1}$ such that $T_1$ is some $K_4$ subgraph of $T$, $T_{k+1}=T$, and $T_i$ is obtained from $T_{i-1}$ by adding a single vertex $u_i$ connected to a clique of size three. 
    Let $\ell'_k$ be the restriction of $\ell'$ to $E(T_k)$.  
    Also, let $T'$ be the unique $K_4$ subgraph of $T$ that contains $u_{k+1}$
    Set $\delta_k$ to be $\delta$ if $T'$ is not contained in $A$; otherwise, set it to be obtained from $\delta$ by deleting the entry corresponding to $T'$.  
    Lastly, let $X_k$ be the CS $\mathcal{C}^3(T_k,\ell'_k)$ if $A \subseteq \{T'\}$, and the set $\mathcal{C}^3_{\delta_k}(T_k,\ell'_k)$ otherwise.  
    By the inductive hypothesis, $X_k$ is connected.  

    Next, let $\ell'_{k+1}$ be the restriction of $\ell'$ to $E(T')$.  
    Note that any $3$-realization in $X$ can be obtained by combining some $3$-realization in $X_k$ with some $3$-realization $q$ of $(T',\ell'_{k+1})$ via translations and rotations, and possibly a reflection of $q$.  
    Combining this with the facts that $X_k$ is connected, $q$ is unique up to translations, rotations, and reflections, and $T_k \cap T'$ is a clique implies that $X$ is connected.  
\end{proof}

\begin{figure}[htb]
    \centering
    \begin{tikzpicture}
        \node[ellipse,minimum height=20pt,minimum width=80pt,draw] (cs) at (0,0) {$X$};
        \node[ellipse,minimum height=20pt,minimum width=80pt,draw] (q) at (4,-1) {$\Tilde{X}$};
        \node[ellipse,minimum height=20pt,minimum width=80pt,draw] (ccs) at (0,-2) {$Y$};

        \path[->,>=stealth] (cs) edge node [left] {$\phi$} (ccs)
        (cs) edge node [above] {$\pi$} (q)
        (q) edge node [below] {$\psi$} (ccs);
    \end{tikzpicture}
    \caption{Maps in the proof of Lemma \ref{lem:A_empty_config_space_connected}.  }
    \label{fig:diag}
\end{figure}

\begin{proof}[Proof of Lemma \ref{lem:A_empty_config_space_connected}]
    Let $X$ be the CS and $Y$ be the CCS if is $A$ is empty; otherwise, let $X$ be the $\delta$-oriented CS and $Y$ be the relative interior of the CCS.  
    Since $\phi_F$ is continuous and closed, so is its restriction $\phi$ to $X$.  
    The fact that $\phi(X)=Y$ is immediate if $A$ is empty, and follows from Lemma \ref{lem:preimage_closure}(i) otherwise.  
    Define $\Tilde{X}$ to be the quotient space that contains for each realization $p \in X$ the equivalence class $[p] = \{q \in X | \text{ } \phi(p) = \phi(q)\}$.  
    Let $\pi: X \rightarrow \Tilde{X}$ be the quotient map that sends $p$ to $[p]$ and let $\psi: \Tilde{X} \rightarrow Y$ be the map that sends $[p]$ to $\phi(p)$.  
    See Figure \ref{fig:diag}.  
    To prove that $X$ is connected, it suffices to show that both $\Tilde{X}$ and $\pi^{-1}([p])$ are connected, for any $[p] \in \Tilde{X}$.  
    Note that $\pi^{-1}([p]) = (\phi^{-1} \circ \psi)([p])$, which is connected by Lemma \ref{lem:fixed_lengths_connected}.  
    We complete the proof by showing that $\Tilde{X}$ is connected.  
    
    First, we prove that $\psi$ is a homeomorphism.  
    It suffices to show that $\psi$ is bijective, closed, and continuous.  
    Bijectivity follows from the definitions of $\Tilde{X}$ and $\psi$.  
    To prove that $\psi$ is closed, we show that $\psi(U)$ is closed for any closed set $U \subseteq \Tilde{X}$.  
    Since $\pi$ is continuous and $\phi$ is closed, we have that $(\phi \circ \pi^{-1})(U)$ is closed.  
    The fact that $\psi$ is closed now follows from the observation that $\psi = \phi \circ \pi^{-1}$.  
    To prove that $\psi$ is continuous, we first show that $\pi$ is closed.  
    It suffices to show that $(\pi^{-1} \circ \pi)(U)$ is closed for any closed set $U \subseteq X$.  
    Since $\phi$ is continuous and closed, we have that $(\phi^{-1} \circ \phi)(U)$ is closed.  
    The fact that $(\pi^{-1} \circ \pi)(U)$ is closed now follows from the observation that $\pi^{-1} \circ \pi = \phi^{-1} \circ \phi$.  
    Finally, we show that $\psi^{-1}(U)$ is closed for any closed set $U \subseteq Y$.  
    The above properties of $\phi$ and $\pi$ imply that $(\pi \circ \phi^{-1})(U)$ is closed.  
    The fact that $\psi^{-1}(U)$ is closed now follows from the observation that $\psi^{-1} = \pi \circ \phi^{-1}$.  
    Therefore, $\psi$ is a homeomorphism.  

    Finally, since $G \cup F$ is $3$-flattenable, Theorem \ref{thm:sitharam_willoughby} shows that $Y$ is convex, and hence connected.  
    Thus, $\Tilde{X}$ is connected since $\psi$ is a homeomorphism.  
    This completes the proof.  
\end{proof}

\subsection{Proofs of Lemmas \ref{lem:links_3-connected} and \ref{lem:ux2_uy2_3-covering} for Proposition \ref{prop:x2y2_3-Cayley-connected}} 
\label{sec:proof_x2y2_3-Cayley-connected}

Lemma \ref{lem:links_3-connected} is proved using Lemma \ref{lem:pin_pairs_contracted}, below.  

\begin{lemma}[No forbidden minor preserves endpoints of pins or $f$]
    \label{lem:pin_pairs_contracted}
    Consider any winged graph pair $(G,f)$ with a chained $f$-winged graph minor $[G \cup f]$.  
    For any two distinct vertices $x$ and $w$ of $G$ such that $x$ is an endpoint of some pin of $[G \cup f]$ and $w$ is an endpoint of either some pin of $[G \cup f]$ or $f$, $(G \setminus [f]^{-1}) \cup xw$ has no $xw$-preserving forbidden minor.  
\end{lemma}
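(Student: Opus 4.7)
The plan is to prove the lemma by contradiction: assume $[M]$ is an $xw$-preserving forbidden minor of $(G \setminus [f]^{-1}) \cup xw$, and construct from it an $f$-preserving forbidden minor of $G \cup f$, contradicting the definition of a winged graph pair (which requires $G \cup f$ to have no $f$-preserving forbidden minor).

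Set $f = uv$ and let $\{[a_i]\}$ be the vertex set of the chained $f$-winged graph minor $[G \cup f]$, with $u \in [a_1]^{-1}$, $v \in [a_2]^{-1}$, and the wings indexed so that $[a_3],[a_4]$ are neighbors of $[a_1]$ and $[a_5],[a_6]$ are neighbors of $[a_2]$. Since $x$ is an endpoint of some pin, $x$ lies in one of the wing sets $[a_3]^{-1},\dots,[a_6]^{-1}$, and similarly for $w$ unless $w$ is an endpoint of $f$ (in which case $w \in \{u,v\}$). First I would split into cases based on which wing sides $x$ and $w$ lie in. In the ``two-sided'' case, say $x$ is on the $u$-side (WLOG $x \in [a_3]^{-1}$) and $w$ is on the $v$-side, I would construct the extended minor $[G \cup f]'$ by taking the partition of $V(M) = V(G) \setminus [f]^{-1} \cup \{\text{added endpoints}\}$ from $[M]$, augmenting the class containing $x$ by appending the entire wing $[a_3]^{-1}$ together with $[u]^{-1}$, and augmenting the class containing $w$ symmetrically by appending a wing set (say $[a_5]^{-1}$) together with $[v]^{-1}$. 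Connectedness of the enlarged classes follows because each set $[a_i]^{-1}$ is connected by definition of a minor and is joined to $[u]^{-1}$ (resp.\ $[v]^{-1}$) by at least one edge. The image edge $uv = f$ is then preserved in $[G\cup f]'$ because the class of $x$ contains $u$ and the class of $w$ contains $v$ and these classes are distinct (they were distinct in $[M]$ by $xw$-preservation).

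The ``same-sided'' case, when both $x$ and $w$ are endpoints of pins lying on the same side (say the $u$-side), is the main obstacle and must be handled differently since we cannot simply attach $[v]^{-1}$ to either class without breaking the disjointness of $[M]$. Here I would exploit the chained structure: by Proposition \ref{lem:winged-graph_cut_vertices} there is a pin on the $v$-side, and since every pin is an edge of $G \cup f$, there is a cleanly separated ``$v$-side piece'' — namely the subgraph containing $[v]^{-1}$ together with the $v$-side wings — that is joined to the rest of $G \cup f$ only through the clique formed by the wing-tips and $[v]^{-1}$. I would then modify $[M]$ by absorbing only the relevant wing vertices near $u$ and $[u]^{-1}$ into one of the classes (say $[x]_M$), keeping $w$'s class as it is; then attach $[v]^{-1}$ as either an augmentation of an existing class that already meets the $v$-side or, if necessary, use the chain of pin edges to route a connected piece from $w$ back through the structure to $[v]^{-1}$, justified by the fact that pin edges give direct adjacencies that can be contracted without destroying $[M]$'s $K_5$ or $K_{2,2,2}$ structure.

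In all cases the resulting $[G\cup f]'$ is still a forbidden minor since the augmentation only enlarges (and does not split) the relevant classes and cannot create new contractions among the remaining classes of $[M]$; the edge corresponding to $uv$ in $[G\cup f]'$ survives as a preserved nonedge/edge because $[u]^{-1}$ and $[v]^{-1}$ end up in distinct classes. This contradicts the hypothesis that $G \cup f$ has no $f$-preserving forbidden minor, completing the proof. I expect the ``same-sided'' case to be the genuinely hard part, requiring careful bookkeeping of how the chained pin structure lets us route one class to $[v]^{-1}$ without colliding with the vertex classes already occupied by the forbidden minor of $M$; the role of ``chained'' (pins being edges) is precisely to make these routings available as contractions rather than deletions.
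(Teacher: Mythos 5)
Your approach of directly augmenting the vertex classes of the hypothetical forbidden minor $[M]$ has a fundamental flaw that stems from a misreading of the setup. The graph $(G \setminus [f]^{-1}) \cup xw$ is obtained from $G$ by deleting \emph{edges} (the pairs in $[u]^{-1} \times [v]^{-1}$) and adding the pair $xw$; its vertex set is still all of $V(G)$. Consequently, any rooted minor $[M]$ of this graph already has a minor map whose vertex classes partition all of $V(G)$, including the wings $[a_3]^{-1}, \dots, [a_6]^{-1}$ and the wing-tip classes $[u]^{-1}, [v]^{-1}$. There is nothing to ``append'': the vertices you want to absorb into $x$'s or $w$'s class already belong to other classes of $[M]$. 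Moving them could disconnect those other classes, completely absorb a class that lies wholly inside a wing, or destroy an adjacency of $[M]$ that was witnessed by an edge with an endpoint in the moved set. You assert connectedness of the enlarged classes, but the actual obstruction is the integrity of the classes you are stealing from, and that is never addressed.

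The paper's proof instead relies on an observation you never make: the second pin endpoint $y$ matters. Because $[G \cup f]$ is \emph{chained}, the pin $xy$ is an edge of $G$, so $\{x,y\}$ is a size-two clique separator of $(G \setminus [f]^{-1}) \cup xw$. Forbidden minors ($K_5$, $K_{2,2,2}$) are $4$-connected atoms, so by the atom-decomposition lemma the hypothetical $xw$-preserving forbidden minor must be an extension of a minor of a single $\{x,y\}$-component; in particular $w$ lives in the same component as the minor. This localization confines the minor to one side of the pin and is what makes the extension to an $f$-preserving forbidden minor of $G \cup f$ tractable via the paths guaranteed by the winged graph structure. Without it, your ``two-sided'' versus ``same-sided'' split is not even well posed, because $[M]$ could straddle both sides of every pin. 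You explicitly concede the same-sided case is the hard part and offer only a sketch (``route a connected piece from $w$ back through the structure''), but that is exactly the gap the clique-separator localization closes, and it is absent from your argument.

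A minor further point: you claim the resulting augmented minor realizes the pair $[x][w]$ via the edge $f = uv$ between the two enlarged classes, which is plausible in principle, but you must also handle the case where $[xw]$ is the \emph{added} edge not present in $G$; your argument only works if the localization or the routing you invoke actually guarantees an edge of $G \cup f$ between those classes, which, again, circles back to the missing $\{x,y\}$-separator argument.
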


\begin{proof}
    Assume to the contrary that $xw$ is preserved in some forbidden minor of $(G \setminus [f]^{-1}) \cup xw$.  
    Let $xy$ be a pin of $[G \cup f]$.  
    Since $xy$ is an edge of $G$ and no forbidden minor has a separator of size two, some $\{x,y\}$-component of $(G \setminus [f]^{-1}) \cup xw$ has an $xw$-preserving forbidden minor.  
    However, combining this with the existence of $[G \cup f]$ implies that $G \cup f$ has some $xw$-preserving forbidden minor that is also $f$-preserving, which is a contradiction.  
    Thus, the lemma is proved.  
\end{proof}

\begin{proof}[Proof of Lemma \ref{lem:links_3-connected}]
    Assume that some link $L$ of $[G \cup f]$ is not $3$-connected.  
    We will show that $(G,f)$ has the $3$-SIP.  
    First, we show that $L$ is $2$-connected.  
    If this is not the case, then $L$ has some separator $\{w\}$.  
    Since $[G \cup f]$ is chained and has singleton wing tips, $\{w\}$ must be a separator of $G$.  
    Also, since $G \cup f$ is $2$-connected, there are exactly two $\{w\}$-components of $G$: the unique ones containing the endpoints $u$ and $v$ of $f$, respectively.  
    Hence, we have $w \notin \{u,v\}$.  
    Let the vertex set of $[G \cup f]$ be $\{[a_i]\}$.  
    The above facts above show that $w \notin [a_1]^{-1} \cup [a_2]^{-1}$.  
    However, the paths in $G$ given by the existence of $[G \cup f]$ show that $u$ and $v$ are contained in the same $\{w\}$-component of $G$, which is a contradiction.  
    Therefore, $L$ is $2$-connected.  

    Next, since $L$ is $2$-connected but not $3$-connected, it has some minimal separator of size two.  
    We prove that any such separator $\{x,y\}$ has the following four properties.  
    Observe that the definition of a link of $[G \cup f]$ prevents $xy$ from being a pin of $[G \cup f]$.  
    Let $wz$ be any pin of $[G \cup f]$ contained in $L$ and let $w' \notin \{w,z\}$ be a vertex in $L$ that is an endpoint of either $f$ or a pin of $[G \cup f]$.   
    
    \noindent\textbf{Property (i):} $\{x,y\}$ is a separator of $G$.  
    This follows from the fact that each pin of $[G \cup f]$ is an edge of $G$.  

    \noindent\textbf{Property (ii):} $w$, $z$, and $w'$ are all contained in the same $\{x,y\}$-component of $G$.  
    Since each pin of $[G \cup f]$ is an edge of $G$, $w$ and $z$ are contained in the same $\{x,y\}$-component $H$ of $G$.  
    Hence, if $w'$ is not contained in $H$, then $\{x,y\}$ must be a $uv$-separator of $G$.  
    Along with the fact that $[G \cup f]$ has singleton wing tips and the paths in $G$ given by the existence of $[G \cup f]$, we get that $xy$ is a pin of $[G \cup f]$, a contradiction.  
    Therefore, $w'$ is contained in $H$.  
    
    \noindent\textbf{Property (iii):} if $xy$ is a nonedge of $G$, then $(G \cup xy,f)$ has the $3$-SIP.  
    Using Property (ii) and the paths in $G$ given by the existence of $[G \cup f]$, we see that $u$ and $v$ are contained in the same $\{x,y\}$-component of $G$.  
    Hence, $\{x,y\}$ is a separator of $G \cup f$.  
    Consequently, by Lemma \ref{lem:gluing}, it suffices to show that $(H \setminus f,f)$ has the $3$-SIP for each $\{x,y\}$-component $H$ of $G \cup \{xy,f\}$ that contains $f$.  
    Since $G \cup f$ has no $f$-preserving forbidden minor and $\{x,y\}$ is one of its minimal separators, $H$ has no $f$-preserving forbidden minor.  
    Therefore, since $|V(H)| < |V(G)|$, the converse direction of Theorem \ref{thm:3-sip_characterization} shows that $(H \setminus f, f)$ has the $3$-SIP.  

    \noindent\textbf{Property (iv):} if $xy$ is an edge of $G$, then $(G,f)$ has the $3$-SIP.  
    The proof of this property is similar to the proof of Property (iii), so we omit the details.  

    Finally, we use Properties (i) - (iv) to show that $(G,f)$ has the $3$-SIP.  
    If $xy$ is an edge of $G$, then this follows from Property (iv).  
    Hence, assume that every separator of $L$ that has size two is a nonedge of $G$.  
    Let $\{x,y\}$ now be a separator of $L$ such that the $\{x,y\}$-component $H_{wz}$ of $L$ containing $w$, $z$, and $w'$, which exists by Property (ii), contains the maximum number of vertices.  
    We will show that $(G,xy)$ has the $3$-SIP and then apply Property (iii) and Lemma \ref{lem:tool} to complete the proof.  
    By Property (i) and Lemma \ref{lem:gluing}, it suffices to show that $(H \setminus xy, xy)$ has the $3$-SIP for each $\{x,y\}$-component $H$ of $L \cup xy$ that contains $xy$.  
    To do this, we show that $H$ has no $xy$-preserving forbidden minor and then apply the converse direction of Theorem \ref{thm:3-sip_characterization}.  
    
    Assume to the contrary that some $\{x,y\}$-component $H$ of $L \cup xy$ has an $xy$-preserving forbidden minor.  
    Note that $H_{wz}$ is $2$-connected since $L$ is $2$-connected.  
    Hence, if $H$ is not $H_{wz}$, then $G$ has a $wz$-preserving forbidden minor, contradicting Lemma \ref{lem:pin_pairs_contracted}.  
    Otherwise, for the same reason, $H_{wz}$ does not have a $wz$-preserving forbidden minor.  
    Proposition \ref{prop:G_is_winged_graph} shows that $(H_{wz} \setminus wz,wz)$ is a winged graph pair.  
    Using Lemma \ref{lem:G_not_k5_k222_u4v0}, we see that there exists a $wz$-winged graph minor $[H_{wz}]_1$ such that $x$ is not contained in $[x']_1^{-1} \cup [y']_1^{-1}$.  
    Additionally, Lemma \ref{lem:wing-tips} allows us to assume that $[H_{wz}]_1$ has wing tips $x'$ and $y'$.  
    Using Lemma \ref{lem:pin_pairs_contracted}, we see that $w$, $z$, and $w'$ must be contained in $[x']_1^{-1} \cup [y']_1^{-1}$.  
    Consequently, $\{x',y'\}$ is a separator of $H_{wz}$, and hence of $L$.  
    Wlog, assume that $[x'x]_1$ is an edge.  
    Since $xy$ is an edge of $H_{wz}$, $y$ is either contained in $[x']_1^{-1}$ or not contained in $[x']_1^{-1} \cup [y']_1^{-1}$.  
    In the former case, it must be that $y = x'$.  
    Therefore, in either case, all paths from any vertex in $\{w,z,w'\}$ to any vertex in $\{x,y\}$ contain some vertex in $\{x',y'\}$.  
    However, this implies that the the $\{x',y'\}$-component of $L$ that contains $w$, $z$, and $w'$ has strictly more vertices than $H_{wz}$, which contradicts our maximality assumption on $\{x,y\}$.  
    Thus, $H_{wz}$ has no $xy$-preserving forbidden minor, and so the proof is complete.  
\end{proof}

Next, the proof of Lemma \ref{lem:ux2_uy2_3-covering} requires Lemma \ref{lem:pin_pairs_partial_3-tree}, below.  

\begin{lemma}[Edge additions to links that preserve partial $3$-tree]
    \label{lem:pin_pairs_partial_3-tree}
    Consider any winged graph pair $(G,f)$ with a chained $f$-winged graph minor $[G \cup f]$, and any of its links $L$ that is $3$-connected.  
    For any two distinct vertices $x$ and $w$ of $L$ such that $x$ is an endpoint of some pin of $[G \cup f]$ and $w$ is an endpoint of either some pin of $[G \cup f]$ or $f$, $L \cup xw$ is a partial $3$-tree.  
\end{lemma}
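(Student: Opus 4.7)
The plan is to apply Lemma \ref{lem:3-flat_and_not_minimally_3-connected_implies_partial_3-tree} to the graph $L \cup xw$. That lemma requires three conditions: $L \cup xw$ should be 3-connected, 3-flattenable, and contain a 3-connected proper spanning subgraph. Three-connectivity of $L \cup xw$ is immediate, since $L$ is 3-connected by hypothesis and adding an edge never decreases vertex connectivity. When $xw$ is a nonedge of $L$, the graph $L$ itself is a 3-connected proper spanning subgraph of $L \cup xw$, so the remaining nontrivial task is to establish 3-flattenability of $L \cup xw$.

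To prove 3-flattenability I would argue by contradiction, supposing $L \cup xw$ has a $K_5$ or $K_{2,2,2}$ minor and splitting into two subcases. First, if the minor preserves $xw$, then because $L \cup xw$ is an induced subgraph of $(G \setminus [f]^{-1}) \cup xw$, the minor extends (by absorbing the remaining vertices into existing branch sets) to an $xw$-preserving forbidden minor of $(G \setminus [f]^{-1}) \cup xw$, contradicting Lemma \ref{lem:pin_pairs_contracted}. Second, if the minor contracts $xw$, then replacing the contraction of the single edge $xw$ by the contraction of a path between $x$ and $w$ inside $L$ (which exists because $L$ is 3-connected) produces the same forbidden minor inside $L$ itself.

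To finish the second subcase I would extend this forbidden minor of $L$ to an $f$-preserving forbidden minor of $G \cup f$, contradicting the winged-graph hypothesis on $(G,f)$. The idea is that the pins $x_iy_i$ of $[G \cup f]$ separate $u$ from $v$ in $G \setminus [f]^{-1}$, and any link $L$ lies strictly on one side of each pin (except the end-links, which contain exactly one of $u$, $v$). Consequently, after forming the forbidden minor of $L$, the vertices of $G \cup f$ outside $L$ can be absorbed into branch sets in such a way that $u$ and $v$ end up in distinct branch sets, yielding an $f$-preserving forbidden minor of $G \cup f$. This argument also shows that $L$ itself is 3-flattenable, which is what is needed to finish the $xw$-contracting subcase.

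The main obstacle is the degenerate case $xw \in E(L)$: here $L \cup xw = L$ and Lemma \ref{lem:3-flat_and_not_minimally_3-connected_implies_partial_3-tree} does not directly produce a 3-connected proper spanning subgraph. Since $L$ is already 3-connected and 3-flattenable, the only way $L$ can fail to be a partial 3-tree is for $L$ to be minimally 3-connected and isomorphic to $V_8$ or $C_5 \times C_2$. To rule these out, I would use the fact that $L$ must contain the pin edges $x_iy_i$ and (for non-end links) $x_{i+1}y_{i+1}$, together with the requirement that $x$, $w$ are pin or $f$ endpoints with $xw \in E(L)$. A short case analysis using the highly symmetric degree-$3$ structure of $V_8$ and $C_5 \times C_2$ then produces either a $K_5$ or $K_{2,2,2}$ minor of $L$ preserving an edge that, after extension to $G \cup f$ as above, becomes an $f$-preserving forbidden minor — contradicting the winged-graph property.
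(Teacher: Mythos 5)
Your high-level strategy matches the paper's: both proofs reduce to applying Lemma~\ref{lem:3-flat_and_not_minimally_3-connected_implies_partial_3-tree} to $L \cup xw$, observing that $3$-connectivity and the existence of a $3$-connected proper spanning subgraph are easy, and that the real content is $3$-flattenability of $L \cup xw$. Your ``Case A'' (minor preserving $xw$ contradicts Lemma~\ref{lem:pin_pairs_contracted}) is exactly the paper's first step. The divergence begins in the contraction case: instead of trying to descend the forbidden minor from $L \cup xw$ into $L$ and then re-ascend into $G \cup f$, the paper observes that Lemma~\ref{lem:pin_pairs_contracted} already yields that \emph{every} forbidden minor of $L \cup xw$ contracts $xw$, so by Proposition~\ref{prop:G_is_winged_graph} the pair $(L, xw)$ is a winged graph pair, and then Proposition~\ref{lem:winged-graph_cut_vertices} (or Lemma~\ref{lem:wing-tips}) immediately produces a size-two separator of $L$, contradicting $3$-connectivity. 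That is much shorter and avoids any branch-set surgery.

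Your Case~B has two concrete gaps. First, a forbidden minor of $L \cup xw$ that contracts $xw$ does \emph{not} automatically give a forbidden minor of $L$: the branch set $B$ containing $x$ and $w$ has $(L \cup xw)[B]$ connected, but $L[B]$ may be disconnected, and the path between $x$ and $w$ in $L$ guaranteed by $3$-connectivity will in general pass through vertices lying in other branch sets; contracting that path collapses branch sets and need not preserve the minor. Rerouting such a path while avoiding all other branch sets requires a Menger-type argument that you have not supplied. Second, the claim that you can absorb the vertices of $G \cup f$ outside $L$ into branch sets ``so that $u$ and $v$ end up in distinct branch sets'' is unsupported: when $L$ is a middle link, both $u$ and $v$ lie outside $L$, and there is no obvious reason the absorption can be chosen to keep them separated, particularly if the branch sets of the minor meeting the pin pairs $\{x_i,y_i\}$ and $\{x_{i+1},y_{i+1}\}$ coincide or overlap.

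Your treatment of the degenerate case $xw \in E(L)$ contains a factual error. You propose to show that if $L$ were $V_8$ or $C_5 \times C_2$, a case analysis would ``produce either a $K_5$ or $K_{2,2,2}$ minor of $L$.'' But $V_8$ and $C_5 \times C_2$ are themselves $3$-flattenable and are among the forbidden minors for partial $3$-trees; in particular they contain no $K_5$ or $K_{2,2,2}$ minor, so that case analysis cannot succeed. (You did correctly notice that the paper's written proof assumes $xw \notin E(L)$ — the line ``$L$ is one of its $3$-connected spanning proper subgraphs'' fails when $xw$ is an edge — so your instinct that the degenerate case needs attention is sound, but the resolution you sketch does not work.)
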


\begin{proof}
    Note that $L \cup xw$ is $3$-connected and $L$ is one of its $3$-connected spanning proper subgraphs.  
    If we can show that $L \cup xw$ is $3$-flattenable, then the lemma follows from Lemma \ref{lem:3-flat_and_not_minimally_3-connected_implies_partial_3-tree}.  
    Consequently, assume that $L \cup xw$ is not $3$-flattenable, and hence has a forbidden minor.  
    By Lemma \ref{lem:pin_pairs_contracted}, $xw$ is contracted in every such minor.  
    Hence, Proposition \ref{prop:G_is_winged_graph} shows that $(L,xw)$ is a winged graph pair.  
    Therefore, either Proposition \ref{lem:winged-graph_cut_vertices}, if some $f$-winged graph minor of $L \cup xw$ has singleton wing tips, or Lemma \ref{lem:wing-tips}, otherwise, shows that $L$ has a separator of size two, a contradiction.  
    Thus, $L \cup xw$ has no forbidden minor, and so the proof is complete.
\end{proof}

\begin{proof}[Proof of Lemma \ref{lem:ux2_uy2_3-covering}]
    Wlog, we will prove that $(G,ux_2)$ has the $3$-covering map property.  
    Since $\{x_2,y_2\}$ is a clique separator of $G$, it suffices to show that $(L = L_0 \cup L_1,ux_2)$ has this property.  
    Consider any squared edge-length map $\ell$ and the set $\{X_i\}$ of connected components of the CS $\mathcal{C}^3(L,\ell)$.  
    Given a set of nonedges $F$ of $L$, let $\phi_F$ be the Cayley map from $\mathcal{C}^3(L,\ell)$ to the CCS $\Omega^3_F(L,\ell)$.  
    We must show that $\phi_{ux_2}(X_i) = \phi_{ux_2}(X_j)$ for any $i$ and $j$.  
    To do this, we show that $\phi_{F_0 \cup F_1}(X_i) = \phi_{F_0 \cup F_1}(X_j)$, where $F_0$ is the set of nonedges in $\{ux_1,uy_1\}$ and $F_1$ is the set of nonedges in $\{x_1x_2,y_1x_2\}$.  
    Then, for each point $p \in \phi_{F_0 \cup F_1}(X_i)$, the tetrahedral-inequality applied to the squared lengths $p$ and $\ell(x_1y_1)$ solely determines the interval $\phi_{ux_2}(\phi^{-1}_{F_0 \cup F_1}(p))$, because $\{x_1,y_1\}$ is a clique separator of $L$.  
    Combining the above facts proves the lemma.  

    We complete the proof by showing that $\phi_{F_0 \cup F_1}(X_i) = \phi_{F_0 \cup F_1}(X_j)$.  
    Let $\ell_0$ and $\ell_1$ be the restrictions of $\ell$ to $E(L_0)$ and $E(L_1)$, respectively.  
    Also, let $\{Y_i\}$ be the collection containing each connected component $A$ of $\mathcal{C}^3(L_0,\ell_0)$ and each set $A^r$ obtained from $A$ be performing an overall reflection on each of its realizations.  
    Define the collection $\{Z_i\}$ using $\mathcal{C}^3(L_1,\ell_1)$ analogously.  
    Since $\{x_1, y_1\}$ is a clique separator of $L$, there exist components $Y_j$ and $Z_k$ such that each realization in $X_i$ can be obtained from some pair of realizations $(p \in Y_j, q \in Z_k)$ via rigid-body motions.  
    Furthermore, since $Y_j$ and $Z_k$ are connected, it is easy to see that every pair of realizations $(p \in Y_j, q \in Z_k)$ can be combined via rigid-body motions to obtain a realization in $X_i$.  
    Therefore, it suffices to show that $(L_0, F_0)$ and $(L_1, F_1)$ have the $3$-covering map property.  
    Since $L_0$ is $3$-connected, Lemma \ref{lem:pin_pairs_partial_3-tree} shows that $L_0 \cup ux_1$ and $L_0 \cup uy_1$ are partial $3$-trees.  
    Thus, by Theorems \ref{prop:3-connected_partial_3-tree_star_lemma} and \ref{prop:partial_3-tree_3-reflection}, $(L_0, F_0)$ has the desired property.  
    An identical argument shows that $(L_1, F_1)$ has the $3$-covering map property.  
    This completes the proof.  
\end{proof}

\section{Proofs for the forward direction}
\label{app:forward}

\subsection{Proof of Proposition \ref{lem:no_type_2_no_3-sip}}
\label{sec:prop_6}

We present the proof of Lemma \ref{lem:no_type_2_neighborhoods} followed by the proof of Proposition \ref{lem:no_type_2_no_3-sip}.  

\begin{proof}[Proof of Lemma \ref{lem:no_type_2_neighborhoods}]
    Assume that $G$ has no Type (2) edge.  
    For Statement (i), consider any edge $e$ of $G$ that does not share an endpoint with $f$.  
    Since $f$ is clearly retained in the minor of $G \cup f$ obtained by contracting $e$, $e$ is not of Type (3).  
    Hence, $e$ must be of Type (1).  
    Since $[G \cup f]$ is $f$-preserving, this implies it is also $e$-preserving.  
    
    For Statement (ii), consider any vertex $[w] \in V(H)$.  
    If $|[w]^{-1}| > 1$, then there exists an edge of $G$ that does not share an endpoint with $f$ and that is contracted in $[G \cup f]$, contradicting Statement (i).  
    Hence, we have $|[w]^{-1}| = 1$.  

    For Statement (iii), if $([u]^{-1} \cup [v]^{-1}) \setminus \{u,v\}$ is empty, then we are done.  
    Otherwise, wlog let $x \in [u]^{-1}$ be any vertex in this set.  
    Then some edge of $G$ incident on $x$ is contracted in $[G \cup f]$.  
    Statement (i) tells us that $u$ is the only neighbor of $x$ in $[x]^{-1} \setminus \{u,v\}$.  
    Furthermore, $xu$ is clearly not of Type (1), and so it must be of Type (3).  
    Hence, $xv$ is an edge of $G \cup f$.  
    
    Next, assume that $x$ does not neighbor any vertex in $H^{-1}$.  
    Since $G \cup f$ is an atom, by assumption, $x$ neighbors some vertex $y$ other than $u$ or $v$.  
    By assumption and the discussion above, we have $y \in [v]^{-1} \setminus \{v\}$.  
    Let $J$ be the subgraph of $G \cup f$ induced by $[u]^{-1}$, which we have shown to be a star rooted at $u$.  
    The minor obtained via the vertex exchange from $[u]^{-1}$ to $[v]^{-1}$ that exchanges $x$ and fixes $J \setminus x$ is an $f$-preserving forbidden minor in which $xy$ is contracted.  
    However, since neither $x$ nor $y$ is an endpoint of $f$, this contradicts Statement (i), and so $x$ must neighbor some vertex $w$ in $H^{-1}$.  
    If $w$ is the only vertex in $H^{-1}$ that $x$ neighbors, then $xw$ is contracted in the minor obtained via the vertex exchange from $[u]^{-1}$ to $[w]^{-1}$ that exchanges $x$ and fixes $J \setminus x$, and this minor has an $f$-preserving forbidden minor.  
    Therefore, we arrive at a similar contradiction, and so $x$ neighbors at least two vertices in $H^{-1}$.  

    Finally, for Statement (iv), assume wlog that $|[u]^{-1}| > 2$.  
    Consider any two distinct vertices $x,y \in ([u]^{-1} \setminus \{u\})$ and observe that $[u]$ has exactly three neighbors in $H$, say $[w_1]$, $[w_2]$, and $[w_3]$.  
    By Statements (ii) and (iii), we can assume wlog that $x$ neighbors both $w_1$ and $w_2$.  
    Furthermore, $x$ and $y$ share at least one neighbor in $H^{-1}$, say $w_1$.  
    Wlog, we can assume that some vertex in $[u]^{-1} \setminus x$ neighbors $w_3$.  
    Therefore, we arrive at a similar contradiction by considering the minor obtained via the vertex exchange from $[u]^{-1}$ to $[w_2]^{-1}$ that exchanges $x$ and fixes $J \setminus x$.  
    Thus, the lemma is proved.  
\end{proof}

\begin{proof}[Proof of Proposition \ref{lem:no_type_2_no_3-sip}]
    By Lemma \ref{lem:no_type_2_pairs_no_3-sip}, every graph-nonedge pair in Figure \ref{fig:no_type_2} has a proper non-$3$-SIP map.  
    It remains to show that these are the only minimal graph-nonedge pairs with no Type (2) edge.  
    Consider any graph-nonedge pair $(G,f)$, where $f=uv$.  
    If $(G,f)$ is a pair in Figure \ref{fig:no_type_2}, then it is easy to verify that $(G,f)$ is minimal and has no Type (2) edge.  
    Next, we show that if $(G,f)$ is minimal and has no Type (2) edge, then it is one of the pairs in Figure \ref{fig:no_type_2}.  
    Since $(G,f)$ is minimal, $G \cup f$ is an atom and has an $f$-preserving forbidden minor $[G \cup f]$.  
    Furthermore, Statements (i)-(iv) of Lemma \ref{lem:no_type_2_neighborhoods} are true and show that $5 \leq |V(G)| \leq 8$.  
    Let $H = [G \cup f] \setminus \{[u],[v]\}$ and consider the subgraph $H^{-1}$ of $G \cup f$.  
    If $|[u]^{-1}|=|[v]^{-1}|=1$, then Statement (ii) shows that $(G,f)$ is the pair in either Figure \ref{fig:k5_f} or \ref{fig:k222_f}.  
    Hence, assume wlog that $|[u]^{-1}| > 1$.  
    By Statement (iv), we have $[u]^{-1} = \{u,x\}$.  
    There are three cases.  

    \smallskip
    \noindent\textbf{Case 1:} $u$ has no neighbor in $H^{-1}$.
    \smallskip

    If $|[v]^{-1}| = 1$, then Statement (iii) shows that $\{x,v\}$ is a clique separator of $G \cup f$, contradicting the fact that $G \cup f$ is an atom.  
    Otherwise, Statement (iv) shows that $[v]^{-1} = \{v,y\}$, and so all paths in $G \cup f$ from $u$ to any vertex in $H^{-1}$ contains some vertex in the set $\{v,x,y\}$.  
    Furthermore, Statement (iii) shows that $xy$ is a nonedge and $vx$ and $vy$ are edges of $G \cup f$.  
    Hence, $vy$ is clearly of Type (2), contradicting our assumption.  

    \smallskip
    \noindent\textbf{Case 2:} $u$ has more than one neighbor in $H^{-1}$.
    \smallskip

    Let $w$ be the vertex in $H^{-1}$ that neighbors $x$ but not $u$ if such a vertex exists, or any vertex in $H^{-1}$ that neighbors $x$ otherwise.  
    Also, let $J$ be the subgraph of $G \cup f$ induced by $[u]^{-1}$.  
    By Statement (iii), $J \setminus x$ is connected and contains $u$.  
    Hence, $xw$ is contracted in the minor obtained via the vertex exchange from $[u]^{-1}$ to $[w]^{-1}$ that exchanges $x$ and fixes $J \setminus x$, and this minor has an $f$-preserving forbidden minor, contradicting Statement (i).  

    \smallskip
    \noindent\textbf{Case 3:} $u$ has exactly one neighbor in $H^{-1}$.
    \smallskip

    Let $w$ be the vertex in $H^{-1}$ that neighbors $u$.  
    Note that $x$ neighbors each vertex in $H^{-1} \setminus w$.  
    If $vw$ is not an edge of $G \cup f$, then $f$ is retained in the minor obtained by contracting $uw$, and this minor clearly has an $f$-preserving forbidden minor.  
    Hence, $uw$ is not of Type (1) or (3).  
    However, since $uw$ is not of Type (2) by assumption, it must be reducing.  
    This contradicts the fact that $(G,f)$ is minimal, and so $vw$ must be an edge of $G \cup f$.  
    
    Next, consider the case where $|[v]^{-1}|=1$.  
    If $xw$ is a nonedge of $G \cup f$, then $(G,f)$ is the pair in either Figure \ref{fig:k5_f_expanded} or \ref{fig:k222_f_expanded}.  
    Otherwise, $\{x,v,w\}$ is a clique separator of $G \cup f$, contradicting the fact that $G \cup f$ is an atom.  
    Lastly, consider the case where $|[v]^{-1}| > 1$.  
    The discussion above shows that $w$ is the only neighbor of $v$ in $H^{-1}$.  
    Let $w'$ be any vertex in $H^{-1} \setminus w$ that neighbors $x$.  
    Also, let $J$ be defined as in Case 2.  
    Then, $xw'$ is contracted in the minor obtained via the vertex exchange from $[u]^{-1}$ to $[w']^{-1}$ that exchanges $x$ and fixes $J \setminus x$, and this minor has an $f$-preserving forbidden minor, contradicting Statement (i).      

    \smallskip
    The cases above are exhaustive, and so the lemma is proved.  
\end{proof}

\subsection{Proofs of lemmas for Proposition \ref{prop:IH}}
\label{sec:prop-8}

We present proofs of the lemmas in Section \ref{sec:prop-8-9} up to Lemma \ref{lem:G'_is_J+decorations}, the last one before the proof of Proposition \ref{prop:IH}.  
We start with the proof of Lemma \ref{lem:expanded_clique_graphs_and_connections}.  

\begin{proof}[Proof of Lemma \ref{lem:expanded_clique_graphs_and_connections}]
    For Statement (1), since $(G \cup f) \setminus C^{-1}$ is disconnected and $G \cup f$ is an atom, $C^{-1}$ is not a clique.  
    Hence, Figure \ref{fig:expanded_e-separating_cliques} shows all cases for $C^{-1}$ if $|V(C^{-1})| \leq 5$.  
    We complete the proof by showing that $|V(C^{-1})| \leq 5$.  
    Assume to the contrary that this is not the case.  
    Consider any $f$-separating minor $[G \cup f]_1$ such that $(C,M)$ is one of its $f$-separating pairs.  
    Then, $C$ is a clique on at least five vertices.  
    Consequently, since any atom $J$ of $[G \cup f]_1$ that contains $[f]_1$ has no $[f]_1$-preserving forbidden minor, by definition, $J$ cannot contain $C$.  
    This implies that there exists a clique minimal $[ux]_1$-separator $E$ of $[G \cup f]_1$, for some endpoint $[u]_1$ of $[f]_1$ and any vertex $[x]_1$ in $C \setminus E$, such that $C \setminus E$ is non-empty.  
    Clearly, $(E^{-1},M^{-1})$ is an expanded $f$-separating pair of $G \cup f$ whose expanded $f$-component is a proper subgraph of $H^{-1}$.  
    However, this shows that $C^{-1}$ is not in the top-level of $(G,f)$, which is a contradiction.  
    Therefore, we have $|V(C^{-1})| \leq 5$, and so Statement (1) is proved.  

    Next, for Statement (2), observe that $C^{-1}$ is retained in $[H^{-1} \cup I^{-1}]$, and $[H^{-1}]$ and $[I^{-1}]$ are vertices of $[H^{-1} \cup I^{-1}]$.  
    Hence, Statement (1) shows that $[C^{-1}]$ is one the graphs in Figure \ref{fig:expanded_e-separating_cliques}.  
    Consequently, $[H^{-1} \cup I^{-1}]$ is some graph $G'$ that can be obtained from one of the graphs in Figure \ref{fig:expanded_e-separating_cliques_conn} by deleting some edges, each of which has one endpoint in $\{a,b\}$ and endpoint in $\{w_i\}$.  
    Furthermore, $[C^{-1}]$ is the subgraph of $G'$ induced by the set $\{w_i\}$.  
    If $G'$ can be obtained without deleting any edges, then we are done.  
    Otherwise, assume wlog that the above-mentioned minor $[G \cup f]_1$ is obtained by contracting $w_1w_2$.  
    It is easy to see that either $G \cup f$ is not an atom or, for any endpoint $[u]_1$ of $[f]_1$ and any vertex $[x]_1$ of $M \setminus C$, $C$ is not a minimal $[ux]_1$-separator of $[G \cup f]_1$.  
    In either case, we get a contradiction, and thus Statement (2) is proved.  
\end{proof}

Lemma \ref{cor:top-level_structure} is proved using Lemmas \ref{lem:I'_fbm_after_f-component_contraction} and \ref{lem:type_2_yields_smaller_expanded_f-component} in Section \ref{sec:prop-8-9}.  
The proof of Lemma \ref{lem:I'_fbm_after_f-component_contraction} requires Lemma \ref{lem:pair_e'_exists}, below.  

\begin{lemma}[Vertex-disjoint paths after contraction in $(H^{-1} \setminus f) \setminus C^{-1}$]
    \label{lem:pair_e'_exists}
    Let $C^{-1}$ be any expanded $f$-separating CMS of a minimal graph-nonedge pair $(G,f)$, with $f = uv$, and let its expanded $f$-component and expanded minor component be $H^{-1}$ and $I^{-1}$, respectively.  
    Also, consider the minor $[G \cup f]$ obtained by contracting any edge of $(H^{-1} \setminus f) \setminus C^{-1}$ and any $f$-preserving forbidden minor $[I^{-1}]_1$.  
    There exists a pair $xy$ of distinct vertices in $C^{-1}$ that is preserved in $[I^{-1}]_1$ and some paths $P_{ux}$ and $P_{vy}$ in $G \cup f$ between $u$ and $x$ and between $v$ and $y$, respectively, that each contain exactly one vertex in $I^{-1}$ and such that $[P_{ux}]$ and $[P_{vy}]$ are vertex-disjoint.  
\end{lemma}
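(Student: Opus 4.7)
The plan is to exploit the $f$-preserving structure of $[I^{-1}]_1$ to produce vertex-disjoint connected subgraphs of $G\cup f$ containing $u$ and $v$, to carve out the desired paths from them, and then to reconcile the construction with the single contraction defining $[G\cup f]$.

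First I would observe that the contracted edge, call it $ab \in (H^{-1}\setminus f)\setminus C^{-1}$, lies entirely in $H^{-1}\setminus C^{-1}$, so the contraction touches neither $I^{-1}$ nor $C^{-1}$; in particular $[I^{-1}]=I^{-1}$ as graphs and $[I^{-1}]_1$ is still a forbidden minor of $[I^{-1}]$. Since $[I^{-1}]_1$ is $f$-preserving, there is an extension $[G\cup f]_2$ of $[I^{-1}]_1$, viewed as a minor of $G\cup f$, in which $f=uv$ is preserved; write $U$ and $V$ for the classes of $[G\cup f]_2$ that contain $u$ and $v$. These are disjoint connected subgraphs of $G\cup f$, and the definition of an extension forces $U\cap V(I^{-1})$ and $V\cap V(I^{-1})$ to equal the classes $[u]_1$ and $[v]_1$ of $[I^{-1}]_1$, which are distinct because $f$ is preserved.

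Next I would extract the pair $xy$ and the paths from $U,V$. Define $x$ to be the first vertex of $V(I^{-1})$ encountered along a path in $U$ from $u$ (setting $x=u$ if $u\in V(C^{-1})$), and $y$ analogously from $V$. Because $V(C^{-1})$ separates $V(H^{-1})\setminus V(C^{-1})$ from $V(I^{-1})\setminus V(C^{-1})$ in $G\cup f$, any such first entry into $V(I^{-1})$ must lie in $V(C^{-1})$, so $x,y\in V(C^{-1})$. Disjointness of $U$ and $V$ gives $x\neq y$, and the extension definition places $x$ in the $[u]_1$ class and $y$ in the $[v]_1$ class, so $xy$ is preserved in $[I^{-1}]_1$. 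Taking $P_{ux}$ to be the initial segment of the chosen path in $U$ up to $x$, and $P_{vy}$ analogously in $V$, yields paths that each meet $V(I^{-1})$ only at their endpoint in $V(C^{-1})$, and that are vertex-disjoint in $G\cup f$ since $U\cap V=\emptyset$.

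The delicate point, which I expect to be the main technical obstacle, is verifying that $[P_{ux}]$ and $[P_{vy}]$ remain vertex-disjoint in $[G\cup f]$. The only identification introduced by the contraction is $[a]=[b]$, so the only possible obstruction is that $a$ lies on one path while $b$ lies on the other; equivalently, $\{a,b\}$ is split between $U$ and $V$. When $\{a,b\}$ is contained in a single class among $\{U,V\}$, or avoids both, the images are automatically vertex-disjoint. To handle the split case, assume without loss of generality $a\in U$ and $b\in V$ and attempt to reallocate $b$ into $U$: the subgraph $U'=U\cup\{b\}$ is connected because $ab$ is an edge, and if $V'=V\setminus\{b\}$ remains connected we obtain a new $f$-preserving extension with $\{a,b\}\subseteq U'$. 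If $b$ is a cut vertex of $V$, one tries the symmetric move of $a$ into $V$. To force at least one of these moves to succeed, I would choose the initial extension $[G\cup f]_2$ to minimize $|U|+|V|$ among all $f$-preserving extensions of $[I^{-1}]_1$; a simultaneous cut-vertex obstruction at both $a$ and $b$ would allow replacing a detour through one of the cut branches by the single edge $ab$, contradicting the minimal choice. Once the split case has been eliminated, the paths $P_{ux}$ and $P_{vy}$ constructed above have all of the required properties.
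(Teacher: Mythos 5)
Your setup matches the paper's: both proofs begin by taking an $f$-preserving extension $[G\cup f]_2$ of $[I^{-1}]_1$, reading off the pair $x,y\in C^{-1}$ from the classes of $u$ and $v$, and observing that the conflict in $[G\cup f]$ can only arise when the contracted edge $ab$ is split between the two classes. The divergence is in how the split case is handled. The paper does \emph{not} attempt to modify the extension: it uses the fact that $G\cup f$ is an atom (so $\{a,b\}$, a clique, cannot separate) to produce a third path $P_{vz}$ from $v$ into $C^{-1}$ avoiding $a$, $b$, and $u$, and then runs a lengthy case analysis that splices pieces of $P_{ux'}$, $P_{vy'}$, $P_{vz}$ into two new paths with the required disjointness. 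You instead try to re-choose the extension so that $a$ and $b$ land in the same class.

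The gap is in your minimality argument. You claim that minimizing $|U|+|V|$ forces one of $a,b$ not to be a cut vertex of its class, but this is not justified and does not hold: if a component $K_a$ of $U\setminus a$ not containing $u$ has all of its outside neighbors in $V$, moving $K_a$ into $V$ leaves $|U|+|V|$ unchanged, so no contradiction is obtained. The phrase ``replacing a detour through one of the cut branches by the single edge $ab$'' is not an argument, since $a$ and $b$ lie in \emph{different} classes and the edge $ab$ cannot reroute anything internal to $U$ or $V$. More seriously, the reallocation may be genuinely impossible: $U\cap V(I^{-1})$ is a fixed class $A$ of $[I^{-1}]_1$ that must remain inside $U$, and if $a$ separates $u$ from $A$ within $U$ (which is exactly the case in which $a$ is forced onto $P_{ux}$), then the component of $U\setminus a$ containing $A$ cannot be moved out, $U\setminus a$ is disconnected, and $b$ cannot be moved into $U$ without already knowing $V\setminus b$ is connected. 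If the symmetric configuration also holds for $b,v,V\cap V(I^{-1})$, your local moves are all blocked, and nothing in the $|U|+|V|$ minimization rules this out. You would need an additional ingredient — most naturally the atom's clique-non-separation giving a third path, which is precisely what the paper brings in — and at that point you would be reconstructing the paper's case analysis rather than avoiding it.
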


\begin{proof}
    Since $f$ is preserved in $[I^{-1}]_1$ and $(G \cup f) \setminus C^{-1}$ is disconnected, there exists a pair of distinct vertices $x'y'$ in $C^{-1}$ and an extension $[G \cup f]_2$ of $[I^{-1}]_1$ such that $[f]_2 = [x'y']_2$.  
    This implies that $x'y'$ is preserved in $[I^{-1}]_1$ and some vertex-disjoint paths $P_{ux'}$ and $P_{vy'}$ in $G \cup f$ between $u$ and $x'$ and between $v$ and $y'$, respectively, each contain exactly one vertex in $I^{-1}$.  
    If $[P_{ux'}]$ and $[P_{vy'}]$ are vertex-disjoint, then setting $xy = x'y'$ proves the lemma.  
    
    Otherwise, letting $ab$ be the edge of $G \cup f$ that is contracted in $[G \cup f]$, it must be wlog that $P_{ux'}$ contains $a$ and $P_{vy'}$ contains $b$.  
    Since $G \cup f$ is an atom, there exists a path $P_{vz}$ in $G \cup f$ between $v$ and some vertex $z$ in $C^{-1}$ that does not contain either $a$ or $b$.  
    Wlog, we can choose $P_{vz}$ such that it contains exactly one vertex in $C^{-1}$ and does not contain $u$.  
    There are two cases, each with several subcases.  
    In each case, it will be clear that our choices for $P_{ux}$ and $P_{vy}$ each contain exactly one vertex in $I^{-1}$.  
    We will show that these paths are vertex-disjoint and at least one of them does not contain $a$ or $b$.  
    These facts imply that $[P_{ux}]$ and $[P_{vy}]$ are vertex-disjoint, which completes the proof.  

    \smallskip
    \noindent\textbf{Case 1:} $[z]_1=[x']_1$.  
    \smallskip

    There are several subcases.  
    Given a path $R$ containing two vertices $s$ and $t$, $R(st)$ denotes the path in $R$ between $s$ and $t$.  
    Additionally, we treat a path as a sequence of vertices and write the union of two paths $R_1$ and $R_2$ as $R_1R_2$.  

    \smallskip
    \noindent\textbf{Subcase 1:} $P_{vz}$ does not contain any vertex in either $P_{ux'}(ua)$ or $P_{vy'}(by')$.  
    \smallskip

    Setting $x = y'$, $y = z$, $P_{ux}=P_{ux'}(ua)P_{vy'}(by')$, and $P_{vy}=P_{vz}$ proves the lemma.  

    \smallskip
    \noindent\textbf{Subcase 2:} $P_{vz}$ contains some vertex in $P_{ux'}(ua)$, and the last of its vertices $w$ contained in either $P_{ux'}(ua)$ or $P_{vy'}(vb)$ is contained in $P_{ux'}(ua)$.  
    \smallskip
    
    Since $P_{vz}$ does not contain either $a$ or $b$, by assumption, we have $w \neq a$.  
    If $P_{vz}(wz)$ does not contain any vertex in $P_{vy'}(by')$, then set $x = z$, $y = y'$, $P_{ux}=P_{ux'}(uw)P_{vz}(wz)$, and $P_{vy} = P_{vy'}$.  
    By the properties of $w$ and $P_{vz}$, $P_{ux}$ does not contain either $a$ or $b$.  
    Also, our assumptions imply that neither $P_{ux'}(uw)$ nor $P_{vz}(wz)$ contains any vertex in $P_{vy'}$, and hence $P_{ux}$ and $P_{vy}$ are vertex-disjoint.  

    Next, assume that $P_{vz}(wz)$ contains some vertex in $P_{vy'}(by')$, and let $s$ be the first of its vertices in either $P_{ux'}(ax')$ or $P_{vy'}(by')$.  
    If $s$ is contained in $P_{ux'}(ax')$, then set $x = x'$, $y = y'$, $P_{ux}=P_{ux'}(uw)P_{vz}(ws)P_{ux'}(sx')$, and $P_{vy}=P_{vy'}$.  
    Otherwise, set $x = y'$, $y = x'$, $P_{ux}=P_{ux'}(uw)P_{vz}(ws)P_{vy'}(sy')$, and $P_{vy}=P_{vy'}(vb)P_{ux'}(ax')$.  
    In either case, a similar argument applies.  

    \smallskip
    \noindent\textbf{Subcase 3:} $P_{vz}$ contains some vertex in $P_{ux'}(ua)$, and the last of its vertices $w$ contained in either contained in either $P_{ux'}(ua)$ or $P_{vy'}(vb)$ is contained in $P_{vy'}(vb)$.  
    \smallskip

    This case is similar to Subcase 2, so we omit the details.  

    \smallskip
    \noindent\textbf{Subcase 4:} $P_{vz}$ contains some vertex in $P_{vy'}(by')$ but no vertex in $P_{ux'}(ua)$, and the first of its vertices $s$ contained in either $P_{ux'}(ax')$ or $P_{vy'}(by')$ is contained in $P_{ux'}(ax')$.  
    \smallskip

    In this case, set $x = y'$, $y = x'$, $P_{ux}=P_{ux'}(ua)P_{vy'}(by')$, and $P_{vy}=P_{vz}(vs)P_{ux'}(sx')$.  
    An argument similar to those in Subcase 2 applies.  

    \smallskip
    \noindent\textbf{Subcase 5:} $P_{vz}$ contains some vertex in $P_{vy'}(by')$ but no vertex in $P_{ux'}(ua)$, and the first of its vertices $s$ contained in either $P_{ux'}(ax')$ or $P_{vy'}(by')$ is contained in $P_{vy'}(by')$.  
    \smallskip

    In this case, set $x = x'$, $y = y'$, $P_{ux}=P_{ux'}$, and $P_{vy}=P_{vz}(vs)P_{uy'}(sy')$.  
    An argument similar to those in Subcase 2 applies.  

    \smallskip
    \noindent\textbf{Case 2:} $[z]_1 \neq [x']_1$.  
    \smallskip

    This case is similar to Case 1, so we omit the details.  

    \smallskip
    The cases above are exhaustive, and so the lemma is proved.  
\end{proof}

\begin{proof}[Proof of Lemma \ref{lem:I'_fbm_after_f-component_contraction}]
    By definition, $I^{-1}$ has an $f$-preserving forbidden minor $[I^{-1}]_1$.  
    Hence, since $I^{-1}$ is preserved in $[G \cup f]$, it suffices to show that the edge contracted in $[G \cup f]$ is contracted in some extension of $[I^{-1}]_1$. 
    This follows easily from Lemma \ref{lem:pair_e'_exists}.  
\end{proof}

The proof of Lemma \ref{lem:type_2_yields_smaller_expanded_f-component} requires Lemma \ref{lem:G'_connected}, below.

\begin{lemma}[Weakly retained subgraphs that are contained in atoms]
\label{lem:G'_connected}
    Let $C^{-1}$ be an expanded $f$-separating CMS of a minimal graph-nonedge pair $(G,f)$, and $J^{-1}$ be either its expanded $f$-component or expanded minor component.  
    Also, let $G' = (G \cup f) \setminus (J^{-1} \setminus C^{-1})$ and $[G \cup f]$ be the minor obtained by contracting any edge of $G \cup f$.  
    If $G'$ is weakly retained in $[G \cup f]$, then $[G'] \setminus E$ is connected for any CMS $E$ of $[G \cup f]$ contained in $[J^{-1}]$.  
\end{lemma}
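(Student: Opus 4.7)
The plan is a proof by contradiction leveraging the atomicity of $G \cup f$ guaranteed by minimality of $(G,f)$. Let $xy$ be the edge contracted to produce $[G \cup f]$, and set $E_1 = E \cap V([C^{-1}])$. Since $G'$ is weakly retained in $[G \cup f]$, the natural vertex identification makes $[G']$ isomorphic to $G'$, and $E_1$ corresponds to a clique $E_1' \subseteq V(C^{-1})$ of $G \cup f$ (edges inside $C^{-1}$ are preserved, so subsets of the clique $E$ lying in $V([C^{-1}])$ lift to cliques in $C^{-1}$). Thus the conclusion reduces to the claim that $G' \setminus E_1'$ is connected, which I assume fails for contradiction.

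First I would rule out the case $[xy] \notin E$. In that case no vertex of $E$ is merged in $[G\cup f]$, so the preimage $E^* \subseteq V(G\cup f)$ is the same indexed clique as $E$, and the components of $(G\cup f) \setminus E^*$ are in bijection with those of $[G\cup f] \setminus E$ (the contracted edge lies inside a single component). Hence $E^*$ is a clique separator of $G \cup f$, contradicting atomicity. Therefore $[xy] \in E$, and since $E \subseteq V([J^{-1}])$ this forces both $x, y \in V(J^{-1})$; by weak retention at least one of $x,y$ lies in $V(J^{-1}) \setminus V(C^{-1})$.

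Next, because $G \cup f$ is an atom and $E_1'$ is a clique, $E_1'$ does not separate $G \cup f$. So there is a path $P$ in $(G \cup f)\setminus E_1'$ joining two distinct components of $G' \setminus E_1'$. Any such $P$ must traverse $V(J^{-1}) \setminus V(C^{-1})$, entering $J^{-1}$ at some $a \in V(C^{-1}) \setminus V(E_1')$ and exiting at some $c \in V(C^{-1}) \setminus V(E_1')$ lying in a different component of $G' \setminus E_1'$. Projecting the $J^{-1}$-segment of $P$ into $[G\cup f]$ yields a walk in $[J^{-1}]$ from $[a]$ to $[c]$ that automatically avoids $E_1$.

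The main obstacle, and the technical heart of the proof, is ensuring that this projected walk avoids $E$ entirely, that is, avoids both $E_2 := E \setminus V([C^{-1}])$ and the merged vertex $[xy]$. If avoidance holds then $[a]$ and $[c]$ lie in the same component of $[G\cup f]\setminus E$, providing a connection in $[G\cup f]\setminus E$ between two components of $[G']\setminus E$ which, chased back through atomicity, produces a clique separator of $G \cup f$ and a contradiction. To force avoidance, I plan to choose $P$ minimally and exploit the minimality of $E$ as a separator in $[G\cup f]$: every vertex of $E$, including $[xy]$, has neighbors in every component of $[G\cup f]\setminus E$, which yields local bypasses near $\{x,y\}$ inside $J^{-1}$. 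The hardest step is when such a bypass cannot be found: there I would partition $(E \setminus \{[xy]\})^{-1}$ by adjacency to $x$ versus $y$ in $G\cup f$ and argue that one of $E^* \setminus \{x\}$ or $E^* \setminus \{y\}$, where $E^* = (E \setminus \{[xy]\})^{-1} \cup \{x,y\}$, must be both a clique and a separator of $G\cup f$, again contradicting atomicity and completing the proof.
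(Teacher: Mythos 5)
Your early reductions match the paper: both arguments observe that weak retention of $G'$ makes $[G']$ a copy of $G'$, that $E$ lives inside $[J^{-1}]$ so $E \cap [G']$ is a clique inside $[C^{-1}]$, and that the claim reduces to showing $G'$ minus the corresponding clique $A \subseteq C^{-1}$ is connected. Your case $[xy]\notin E$ (where $E$ pulls back to a clique separator of the atom $G\cup f$) is also correct, and in fact this observation is essentially equivalent to the paper's remark that the contracted endpoints must lie in $J^{-1}$ with at least one outside $C^{-1}$, forcing $E^{-1}\subseteq J^{-1}$.

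However, after this point the proposal diverges from the paper and develops genuine gaps. Two problems stand out.

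First, the claimed contradiction does not materialize. You write that, once the projected walk from $[a]$ to $[c]$ avoids $E$, having two components of $[G']\setminus E$ linked inside $[G\cup f]\setminus E$ ``chased back through atomicity, produces a clique separator of $G\cup f$.'' But there is nothing contradictory about two components of a subgraph $[G']\setminus E$ lying in a common component of the ambient graph $[G\cup f]\setminus E$; that $E$ is a CMS of $[G\cup f]$ only says $[G\cup f]\setminus E$ has at least two components, and both $[a]$ and $[c]$ could legitimately sit in one of them while some piece of $[J^{-1}]$ forms another. No clique separator of $G\cup f$ is produced, so the contradiction engine never engages.

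Second, the ``technical heart'' of the proposal --- arranging that the projected walk avoids $E_2=E\setminus V([C^{-1}])$ and $[xy]$ --- is left as a plan rather than an argument, and the fallback you sketch is also unsound: you claim that when no local bypass exists, one of $E^*\setminus\{x\}$ or $E^*\setminus\{y\}$ must be a clique, but $(E\setminus\{[xy]\})^{-1}\cup\{y\}$ is a clique only if $y$ is adjacent to \emph{every} vertex of $(E\setminus\{[xy]\})^{-1}$, and the minor structure only guarantees that each such vertex is adjacent to $x$ \emph{or} $y$.

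The paper avoids all of this by arguing directly rather than by contradiction. After the reduction, it notes that $G'$ contains the other expanded component $K^{-1}$; Statement (2) of Lemma \ref{lem:expanded_clique_graphs_and_connections} plus the fact that $A$ is a clique gives that $K^{-1}\setminus A$ is connected; $C^{-1}\setminus A$ is non-empty because $C^{-1}$ is not a clique; and atomicity of $G\cup f$ supplies, for each remaining vertex $x$ of $G'\setminus K^{-1}$, a path inside $G'$ avoiding $A$ from $x$ to $C^{-1}\setminus A$ (otherwise $A$ would be a clique separator). This assembles $G'\setminus A$ as connected without ever needing to project paths through $[J^{-1}]$ or control their interaction with $E$. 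You would need to either find the missing contradiction in your approach or pivot to a direct construction like the paper's.
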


\begin{proof}
    Assume that $G'$ is weakly retained in $[G \cup f]$.  
    Then, the endpoints of the edge contracted in $[G \cup f]$ must be contained in $J^{-1}$, and at least one of these endpoints is not contained in $C^{-1}$.  
    Hence, we have $[J^{-1}]^{-1} = J^{-1}$.  
    Since $E$ is contained in $[J^{-1}]$, this implies that $E^{-1}$ is contained in $J^{-1}$.  
    Therefore, $[G'] \setminus E$ is connected if and only if $G' \setminus A$ is connected, where $A = E^{-1} \cap C^{-1}$.  
    We complete the proof by showing that $G' \setminus A$ is connected.  
    
    If $A$ is empty, then we are done.  
    Otherwise, by definition, $G'$ contains either the expanded $f$-component or expanded minor component of $C^{-1}$, say $K^{-1}$.  
    Hence, using Lemma \ref{lem:expanded_clique_graphs_and_connections} Statement (2) and the observation that $A$ is a clique, it is easy to see that $K^{-1} \setminus A$ is connected.  
    Furthermore, since $C^{-1}$ is not a clique, $C^{-1} \setminus A$ is non-empty.  
    Therefore, since $G \cup f$ is an atom, for any vertex $x$ in $G' \setminus K^{-1}$, there exists a path in $G \cup f$ between $x$ and some vertex in $C^{-1} \setminus A$ that does not contain any vertex of $A$.  
    Note that we can choose this path such that it is contained in $G'$.  
    Combining this with the fact that $K^{-1} \setminus A$ is connected shows that $G' \setminus A$ is connected.  
\end{proof}

\begin{proof}[Proof of Lemma \ref{lem:type_2_yields_smaller_expanded_f-component}]
    Assume to the contrary that $C^{-1}$ is in the top level of $(G,f)$ but some edge $ab$ of $(H^{-1} \setminus f) \setminus C^{-1}$ is of Type (2).  
    Let $[G \cup f]$ be the $f$-separating minor obtained by contracting $ab$ and let $G' = (G \cup f) \setminus (H^{-1} \setminus C^{-1})$.  
    We will show that (i) $[G']$ is contained in some atom $N$ of $[G \cup f]$, (ii) there exists an $f$-separating pair $(E,N)$ of $[G \cup f]$, and (iii) $[C^{-1}] \setminus E$ is non-empty.  
    Since $[C^{-1}] \setminus E$ is contained in both $[G']$ and $[H^{-1}]$, (i) - (iii) show that the $f$-component $J$ of $(E,N)$ is a proper subgraph of $[H^{-1}]$.  
    Therefore, $J^{-1}$ is a proper subgraph of $H^{-1}$,  which contradicts the fact that $C^{-1}$ is in the top-level of $(G,f)$.  
    Thus, once we show (i) - (iii) are true, the proof will be complete.  
    
    To prove (i), consider any CMS $E$ of $[G \cup f]$.  
    Observe that $E$ must contain $[a]$, or else $E^{-1}$ is clearly a CMS of $G \cup f$, which is a contradiction.  
    Hence, since $E$ is a clique and $[C^{-1}]$ is a separator of $[G \cup f]$ that does not contain $[a]$, $E$ must be contained in $[H^{-1}]$.  
    Consequently, since $G'$ is retained in $[G \cup f]$, Lemma \ref{lem:G'_connected} shows that $[G'] \setminus E$ is connected.  
    Since $E$ was arbitrary, some atom $N$ of $G \cup f$ contains $G'$, as desired.  
    
    Next, to prove (ii), note that $G'$ contains the expanded minor component $I^{-1}$ of $C^{-1}$, and $[I^{-1}]$ has an $[f]$-preserving forbidden minor, by Lemma \ref{lem:I'_fbm_after_f-component_contraction}.  
    This implies that $N$ has an $[f]$-preserving forbidden minor.  
    Hence, since $[G \cup f]$ is $f$-separating, $N$ does not contain some endpoint $[u]$ of $[f]$, and so there exists a clique minimal $[ux]$-separator $E$ of $[G \cup f]$ for any vertex $[x]$ of $N \setminus E$.  
    Therefore, $(E,N)$ is an $f$-separating pair of $[G \cup f]$.  

    Finally, to prove (iii), observe that $C^{-1}$ is retained in $[G \cup f]$.  
    Hence, by Lemma \ref{lem:expanded_clique_graphs_and_connections} Statement (1), $[C^{-1}]$ is not a clique.  
    Thus, since $E$ is a clique, we get that $[C^{-1}] \setminus E$ is non-empty, which completes the proof.  
\end{proof}

Lemma \ref{prop:C'_leq_4vert} is proved using Lemma \ref{lem:C'_5_H'_wing} in Section \ref{sec:prop-8-9}.  
The proof of Lemma \ref{lem:C'_5_H'_wing} requires Lemma \ref{lem:4-clique separator_of_H'_contains_f}, below.  

\begin{lemma}[Each atom of top-level $f$-component contains $f$]
    \label{lem:4-clique separator_of_H'_contains_f}
    Let $(G,f)$ be a minimal graph-nonedge pair and let $C$ be a top-level $f$-separating CMS of an $f$-separating minor $[G \cup f]$ whose $f$-component is $H$.  
    Then, every atom of $H$ contains $[f]$.  
\end{lemma}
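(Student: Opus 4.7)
The plan is to argue by contradiction. Suppose some atom $A$ of $H$ does not contain $[f]$. I aim to produce an expanded $f$-separating pair of $G$ whose expanded $f$-component is a proper subgraph of $H^{-1}$, directly contradicting the top-level property of $C^{-1}$.

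Since $A \subsetneq H$, the atom-tree decomposition of $H$ provides a CMS $E$ of $H$ that separates $A$ from some atom of $H$ containing $[f]$. I will choose $E$ so that it lies adjacent, in the atom-tree of $H$, to the (unique) atom containing $[f]$; let $H_F$ denote the $E$-component of $H$ containing $[f]$. The atom-tree is well defined because $H$ decomposes canonically into its atoms glued along CMSs. Because $C$ is a clique contained in $V(H)$ and $E$ is a clique separator of $H$, the set $C \setminus E$ lies entirely in a single $E$-component of $H$ (any $c_1, c_2 \in C$ are adjacent, so a cross-edge would force an endpoint into $E$). This observation yields the following dichotomy.

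In the first and main case, $C \setminus E$ is not contained in $H_F$. Then in $[G \cup f]$, any path from an endpoint of $[f]$ to a vertex of $M \setminus C$ must first exit $H$ through $C$, and so must cross $E$ since $C \setminus E$ lies on the non-$[f]$ side of $E$. Hence $E$ is a clique separator of $[G \cup f]$ that separates some endpoint $[u]$ of $[f]$ from $M \setminus E$. Passing to a minimal $[ux]$-separator $E^* \subseteq E$ for any $[x] \in M \setminus E^*$, I obtain an $f$-separating pair $(E^*, M)$ of $[G \cup f]$ whose $f$-component is contained in $H_F \cup E^* \subsetneq H$. Lifting to $G$, the expanded $f$-component is a proper subgraph of $H^{-1}$, as desired.

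In the remaining case, $C \setminus E \subseteq H_F$. Now the clique $E$ alone fails to separate $[f]$ from $M$, because one may travel from $[f]$ to $C$ (both in $H_F$) and out through $C$ to $M$ without crossing $E$. My plan for this case is to pass to a further minor $[G \cup f]'$ of $G$ by contracting each $E$-component of $H$ other than $H_F$ into a single vertex attached to $E$, and then to locate a refined clique separator of $[G \cup f]'$, built from $E$ together with the appropriate portion of $C$ lying in $H_F$, that yields an $f$-separating pair whose $f$-component lifts to a proper subgraph of $H^{-1}$. The central obstacle is precisely this case: the natural CMS $E$ is ``shortcutted'' by $C$, and one must carefully construct a replacement separator while verifying that the replacement (i) is a bona fide CMS in the new minor, (ii) qualifies as a minimal $[ux]$-separator in the sense of Definition \ref{def:expanded_e-separating_pair}, and (iii) yields a strictly smaller expanded $f$-component when lifted to $G$. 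The argument should leverage the minimality of $(G,f)$ (in particular the fact that $G \cup f$ is an atom, ruling out redundant separators) together with the top-level property of $C^{-1}$ (which is exactly the hypothesis available for obtaining the contradiction).
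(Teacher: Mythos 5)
Your contradiction strategy is correct, and your handling of the first case (where $C\setminus E$ lies on the non-$[f]$ side of $E$) essentially matches the paper's endgame. But the proof is incomplete: you identify the second case (where $C \setminus E$ lies in the $E$-component containing $[f]$) as the "central obstacle," sketch a plan for it, and leave it unresolved. The paper's key move is to show that this case simply cannot occur, and the tool for doing so is Lemma~\ref{cor:top-level_structure}, which your proposal never invokes.

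Concretely, the paper picks a CMS $E$ of $H$ that fails to contain some endpoint $[u]$ of $[f]$, lets $H_1$ be the $E$-component containing $[u]$, and considers any $[x]$ in any other $E$-component $K$, with $[x] \notin E$. Then $[ux]$ is a nonedge (different $E$-components), $[x]\neq [v]$ (since $[u][v]=[f]$ is an edge), so if $[x]\notin C$ then every vertex in $[x]^{-1}$ lies in $H^{-1}\setminus(C^{-1}\cup\{u,v\})$. Lemma~\ref{cor:top-level_structure} says such a vertex has \emph{all} its neighbors among $u$, $v$, and $C^{-1}$, and in fact is adjacent to both $u$ and $v$; this contradicts $[x]$ being non-adjacent to $[u]$. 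Hence every vertex of $K\setminus E$ lies in $C$. Because $C$ is a clique, $C\setminus E$ cannot straddle two $E$-components, so $C\setminus E$ lies entirely in $K$, never in $H_1$ — exactly ruling out your Case~2. This structural constraint, coming from minimality of $(G,f)$ and the top-level hypothesis funneled through Lemma~\ref{cor:top-level_structure}, is the missing ingredient; without it there is no way to exclude the bad configuration, and the replacement-separator plan you sketch for Case~2 has no obvious way to produce a \emph{minimal clique} $[ux]$-separator whose $f$-component strictly shrinks, since the natural candidate ($E$ together with part of $C$) need not itself be a clique.
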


\begin{proof}
    If $H$ is an atom, then we are done.  
    Otherwise, it suffices to show that every CMS of $H$ contains $[f]$.  
    Assume to the contrary that some CMS $E$ of $H$ does not contain some endpoint $[u]$ of $[f]$.  
    We will show that this implies $C$ is not in the top-level of $[G \cup f]$, which is a contradiction.  
    Let $H_1$ be the $E$-component of $H$ that contains $[u]$, $K$ be any other $E$-component of $H$, and $[x]$ be any vertex of $K \setminus E$.  
    Since $[u]$ and $[x]$ are contained in distinct $E$-components of $H$, they are distinct vertices and $[ux]$ is a nonedge of $[G \cup f]$.  
    Hence, Lemma \ref{cor:top-level_structure} tells us that $[x]$ must be contained in $C$.  
    
    Next, since $C$ is a clique, $K$ is the unique $E$-component of $H$ that contains $C$.
    Hence, $E$ is a CMS of $[G \cup f]$, $H_1$ is an $E$-component of $[G \cup f]$, and the $E$-component $J$ of $[G \cup f]$ that contains $K$ also contains $C$.  
    This implies that $J$ contains $G' = [G \cup f] \setminus (H \setminus C)$, which contains some atom $M$ such that $(C,M)$ is an $f$-separating pair of $[G \cup f]$.  
    Therefore, some subgraph $E'$ is such that $(E',M)$ is an $f$-separating pair of $[G \cup f]$.  

    Finally, note that $E'$ does not contain $[x]$.  
    Hence, a similar argument shows that $G'$ and $M$ are both contained in a unique $E'$-component $J'$ of $[G \cup f]$.  
    Since $E'$ is a minimal $[uy]$-separator for any vertex $[y]$ of $M \setminus E'$, $J'$ does not contain $[u]$.  
    Thus, letting $H_2$ be the $E'$-component of $[G \cup f]$ that contains $[u]$, we get that $H_2^{-1}$ is a proper subgraph of the expanded $f$-component of $C^{-1}$.  
    This shows that $C$ is not in the top-level of $[G \cup f]$, as desired, and so the proof is complete.  
\end{proof}

\begin{proof}[Proof of Lemma \ref{lem:C'_5_H'_wing}]
    Consider any $f$-separating minor $[G \cup f]$ obtained by contracting a single edge of $C^{-1}$ and such that $C$ is one of its $f$-separating CMSs.  
    If the $f$-component $H$ of $C$ is the graph in Figure \ref{fig:k5_wing} with $[f]$ as $w_1w_2$ and $C$ as the subgraph induced by $\{w_3,w_4,w_5,w_6\}$, then we can use Lemma \ref{lem:expanded_clique_graphs_and_connections} Statement (2) to show that the expanded $f$-component of $C^{-1}$ has the desired properties.  
    Hence, we complete the proof by showing that $H$ has these properties.  
    We prove this by first showing that (i) $H$ has no $[f]$-preserving forbidden minor.  
    Second, we use (i) to show that (ii) some minor $[H]_1$ is the graph in Figure \ref{fig:k5_wing} such that $[f]_1$ is $w_1w_2$ and $|[y]_1^{-1}| = 1$ for each vertex $y \in V(C)$.  
    Third and finally, letting $f = uv$, we use (i) and (ii) to show that (iii) $|[u]_1^{-1}| = |[v]_1^{-1}| = 1$.  
    This completes the proof.  
    
    For Statement (i), since $H$ is a $C$-component of $[G \cup f]$, Lemma \ref{lem:clq_sep_props} in \ref{sec:atoms} states that any atom $J$ of $H$ is also an atom of $[G \cup f]$.  
    Additionally, by Lemma \ref{lem:4-clique separator_of_H'_contains_f}, $J$ contains $[f]$.  
    Since $[G \cup f]$ is $f$-separating, these facts imply that $J$ has no $[f]$-preserving forbidden minor.  
    Therefore, Statement (i) follows from the contrapositive of Lemma \ref{cor:minimal_k-clique-sum_component_containing_fbm_and_f}.  
    
    For Statement (ii), note that $C$ does not contain any endpoint of $[f]$, or else we can use Lemma \ref{lem:expanded_clique_graphs_and_connections} Statement (2) to show that $H$ has an $[f]$-preserving $K_5$ minor, which contradicts Statement (i).  
    Using the same lemma, we see that $H$ has a $K_5$ minor $[H]_2$ such that $|[y]_2^{-1}| = 1$, for each vertex $y \in V(C)$.  
    By Statement (i), $f$ is contracted in $[H]_2$.  
    Hence, there exists a minor $[H]_1$ from which $[H]_2$ can be obtained by contracting $[f]_1$.  
    Clearly, $[H]_1$ is one of the graphs in Figures \ref{fig:k5_wing}, \ref{fig:k5_not_wing_u4_v0}, \ref{fig:k5_not_wing_u4_vgeq1}, or \ref{fig:k5_not_wing_u3} with $[f]_1$ as $w_1w_2$ and $[C]_1$ as the subgraph induced by $\{w_3,w_4,w_5,w_6\}$.  
    Since $G \cup f$ is an atom, $H$ is $2$-connected.  
    Therefore, Lemma \ref{lem:G_not_k5_k222_u4v0} allows us to assume that $[H]_1$ is one of the graphs in Figures \ref{fig:k5_wing}, \ref{fig:k5_not_wing_u4_vgeq1}, or \ref{fig:k5_not_wing_u3}.  
    Statement (ii) now follows from Statement (i) and Lemma \ref{lem:k5_k222_not_wing_fm_e_not_contracted}.  

    Finally, for Statement (iii), wlog assume to the contrary that $|[u]_1^{-1}| > 1$.  
    If either $[u]$ has a neighbor in $C$ or $|[u]_1^{-1}| > 2$, then we can use Lemma \ref{cor:top-level_structure} to contradict Statement (i).  
    Otherwise, let $[x]$ be the unique vertex in $[u]_1^{-1} \setminus \{[u]\}$.  
    If $|[v]_1^{-1}| = 1$, then $\{[x],[v]\}\}$ is a clique separator of $H$.  
    However, since neither $[x]$ nor $[v]$ is contained in $C$, we have $1 = |[x]^{-1}| = |[v]^{-1}|$.  
    This implies that $G \cup f$ is not as atom, which is a contradiction.  
    Hence, assume that $|[v]_1^{-1}| > 1$.  
    Similar arguments show that $[v]$ has no neighbor in $C$, there exists a unique vertex $[x']$ in $[v]_1^{-1} \setminus \{[v]\}$, and $1 = |[x]^{-1}| = |[x']^{-1}|$.  
    Hence, $\{x,x'\}$ is a size two separator of $G \cup f$, and it is easy to see that in any $f$-preserving forbidden minor of $G \cup f$ some edge $e$ in the set $\{ux,ux',vx,vx'\}$ is contracted.  
    Furthermore, the minor obtained by contracting $e$ has an $f$-separating pair whose $f$-component $J$ is such that $J^{-1}$ is a proper subgraph of $H^{-1}$.  
    This contradicts the fact that $C$ is in the top-level of $[G \cup f]$, and so Statement (iii) is proved.  
    This completes the proof.  
\end{proof}

\begin{proof}[Proof of Lemma \ref{prop:C'_leq_4vert}]
    By Lemma \ref{lem:expanded_clique_graphs_and_connections} Statement (1), $C^{-1}$ is one of the graphs in Figure \ref{fig:expanded_e-separating_cliques}.  
    We will show that $C^{-1}$ is not any of the graphs in Figures \ref{fig:expanded_e-separating_cliques}(e)-(i).  
    Assume to the contrary that $C^{-1}$ is one of these graphs.  
    We will show that $(G,f)$ has a reducing edge, and hence is not minimal, which is a contradiction.  
    Let $H^{-1}$ and $I^{-1}$ be the expanded $f$-component and expanded minor component of $C^{-1}$, respectively.  
    By Lemma \ref{lem:C'_5_H'_wing}, $H^{-1}$ is one of the graphs in Figure \ref{fig:5_top} with $f$ as the red edge and $C^{-1}$ as the subgraph induced by the set $\{w_i\}$.  
    We consider each case for $H^{-1}$.  
    
    If $H^{-1}$ is the graph in either Figure \ref{fig:5_degree_1} or \ref{fig:5_chord_arg_1}, then let $[G \cup f]$ be the minor obtained by contracting $uw_1$.  
    Observe that $f$ is retained in $[G \cup f]$ and $[H^{-1}]$ is an atom that contains $[f]$ such that $[H^{-1}] \cup [w_1w_4]$ has an $f$-preserving forbidden minor.  
    Hence, some atom $J$ of $[G \cup f]$ contains $[H^{-1}]$.  
    Let $G' = G \setminus (H^{-1} \setminus C^{-1})$.  
    Since $G'$ is preserved in $[G \cup f]$, Lemma \ref{lem:expanded_clique_graphs_and_connections} Statement (2) tells us that some path in $[G']$ between $[w_1]$ and $[w_4]$ does not contain any vertex in $[C^{-1}] \setminus \{[w_1],[w_4]\}$.  
    Note that the only vertices in this path that are also in $[H^{-1}]$ are $[w_1]$ and $[w_4]$.   
    Therefore, Lemma \ref{lem:path_in_graph_path_in_atom} in \ref{sec:atoms} shows that some path in $J$ between $[w_1]$ and $[w_4]$ does not contain any vertex in $[H^{-1}] \setminus \{[w_1],[w_4]\}$.  
    The above facts imply that $J$ has an $[f]$-preserving forbidden minor, and so $uw_1$ is reducing, as desired.  

    If $H^{-1}$ is the graph in Figure \ref{fig:5_k5_chord-e}, then let $[G \cup f]$ be the minor obtained by contracting $uw_1$.  
    Observe that $[H^{-1}]$ is an atom with an $f$-preserving forbidden minor.  
    Hence, since some atom of $[G \cup f]$ contains $[H^{-1}]$, $uw_1$ is reducing, as desired.  
    A similar argument shows that $vw_4$ is reducing if $H^{-1}$ is the graph in Figure \ref{fig:5_k5_chord}, and $vw_3$ is reducing if $H^{-1}$ is the graph in Figure \ref{fig:5_k5_chord_arg_2}.  

    Next, assume that $H^{-1}$ is the graph in either Figure \ref{fig:5_cycle_4_u3}, \ref{fig:5_cycle_arg_3_2}, \ref{fig:5_cycle_3_u3}, or \ref{fig:5_cycle_3_arg_3_2}, or $H^{-1}$ is the graph in \ref{fig:5_cycle_4_arg_4} and $uw_2$ is an edge.  
    Since $I^{-1}$ has an $f$-preserving forbidden minor, $G \cup f$ has an $f$-preserving forbidden minor in which some edge $e$ with $v$ as an endpoint is contracted.  
    We will show that $e$ is reducing.  
    The previous observation shows that $e$ is not of Type (1).  
    Let $[G \cup f]$ be the minor obtained by contracting $e$.  
    Note that $f$ is retained in $[G \cup f]$, and so $e$ is not of Type (3).  
    We complete the proof by showing that $e$ is not of Type (2).  
    Since $G'$ is the union of $C^{-1}$ and some non-empty subset of connected components of $(G \cup f) \setminus C^{-1}$ and since $G \cup f$ is an atom, Lemma \ref{lem:I'_sep_components_contain_C'-E} in \ref{sec:atoms} states that, for any CMS $E$ of $G'$, each connected component of $G' \setminus E$ contains some vertex of $C^{-1} \setminus E$.  
    Along with the observation that $C^{-1}$ is an atom, this shows that $G'$ is an atom.  
    Since $G'$ is weakly retained in $[G \cup f]$, we get that $[G']$ is an atom.  
    Additionally, note that $[H^{-1}]$ is an atom.  
    Since $[C^{-1}]$ is not a clique, $[G \cup f] = [G'] \cup [H^{-1}]$, and $[C^{-1}] = [G'] \cap [H^{-1}]$, Lemma \ref{lem:gluing_min_k-clique-sum_graphs} in \ref{sec:atoms} shows that $[G \cup f]$ is an atom, and so $e$ is not of Type (2).  
    Therefore, $e$ is reducing, as desired.  

    Finally, assume that $H^{-1}$ is the graph in Figure \ref{fig:5_cycle_4_u2_edge}.  
    If either $uw_1$ or $vw_3$ is not of Type (1), then an argument similar to those above shows that at least one of these edges is reducing.  
    Hence, we complete the proof by showing that at least one of these edges is not of Type (1).  
    By definition, $I^{-1}$ has an $f$-preserving forbidden minor $[I^{-1}]_1$.  
    If either $w_1w_4$ or $w_3w_5$ is preserved in $[I^{-1}]_1$, then there exists an extension $[G \cup f]$ of $[I^{-1}]_1$ in which either $uw_1$ or $vw_3$ is contracted, and we are done.  
    Otherwise, note that $w_2w_3$ is preserved in $[I^{-1}]_1$ since $f$ is preserved in $[I^{-1}]_1$.  
    Therefore, there exists an extension $[G \cup f]$ of $[I^{-1}]_1$ in which both $uw_1$ and $vw_3$ are contracted, and hence neither of these edges is of Type (1).  
    Similar arguments apply if either $H^{-1}$ is the graph in Figure \ref{fig:5_cycle_3_u2_edge} or $H^{-1}$ is the graph in Figure \ref{fig:5_cycle_4_arg_4} and $uw_2$ is a nonedge.  
    
    \smallskip
    The cases above are exhaustive, and so the proposition is proved.  
\end{proof}

The next subsection is dedicated to proving Lemma \ref{prop:expC_3or4} from Section \ref{sec:prop-8-9}.  

\subsubsection{Proof of Lemma \ref{prop:expC_3or4} [Unpruned possible graphs for top-level $H^{-1}$]}
\label{sec:proof_top-level_H'_cases}

The proof is highly technical and has many cases, which are split among Lemmas \ref{lem:3top_uin}, \ref{lem:3top_uout}, and \ref{lem:top_remaining}, below.  
The proofs of these lemmas require Lemmas \ref{lem:nonedge_C'_no_fbm_H'+f'} and \ref{lem:top-level_contraction_not_type_2}, below.  
Lemma \ref{lem:nonedge_C'_no_fbm_H'+f'} is proved using Lemma \ref{lem:I'_one_vertex}, also below.  

\begin{lemma}[Conditions such that $I^{-1} \setminus C^{-1}$ contains exactly one vertex]
    \label{lem:I'_one_vertex}
    Let $C^{-1}$ be a top-level expanded $f$-separating CMS of a a minimal graph-nonedge pair $(G,f)$, and let its expanded $f$-component and expanded minor component be $H^{-1}$ and $I^{-1}$, respectively.  
    Also, consider the minor $[G \cup f]$ obtained by contracting all edges in $I^{-1} \setminus C^{-1}$.  
    If $F$ is a non-empty set of nonedges of $C^{-1}$ such that (i) some vertex of $C^{-1}$ is an endpoint of each nonedge in $F$, (ii) the vertex $[I^{-1} \setminus C^{-1}]$ neighbors the endpoints of each nonedge in $[F]$, and (iii) $H^{-1} \cup F$ has an $f$-preserving forbidden minor, then $I^{-1} \setminus C^{-1}$ contains exactly one vertex.  
\end{lemma}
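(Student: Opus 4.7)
I would prove Lemma \ref{lem:I'_one_vertex} by contradiction: assume $|V(I^{-1} \setminus C^{-1})| \geq 2$, and produce an expanded $f$-separating pair of $G \cup f$ whose expanded $f$-component is a proper subgraph of $H^{-1}$, contradicting the top-level hypothesis on $C^{-1}$.

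First, I would unpack the hypotheses. By condition (i), pick $w \in V(C^{-1})$ common to every nonedge in $F$ and write $F = \{wu_1, \ldots, wu_k\}$ with each $u_i \in V(C^{-1})$. Let $[v]$ denote the contracted vertex $[I^{-1} \setminus C^{-1}]$ in $[G \cup f]$. By condition (ii), $[v]$ is adjacent in $[G \cup f]$ to $[w]$ and to each $[u_i]$, so in $G \cup f$ itself each of $w, u_1, \ldots, u_k$ has at least one neighbor in $V(I^{-1} \setminus C^{-1})$. Since $I^{-1}$ is a $C^{-1}$-component of the atom $G \cup f$, the induced subgraph on $V(I^{-1} \setminus C^{-1})$ is connected.

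Next, I would exploit condition (iii) to manufacture an $f$-preserving forbidden minor of a related minor of $G \cup f$. Form the minor $[G \cup f]^\star$ by contracting the connected set $V(I^{-1} \setminus C^{-1}) \cup \{w\}$ into a single vertex $[w^\star]$; this is legitimate because $w$ has a neighbor in $V(I^{-1} \setminus C^{-1})$. In $[G \cup f]^\star$ the vertex $[w^\star]$ inherits the $H^{-1}$-neighbors of $w$ and the $C^{-1}$-neighbors of vertices in $V(I^{-1} \setminus C^{-1})$, so in particular $[w^\star]$ is adjacent to every $[u_i]$. Therefore $[G \cup f]^\star$ contains the relabeled copy of $H^{-1} \cup F$ as a spanning subgraph, and (iii) yields an $f$-preserving forbidden minor of $[G \cup f]^\star$.

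The main obstacle is converting this into the desired contradiction with top-level-ness. My plan is to use the assumption $|V(I^{-1} \setminus C^{-1})| \geq 2$ to refine the contraction: because $I^{-1} \setminus C^{-1}$ is connected with at least two vertices, there is a leaf $y$ of some spanning tree of $I^{-1} \setminus C^{-1}$ such that $V(I^{-1} \setminus C^{-1}) \setminus \{y\}$ is still connected and contains a neighbor of $w$ and a neighbor of each $u_i$. (If condition (ii) fails for this smaller set, I would choose $y$ adjacent to only vertices already covered by the remaining set, which is possible because $[v]$'s neighborhood is realized by the full set of vertices.) Contracting $(V(I^{-1} \setminus C^{-1}) \setminus \{y\}) \cup \{w\}$ to a single vertex rather than all of $I^{-1} \setminus C^{-1} \cup \{w\}$ yields a minor $[G \cup f]^\diamond$ that still satisfies the analogue of condition (ii), hence still carries an $f$-preserving forbidden minor via (iii). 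By Lemma \ref{cor:minimal_k-clique-sum_component_containing_fbm_and_f}, this forbidden minor extends a minor of some atom of $[G \cup f]^\diamond$, and applying Lemma \ref{lem:separating_minor_has_separating_pair} together with Lemma \ref{lem:path_in_graph_path_in_atom} to lift the resulting CMS back to $G \cup f$, I would show that some CMS $E$ of a suitable minor separates $f$ from the new forbidden minor in such a way that $E^{-1}$'s expanded $f$-component is strictly contained in $H^{-1}$ (since $y$ now lies on the $f$-side of $E$ and so the new $f$-component omits at least the contribution corresponding to $y$'s role in the old $C^{-1}$-separation picture). This contradicts the top-level hypothesis on $C^{-1}$ and completes the proof. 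The delicate point — and the step I expect to be hardest — is verifying that the CMS produced by this refinement is genuinely new and yields a strictly smaller expanded $f$-component, which will likely require a careful case analysis based on how $[w^\star]$ appears in the forbidden minor of $[G \cup f]^\star$ (whether as a contracted vertex or a preserved one).
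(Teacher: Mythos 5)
Your proposal takes a genuinely different target for the contradiction, and it has a gap at the crucial step. You aim to contradict the top-level property of $C^{-1}$ by producing an expanded $f$-separating pair whose expanded $f$-component is a proper subgraph of $H^{-1}$. But your construction moves in the wrong direction: you contract $(V(I^{-1}\setminus C^{-1})\setminus\{y\})\cup\{w\}$, leaving $y$ as its own vertex, and then claim that some CMS of the resulting minor will pull back to a strictly smaller expanded $f$-component ``since $y$ now lies on the $f$-side.'' That heuristic is backwards. The vertex $y$ was never in $H^{-1}$ to begin with --- it sits in $I^{-1}\setminus C^{-1}$ --- so if $y$ ends up on the $f$-side of the new CMS, the resulting expanded $f$-component \emph{gains} a vertex outside $H^{-1}$, which makes it not a subgraph of $H^{-1}$ at all, let alone a proper one. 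Nothing in your construction removes a vertex of $H^{-1}$ from the $f$-side, so there is no reason the new $f$-component should be strictly contained in $H^{-1}$. You flag this as the ``delicate point'' and ``hardest step,'' and indeed it is where the argument breaks.

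The paper's proof takes a cleaner and fundamentally different route: it contradicts \emph{minimality} of $(G,f)$, not the top-level property of $C^{-1}$. If $I^{-1}\setminus C^{-1}$ has at least two vertices, it contains an edge $e$ (since $I^{-1}\setminus C^{-1}$ is connected in the atom $G\cup f$). The proof then shows $e$ is a reducing edge. Contracting $e$ retains $f$, so $e$ is not of Type (3). By (i)--(iii) the contracted graph has an $f$-preserving forbidden minor, so $e$ is not of Type (1). The substantive work is ruling out Type (2): since $G\cup f$ is an atom, every CMS of the contracted graph must contain $[e]_1$, hence lies in $[I^{-1}]_1$; Lemma \ref{lem:G'_connected} then shows that $[G']_1$ (where $G'=(G\cup f)\setminus(I^{-1}\setminus C^{-1})$) and hence $[H^{-1}]_1$ and $[e]_1$ all sit inside a single atom $K$, and Lemma \ref{lem:path_in_graph_path_in_atom} lets one route the paths needed for the forbidden minor through $K$. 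So $K$ itself has an $[f]_1$-preserving forbidden minor, and $e$ is not of Type (2). This makes $e$ reducing, contradicting minimality of $(G,f)$ (Definition \ref{def:minimal_pair}). If you want to salvage your approach you would have to show that the new $f$-component actually \emph{shrinks}, which your construction does not do; targeting minimality via a reducing edge is the natural and much more direct path.
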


\begin{proof}
    Assume to the contrary that $F$ is a non-empty set of nonedges of $C^{-1}$ that satisfy Statements (i)-(iii) but $I^{-1} \setminus C^{-1}$ contains more than one vertex.  
    Then, there exists some edge $e$ in $I^{-1} \setminus C^{-1}$.  
    We will show that $e$ is reducing with respect to $(G,f)$, which contradicts the minimality of this pair.  
    Since neither endpoint of $e$ is contained in $H^{-1}$, $f$ is retained in be the minor $[G \cup f]_1$ obtained by contracting $e$, and so $e$ is not of Type (3).  
    Combining this with Statements (i)-(iii) shows that $[G \cup f]_1$ has an $f$-preserving forbidden minor, and so $e$ is not of Type (1).  
    We complete the proof by showing that $e$ is not of Type (2).  
    
    Since $G \cup f$ is an atom, any CMS of $[G \cup f]_1$ must contain the vertex $[e]_1$.  
    Hence, since $[C^{-1}]_1$ is a separator of $[G \cup f]_1$, every CMS of $[G \cup f]_1$ is contained in $[I^{-1}]_1$.  
    Furthermore, observe that $G' = (G \cup f) \setminus (I^{-1} \setminus C^{-1})$ is weakly retained in $[G \cup f]_1$.  
    Therefore, Lemma \ref{lem:G'_connected} applies and shows that $[G']_1$ is contained in some atom $K$ of $[G \cup f]_1$.  
    Note that $K$ contains $[H^{-1}]_1$, and so $K \cup [F]_1$ has an $[f]_1$-preserving forbidden minor, by Statement (iii).  
    Also, since any CMS of $[G \cup f]_1$ contains $[e]_1$, $K$ contains $[e]_1$.  
    For each endpoint $y$ of each nonedge in $[F]_1$, Statement (ii) can be used to obtain some path in $[G \cup f]_1$ between $[e]_1$ and $y$ that contains exactly one vertex in $[H^{-1}]_1$.  
    Consequently, Lemma \ref{lem:path_in_graph_path_in_atom} in Section \ref{sec:atoms} allows us to choose this path to be contained in $K$.  
    Combining this with Statements (i)-(iii) shows that $K$ has an $[f]_1$-preserving forbidden minor, and thus $e$ is not of Type (2).  
\end{proof}

\begin{lemma}[$H^{-1}$ has no $f$-preserving forbidden minor]
    \label{lem:nonedge_C'_no_fbm_H'+f'}
    Consider a minimal graph-nonedge pair $(G,f)$ with a top-level expanded $f$-separating CMS $C^{-1}$ whose expanded $f$-component is $H^{-1}$.  
    Then, the following statements are true.  
    \begin{enumerate}
        \item If $C^{-1}$ is the graph in either Figure \ref{fig:exp_clique_sep_k2}, \ref{fig:exp_clique_sep_k3_chord}, or \ref{fig:exp_clique_sep_k3_degree_1}, then, for any non-empty subset $F$ of nonedges of $C^{-1}$, $H^{-1} \cup F$ has no $f$-preserving forbidden minor.  
        \item If $C^{-1}$ is the graph in Figure \ref{fig:exp_clique_sep_k3_cycle}, then there exists a nonedge $f'$ of $C'$ such that $H^{-1} \cup f'$ has no $f$-preserving forbidden minor.  
    \end{enumerate}
    Consequently, $H^{-1}$ has no $f$-preserving forbidden minor.  
\end{lemma}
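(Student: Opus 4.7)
The plan is to prove both statements by contradiction, invoking Lemma \ref{lem:I'_one_vertex} to force $I^{-1}$ to be too small to carry the forbidden minor that the definition of the expanded minor component requires of it.

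Fix one of the three shapes for $C^{-1}$ in Figure \ref{fig:expanded_e-separating_cliques} where it has at least three vertices, and a candidate set $F$ of its nonedges; suppose for contradiction that $H^{-1}\cup F$ has an $f$-preserving forbidden minor. First I would verify the three hypotheses of Lemma \ref{lem:I'_one_vertex} for this $F$. Hypothesis (iii) is just the contradiction assumption. Hypothesis (i) follows from the structure of $C^{-1}$: in Figures \ref{fig:exp_clique_sep_k3_chord} and \ref{fig:exp_clique_sep_k3_degree_1} every nonedge of $C^{-1}$ shares a common endpoint, and in Figure \ref{fig:exp_clique_sep_k3_cycle} I will choose $f'$ explicitly so that (i) is trivial. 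Hypothesis (ii) is read off from Lemma \ref{lem:expanded_clique_graphs_and_connections}(2), which pins down $[H^{-1}\cup I^{-1}]$ to one of the graphs in Figure \ref{fig:expanded_e-separating_cliques_conn}; inspection shows that in each of these the contracted vertex $[I^{-1}\setminus C^{-1}]$ is adjacent to the endpoints of every relevant nonedge of $[C^{-1}]$.

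With the three hypotheses established, Lemma \ref{lem:I'_one_vertex} gives $|V(I^{-1}\setminus C^{-1})|=1$, hence $|V(I^{-1})|\le |V(C^{-1})|+1 \le 4$. On the other hand, the definition of an $f$-separating pair forces $I^{-1}$ to contract onto an atom $M$ that has an $[f]$-preserving forbidden minor, so $I^{-1}$ must contain a $K_5$ or a $K_{2,2,2}$ as a minor, requiring at least five vertices. This contradiction proves Statement 1 for Figures \ref{fig:exp_clique_sep_k3_chord} and \ref{fig:exp_clique_sep_k3_degree_1}. The Figure \ref{fig:exp_clique_sep_k2} case of Statement 1 is vacuous since $C^{-1}\cong K_2$ has no nonedges. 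For Statement 2 I would pick $f'$ to be a nonedge of $C^{-1}$ whose two endpoints are both adjacent to $[I^{-1}\setminus C^{-1}]$ in $[H^{-1}\cup I^{-1}]$; inspection of Figures \ref{fig:exp_clique_sep_k3_cycle_conn_1} and \ref{fig:exp_clique_sep_k3_cycle_conn_2} confirms that such a choice exists, after which the same hypothesis-verification and cardinality argument applies.

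The ``Consequently'' clause follows from Statements 1 and 2 whenever a non-empty $F$ exists in the argument (Figures \ref{fig:exp_clique_sep_k3_cycle}, \ref{fig:exp_clique_sep_k3_chord}, \ref{fig:exp_clique_sep_k3_degree_1}), since any $f$-preserving forbidden minor of $H^{-1}$ would persist in $H^{-1}\cup F$. For the remaining Figure \ref{fig:exp_clique_sep_k2} case I would argue directly: Lemma \ref{lem:expanded_clique_graphs_and_connections}(2) forces $[H^{-1}\cup I^{-1}]$ to be the graph in Figure \ref{fig:exp_clique_sep_k2_conn}, and a direct adaptation of the proof technique of Lemma \ref{lem:I'_one_vertex}---using the single edge of $C^{-1}$ in place of a nonedge---shows that any $f$-preserving forbidden minor of $H^{-1}$ would either endow $G\cup f$ with a clique separator (contradicting atomicity) or produce a reducing edge in $I^{-1}\setminus C^{-1}$ (contradicting minimality of $(G,f)$).

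The main obstacle is the detailed bookkeeping needed to verify hypothesis (ii) of Lemma \ref{lem:I'_one_vertex} against every relevant shape of $[H^{-1}\cup I^{-1}]$, and, for Statement 2, locating a single $f'$ that works uniformly across the possible contracted graphs in Figures \ref{fig:exp_clique_sep_k3_cycle_conn_1} and \ref{fig:exp_clique_sep_k3_cycle_conn_2}. A secondary subtlety is the $K_2$ consequence case, where Lemma \ref{lem:I'_one_vertex} cannot be invoked directly and its proof technique must be re-run by hand with the single edge of $C^{-1}$ playing the role that the nonedges of $F$ play in the other cases.
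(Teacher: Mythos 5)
Your high-level plan matches the paper's: assume for contradiction, verify the hypotheses of Lemma~\ref{lem:I'_one_vertex} using Lemma~\ref{lem:expanded_clique_graphs_and_connections}(2), conclude that $I^{-1}\setminus C^{-1}$ has exactly one vertex, and derive a cardinality contradiction. There are, however, two concrete problems.

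First, and most seriously, you have misread Figure~\ref{fig:exp_clique_sep_k2}. That figure does not depict $K_2$; it depicts the \emph{expanded} CMS, a non-clique graph on three vertices $w_1,w_2,w_3$ that contracts down to the clique $K_2$ (recall that Lemma~\ref{lem:expanded_clique_graphs_and_connections}(1) explicitly notes $C^{-1}$ is never a clique because $G\cup f$ is an atom and $(G\cup f)\setminus C^{-1}$ is disconnected). In particular, $C^{-1}$ in this case has the nonedge $w_1w_3$, so Statement~(1) is \emph{not} vacuous there: it makes a genuine claim that must be proved, and the hypotheses of Lemma~\ref{lem:I'_one_vertex} are satisfied exactly as in the other cases (hypothesis~(i) is trivial with a single nonedge). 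Your proposal omits this case entirely and then compensates with a separate, vague ``direct adaptation'' argument for the ``Consequently'' clause---an argument that is both unnecessary once the misreading is corrected (Statement~(1) already covers it uniformly, and ``Consequently'' then follows because any $f$-preserving forbidden minor of $H^{-1}$ survives into $H^{-1}\cup F$) and not obviously sound as sketched.

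Second, your cardinality bound $|V(I^{-1})|\le |V(C^{-1})|+1\le 4$ is wrong for the three shapes of $C^{-1}$ with four vertices (Figures~\ref{fig:exp_clique_sep_k3_cycle}, \ref{fig:exp_clique_sep_k3_chord}, \ref{fig:exp_clique_sep_k3_degree_1}); there the bound is $\le 5$, which does not immediately contradict ``a forbidden minor requires at least five vertices'' since $K_5$ itself has five. You can patch this (note that $I^{-1}$ contains the non-clique $C^{-1}$ as an induced subgraph, so a five-vertex $I^{-1}$ cannot be $K_5$), but the paper's route is cleaner: it bounds the \emph{minor} $[I^{-1}]_1=I$ rather than $I^{-1}$, and $I$ has at most $|V(C)|+1\le 4$ vertices because $C$ is a clique on at most three vertices, so $I$ manifestly has no forbidden minor while the definition of an $f$-separating pair requires it to have one.
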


\begin{proof}
    Let $I^{-1}$ be the expanded minor component of $C^{-1}$ and consider the minor $[G \cup f]$ obtained by contracting every edge in $I^{-1} \setminus C^{-1}$.  
    Also, let $[G \cup f]_1$ be an $f$-separating minor such that $C$ is one of its $f$-separating CMSs.  
    If Statement (1) is false, then some non-empty set $F$ of nonedges $C^{-1}$ is such that $H^{-1} \cup F$ has an $f$-preserving forbidden minor.  
    Using Lemma \ref{lem:expanded_clique_graphs_and_connections} Statement (2), we see that the vertex $[I^{-1} \setminus C^{-1}]$ neighbors the endpoints of each nonedge in $F$.  
    If Statement (2) is false, then, for any nonedge $f'$ of $C^{-1}$, $H^{-1} \cup f'$ has an $f$-preserving forbidden minor.  
    Using Statement (2) of Lemma \ref{lem:expanded_clique_graphs_and_connections}, we see that the vertex $[I^{-1} \setminus C^{-1}]$ neighbors the endpoints of some nonedge of $C^{-1}$.  
    In both cases, Lemma \ref{lem:I'_one_vertex} applies and shows that $I^{-1} \setminus C^{-1}$ contains exactly one vertex.  
    However, since $|V(C^{-1})| \leq 4$, this implies that $[I^{-1}]_1$ has no forbidden minor, which is a contradiction.  
    This completes the proof.  
\end{proof}

Lemma \ref{lem:top-level_contraction_not_type_2} is proved using Lemma \ref{lem:cliques_in_top}, below.  

\begin{lemma}[Properties of CMS in minor of $H^{-1}$]
\label{lem:cliques_in_top}
    Consider a minimal graph-nonedge pair $(G,f)$ with a top-level expanded $f$-separating CMS $C^{-1}$ whose expanded $f$-component is $H^{-1}$.  
    Also, let $G' = G \setminus (H^{-1} \setminus C^{-1})$ be weakly retained in some $f$-separating minor $[G \cup f]$ that is obtained by contracting some edge in $H^{-1}$.  
    If $E$ is a CMS of $[G \cup f]$ that is contained in $[H^{-1}]$, then any $E$-component of $[G \cup f]$ that does not contain some endpoint of $[f]$ does not have an $[f]$-preserving forbidden minor.  
\end{lemma}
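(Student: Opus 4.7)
The plan is to contradict the top-level property of $C^{-1}$. Suppose for contradiction that an $E$-component $K$ of $[G \cup f]$ avoids both endpoints of $[f]$ yet has an $[f]$-preserving forbidden minor, and write $f = uv$. Applying Lemma \ref{cor:minimal_k-clique-sum_component_containing_fbm_and_f} to $K$, I would extract an atom $M$ of $K$ that still witnesses an $[f]$-preserving forbidden minor. Since $[u] \notin K$, I can shrink $E$ to a (possibly equal) subclique $E^{\ast}$ which is, for $[u]$ and every $[x] \in M \setminus E^{\ast}$, a minimal clique $[ux]$-separator of $[G \cup f]$. Thus $(E^{\ast}, M)$ is an $f$-separating pair of $[G \cup f]$; let $L$ denote its $f$-component, i.e., the $E^{\ast}$-component of $[G \cup f]$ containing $[u]$ and $[v]$.

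Next, I would establish that $L \subseteq [H^{-1}]$, from which $L^{-1} \subsetneq H^{-1}$ follows because the non-empty $K \cap [H^{-1}]$ contributes vertices to $H^{-1}$ outside $L^{-1}$. The key observation is that $[C^{-1}]$ is a separator of $[G \cup f]$ isolating $[H^{-1}]$ from its complement, so any path in $[G \cup f] \setminus E^{\ast}$ from $\{[u], [v]\}$ to a vertex outside $[H^{-1}]$ would have to cross $[C^{-1}] \setminus E^{\ast}$. Using the enumeration of $[C^{-1}]$ in Lemma \ref{lem:expanded_clique_graphs_and_connections}(1) together with the atom property of $G \cup f$ and Lemma \ref{lem:expanded_clique_graphs_and_connections}(2), I would rule out the degenerate situation $[C^{-1}] \subseteq E^{\ast}$, and argue in the remaining case that the existence of such a crossing path would yield a strictly smaller expanded $f$-separating pair outright, again contradicting top-level.

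Finally, I would lift $(E^{\ast}, M)$ to an expanded $f$-separating pair of $G \cup f$ whose expanded $f$-component $L^{-1}$ is a proper subgraph of $H^{-1}$, giving the desired contradiction. The main obstacle is controlling the shape of the preimage $(E^{\ast})^{-1}$ in $G \cup f$. Writing $ab$ for the edge of $H^{-1}$ contracted to form $[G \cup f]$, the weak retention of $G' = (G \cup f) \setminus (H^{-1} \setminus C^{-1})$ forces at least one of $a, b$ to lie outside $V(C^{-1})$. When $[a] \notin E^{\ast}$ the preimage is a clique of the same size as $E^{\ast}$ and the lift is routine, but when $[a] \in E^{\ast}$ one has $\{a, b\} \subseteq (E^{\ast})^{-1}$, and the delicate step is to pick a lift of $[a]$ (either $a$ alone, $b$ alone, or both together with the edge $ab$) so that the resulting set is a clique of $G \cup f$; the adjacencies $[a][w]$ for $[w] \in E^{\ast} \setminus \{[a]\}$ in $[G \cup f]$ are each realized by an edge of $G \cup f$ incident to $a$ or $b$, so at least one valid choice exists, and I would make this selection uniformly via a short case analysis on the neighbors of $[a]$ in $E^{\ast}$.
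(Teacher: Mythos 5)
Your overall strategy—contradict the top-level property by manufacturing a smaller expanded $f$-separating pair from $(E^{\ast},M)$—is the same as the paper's, and your opening steps (invoking Lemma~\ref{cor:minimal_k-clique-sum_component_containing_fbm_and_f} to get the atom $M$, shrinking $E$ to a CMS $E^{\ast}$ so that $(E^{\ast},M)$ is an $f$-separating pair) match the paper. But there are two substantive problems.

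\textbf{The case split you gloss over is where the real work lies.} You claim to establish $L \subseteq [H^{-1}]$ from the observation that $[C^{-1}]$ separates $[H^{-1}]$ from its complement, ruling out $[C^{-1}] \subseteq E^{\ast}$ and then asserting ``the existence of such a crossing path would yield a strictly smaller expanded $f$-separating pair outright.'' This does not hold up. If a path from $[u]$ to a vertex outside $[H^{-1}]$ avoids $E^{\ast}$, then the $E^{\ast}$-component $L$ containing $[f]$ extends \emph{outside} $[H^{-1}]$, so the $f$-component of $(E^{\ast},M)$ is not a smaller subgraph of $[H^{-1}]$ at all, and you get no contradiction of the top-level property from that pair. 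The paper instead splits on whether the $E^{\ast}$-component $J'$ containing $M$ also contains $[G']$ (this is guaranteed to be a single well-defined component because $[G']$ is connected after deleting any clique in $[H^{-1}]$, which is exactly what Lemma~\ref{lem:G'_connected} supplies). When $J'$ contains $[G']$, it absorbs both the non-empty set $[C^{-1}] \setminus E^{\ast}$ and everything outside $[H^{-1}]$; since $L$ is a different $E^{\ast}$-component, $L$ avoids all of these, forcing $L \subsetneq [H^{-1}]$. When $J'$ does \emph{not} contain $[G']$, $J'$ is trapped inside $[H^{-1}]$, so $[H^{-1}]$—and hence $H^{-1}$—has an $f$-preserving forbidden minor, contradicting Lemma~\ref{lem:nonedge_C'_no_fbm_H'+f'} (together with Lemma~\ref{prop:C'_leq_4vert}). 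This second case is precisely the ``crossing path'' situation you wave at, and it requires the independent fact that $H^{-1}$ has no $f$-preserving forbidden minor; it cannot be dispatched by producing a smaller expanded pair.

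\textbf{The ``delicate lifting step'' is not needed.} An expanded $f$-separating pair of $G \cup f$ is \emph{defined} as $((E^{\ast})^{-1}, M^{-1})$ for an $f$-separating pair $(E^{\ast},M)$ of an $f$-separating minor; there is no requirement that $(E^{\ast})^{-1}$ be a clique of $G \cup f$ (see Definition~\ref{def:expanded_e-separating_pair}). The only thing to check is that the expanded $f$-component $L^{-1}$ is a proper subgraph of $H^{-1}$, which follows immediately from $L \subsetneq [H^{-1}]$ and the fact that the contracted edge $ab$ lies in $H^{-1}$, so $[H^{-1}]^{-1} = H^{-1}$. Your case analysis on whether $[a] \in E^{\ast}$ and which lift of $[a]$ gives a clique of $G \cup f$ solves a problem that does not arise; if you genuinely needed a clique upstairs, your proposed resolution (choose $a$, $b$, or both) would also be insufficient in general, since an edge $[a][w]$ may be witnessed by an edge at $a$ for one $[w]$ and at $b$ for another. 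Fortunately, none of that is required.

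Finally, a small misreading: the lemma hypothesizes an $E$-component that fails to contain \emph{some} endpoint of $[f]$, not both. Since $[f]$ is an edge of $[G \cup f]$, an $E$-component missing $[u]$ can still contain $[v]$ when $[v] \in E$, and the proof must cover that case too (it does, by always working with the missing endpoint $[u]$).
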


\begin{proof}
    Assume to the contrary that there exists a CMS $E$ of $[G \cup f]$ that is contained in $[H^{-1}]$ and such that some $E$-component $J$ of $[G \cup f]$ that does not contain some endpoint of $[f]$ has an $[f]$-preserving forbidden minor.  
    Wlog, we can assume that $J$ contains $E$.  
    Let $M$ be the atom of $[G \cup f]$ that is contained in $J$ and has an $[f]$-preserving forbidden minor, given by Lemma \ref{cor:minimal_k-clique-sum_component_containing_fbm_and_f}.  
    Then, there exists a subgraph $E'$ of $E$ such that $(E',M)$ is an $f$-separating pair of $[G \cup f]$.  
    We will show that the $f$-component $K$ of $(E',M)$ is a proper subgraph of $[H^{-1}]$.  
    If this is true, then $K^{-1}$ is a proper subgraph of $H^{-1}$, since the edge contracted to obtain $[G \cup f]$ is contained in $H^{-1}$.  
    However, this contradicts the fact that $C^{-1}$ is in the top-level of $G \cup f$, and so the lemma is proved.  

    We complete the proof by showing that the $f$-component $K$ of $(E',M)$ is a proper subgraph of $[H^{-1}]$.  
    Since $J$ contains $E$, $J$ is contained in some $E'$-component $J'$ of $[G \cup f]$.  
    Also, since $G'$ is weakly retained in $[G \cup f]$ and $E'$ is contained in $[H^{-1}]$, Lemma \ref{lem:G'_connected} tells us that $[G']$ is contained in some $E'$-component of $[G \cup f]$.  
    If $J'$ does not contain $[G']$, then $J'$ is contained in $[H^{-1}]$.  
    However, this implies that $H^{-1}$ has an $f$-preserving forbidden minor.  
    Combining this with Lemma \ref{prop:C'_leq_4vert} contradicts Lemma \ref{lem:nonedge_C'_no_fbm_H'+f'}.  
    Therefore, assume that $J'$ contains $[G']$.  
    Since $C^{-1}$ is weakly retained in $[G \cup f]$ and is not a clique, by Lemma \ref{lem:expanded_clique_graphs_and_connections} Statement (1), $[C^{-1}] \setminus E'$ is non-empty.  
    Note that $[C^{-1}] \setminus E'$ is contained in both $J' \setminus E'$ and $[H^{-1}]$.  
    Consequently, we get that $K$ is a proper subgraph of $[H^{-1}]$, as desired.  
\end{proof}

\begin{lemma}[Edges that are not Type (2) in $H^{-1}$]
\label{lem:top-level_contraction_not_type_2}
    Consider a minimal graph-nonedge pair $(G,f)$ with a top-level expanded $f$-separating CMS $C^{-1}$ whose expanded $f$-component is $H^{-1}$.  
    Also, let $u$ be an endpoint of $f$ not in $C^{-1}$ and let $G' = G \setminus (H^{-1} \setminus C^{-1})$.  
    If there exists an edge $ab$ of $H^{-1} \setminus u$ such that $G'$ is weakly retained in the minor obtained by contracting $ab$ and either 
    \begin{enumerate}[(i)]
        \item for any vertex $z$ of $C^{-1} \setminus \{a,b\}$, there exists some path in $H^{-1}$ between $z$ and $u$ that does not contain any vertex of $(C^{-1} \setminus z) \cup \{a,b\}$ or

        \item for any subgraph $X$ of $C^{-1}$ that is a clique and contains $V(C^{-1}) \cap \{a,b\}$ and for any connected component $Y$ of $C^{-1} \setminus X$, there exists some path in $H^{-1}$ between $u$ and some vertex in $Y$ that does not contain any vertex of $(C^{-1} \setminus Y) \cup \{a,b\}$,
    \end{enumerate}
    then $ab$ is not of Type (2).  
\end{lemma}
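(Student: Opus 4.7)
The plan is to argue by contradiction. Assume $ab$ is of Type (2), and write $[G\cup f]$ for the minor obtained by contracting $ab$. Then $[G\cup f]$ has some $[f]$-preserving forbidden minor while no atom of $[G\cup f]$ containing both endpoints of $[f]$ has such a minor. First I will invoke Lemma \ref{cor:minimal_k-clique-sum_component_containing_fbm_and_f}: some $[f]$-preserving forbidden minor of $[G\cup f]$ is an extension of a minor of an atom $M$ of $[G\cup f]$. By the Type (2) assumption, $M$ cannot contain both $[u]$ and $[v]$; since $u\notin\{a,b\}$, the vertex $[u]$ survives in $[G\cup f]$, and without loss of generality $[u]\notin V(M)$.

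The crux of the argument will be to exhibit a CMS $E$ of $[G\cup f]$, contained entirely in $[H^{-1}]$, that separates $M$ from $[u]$. Granting this, the $E$-component containing $M$ does not contain $[u]$; since $G'$ is weakly retained in $[G\cup f]$ by hypothesis, Lemma \ref{lem:cliques_in_top} applies to deny this component an $[f]$-preserving forbidden minor, contradicting the presence of one in $M$. To produce such an $E$, I would take any CMS separating $M$ from $[u]$ (which exists because $M$ and $[u]$ lie in distinct atoms) and argue by contradiction that it already lies in $[H^{-1}]$. If not, $E$ meets $[G']\setminus [C^{-1}]$, and $X := E\cap [C^{-1}]$ is a clique strictly inside $[C^{-1}]$ (proper containment since $[C^{-1}]$ is not a clique by Lemmas \ref{lem:expanded_clique_graphs_and_connections} and \ref{prop:C'_leq_4vert}, combined with the fact that weak retainedness preserves the nonedges of $C^{-1}$).

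When $[ab]\notin [C^{-1}]$, I would choose any $z\in [C^{-1}]\setminus X$ and invoke hypothesis (i) to obtain a path in $H^{-1}$ from $u$ to $z$ that avoids $(C^{-1}\setminus z)\cup\{a,b\}$; after contraction of $ab$ this becomes a path in $[H^{-1}]\setminus E$ from $[u]$ to $z$. When $[ab]\in [C^{-1}]$, separation forces $[ab]\in X$, and hypothesis (ii), applied to the clique $X$ and the connected component $Y$ of $[C^{-1}]\setminus X$ lying on the same side of $E$ as $M$, supplies the analogous path from $[u]$ to some vertex of $Y$ in $[H^{-1}]\setminus E$. In either case, concatenating this path with one inside $[G']\setminus E$ from $z$ (or from the endpoint in $Y$) into the $E$-component containing $M$ will yield a path in $[G\cup f]\setminus E$ from $[u]$ to $M$'s $E$-component, contradicting that $E$ is a separator.

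The main obstacle I anticipate is the second concatenation step: constructing the path inside $[G']\setminus E$ from the vertex $z$ or from $Y$ across $[C^{-1}]$ into $M$'s $E$-component. Since $E$ is a clique and $X\subsetneq [C^{-1}]$, removing $E$ from $[G']$ does not isolate $[C^{-1}]\setminus X$ from the rest, but one must carefully verify that the particular vertex delivered by (i) or (ii) lies on the correct side of $E$ in $[G']$ to link up with $M$'s component; this will rely on the fact that $G\cup f$ is an atom, together with the weakly retained structure of $G'$, to ensure the relevant portion of $[G']\setminus E$ remains connected to $M$ through $[C^{-1}]\setminus X$. Once this connectivity is established, the contradiction-based argument together with Lemma \ref{lem:cliques_in_top} will close out the proof.
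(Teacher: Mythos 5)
Your overall architecture is different from the paper's in a way that creates a genuine, and in my view unrepairable, gap at exactly the step you flag as the ``main obstacle.'' Let me pin it down.

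First, the side issue: ``without loss of generality $[u]\notin V(M)$'' is not a symmetric choice here. Hypotheses (i) and (ii) provide paths from $u$ only, and $[u]\in [H^{-1}]\setminus[C^{-1}]$ whereas the other endpoint $[v]$ may lie in $[C^{-1}]$. If instead $[u]\in V(M)$ and $[v]\notin V(M)$, nothing in your toolkit produces a path from $[v]$. The paper avoids this by invoking Lemma~\ref{lem:separating_minor_has_separating_pair} to obtain an $f$-separating pair $(E,M)$, and then deducing $[u]\notin E$ and $[u]\notin I$ from $E\subseteq[G']$; that deduction uses the geometry of $[C^{-1}]$ and is not a free WLOG.

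Second, and more fundamentally, the concatenation step is doomed. Once you are in the case $E\not\subseteq[H^{-1}]$, the clique $E$ sits inside $[G']$, so $E\cap[H^{-1}]\subseteq[C^{-1}]$. The paths furnished by (i) or (ii) live in $H^{-1}$, avoid $(C^{-1}\setminus Y)\cup\{a,b\}$, and hence their images in $[G\cup f]$ avoid $E$. That means every vertex of $[C^{-1}]\setminus E$ those paths reach is in the \emph{same} $E$-component as $[u]$; and since each connected component $Y$ of $C^{-1}\setminus X$ maps into a connected subgraph of $[C^{-1}]\setminus E$, the entirety of $[C^{-1}]\setminus E$ lands on $[u]$'s side of $E$. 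So there is \emph{no} component $Y$ ``lying on the same side of $E$ as $M$'' for you to choose, and no path from your $z$ into $M$'s $E$-component exists. You cannot show that $E$ fails to separate $[u]$ from $M$ --- that statement is simply false. The correct conclusion from the hypotheses is that the $E$-component $I$ containing $M$ satisfies $I\setminus E\subseteq[H^{-1}]\setminus[C^{-1}]$ (this is what the paper proves, via its ``$I^{-1}\subseteq H^{-1}\cup A$'' step), and the contradiction is obtained not by defeating the separation but by defeating \emph{minimality}: since any path from $I\setminus E$ to $[u]$ must cross $[C^{-1}]$ before it can touch $E\setminus[C^{-1}]$, the proper subclique $E\cap[C^{-1}]$ already separates $[u]$ from $I$, contradicting that $E$ is a minimal $[uz]$-separator. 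That minimality comes from the $f$-separating pair structure of Lemma~\ref{lem:separating_minor_has_separating_pair}; an arbitrary CMS separating two atoms, as you take, need not be a minimal $[uz]$-separator in the relevant sense, so even the paper's endgame is unavailable to your $E$ without reworking the choice of $E$.

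A minor additional point: tying hypothesis (i) to the case $[ab]\notin[C^{-1}]$ and (ii) to $[ab]\in[C^{-1}]$ conflates a disjunction in the lemma's hypotheses with a disjunction in the proof's case split --- only one of (i) or (ii) is assumed, so you should work uniformly with (ii), which (i) implies. (In fact, when $[ab]\notin[C^{-1}]$ both $a,b\in H^{-1}\setminus C^{-1}$, so $[ab]\in[H^{-1}]\setminus[C^{-1}]$, and since every CMS of $[G\cup f]$ must contain $[ab]$ while $E\subseteq[G']$, that case forces $E\subseteq[H^{-1}]$ outright, which you could have used to close the case directly.)

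To fix the proof, replace the generic CMS with the $f$-separating pair $(E,M)$ from Lemma~\ref{lem:separating_minor_has_separating_pair}, establish $E\subseteq[G']$ via Lemma~\ref{lem:cliques_in_top} plus the clique/separator geometry, then use (i)/(ii) to push all of $[G']\setminus A$ into the $E$-component of $[u]$, and finally invoke minimality of $E$ as an $[uz]$-separator against the subclique $E\cap[C^{-1}]$.
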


\begin{proof}
    Assume to the contrary that some edge $ab$ as described in the Lemma statement is of Type (2).  
    Then, the minor $[G \cup f]$ obtained by contracting $ab$ is $f$-separating, and hence has an $f$-separating pair $(E,M)$, by Lemma \ref{lem:separating_minor_has_separating_pair}.  
    We first make some observations about $E$ and the $E$-component $I$ of $[G \cup f]$ that contains $M$.  
    By definition, $I$ does not contain some endpoint of $f$ and has an $[f]$-preserving forbidden minor.  
    Hence, Lemma \ref{lem:cliques_in_top} applies and tells us that $E$ is not contained in $[H^{-1}]$.  
    Also, since $u$ is not an endpoint of $ab$, $[u]$ is contained in $[H^{-1}] \setminus [C^{-1}]$.  
    Hence, $[G \cup f] \setminus [C^{-1}]$ is disconnected.  
    Since $E$ is a clique that is not contained in $[H^{-1}]$, this implies that $E$ is contained in $[G']$.  
    Consequently, neither $E$ nor $I$ contains $[u]$.  

    Next, by the definition of $(E,M)$, for some endpoint $[v]$ of $f$ is and each vertex $[x]$ of $M \setminus E$, $E$ is a minimal $[vx]$-separator of $[G \cup f]$.  
    Since $[f]$ is an edge of $[G \cup f]$ and $I$ does not contain $[u]$, this true when we set $v=u$.  
    Furthermore, since $I \setminus E$ is connected, we get that $E$ is a minimal $[uy]$-separator of $[G \cup f]$ for each vertex $[y]$ of $I$.  
    We complete the proof by contradicting this statement.  
    
    First, letting $X = E^{-1} \cap C^{-1}$, we show that $C^{-1} \setminus X$ is non-empty.  
    Since $G \cup f$ is an atom, $E^{-1}$ must contain $ab$.  
    Let $A = E^{-1} \cap G'$.  
    The above facts show that $E^{-1} = A \cup a$.  
    Since $G'$ is weakly retained in $[G \cup f]$, $A$ is a clique.  
    Consequently, $X$ is a clique.  
    Also, since $C^{-1}$ is contained in $G'$ and is not a clique, by Lemma \ref{prop:C'_leq_4vert}, $C^{-1} \setminus X$ is non-empty.  

    Second, we show that $I^{-1}$ is contained in $H^{-1} \cup A$.  
    Consider any vertex $w$ of $G \cup f$ and let $J^{-1}$ be the expanded $f$-component of $E^{-1}$.  
    Note that $X$ contains $V(C^{-1}) \cap \{a,b\}$.  
    If $w$ is contained in $C^{-1} \setminus X$, then Statements (i) and (ii) and the properties of $E^{-1}$ show that $x$ is contained in $J^{-1}$.  
    Else, if $x$ is contained in $G' \setminus (C^{-1} \cup A)$, then, since $G \cup f$ is an atom, $A$ is a clique, and $C^{-1} \setminus X$ is non-empty, there exists some path in $G'$ between $x$ and some vertex $z$ in $C^{-1} \setminus X$ that does not contain any vertex in $A$.  
    Since we have shown that $z$ is contained in $J^{-1}$, we get that $x$ is contained in $J^{-1}$.  
    Therefore, the only vertices potentially not in $J^{-1}$ are those in $H^{-1} \cup A$.  

    Finally, let $[z]$ be any vertex of $I \setminus E$.  
    Since $I^{-1}$ is contained in $H^{-1} \cup A$ and each vertex of $C^{-1} \setminus X$ is contained in $J^{-1}$, $[z]$ is contained in $[H^{-1}] \setminus [C^{-1}]$.  
    Furthermore, since $[C^{-1}]$ is a separator of $[G \cup f]$ and every path between $[z]$ and $[u]$ contains some vertex in $[E]$, by definition, every path between $[z]$ and $[u]$ contains some vertex in $E \cap [C^{-1}]$.  
    Thus, since $E \cap [C^{-1}]$ is both a clique and a proper subgraph of $E$, $E$ is not a minimal clique $[uz]$-separator, which is a contradiction.  
    This completes the proof.  
\end{proof}


\begin{lemma}[Cases for $H^{-1}$ when $C^{-1}$ is graph in Figure \ref{fig:exp_clique_sep_k2} and contains endpoint of $f$]
    \label{lem:3top_uin}
    Let $C^{-1}$ be a top-level expanded $f$-separating CMS of a minimal graph-nonedge pair $(G,f)$, and let its expanded $f$-component be $H^{-1}$.
    If $C^{-1}$ is the graph in Figure \ref{fig:exp_clique_sep_k2} and contains exactly one endpoint of $f$, then $H^{-1}$ is one of the graphs in Figures \ref{fig:3top_vin_uonly_2}, \ref{fig:3top_vin_uplus_2}, or \ref{fig:3top_vin_uonly_1} with $f$ as the red edge and $C^{-1}$ as the subgraph induced by $\{w_i\}$.  
\end{lemma}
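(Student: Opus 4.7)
\textbf{Proof proposal for Lemma \ref{lem:3top_uin}.}
Let $f = uv$ with $v$ the endpoint of $f$ in $C^{-1}$ and $u$ the endpoint outside $C^{-1}$, and let $w$ be the unique vertex of $C^{-1}$ other than $v$; note that $vw$ is an edge since $C^{-1}$ is the graph in Figure \ref{fig:exp_clique_sep_k2}. The plan is to (i) use Lemma \ref{cor:top-level_structure} to tightly constrain the possible vertices and edges of $H^{-1}\setminus\{u,v,w\}$, (ii) enumerate the finite list of graphs for $H^{-1}$ consistent with those constraints together with the hypotheses that $G\cup f$ is an atom, and (iii) eliminate each non-target candidate by exhibiting either an $f$-preserving forbidden minor of $H^{-1}$ (contradicting Lemma \ref{lem:nonedge_C'_no_fbm_H'+f'}) or a Type (2) edge certified non-Type (2) by Lemma \ref{lem:top-level_contraction_not_type_2} (contradicting that $C^{-1}$ is in the top level, via Lemma \ref{lem:type_2_yields_smaller_expanded_f-component}).

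First, applying Lemma \ref{cor:top-level_structure} with $C^{-1}\setminus\{u,v\}=\{w\}$, every vertex $x\in H^{-1}\setminus\{u,v,w\}$ has neighborhood contained in $\{u,v,w\}$ and must be adjacent to $w$. Thus each such $x$ has one of the three neighborhoods $\{u,w\}$, $\{v,w\}$, $\{u,v,w\}$ (the neighborhood $\{w\}$ is ruled out because $G\cup f$ is an atom, else $\{w\}$ separates $x$). Moreover, since $G\cup f$ is an atom, $w$ must have a neighbor outside $C^{-1}$, and the atom condition forces at least one vertex $x$ of each ``type'' that is necessary to avoid $\{v,w\}$ or $\{u,w\}$ or $\{u,v\}$ being a clique separator. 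So $H^{-1}$ is determined up to the multiset of degree-2 and degree-3 ``decoration'' vertices around $\{u,v,w\}$, together with whether $uw$ and $uv$ are present (recall $uv = f$ is the red nonedge/edge).

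Second, I will run the elimination on this finite case list. Each remaining candidate structure can be described by the numbers $n_{uw}, n_{vw}, n_{uvw}$ of decorations of each type and by which of $uv,uw$ are edges. A candidate is eliminated by Lemma \ref{lem:nonedge_C'_no_fbm_H'+f'} as soon as one can exhibit an $f$-preserving $K_5$ or $K_{2,2,2}$ minor inside $H^{-1}$ (this is the only way $H^{-1}$ itself can already carry a forbidden minor on five fewer vertices, and it rules out all cases with $n_{uvw}\ge 2$ or with simultaneous presence of several degree-3 decorations and the edge $uw$). For the remaining candidates, I check whether there is an edge $ab$ in $H^{-1}\setminus u$, with $G' = G\setminus (H^{-1}\setminus C^{-1})$ weakly retained in the minor obtained by contracting $ab$, that satisfies hypothesis (i) or (ii) of Lemma \ref{lem:top-level_contraction_not_type_2}; if so, contracting $ab$ cannot be Type (2), but by Lemma \ref{lem:type_2_yields_smaller_expanded_f-component} any edge of $(H^{-1}\setminus f)\setminus C^{-1}$ must be of Type (1), (3), or (4) (the reducing type), and since $(G,f)$ is minimal we get a contradiction. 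A careful case-by-case verification should show that precisely the three graphs of Figures \ref{fig:3top_vin_uonly_2}, \ref{fig:3top_vin_uplus_2}, and \ref{fig:3top_vin_uonly_1} survive both tests.

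The main obstacle will be case (ii) of Lemma \ref{lem:top-level_contraction_not_type_2}: verifying for each surviving candidate that the ``path-to-$u$ avoiding $C^{-1}\setminus Y$ and $\{a,b\}$'' condition really fails, so that the candidate genuinely extends to a minimal pair; and conversely that for every non-surviving candidate such a path (or the ``reducing'' path of (i)) always exists. This is tedious because the choice of $ab$ depends on the combinatorial type of each decoration and on whether $uw$ is an edge, and because one must also verify that $G'$ remains weakly retained after the contraction, which in turn requires checking the atom property of $G'$ via Lemma \ref{lem:gluing_min_k-clique-sum_graphs} on a few subcases. I expect the proof to reduce to roughly the same number of micro-cases as in the proof of Lemma \ref{prop:C'_leq_4vert}, with the bookkeeping organized by the pair $(n_{vw}+n_{uvw},\, \mathbf 1[uw\in E])$.
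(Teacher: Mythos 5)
Your plan is built on a misreading of Figure~\ref{fig:exp_clique_sep_k2}. You write ``let $w$ be the unique vertex of $C^{-1}$ other than $v$,'' treating $C^{-1}$ as a $K_2$. But the ``$k2$'' in the figure label refers to the (unexpanded) CMS $C$ in the $f$-separating \emph{minor} being $K_2$; the \emph{expanded} CMS $C^{-1}$ in $G\cup f$ has three vertices $w_1,w_2,w_3$ with edges $w_1w_2$ and $w_2w_3$ and nonedge $w_1w_3$. (Indeed, Lemma~\ref{lem:expanded_clique_graphs_and_connections} explicitly notes $C^{-1}$ is never a clique, and the target graphs in Figures~\ref{fig:3top_vin_uonly_2}, \ref{fig:3top_vin_uplus_2}, \ref{fig:3top_vin_uonly_1} all have $C^{-1}$ induced by a three-vertex set $\{w_i\}$.) So $C^{-1}\setminus\{v\}$ has two vertices, not one.

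This is not a cosmetic issue; it breaks your enumeration. With $C^{-1}\setminus\{v\}$ having two vertices, there are more possible decoration types than your three $n_{uw},n_{vw},n_{uvw}$, and more importantly the structure of $C^{-1}$ is \emph{asymmetric}: the middle vertex $w_2$ is adjacent to both others, while the ends $w_1,w_3$ are not adjacent to each other. The paper's proof exploits exactly this asymmetry and must split into the case ``$v$ is an end ($w_1$ or $w_3$)'' (which yields Figure~\ref{fig:3top_vin_uplus_2}) versus ``$v$ is the middle $w_2$'' (which the paper shows is \emph{impossible} by deriving an $f$-preserving forbidden minor of $H^{-1}\cup w_1w_3$, contradicting Lemma~\ref{lem:nonedge_C'_no_fbm_H'+f'}). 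Your framework, having collapsed $C^{-1}\setminus\{v\}$ to a single vertex $w$, cannot even express that distinction, so a ``careful case-by-case verification'' in your setup would not converge on the three claimed graphs. You also implicitly assert $vw$ is always an edge; in the middle case $v=w_2$ is adjacent to both others, but in the end case $v=w_1$, the pair $vw_3$ is a nonedge of $C^{-1}$, so the claim fails there too.

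The general toolkit you cite — Lemma~\ref{cor:top-level_structure} to constrain neighborhoods, Lemma~\ref{lem:nonedge_C'_no_fbm_H'+f'} to rule out $f$-preserving forbidden minors inside $H^{-1}$, and Lemma~\ref{lem:top-level_contraction_not_type_2} to certify an edge is not Type~(2) and hence reducing — is indeed the toolkit the paper uses, and the first step (``if $u$ is the only vertex of $H^{-1}\setminus C^{-1}$, we get Figure~\ref{fig:3top_vin_uonly_2} or~\ref{fig:3top_vin_uonly_1} directly from the atom hypothesis'') is the paper's first step as well. But because the base structure in your write-up is wrong, you would need to redo the enumeration on the correct three-vertex $C^{-1}$, carry out both subcases for the location of $v$, and in the $v=w_2$ subcase actually construct the $f$-preserving forbidden minor of $H^{-1}\cup w_1w_3$ rather than appeal to a generic sieve. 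I would also flag that your ``candidate genuinely extends to a minimal pair'' sanity check is not something the proof needs to establish: the lemma only claims $H^{-1}$ is among the listed graphs, not that each listed graph actually arises, so you need only eliminate, not realize.
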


\begin{proof}
    Let $f=uv$, where $u$ is not contained in $C^{-1}$.  
    If $u$ is the only vertex of $H^{-1} \setminus C^{-1}$, then the fact that $G \cup f$ is an atom implies that $H^{-1}$ is one of the graphs in Figures \ref{fig:3top_vin_uonly_2} or \ref{fig:3top_vin_uonly_1} and has the desired properties.  
    Otherwise, we will consider the two cases for $v$ in $C^{-1}$.

    \smallskip
    \noindent\textbf{Case 1:} $v$ is either $w_1$ or $w_3$.  
    \smallskip

    We will show that some $H^{-1}$ is the graph in Figure \ref{fig:3top_vin_uplus_2} has the desired properties.  
    Wlog, let $v = w_1$.  
    By Lemma \ref{cor:top-level_structure}, each vertex of $H^{-1} \setminus (C^{-1} \setminus u)$ neighbors $u$, $w_1$, and at least one vertex of $C^{-1} \setminus w_1$.  
    First we show that no such vertex neighbors $w_2$.  
    If this is not the case, then let $x$ be any such vertex that neighbors $w_2$.  
    We will show that $xw_2$ is reducing with respect to $(G,f)$, which contradicts the minimality of this pair.  
    Since $f$ is retained in the minor $[G \cup f]$ obtained by contracting $xw_2$, $xw_2$ is not of Type (3).  
    To see that $xw_2$ is not of Type (1) or (2), consider the $f$-separating minor $[G \cup f]_1$ obtained by contracting some edge in $\{w_1w_2,w_2w_3\}$ and that has $C$ as one of its $f$-separating CMSs.  
    Since $G \cup f$ is an atom, there exists some path in $G \cup f$ from $u$ to $w_3$ that does not contain any vertex in $\{x,w_1,w_2\}$.  
    Also, the definition of $C$ shows that some $C$-component of $[G \cup f]_1$ that does not contain $[f]_1$ has an $[f]_1$-preserving forbidden minor.  
    By combining these facts, it is easy to see that $[G \cup f]$ has an $[f]$-preserving forbidden minor, and so $xw_2$ is not of Type (1).  
    Lastly, observe that $G' = (G \cup f) \setminus (H^{-1} \setminus C^{-1})$ is weakly retained in $[G \cup f]$.  
    Furthermore, Statement (i) of Lemma \ref{lem:top-level_contraction_not_type_2} is true if we set $ab = xw_2$.  
    Consequently, this lemma shows that $xw_2$ is not of Type (2), and so the above facts show that $xw_2$ is reducing.  

    Next, we show that $u$ neighbors $w_2$.  
    Assume this is not the case.  
    The above argument shows that the neighborhood of any vertex $x$ in $H^{-1} \setminus (C^{-1} \setminus u)$ is $\{u,w_1,w_3\}$.  
    Since $f$ is retained in the minor $[G \cup f]$ obtained by contracting $xw_3$, $xw_3$ is not of Type (3).  
    Also, since some $C$-component of $[G \cup f]_1$ that does not contain $[f]_1$ has an $[f]_1$-preserving forbidden minor, $[G \cup f]$ clearly has an $[f]$-preserving forbidden minor, and so $xw_3$ is not of Type (1).  
    Hence, $xw_3$ must be of Type (2).  
    Let $E$ be the subgraph of $[G \cup f]$ induced by $\{[w_1],[w_3]\}$ and let $I^{-1}$ be the expanded minor component of $C^{-1}$.  
    Observe that $I^{-1}$ is preserved in $[G \cup f]$ and $E$ is a $[uy]$-separator for any vertex $[y]$ of $[I^{-1}] \setminus E$.  
    Since $I^{-1}$ has an $f$-preserving forbidden minor, Lemma \ref{cor:minimal_k-clique-sum_component_containing_fbm_and_f} can be used to show that there exists an atom $M$ of $[G \cup f]$ that has an $[f]$-preserving forbidden minor and such that $E$ is a $[uy]$-separator for any vertex $[y]$ of $M \setminus E$.  
    Furthermore, since $G \cup f$ is an atom and $E$ is a clique separator of $[G \cup f]$ that contains exactly two vertices, it is easy to see that $E$ is a minimal $[uy]$-separator for any vertex $[y]$ of $M \setminus E$.  
    Therefore, $(E,M)$ is an $f$-separating pair of $[G \cup f]$.  
    However, note that the $[f]$-component of $E$ is $[H^{-1}] \setminus [w_2]$, and so the expanded $f$-component of $E^{-1}$ is $H^{-1} \setminus w_2$.  
    This shows that $C^{-1}$ is not in the top-level of $G \cup f$, which is a contradiction.  

    Finally, we show that either $u$ neighbors $w_3$ or $H^{-1} \setminus (C^{-1} \setminus u)$ contains at least two vertices, which completes the proof.  
    If this is not the case, then $H^{-1} \setminus (C^{-1} \setminus u)$ contains exactly one vertex $x$ and all paths in $H^{-1}$ between $u$ and $w_3$ contain some vertex in $\{x,w_1,w_2\}$.  
    If $w_1w_2$ is contracted in $[G \cup f]_1$, then $C^{-1}$ is clearly not in the top-level of $G \cup f$, which is a contradiction.  
    Otherwise, we arrive at a similar contradiction by considering the minor $[G \cup f]$ obtained by contracting $xw_1$ and the subgraph of $[G \cup f]$ induced by the set $\{[w_1],[w_2]\}$.  

    \smallskip
    \noindent\textbf{Case 2:} $v$ is $w_2$.  
    \smallskip
    
    We will show this case is not possible.  
    Consider the case where some vertex $x$ of $H^{-1} \setminus (C^{-1} \setminus u)$ neighbors each vertex of $C^{-1}$.  
    Since $G \cup f$ is an atom, there exists a path in $H^{-1}$ between $u$ and $w_1$ that does not contain any vertex in $\{x,w_2,w_3\}$.  
    Similarly, there exists a path in $H^{-1}$ between $u$ and $w_3$ that does not contain any vertex in $\{x,w_1,w_2\}$.  
    Hence, $H^{-1} \cup w_1w_3$ has an $f$-preserving $K_5$ minor, which contradicts Lemma \ref{lem:nonedge_C'_no_fbm_H'+f'}.  

    Next, consider the case where no vertex of $H^{-1} \setminus (C^{-1} \setminus u)$ neighbors each vertex of $C^{-1}$.  
    A similar argument as in Case 1 shows that no vertex of $H^{-1} \setminus (C^{-1} \setminus u)$ neighbors either $w_1$ or $w_3$.  
    However, this implies that some vertex in $H^{-1} \setminus (C^{-1} \setminus u)$ neighbors only $u$ and $v$, which contradicts Lemma \ref{cor:top-level_structure}.  

    \smallskip
    The cases above are exhaustive, and so the lemma is proved.  
\end{proof}

\begin{lemma}[$C^{-1}$ contains endpoint of $f$ if it is graph in Figure \ref{fig:exp_clique_sep_k2}]
    \label{lem:3top_uout}
    If any top-level expanded $f$-separating CMS $C^{-1}$ of a minimal graph-nonedge pair $(G,f)$ is the graph in Figure \ref{fig:exp_clique_sep_k2}, then $C^{-1}$ contains some endpoint of $f$.  
\end{lemma}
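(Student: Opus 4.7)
The plan is to argue by contradiction. Assume $(G,f)$ is a minimal graph-nonedge pair whose top-level expanded $f$-separating CMS $C^{-1}$ is the three-vertex path $w_1{-}w_2{-}w_3$ of Figure \ref{fig:exp_clique_sep_k2}, but neither endpoint of $f=uv$ lies in $\{w_1,w_2,w_3\}$. From this I will produce either a reducing edge of $G$ (contradicting minimality) or a strictly smaller expanded $f$-separating CMS (contradicting the top-level choice of $C^{-1}$). Two ingredients drive the whole argument: Lemma \ref{cor:top-level_structure}, which forces every vertex of $H^{-1} \setminus (C^{-1} \cup \{u,v\})$ to have neighborhood exactly $\{u,v\}$ plus at least one $w_i$, while $u$ and $v$ may additionally be adjacent to vertices of $C^{-1}$; and Lemma \ref{lem:nonedge_C'_no_fbm_H'+f'}(1), which says that $H^{-1} \cup w_1w_3$ admits no $f$-preserving forbidden minor -- this is the combinatorial constraint I will use to rule out ``too dense'' configurations of $H^{-1}$.

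I will split on whether $H^{-1} \setminus (C^{-1} \cup \{u,v\})$ contains an auxiliary vertex. If such an $x$ exists, it is adjacent to $u$, $v$, and (WLOG) to $w_1$. Consider the minor $[G \cup f]$ obtained by contracting $xw_1$: since $x,w_1 \notin \{u,v\}$, $f$ is retained, so the contracted edge is not of Type (3); and Lemma \ref{lem:I'_fbm_after_f-component_contraction} supplies an $[f]$-preserving forbidden minor in $[I^{-1}]$, so the edge is not of Type (1). To rule out Type (2) I will invoke Lemma \ref{lem:top-level_contraction_not_type_2} with $ab=xw_1$, verifying hypothesis (i): for each $z \in \{w_2,w_3\}$ I will exhibit a path in $H^{-1}$ from $z$ to $u$ that avoids $\{w_1,x\} \cup (C^{-1}\setminus z)$. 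Such a path exists provided $u$ has a direct neighbor in $\{w_2,w_3\}$ or a second auxiliary vertex $y$ is adjacent to the appropriate $w_i$; in either case the atom property of $G \cup f$ supplies the connectivity. This makes $xw_1$ reducing, which contradicts minimality.

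If no auxiliary vertex exists, then $H^{-1} = \{u,v,w_1,w_2,w_3\}$ with edges $uv$, $w_1w_2$, $w_2w_3$, together with a subset of edges from $\{u,v\}$ to $\{w_1,w_2,w_3\}$. The atom hypothesis for $G \cup f$ forces $u$ (and symmetrically $v$) to be adjacent to at least two vertices of $C^{-1}$; otherwise either a singleton or a two-element subset of $\{v, w_i\}$ becomes a clique separator of $G \cup f$. A short case check shows that in every remaining subcase the graph $H^{-1} \cup w_1w_3$ contains either a $K_5$ on $\{u,v,w_1,w_2,w_3\}$ (when six $\{u,v\}$-to-$C^{-1}$ edges are present) or a $K_5$ minor obtained by merging $u$ with a $w_i$ along a missing edge (using $f$ and the path $w_1w_2w_3$), in both cases with $f=uv$ preserved, contradicting Lemma \ref{lem:nonedge_C'_no_fbm_H'+f'}(1).

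The main obstacle is the mixed subcase where exactly one auxiliary vertex $x$ is present and one of $u,v$ (say $u$) has no direct neighbor in $\{w_2,w_3\}$. The path hypothesis of Lemma \ref{lem:top-level_contraction_not_type_2} then fails for $z=w_3$ through $H^{-1}$, and the naive contraction of $xw_1$ does not immediately yield a reducing edge. Here I will adapt the symmetric trick used in Case 1 of the proof of Lemma \ref{lem:3top_uin}: instead of $xw_1$, contract the edge $uw_1$ (which must exist since otherwise removing $\{x\}$ from $G \cup f$ disconnects $u$, contradicting atom-ness), verify that $f$ remains retained because $vw_1$ is a nonedge, and then reapply Lemma \ref{lem:top-level_contraction_not_type_2} with $ab=uw_1$ and hypothesis (ii). Once every subcase is closed the lemma follows.
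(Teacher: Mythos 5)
Your overall plan — argue by contradiction, split on the presence of an auxiliary vertex in $H^{-1} \setminus (C^{-1} \cup \{u,v\})$, and in each branch produce either a reducing edge or a strictly smaller expanded $f$-separating CMS — mirrors the paper's approach. But several concrete steps are wrong or leave real gaps, so as written the argument does not go through.

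First, the invocation of Lemma \ref{lem:I'_fbm_after_f-component_contraction} to conclude that contracting $xw_1$ is not of Type (1) is a misapplication. That lemma applies only to edges of $(H^{-1} \setminus f) \setminus C^{-1}$, i.e.\ edges whose \emph{both} endpoints lie outside $C^{-1}$. By Lemma \ref{cor:top-level_structure} an auxiliary vertex $x$ is adjacent only to $u$, $v$, and vertices of $C^{-1}$, so the only such edges incident to $x$ are $xu$ and $xv$. The edge $xw_1$ has $w_1 \in C^{-1}$ and is not covered; the paper instead establishes the non-Type-(1) property via an explicit path argument (see Case 1 of Lemma \ref{lem:3top_uin}), which you would need to reproduce.

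Second, the ``WLOG adjacent to $w_1$'' collapses cases that are not symmetric. The automorphism group of the path $w_1{-}w_2{-}w_3$ swaps $w_1 \leftrightarrow w_3$ but fixes $w_2$, so the configuration where an auxiliary $x$ is adjacent only to $w_2$ is genuinely different and needs its own treatment (the paper's Case 3, which argues a top-level contradiction by exhibiting a smaller expanded CMS). Likewise the case where $x$ is adjacent to \emph{both} $w_1$ and $w_3$ but not $w_2$ (paper's Case 5) does not reduce to the case where $x$ is adjacent to $w_1$ alone.

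Third, the no-auxiliary branch contains two false claims. (a) The atom property does \emph{not} force $u$ to have at least two neighbors in $C^{-1}$: if $u$'s sole $C^{-1}$-neighbor is $w_i$ and $vw_i$ is a nonedge, then $\{v, w_i\}$ is not a clique, so no clique separator arises. Indeed the paper's own Cases 1 and 2 treat configurations in which one of $u,v$ has exactly one neighbor in $C^{-1}$. (b) The assertion that $H^{-1} \cup w_1w_3$ always has an $f$-preserving $K_5$ minor ``obtained by merging $u$ with a $w_i$ along a missing edge'' is impossible: in the no-auxiliary case $H^{-1} \cup w_1w_3$ has exactly five vertices, a $K_{2,2,2}$ minor is thus out of the question, and a five-vertex graph has a $K_5$ minor if and only if it \emph{is} $K_5$ — which requires all six edges from $\{u,v\}$ to $C^{-1}$ to be present. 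In the sparser subcases where some of those edges are missing, Lemma \ref{lem:nonedge_C'_no_fbm_H'+f'} is of no direct use; the paper instead exhibits a reducing edge (via Lemma \ref{lem:top-level_contraction_not_type_2} with hypothesis (ii)) or a top-level contradiction by constructing a proper smaller expanded $f$-separating CMS. Your ``mixed subcase'' patch (contracting $uw_1$) also silently assumes $vw_1$ is a nonedge without ruling out the alternative, in which case $f$ would be doubled rather than retained and $uw_1$ would become Type (3). These are not presentational issues; they are gaps in the core case analysis that the paper's proof is built to close.
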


\begin{proof}
    Assume to the contrary that $C^{-1}$ does not contain any endpoint of $f$.  
    Let $f=uv$ and let $H^{-1}$ be the expanded $f$-component of $C^{-1}$.  
    There are five cases.  

    \smallskip
    \noindent\textbf{Case 1:} $H^{-1} \setminus (C^{-1} \setminus \{u,v\})$ is empty and either $u$ or $v$ neighbors $w_2$.  
    \smallskip

    Wlog, assume that $v$ neighbors $w_2$.  
    Since $G \cup f$ is an atom, $u$ neighbors some vertex in $\{w_1,w_3\}$, wlog say $w_1$, and either $u$ or $v$ neighbors $w_3$.  
    Consider the case where $v$ neighbors $w_3$.  
        If $v$ neighbors $w_1$, then $u$ must neighbor $w_3$.  
            If, additionally, $u$ neighbors $w_2$, then $H^{-1} \cup w_1w_3$ has an $f$-preserving forbidden minor, which contradicts Lemma \ref{lem:nonedge_C'_no_fbm_H'+f'}.  
            Otherwise, an argument as in Case 1 of Lemma \ref{lem:3top_uin} shows that $vw_2$ is reducing with respect to $(G,f)$, which contradicts the minimality of this pair.  
        Therefore, assume $v$ does not neighbor $w_1$.  
            If, additionally, $u$ does not neighbor $w_3$, then a similar argument shows that $uw_1$ is reducing, which is a contradiction.  
            Otherwise, 
                if $u$ neighbors $w_2$, then $\{u,w_2,w_3\}$ is a clique separator of $G \cup f$, which is a contradiction.  
                Else, a similar argument shows that $vw_2$ is reducing, which is a contradiction.  
    
    Next, consider the case where $u$ neighbors $w_3$ and $v$ does not.  
        If $u$ neighbors $w_2$, then the $\{u,w_1,w_2\}$ is a clique separator of $G \cup f$, which is a contradiction.  
        Otherwise, a similar argument shows that $vw_2$ is reducing, which is a contradiction.  

    \smallskip
    \noindent\textbf{Case 2:} $H^{-1} \setminus (C^{-1} \setminus \{u,v\})$ is empty and no vertex in $\{u,v\}$ neighbors $w_2$.  
    \smallskip

    Since $G \cup f$ is an atom, if either $u$ or $v$ neighbors both $w_1$ and $w_3$, then $u$ and $v$ each neighbor both $w_1$ and $w_3$.  
    Consider the $f$-separating minor $[G \cup f]_1$ obtained by contracting some edge in $\{w_1w_2,w_2w_3\}$ and that has $C$ as one of its $f$-separating CMSs.  
    The definition of $C$ shows that some $C$-component of $[G \cup f]_1$ that does not contain $[f]_1$ has an $[f]_1$-preserving forbidden minor.  
    This implies that some edge $e \in \{vw_1,vw_3\}$ is contracted in $[G \cup f]_1$, and so $e$ is not of Type (1).  
    Note that the unique atom of the minor $[G \cup f]$, obtained by contracting $e$, that contains $[u]$ is $K_3$, and hence $e$ is clearly not of Type (3).  
    Consequently, $e$ is of Type (2).  
    Let $E$ be the CMS of $[G \cup f]$ induced by the set $\{[w_1],[w_3]\}$.  
    Using an argument as in Case 1 of Lemma \ref{lem:3top_uin}, we see that $E^{-1}$ is an expanded $f$-separating CMS of $G \cup f$ whose expanded $f$-component is a proper subgraph of $H^{-1}$.  
    However, this contradicts the fact that $C^{-1}$ is in the top-level of $G \cup f$.  

    Next, assume that neither $u$ nor $v$ neighbors both $w_1$ and $w_3$.  
    Since $G \cup f$ is an atom, we can assume wlog that $u$ neighbors $w_1$ and $v$ neighbors $w_3$.  
    An argument as in Case 1 of Lemma \ref{lem:3top_uin}, except using Statement (ii) of Lemma \ref{lem:top-level_contraction_not_type_2} instead of Statement (i), shows that $uw_1$ is reducing.  
 
    \smallskip
    \noindent\textbf{Case 3:} $H^{-1} \setminus (C^{-1} \setminus \{u,v\})$ contains some vertex that neighbors $w_2$.  
    \smallskip

    Let $x$ be any vertex in $H^{-1} \setminus (C^{-1} \setminus \{u,v\})$ that neighbors $w_2$.  
    Also, assume wlog that $w_1w_2$ is contracted in $[G \cup f]_1$.  
    If every path in $G \cup f$ between either $u$ or $v$ and $w_3$ contains some vertex in $\{x,w_1,w_2\}$, then the subgraph induced by these three vertices is an expanded $f$-separating CMS of $G \cup f$ whose expanded $f$-component is a proper subgraph of $H^{-1}$, which is a contradiction.  
    Otherwise, 
        if some path in $G \cup f$ between either $u$ or $v$ and $w_1$ does not contain any vertex in $\{x,w_2,w_3\}$, then an argument as in Case 1 of Lemma \ref{lem:3top_uin} shows that $xw_2$ is reducing, which is a contradiction.  
        Else, note that $f$ is retained in the minor obtained by contracting $xw_2$, and so $xw_2$ is not of Type (3).  
        Also, an argument as in Case 1 of Lemma \ref{lem:3top_uin} shows that $xw_2$ is not of Type (1).  
        Hence, $xw_2$ of of Type (2).  
        Combining our assumption on the paths in $G \cup f$ with an argument as in Case 1 of Lemma \ref{lem:3top_uin}, we see that the subgraph of $G \cup f$ induced by $\{x,w_2,w_3\}$ is an expanded $f$-separating CMS of $G \cup f$ whose expanded $f$-component is a proper subgraph of $H^{-1}$, which is a contradiction.  

    \smallskip
    \noindent\textbf{Case 4:} $H^{-1} \setminus (C^{-1} \setminus \{u,v\})$ does not contain any vertex that neighbors $w_2$ and contains some vertex that neighbors exactly one vertex in $\{w_1,w_3\}$.  
    \smallskip

    By our assumption and Lemma \ref{cor:top-level_structure}, any vertex in $H^{-1} \setminus (C^{-1} \setminus \{u,v\})$ neighbors some vertex in $\{w_1,w_3\}$.  
    Consider the case where some vertex $x$ in this subgraph neighbors exactly one vertex in $\{w_1,w_3\}$, wlog say $w_1$.  
        Also, assume that every path in $G \cup f$ between either $u$ or $v$ and $w_3$ contains some vertex in $\{x,w_1,w_2\}$.  
            If $w_1w_2$ is contracted in $[G \cup f]_1$, then the subgraph induced by these three vertices is an expanded $f$-separating CMS of $G \cup f$ whose expanded $f$-component is a proper subgraph of $H^{-1}$, which is a contradiction.  
            Otherwise, since $f$ is retained in the minor $[G \cup f]$ obtained by contracting $xw_1$, $xw_1$ is not of Type (3).  
            Also, since $G \cup f$ is an atom it has some path between either $u$ or $v$ and $w_2$ that does not contain any vertex in $\{w_1,w_2\}$.  
            Hence, $[G \cup f]_1$ has an $[f]_1$-preserving forbidden minor, it is easy to see that $[G \cup f]$ has an $[f]$-preserving forbidden minor, and so $xw_1$ is not of Type (1).  
            Consequently, $xw_1$ is of Type (2).  
            We arrive at the same contradiction as above by using an argument as in Case 1 of Lemma \ref{lem:3top_uin}.  

        Next, assume some path in $G \cup f$ between either $u$ or $v$ and $w_3$ does not contain any vertex in $\{x,w_1,w_2\}$.  
        We will show that $xw_1$ is reducing with respect to $(G,f)$, which is a contradiction.  
        As noted above, $xw_1$ is not of Type (3).  
        A similar argument as above along with the existence of the above path shows that $xw_1$ is not of Type (1).  
        Lastly, observe that $G' = (G \cup f) \setminus (H^{-1} \setminus C^{-1})$ is weakly retained in $[G \cup f]$ and Statement (ii) of Lemma \ref{lem:top-level_contraction_not_type_2} is true if we set $ab = xw_1$.  
        Therefore, this lemma shows that $xw_1$ is not of Type (2), and so $xw_1$ is reducing.  

    \smallskip
    \noindent\textbf{Case 5:} $H^{-1} \setminus (C^{-1} \setminus \{u,v\})$ is non-empty and each of its vertices neighbors both $w_1$ and $w_3$ but not $w_2$.  
    \smallskip
    
    If either $H^{-1} \setminus (C^{-1} \setminus \{u,v\})$ contains at least three vertices or $u$ and $v$ each neighbor both $w_1$ and $w_3$, then $H^{-1} \cup w_1w_3$ has an $f$-preserving forbidden minor, which contradicts Lemma \ref{lem:nonedge_C'_no_fbm_H'+f'}.  
    Also, if exactly one vertex in $\{u,v\}$, wlog say $v$, neighbors either $w_2$ or exactly one vertex in $\{w_1,w_3\}$, then an argument as in Case 1 shows that some edge incident on $v$ is reducing, which is a contradiction.  
    Hence, assume none of these situations occur.  
        Additionally, assume that $H^{-1} \setminus (C^{-1} \setminus \{u,v\})$ contains exactly two vertices, say $x$ and $y$.  
            If all paths between either $u$ or $v$ and any vertex in $C^{-1}$ contains some vertex in $\{x,y\}$, then consider the minor $[G \cup f]$ obtained by contracting $vx$.  
                The unique atom of $[G \cup f]$ that contains $[f]$ is $K_3$, and so $vx$ is not of Type (3).  
                Also, since $[G \cup f]_1$ has an $[f]_1$-preserving forbidden minor, it is easy to see that $[G \cup f]$ has an $[f]$-preserving forbidden minor, and so $vx$ is not of Type (3).  
                Consequently, $vx$ is of Type (2).  
                Using as argument as in Case 1 of Lemma \ref{lem:3top_uin}, we see that the subgraph of $G \cup f$ induced by $\{v,x,y\}$ is an expanded $f$-separating CMS of $G \cup f$ whose expanded $f$-component is a proper subgraph of $H^{-1}$, which is a contradiction.  
                
            Next, assume that some path between either $u$ or $v$ and some vertex in $C^{-1}$ does not contain any vertex in $\{x,y\}$.  
            Since $x$ and $y$ are the only vertices in $H^{-1} \setminus (C^{-1} \setminus \{u,v\})$, every such path contains exactly one edge.  
            For all cases of the neighborhoods of $u$ and $v$ that satisfy the above assumptions, we see that $H^{-1} \cup w_1w_3$ has an $f$-preserving forbidden minor, which contradicts Lemma \ref{lem:nonedge_C'_no_fbm_H'+f'}.  

        Finally, assume that $H^{-1} \setminus (C^{-1} \setminus \{u,v\})$ contains exactly one vertex, say $x$.  
            An argument as in Case 4 shows that some path in $G \cup f$ between either $u$ or $v$ and $w_3$ does not contain any vertex in $\{x,w_1,w_2\}$.  
            Similarly, some path in $G \cup f$ between either $u$ or $v$ and $w_1$ does not contain any vertex in $\{x,w_2,w_3\}$.  
            These facts along with our assumption show that $w_1$ and $w_3$ each neighbor some vertex in $\{u,v\}$.  
            If either $u$ or $v$ neighbors $w_2$, then the above facts imply the existence of edges such that $H^{-1} \cup w_1w_3$ has an $f$-preserving forbidden minor, which contradicts Lemma \ref{lem:nonedge_C'_no_fbm_H'+f'}.  
            Otherwise, the above facts imply the existence of edges such that $G \cup f$ is not an atom, which is a contradiction.  

    \smallskip
    The above cases are exhaustive, and so the lemma is proved.  
\end{proof}

\begin{lemma}[Remaining cases for $H^{-1}$]
    \label{lem:top_remaining}
    Let $C^{-1}$ be a top-level expanded $f$-separating CMS of a minimal graph-nonedge pair $(G,f)$, and let its expanded $f$-component be $H^{-1}$.  
    If $C^{-1}$ is the graph in Figure \ref{fig:exp_clique_sep_k3_cycle}, \ref{fig:exp_clique_sep_k3_chord}, or \ref{fig:exp_clique_sep_k3_degree_1}, then $H^{-1}$ is one of the graphs in Figure \ref{fig:top-level_H'_cases} not mentioned in Lemma \ref{lem:3top_uin} with $f$ as the red edge and $C^{-1}$ as the subgraph induced by $\{w_i\}$.  
\end{lemma}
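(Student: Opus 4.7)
The plan is to run a case analysis parallel in spirit to Lemmas \ref{lem:3top_uin} and \ref{lem:3top_uout}, but organized by the three admissible shapes for $C^{-1}$ (the four-cycle of Figure \ref{fig:exp_clique_sep_k3_cycle}, the four-cycle with a chord of Figure \ref{fig:exp_clique_sep_k3_chord}, and the triangle-with-pendant of Figure \ref{fig:exp_clique_sep_k3_degree_1}). For each shape, I would further split by the number of endpoints of $f$ lying in $C^{-1}$ (zero, one, or two) and, when one endpoint lies in $C^{-1}$, by which vertex $w_i$ that endpoint is, distinguishing between the ``corner'' positions and the more symmetric positions. In each resulting subcase, Lemma \ref{cor:top-level_structure} says any vertex of $H^{-1} \setminus (C^{-1} \cup \{u,v\})$ has neighborhood $\{u,v\} \cup N_x$ with $\emptyset \neq N_x \subseteq V(C^{-1}) \setminus \{u,v\}$, which, combined with the facts that $G \cup f$ is an atom and that $H^{-1}\cup F$ has no $f$-preserving forbidden minor for any admissible nonedge set $F$ of $C^{-1}$ (Lemma \ref{lem:nonedge_C'_no_fbm_H'+f'}, Statement (1) for Figures \ref{fig:exp_clique_sep_k3_chord}, \ref{fig:exp_clique_sep_k3_degree_1} and Statement (2) for Figure \ref{fig:exp_clique_sep_k3_cycle}), leaves only finitely many candidate graphs for $H^{-1}$ to enumerate.

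Within each subcase the elimination proceeds along the same three lines already established in Lemmas \ref{lem:3top_uin} and \ref{lem:3top_uout}. First, any configuration producing a clique separator of $G \cup f$ is discarded since $G \cup f$ is an atom. Second, for every remaining candidate I would look for a reducing edge $e$: $f$ typically stays retained after contracting $e$ so $e$ is not Type (3); an $f$-separating minor $[G\cup f]_1$ with $C$ as an $f$-separating CMS provides a witness that $[G\cup f]$ still has an $[f]$-preserving forbidden minor, so $e$ is not Type (1); and Lemma \ref{lem:top-level_contraction_not_type_2} (using Statement (i) or (ii) depending on whether $C^{-1}\setminus\{a,b\}$ remains connected) is applied to rule out Type (2). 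The existence of any such $e$ contradicts minimality of $(G,f)$, so that candidate is eliminated. Third, for the residual candidates I would locate a CMS $E$ of an $f$-separating minor of $G\cup f$ (often the two-vertex subgraph $\{[w_i],[w_j]\}$ after contracting an internal edge of $C^{-1}$) whose expanded $f$-component is a proper subgraph of $H^{-1}$, contradicting that $C^{-1}$ is in the top-level; this top-level argument, together with Lemma \ref{lem:cliques_in_top}, is exactly what isolates the graphs in Figure \ref{fig:top-level_H'_cases} as the surviving configurations. The remaining candidates match precisely those entries of Figure \ref{fig:top-level_H'_cases} not already covered by Lemma \ref{lem:3top_uin}, i.e.\ Figures \ref{fig:chord_vin_uonly_2}, \ref{fig:cycle_vin_uonly_2}, \ref{fig:cycle_vin_uplus_3}, \ref{fig:deg1_vin_uplus_1}, \ref{fig:deg1_vin_uonly_2}, \ref{fig:deg1_vin_uplus_5}, \ref{fig:cycle_vin_uplus_2}, \ref{fig:cycle_vout_uvonly}, \ref{fig:deg1_vin_uplus_3}, \ref{fig:deg1_vin_uplus_4}, \ref{fig:deg1_vout_uvonly}, \ref{fig:cycle_vin_uonly_1}, \ref{fig:chord_vin_uonly_1}, \ref{fig:deg1_vin_uonly_1}, and \ref{fig:cycle_vin_uplus_4}.

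The main obstacle will be combinatorial bookkeeping rather than any new conceptual difficulty: with $|V(C^{-1})|=4$ and possibly two endpoints of $f$ inside $C^{-1}$, the number of edge/nonedge configurations between $\{u,v\}$ and $V(C^{-1})$ and between ``extra'' vertices $x \in H^{-1}\setminus(C^{-1}\cup\{u,v\})$ and $V(C^{-1})$ grows substantially compared to Lemma \ref{lem:3top_uin}. In particular, the shapes in Figures \ref{fig:exp_clique_sep_k3_chord} and \ref{fig:exp_clique_sep_k3_degree_1} have an asymmetric degree profile among the $w_i$, so subcases must be analyzed separately for the ``apex'' vertex (of high degree in $C^{-1}$) and the ``leaf'' vertices, and the candidate reducing edges change accordingly. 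A related subtlety is that when two endpoints of $f$ lie in $C^{-1}$ (Figure \ref{fig:cycle_vout_uvonly} and \ref{fig:deg1_vout_uvonly}), neither endpoint is available to play the role of the ``external'' $u$ used in Lemmas \ref{lem:top-level_contraction_not_type_2} and \ref{lem:cliques_in_top}; the fix is to apply those lemmas with respect to any vertex of $H^{-1}\setminus C^{-1}$ that is forced to exist by atomicity of $G\cup f$, after first reducing (via a direct neighborhood argument and Lemma \ref{lem:nonedge_C'_no_fbm_H'+f'}) to the configurations listed.

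Finally, once the case analysis is complete, I would verify, configuration by configuration, that the listed graphs really are the expanded $f$-components of some top-level expanded $f$-separating CMS, using the existence of the $f$-separating minor $[G\cup f]_1$ and Lemma \ref{lem:expanded_clique_graphs_and_connections}(2) to confirm that contracting the single internal edge of $C^{-1}$ reaches a forbidden minor whose atom containing the image of $M$ certifies top-level status. This verification is routine but, as in Lemmas \ref{lem:3top_uin} and \ref{lem:3top_uout}, is what makes the final enumeration exactly match Figure \ref{fig:top-level_H'_cases}.
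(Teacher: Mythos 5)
Your proposal matches the paper's intended approach. The paper omits the detailed proof, stating that it is ``very similar to the proofs of Lemmas~\ref{lem:3top_uin} and~\ref{lem:3top_uout},'' and indeed the intended argument splits by the shape of $C^{-1}$ (Figures~\ref{fig:exp_clique_sep_k3_cycle}, \ref{fig:exp_clique_sep_k3_chord}, \ref{fig:exp_clique_sep_k3_degree_1}) and further by whether $C^{-1}$ contains an endpoint of $f$, then eliminates candidates using the same toolkit you cite: Lemma~\ref{cor:top-level_structure} for neighborhoods of vertices in $H^{-1}\setminus C^{-1}$, Lemma~\ref{lem:nonedge_C'_no_fbm_H'+f'} to rule out configurations yielding an $f$-preserving forbidden minor in $H^{-1}\cup F$, Lemma~\ref{lem:top-level_contraction_not_type_2} (Statement (i) or (ii) as appropriate) for the Type~(2) rule-out, reducing-edge arguments against minimality, and top-level arguments to exclude configurations that would yield a strictly smaller expanded $f$-component.

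There is, however, a factual misreading in your discussion of the ``related subtlety.'' You assert that Figures~\ref{fig:cycle_vout_uvonly} and~\ref{fig:deg1_vout_uvonly} correspond to the case where \emph{both} endpoints of $f$ lie inside $C^{-1}$ and that consequently no external vertex is available for Lemmas~\ref{lem:top-level_contraction_not_type_2} and~\ref{lem:cliques_in_top}. This is backwards: ``vout'' in the figure naming convention means $v$ lies \emph{outside} $C^{-1}$, so in those two figures both $u$ and $v$ are external to $C^{-1}$, exactly as in the case treated by Lemma~\ref{lem:3top_uout}, and both are directly available as the ``external'' vertex. In fact, the scenario you worry about cannot occur: by Definition~\ref{def:expanded_e-separating_pair}, $C$ is a minimal clique $[ux]$-separator for some endpoint $[u]$ of $[f]$, so that endpoint is necessarily outside $C$; hence at least one endpoint of $f$ always lies in $H^{-1}\setminus C^{-1}$, and the ``both endpoints in $C^{-1}$'' case never arises. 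The workaround you propose (using an arbitrary vertex of $H^{-1}\setminus C^{-1}$) is therefore unnecessary, and, if taken literally, would introduce spurious subcases. With that correction, the enumeration you list---Figures~\ref{fig:chord_vin_uonly_2}, \ref{fig:cycle_vin_uonly_2}, \ref{fig:cycle_vin_uplus_3}, \ref{fig:deg1_vin_uplus_1}, \ref{fig:deg1_vin_uonly_2}, \ref{fig:deg1_vin_uplus_5}, \ref{fig:cycle_vin_uplus_2}, \ref{fig:cycle_vout_uvonly}, \ref{fig:deg1_vin_uplus_3}, \ref{fig:deg1_vin_uplus_4}, \ref{fig:deg1_vout_uvonly}, \ref{fig:cycle_vin_uonly_1}, \ref{fig:chord_vin_uonly_1}, \ref{fig:deg1_vin_uonly_1}, and \ref{fig:cycle_vin_uplus_4}---is exactly the complement in Figure~\ref{fig:top-level_H'_cases} of the figures covered by Lemma~\ref{lem:3top_uin}, as the statement requires.
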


The proof of Lemma \ref{lem:top_remaining} is omitted here since it is very similar to the proofs of Lemmas \ref{lem:3top_uin} and \ref{lem:3top_uout}, but it is available upon request.

\begin{proof}[Proof of Lemma \ref{prop:expC_3or4}]
    The lemma follows immediately from Lemmas \ref{prop:C'_leq_4vert} and \ref{lem:3top_uin}-\ref{lem:top_remaining}.  
\end{proof}

Lemma \ref{lem:C'_a-n} in Section \ref{sec:prop-8-9} is proved using Lemma \ref{lem:f'_and_J}, below.  
The proof of Lemma \ref{lem:f'_and_J} requires Lemma \ref{lem:I'_fbm_f_and_C'_nonedges_not_contracted}, also below.  

\begin{figure}[htb]
    \centering
    \begin{subfigure}{0.49\textwidth}
        \centering
        \includegraphics[width = 0.25\textwidth]{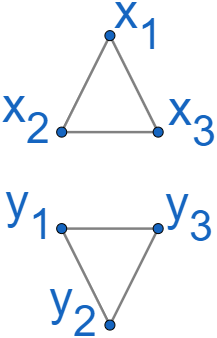}
        \subcaption{}
        \label{fig:k5_f'_from_exp_clique_sep_k3_degree_1}
    \end{subfigure}
    \begin{subfigure}{0.49\textwidth}
        \centering
        \includegraphics[width = 0.25\textwidth]{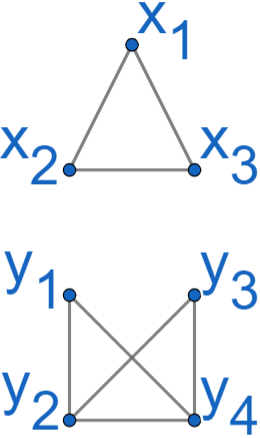}
        \subcaption{}
        \label{fig:k222_f'_from_exp_clique_sep_k3_degree_1}
    \end{subfigure}
    \caption{Graphs in the proof of Lemma \ref{lem:I'_fbm_f_and_C'_nonedges_not_contracted}.  }
    \label{fig:one_contraction_fbm}
\end{figure}

\begin{lemma}[$I^{-1}$ has $f'$-preserving forbidden minor for each nonedge of $C^{-1}$]
    \label{lem:I'_fbm_f_and_C'_nonedges_not_contracted}
    Let $C^{-1}$ be a top-level expanded $f$-separating CMS of a minimal graph-nonedge pair $(G,f)$ whose expanded minor component is $I^{-1}$.  
    $I^{-1}$ has an $f'$-preserving forbidden minor for each nonedge $f'$ of $C^{-1}$.
\end{lemma}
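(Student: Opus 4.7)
The plan is to extract an $f$-preserving forbidden minor of $I^{-1}$ from the definition of the $f$-separating pair $(C,M)$ underlying $C^{-1}$, and then modify it to yield an $f'$-preserving forbidden minor for each nonedge $f'$ of $C^{-1}$. By the definition of top-level expanded $f$-separating CMS, there exists some $f$-separating minor $[G \cup f]_1$ in which $(C,M)$ is an $f$-separating pair, so $M$ is an atom of $[G \cup f]_1$ admitting an $[f]_1$-preserving forbidden minor. Applying Lemma \ref{lem:preserving_forbidden_minor_implies_contraction_minor}, this minor can be taken induced, and un-contracting (viewing the contractions $M^{-1} \to M \to N$ as contractions inside $I^{-1}$) yields an $f$-preserving forbidden minor $N$ of $M^{-1} \subseteq I^{-1}$. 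Note also that $[u]$ must lie in $C$ since $M$ contains both endpoints of $[f]_1$ while $C$ separates $[u]$ from $M\setminus C$, which places the endpoint $u$ of $f$ inside $C^{-1}$.

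Next, by Lemma \ref{prop:C'_leq_4vert}, $C^{-1}$ is one of the graphs in Figures \ref{fig:exp_clique_sep_k2}--\ref{fig:exp_clique_sep_k3_degree_1}, each with an explicit small set of nonedges, so I would proceed by case analysis on $C^{-1}$ and on the particular nonedge $f'=xy$. The plan is to exploit two structural facts in each case: first, $C$ is a clique in $[G \cup f]_1$, so all the vertices of $C$ occupy pairwise distinct super-vertices of $N$ in $M$; second, by Lemma \ref{lem:expanded_clique_graphs_and_connections} Statement~(2), the minor $[H^{-1} \cup I^{-1}]$ has a rigid structure drawn from Figure \ref{fig:expanded_e-separating_cliques_conn}, which pins down how the connected components of $I^{-1}\setminus C^{-1}$ attach to the $w_i \in V(C^{-1})$. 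Together these imply that every nonedge $f'=w_iw_j$ of $C^{-1}$ reduces to a question about whether the super-vertices $[w_i]$ and $[w_j]$ can be chosen distinct in some forbidden minor of $I^{-1}$ obtained by contracting edges of $M^{-1}$ and of $I^{-1}\setminus C^{-1}$.

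For each nonedge $f' = xy$, I would construct an $f'$-preserving forbidden minor of $I^{-1}$ as follows. If $N$ itself already has $[x]_N \neq [y]_N$, we are done, since the induced contraction from $I^{-1}$ onto $N$ then preserves $xy$. Otherwise, the plan is to perform vertex and component exchanges inside $M^{-1}$ and along the attachments of $I^{-1}\setminus C^{-1}$ to $C^{-1}$, in the spirit of Lemmas \ref{lem:G_not_k5_k222_u4v0} and \ref{lem:k5_k222_not_wing_fm_e_not_contracted}, to peel off one of $x,y$ from the shared super-vertex and place it in a distinct super-vertex of a modified forbidden minor of $I^{-1}$. The paths required for these exchanges exist because $G \cup f$ is an atom, because each super-vertex of a forbidden minor has degree at least $4$, and because Lemma \ref{lem:expanded_clique_graphs_and_connections} Statement~(2) forces sufficient attachments from $I^{-1}\setminus C^{-1}$ into $C^{-1}$ to reach both $x$ and $y$ along separate routes.

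The main obstacle will be verifying in every case that these exchanges can be completed, especially when $x$ and $y$ both lie deep inside a single super-vertex of $N$ and the local structure inside $M^{-1}$ does not immediately allow a separation. For such residual cases, the plan is to argue by contradiction using the top-level property of $C^{-1}$: if some nonedge $f'$ admitted no $f'$-preserving forbidden minor of $I^{-1}$, then, mimicking the construction of a smaller $f$-separating pair used in Lemma \ref{lem:type_2_yields_smaller_expanded_f-component}, one could exhibit an $f$-separating pair $(E,M')$ of a suitable contraction of $G \cup f$ whose expanded $f$-component is a proper subgraph of $H^{-1}$, contradicting the hypothesis that $C^{-1}$ is in the top level of $G \cup f$.
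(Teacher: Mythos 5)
There is a genuine gap stemming from a misunderstanding of the ``weak/extended'' preservation terminology. Your claim that ``$M$ contains both endpoints of $[f]_1$'' is false. By the definition of an $e$-separating pair, $C$ is a minimal clique $[ux]$-separator for some endpoint $[u]$ of $[e]$ and every vertex $[x]$ of $M\setminus C$; by the definition of an $st$-separator, $[u]\notin C$, and since $C$ separates $[u]$ from $M\setminus C$, $[u]\notin M$ at all. So $M$ contains at most one endpoint of $[f]_1$ (and at least one is missing), and the statement ``$M$ has an $[f]_1$-preserving forbidden minor'' is meant in the extended sense: some \emph{extension} of the minor to $[G\cup f]_1$ preserves $[f]_1$. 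Your derived conclusion that $u\in C^{-1}$ is therefore false, and the downstream reasoning that relies on it cannot stand.

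The same misunderstanding causes a second error that is closer to the heart of the lemma: you assert that because $C$ is a clique in $[G\cup f]_1$, ``all the vertices of $C$ occupy pairwise distinct super-vertices of $N$ in $M$.'' This does not follow and is in fact not true in general. The whole difficulty of this lemma is that $[I]_1$ (your $N$) \emph{can} merge a pair of $C$-vertices. The paper's proof handles exactly this: it un-contracts only the $C$-edges of $[I]_1$ to obtain $[I]_2$, then argues that the weak $[f]$-preservation of $[I]_1$ forbids merging all of $C$ (otherwise, in any extension, everything in $H\setminus C$ would be absorbed into a single supervertex together with both endpoints of $[f]$), so at most \emph{one} edge of $C$ is contracted. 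This leaves only two structural shapes for $[I]_2$ (Figure~\ref{fig:one_contraction_fbm}), and the $[f']$-preserving forbidden minor is then built directly by analyzing the neighborhoods of the duplicated pair $x_1,x_2$ against the distinguished vertex $x_3$. Your plan of vertex/component exchanges and a top-level contradiction is a genuinely different route than the paper's, but you have not shown it would reach the same conclusion, and the exchanges in Lemmas~\ref{lem:G_not_k5_k222_u4v0} and~\ref{lem:k5_k222_not_wing_fm_e_not_contracted} were designed for the winged-graph setting, not for this ``one merged $C$-pair'' situation. Without first ruling out the merger of all of $C$ via the extension argument, you have no handle on the case where $[x]_N=[y]_N$, which is precisely the case you acknowledge as the obstacle.
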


\begin{proof}
    By the definition of $C^{-1}$, $C$ is an $f$-separating CMS of some minor $[G \cup f]$, and the minor component $I$ of $C$ has an $[f]$-preserving forbidden minor $[I]_1$.  
    Note that each nonedge $f'$ of $C^{-1}$ is preserved in $[G \cup f]$.  
    Hence, it suffices to show that $I$ has an $[f']$-preserving forbidden minor, which we demonstrate now.  
    By Lemma \ref{prop:C'_leq_4vert}, $C^{-1}$ is one of the graphs in Figures \ref{fig:exp_clique_sep_k2}-\ref{fig:exp_clique_sep_k3_degree_1}.  
    This implies that $C$ is either $K_2$ or $K_3$.  
    If $C$ is $K_2$, then $[I]_1$ is $[f']$-preserving, since it is $[f]$-preserving.  
    Otherwise, let $[I]_2$ be the minor obtained by contracting all edges that are contracted in $[I]_1$ except for any edge of $C$.  
    If $[I]_2 = [I]_1$, then we are done.  
    Otherwise, since $[I]_1$ is $[f]$-preserving, it is obtained from $[I]_2$ by contracting exactly one edge of $[C]_2$.  
    Since $[I]_1$ is either $K_5$ or $K_{2,2,2}$, this implies that some spanning subgraph of $[I]_2$ is one of the graphs in Figure \ref{fig:one_contraction_fbm} with $[C]_2$ as the subgraph induced by $\{x_i\}$.  

    Next, assume wlog that contracting $x_1x_2$ yields $[I]_1$.  
    If the above-mentioned some spanning subgraph of $[I]_2$ is the graph in Figure \ref{fig:k5_f'_from_exp_clique_sep_k3_degree_1}, then $[I]_1$ is $K_5$.  
    This implies that $x_3$ neighbors each vertex in $\{y_1,y_2,y_3\}$, the union of the neighborhoods of $x_1$ and $x_2$ is $\{y_1,y_2,y_3,x_3\}$, and wlog $x_1$ has at least two neighbors in $\{y_1,y_2,y_3\}$, say $y_1$ and $y_2$.  
    For all cases of the neighborhoods of $x_1$ and $x_2$, we can clearly obtain an $[f']_2$-preserving $K_5$ minor of $[I]_2$, and hence an $[f']$-preserving $K_5$ minor of $I$.  

    Finally, if the above-mentioned some spanning subgraph of $[I]_2$ is the graph in Figure \ref{fig:k222_f'_from_exp_clique_sep_k3_degree_1}, then $[I]_1$ is $K_{2,2,2}$.  
    This implies wlog that $x_3$ neighbors each vertex in $\{y_1,y_3,y_4\}$, the union of the neighborhoods of $x_1$ and $x_2$ is $\{y_1,y_2,y_3,x_3\}$, and $x_1$ wlog has at least two neighbors in $\{y_1,y_2,y_3\}$, say $y_1$ and $y_2$.  
    For all cases of the neighborhoods of $x_1$ and $x_2$, we can clearly obtain an $[f']_2$-preserving forbidden minor of $[I]_2$, and hence an $[f']$-preserving forbidden minor of $I$.  
    This completes the proof.  
\end{proof}

The \emph{atom graph} $\mathcal{G}(G)$ of a graph $G$ is the graph defined as follows.  
The vertex set of $\mathcal{G}(G)$ is the disjoint union $X(G) \sqcup Y(G)$ such that there exist bijections between $X(G)$ and the CMSs of $G$ and between $Y(G)$ and the atoms of $G$.  
For any vertex $z \in X(G) \sqcup Y(G)$, $G(z)$ is the subgraph of $G$ that corresponds to $z$.  
$\mathcal{G}(G)$ has an edge $xy$ if and only if $x \in X(G)$, $y \in Y(G)$, and $G(x)$ is a subgraph of $G(y)$.  

\begin{lemma}[Structure necessary for induction]
    \label{lem:f'_and_J}
    Let $C^{-1}$ be any top-level expanded $f$-separating CMS of a minimal graph-nonedge pair $(G,f)$ whose expanded $f$-component is $H^{-1}$, and let $G' = G \setminus (H^{-1} \setminus C^{-1})$.  
    Some nonedge $f'$ of $C^{-1}$ is contained in some atom $J$ of $G' \cup f'$ that has an $f'$-preserving forbidden minor.  
    If $C^{-1}$ is the graph in Figure \ref{fig:exp_clique_sep_k3_cycle}, then this is true for any nonedge of $C^{-1}$.  
    Furthermore, no clique subgraph of $C^{-1}$ separates the endpoints of $f'$ in $J \setminus f'$.  
\end{lemma}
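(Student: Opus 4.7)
The plan is to build $f'$ and $J$ directly from the $f'$-preserving forbidden minor of $I^{-1}$ provided by Lemma \ref{lem:I'_fbm_f_and_C'_nonedges_not_contracted}, and then to rule out any separating clique subgraph of $C^{-1}$ inside $J \setminus f'$ via a connectivity-and-lifting argument.

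First I would fix any nonedge $f'$ of $C^{-1}$; such a nonedge exists because by Lemma \ref{prop:C'_leq_4vert} together with Lemma \ref{lem:expanded_clique_graphs_and_connections}(1), $C^{-1}$ is one of the graphs in Figures \ref{fig:exp_clique_sep_k2}--\ref{fig:exp_clique_sep_k3_degree_1}, none of which is a clique. Lemma \ref{lem:I'_fbm_f_and_C'_nonedges_not_contracted} supplies an $f'$-preserving forbidden minor of $I^{-1}$, which is automatically an $f'$-preserving forbidden minor of $G' \cup f'$ since $I^{-1} \subseteq G'$. Applying Lemma \ref{cor:minimal_k-clique-sum_component_containing_fbm_and_f} to the edge $f'$ of $G' \cup f'$ produces an atom $J$ of $G' \cup f'$ that itself has an $f'$-preserving forbidden minor. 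Because $f'$ is an edge of $G' \cup f'$ preserved in this minor, both endpoints of $f'$ lie in $V(J)$ and $f' \in E(J)$. When $C^{-1}$ is the $4$-cycle of Figure \ref{fig:exp_clique_sep_k3_cycle}, this construction applies uniformly to each of its two nonedges, delivering the ``any nonedge'' conclusion.

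Next I would prove the \emph{furthermore} clause. Suppose, toward a contradiction, that some clique subgraph $K \subseteq C^{-1}$ separates the endpoints $x,y$ of $f'$ in $J \setminus f'$. By Lemma \ref{prop:C'_leq_4vert}, $|V(C^{-1})| \le 4$, so the candidate $K$ has at most three vertices (and at most two outside the chord case), and after discarding trivial situations we may take $K$ to be disjoint from $\{x,y\}$. The goal is to exhibit two $xy$-paths in $J \setminus f'$ that avoid $V(K)$. I would first assemble such paths inside $G'$: the $f'$-preserving forbidden minor of $I^{-1}$, combined with the $4$-connectedness of $K_5$ and $K_{2,2,2}$, yields several internally vertex-disjoint paths in $I^{-1}$ between the vertex classes containing $x$ and $y$; Lemma \ref{lem:expanded_clique_graphs_and_connections}(2) constrains how these paths are routed through $C^{-1}$, and Lemma \ref{lem:I'_sep_components_contain_C'-E} forces them to use distinct vertices of $C^{-1} \setminus \{x,y\}$, so that at most one of them can be killed by removing $K$. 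The remaining paths lie in $G'$, and Lemma \ref{lem:path_in_graph_path_in_atom} lifts them into the atom $J$ while avoiding $V(K)$, contradicting the hypothesized separation.

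The main obstacle is the case analysis when $C^{-1}$ is the $4$-cycle of Figure \ref{fig:exp_clique_sep_k3_cycle} and $K$ is one of the singletons $\{w_2\}$ or $\{w_4\}$: here one must certify that the forbidden minor of $I^{-1}$ cannot be ``pinched'' through a single vertex of $C^{-1} \setminus \{x,y\}$. This step crucially uses Lemma \ref{lem:expanded_clique_graphs_and_connections}(2) together with the top-level property of $C^{-1}$---if such a pinch existed, an expanded $f$-separating pair with a strictly smaller expanded $f$-component would appear inside $G \cup f$, violating the top-level hypothesis. The other shapes of $C^{-1}$ (Figures \ref{fig:exp_clique_sep_k3_chord} and \ref{fig:exp_clique_sep_k3_degree_1}) are handled by shorter analogues of the same lifting argument, and the base case (Figure \ref{fig:exp_clique_sep_k2}) is simpler still since the only candidate clique subgraphs of $C^{-1}$ that might separate $x$ from $y$ are singletons of a single intermediate vertex.
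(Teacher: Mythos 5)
The central issue is your claim that ``Because $f'$ is an edge of $G' \cup f'$ preserved in this minor, both endpoints of $f'$ lie in $V(J)$ and $f' \in E(J)$.'' This does not follow. Lemma \ref{cor:minimal_k-clique-sum_component_containing_fbm_and_f} produces an atom $J$ of $G' \cup f'$ such that the $f'$-preserving forbidden minor is an \emph{extension} of a minor of $J$, but by the definition of extension this only constrains the minor map on $V(J)$ and says nothing about which vertices of $G' \setminus J$ get contracted into vertex classes meeting $J$. It is entirely possible that one endpoint $x$ of $f'$ lies outside $J$ and is merged in the minor with some vertex in a CMS of $J$; preservation of $f'$ then survives in the extension while $x \notin V(J)$. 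The paper devotes a substantial portion of its proof to closing exactly this gap: it invokes Lemma \ref{lem:I'_path} to show the atom graph of $G'$ is a path, then separates into the case where $G'$ is an atom (Lemma \ref{lem:I'_f'_cms}(1) then forces $f' \subseteq J$ for every nonedge $f'$) and the case where $G'$ is not an atom, where one must \emph{choose} $f'$ to be the nonedge joining a vertex in one leaf atom of the atom graph to a vertex in the other leaf atom, and then apply Lemma \ref{lem:I'_f'_cms}(2). This also explains why the lemma statement only asserts ``some nonedge $f'$'' in general and ``any nonedge'' only when $C^{-1}$ is the $4$-cycle: in that case $C^{-1}$ is an atom, which via Lemma \ref{lem:I'_sep_components_contain_C'-E} forces $G'$ to be an atom. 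Your proposal fixes an \emph{arbitrary} nonedge $f'$ from the start, which is simply false outside the cycle case.

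A secondary concern is the \emph{furthermore} clause. Your argument tries to read internally vertex-disjoint $xy$-paths directly off the $4$-connectedness of the forbidden minor of $I^{-1}$, but you cite Lemma \ref{lem:I'_sep_components_contain_C'-E} to claim these paths ``use distinct vertices of $C^{-1} \setminus \{x,y\}$,'' and that lemma says nothing of the sort — it is a statement about CMS-components containing vertices of a separator, not a routing constraint. Moreover the paths you build lie in $I^{-1} \subseteq G'$, and to invoke Lemma \ref{lem:path_in_graph_path_in_atom} you would first need their endpoints inside $J$, which loops back to the unproven containment above. The paper instead observes that $J$ must contain some vertex of $G' \setminus C^{-1}$ (since $J$ carries a forbidden minor and $|V(C^{-1})| \le 4$), takes $E$ to be any clique of $C^{-1}$ inside $J$ avoiding the endpoints of $f'$, and then routes through that external vertex, using $2$-connectedness of $J$ in the cycle case and Lemma \ref{lem:expanded_clique_graphs_and_connections}(2) together with Lemma \ref{lem:path_in_graph_path_in_atom} in the other shapes. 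Your idea of a ``pinch'' contradiction via the top-level property is suggestive but never actually connects: the top-level property concerns $f$-separating pairs of $G \cup f$, and you would need a concrete argument that a bad $K$ inside $J$ produces such a pair with a strictly smaller expanded $f$-component, which you do not give.
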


\begin{proof}
    Lemma \ref{lem:I'_fbm_f_and_C'_nonedges_not_contracted} states that the expanded minor component of $C^{-1}$ has an $f'$-preserving forbidden minor for each nonedge $f'$ of $C^{-1}$.  
    Hence, Lemma \ref{cor:minimal_k-clique-sum_component_containing_fbm_and_f} shows that some atom $J$ of $G' \cup f'$ has an $f'$-preserving forbidden minor.  
    First, we demonstrate that we can choose $f'$ to be contained in $J$.  
    Note that $G \cup f$ is an atom, $(G \cup f) \setminus C^{-1}$ is disconnected, and $G'$ is the union of $C^{-1}$ and some non-empty set of connected components of $(G \cup f) \setminus C^{-1}$.
    Furthermore, by inspecting all possibilities for $C^{-1}$ given by Lemma \ref{prop:C'_leq_4vert}, we see that either $C^{-1}$ is an atom or, for any CMS $E$ of $C^{-1}$, there are exactly two $E$-components of $C^{-1}$, both of which are atoms.  
    Therefore, Lemma \ref{lem:I'_path} in \ref{sec:atoms} shows that the atom graph of $G'$ is a path.  

    Next, if $G'$ is an atom, then Lemma \ref{lem:I'_f'_cms} Statement (1) in \ref{sec:atoms} shows that any choice of $f'$ is contained in $J$.  
    Note that if $C^{-1}$ is the graph in Figure \ref{fig:exp_clique_sep_k3_cycle}, then $C^{-1}$ is an atom.  
    Hence, Lemma \ref{lem:I'_sep_components_contain_C'-E} in \ref{sec:atoms} shows that $G'$ is an atom.  
    Combining these facts demonstrates that any choice for $f'$ is contained in $J$.  
    Otherwise, if $G'$ is not an atom, then the atom graph of $G'$ has two distinct leaves $y_1$ and $y_2$ with not necessarily distinct neighbors $x_1$ and $x_2$, respectively.  
    Using an argument similar to the proof of Lemma \ref{lem:I'_path}, we see that $G'(y_1) \setminus G'(x_1)$ contains some vertex $u$ of $C^{-1}$, $G'(y_2) \setminus G'(x_2)$ contains some vertex $v$ of $C^{-1}$, and $G'(y_1) \setminus G'(x_1)$ and $G'(y_2) \setminus G'(x_2)$ do not share any vertices.  
    These facts show that $uv$ is a nonedge of $C^{-1}$.  
    Therefore, Lemma \ref{lem:I'_f'_cms} Statement (2) shows that $J$ contains $uv$.  

    Lastly, we show that no clique subgraph of $C^{-1}$ separates the endpoints of $f'$ in $J \setminus f'$.  
    It suffices to consider any clique subgraph $E$ of $C^{-1}$ that is contained in $J$ and does not contain any endpoint of $f'$.  
    Since $J$ has a forbidden minor and $|V(C^{-1})| \leq 4$, $J$ contains some vertex $x$ of $G' \setminus C^{-1}$.  
    Note that $E$ does not contain $x$.  
    Consider the case where $C^{-1}$ is the graph in Figure \ref{fig:exp_clique_sep_k3_cycle}, and let $f = uv$.  
    If $J$ either contains $C^{-1}$ or does not contain any vertex in $C^{-1} \setminus \{u,v\}$, then the claim is immediate.  
    Otherwise, let $y$ be the unique vertex contained in both $J$ and $C^{-1} \setminus \{u,v\}$.  
    Note that the only case for $E$ is the subgraph induced by $\{y\}$.  
    Since $J$ is an atom $vy$ is an edge, there exists a path in $J$ between $x$ and $u$ that does not contain any vertex in $\{v,y\}$.  
    Similarly, there exists a path in $J$ between $x$ and $v$ that does not contain any vertex in $\{u,y\}$.  
    Since these paths do not contain $f'$ or $y$, they connect $u$ and $v$ in $(J \setminus f') \setminus y$.  

    Finally, if $C^{-1}$ is not the graph in Figure \ref{fig:exp_clique_sep_k3_cycle}, then Lemma \ref{lem:expanded_clique_graphs_and_connections} Statement (2) shows that there exist paths in $G'$ that connect $x$ to the endpoints of each nonedge of $C^{-1}$ and that each contain exactly one vertex in $C^{-1}$.  
    Furthermore, Lemma \ref{lem:path_in_graph_path_in_atom} shows that we can choose these paths to be contained in $J$.  
    Therefore, clearly no clique subgraph of $C^{-1}$ separates the endpoints of $f'$ in $J \setminus f'$.  
\end{proof}

\begin{proof}[Proof of Lemma \ref{lem:C'_a-n}]
    By Lemma \ref{prop:expC_3or4}, $H^{-1}$ is one of the graphs in Figure \ref{fig:top-level_H'_cases}.  
    Let $G' = G \setminus (H^{-1} \setminus C^{-1})$ and consider the nonedge $f'$ of $C^{-1}$ and the atom $J$ of $G' \cup f'$ given by Lemma \ref{lem:f'_and_J}.  
    For each case of $H^{-1}$ in Figures \ref{fig:3top_vin_uonly_1} - \ref{fig:cycle_vin_uplus_4} or Figures \ref{fig:cycle_vin_uonly_2}, \ref{fig:cycle_vin_uplus_3}, \ref{fig:cycle_vin_uplus_2}, \ref{fig:deg1_vin_uplus_3}, or \ref{fig:deg1_vin_uplus_4} such that every dashed line-segment is an edge, observe that there exists a minor $[G \cup f]$ obtained via a single contraction such that $f$ is retained in $[G \cup f]$, $[f] = [f']$, and some induced subgraph of $[G \cup f]$ is $G' \cup f'$ with $[f]$ as $f'$.  
    These facts along with the existence of $J$ show that the edge contracted in $[G \cup f]$ is reducing with respect to $(G,f)$, which contradicts the minimality of this pair.  
\end{proof}

The next subsection is dedicated to proving Lemma \ref{lem:G'_f'_no_reducible_edge} from Section \ref{sec:prop-8-9}.  

\subsubsection{Proof of Lemma \ref{lem:G'_f'_no_reducible_edge}}
\label{sec:proof_IH_2}

The proof requires Lemmas \ref{lem:J+H'_atom} and \ref{lem:k_f-preserving_fbm}, below.  

\begin{lemma}[Properties of graphs with $H^{-1}$ in Figures \ref{fig:3top_vin_uonly_2} - \ref{fig:deg1_vout_uvonly}]
    \label{lem:J+H'_atom}
    Let $G$ be a graph with subgraphs $G_1$, $G_2$, and $S$ such that $G = G_1 \cup G_2$, $G_1$ is one of the graphs in Figures \ref{fig:3top_vin_uonly_2} - \ref{fig:deg1_vout_uvonly}, $S = G_1 \cap G_2$ is the subgraph of $G_1$ induced by some subset of $\{w_i\}$ that contains some nonedge $e$, and $G \setminus S$ is disconnected.  
    If $G_2 \cup e$ is an atom and no clique subgraph of $S$ separates the endpoints of $e$ in $G_2$, then $G$ is an atom.  
\end{lemma}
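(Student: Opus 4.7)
The plan is to argue by contradiction: suppose $C$ is a clique separator of $G$, and derive a contradiction by playing the atom hypothesis on $G_2 \cup e$ against the non-separation hypothesis on $S$, using the inspected structure of $G_1$. A first, crucial observation is that since $V(G_1) \cap V(G_2) = V(S)$ and $G = G_1 \cup G_2$, there are no edges of $G$ with one endpoint in $V(G_1) \setminus V(S)$ and the other in $V(G_2) \setminus V(S)$. Consequently every clique of $G$ lies entirely in $V(G_1)$ or entirely in $V(G_2)$, so $C \subseteq V(G_1)$ or $C \subseteq V(G_2)$.

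Case 1: $C \subseteq V(G_2)$. I would show that $C$ must already be a clique separator of $G_2 \cup e$, contradicting the atom hypothesis. Because $G \setminus S$ is disconnected, there exist vertices $x \in V(G_2) \setminus V(S)$ and $y \in V(G_1) \setminus V(S)$ lying in different components of $G \setminus S$; after passing to $G \setminus C$, I would locate a pair of vertices of $V(G_2) \setminus C$ in distinct components of $G \setminus C$. The remaining issue is to promote this separation from $G$ to $G_2 \cup e$: any path in $G_2 \cup e$ between these two vertices either avoids $C$ already or uses $e$. In the latter case, because both endpoints of $e$ lie in $V(S) \subseteq V(G_2)$ and in the former case the path is $G_2$-internal, the existence of such a path in $(G_2 \cup e) \setminus C$ forces, via gluing, a $G$-path avoiding $C$, contradicting the separation.

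Case 2: $C \subseteq V(G_1)$. Split further into $C \subseteq V(S)$ and $C \not\subseteq V(S)$. If $C \subseteq V(S)$, then $C$ is a clique subgraph of $S$; since $C$ separates $G$ but $G \setminus S$ already pinches $G_1$ away from $G_2 \setminus V(S)$, the only way $C$ can separate $G$ is by separating two vertices of $V(G_2)$ in $G_2 \setminus C$. Tracing the connectivity between $V(G_1) \setminus V(S)$ and the rest, I would show that the endpoints of $e$ must then lie in distinct components of $G_2 \setminus C$, directly contradicting the hypothesis that no clique subgraph of $S$ separates the endpoints of $e$ in $G_2$. If instead $C$ contains some vertex of $V(G_1) \setminus V(S)$, I would use the explicit enumeration of $G_1$ given by Figures \ref{fig:3top_vin_uonly_2} - \ref{fig:deg1_vout_uvonly}: in each such figure $V(G_1) \setminus V(S)$ consists of at most two distinguished vertices (the endpoints of the red edge together with the small number of ``hollow'' decorations), each with a prescribed, spread-out neighborhood into $V(S)$. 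A routine inspection shows that no clique $C$ containing such a decorating vertex can detach $V(G_1) \setminus V(S)$ from $V(S)$ while leaving the corresponding separation of $G_2$ intact.

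The main obstacle will be the final subcase: it requires a careful enumeration over the many graphs in Figures \ref{fig:3top_vin_uonly_2} - \ref{fig:deg1_vout_uvonly}. Rather than checking each one separately, I would isolate a uniform structural property — namely that for every such $G_1$ and every clique $K \subseteq V(G_1)$ with $K \cap (V(G_1) \setminus V(S)) \neq \emptyset$, the graph $G_1 \setminus K$ remains connected and every vertex of $V(S) \setminus K$ remains in the same component as at least one vertex of $V(G_1) \setminus V(S)$ not in $K$. Establishing this property once (by a short case check on the figures) collapses Subcase 2b to the same conclusion as Subcase 2a, and the rest of the argument goes through.
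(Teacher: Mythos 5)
Your direct contradiction argument essentially inlines the work that the paper delegates to the general gluing Lemma~\ref{lem:gluing_top_part_to_atom}: instead of verifying its four hypotheses and invoking it, you case-split a putative clique separator $C$ of $G$ by hand. (A minor point: the opening observation that no edge of $G$ runs between $V(G_1)\setminus V(S)$ and $V(G_2)\setminus V(S)$ does not follow from $V(G_1)\cap V(G_2)=V(S)$ and $G=G_1\cup G_2$ alone under the paper's definition of $\cup$; it requires the intended reading that $S$ separates $G_1\setminus S$ from $G_2\setminus S$, which the paper's own helper lemma also implicitly assumes, so this is not a real fault.) Your Cases 1 and 2a are sound, though Case 1 already silently requires the figure-check that $G_1\setminus A$ is connected for every clique $A\subseteq V(S)$, which is the paper's Statement (3) of Lemma~\ref{lem:gluing_top_part_to_atom}.

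The genuine gap is in Case 2b. The uniform structural property you propose to establish by inspection --- that $G_1\setminus K$ remains connected for every clique $K\subseteq V(G_1)$ meeting $V(G_1)\setminus V(S)$ --- is false for at least one of the figures. For Figure~\ref{fig:3top_vin_uonly_2}, $G_1$ is a diamond on $\{u,w_1,w_2,w_3\}$ with $e=w_1w_3$ the nonedge, so $\{u,w_2\}$ is a clique separator of $G_1$; the paper's proof of Lemma~\ref{lem:G'_f'_no_reducible_edge} singles out this figure precisely because $G_1$ itself is not an atom there, only $G_1\cup e$ is. This is why Lemma~\ref{lem:gluing_top_part_to_atom} uses the weaker Statement (2) (``either $G_1$ or $G_1\cup e$ is an atom'') and, when the connecting path inside $(G_1\cup e)\setminus K$ uses the artificial edge $e$, replaces $e$ by a detour through the connected $G_2\setminus K$. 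That replacement step is what your Case 2b omits. Without it, the uniform property fails on Figure~\ref{fig:3top_vin_uonly_2}, which is the very case that motivated stating this lemma; you would need to weaken your property to match the paper's Statement (2) and supply the $e$-detour argument to close the case.
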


\begin{proof}
    The lemma will follow by applying Lemma \ref{lem:gluing_top_part_to_atom} after we show that Statements (1)-(4) of this lemma are satisfied.  
    Statements (1) and (4) are true by assumption.  
    For each case of $G_1$, it is easy to verify that either $G_1$ or $G_1 \cup e$ is an atom, and so Statement (2) is satisfied.  
    Also, for any clique subgraph $A$ of $S$, $G_1 \setminus A$ is connected, and so Statement (3) is satisfied.  
\end{proof}

\begin{lemma}[Additional properties of graphs with $H^{-1}$ in Figures \ref{fig:3top_vin_uonly_2} - \ref{fig:deg1_vout_uvonly}]
    \label{lem:k_f-preserving_fbm}
    Let $G$ be a graph with subgraphs $G_1$, $G_2$, and $S$ such that $G = G_1 \cup G_2$, $G_1$ is one of the graphs in Figures \ref{fig:3top_vin_uonly_2} - \ref{fig:deg1_vout_uvonly}, $S = G_1 \cap G_2$ is the subgraph of $G_1$ induced by some subset of $\{w_i\}$ that contains some nonedge $e$, and $G \setminus S$ is disconnected.  
    Also, let $e = w_1w_3$ if $G_1$ is the graph in Figure \ref{fig:cycle_vin_uonly_2}.  
    If $G_2 \cup e$ has an $e$-preserving forbidden minor, then the red edge of $G_1$ is preserved in some forbidden minor of $G$.  
\end{lemma}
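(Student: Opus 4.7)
The plan is to take any $e$-preserving forbidden minor $[G_2 \cup e]_1$ of $G_2 \cup e$, in which by hypothesis $[x]_1 \neq [y]_1$ for $e = xy$, and extend it to an $f$-preserving forbidden minor $[G]_2$ of $G$ by choosing a suitable minor map on $V(G_1) \setminus V(S)$. The map $[\cdot]_2$ is defined to agree with $[\cdot]_1$ on $V(G_2)$, and on $V(G_1) \setminus V(S)$ it is specified by a partition that depends on the particular graph $G_1$. Since $e$ is an edge of $G_2 \cup e$ but a nonedge of $G$, the minor-edge $[x]_1[y]_1$ that $[G_2 \cup e]_1$ obtains from $e$ must be re-supplied inside $G$ by an edge of $G_1$ crossing between the fibers of $[x]_2$ and $[y]_2$.

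Concretely, I would require the assignment $\pi\colon V(G_1)\setminus V(S) \to V([G_2\cup e]_1)$ to satisfy:
\begin{enumerate}[(i)]
\item every fiber of the combined map induces a connected subgraph of $G$;
\item $[u]_2 \neq [v]_2$, so that the red edge $f = uv$ is preserved;
\item at least one edge of $G_1$ joins $\pi^{-1}([x]_1)\cup\{x\}$ to $\pi^{-1}([y]_1)\cup\{y\}$, replacing the edge previously contributed by $e$.
\end{enumerate}
If (i)--(iii) hold, then $[G]_2$ has $[G_2\cup e]_1$ as a spanning subgraph, and after deleting any superfluous $G_1$-edges between distinct fibers the result is exactly $[G_2\cup e]_1$, hence a forbidden minor; by (ii), $f$ is preserved.

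The next step is to produce such a $\pi$ for each of the graphs $G_1$ listed in Figures \ref{fig:3top_vin_uonly_2}-\ref{fig:deg1_vout_uvonly}, together with the allowed nonedges $e$ (with $e = w_1w_3$ forced when $G_1$ is the graph in Figure \ref{fig:cycle_vin_uonly_2}). I would dispose of the cases where $V(G_1)\setminus V(S) \subseteq \{u\}$ directly (sending $u$ to whichever of $[x]_1, [y]_1$ is opposite to $v$), and for the remaining cases I would use Lemma \ref{cor:top-level_structure} to control the adjacencies of vertices in $V(G_1)\setminus(V(S)\cup\{u,v\})$: each such vertex is adjacent to $u$, $v$, and at least one $w_i$, so it has ample room to be assigned consistently with (i) and (iii). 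In each figure the assignment is visibly determined by routing a ``connecting path'' of $G_1$ from $x$ to $y$ while keeping $u$ and $v$ on opposite sides of this path.

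The main obstacle I expect is the bookkeeping of cases in which $e$ is incident to $v$ (or close to it) and $V(G_1)\setminus V(S)$ contains additional vertices that must be placed without disconnecting a fiber or forcing $[u]_2 = [v]_2$. This is precisely why the lemma singles out $e = w_1w_3$ for Figure \ref{fig:cycle_vin_uonly_2}: for the other diagonal $w_2w_4$ of the $4$-cycle, every admissible connected partition of $V(G_1)$ realising the edge $e$ through $G_1$ would lump $u$ and $v$ together, breaking (ii). Once the construction is verified for each $(G_1, e)$ enumerated in the statement, the general lemma follows at once.
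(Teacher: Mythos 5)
Your proposal takes essentially the same route as the paper: lift an $e$-preserving forbidden minor $[G_2\cup e]_1$ to a minor of $G$ by assigning the vertices of $G_1\setminus S$ to fibers, demanding that fibers stay connected, that the red edge $f=uv$ remain preserved, and that some $G_1$-edge between the fibers of $[x]_1$ and $[y]_1$ re-supplies the edge previously furnished by the nonedge $e$. Your conditions (i)--(iii) correctly isolate what must be checked, and the observation that superfluous $G_1$-edges between distinct fibers can simply be deleted is right.

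Where the approaches differ is organizational rather than substantive. The paper splits into two branches. In the first it picks, for the easy figures, a minor $[G]$ of $G$ with $G_2$ preserved and $[f]=[e]$ outright; in your language this is the case where $u$ can be sent to the endpoint-fiber of $e$ opposite to $v$ and every constraint is trivially met. In the second branch (e.g.\ Figure~\ref{fig:3top_vin_uonly_2} with $e=w_1w_3$), $[f]=[e]$ is unavailable because $v$ is an interior vertex $w_2$ of the $S$-path from $w_1$ to $w_3$. The paper then makes the decisive observation that because $e$ is preserved in $[G_2\cup e]_1$, at least one of $w_1w_2$, $w_2w_3$ must also be preserved, so $[G]$ can be chosen with $[f]$ matching that preserved pair. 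In your sketch this observation is latent in the phrase ``whichever of $[x]_1,[y]_1$ is opposite to $v$,'' since $[x]_1\ne[y]_1$ guarantees at least one of them differs from $[v]_1$; but you should call this out explicitly, because it is the content of the lemma in the hard cases, and it is also where your condition~(iii) becomes delicate (the chosen fiber still has to be one that $u$ is adjacent to so that the $e$-edge can be routed through $u$). Your claim about why $e=w_2w_4$ is excluded for Figure~\ref{fig:cycle_vin_uonly_2} is plausible but is one of the per-figure verifications you would need to actually carry out; the paper does not justify the restriction either and simply restricts to the choice of $f'$ its callers use. In short: same method, correct sufficient conditions, but the per-figure bookkeeping --- especially checking that (ii) and (iii) are simultaneously satisfiable when $v$ is interior to the $e$-path in $S$ and when $V(G_1)\setminus V(S)$ has extra vertices --- is where the real work lies and is not yet done in the sketch.
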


\begin{proof}
    Assume that $G_2 \cup e$ has an $e$-preserving forbidden minor.  
    We will show that $G$ has an $f$-preserving forbidden minor, where $f$ is the red edge of $G_1$.  
    If there exists a minor $[G]$ in which $G_2$ is preserved and $[f] = [e]$, then this follows from the fact that $G_2 \cup e$ has an $e$-preserving forbidden minor.  
    Using this fact, it is easy to see that the lemma is true if either (i) $G_1$ is one of the graphs in Figures \ref{fig:3top_vin_uplus_2}, \ref{fig:deg1_vin_uplus_1}, \ref{fig:deg1_vin_uplus_3}, \ref{fig:deg1_vin_uplus_4}, or \ref{fig:deg1_vout_uvonly}, (ii) $e = w_1w_3$ and $G_1$ is one of the graphs in Figures \ref{fig:cycle_vin_uplus_3} or \ref{fig:cycle_vin_uplus_2}, or (iii) $e = w_1w_4$ and $G_1$ is one of the graphs in Figures \ref{fig:deg1_vin_uonly_2} or \ref{fig:deg1_vin_uplus_5}.  
    
    Next, consider the case where $G_1$ is the graph in Figure \ref{fig:3top_vin_uonly_2}.  
    Note that either $w_1w_2$ or $w_2w_3$ is preserved in any $e$-preserving minor of $G_2$.  
    Also, observe that there exists a minor $[G]$ in which $G_2$ is preserved, $[e]$ is an edge, and $[f]$ is either $[w_1w_2]$ or $[w_2w_3]$.  
    Combining these facts shows that $G$ has an $f$-preserving forbidden minor.  
    A similar argument applies if either (i) $e = w_1w_3$ and $G_1$ is one of the graphs in Figures \ref{fig:chord_vin_uonly_2}, \ref{fig:cycle_vin_uonly_2}, \ref{fig:deg1_vin_uonly_2}, \ref{fig:deg1_vin_uplus_5}, or \ref{fig:cycle_vout_uvonly} or (ii) $e = w_2w_4$ and $G_1$ is one of the graphs in Figures \ref{fig:cycle_vin_uplus_3}, \ref{fig:cycle_vin_uplus_2}, or \ref{fig:cycle_vout_uvonly}.  

    The above cases are exhaustive, and so the proof is complete.  
\end{proof}

\begin{proof}[Proof of Lemma \ref{lem:G'_f'_no_reducible_edge}]
    For any nonedge $f'$ of $C^{-1}$, consider the statements (a) some atom of $G' \cup f'$ contains $f'$ and has an $f'$-preserving forbidden minor and (b) the pair $(G',f')$ has no reducing edge.  
    By Lemma \ref{lem:f'_and_J}, (a) is true for some choice of $f'$, and for any choice of $f'$ if $C^{-1}$ is the graph in Figure \ref{fig:exp_clique_sep_k3_cycle}.  
    Assume that none of these choices satisfy (b).  
    We will show that $(G,f)$ has a reducing edge, which, when combined with the assumption that $(G,f)$ is minimal, proves the lemma.  
    
    Since any choice for $f'$ that satisfies (a) does not satisfy (b), the pair $(G',f')$ has some reducing edge $h$.  
    Consider the minor $[G \cup f]$ obtained by contracting $h$.  
    By definition, $f'$ is retained in $[G']$ and some induced subgraph $K$ of $[G']$ is such that $K \cup [f']$ is an atom that contains $[f']$ and has an $[f']$-preserving forbidden minor.  
    Let $S = [C^{-1}]$, $e = [f']$, $G_1 = [H^{-1}]$, and $G_2 = K$.  
    By Lemma \ref{lem:C'_a-n}, $H^{-1}$ is one of the graphs in Figures \ref{fig:3top_vin_uonly_2} - \ref{fig:deg1_vout_uvonly} and $C^{-1}$ is the subgraph induced by the vertex set $\{w_i\}$.  
    There are two cases.  

    \smallskip
    \noindent\textbf{Case 1:} $G_1$ is preserved in $[G \cup f]$.  
    \smallskip

    We will show that $h$ is reducing with respect to $(G,f)$.  
    Observe that some endpoint of $h$ is contained in $G' \setminus C^{-1}$.  
    Hence, $f$ is retained in $[G \cup f]$, and so $h$ is not of Type (3).  
    To see that $h$ is not of Type (1) or (2), we show $G_1 \cup G_2$ is an atom that contains $[f]$ and has an $[f]$-preserving forbidden minor.  
    By construction, $G_1 \cup G_2$ contains $[f]$.  
    If $C^{-1}$ is the graph in Figure \ref{fig:exp_clique_sep_k3_cycle}, then Lemma \ref{lem:f'_and_J} allows us to choose any of its nonedges to be $f'$.  
    Hence, Lemma \ref{lem:k_f-preserving_fbm} states that $G_1 \cup G_2$ has an $[f]$-preserving forbidden minor.  
    
    We complete the proof by showing that $G_1 \cup G_2$ is an atom.  
    If $G_1$ is not the graph in Figure \ref{fig:3top_vin_uonly_2}, then it is easy to verify that $G_1$ is an atom.  
    Since $G_2 \cup e$ is an atom, Lemma \ref{lem:gluing_min_k-clique-sum_graphs} shows that $G_1 \cup G_2$ is an atom.  
    Hence, assume that $G_1$ is the graph in Figure \ref{fig:3top_vin_uonly_2}.  
    We will demonstrate that no clique subgraph of $S$ separates the endpoints of $e$ in $G_2$ and then apply Lemma \ref{lem:J+H'_atom} to show that $G_1 \cup G_2$ is an atom.  
    The only clique subgraph to check is the one induced by $\{[w_2]\}$.  
    Since $G_2 \cup e$ has an $e$-preserving forbidden minor and $S$ contains exactly three vertices, there exists some vertex $x$ in $G_2 \setminus S$.  
    Let $e = [uv]$.  
    If $\{[w_2]\}$ separates $[u]$ and $[v]$ in $G_2$, then either every path in $G_2$ between $x$ and $[u]$ contains some vertex in $\{[v],[w_2]\}$ or every path in $G_2$ between $x$ and $[v]$ contains some vertex in $\{[u],[w_2]\}$.  
    This implies that either $\{[u],[w_2]\}$ or $\{[v],[w_2]\}$ is a clique separator of $G_2 \cup e$.  
    Therefore, since $G_2 \cup e$ is an atom, this implies that $\{[w_2]\}$ does not separate $[u]$ and $[v]$ in $G_2$.  

    \smallskip
    \noindent\textbf{Case 2:} $G_1$ is not preserved in $[G \cup f]$.  
    \smallskip

    Since $h$ is contained in $G'$, it must be contained in $S$.  
    Combining this with the fact that $e$ is retained in $[G \cup f]$ shows that $H^{-1}$ is one of the graphs in Figures \ref{fig:chord_vin_uonly_2}, \ref{fig:deg1_vin_uplus_1}, \ref{fig:deg1_vin_uonly_2}, \ref{fig:deg1_vin_uplus_5}, \ref{fig:deg1_vin_uplus_3}, \ref{fig:deg1_vin_uplus_4}, or \ref{fig:deg1_vout_uvonly}.  
    There are $9$ subcases.  

    \smallskip
    \noindent\textbf{Subcase 1:} $H^{-1}$ is the graph in Figure \ref{fig:chord_vin_uonly_2}.  
    \smallskip
    
    Note that $h = w_2w_4$ and $G_1$ is the graph in Figure \ref{fig:3top_vin_uonly_2}.  
    We will show that $h$ is reducing with respect to $(G,f)$.  
    Observe that $f$ is retained in $[G \cup f]$, and so $h$ is not of Type (3).  
    To see that $h$ is not of Type (1) or (2), we show $G_1 \cup G_2$ is an atom that contains $[f]$ and has an $[f]$-preserving forbidden minor.  
    By construction, $G_1 \cup G_2$ contains $[f]$.  
    Also, Lemma \ref{lem:k_f-preserving_fbm} shows that $G_1 \cup G_2$ has an $[f]$-preserving forbidden minor.  
    We will demonstrate that no clique subgraph of $S$ separates the endpoints of $e$ in $G_2$ and then apply Lemma \ref{lem:J+H'_atom} to show that $G_1 \cup G_2$ is an atom.  
    The only clique subgraph to check is the one induced by $\{[w_2]\}$.  
    Since $f'$ satisfies (a), there exists an atom $J$ of $G' \cup f'$ that contains $f'$ and has an $f'$-preserving forbidden minor.  
    Lemma \ref{lem:f'_and_J} shows that no clique subgraph of $C^{-1}$ separates the endpoints of $f'$ in $J \setminus f'$.  
    If $\{[w_2]\}$ separates the endpoints of $e$ in $G_2$, then $\{w_2,w_4\}$ separates the endpoints of $f'$ in $J \setminus f'$.  
    Combining these statements implies that $\{[w_2]\}$ does not separate the endpoints of $e$ in $G_2$.  

    \smallskip
    \noindent\textbf{Subcase 2:} $H^{-1}$ is one of the graphs in Figures \ref{fig:deg1_vin_uplus_1}, \ref{fig:deg1_vin_uplus_3}, \ref{fig:deg1_vin_uplus_4}, or \ref{fig:deg1_vout_uvonly}.  
    \smallskip

    If either $H^{-1}$ is not the graph in Figure \ref{fig:deg1_vin_uplus_1} or $h \neq w_2w_4$, then we show that $h$ is reducing with respect to $(G,f)$.  
    Observe that $f$ is retained in $[G \cup f]$, and so $h$ is not of Type (3).  
    To see that $h$ is not of Type (1) or (2), we show $G_1 \cup G_2$ is an atom that contains $[f]$ and has an $[f]$-preserving forbidden minor.  
    By construction, $G_1 \cup G_2$ contains $[f]$.  
    Also, note that $G_1$ is an atom.  
    Hence, since $G_2 \cup e$ is an atom, Lemma \ref{lem:gluing_min_k-clique-sum_graphs} shows that $G_1 \cup G_2$ is an atom.  
    Lastly, since $G_2 \cup e$ has an $e$-preserving forbidden minor, $G_1 \cup G_2$ clearly has an $[f]$-preserving forbidden minor.  

    Next, we show that if $H^{-1}$ is the graph in Figure \ref{fig:deg1_vin_uplus_1}, then $h \neq w_2w_4$.  
    If this is not the case, then $f' = w_1w_3$ since it is retained in $[G \cup f]$.  
    Observe that $[G \cup f]$ is $f$-retaining and has an $[f]$-preserving forbidden minor, since it has an $e$-preserving forbidden minor.  
    Let $x$ be the vertex in $H^{-1} \setminus C^{-1}$ that is not an endpoint of $f$.  
    Note that the subgraph $E$ of $[G \cup f]$ induced by $\{[x],[w_1],[w_2]\}$ is a CMS of $[G \cup f]$ such that the $E$-component of $[G \cup f]$ containing $[f]$ is a proper subgraph of $G_1$ that has no $[f]$-preserving forbidden minor.  
    These facts imply that $E^{-1}$ is an expanded $f$-separating CMS whose expanded $f$-component is a proper subgraph of $H^{-1}$.  
    However, this contradicts the fact that $C^{-1}$ is in the top-level of $(G,f)$.  
    Therefore, $h$ is not $w_2w_4$.  

    \smallskip
    \noindent\textbf{Subcase 3:} $H^{-1}$ is one of the graphs in Figures \ref{fig:deg1_vin_uonly_2} or \ref{fig:deg1_vin_uplus_5}, $uw_1$ is an edge, and $h$ is $w_2w_3$ or $w_2w_4$.  
    \smallskip

    Arguments similar to those in Subcases 1 and 2 show that $h$ is reducing with respect to $(G,f)$.  

    \smallskip
    \noindent\textbf{Subcase 4:} $H^{-1}$ is one of the graphs in Figures \ref{fig:deg1_vin_uonly_2} or \ref{fig:deg1_vin_uplus_5}, $uw_1$ is an edge, $h$ is $w_3w_4$, and $f' = w_1w_3$.  
    \smallskip
    
    Let $u$ be the endpoint of $f$ not contained in $C^{-1}$.  
    We will show that $uw_1$ is reducing with respect to $(G,f)$.  
    Let $[G \cup f]_1$ be the minor obtained by contracting $uw_1$.  
    Observe that $f$ is retained in $[G \cup f]_1$, and so $uw_1$ is not of Type (3).  
    We show that $uw_1$ is not of Type (1) or (2) by identifying some atom of $[G \cup f]_1$ that contains $[f]_1$ and has an $[f]_1$-preserving forbidden minor.  
    It suffices to show that some atom of $G' \cup \{w_1w_3,w_1w_4\}$ contains $w_1w_4$ and has a $w_1w_4$-preserving forbidden minor.  
    
    Recall the induced subgraph $K$ of $[G']$ such that $K \cup [f']$ is an atom that contains $[f']$ and has an $[f']$-preserving forbidden minor.  
    Note that the induced subgraph $K^{-1} \cup f'$ of $G' \cup f'$ contains $w_1w_4$ and has a $w_1w_4$-preserving forbidden minor.  
    If $K^{-1} \cup f'$ is an atom, then it is contained in some atom $X$ of $G' \cup f'$ since it is an induced subgraph.  
    Observe that each CMS of $X \cup w_1w_4$ contains $w_1w_4$.  
    Hence, using Lemma \ref{cor:minimal_k-clique-sum_component_containing_fbm_and_f}, we see that some atom of $X \cup w_1w_4$, and therefore some atom of $G' \cup \{w_1w_3,w_1w_4\}$, contains $w_1w_4$ and has a $w_1w_4$-preserving forbidden minor, as desired.  
    
    Otherwise, if $K^{-1} \cup f'$ is not an atom, then it contains exactly one CMS $E$, since $K \cup [f']$ is an atom.  
    Furthermore, $E$ contains exactly one vertex in $\{w_3,w_4\}$ and there are exactly two $E$-components of $K^{-1} \cup f'$, both of which are atoms and one of which is $E \cup \{w_3,w_4\}$.  
    These facts imply that the atom graph of $K^{-1} \cup f'$ is a path of length two.  
    Let $x$ be the vertex in $\{w_3,w_4\}$ that is not contained in $E$.  
    The neighborhood of $x$ in $K^{-1} \cup f'$ is $V(E)$, which is a clique.  
    Since $K^{-1} \cup f'$ contains the vertices $w_1$, $w_3$, and $w_4$, the edges $w_1w_3$ and $w_3w_4$, and the nonedge $w_1w_4$, this implies that $x = w_4$ and $E$ does not contain $w_1$.  
    Therefore, we can use Lemma \ref{lem:I'_f'_cms} to see that $K^{-1} \cup \{f',w_1w_4\}$ is an atom.  
    Consequently, it is contained in some atom of $G' \cup \{w_1w_3,w_1w_4\}$ that contains $w_1w_4$ and has a $w_1w_4$-preserving forbidden minor, as desired.  

    \smallskip
    \noindent\textbf{Subcase 5:} $H^{-1}$ is one of the graphs in Figures \ref{fig:deg1_vin_uonly_2} or \ref{fig:deg1_vin_uplus_5}, $uw_1$ is an edge, $h$ is $w_3w_4$, and $f' = w_1w_4$.  
    \smallskip
    
    This case is similar to Subcase 4.  
    We will show that $uw_1$ is reducing with respect to $(G,f)$.  
    Let $[G \cup f]_1$ be the minor obtained by contracting $uw_1$.  
    Observe that $f$ is retained in $[G \cup f]_1$, and so $uw_1$ is not of Type (3).  
    We show that $uw_1$ is not of Type (1) or (2) by identifying some atom of $[G \cup f]_1$ that contains $[f]_1$ and has an $[f]_1$-preserving forbidden minor.  
    It suffices to show that some atom of $G' \cup \{w_1w_3,w_1w_4\}$ contains $w_1w_4$ and has a $w_1w_4$-preserving forbidden minor.  
    
    If $K^{-1} \cup f'$ is an atom, then it is contained in some atom $X$ of $G' \cup f'$, since it is an induced subgraph of $G' \cup f'$.  
    Note that $X$ has a forbidden minor in which $w_3w_4$ is contracted and both $w_1w_3$ and $w_1w_4$ are preserved.  
    Hence, if $X \cup w_1w_3$ is an atom, then it is contained in some atom of $G' \cup \{w_1w_3,w_1w_4\}$ that has the desired properties.  
    Otherwise, each CMS of $X \cup w_1w_3$ contains $w_1w_3$.  
    Hence, Lemma \ref{cor:minimal_k-clique-sum_component_containing_fbm_and_f} shows that some atom $Y$ of $X \cup w_1w_3$ contains $w_1w_3$ and has a forbidden minor in which $w_3w_4$ is contracted and both $w_1w_3$ and $w_1w_4$ are preserved.  
    If $Y$ contains $w_4$, then the atom of $G' \cup \{w_1w_3,w_1w_4\}$ that contains $Y$ has the desired properties.  
    Otherwise, $Y$ is an atom of $(X \cup w_1w_3) \setminus w_1w_4$.  
    Therefore, $w_1w_3$ satisfies (a), and so we can instead choose $f' = w_1w_3$ and apply Subcase 4.  
    
    Next, if $K^{-1} \cup f'$ is not an atom, then an argument similar to the one in Subcase 4 shows that it contains exactly one CMS $E$ that contains $w_4$ but not $w_3$ and such that there are exactly two $E$-components of $K^{-1} \cup f'$, both of which are atoms and one of which is $E \cup w_3$.  
    These facts imply that the atom graph of $K^{-1} \cup f'$ is a path of length two.  
    Since the neighborhood of $w_3$ in $K^{-1} \cup f'$ is $E$, which is a clique, and $w_1w_3$ is a nonedge, $E$ does not contain $w_1$.  
    Therefore, we can use Lemma \ref{lem:I'_f'_cms} to see that $K^{-1} \cup \{f',w_1w_3\}$ is an atom, and so it is contained in some atom of $G' \cup \{w_1w_3,w_1w_4\}$ that has the desired properties.  

    \smallskip
    \noindent\textbf{Subcase 6:} $H^{-1}$ is the graph in Figure \ref{fig:deg1_vin_uplus_5}, $uw_1$ is a nonedge, and $h = w_2w_3$.  
    \smallskip

    An similar argument to that in Subcase 2 applies.  

    \smallskip
    \noindent\textbf{Subcase 7:} $H^{-1}$ is the graph in Figure \ref{fig:deg1_vin_uplus_5}, $uw_1$ is a nonedge, and $h = w_2w_4$.  
    \smallskip
    
    We will show that $h$ is reducing with respect to $(G,f)$.  
    Observe that $f$ is retained in $[G \cup f]$, and so $h$ is not of Type (3).  
    To see that $h$ is not of Type (1) or (2), we show $G_1 \cup G_2$ is an atom that contains $[f]$ and has an $[f]$-preserving forbidden minor.  
    By construction, $G_1 \cup G_2$ contains $[f]$.  
    Also, since $G_2 \cup e$ has an $e$-preserving forbidden minor, it is easy to see that $G_1 \cup G_2$ has an $[f]$-preserving forbidden minor.  
    We will show that $G_1 \cup G_2$ is an atom by applying Lemma \ref{lem:gluing_top_part_to_atom}, after we show that Statements (1) - (4) in this lemma are true.  
    Statement (1) follows from our assumption that $G_2 \cup e$ is an atom.  
    Statement (2) follows from the observation that $G_1 \cup e$ is an atom.  
    For Statement (3) and (4), consider any clique subgraph $A$ of $S$.  
    It is easy to see that $G_1 \setminus A$ is connected, and so Statement (3) is true.  
    Lastly, to show that Statement (4) is true, we must argue that $\{[w_2]\}$ does not separate the endpoints of $e$ in $G_2$.  
    Since $f'$ satisfies (a), there exists an atom $J$ of $G' \cup f'$ that contains $f'$ and has an $f'$-preserving forbidden minor.  
    Lemma \ref{lem:f'_and_J} shows that no clique subgraph of $C^{-1}$ separates the endpoints of $f'$ in $J \setminus f'$.  
    If $\{[w_2]\}$ separates the endpoints of $e$ in $G_2$, then $\{w_2,w_4\}$ separates the endpoints of $f'$ in $J \setminus f'$.  
    Combining these statements implies that $\{[w_2]\}$ does not separate the endpoints of $e$ in $G_2$.  
    Therefore, applying Lemma \ref{lem:gluing} completes the proof.  

    \smallskip
    \noindent\textbf{Subcase 8:} $H^{-1}$ is the graph in Figure \ref{fig:deg1_vin_uplus_5}, $uw_1$ is a nonedge, $h = w_3w_4$, and $f' = w_1w_4$.  
    \smallskip

    Let $x$ be any vertex of $H^{-1} \setminus C^{-1}$ other that an endpoint of $f$.  
    We will show that $xw_1$ is reducing with respect to $(G,f)$.  
    Observe that the minor $[G \cup f]_1$ obtained by contracting $xw_1$ is $f$-retaining, and so $xw_1$ is not of Type (3).  
    We show that $xw_1$ is not of Type (1) or (2) by identifying some atom of $[G \cup f]_1$ that contains $[f]_1$ and has an $[f]_1$-preserving forbidden minor.  

    Since $f'$ satisfies (a), some atom $J$ of $G' \cup f'$ contains $f'$ and has an $f'$-preserving forbidden minor.  
    We will show that either $J$ contains $w_3$ or $w_2w_3$ is reducing with respect to $(G',f')$.  
    In the latter case, Subcase 6 applies.  
    If $J$ does not contain $w_3$, then note that the minor $[G \cup f']_2$ obtained by contracting $w_2w_3$ is $f'$-retaining, so it $w_2w_3$ is not of Type (3).  
    Also, some CMS $E$ contained in $J$ is such that all paths in $G' \cup f'$ between $w_3$ and any vertex of $J \setminus E$ contain some vertex of $E$.  
    Hence, $J$ is weakly retained in $[G' \cup f']_2$.  
    Therefore, some atom of $[G' \cup f']_2$ contains $[J]_2$, and consequently contains $[f']_2$ and has an $[f']_2$-preserving forbidden minor.  
    This shows that $w_2w_3$ is not of Type (1) or (2), and so it is reducing with respect to $(G',f')$.  

    Next, since $J$ contains $w_3$, $J$ is some induced subgraph $G_2$ of $[G']_1$ with $f'$ as $[f']_1$.  
    Let $G_1 = [C^{-1}]_1 \cup [u]_1$, $S = G_1 \cap G_2$, and $e = [w_1w_3]_1$.  
    Then, $G_1 \cup G_2$ contains $[f]_1$.  
    Also, since $G_2$ has an $[f']_1$-preserving forbidden minor, it is easy to see that $G_1 \cup G_2$ has an $[f]_1$-preserving forbidden minor.  
    Furthermore, since $G_1$ and $G_2$ are atoms and $e$ is a nonedge, Lemma \ref{lem:gluing_min_k-clique-sum_graphs} shows that $G_1 \cup G_2$ is an atom.  
    Furthermore, $G_1 \cup G_2$ is contained in some atom of $[G \cup f]_1$ since it is an induced subgraph of $[G \cup f]_1$, and this atom has the desired properties.  

    \smallskip
    \noindent\textbf{Subcase 9:} $H^{-1}$ is the graph in Figure \ref{fig:deg1_vin_uplus_5}, $uw_1$ is a nonedge, $h = w_3w_4$, and $f' = w_1w_3$.  
    \smallskip

    First, we show that some atom of $G' \cup \{w_1w_3,w_1w_4\}$ contains $w_1$, $w_3$, and $w_4$ and has a $w_1w_4$-preserving forbidden minor.  
    Recall the induced subgraph $K$ of $[G']$ such that $K \cup [f']$ is an atom that contains $[f']$ and has an $[f']$-preserving forbidden minor.  
    Note that the induced subgraph $K^{-1} \cup f'$ of $G' \cup f'$ contains $w_1$, $w_3$, and $w_4$ and has a $w_1w_4$-preserving forbidden minor.  
    If $K^{-1} \cup f'$ is an atom, then it is contained in some atom $X$ of $G' \cup f'$ since it is an induced subgraph.  
    Observe that each CMS of $X \cup w_1w_4$ contains $w_1w_4$.  
    Hence, Lemma \ref{cor:minimal_k-clique-sum_component_containing_fbm_and_f} shows that some atom $Y$ of $X \cup w_1w_4$ contains $w_1w_4$ and has a $w_1w_4$-preserving forbidden minor.  
    If $Y$ contains $w_3$, then we are done.  
    Otherwise, some atom of $G' \cup w_1w_4$ contains $Y$.  
    Therefore, $w_1w_4$ satisfies (a), and so we can instead choose $f' = w_1w_4$ and apply Subcase 8.  
    
    Next, if $K^{-1} \cup f'$ is not an atom, then an argument similar to that in Subcase 4 shows that $K^{-1} \cup f'$ contains exactly one CMS $E$ that contains $w_3$ but not $w_4$ and such that there are exactly two $E$-components of $K^{-1} \cup f'$, both of which are atoms and one of which is $E \cup w_4$.  
    These facts imply that the atom graph of $K^{-1} \cup f'$ is a path of length two.  
    Since the neighborhood of $w_4$ in $K^{-1} \cup f'$ is $E$, which is a clique, and $w_1w_4$ is a nonedge, $E$ does not contain $w_1$.  
    Therefore, we can use Lemma \ref{lem:I'_f'_cms} to see that $K^{-1} \cup \{f',w_1w_4\}$ is an atom, and so it is contained in some atom of $G' \cup \{w_1w_3,w_1w_4\}$ that has the desired properties.  

    Finally, by the above arguments, some induced subgraph $I$ of $[G']_1$ is such that $I \cup [w_1w_3]_1$ is an atom that contains $[w_1]_1$, $[w_3]_1$, and $[w_4]_1$ and has a $[w_1w_4]_1$-preserving forbidden minor.  
    Let $G_1 = [C^{-1}]_1 \cup [u]_1$, $G_2 = I$, $S = G_1 \cap G_2$, and $e = [w_1w_3]_1$.  
    Similar to the argument in Subcase 8, we can use Lemma \ref{lem:gluing_min_k-clique-sum_graphs} to show that some atom of $[G \cup f]_1$ that contains $[f]_1$ and has an $[f]_1$-preserving forbidden minor.  

    \smallskip
    The above cases are exhaustive, and so the lemma is proved.  
\end{proof}

\subsection{Proof of Lemmas for Proposition \ref{prop:final_top_graphs}}
\label{sec:proof_final_top_graphs}

We present proofs of Lemmas \ref{lem:C'_a-i} and \ref{lem:C'_degree_1_reducible_3} in Section \ref{sec:prop-8-9}.  
Lemma \ref{lem:C'_a-i} is proved using Lemma \ref{lem:J_contains_C'_degree_1}, also below.  

\begin{lemma}[Vertices in atom containing $f'$]
    \label{lem:J_contains_C'_degree_1}
    Consider a minimal graph-nonedge pair $(G,f)$ with a top-level expanded $f$-separating CMS $C^{-1}$ whose expanded $f$-component is $H^{-1}$, and let $G' = G \setminus (H^{-1} \setminus C^{-1})$.  
    For any nonedge $f'$ of $C^{-1}$, if some atom $J$ of $G' \cup f'$ contains $f'$ and has an $f'$-preserving forbidden minor and the pair $(G',f')$ has no reducing edge, then
    \begin{enumerate}[(i)]
        \item $J$ contains $w_1$, $w_3$, and $w_4$ if $C^{-1}$ is the graph in Figure \ref{fig:exp_clique_sep_k3_cycle} and
        \item $J$ contains $C^{-1}$ if $C^{-1}$ is the graph in Figure \ref{fig:exp_clique_sep_k3_degree_1}.  
    \end{enumerate}
\end{lemma}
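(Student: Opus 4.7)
The plan is to prove both cases by contradiction. In each case, I would assume that some claimed vertex $w_i$ of $C^{-1}$ is not contained in $J$, and then exhibit an edge of $G'$ that is reducing with respect to $(G',f')$, contradicting the hypothesis that $(G',f')$ has no reducing edge. The template for producing the reducing edge is familiar from the earlier proofs in Section \ref{sec:proof_IH_2}: since $J$ is an atom of $G' \cup f'$ with an $f'$-preserving forbidden minor, by Lemma \ref{lem:I'_path} the atom graph of $G' \cup f'$ is a path on which $J$ is a vertex, and if $w_i \notin J$ there is a CMS $E \subsetneq J$ of $G' \cup f'$ with $w_i$ on the opposite side of $E$ from $J$. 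I would then search for an edge $e$ on the $w_i$-side of $E$ such that (a) contracting $e$ does not double $f'$, so $f'$ is retained in the resulting minor $[G' \cup f']$, and (b) $J$ remains weakly retained, so $J$ is contained in an atom of $[G' \cup f']$ that has an $[f']$-preserving forbidden minor.  Such an edge is by definition reducing.

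For case (i), with $C^{-1}$ the 4-cycle $w_1w_2w_3w_4$ and (by symmetry) $f'=w_1w_3$, the vertices $w_1,w_3$ lie in $J$ automatically, so I need $w_2,w_4 \in J$; by a second application of symmetry it suffices to handle $w_4$. If $w_4 \notin J$, the CMS $E$ separating $w_4$ from $J$ must be a clique of $G' \cup f'$ contained in $J$ not containing $w_4$, and Lemma \ref{lem:I'_sep_components_contain_C'-E} forces $E$ to be a subclique of $\{w_1,w_3\}\cup w_4\text{'s near neighbours}$. I would enumerate the few possibilities (essentially $E \subseteq \{w_1,w_3\}$, possibly together with an extra decoration vertex) and in each subcase pick an edge $e$ whose contraction does not double $f'$. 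Because contractions on the $w_4$-side are far from the forbidden-minor structure carried by $J$, Lemmas \ref{lem:path_in_graph_path_in_atom}, \ref{lem:gluing_min_k-clique-sum_graphs}, and \ref{lem:gluing_top_part_to_atom} let me assemble $J$ together with the contracted leaf atom into an atom of $[G' \cup f']$ that still carries the $[f']$-preserving forbidden minor, certifying that $e$ is reducing.

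For case (ii), with $C^{-1}$ the triangle $w_1w_2w_3$ plus a pendant $w_4$, I would prove $J \supseteq C^{-1}$ one vertex at a time, in each instance choosing $f'$ to be a nonedge of $C^{-1}$ not incident to the missing vertex, or reducing to such a choice via Lemma \ref{lem:I'_f'_cms}. The hard case, which I expect to be the main obstacle, is showing that the degree-one vertex $w_4$ of $C^{-1}$ belongs to $J$: since $w_4$ has very few neighbours inside $G'$, if $w_4 \notin J$ then the separator $E$ is forced to lie on the single edge from $w_4$ into the triangle, and the ``leaf'' atom $L$ on the $w_4$-side of $E$ may be small, so the edges available for contraction are few and most of them double $f'$ or destroy the forbidden-minor structure. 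Here I would use the fact that $(G,f)$ is minimal, together with the non-existence of reducing edges in $(G',f')$ given by hypothesis, to force the existence of decorations in $G' \setminus C^{-1}$ that attach to $w_4$ and its neighbourhood; these decorations supply an edge $e$ at distance at least one from $\{w_1,w_2,w_3\}$ whose contraction preserves $f'$ and is shown to be reducing by the same gluing argument as in case (i). The resulting contradiction completes both cases.
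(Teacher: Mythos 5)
Your plan misses the key structural input and, as written, would fail precisely at the hard case. The paper's proof is a two-line application of Lemma \ref{lem:G'_is_J+decorations} (which you never cite): since $(G',f')$ has no reducing edge and $J$ has an $f'$-preserving forbidden minor, that lemma says $G'\cup f'$ is obtained from $J$ by adding vertices of degree two whose neighbourhood is exactly $\{w_1,w_3\}$. So if $w_4\notin J$, $w_4$'s neighbourhood in $G'$ is forced to be $\{w_1,w_3\}$. But Lemma \ref{lem:expanded_clique_graphs_and_connections} Statement (2) says the contracted expanded minor component $[I^{-1}\setminus C^{-1}]$ (which lives in $G'\setminus C^{-1}$) is adjacent to $w_4$, so $w_4$ has a neighbour in $G'$ outside $\{w_1,w_3\}$ --- contradiction. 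You are in effect trying to re-derive Lemma \ref{lem:G'_is_J+decorations} by hand instead of invoking it, and then you omit the one fact that actually closes the argument.

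The gap in your ``find a reducing edge on the $w_4$-side of $E$'' strategy is concrete: the case the hypothesis does \emph{not} exclude is exactly the one where $w_4$ is a degree-two decoration of $J$ with neighbourhood $\{w_1,w_3\}$ (that is what Lemma \ref{lem:G'_is_J+decorations} leaves open). In that case the $w_4$-side of $E=\{w_1,w_3\}$ contains only $w_4$, and the only available edges are $w_4w_1$ and $w_4w_3$; contracting either one doubles $f'=w_1w_3$, since $w_4$ is adjacent to both endpoints. There is no reducing edge there, so your enumeration has a missing subcase with no edge to pick. What rules this configuration out is not a reducing-edge argument at all but the connectivity constraint on $w_4$ coming from the expanded minor component, which is exactly what Lemma \ref{lem:expanded_clique_graphs_and_connections} Statement (2) supplies. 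A secondary misreading: in case (i) the lemma claims only $w_1,w_3,w_4\in J$, not $w_2$; indeed $w_2$ has degree two in $C^{-1}$ with neighbourhood $\{w_1,w_3\}$ and is not guaranteed to attach to $I^{-1}$, so it \emph{can} be a decoration vertex outside $J$, and the symmetric treatment you propose for $w_2$ would be chasing a false claim.
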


\begin{proof}
    Assume wlog that $f' = w_1w_3$.  
    Then, $J$ contains $w_1$ and $w_3$.  
    Assume $C^{-1}$ is the graph in Figure \ref{fig:exp_clique_sep_k3_cycle}.  
    If $J$ does not contain $w_4$, then Lemma \ref{lem:G'_is_J+decorations} states that the neighborhood of $w_4$ in $G'$ is $\{w_1,w_3\}$.  
    However, Lemma \ref{lem:expanded_clique_graphs_and_connections} Statement (2) can be used to show that $w_4$ neighbors some vertex of $G'$ not in $\{w_1,w_3\}$.  
    This is a contradiction, and thus $J$ contains $w_4$.  
    A similar argument yields the same contradiction if $C^{-1}$ is the graph in Figure \ref{fig:exp_clique_sep_k3_degree_1}.  
    %
\end{proof}

\begin{proof}[Proof of Lemma \ref{lem:C'_a-i}]
    By Lemma \ref{lem:C'_a-n}, the expanded $f$-component $H^{-1}$ of $C^{-1}$ is one of the graphs in Figures \ref{fig:3top_vin_uonly_2} - \ref{fig:deg1_vout_uvonly}, where $C^{-1}$ is the subgraph induced by $\{w_i\}$.  
    Furthermore, if $H^{-1}$ is one of the graphs in Figures \ref{fig:cycle_vin_uonly_2}, \ref{fig:cycle_vin_uplus_3}, \ref{fig:3top_vin_uplus_2}, \ref{fig:cycle_vin_uplus_2}, \ref{fig:deg1_vin_uplus_3}, or \ref{fig:deg1_vin_uplus_4}, then at least one dashed line-segment is a nonedge.  
    For each case of $H^{-1}$ in Figures \ref{fig:cycle_vin_uplus_2} - \ref{fig:deg1_vout_uvonly}, we will show that $(G,f)$ has a reducing edge, which contradicts the minimality of this pair.  

    \smallskip
    \noindent\textbf{Case 1:} $H^{-1}$ is one of the graphs Figures \ref{fig:cycle_vin_uplus_2} or \ref{fig:cycle_vout_uvonly}.  
    \smallskip

    We prove the lemma in the case where $H^{-1}$ is the graph Figure \ref{fig:cycle_vin_uplus_2}.  
    The proof in the case where $H^{-1}$ is the graph Figure \ref{fig:cycle_vout_uvonly} is almost identical, so we omit it.  
    Let $G' = G \setminus (H^{-1} \setminus C^{-1})$.  
    By Lemma \ref{lem:G'_f'_no_reducible_edge}, some atom $J$ of $G' \cup w_1w_3$ contains $w_1w_3$ and has a $w_1w_3$-preserving forbidden minor and the pair $(G',w_1w_3)$ has no reducing edge.  
    Also, Lemma \ref{lem:J_contains_C'_degree_1} shows that $J$ contains $w_4$.  
    If $J$ contains $w_2$, then let $x$ be any vertex of $H^{-1} \setminus C^{-1}$ that neighbors $w_3$.  
    We will show that $xw_3$ is reducing with respect to $(G,f)$.  
    Note that $f$ is retained in the minor $[G \cup f]$ obtained by contracting $xw_3$, and so this edge is not of Type (3).  
    To see that $xw_3$ is not of Type (1) or (2), observe that $J$ is some induced subgraph $G_2$ of $[G']$ with $w_1w_3$ as $[w_1x]$.  
    Hence, $G_2$ is an atom that contains $S = [C^{-1}]$ and has a $[w_1x]$-preserving forbidden minor.  
    Let $G_1$ be the $\{[u],[w_1],[w_3]\}$-component of $[H^{-1}]$ that contains $S$.  
    Then, $G_1$ is an atom and $S = G_1 \cap G_2$.  
    Since $S$ contains the nonedge $[w_2w_4]$, Lemma \ref{lem:gluing_min_k-clique-sum_graphs} shows that $G_1 \cup G_2$ is an atom.  
    Lastly, $G_1 \cup G_2$ clearly has an $[f]$-preserving forbidden minor.  
    Therefore, the atom of $[G \cup f]$ that contains $G_1 \cup G_2$ contains $[f]$ and has an $[f]$-preserving forbidden minor.  

    Next, if $J$ does not contain $w_2$, then let $x$ be the vertex of $H^{-1} \setminus C^{-1}$ that neighbors $w_2$.  
    We will show that $xw_2$ is reducing with respect to $(G,f)$.  
    Note that $f$ is retained in the minor $[G \cup f]$ obtained by contracting $xw_2$, and so this edge is not of Type (3).  
    To see that $xw_2$ is not of Type (1) or (2), we first show that the graph $K = ((J \cup w_2) \setminus w_1w_3) \cup w_2w_4$ is an atom that has a $w_1w_3$-preserving forbidden minor.  
    The latter claim follows easily by considering the minor of $K$ obtained by contracting $w_1w_2$.  
    The former claim will follow by applying Lemma \ref{lem:gluing_top_part_to_atom}.  
    Let $G_1$ and $G_2$ be the subgraphs of $K$ induced by $\{w_1,w_2,w_3,w_4\}$ and $V(J)$, respectively.  
    Observe that $K = G_1 \cup G_2$ and the $S = G_1 \cap G_2$ is the subgraph of $K$ induced by $\{w_1,w_3,w_4\}$ that  contains the nonedge $w_1w_3$.  
    Furthermore, both $G_1 \cup w_1w_3$ and $G_2 \cup w_1w_3$ are atoms and $G_1 \setminus A$ is connected for any clique subgraph of $S$.  
    Lastly, we must show that $w_4$ does not separate $w_1$ and $w_3$ in $G_2$.  
    Since $G_2 \cup w_1w_3$ has a forbidden minor and $S$ contains exactly three vertices, there exists some vertex in $G_2 \setminus S$.  
    Additionally, since $G_2 \cup w_1w_3$ is an atom, there exist paths that connect this vertex to $w_1$ and $w_3$ that each contain exactly one vertex in $S$.  
    Therefore, $w_4$ does not separate $w_1$ and $w_3$ in $G_2$, and so Lemma \ref{lem:gluing_top_part_to_atom} shows that $K$ is an atom.  

    Finally, observe that $K$ is some induced subgraph $G_1$ of $[G']$ with $w_2w_4$ as $[xw_4]$.  
    Let $G_2 = [H^{-1}]$ and observe that $S = G_1 \cap G_2 = [C^{-1}]$ contains the nonedge $[w_1w_3]$.  
    Since $G_1$ and $G_2$ are atoms, Lemma \ref{lem:gluing_min_k-clique-sum_graphs} shows that $G_1 \cup G_2$ is an atom.  
    Observe that $G_1 \cup G_2$ has an $[f]$-preserving forbidden minor.  
    These facts show that the atom of $[G \cup f]$ that contains $G_1 \cup G_2$ contains $[f]$ and has an $[f]$-preserving forbidden minor.  
    Therefore, $xw_2$ is not of Type (1) or (2), as desired.  

    \smallskip
    \noindent\textbf{Case 2:} $H^{-1}$ is one of the graphs in Figures \ref{fig:deg1_vin_uplus_3}, \ref{fig:deg1_vin_uplus_4}, or \ref{fig:deg1_vout_uvonly}.  
    \smallskip

    By Lemma \ref{lem:G'_f'_no_reducible_edge}, there exists a nonedge $f'$ of $C^{-1}$ such that some atom $J$ of $G' \cup f'$ contains $f'$ and has an $f'$-preserving forbidden minor and the pair $(G',f')$ has no reducing edge.  
    Furthermore, Lemma \ref{lem:J_contains_C'_degree_1} shows that $J$ contains $C^{-1}$.  
    The remainder of the argument proceeds similarly to Case 1, so we omit it.  
    
    \smallskip
    The above cases are exhaustive, and so the proof is complete.  
\end{proof}

Lemma \ref{lem:C'_degree_1_reducible_3} follows immediately from Lemmas \ref{lem:C'_degree_1_reducible_0}-\ref{lem:C'_degree_1_reducible_2}, below.  

\begin{lemma}[$H^{-1}$ is not graph in Figure \ref{fig:deg1_vin_uplus_1}]
    \label{lem:C'_degree_1_reducible_0}
    The graph in Figure \ref{fig:deg1_vin_uplus_1} is not the expanded $f$-component of any top-level expanded $f$-separating CMS of a minimal graph-nonedge pair.  
\end{lemma}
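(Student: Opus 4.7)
The plan is to argue by contradiction along the same lines as Case 1 of the proof of Lemma \ref{lem:C'_a-i}. Suppose some minimal graph-nonedge pair $(G,f)$ has a top-level expanded $f$-separating CMS $C^{-1}$ whose expanded $f$-component $H^{-1}$ is the graph in Figure \ref{fig:deg1_vin_uplus_1}. By Lemma \ref{lem:C'_a-i}, $C^{-1}$ is the graph in Figure \ref{fig:exp_clique_sep_k3_degree_1} with vertex set $\{w_1,w_2,w_3,w_4\}$ and $C^{-1}$ is the subgraph of $H^{-1}$ induced by $\{w_i\}$; in particular, $v = w_1$ is the endpoint of $f$ contained in $C^{-1}$ and $u$ is the endpoint not in $C^{-1}$. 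Setting $G' = G \setminus (H^{-1} \setminus C^{-1})$, Lemma \ref{lem:G'_f'_no_reducible_edge} supplies a nonedge $f'$ of $C^{-1}$ such that some atom $J$ of $G' \cup f'$ contains $f'$ and has an $f'$-preserving forbidden minor, and $(G',f')$ has no reducing edge. By Lemma \ref{lem:J_contains_C'_degree_1}(ii), $J$ contains all of $w_1, w_2, w_3, w_4$. The goal is now to produce a reducing edge of $(G,f)$, contradicting Definition \ref{def:minimal_pair}.

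I will exhibit, by case analysis on $f' \in \{w_1w_3,w_1w_4,w_2w_3,w_2w_4,w_3w_4\}$, a specific edge $h$ of $H^{-1}$ whose contraction yields a reducing edge. Let $[G \cup f]$ be obtained by contracting $h$. Since $H^{-1}$ has the structure of Figure \ref{fig:deg1_vin_uplus_1}, $h$ can be chosen in $H^{-1} \setminus C^{-1}$ so that $f$ is retained in $[G \cup f]$; this immediately rules out Type (3). To rule out Types (1) and (2), the key construction is to identify a subgraph $G_1$ of $[H^{-1}]$ obtained from this contraction and glue it to $G_2 = J$, viewed as an induced subgraph of $[G']$, along $S = G_1 \cap G_2 \subseteq [C^{-1}]$, chosen so that $S$ contains $[f']$ as a nonedge. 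The union $G_1 \cup G_2$ will contain $[f]$ and clearly inherit an $[f]$-preserving forbidden minor from the $f'$-preserving forbidden minor of $J$. Applying either Lemma \ref{lem:gluing_min_k-clique-sum_graphs} (when both $G_1 \cup [f']$ and $G_2 \cup [f'] = J$ are atoms and $[f']$ is a nonedge in $S$) or Lemma \ref{lem:gluing_top_part_to_atom} (in the few cases where $G_1$ is itself not already an atom and an auxiliary clique condition on $S$ must be invoked) then certifies that $G_1 \cup G_2$ is an atom, so the atom of $[G \cup f]$ containing $G_1 \cup G_2$ carries an $[f]$-preserving forbidden minor. Hence $h$ is not of Type (1) or (2), proving it is reducing.

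The main obstacle is the clique-separator verification inside Lemma \ref{lem:gluing_top_part_to_atom}, namely, that no clique subgraph of $S$ separates the endpoints of $[f']$ in $G_2 = J$. This is delicate precisely when $f' \in \{w_1w_3,w_1w_4\}$, because then some $w_i \in \{w_3,w_4\}$ could in principle separate $w_1$ from its partner inside $J$. This is exactly the situation addressed by Lemma \ref{lem:f'_and_J}, which guarantees that no clique subgraph of $C^{-1}$ separates the endpoints of $f'$ in $J \setminus f'$; combined with the fact that $J$ contains all of $w_1,w_2,w_3,w_4$ (Lemma \ref{lem:J_contains_C'_degree_1}(ii)) and that $G' \cup f'$ inherits atom-paths between $w_i$'s via Lemma \ref{lem:path_in_graph_path_in_atom}, this rules out the separating possibility and the gluing argument closes. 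For the remaining cases $f' \in \{w_2w_3, w_2w_4, w_3w_4\}$, the edge $h$ can be taken to be the edge of $H^{-1} \setminus C^{-1}$ incident to the unique interior vertex of $H^{-1}$ that completes a nonedge of $S$ after contraction, and Lemma \ref{lem:gluing_min_k-clique-sum_graphs} applies directly since both $G_1$ and $G_2 \cup f'$ are easily seen to be atoms. Each subcase produces a reducing edge, yielding the desired contradiction and completing the proof.
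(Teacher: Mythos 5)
There is a genuine and substantial gap. The crux of the paper's proof is not the clique-separator verification you identify, but a much harder transfer step that your proposal does not address at all. The paper's argument contracts the specific edge $uw_4$, and it needs to exhibit an atom of the resulting minor that has an $[f]$-preserving forbidden minor. When $f' = w_1w_4$ this is immediate, but when $f' = w_1w_3$ (the hard case, and the paper considers exactly these two possibilities for $f'$, not the five you list), the atom $J$ supplied by Lemma~\ref{lem:G'_f'_no_reducible_edge} has only a $w_1w_3$-preserving forbidden minor, whereas the construction requires a $w_1w_4$-preserving one. The paper spends the bulk of its proof on precisely this conversion: it passes to an atom $K$ of $G' \cup \{w_1w_3,w_1w_4\}$ via Lemma~\ref{cor:minimal_k-clique-sum_component_containing_fbm_and_f}, then (in the case where $w_1w_4$ is \emph{not} preserved in the given forbidden minor $[K]_1$) locates a vertex $x$ with $[x]_1 = [w_3]_1$, builds a $w_1x$-preserving minor $[K]_2$ from which $[K]_1$ is obtained by a single contraction, classifies $[K]_2$ against the graphs in Figures~\ref{fig:k5_k222_e_contracted}, \ref{fig:k5_k222_e_contracted_degree_1}, and \ref{fig:k5_not_wing}, applies the vertex-exchange machinery of Lemma~\ref{lem:G_not_k5_k222_u4v0} to eliminate one family, rules out the Figure~\ref{fig:k5_k222_e_contracted} family using the fact that $w_3$ is adjacent to both $w_1$ and $w_4$, and only then invokes Lemma~\ref{lem:k5_k222_not_wing_fm_e_not_contracted} to extract the required $w_1w_4$-preserving forbidden minor.

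Your proposal assumes that the $f'$-preserving forbidden minor in $J$ can be glued directly into the contracted graph and will "clearly inherit" an $[f]$-preserving forbidden minor. That only works when the nonedge that the forbidden minor of $J$ preserves lines up with the pair that survives the contraction of $h$ as $[f]$; when $f' = w_1w_3$ but the relevant target is $w_1w_4$, that alignment fails and nothing in your argument restores it. Leaning on Lemma~\ref{lem:f'_and_J} for the clique-separator condition is fine as far as it goes, but it addresses only the hypotheses of Lemma~\ref{lem:gluing_top_part_to_atom}, not the missing forbidden-minor transfer. A second, smaller issue: your case list $\{w_1w_3, w_1w_4, w_2w_3, w_2w_4, w_3w_4\}$ does not match the structure of $C^{-1}$ in Figure~\ref{fig:exp_clique_sep_k3_degree_1}; only $w_1w_3$ and $w_1w_4$ arise, and the paper's proof is explicitly a two-case argument. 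To repair the proposal you would need to bring in the $[K]_2$ construction together with Lemmas~\ref{lem:G_not_k5_k222_u4v0} and~\ref{lem:k5_k222_not_wing_fm_e_not_contracted}; without that the contradiction does not close.
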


\begin{proof}
    Assume to the contrary that the graph in Figure \ref{fig:deg1_vin_uplus_1} is the expanded $f$-component $H^{-1}$ of some top-level expanded $f$-separating CMS $C^{-1}$ of some minimal graph-nonedge pair $(G,f)$.  
    We will show that $uw_4$ is reducing with respect to $(G,f)$, which contradicts the fact that $(G,f)$ is minimal.  
    Observe that $f$ is retained in the minor $[G \cup f]$ obtained by contracting $uw_4$, and so $uw_4$ is not of Type (3).  
    We show that $uw_4$ is not of Type (1) or (2) by identifying some atom of $[G \cup f]$ contains $[f]$ and has an $[f]$-preserving forbidden minor.  
    Let $G' = G \setminus (H^{-1} \setminus C^{-1})$.  
    By Lemma \ref{lem:G'_f'_no_reducible_edge}, there exists a nonedge $f'$ of $C^{-1}$ such that some atom $J$ of $G' \cup f'$ contains $f'$ and has an $f'$-preserving forbidden minor and the pair $(G',f')$ has no reducing edge.  
    If $f' = w_1w_4$, then some atom of $[G \cup f]$ clearly has the desired properties.  
    
    In the case where $f' = w_1w_3$, we first argue that some atom of $G' \cup \{w_1w_3,w_1w_4\}$ contains $w_1w_4$ and has a $w_1w_4$-preserving forbidden minor.  
    By Lemma \ref{lem:J_contains_C'_degree_1}, $J$ contains $w_4$.  
    Hence, since $J$ is an atom, each CMS and atom of $J \cup w_1w_4$ contains $w_1w_4$.  
    Combining this with the fact that $J \cup w_1w_4$ has a $w_1w_3$-preserving forbidden minor and Lemma \ref{cor:minimal_k-clique-sum_component_containing_fbm_and_f} shows that some atom of $J \cup w_1w_4$, and therefore some atom $K$ of $G' \cup \{w_1w_3,w_1w_4\}$, contains $w_1w_4$ and has a $w_1w_3$-preserving forbidden minor $[K]_1$.  
    If $w_1w_4$ is preserved in $[K]_1$, then the claim is true.  
    Else, we define a vertex $x$ as follows.  
    If $K$ contains $w_3$, then set $x = w_3$.  
    Otherwise, some CMS $E$ of $G' \cup \{w_1w_3,w_1w_4\}$ contained in $K$ is such that all paths between $w_3$ and any vertex in $K \setminus E$ contains some vertex in $E$.  
    Since $[K]_1$ preserves $w_1w_3$ but not $w_1w_4$, $E \setminus \{w_1,w_4\}$ contains some vertex $x$ such that $[x]_1 = [w_3]_1$.  
    
    Next, for either case of $x$, there clearly exists a $w_1x$-preserving minor $[K]_2$ from which $[K]_1$ can be obtained by contracting $[w_1w_4]_2$.  
    Hence, $[K]_2$ is one of the graphs in Figures \ref{fig:k5_k222_e_contracted}, \ref{fig:k5_k222_e_contracted_degree_1}, or \ref{fig:k5_not_wing} with $[w_1w_4]_2$ as the green edge.  
    Since $K$ is an atom, it is $2$-connected, and so Lemma \ref{lem:G_not_k5_k222_u4v0} allows us to assume that $[K]_2$ is one of the graphs in Figures \ref{fig:k5_k222_e_contracted} or \ref{fig:k5_not_wing}.  
    Furthermore, since $[K]_1$ preserves $w_1w_3$ but not $w_1w_4$, $[w_3]_2$ is neither $[w_1]_2$ nor $[w_4]_2$.  
    Combining this with the fact that $w_3$ neighbors both $w_1$ and $w_4$ shows that $[K]_2$ cannot be any graph in Figure \ref{fig:k5_k222_e_contracted}.  
    Therefore, Lemma \ref{lem:k5_k222_not_wing_fm_e_not_contracted} shows that $[K]_2$ has a $[w_1w_4]_2$-preserving forbidden minor, and hence $K$ has a $w_1w_4$-preserving forbidden minor.  
    
    Finally, we show that there exists some atom of $[G \cup f]$ contains $[f]$ and has an $[f]$-preserving forbidden minor.  
    Note that $K \setminus w_1w_3$ is some induced subgraph $N$ of $[G']$ with $w_1w_4$ as $[w_1w_4]$.  
    If $K$ does not contain $w_3$, then $N$ is an atom that contains $[f]$ and has an $[f]$-preserving forbidden minor, since $[f] = [w_1w_4]$.  
    Hence, the atom of $[G \cup f]$ that contains $N$ has the desired properties.  
    Otherwise, if $K$ contains $w_3$, then let $G_1 = [H^{-1}]$, $G_2 = N$, $S = G_1 \cap G_2$, and $e = [w_1w_3]$.  
    Observe that $G_1 \cup G_2$ contains $[f]$ and has an $[f]$-preserving forbidden minor.  
    Lastly, since $G_1$ and $G_2 \cup e$ are atoms, Lemma \ref{lem:gluing_min_k-clique-sum_graphs} shows that $G_1 \cup G_2$ is an atom.  
    Therefore, the atom of $[G \cup f]$ that contains $G_1 \cup G_2$ has the desired properties.  
\end{proof}

\begin{lemma}[$H^{-1}$ is not graph in Figures \ref{fig:deg1_vin_uonly_2} or \ref{fig:deg1_vin_uplus_5}]
    \label{lem:C'_degree_1_reducible}
    Neither of the graphs in Figures \ref{fig:deg1_vin_uonly_2} or \ref{fig:deg1_vin_uplus_5} such that $uw_1$ is an edge is the expanded $f$-component of any top-level expanded $f$-separating CMS of a minimal graph-nonedge pair.  
\end{lemma}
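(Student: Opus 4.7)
The plan is to proceed by contradiction in the same style as the proof of Lemma \ref{lem:C'_degree_1_reducible_0}. Assume to the contrary that some minimal pair $(G,f)$ has a top-level expanded $f$-separating CMS $C^{-1}$ whose expanded $f$-component $H^{-1}$ is one of the graphs in Figures \ref{fig:deg1_vin_uonly_2} or \ref{fig:deg1_vin_uplus_5} with $uw_1$ an edge. Setting $G' = G \setminus (H^{-1} \setminus C^{-1})$, I will first apply Lemma \ref{lem:G'_f'_no_reducible_edge} to obtain a nonedge $f'$ of $C^{-1}$ (which, since $C^{-1}$ is the graph in Figure \ref{fig:exp_clique_sep_k3_degree_1}, is either $w_1w_3$ or $w_1w_4$), an atom $J$ of $G' \cup f'$ containing $f'$ and having an $f'$-preserving forbidden minor, and such that $(G',f')$ has no reducing edge. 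Lemma \ref{lem:J_contains_C'_degree_1}(ii) then ensures that $J$ contains all four vertices $w_1,w_2,w_3,w_4$.

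Next I will exhibit an edge $h$ of $H^{-1}$ that is reducing with respect to $(G,f)$ and thereby contradict the minimality of $(G,f)$. The natural candidate is $h = uw_1$ when $H^{-1}$ is the graph in Figure \ref{fig:deg1_vin_uonly_2}, and $h$ equal to the edge incident on the unique vertex $x \in H^{-1} \setminus (C^{-1}\cup\{u\})$ in the direction away from $u$ when $H^{-1}$ is the graph in Figure \ref{fig:deg1_vin_uplus_5}. For the minor $[G\cup f]$ obtained by contracting $h$, the nonedge $f$ is clearly retained, so $h$ is not of Type (3). To rule out Types (1) and (2), I will construct an atom of $[G \cup f]$ that contains $[f]$ and has an $[f]$-preserving forbidden minor. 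Setting $G_2$ to be the induced subgraph of $[G']$ corresponding to $J$ (so that $G_2$ is an atom containing $[f']=[w_1w_3]$ or $[w_1w_4]$ with an $[f']$-preserving forbidden minor) and $G_1$ to be $[H^{-1}]$, the graph $G_1\cup G_2$ contains $[f]$, and because paths in $H^{-1}$ join the endpoints of $[f]$ to the endpoints of $[f']$ through a vertex of degree two, the $[f']$-preserving forbidden minor of $G_2$ lifts to an $[f]$-preserving forbidden minor of $G_1 \cup G_2$.

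The main technical step is showing that $G_1 \cup G_2$ is an atom. I intend to apply Lemma \ref{lem:gluing_min_k-clique-sum_graphs} (when $f'$ avoids the relevant separator of $S$) or Lemma \ref{lem:gluing_top_part_to_atom} (when it does not). Verifying the hypotheses reduces to: (i) $G_1\cup[f']$ being an atom, which holds by direct inspection of Figures \ref{fig:deg1_vin_uonly_2}--\ref{fig:deg1_vin_uplus_5} (optionally after a case split mirroring that in Subcases 4 and 5 of the proof of Lemma \ref{lem:G'_f'_no_reducible_edge}); (ii) $G_2\cup[f']$ being an atom, which is immediate from the choice of $J$; (iii) $S = G_1 \cap G_2 = [C^{-1}]$ containing a nonedge, which holds since $C^{-1}$ is not a clique by Lemma \ref{prop:C'_leq_4vert}; and (iv) no clique subgraph of $S$ separating the endpoints of $[f']$ in $G_2$. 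Item (iv) is the point I expect to be fiddly: if some clique subgraph $A \subseteq S$ did separate these endpoints in $G_2$, then using Lemma \ref{lem:f'_and_J} together with Lemma \ref{lem:expanded_clique_graphs_and_connections} one can show that $A^{-1}$ would be an expanded $f$-separating CMS of $(G,f)$ whose expanded $f$-component is a proper subgraph of $H^{-1}$, contradicting that $C^{-1}$ is in the top level.

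Once $G_1 \cup G_2$ is certified to be an atom, the atom of $[G \cup f]$ containing it inherits the $[f]$-preserving forbidden minor, so $h$ is neither of Type (1) nor of Type (2). Combined with the Type (3) check above, this makes $h$ reducing with respect to $(G,f)$, contradicting minimality. The main obstacle will be assembling the four sub-cases (two choices of $H^{-1}$, two choices of $f'$) uniformly; most likely this will require two small case splits, organized exactly as in Subcases 3--5 of the proof of Lemma \ref{lem:G'_f'_no_reducible_edge}, plus a short separate argument when $J$ happens to miss the extra vertex $x$ of $H^{-1} \setminus C^{-1}$, handled by first showing that an edge of the form $w_2w_3$ or $xw_1$ would otherwise be reducing with respect to $(G', f')$.
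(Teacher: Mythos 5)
Your proposal gets the high-level shape right (contradiction, find a reducing edge, apply Lemma~\ref{lem:G'_f'_no_reducible_edge} and Lemma~\ref{lem:J_contains_C'_degree_1}, glue), but there are two concrete gaps, and they are precisely the places where the paper's proof does real work.

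\textbf{First,} your claim that $G_2$ --- the induced subgraph of $[G']$ on $V(J)$ --- is an atom "immediate from the choice of $J$" does not hold. $J$ is an atom of $G'\cup f'$, but after contracting $h=uw_1$ the induced subgraph of $[G']$ on $V(J)$ is $J$ with \emph{extra edges} added: $[w_1]$ now absorbs $u$'s adjacencies, so $w_1w_3$ and/or $w_1w_4$ become edges where they were nonedges. Adding an edge to an atom can create a clique separator, so $G_2$ may fail to be an atom. The paper sidesteps this by never using $J$ directly: it builds an atom $K$ of $G'\cup\{w_1w_3,w_1w_4\}$ via Lemma~\ref{cor:minimal_k-clique-sum_component_containing_fbm_and_f}, and then makes a genuine case distinction on whether $K$ contains $w_4$. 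When $K\ni w_4$, no gluing is needed at all --- $K$ already witnesses the desired atom inside $[G\cup f]$. Your approach has no analogue of the $K\not\ni w_4$ branch, which is where the paper switches to showing $w_2w_4$ is reducing rather than $uw_1$.

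\textbf{Second,} even granting your $G_2$ is an atom, the gluing step collapses. Both Lemma~\ref{lem:gluing_min_k-clique-sum_graphs} and Lemma~\ref{lem:gluing_top_part_to_atom} require $S=G_1\cap G_2$ to contain a nonedge. With $G_1=[H^{-1}]$ and $S=[C^{-1}]$ after contracting $uw_1$: since $uw_1$ is an edge (that is the hypothesis of this lemma) and $u$ also neighbors $w_3$ and $w_4$ in $H^{-1}$ (forced by Lemma~\ref{lem:expanded_clique_graphs_and_connections} Statement~(2) plus the structure of these figures), the merged vertex $[w_1]$ is adjacent to all of $[w_2],[w_3],[w_4]$, so $[C^{-1}]$ is $K_4$. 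The only nonedges of $C^{-1}$ were $w_1w_3$ and $w_1w_4$, and both get filled in. Hence $S$ is a clique and neither gluing lemma applies; your item~(iii) fails. This is exactly why the paper's $w_2w_4$-reducing branch contracts a \emph{different} edge ($w_2w_4$, not $uw_1$) before invoking Lemma~\ref{lem:gluing_top_part_to_atom}: contracting $w_2w_4$ leaves $[w_1w_3]$ a nonedge of $S$.

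Your instinct that item~(iv) is the fiddly part points at the wrong bottleneck --- the paper handles the analogue of (iv) with a short connectivity argument inside Lemma~\ref{lem:gluing_top_part_to_atom}, and Lemma~\ref{lem:f'_and_J} already gives the needed non-separation for free. The real difficulties are the atomicity of the object you want to glue and the fact that contracting $uw_1$ destroys the nonedge you need in $S$; both require the $K$-construction and the $K\not\ni w_4$ case split that your sketch does not contain. Finally, your proposed candidate $h$ for Figure~\ref{fig:deg1_vin_uplus_5} (the edge on $x$ away from $u$) diverges from the paper, which uses $uw_1$ and, when needed, $w_2w_4$ for both figures; you would need a separate argument that your alternative contraction still yields an $f$-retaining minor with the right atom structure.
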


\begin{proof}
    Assume to the contrary that one of the graphs in Figures \ref{fig:deg1_vin_uonly_2} or \ref{fig:deg1_vin_uplus_5} such that $uw_1$ is an edge is the expanded $f$-component $H^{-1}$ of some top-level expanded $f$-separating CMS $C^{-1}$ of a minimal graph-nonedge pair $(G,f)$.  
    We will show that either $uw_1$ or $w_2w_4$ is reducing with respect to $(G,f)$, which contradicts the fact that $(G,f)$ is minimal.  
    Observe that $f$ is retained in the minor $[G \cup f]$ obtained by contracting $uw_1$, and so $uw_1$ is not of Type (3).  
    We attempt to show that $uw_1$ is not of Type (1) or (2) by identifying some atom of $[G \cup f]$ contains $[f]$ and has an $[f]$-preserving forbidden minor.  
    Let $G' = G \setminus (H^{-1} \setminus C^{-1})$.  
    It suffices to show that some atom of $G' \cup \{w_1w_3,w_1w_4\}$ contains $w_1w_4$ and has a $w_1w_4$-preserving forbidden minor.  
    By Lemma \ref{lem:G'_f'_no_reducible_edge}, there exists a nonedge $f'$ of $C^{-1}$ such that some atom $J$ of $G' \cup f'$ contains $f'$ and has an $f'$-preserving forbidden minor and the pair $(G',f')$ has no reducing edge.  
    If $f = w_1w_3$, then the argument is identical to the proof of Lemma \ref{lem:C'_degree_1_reducible_0}, so we omit the details.  
    
    Next, consider the case where $f' = w_1w_4$.  
    By Lemma \ref{lem:J_contains_C'_degree_1}, $J$ contains $w_3$.  
    Hence, since $J$ is an atom, each CMS and atom of $J \cup w_1w_3$ contains $w_1w_3$.  
    Combining this with the fact that $J \cup w_1w_3$ has a $w_1w_4$-preserving forbidden minor and Lemma \ref{cor:minimal_k-clique-sum_component_containing_fbm_and_f} shows that some atom of $J \cup w_1w_3$, and therefore some atom $K$ of $G' \cup \{w_1w_3,w_1w_4\}$, contains $w_1w_3$ and has a $w_1w_4$-preserving forbidden minor $[K]_1$.  
    If $K$ contains $w_4$, then the claim is true.  
    Otherwise, we show that $w_2w_4$ is reducing with respect to $(G,f)$.  
    First, we argue that $K$ has a $w_1w_3$-preserving forbidden minor.  
    If $[K]_1$ preserves $w_1w_3$, then we are done.  
    Else, since $K$ does not contain $w_4$, some CMS $E$ of $G' \cup \{w_1w_3,w_1w_4\}$ contained in $K$ is such that all paths between $w_4$ and any vertex in $K \setminus E$ contains some vertex in $E$.  
    Also, since $[K]_1$ preserves $w_1w_4$ but not $w_1w_3$, $E \setminus \{w_1,w_3\}$ contains some vertex $x$ such that $[x]_1 = [w_4]_1$.  
    Therefore, there clearly exists a $w_1x$-preserving forbidden minor $[K]_2$ from which $[K]_1$ can be obtained by contracting $[w_1w_3]_2$.  
    The remainder of the proof that $K$ has a $w_1w_3$-preserving forbidden minor is identical to part of the proof of Lemma \ref{lem:C'_degree_1_reducible_0}, so we omit the details.  

    Finally, observe that $f$ is retained in the minor $[G \cup f]_4$ obtained by contracting $w_2w_4$, and so $w_2w_4$ is not of Type (3).  
    We complete the proof by showing that $w_2w_4$ is not of Type (1) or (2).  
    Since $K$ does not contain $w_4$, observe that $K$ is an atom of $G \cup w_1w_3$ and is weakly retained in the minor of $G \cup w_1w_3$ obtained by contracting $w_2w_4$.  
    Hence, $K \setminus w_1w_3$ is some induced subgraph $N$ of $[G']_4$ with each vertex $w_i$ other than $w_4$ as $[w_i]_4$.  
    Let $G_1$ be the subgraph of $[G \cup f]_4$ induced by $\{[u]_4,[w_1]_4,[w_2]_4,[w_3]_4\}$, $G_2 = N$, $S = G_1 \cap G_2$, and $e = [w_1w_3]_4$.  
    By construction, $G_1 \cup G_2$ contains $[f]_4$.  
    Also, it is easy to see that $G_1 \cup G_2$ has an $[f]_4$-preserving forbidden minor.  
    We will apply Lemma \ref{lem:gluing_top_part_to_atom} to show that $G_1 \cup G_2$ is an atom.  
    Note that $e$ is a nonedge of $G_1 \cup G_2$ and both $G_1 \cup e$ and $G_2 \cup e$ are atoms.  
    Furthermore, $G_1 \setminus A$ is connected for any clique subgraph $A$ of $S$.  
    Lastly, the only clique subgraph of $S$ that could separate $[w_1]_4$ and $[w_3]_4$ in $G_2$ is $[w_2]_4$.  
    Since $G_2 \cup e$ has a forbidden minor and $S$ contains at most three vertices, there exists some vertex in $G_2 \setminus S$.  
    Additionally, since $G_2 \cup e$ is an atom, there exist paths that connect this vertex to $[w_1]_4$ and $[w_3]_4$ that each contain exactly one vertex in $S$.  
    Therefore, $[w_2]_4$ does not separate $[w_1]_4$ and $[w_3]_4$ in $G_2$, and so Lemma \ref{lem:gluing_top_part_to_atom} shows that $K$ is an atom.  
    Thus, $G_1 \cup G_2$ is contained in some atom of $[G \cup f]_4$ that contains $[f]_4$ and has an $[f]_4$-preserving forbidden minor, and the proof is complete.  
\end{proof}

\begin{lemma}[$H^{-1}$ is not graph in Figure \ref{fig:deg1_vin_uplus_5}]
    \label{lem:C'_degree_1_reducible_2}
    The graph in Figure \ref{fig:deg1_vin_uplus_5} such that $uw_1$ is a nonedge is not the expanded $f$-component of any top-level expanded $f$-separating CMS of a minimal graph-nonedge pair.  
\end{lemma}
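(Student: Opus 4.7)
The plan is to mimic the structure of the proofs of Lemmas \ref{lem:C'_degree_1_reducible_0} and \ref{lem:C'_degree_1_reducible}: assume for contradiction that $H^{-1}$ is the graph in Figure \ref{fig:deg1_vin_uplus_5} with $uw_1$ a nonedge, and exhibit an edge of $G$ that is reducing with respect to $(G,f)$, contradicting the minimality of $(G,f)$. Since $uw_1$ is not available for contraction (unlike in Lemma \ref{lem:C'_degree_1_reducible}), the candidate reducing edge must instead lie in $H^{-1}\setminus C^{-1}$. The natural candidate is $xw_1$, where $x$ is the vertex of $H^{-1}\setminus(C^{-1}\cup\{u\})$ adjacent to $w_1$, since contracting $xw_1$ serves the same role as contracting $uw_1$ in Lemma \ref{lem:C'_degree_1_reducible}: the minor remains $f$-retaining (so the edge is not of Type (3)), and after contraction the nonedge $w_1w_4$ of $C^{-1}$ becomes ``doubled'' with a copy sitting on the now-merged vertex.

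The next step is to invoke Lemma \ref{lem:G'_f'_no_reducible_edge} to obtain a nonedge $f'$ of $C^{-1}$ such that some atom $J$ of $G'\cup f'$ contains $f'$ and an $f'$-preserving forbidden minor, with $(G',f')$ containing no reducing edge, where $G'=G\setminus(H^{-1}\setminus C^{-1})$. Because $C^{-1}$ is the graph in Figure \ref{fig:exp_clique_sep_k3_degree_1}, Lemma \ref{lem:J_contains_C'_degree_1} yields that $J$ contains all of $C^{-1}$. I then split into the two cases $f'=w_1w_3$ and $f'=w_1w_4$. In each case I aim to show that some atom of $G'\cup\{w_1w_3,w_1w_4\}$ contains $w_1w_4$ and has a $w_1w_4$-preserving forbidden minor; the argument is entirely analogous to the one already carried out in the $f'=w_1w_3$ branch of Lemma \ref{lem:C'_degree_1_reducible_0} and the two branches of Lemma \ref{lem:C'_degree_1_reducible}, using Lemma \ref{cor:minimal_k-clique-sum_component_containing_fbm_and_f} to pass to an atom and Lemmas \ref{lem:G_not_k5_k222_u4v0} and \ref{lem:k5_k222_not_wing_fm_e_not_contracted} to transfer preservation from $w_1w_3$ to $w_1w_4$ (or vice versa) through a wing-graph reduction when the forbidden minor obtained in the first step preserves the ``wrong'' nonedge.

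Once such an atom $K$ of $G'\cup\{w_1w_3,w_1w_4\}$ is secured, I transport it back into $[G\cup f]$ (the minor obtained from $G$ by contracting $xw_1$) by identifying $K\setminus w_1w_3$ with an induced subgraph $G_2$ of $[G']$ in which $[xw_1]$ plays the role of the doubled nonedge $w_1w_4$. Setting $G_1=[H^{-1}]$ and $S=G_1\cap G_2=[C^{-1}]$, I verify that $S$ contains $[w_1w_3]$ as a nonedge, that $G_1$ is an atom (by inspection of Figure \ref{fig:deg1_vin_uplus_5}), and that $G_2\cup[w_1w_3]$ is an atom. Then Lemma \ref{lem:gluing_min_k-clique-sum_graphs} yields that $G_1\cup G_2$ is an atom, and by construction it contains $[f]$ and has an $[f]$-preserving forbidden minor; thus the atom of $[G\cup f]$ containing $G_1\cup G_2$ witnesses that $xw_1$ is neither of Type (1) nor of Type (2), so $xw_1$ is reducing, which is the desired contradiction.

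The main obstacle I expect to meet is in the case $f'=w_1w_3$: here the $f'$-preserving forbidden minor of $J$ need not preserve $w_1w_4$, and a subtle case analysis is required to recover a $w_1w_4$-preserving forbidden minor while maintaining enough of the gluing structure for the transport step to go through. A secondary difficulty, specific to this lemma because $uw_1$ is a nonedge, is that the subgraph $G_1$ is slightly less rigid than in Lemma \ref{lem:C'_degree_1_reducible}, so verifying that no clique subgraph of $S$ separates the endpoints of $[w_1w_3]$ in $G_2$ may require the auxiliary argument used in Subcase 7 of the proof of Lemma \ref{lem:G'_f'_no_reducible_edge}, leveraging that $J\cup w_1w_3$ is an atom together with the paths in $J$ supplied by Lemma \ref{lem:f'_and_J}. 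Should the direct contraction of $xw_1$ fail in some sub-branch, the fallback will be to contract a neighboring edge such as $xw_3$ or $w_2w_4$ and repeat the transport argument, paralleling the fallback in Subcases 8 and 9 of Lemma \ref{lem:G'_f'_no_reducible_edge}.
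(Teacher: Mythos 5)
Your plan follows the same overall skeleton as the paper's proof (candidate reducing edge $xw_1$; invoke Lemmas~\ref{lem:G'_f'_no_reducible_edge} and~\ref{lem:J_contains_C'_degree_1}; split on $f'\in\{w_1w_3,w_1w_4\}$; transport a forbidden minor of an atom of $G'\cup\{w_1w_3,w_1w_4\}$ back into the $xw_1$-contracted minor via a gluing lemma), but several of the details you fill in diverge from what actually works, and at least two of those divergences are genuine gaps.

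First, in Case~$f'=w_1w_3$, the paper does not attempt to convert to a $w_1w_4$-preserving forbidden minor. It secures an atom $K$ of $G'\cup\{w_1w_3,w_1w_4\}$ containing $w_1w_4$ that has a $w_1w_3$-preserving forbidden minor, and then glues with $e=[w_1w_3]$ directly. Your plan to ``transfer preservation from $w_1w_3$ to $w_1w_4$'' via Lemmas~\ref{lem:G_not_k5_k222_u4v0} and~\ref{lem:k5_k222_not_wing_fm_e_not_contracted} adds a step that is never needed in the main branch and whose feasibility you haven't checked -- those two lemmas produce winged-graph minors, not $w_1w_4$-retaining atoms of the correct shape for the transport, and the required hypotheses (a $2$-connected ambient graph and a specific winged-graph presentation) aren't automatic here.

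Second, you implicitly assume the atom $K$ obtained by Lemma~\ref{cor:minimal_k-clique-sum_component_containing_fbm_and_f} contains all of $C^{-1}$, because Lemma~\ref{lem:J_contains_C'_degree_1} says $J$ does. That inference is false: $J$ containing $C^{-1}$ does not guarantee that the atom $K$ of $J\cup w_1w_4$ that carries the forbidden minor contains $w_3$. This is exactly the bifurcation the paper handles: if $K$ misses $w_3$, the gluing with $S=[C^{-1}]$ fails, and one must switch to a different reducing edge. The paper's fallback contracts $w_2w_3$ (an edge inside the clique part of $C^{-1}$), and establishes that $K$ in fact does carry a $w_1w_4$-preserving forbidden minor via the argument transplanted from Lemma~\ref{lem:C'_degree_1_reducible}. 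Your fallback candidates ($xw_3$ or $w_2w_4$) are the wrong edges -- $w_2w_4$ is an edge of the triangle in $C^{-1}$ but it is not the one the argument needs, and the key point that makes the paper's $w_2w_3$ contraction work is that $K$ is weakly retained in the resulting minor precisely because $K$ omits $w_3$. Finally, your hedge about verifying ``no clique subgraph of $S$ separates the endpoints of $[w_1w_3]$'' points at the hypotheses of Lemma~\ref{lem:gluing_top_part_to_atom}, but the gluing here uses Lemma~\ref{lem:gluing_min_k-clique-sum_graphs}, whose only requirements are that $G_1$ and $G_2\cup e$ be atoms, so that separation check is not the issue; the real work is in ensuring $K$ contains the right vertices.
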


\begin{proof}
    Assume to the contrary that the graph in Figure \ref{fig:deg1_vin_uplus_5} such that $uw_1$ is a nonedge is the expanded $f$-component $H^{-1}$ of some top-level expanded $f$-separating CMS $C^{-1}$ of some minimal graph-nonedge pair $(G,f)$.  
    Let $x$ be any vertex of $H^{-1} \setminus (C^{-1} \cup u)$.  
    We show that either $xw_1$ or $w_2w_3$ is reducing with respect to $(G,f)$, which contradicts the fact that $(G,f)$ is minimal.  
    Observe that $f$ is retained in the minor $[G \cup f]$ obtained by contracting $xw_1$, and so $xw_1$ is not of Type (3).  
    We attempt to show that $xw_1$ is not of Type (1) or (2) by identifying some atom of $[G \cup f]$ contains $[f]$ and has an $[f]$-preserving forbidden minor.  
    Let $G' = G \setminus (H^{-1} \setminus C^{-1})$.  
    By Lemma \ref{lem:G'_f'_no_reducible_edge}, there exists a nonedge $f'$ of $C^{-1}$ such that some atom $J$ of $G' \cup f'$ contains $f'$ and has an $f'$-preserving forbidden minor and the pair $(G',f')$ has no reducing edge.  
    There are two cases.  

    \smallskip
    \noindent\textbf{Case 1:} $f' = w_1w_3$.  
    \smallskip

    First, we show that some atom of $G' \cup \{w_1w_3,w_1w_4\}$ contains $w_1w_3$ and has a $w_1w_3$-preserving forbidden minor.  
    By Lemma \ref{lem:J_contains_C'_degree_1}, $J$ contains $w_4$.  
    Hence, since $J$ is an atom, each CMS and atom of $J \cup w_1w_4$ contains $w_1w_4$.  
    Combining this with the fact that $J \cup w_1w_4$ has a $w_1w_3$-preserving forbidden minor and Lemma \ref{cor:minimal_k-clique-sum_component_containing_fbm_and_f} shows that some atom of $J \cup w_1w_4$, and therefore some atom $K$ of $G' \cup \{w_1w_3,w_1w_4\}$, contains $w_1w_4$ and has a $w_1w_3$-preserving forbidden minor $[K]_1$.  
    If $K$ contains $w_3$, then we are done.  
    Otherwise, we show that $w_2w_3$ is reducing with respect to $(G,f)$.  
    
    Observe that $f$ is retained in the minor $[G \cup f]_1$ obtained by contracting $w_2w_3$, and so $w_2w_3$ is not of Type (3).  
    To see that $w_2w_3$ is not of Type (1) or (2), note that an argument similar to to part of the proof of Lemma \ref{lem:C'_degree_1_reducible} shows that $K$ has a $w_1w_4$-preserving forbidden minor.  
    Also, since $K$ does not contain $w_3$, $K$ is an atom of $G \cup w_1w_4$ and is weakly retained in the minor of $G \cup w_1w_4$ obtained by contracting $w_2w_3$.  
    Hence, $K \setminus w_1w_4$ is some induced subgraph $N$ of $[G']_1$ with each vertex $w_i$ other than $w_3$ as $[w_i]_1$.  
    Let $G_1 = [H^{-1}_A$, $G_2 = N$, $S = G_1 \cap G_2$, and $e = [w_1w_4]_1$.  
    By construction, $G_1 \cup G_2$ contains $[f]_1$.  
    Furthermore, it is easy to see that $G_1 \cup G_2$ has an $[f]_1$-preserving forbidden minor.  
    Lastly, since $e$ is a nonedge of $G_1 \cup G_2$ and both $G_1$ and $G_2 \cup e$ are atoms, Lemma \ref{lem:gluing_min_k-clique-sum_graphs} shows that $G_1 \cup G_2$ is an atom.  
    Therefore, $G_1 \cup G_2$ is contained in some atom of $[G \cup f]_1$ that contains $[f]_1$ and has an $[f]_1$-preserving forbidden minor, and the proof is complete.  
    
    Finally, since $K$ contains $w_3$, $K \setminus w_1w_3$ is some induced subgraph $N$ of $[G \cup f]$ with each vertex $w_i$ as $[w_i]$.  
    Let $G_1 = [C^{-1}] \cup [u]$, $G_2 = N$, $S = G_1 \cap G_2$, and $e = [w_1w_3]$.  
    Observe that $G_1 \cup G_2$ contains $[f]$ and has an $[f]$-preserving forbidden minor.  
    Lastly, since $G_1$ and $G_2 \cup e$ are atoms, Lemma \ref{lem:gluing_min_k-clique-sum_graphs} shows that $G_1 \cup G_2$ is an atom.  
    Thus, $G_1 \cup G_2$ is contained in some atom of $[G \cup f]$ that contains $[f]$ and has an $[f]$-preserving forbidden minor, as desired.  

    \smallskip
    \noindent\textbf{Case 2:} $f' = w_1w_4$.  
    \smallskip

    By Lemma \ref{lem:J_contains_C'_degree_1}, $J$ contains $w_2$ and $w_3$.  
    Hence, $J$ is some induced subgraph $G_2$ of $[G' \cup w_1w_4]$ with each vertex $w_i$ as $[w_i]$.  
    Let $G_1 = [C^{-1}] \cup [u]$, $S = G_1 \cap G_2$, and $e = [w_1w_3]$.  
    Observe that $G_1 \cup G_2$ contains $[f]$ and has an $[f]$-preserving forbidden minor.  
    Also, since $e$ is a nonedge and both $G_1$ and $G_2$ are atoms, Lemma \ref{lem:gluing_min_k-clique-sum_graphs} shows that $G_1 \cup G_2$ is an atom.  
    Therefore, $G_1 \cup G_2$ is contained in an atom of $[G \cup f]$ that contains $[f]$ and has an $[f]$-preserving forbidden minor.  
    Thus, $xw_1$ is reducing, as desired.  

    \smallskip
    The above cases are exhaustive, and so the proof is complete.  
\end{proof}

\begin{proof}[Proof of Lemma \ref{lem:C'_degree_1_reducible_3}]
    The lemma follows immediately from Lemmas \ref{lem:C'_degree_1_reducible_0}-\ref{lem:C'_degree_1_reducible_2}.  
\end{proof}

\subsection{Proof of Lemma \ref{lem:degree_3_decorations} for Proposition \ref{prop:IS}}
\label{sec:prop-10}

\begin{proof}[Proof of Lemma \ref{lem:degree_3_decorations}]
    We can clearly choose $\ell'_i$ to satisfy (i) and (ii), and we will show that some such choice satisfies (iii).  
    Let $\{v_i\}$ be the vertex set of $G$ with $v_44$ not in $H$ and let $\ell_1(v_1v_2)$ and $\ell_2(v_1v_2)$ be distinct positive real numbers.  
    Consider the $2$-realizations $p$ and $q$ of $(H,\ell_1)$ and $(H,\ell_2)$, respectively, with $p_i = p(v_i)$, $q_i = q(v_i)$, $(0,0) = p_1 = q_1$, $p_2 = (\ell_1(v_1v_2),0)$, $q_2 = (\ell_2(v_1v_2),0)$, $p_3 = (x,y)$ with $y \geq 0$, and $q_3 = (w,z)$ with $z \geq 0$.  
    Since $\ell_1(v_1v_2)$ and $\ell_2(v_1v_2)$ are distinct, the line-segment between $p_2$ and $q_2$ has a perpendicular bisector $L_1$.  
    If $p_3=q_3$, then taking any point $x$ on $L_1$ and setting $\ell'_i(v_4v_j) = \|x-p_j\|^2$ completes the proof.  
    Otherwise, let $L_2$ be the perpendicular bisector of the line-segment between $p_3$ and $q_3$.  
    If some point $x$ lies on both $L_1$ and $L_2$, then setting $\ell'_i(v_4v_j) = \|x-p_j\|^2$ completes the proof; see Figure \ref{fig:triangle_1}.  
    Else, $L_1$ and $L_2$ are distinct and parallel; see Figure \ref{fig:triangle_2}.  
    In this case, let $p'$ be a realization obtained by rotating $p$ about the origin by an angle of $\theta$, and let $p'_i = p'(v_i)$.  
    It suffices to show that we can choose $p'$ such that the perpendicular bisector of the line-segment between $p'_2$ and $q_2$ intersects the perpendicular bisector of the line-segment between $p'_3$ and $q_3$.  

    We complete the proof by showing how to choose $\theta$ so that $p'$ has the desired properties.  
    Let $p'_i = p(v_i)$.  
    It suffices to choose $\theta$ so that the line containing the points $p'_2$ and $q_2$ and the line containing the points $p'_3$ and $q_3$ have distinct slope magnitudes.  
    Note that $p'_2 = (a\cos \theta, -a\sin \theta)$, where $a = \ell_1(v_1v_2)$, and $p'_3 = (x\cos \theta + y\sin \theta,-x\sin \theta + y\cos \theta)$.  
    Let $a' = \ell_2(v_1v_2)$.  
    Then, $\theta$ must not be a solution to
    \begin{align*}
        \left|\frac{-a\sin \theta}{a\cos \theta - a'}\right| =
        \left|\frac{-x\sin \theta + y\cos \theta - z}{x\cos \theta + y\sin \theta - w}\right|.
    \end{align*}
    This is equivalent to $\theta$ not being in the union of the solution sets to any of the following:
    \begin{align*}
        &\left|a\cos \theta - a'\right| = 0,\\
        &\left|x\cos \theta + y\sin \theta - w\right| = 0,\\
        &\left|(az + a'y)\cos \theta + (aw - a'x)\sin \theta - (ay + a'z)\right| = 0,\\
        &\left|(az + a'y)\cos \theta - (aw + a'x)\sin \theta + 2ax\cos \theta \sin \theta - ay\cos 2\theta - a'z\right| = 0.
    \end{align*}
    Each of these equations is periodic and has finitely many solutions on the interval $[0,2\pi]$, and thus we can choose $\theta$ as needed.  
\end{proof}

\section{Results about atoms}
\label{sec:atoms}

Here we collect various results about atoms.  

\begin{lemma}[Induced minors of $G$ that are atoms extend induced minors of atoms of $G$]
\label{lem:3-flat_fm_containing_e_pass_through_minimal_3-clique-sum_components}
    If an induced minor $[G]$ of a graph $G$ is an atom, then $[G]$ is an extension of some induced minor of some atom of $G$.  
\end{lemma}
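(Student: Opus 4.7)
The plan is to pick the desired atom $H$ by a Helly-type argument on the atom tree of $G$, and then restrict the induced-minor partition of $V(G)$ to $V(H)$. First I would view the atoms of $G$ as nodes of a tree (the atom tree sitting inside the atom graph $\mathcal{G}(G)$, which is tree-like by the uniqueness of the atom decomposition used in the proof of Corollary \ref{cor:atom_gluing} and in appendix \ref{sec:atoms}), with edges corresponding to CMSs. For each part $P_i$ of the partition of $V(G)$ defining $[G]$, the induced subgraph $G[P_i]$ is connected, so the set $T_i$ of atoms of $G$ that meet $P_i$ is a subtree of the atom tree. This sets up Helly's theorem for subtrees of a tree.

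Next I would run the Helly argument: either all subtrees $T_i$ share a common atom $H$, or some pair $T_i, T_j$ is disjoint. In the latter case, the unique edge of the atom tree separating $T_i$ from $T_j$ corresponds to a CMS $C$ of $G$; the tree structure then forces $P_i \subseteq V(G_1^C) \setminus V(C)$ and $P_j \subseteq V(G_2^C) \setminus V(C)$, where $G_1^C, G_2^C$ are the two $C$-components of $G$. Any path in $[G]$ from $[P_i]$ to $[P_j]$ lifts to a walk in $G$ that must cross $V(C)$, so it visits a part meeting $V(C)$, i.e., a vertex of $[C]$. Therefore $[C]$ separates $[G]$; since $[C]$ is a clique, this gives a clique separator of $[G]$, contradicting the hypothesis that $[G]$ is an atom. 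Hence some atom $H$ of $G$ meets every $P_i$.

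I would then take $[H]$ to be the induced minor of $H$ corresponding to the partition $\{P_i \cap V(H)\}$ of $V(H)$. The one non-trivial point is that each block $P_i \cap V(H)$ is connected in $H$. For $u,v \in P_i \cap V(H)$, pick a $u$-$v$ path $\pi$ in $G[P_i]$. Any excursion of $\pi$ outside $V(H)$ leaves $V(H)$ through a vertex of some CMS $C'$ adjacent to $H$ in the atom tree; by the tree structure the excursion can return to $V(H)$ only through the same $C'$. The exit and re-entry vertices both lie in $P_i \cap V(C') \subseteq P_i \cap V(H)$, and since $C'$ is a clique they are joined by a single edge of $H$. Replacing the excursion by that edge shortens $\pi$, and iterating yields a $u$-$v$ path inside $P_i \cap V(H)$ all of whose edges lie in $H$.

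Since every $P_i$ meets $V(H)$, the minor $[H]$ has the same number of vertices as $[G]$, and the defining partitions agree on $V(H)$ by construction, so $[G]$ is an extension of $[H]$ in the sense of Section \ref{sec:minor-notation}, completing the proof. The main obstacle I anticipate is justifying the tree-structure of the atom decomposition used both in the Helly contradiction (two disjoint subtrees are separated by a single CMS) and in the shortcutting step (an excursion out of $V(H)$ returns through the same CMS); I would either extract these from the clique-separator machinery in appendix \ref{sec:atoms} (using Lemma \ref{lem:clq_sep_props} and the uniqueness result cited in the proof of Corollary \ref{cor:atom_gluing}) or record them as an explicit auxiliary lemma there.
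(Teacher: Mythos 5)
Your proof is correct in its overall logic but takes a genuinely different route from the paper. The paper's own proof is a short induction on the number of clique minimal separators of $G$: pick any CMS $C$; the observation that $[C]$ is a clique of $[G]$ (so, since $[G]$ is an atom, it cannot separate $[G]$) forces all vertices of $G$ outside $[C]^{-1}$ into a single $C$-component $H$, so $[G]$ is an extension of an induced minor of $H$; one then recurses into $H$, which has strictly fewer CMSs by Lemma~\ref{lem:clq_sep_props}. Your proof instead applies Helly's theorem for subtrees of a tree in one shot: each part $P_i$, being connected, traces out a subtree of the clique tree of atoms; a disjoint pair would be separated by a CMS $C$ and hence by the clique $[C]$ in $[G]$, contradicting that $[G]$ is an atom; so the subtrees share an atom $H$, and one restricts the partition to $V(H)$ and re-connects the blocks by shortcutting each excursion across the clique boundary through which it leaves.

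The underlying engine is identical in both arguments (``$[C]$ is a clique of $[G]$ and so does not separate it''), but the packaging differs in what it demands from the clique-separator machinery. The paper's induction is deliberately economical: it only ever uses a single CMS at a time, so it needs nothing beyond the definition of $C$-components and Lemma~\ref{lem:clq_sep_props}. Your argument is non-inductive and conceptually more direct, but it leans on two structural facts about the atom decomposition that the paper does not establish in the generality you need: (i) the atoms form a clique tree in which the atoms containing each vertex induce a subtree — the paper's Lemma~\ref{lem:dual_min_clique-sum_graph_tree} only gives tree structure for the atom graph under the extra hypothesis that no CMS is a proper subgraph of another CMS, which can fail in general, so one has to appeal to (or reprove) the standard clique-tree result; and (ii) for an atom $H$ and a connected component $D$ of $G \setminus V(H)$, the boundary $N(D) \subseteq V(H)$ is a clique — this is what licenses the single-edge shortcut and what guarantees that the excursion re-enters through the same boundary, and the paper nowhere states it. Both facts are classical and provable, and you correctly flag the tree structure as the main obstacle, but as written the proof would need these recorded as auxiliary lemmas before it is complete. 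One small slip: you refer to ``the unique edge of the atom tree separating $T_i$ from $T_j$'' — in general any edge on the tree path between disjoint subtrees separates them, so the word ``unique'' should be dropped; the argument is unaffected since any choice suffices.
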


\begin{proof}
    Assume that $[G]$ is an atom.  
    We proceed by induction on the number $n$ of CMSs of $G$.  
    When $n=0$, the lemma is immediate.  
    Assume the lemma is true when $n \leq k$, for any integer $k \geq 0$, and we will prove it when $n=k+1$.  
    Let $C$ be any CMS of $G$.  
    Since $[G]$ is an induced minor and an atom, at most one $C$-component of $G$ contains some vertex that is not contained in $[C]^{-1}$.  
    Let $H$ be any $C$-component of $G$ that contains $C$, which exists since $C$ is a minimal separator.  
    If no $C$-component of $G$ contains some vertex that is not contained in $[C]^{-1}$, then each vertex of $G \setminus C$ is contained in $[C]^{-1}$, and hence $[H]^{-1}$.  
    Otherwise, we can wlog choose $C$ and $H$ such that $H$ additionally contains some vertex that is not contained in $[C]^{-1}$.  
    In this case, each vertex of $G \setminus H$ is contained in $[C]^{-1}$, and hence $[H]^{-1}$.  
    Therefore, in either case, $[G]$ is an extension of some induced minor $[H]_1$ that is an atom.  
    
    Next, observe that $H$ has strictly fewer CMSs than $G$.  
    Therefore, by the inductive hypothesis, $[H]_1$ is an extension of some induced minor $M$ of some atom of $H$.  
    Finally, the lemma follows from the observation that $[G]$ is an extension of $M$. 
\end{proof}

\begin{lemma}[Conditions under which nonedge $f$ is contained in each atom of $G \cup f$]
    \label{lem:I'_f'_cms}
    For any nonedge $f$ of a graph $G$ whose atom graph is a path, each CMS, and consequently each atom, of $G \cup f$ contains $f$ if either of the following statements are true:
    \begin{enumerate}
        \item $G$ is an atom.
        \item The atom graph of $G$ contains two distinct leaves $y_1$ and $y_2$, whose not necessarily distinct neighbors are $x_1$ and $x_2$, respectively, and the endpoints of $f$ are contained in the graphs $G(y_1) \setminus G(x_1)$ and $G(y_2) \setminus G(x_2)$.  
    \end{enumerate}
\end{lemma}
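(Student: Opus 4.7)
The plan is to show that no CMS $E$ of $G \cup f$ can fail to contain $f = uv$; the statement about atoms then follows from a standard tree-decomposition fact, namely that the set of bags of the atom graph containing a given vertex is a connected subtree, so if every CMS of $G \cup f$ contains both endpoints of $f$, then every atom adjacent (in the atom graph) to any CMS must contain both endpoints too, forcing every atom to contain $f$. Assume for contradiction that some CMS $E$ omits $f$, i.e., $\{u,v\} \not\subseteq V(E)$. Then every edge of the clique $E$ already lies in $G$, so $E$ is a clique of $G$; and since $(G \cup f)\setminus E$ is disconnected while deleting the single edge $f$ cannot reconnect anything, $G \setminus E$ is also disconnected. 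Thus $E$ is a clique separator of $G$. This immediately contradicts Case (1), where $G$ is an atom.

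For Case (2), let the atom graph of $G$ be the path $y_1, x_1, y_1', x_1', \ldots, x_{m-1}, y_2$, and set $A_i = G(y_i)$, $C_i = G(x_i)$, with $A_1 = G(y_1)$, $A_m = G(y_2)$, $C_1 = G(x_1)$, $C_{m-1} = G(x_2)$. The running-intersection property of the decomposition, together with $u \in A_1 \setminus C_1$, implies that $u$ appears in no bag other than $A_1$, and similarly $v$ appears only in $A_m$. In particular, any clique of $G$ lies in a single atom $A_j$, and no clique of $G$ can contain both $u$ and $v$. So $E \subseteq A_j$ for some $j$.

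The proof then splits into sub-cases according to whether $E$ contains $u$, $v$, or neither. If neither, the edge $f$ places $u$ and $v$ in a common component of $(G \cup f)\setminus E$, so for $E$ to separate $G \cup f$ there must be some other component; since connectivity between atoms on different sides of $A_j$ is funneled through $C_{j-1} \setminus E$ and $C_j \setminus E$, and each $A_i$ is itself a clique-separator-free atom, the existence of another component forces $E$ to contain one of $C_{j-1}, C_j$, and the edge $f$ then short-circuits the $G$-separation that the remaining vertices of $E$ attempt to enforce, contradicting the minimality of $E$ in $G \cup f$. If $u \in E$ and $v \notin E$ (the case $v \in E$ is symmetric), then $E \subseteq A_1$; deleting $E$ removes the edge $f$ along with $u$, so $(G \cup f)\setminus E$ and $G \setminus E$ have identical component structures. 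I would then show that $u$ is removable from $E$ without altering the components of $(G \cup f)\setminus(E\setminus\{u\})$: the neighbors of $u$ in $G$ all lie in $A_1$, and the edge $f$ reconnects $u$ to the component of $v$, which is already merged with $u$'s neighbors via the path $C_1, A_2, \ldots, A_m$ in $G \setminus (E\setminus\{u\})$. This contradicts the minimality of $E$.

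The hardest step will be the sub-case $u \in E$, $v \notin E$: one must verify carefully that all neighbors of $u$ in $A_1 \setminus E$ lie in a single component of $(G \cup f) \setminus (E \setminus \{u\})$, so that reinserting $u$ merges only components already joined by $f$ and the path-decomposition backbone. This uses both that $A_1$ is an atom (so the clique $E \setminus \{u\}$ cannot by itself separate $A_1$ into multiple non-trivial pieces) and that $v$'s component in $G \setminus E$ is the unique one reaching $A_m$ through $C_1 \setminus E$. Once this is established, the argument closes by observing that $E \setminus \{u\}$ yields the same components in $G \cup f$ as $E$, violating minimality.
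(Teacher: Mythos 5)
The paper's proof hinges on a single key observation that your proposal does not make: a CMS $E$ of $G \cup f$ that fails to contain both endpoints of $f$ is in fact a CMS of $G$, not merely a clique separator of $G$. Once that is established, $E$ is one of the bags $G(x_j)$ in the atom graph of $G$, and the running-intersection property (plus the hypothesis that $u$ and $v$ lie in $G(y_1)\setminus G(x_1)$ and $G(y_2)\setminus G(x_2)$, the exclusive parts of the leaf atoms) immediately forces $u, v \notin E$. Your ``$u \in E$'' and ``$v \in E$'' sub-cases therefore never arise, and the whole argument collapses to: the two $E$-components of $G$ are reunited by $f$, so $(G\cup f)\setminus E$ is connected, contradiction.

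Because you do not establish that $E$ is a CMS of $G$, you are stuck carrying the sub-case $u \in E$, and the argument you give there has a genuine gap. You claim that $v$'s component and $u$'s surviving neighbors in $A_1 \setminus E$ already lie in one component of $G\setminus(E\setminus\{u\})$, ``merged via the path $C_1, A_2, \ldots, A_m$.'' This silently requires $C_1 \setminus E \ne \emptyset$, which you have not shown. If $E \supseteq C_1$ (and $E$ need only be a clique of $G$ inside $A_1$ at this point in your argument, so this cannot be excluded), then $A_1 \setminus(E\setminus\{u\})$ and the rest of $G$ are genuinely separated in $G\setminus(E\setminus\{u\})$; putting $u$ back with the edge $f$ merges those two pieces, so $E\setminus\{u\}$ is not even a separator of $G\cup f$, its component structure differs from that of $E$, minimality of $E$ is \emph{not} violated, and your intended contradiction does not materialize. (A concrete illustration: glue two $K_4$'s along an edge $\{a_3,a_4\}$, take $u = a_1$, $v = a_5$, $E = \{a_1,a_3,a_4\}$; then $E\setminus\{u\}=\{a_3,a_4\}=C_1$ is not a separator of $G\cup f$ at all.) To rule out $E \supseteq C_1$ one needs precisely the fact that $E$ is a CMS of $G$ together with the non-nesting of CMSs implicit in ``atom graph is a path'' — that is, the step you omitted. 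Your ``neither'' sub-case has a milder imprecision of the same flavor (removing a clique inside $A_j$ that separates $G\cup f$ forces $E \supseteq C_{j-1}\cup C_j$, not just one of them), but that one can be patched; the $u\in E$ sub-case cannot be patched without the missing lemma.
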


\begin{proof}
    If Statement (1) is true, then the lemma is easy to verify.  
    Consider the case where Statement (2) is true.  
    If $G \cup f$ is an atom, then the lemma is proved.  
    Otherwise, assume to the contrary that some CMS $E$ of $G \cup f$ does not contain $f$, then $E$ is clearly a CMS of $G$.  
    Furthermore, by assumption, the endpoints of $f$ are contained in distinct $E$-components of $G$.  
    Since the atom of graph of $G$ is a path, there are exactly two $E$-components of $G$.  
    Hence, these facts show that there exists some path in $G \cup f$ between any two vertices in $(G \cup f) \setminus E$.  
    This contradicts the fact that $E$ is a CMS of $G \cup f$, and so $E$ must contain $f$.  
    Thus, each CMS of $G$ contains $f$, and so the proof is complete.  
\end{proof}

\begin{lemma}[How a CMS partitions atoms and CMSs]
    \label{lem:clq_sep_props}
    For any CMS $C$ of a graph $G$, the set of all atoms of $G$ is equal to the disjoint union of the sets of all atoms of all $C$-components of $G$.  
    Also, if no proper subgraph of any CMS of $G$ is a CMS of $G$, then set of all CMSs of $G$ is equal to the disjoint union of the set $\{C\}$ and the sets of all CMSs of all $C$-components of $G$.  
\end{lemma}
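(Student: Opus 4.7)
The plan is to prove each part by establishing both directions of the claimed set equality, together with disjointness where it is not automatic, using one recurring tool: if $D$ is a subset of a clique $C$ of $G$, and $D$ is the intersection of $V(A')$ with $V(C)$ for some induced subgraph $A'$ with vertices in two distinct $C$-components of $G$, then $D$ is a clique separator of $A'$ (since every path in $G$ between those $C$-components passes through $C$, hence through $V(A') \cap V(C) = D$).

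For the first assertion, I would first show that every atom $A$ of any $C$-component $H$ is an atom of $G$. Absence of a clique separator is intrinsic, so $A$ still has none when viewed as an induced subgraph of $G$. For vertex-maximality in $G$, suppose $A'$ is an induced subgraph of $G$ with $V(A) \subsetneq V(A')$ and no clique separator. If $V(A') \subseteq V(H)$, this contradicts maximality of $A$ in $H$. Otherwise $V(A')$ meets two distinct $C$-components; by the tool above, $V(A') \cap V(C)$ is a clique separator of $A'$ (after observing that if this intersection were empty, $A'$ would be disconnected and hence not an atom), a contradiction. Conversely, let $A$ be an atom of $G$. I would argue that $V(A)$ lies in a single $V(H)$: otherwise the tool gives $V(A) \cap V(C)$ as a clique separator of $A$ (or disconnectedness). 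Then $A$ is an induced subgraph of $H$ with no clique separator, and maximality in $H$ follows by reduction to the first direction (any strict extension inside $H$ would already be an atom of $G$ strictly containing $A$). Disjointness of the families of atoms of different $C$-components follows because any common atom would have vertex set in $V(H_1) \cap V(H_2) \subseteq V(C)$; maximality inside either $C$-component then forces this atom to be extendable by a vertex of $V(H_i) \setminus V(C)$, ruling this out.

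For the second assertion, under the hypothesis that no proper subgraph of any CMS of $G$ is a CMS of $G$, I would similarly do two directions. The ``$\supseteq$'' direction: $C$ itself is a CMS of $G$ by assumption. For any CMS $E$ of a $C$-component $H$, the set $V(E)$ is a clique of $G$; it separates $G$ because each $E$-component of $H$ other than the one containing $V(C) \setminus V(E)$ is disconnected in $G \setminus V(E)$ from the rest of $G$ (whose access to $H$ passes through $V(C)$). Minimality of $E$ in $G$ follows from its minimality in $H$ together with the hypothesis, which prevents a proper clique subgraph of $E$ from also being a CMS of $G$. For the ``$\subseteq$'' direction: given a CMS $E \neq C$ of $G$, I would show $V(E) \subseteq V(H)$ for some $C$-component $H$; if not, the tool applied to $E$ itself combined with the antichain-type hypothesis forces either $E \subseteq C$ (contradicting $E \neq C$ and the hypothesis) or gives a smaller clique separator inside $C$. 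Finally, $E$ is minimal inside $H$ because any strictly smaller clique separator of $H$ contained in $E$ would, by the forward direction just proved, also be a CMS of $G$, again contradicting the hypothesis. Disjointness of this family is inherited from the atom case, since a CMS of one $C$-component is contained in that component's vertex set and the CMSs are proper subsets of the components.

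The main obstacle I expect is the ``$\subseteq$'' direction of the second part: namely, pinning a CMS $E$ of $G$ inside a single $C$-component. The hypothesis is doing real work here, and the argument needs to carefully separate the case $V(E) \cap V(C) \neq \emptyset$ (where one extends $V(E) \cap V(C)$ to a clique subseparator of $G$ using the tool) from the case $V(E) \cap V(C) = \emptyset$ (where $V(E)$ and $V(C)$ are disjoint clique separators of $G$ and one must derive the contradiction from the antichain hypothesis). Everything else is either a direct symmetry argument or an application of the clique-intersection tool.
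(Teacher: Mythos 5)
Your proof follows the same overall plan as the paper's, which simply asserts both containments and disjointness as ``observations''; your clique-intersection idea is essentially the content those observations stand for, so the route is the same. However, there are two concrete problems. Your disjointness argument for the atom claim is logically backwards as written: you say ``maximality inside either $C$-component then forces this atom to be extendable by a vertex of $V(H_i) \setminus V(C)$, ruling this out,'' but maximality means an atom is \emph{not} extendable, and you have not shown that any such extension exists. The fact actually needed is that no subclique of $C$ is a vertex-maximal induced subgraph of a $C$-component with no clique separator. Proving this requires a real argument (roughly: every vertex of $C$ appearing in a $C$-component $H$ has a neighbor in the connected graph $H \setminus V(C)$, so one can build a cycle through two vertices of $C$ and through $H \setminus V(C)$, yielding a strictly larger induced subgraph of $H$ with no clique separator), and neither you nor the paper supplies it.

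Second, the clique-intersection ``tool'' tacitly assumes every $C$-component contains $V(C)$ and that distinct $C$-components overlap exactly on $V(C)$. Under the paper's definition of $U$-components, a $C$-component is a vertex-maximal induced subgraph not separated by any subset of $V(C)$, and it can omit vertices of $C$ that have no neighbor in the relevant component of $G \setminus V(C)$; this can happen even when $C$ is a CMS. The robust version of your observation is that every path between vertices lying in distinct connected components of $G \setminus V(C)$ meets $V(C)$; when deploying this to get a clique separator of $A'$, you must also dispatch the degenerate case where one of the two vertices to be separated already lies in $V(A') \cap V(C)$, since removing that set then merely deletes that vertex rather than separating it, and no separator is produced.
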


\begin{proof}
    The first claim follows from the observations that any atom of $G$ is contained in a unique $C$-component of $G$ and the set of all atoms of $G$ is equal to the union of the sets of all atoms of all $C$-components of $G$.  
    Next, assume that no proper subgraph of any CMS of $G$ is a CMS of $G$.  
    First, note that $C$ is not a CMS of any $C$-component of $G$.  
    Second, any CMS $E$ of a $C$-component of $G$ is clearly a CMS of $G$.  
    Also, since $E$ is not a subgraph of $C$, it is contained in a unique $C$-component of $G$.  
    Third, by a similar argument, any CMS $E$ of $G$ other than $C$ is contained in a unique $C$-component $H$ of $G$.  
    Hence, it is easy to see that $E$ is a CMS of $H$.  
    Combining these facts completes the proof.  
\end{proof}

\begin{lemma}[How a CMS partitions the vertices of a separator]
    \label{lem:I'_sep_components_contain_C'-E}
    Let $S$ be a separator of a graph $G$ and $H$ be the union of $S$ and any subset connected components of $G \setminus S$.  
    If $G$ is an atom, then, for any CMS $E$ of $H$, each connected component of $H \setminus E$ contains at least one vertex of $S \setminus E$.  
\end{lemma}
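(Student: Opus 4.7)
My approach is proof by contradiction: assume some connected component $K$ of $H\setminus E$ contains no vertex of $S\setminus E$, and derive that $E$ is a clique separator of $G$, contradicting that $G$ is an atom. Since $V(K)\cap V(E)=\emptyset$ already, the assumption is equivalent to saying $V(K)\cap S=\emptyset$, i.e.\ $K$ lives entirely inside the union $\bigcup_{i} C_i$ of those connected components of $G\setminus S$ that were used to build $H$.

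The heart of the argument is to show that, within $G\setminus E$, no edge escapes $K$. I would examine any edge $uw$ of $G$ with $u\in V(K)$ and $w\notin V(K)$ and split into three cases according to where $w$ lies. If $w\in V(E)$ the edge is gone in $G\setminus E$. If $w\in V(H)\setminus(V(E)\cup V(K))$ then $uw$ is an edge of $H\setminus E$ leaving $K$, contradicting that $K$ is a connected component of $H\setminus E$. If $w\in V(G)\setminus V(H)$, then $w$ lies in some component $C_j$ of $G\setminus S$ that is \emph{not} in $H$, while $u$ lies in some component $C_i$ of $G\setminus S$ that \emph{is} in $H$; but both $u$ and $w$ avoid $S$, so $uw$ would be an edge of $G\setminus S$ between two distinct components of $G\setminus S$, which is absurd. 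Therefore $K$ is a union of connected components of $G\setminus E$.

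Since $E$ is a CMS of $H$, the graph $H\setminus E$ is disconnected, so there exists another component of $H\setminus E$, giving a vertex of $G\setminus E$ outside $K$. Combined with the previous paragraph, $E$ separates $G$; and $E$ is a clique because it is a CMS of $H$. Hence $E$ is a clique separator of $G$, contradicting the assumption that $G$ is an atom.

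The step that will require the most care is the third case above, namely ruling out edges of $G$ between different components of $G\setminus S$; but this is immediate once one observes that both endpoints lie in $V(G)\setminus S$, so the edge would lie in $G\setminus S$. The rest is bookkeeping, so I expect the proof to be short — essentially one paragraph.
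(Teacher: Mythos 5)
Your proof is correct and follows essentially the same approach as the paper: the paper's proof consists of the single sentence that if some component of $H\setminus E$ avoids $S\setminus E$, then ``clearly'' $E$ is a clique separator of $G$, contradicting that $G$ is an atom. You have supplied the careful case analysis (edges into $E$, edges into the rest of $H\setminus E$, and the impossible case of an edge jumping between distinct components of $G\setminus S$) that justifies the paper's ``clearly,'' and the argument is sound.
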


\begin{proof}
    If some CMS $E$ of $H$ is such that some connected component of $H \setminus E$ does not contain any vertex of $S \setminus E$, then clearly $E$ is a clique separator of $G$, and thus $G$ is not as atom.  
\end{proof}

\begin{lemma}[Condition under which an atom graph is a tree]
    \label{lem:dual_min_clique-sum_graph_tree}
    If $G$ is a graph such that no proper subgraph of any of its CMSs is a CMS of $G$, then the atom graph $\mathcal{G}(G)$ is a tree whose leaf nodes are contained in $Y(G)$ and, for any vertex $x \in X(G)$ and any connected component $H$ of $\mathcal{G}(G) \setminus x$, $\bigcup_{v \in V(H)} G(v)$ is a $G(x)$-component of $G$.
\end{lemma}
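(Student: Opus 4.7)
The plan is to induct on the number $n$ of CMSs of $G$. The base case $n=0$ is immediate: $G$ is itself an atom, so $\mathcal{G}(G)$ is a single vertex in $Y(G)$, trivially a tree whose only leaf lies in $Y(G)$, and the third condition is vacuous.

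In the inductive step, I would pick any CMS $C$ of $G$, let $x_C \in X(G)$ be its vertex in the atom graph, and let $G_1,\ldots,G_k$ (with $k \geq 2$) be the $C$-components of $G$. By Lemma \ref{lem:clq_sep_props} together with the hypothesis on $G$, every CMS of each $G_i$ is a CMS of $G$, so each $G_i$ itself inherits the property that no proper subgraph of any of its CMSs is a CMS, and each $G_i$ has strictly fewer CMSs than $G$ (since $C$ is a CMS of $G$ but $G_i \setminus C$ is connected, so $C$ is not a CMS of $G_i$). Hence the induction hypothesis applies to each $G_i$ and yields a tree $\mathcal{G}(G_i)$ with leaves in $Y(G_i)$ and the desired correspondence between vertex removals and CMS-components.

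The key auxiliary claim is that each $G_i$ contains a unique atom $A_i$ whose vertex set includes $V(C)$. Existence is immediate: $C$ is a clique in $G_i$ that does not separate $G_i$, so $C$ lies in some atom. For uniqueness I would invoke the classical fact from clique-separator decomposition theory that the vertex set of the intersection of any two distinct atoms of a graph induces a CMS of that graph. If $C$ were inside two distinct atoms $A, A'$ of $G_i$, this CMS $E$ of $G_i$ would satisfy $V(E) \supseteq V(C)$; by Lemma \ref{lem:clq_sep_props}, $E$ is also a CMS of $G$. Then either $V(E) = V(C)$, making $C$ a CMS of $G_i$ and contradicting that $C$ does not separate $G_i$, or $V(E) \supsetneq V(C)$, making $C$ a proper subgraph of the CMS $E$ of $G$, which contradicts the hypothesis.

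Given this claim, $\mathcal{G}(G)$ is obtained from the disjoint union of the trees $\mathcal{G}(G_1),\ldots,\mathcal{G}(G_k)$ by adding the vertex $x_C$ and the $k$ edges $x_C A_i$, one per $C$-component; this is manifestly a tree. Its leaves are leaves of some $\mathcal{G}(G_i)$ (hence in $Y(G)$), since $x_C$ has degree $k \geq 2$ and each $A_i$ gains an incident edge in $\mathcal{G}(G)$. For the third conclusion, the case $x = x_C$ is direct, as removing $x_C$ detaches precisely the subtrees $\mathcal{G}(G_i)$, and their unions of $G(v)$ recover the $C$-components $G_i$. The case $x \in X(G_i)$ reduces, via the induction hypothesis on $G_i$, to noting that $A_i$ lies in exactly one connected component of $\mathcal{G}(G_i) \setminus x$, so the other components remain $G(x)$-components of $G$ unchanged, while the component containing $A_i$ acquires the other $G_j$'s through $x_C A_i$, still yielding a $G(x)$-component of $G$ (since $V(C) \subseteq V(A_i)$ ensures the attachment respects separations by $G(x)$). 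The main obstacle is the uniqueness half of the auxiliary claim, which is the only place the hypothesis is used in an essential way; the tree assembly and the component-tracking in the third condition are bookkeeping once the claim is in hand.
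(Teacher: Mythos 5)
Your proof follows the same route as the paper's: induction on the number of CMSs, an application of Lemma~\ref{lem:clq_sep_props} to reduce to the $C$-components, the key claim that each $C$-component has a unique atom containing $C$, and the reassembly of $\mathcal{G}(G)$ from the component trees by attaching a new vertex for $C$. One shared inaccuracy worth flagging: the ``classical fact'' that the intersection of two distinct atoms is itself a CMS is false in general (in the fan on five vertices $a,b,c,d,e$ with edges $ab,ac,bc,bd,cd,ce,de$, the atoms $\{a,b,c\}$ and $\{c,d,e\}$ intersect in $\{c\}$, which is not a separator); the paper's proof asserts the same thing. What is true, and suffices for both arguments, is that the intersection of two atoms of $G_i$ is \emph{contained in} some CMS of $G_i$ lying on the path between them in the clique tree (by the junction-tree property), and since here that intersection already contains $C$, one still obtains a CMS $E$ of $G_i$, hence of $G$, with $V(E) \supseteq V(C)$, after which your case split on $V(E)=V(C)$ versus $V(E)\supsetneq V(C)$ closes the argument exactly as intended.
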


\begin{proof}
    We proceed by induction on the number $n$ of CMSs of $G$.  
    When $n=0$, the lemma is immediate, and so the base case is true.  
    For any integers $k \geq 0$ and $0 \leq j \leq k$, assume the lemma is true when $n=j$, and we will prove it when $n=k+1$.  
    Since $k + 1 > 0$, $G$ has a CMS $C$.  
    Consider the set $\{H_i\}$ of $C$-components of $G$.  
    Lemma \ref{lem:clq_sep_props} shows that each graph $H_i$ has strictly fewer CMSs than $G$, and any CMS $E$ of $H_i$ is a CMS of $G$.  
    Hence, no proper subgraph of $E$ is a CMS of $H_i$.  
    Therefore, the inductive hypothesis applies to each atom graph $\mathcal{G}(H_i)$.  

    Next, we will construct $\mathcal{G}(G)$ from the trees $\mathcal{G}(H_i)$ and show that it has the desired properties.  
    First, we show that there exists a unique atom $J_i$ of each $C$-component $H_i$ that contains $C$.  
    Then, we show that $\mathcal{G}(G)$ can be obtained from the graph $\bigcup_i \mathcal{G}(H_i)$ by adding a vertex $x$, corresponding to $C$, and, for each $J_i$, an edge between $x$ and the vertex of $\mathcal{G}(H_i)$ that corresponds to $J_i$.  
    This clearly proves the lemma.  
    
    For the first claim, since $C$ is a CMS of $G$ but no proper subgraph of $C$ is a CMS of $G$, $C$ is contained in some atom of each $C$-component $H_i$.  
    Note that the intersection of any two atoms of $G$ is either empty or a CMS of $G$.  
    Hence, if there exist distinct atoms of $H_i$ that both contain $C$, then $C$ must be a proper subgraph of some of $G$, which is a contradiction.  
    Therefore, $C$ is contained in a unique atom $J_i$ of $H_i$.  

    For the second claim, let $G'$ be the graph obtained as described above.  
    By Lemma \ref{lem:clq_sep_props}, the vertex sets of $G'$ and $\mathcal{G}(G)$ are equal.  
    Furthermore, the edge set of $G'$ is contained in the edge set of $\mathcal{G}(G)$, by definition.  
    Next, by assumption, any CMS $E$ of a $C$-component $H_i$ contains some vertex in $H_i \setminus C$.  
    Therefore, only an atom of $G$ that is contained in $H_i$ can contain $C_i$.  
    The facts above show that the edge sets of $G'$ and $\mathcal{G}(G)$ are the same, and so we have $G' = \mathcal{G}(G)$, which completes the proof.  
\end{proof}

\begin{lemma}[Condition under which an atom graph is a path]
    \label{lem:I'_path}
    Let $G$ be a graph that is an atom and let $S$ be one of its subgraphs and separators such that, for any CMS $C$ of $S$, there are exactly two $C$-components of $S$, both of which are atoms.  
    If $H$ is the union of $S$ and any subset of connected components of $G \setminus S$, then the atom graph of $H$ is a path.
\end{lemma}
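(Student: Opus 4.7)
The plan is to induct on the number $m$ of connected components of $G \setminus S$ that are included in $H$.

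Before starting the induction, I would establish a preliminary observation: the hypothesis on $S$ forces $S$ to have at most one CMS. Indeed, suppose $S$ had two distinct CMSs $C_1 \neq C_2$. Splitting by $C_1$ yields exactly two atoms $A_1, A_2$ with $V(A_1) \cap V(A_2) = V(C_1)$; since $S \setminus C_1$ is disconnected into parts inside $A_1 \setminus C_1$ and $A_2 \setminus C_1$, and since $C_2$ is a clique, all vertices of $C_2$ must lie inside a single atom, say $A_1$. But $C_2$ would then be a clique separator of $A_1$, contradicting that $A_1$ is an atom. Consequently the atom graph of $S$ is either a single node (if $S$ is an atom) or a three-node path $A_1 - C_1 - A_2$. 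This supplies the base case $m=0$.

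For the inductive step, write $H = H' \cup K$, where $K$ is one additional included component of $G \setminus S$. By the inductive hypothesis, the atom graph of $H'$ is a path, say with atom-nodes $M_1, \ldots, M_n$ and CMS-nodes $E_i$ between $M_i$ and $M_{i+1}$. Let $N := N_G(V(K)) \cap V(S)$. Since $K$ is a connected component of $G \setminus S$, its neighborhood in $V(G) \setminus V(K)$ equals $N$; because $G$ is an atom, $N$ cannot be a proper clique separator of $G$, so either $N = V(S)$ or $N$ is not a clique in $G$. I would then argue the main structural claim: adding $K$ to $H'$ modifies the atom graph in one of two ways, each of which preserves the path structure. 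In the first, $K$ acts as a \emph{local decoration}, attaching to $H'$ in such a way that $V(K)$ is absorbed into a single atom $M_i$ of $H'$ (with the $K$-vertices adjacent to the incident CMSs $E_{i-1}, E_i$ possibly enlarging those CMSs). In the second, $K$ acts as a \emph{bridge}, attaching to a contiguous interval $M_i, M_{i+1}, \ldots, M_j$ of atoms of $H'$; the intermediate CMSs $E_i, \ldots, E_{j-1}$ are bypassed by paths through $K$ and the contiguous interval collapses into a single atom of $H$ containing $K$.

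The main obstacle is ruling out the third possibility, that $K$ attaches \emph{non-contiguously} to the atom-path of $H'$ — touching atoms $M_i$ and $M_j$ (with $i < j$) but not some intermediate $M_l$. In that case a bridging path through $K$ together with the existing path $M_i - E_i - \cdots - E_{j-1} - M_j$ in $H'$ would produce three or more $E_l$-components of $H$, creating branching in the atom graph. I would rule this out by tracing the configuration back through the earlier inductive steps and the structure of $S$: a non-contiguous attachment of $K$ to atoms that were themselves produced by earlier bridging through components of $G \setminus S$ would, together with the non-cliqueness of $N$, force some clique subset of $V(S) \cup V(K)$ to separate $V(K)$ from the vertices of $G \setminus (V(H) \cup N)$, contradicting $G$ being an atom. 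The delicate point is that ``intermediate'' atoms of $H'$ may already contain $K'$-vertices from earlier inductive steps, so the argument has to track how these decorations affect the clique structure at $E_l$; this is where the restrictive base-case conclusion (that $S$ itself has at most two atoms, all attachments must respect this skeleton) is crucial.
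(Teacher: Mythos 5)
Your approach --- induction on the number of included components of $G \setminus S$, with a preliminary observation that $S$ has at most one CMS --- is genuinely different from the paper's, which is direct: it first shows (via an auxiliary lemma that each component of $H \setminus E$ must hit $S \setminus E$, using that $G$ is an atom) that every CMS $E$ of $H$ has exactly two $E$-components, concludes the atom graph is a tree with every CMS node of degree $2$, and then rules out a third leaf by extracting three vertices of $S$ that would force some CMS of $S$ to have a component that is not an atom.

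There are two genuine gaps in the proposal. First, the justification of the preliminary observation is not correct as written: you assert that $C_2 \subseteq V(A_1)$ would be a clique separator of $A_1$, but $A_1$ is an atom and so \emph{has no} clique separator at all; $C_2$ does not separate $A_1$. The correct argument is either to show $C_1 \subseteq C_2$ (since $A_1 \setminus C_2$ and $A_2 \setminus (C_1 \cap C_2)$ are each connected and overlap in $C_1 \setminus C_2$, forcing $C_1 \setminus C_2 = \emptyset$), and then by symmetry $C_1 = C_2$; or to observe that one of the two $C_2$-components of $S$ would contain $A_2$ plus the clique $C_2 \setminus C_1$ hanging off $C_1$, which is not an atom, contradicting the hypothesis. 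The conclusion you want is true, but the stated reason is not.

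Second, and more seriously, the entire inductive step is unproved. The ``main structural claim'' (that adding one component $K$ either decorates a single atom or contiguously bridges an interval) and the step ``ruling out non-contiguous attachment'' are both introduced with ``I would then argue'' / ``I would rule this out by \ldots'' and are never actually established. Moreover, the one sentence of reasoning offered misidentifies the danger: if $K$ attaches to $M_i$ and $M_j$ but not to some intermediate $M_l$, a path through $K$ bypasses \emph{every} intermediate $E_l$, so those $E_l$ are no longer separators of $H$ at all; the intermediate atoms merge with $M_i$, $M_j$, and $K$ into one atom rather than producing ``three or more $E_l$-components'' or branching. So the phenomenon you are worried about does not occur in the way you describe, and the actual argument needed (showing that no new CMS of $H$ can have three or more components) is absent. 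This is precisely where the paper's auxiliary lemma does the work, by tying every CMS of $H$ back to a separator of $S$, which the hypothesis then controls. Without a concrete replacement for that step, the proposal does not constitute a proof.
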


\begin{proof}
    If $H$ is an atom, then the lemma is proved.  
    Otherwise, for any CMS $E$ of $H$, we can use Lemma \ref{lem:I'_sep_components_contain_C'-E} and our assumption on $S$ to see that there are exactly two $E$-components of $H$.  
    This implies that no proper subgraph of $E$ is a CMS of $H$.  
    Since $E$ was arbitrary, this shows that Lemma \ref{lem:dual_min_clique-sum_graph_tree} applies.  
    Hence, each vertex in $X(H)$ has degree $2$, and it suffices to show that each vertex in $Y(H)$ has degree $2$.  
    If this is not the case, then there exist three leaf nodes $y_1$, $y_2$, and $y_3$ in $\mathcal{G}(H)$ whose neighbors are $x_1$, $x_2$, and $x_3$, respectively.  
    We will show that this implies the existence of a CMS $C$ of $S$ such that some $C$-component of $S$ is not as atom, which contradicts our assumption.  
    
    By the discussion above, $H(y_1)$ is a $H(x_1)$-component of $H$.  
    Hence, by Lemma \ref{lem:I'_sep_components_contain_C'-E}, the connected component $H(y_1) \setminus H(x_1)$ contains a vertex $z_1$ of $S$.  
    Similarly, $H(y_2) \setminus H(x_2)$ and $H(y_3) \setminus H(x_3)$ contain the vertices $z_2$ and $z_3$ of $S$, respectively.  
    Consider any CMS $E$ of $H$ other than $H(x_1)$.  
    If $H(y_1) \setminus H(x_1)$ contains some vertex in $E$, then $H(y_1)$ contains $E$.  
    However, this implies that $y_1$ has degree $2$, which contradicts the fact that $y_1$ is a leaf node.  
    Hence, $H(y_1) \setminus H(x_1)$ does not contain any vertex in $E$.  
    Similarly, $H(y_2) \setminus H(x_2)$ and $H(y_3) \setminus H(x_3)$ do not contain any vertex in $E$.  
    Since the intersection of any two atoms of $H$ is either empty or a CMS of $H$, this implies that $z_1$, $z_2$, $z_3$ are distinct vertices.  
    However, these facts imply that $H(x_1) \cap S$ is a CMS of $S$ such that some $H(x_1) \cap S$-component of $S$ is not as atom, which is a contradiction.  
    Thus, setting $C = H(x_1)$ completes the proof.  
\end{proof}

\begin{lemma}[Gluing atoms to obtain an atom 1]
\label{lem:gluing_min_k-clique-sum_graphs}
    Let $G$ be a graph with with subgraphs $G_1$, $G_2$, and $S$ such that $G = G_1 \cup G_2$, $S = G_1 \cap G_2$ contains some nonedge $e$, and $G \setminus S$ is disconnected.  
    If $G_1$ is an atom and either $G_2$ or $G_2 \cup e$ is an atom, then $G$ is an atom.  
\end{lemma}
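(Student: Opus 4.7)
The plan is to argue by contradiction. I would suppose $G$ has a clique separator $C$, set $C_1 = C \cap V(G_1)$ and $C_2 = C \cap V(G_2)$, and aim to reach a contradiction with the atomicity of $G_1$ or of $G_2$ (resp.\ $G_2 \cup e$). The very first observation is that, since $e$ is a nonedge of $G$ and $C$ is a clique, at most one endpoint of $e$ can lie in $C$, so some endpoint $s$ of $e$ belongs to $V(S) \setminus C$. Next, since $S = G_1 \cap G_2$ is induced in $G$, every edge of $G$ between two vertices of $V(G_2)$ already lies in $G_2$, so $C_2$ is a clique of $G_2$ (and hence of $G_2 \cup e$), and symmetrically $C_1$ is a clique of $G_1$.

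The key step I expect to carry out is proving two separation claims. Claim (i): no two vertices $u, v \in V(G_1) \setminus C$ are separated by $C$ in $G$, because otherwise, since $G_1 \setminus C_1$ is a subgraph of $G \setminus C$, the set $C_1$ would be a clique separator of $G_1$, contradicting atomicity. Claim (ii): no two vertices $u, v \in V(G_2) \setminus C$ are separated by $C$ in $G$; when $G_2$ is itself the atom this mirrors (i), and when instead $G_2 \cup e$ is the atom I would show that $C$-separation of $u, v$ in $G$ forces $C_2$ to be a clique separator of $G_2 \cup e$. Any hypothetical $C_2$-avoiding $(G_2 \cup e)$-path from $u$ to $v$ that does not use the new edge $e$ is immediately a $C$-avoiding $G$-path; one that traverses $e = s_1 s_2$ decomposes as $u \to s_1$ and $s_2 \to v$ in $G_2$, both of which are $C$-avoiding $G$-paths, and since $s_1, s_2 \in V(S) \setminus C \subseteq V(G_1) \setminus C$, claim (i) furnishes a $C$-avoiding $G$-path from $s_1$ to $s_2$ that can be spliced in---either way producing a $C$-avoiding $G$-path $u \to v$ and contradicting the $C$-separation assumption.

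To close the argument I would pivot on $s \in V(S) \setminus C$, which lies in both $V(G_1) \setminus C$ and $V(G_2) \setminus C$. Claims (i) and (ii) put every vertex of $V(G) \setminus C = (V(G_1) \setminus C) \cup (V(G_2) \setminus C)$ in the same $(G \setminus C)$-component as $s$, so $G \setminus C$ is connected, contradicting that $C$ is a separator. The main obstacle I anticipate is the alternate hypothesis that $G_2 \cup e$ (rather than $G_2$) is an atom: the extra edge $e$ could a priori shortcut two would-be components of $G_2 \setminus C_2$ through $V(S)$, but this is exactly what claim (i) neutralizes by re-routing $s_1$ to $s_2$ through $G_1 \setminus C_1$.
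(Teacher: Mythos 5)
Your proof is correct and is essentially the same argument as the paper's: both hinge on (a) $S \setminus C$ being non-empty because a clique can absorb at most one endpoint of the nonedge $e$, (b) $G_1 \setminus C$ being connected by atomicity of $G_1$, and (c) every vertex of $G_2 \setminus C$ reaching $S \setminus C$ without crossing $C$. The only cosmetic differences are that the paper writes the argument directly (``for every clique $E$, $G \setminus E$ is connected'') rather than by contradiction, and in the $G_2 \cup e$ case the paper simply truncates any path at the first endpoint of $e$ it encounters — already a vertex of $S \setminus C$ — instead of splicing in a detour through $G_1$ as you do.
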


\begin{proof}
    Assume that $G_1$ and either $G_2$ or $G_2 \cup e$ is an atom, and consider any clique subgraph $E$ of $G$.  
    Since $S$ contains a nonedge, the graph $S \setminus E$ is non-empty.  
    Also, since $G_1$ is an atom, $G_1 \setminus E$ is connected.  
    Lastly, since either $G_2$ or $G_2 \cup e$ is an atom, it is easy to see that, for any vertex $x$ of $G_2 \setminus E$, there is some path in $G_2 \setminus E$ between $x$ and some vertex in $S \setminus E$.  
    Combining the above facts clearly shows that $G \setminus E$ is connected.  
    Thus, since $E$ was an arbitrary clique subgraph of $G$, the lemma is proved.  
\end{proof}

\begin{lemma}[Gluing atoms to obtain an atom 2]
    \label{lem:gluing_top_part_to_atom}
    Let $G$ be a graph with with subgraphs $G_1$, $G_2$, and $S$ such that $G = G_1 \cup G_2$, $S = G_1 \cap G_2$ contains some nonedge $e$, and $G \setminus S$ is disconnected.  
    Then, $G$ is an atom if the following statements are true:
    \begin{enumerate}
        \item $G_2 \cup e$ is an atom, 
        \item either $G_1$ or $G_1 \cup e$ is an atom, 
        \item $G_1 \setminus A$ is connected for any clique subgraph $A$ of $S$, and 
        \item $A$ does not separate the endpoints of $e$ in $G_2$.  
    \end{enumerate}
\end{lemma}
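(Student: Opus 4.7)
The plan is to show that an arbitrary clique subgraph $E$ of $G$ does not separate $G$, i.e.\ that $G \setminus E$ is connected. First I would observe that since $G \setminus S$ is disconnected into pieces contained in $V(G_1) \setminus V(S)$ and $V(G_2) \setminus V(S)$ with no edges between them, and since $E$ is a clique, $V(E)$ lies entirely in $V(G_1)$ or entirely in $V(G_2)$. I would also record a uniformly useful fact: $E \cap S$ is a clique subgraph of $S$, but $S$ contains the nonedge $e$, so $V(S) \setminus V(E)$ is nonempty. Any vertex $v$ in $V(S) \setminus V(E)$ lies simultaneously in $G_1 \setminus E$ and $G_2 \setminus E$, so it suffices to show that each of these is connected and glue them through $v$.

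In the first case, $V(E) \subseteq V(G_2)$. Then $V(E) \cap V(G_1) = V(E) \cap V(S)$, so $G_1 \setminus E = G_1 \setminus (E \cap S)$, which is connected by hypothesis (3). For the $G_2$ side, hypothesis (1) gives that $(G_2 \cup e) \setminus E$ is connected; if $V(E)$ meets an endpoint of $e$, this coincides with $G_2 \setminus E$ and we are done, whereas if $V(E)$ misses both endpoints then $G_2 \setminus E$ is obtained from the connected $(G_2 \cup e) \setminus E$ by deleting the edge $e$ and so at worst splits into two components each containing one endpoint of $e$, but both endpoints lie in the connected $G_1 \setminus E$, which rejoins them in $G \setminus E$.

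In the second case, $V(E) \subseteq V(G_1)$, hypothesis (2) yields a dichotomy. If $G_1$ itself is an atom, then $G_1 \setminus E$ is connected, and on the other side $G_2 \setminus E = G_2 \setminus (E \cap S)$; I would show that this is connected by taking $(G_2 \cup e) \setminus (E \cap S)$, which is connected by (1), and noting that if $E \cap S$ meets an endpoint of $e$ then $e$ is already gone, while if it misses both endpoints then hypothesis (4) applied to the clique $E \cap S$ guarantees that deleting the edge $e$ does not disconnect. If instead $G_1 \cup e$ is an atom, the roles swap: the $G_2$-side connectedness follows by the same (1)+(4) combination, and $G_1 \setminus E$ may split across $e$ into two pieces each containing one endpoint of $e$, but both endpoints now lie in the connected $G_2 \setminus E$, so again everything is glued together; in either subcase the common vertex $v$ in $V(S) \setminus V(E)$ completes the argument.

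The main obstacle is the careful bookkeeping forced by the asymmetry of hypotheses (1)--(4) in $G_1$ and $G_2$, and in particular by the distinction between \emph{vertex} removal (deleting $V(E)$) and the presence or absence of the \emph{edge} $e$ in the auxiliary graphs $G_i \cup e$: removing a vertex kills $e$ for free, while leaving both endpoints of $e$ intact turns the question of whether $G_i \setminus (E \cap S)$ stays connected into a nontrivial one. Hypothesis (4) is precisely the ingredient tailored to handle the latter scenario on the $G_2$ side, and once this is isolated the remainder is routine case bookkeeping plus the nonemptiness of $V(S) \setminus V(E)$ used to stitch the two sides.
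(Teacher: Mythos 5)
Your proof is correct, and it organizes the argument somewhat differently from the paper's. The paper disposes of the ``$G_1$ is an atom'' half of hypothesis (2) by citing Lemma~\ref{lem:gluing_min_k-clique-sum_graphs} wholesale, and in the remaining ``$G_1 \cup e$ is an atom'' case first establishes that one of $G_1\setminus E$, $G_2\setminus E$ is connected (using (3) on the $G_1$ side or (1)$+$(4) on the $G_2$ side according to which $G_i$ contains $E$), then glues via a symmetric path argument: by (1) or (2), any vertex of the other side reaches $S\setminus E$ by a path in $(G_i\cup e)\setminus E$, truncated at the first endpoint of $e$ it meets. You instead handle both halves of (2) directly and replace the path-truncation step with a bridge-counting argument: deleting the single edge $e$ from the connected $(G_i\cup e)\setminus E$ produces at most two components, each containing one endpoint of $e$, and both endpoints survive into the connected other side via $S\setminus E$, which rejoins them. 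Both are sound and of comparable length; yours is more self-contained (no dependence on the earlier gluing lemma) at the cost of a finer case split on whether $E$ hits an endpoint of $e$, while the paper's factors out a single reusable gluing step. One phrasing nit: your opening line that ``it suffices to show that each of these is connected'' slightly overstates the plan, since your Case 1 (when $E$ misses both endpoints of $e$) and Subcase 2b do not establish that both sides stay connected but instead repair the possibly two-piece side through the endpoints of $e$; the actual argument is fine, only the advertised strategy is a hair stronger than what is carried out.
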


\begin{proof}
    Assume that Statements (1)-(4) are true.  
    If $G_1$ is an atom, then the lemma follows from Lemma \ref{lem:gluing_min_k-clique-sum_graphs}.  
    Otherwise, we must show that $G \setminus E$ is connected for any clique subgraph $E$ of $G$.  
    First, we show that $G_1 \setminus E$ is connected if $E$ is contained in $G_2$, and $G_2 \setminus E$ is connected otherwise.  
    Note that $A = E \cap S$ is either empty or a clique.  
    If $E$ is contained in $G_2$, then $G_1 \setminus E = G_1 \setminus A$, which is connected by Statement (3).  
    Otherwise, $E$ must be contained in $G_1$ since $S$ is a separator of $G$.  
    Hence, we have $G_2 \setminus E = G_2 \setminus A$.  
    Statement (1) shows that $(G_2 \cup e) \setminus A$ is connected.  
    Therefore, if $G_2 \setminus A$ is disconnected, then $A$ must separate the endpoints of $e$ in $G_2$.  
    Combining this with Statement (4) shows that $G_2 \setminus E$ is connected.  

    Next, assume that $G_1 \setminus E$ is connected.  
    Since $S$ contains the nonedge $e$, $S \setminus E$ is non-empty.  
    Using Statement (1), it is easy to see that, for any vertex $x \in G_2 \setminus E$, there exists a path in $G_2 \setminus E$ between $x$ and some vertex in $S \setminus E$.  
    Hence, since $G_1 \setminus E$ contains $S \setminus E$, this implies that $G \setminus E$ is connected.  
    A similar argument using Statement (2) instead of (1) shows that $G \setminus E$ is connected if $G_2 \setminus E$ is connected.  
    This completes the proof.  
\end{proof}

\begin{lemma}[Paths between vertices of atoms]
    \label{lem:path_in_graph_path_in_atom}
    Given an atom $J$ of a graph $G$, if there exists a path in $G$ between two vertices in $J$, then some subsequence of the vertices in this path is a path in $J$ between the same two vertices.  
\end{lemma}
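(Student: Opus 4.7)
The plan is to induct on the number $k$ of vertices of the given path $P = x_0, x_1, \ldots, x_m$ (from $u$ to $v$) that do not lie in $V(J)$. If $k = 0$ then every edge of $P$ has both endpoints in $V(J)$; since $J$ is an induced subgraph, each such edge is an edge of $J$, so $P$ itself is a path in $J$ and we are done.

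For the inductive step pick a maximal contiguous block $x_i, x_{i+1}, \ldots, x_j$ of vertices of $P$ lying outside $V(J)$; by maximality the bracketing vertices $a := x_{i-1}$ and $b := x_{j+1}$ belong to $V(J)$. All of $x_i,\ldots,x_j$ lie in a single connected component $U$ of the induced subgraph $G[V(G) \setminus V(J)]$, and both $a$ and $b$ belong to the attachment set $N := N_G(U) \cap V(J)$. The key structural fact (see below) is that $N$ is a clique of $J$. Granting this, $a$ and $b$ are adjacent in $J$, so replacing the detour $a, x_i, \ldots, x_j, b$ in $P$ by the edge $ab$ yields a walk from $u$ to $v$ in $G$ with strictly fewer vertices outside $V(J)$; extracting a simple path from this walk (which only removes vertices, hence cannot increase the count of vertices outside $V(J)$) and applying the inductive hypothesis produces the desired path in $J$.

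The main obstacle, and the only nontrivial step, is the claim that the attachment set $N$ of any connected component $U$ of $G[V(G) \setminus V(J)]$ is a clique of $J$. I would argue by contradiction: suppose $y, z \in N$ are non-adjacent, and pick neighbors $y', z' \in U$ of $y$ and $z$ respectively, together with an internally $U$-contained path $Q$ from $y'$ to $z'$. Consider the induced subgraph $J^+ := G[V(J) \cup V(Q)]$. Since $V(J)$ is vertex-maximal among induced subgraphs with no clique separator and $V(J^+) \supsetneq V(J)$, the graph $J^+$ admits a clique separator $E$. Because $J$ itself has no clique separator, $E$ must contain some vertex of $V(Q)$. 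A short case analysis then shows that $E \cap V(J)$ is contained in $N$ and that $E \cap V(J)$ separates $y$ from $z$ in $J$: pushing the $V(Q)$-part of $E$ out through $N$ along $Q$ uses that $Q$ is a path in a single component $U$ whose only attachment to $V(J)$ is through $N$. Since $E$ is a clique, $E \cap V(J)$ is a clique, and this yields a clique separator of $J$ (after possibly minimizing), contradicting that $J$ is an atom. Hence $N$ is a clique of $J$, completing the proof.
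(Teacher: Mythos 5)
Your overall strategy — induct on the number of path vertices outside $V(J)$, and shortcut across each detour using the fact that the attachment boundary of a component of $G\setminus V(J)$ is a clique — is sound, and it is in essence the same idea the paper uses (the paper phrases the boundary clique as a CMS of $G$ containing $a$ and a later vertex $c$). The main induction, the base case, and the subsequence bookkeeping are all fine once the key structural claim is granted, and that claim (the attachment set $N = N_G(U)\cap V(J)$ of a connected component $U$ of $G\setminus V(J)$ is a clique) is a genuine, known fact about atoms, so it is reasonable to isolate it.

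The problem is that your proposed proof of that claim does not hold up at the crucial step. You assert that the clique separator $E$ of $J^{+}$ can be ``pushed out'' so that $E\cap V(J)$ separates $y$ from $z$ in $J$, which would give a clique separator of $J$ and a contradiction. But this cannot happen: $E\cap V(J)$ is a clique (being a subset of the clique $E$), and $J$ is an atom, so $J\setminus(E\cap V(J))$ is \emph{connected}. Hence if $y,z\notin E$ they lie in the same component of $J\setminus(E\cap V(J))$, and $E\cap V(J)$ separates nothing in $J$. What really occurs when $E\cap V(Q)\neq\emptyset$ is that some component of $J^{+}\setminus E$ lies entirely inside $V(Q)\setminus E$ (a chunk of $Q$ all of whose $V(J)$-attachments are absorbed by $E\cap V(J)$), and this is perfectly consistent with $J$ having no clique separator, so no contradiction follows. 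The ``pushing out'' operation itself is also unjustified: if $E\cap V(Q)$ contains two vertices $q_1,q_2$, then $(N(q_1)\cup N(q_2))\cap V(J)$ need not be a clique, so the pushed set need not be a clique separator at all. To close this gap you would need a more careful argument — for instance, choosing $E$ among all clique separators of $J^{+}$ (or of $G[V(J)\cup V(U)]$) to minimize $|E\cap V(Q)|$ and showing this forces $E\subseteq V(J)$ (whence $N\subseteq E$ gives $y\sim z$), or inducting on the length of $Q$ — none of which is present in your sketch.
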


\begin{proof}
    Let $x$ and $y$ be any two vertices of $J$ such that there exists a path $P$ in $G$ between them.  
    If $P$ is contained in $J$, then we are done.  
    Otherwise, moving along $P$ from $x$ to $y$, let $b$ be the first vertex not contained in $J$ and let $a$ be the vertex that immediately precedes $b$.  
    Then, there must exist a CMS of $G$ containing $a$ and some vertex $c$ after $b$ in $P$.  
    Hence, $ac$ is an edge of $G$.  
    Note that replacing the path in $P$ between $a$ and $c$ with the edge $ac$ yields a shorter path $P'$ between $x$ and $y$, which is a subsequence of $P$.  
    Repeating this argument for $P'$, and so on, eventually yields a path with the desired properties.  
\end{proof}

\end{document}